\documentclass{elsarticle}
\usepackage{latexsym,xspace,calc,amsthm}
\usepackage{amsmath,amssymb} 




\makeatletter
\renewcommand\section{\@startsection {section}{1}{\z@}%
           {18\p@ \@plus 6\p@ \@minus 3\p@}%
           {9\p@ \@plus 6\p@ \@minus 3\p@}%
           {\normalsize\bfseries}}
\makeatother


\usepackage{enumerate,mdwlist} 

\makecompactlist{enumeratei*}{enumeratei}
\usepackage{makeidx}

\usepackage{microtype}
\usepackage{tgtermes}

\makeatletter
\@ifpackageloaded{fdsymbol}\@tempswafalse\@tempswatrue
\if@tempswa
  \newcommand{\fdsy@scale}{1.0}
  \newcommand\fdsy@mweight@normal{Book}
  \newcommand\fdsy@mweight@small{Book}
  \newcommand\fdsy@bweight@normal{Medium}
  \newcommand\fdsy@bweight@small{Medium}
  \DeclareFontFamily{U}{FdSymbolA}{}
  \DeclareSymbolFont{fdsymbols}{U}{FdSymbolA}{m}{n}%
  \SetSymbolFont{symbols}{bold}{U}{FdSymbolA}{b}{n}%
  \DeclareFontShape{U}{FdSymbolA}{m}{n}{
      <-7.1> s * [\fdsy@scale] FdSymbolA-\fdsy@mweight@small
      <7.1-> s * [\fdsy@scale] FdSymbolA-\fdsy@mweight@normal
  }{}
  \DeclareFontShape{U}{FdSymbolA}{b}{n}{
      <-7.1> s * [\fdsy@scale] FdSymbolA-\fdsy@bweight@small
      <7.1-> s * [\fdsy@scale] FdSymbolA-\fdsy@bweight@normal
  }{}
  \DeclareMathSymbol{\aleph}{\mathord}{fdsymbols}{"C7}
  \DeclareMathSymbol{\beth}{\mathord}{fdsymbols}{"C8}
  \DeclareMathSymbol{\gimel}{\mathord}{fdsymbols}{"C9}
  \DeclareMathSymbol{\daleth}{\mathord}{fdsymbols}{"CA}
\fi
\makeatother

\usepackage{xparse}
\usepackage{tikz}
\usetikzlibrary{calc}
\DeclareDocumentCommand{\hcancel}{mO{0pt}O{1pt}O{0pt}O{-1pt}}{%
    \tikz[baseline=(tocancel.base)]{
        \node[inner sep=0pt,outer sep=0pt] (tocancel) {#1};
        \draw[gray] ($(tocancel.south west)+(#2,#3)$) -- ($(tocancel.north east)+(#4,#5)$);
    }%
}%

\usepackage{float}
\floatstyle{boxed} \restylefloat{figure}

\usepackage[all]{xy}

\setcounter{topnumber}{2}
\setcounter{bottomnumber}{2}
\setcounter{totalnumber}{6}

\let\myfresh\#
\def\#{\ensuremath{\text{\tt\myfresh}}}

\newcommand\nomathcal{\mathcal}

\newcommand\finite{small\xspace}

\newcommand\ttop{{\pmb\top}}
\newcommand\tbot{{\pmb\bot}}

\newcommand\tand{{\pmb\wedge}}
\newcommand\tor{{\pmb\vee}}

\newcommand\tlam{{\pmb\lambda}}
\newcommand\tall{{\pmb\forall}}
\newcommand\texi{{\pmb\exists}}

\renewcommand\land{\wedge}
\renewcommand\lor{\vee}
\newcommand\limp{\Rightarrow}
\newcommand\nslprg{\ns L^{\hspace{-1pt}\prg}}
\newcommand\nompow{{\f{Nom}}}
\newcommand\powerset{{\f{Pow}}}
\newcommand\strict{{\f{Strct}}}

\usepackage{yfonts}
\newcommand\idiom{\mho} 
\newcommand\idiomprg{\idiom}

\newcommand\uparrowp[1]{{\uparrow}_{\hspace{-.15em}\scalebox{.6}{$#1$}}}

\newcommand\arrowp[1]{\mathop{{\rightarrow}_{\hspace{-.05em}\scalebox{.6}{$#1$}}}}
\newcommand\eqarrowp[1]{\mathop{{=}_{\hspace{-.05em}\scalebox{.6}{$#1$}}}}

\newcommand\Func{{\Rightarrow}}
\newcommand\equivarto{\to}
\newcommand{\fa}{\f{fa}}
\newcommand\den[1]{{\hspace{.00ex}\scalebox{.55}{$#1$}}}

\newcommand{\ddenot}[1]{\denot{\ns D}{}{#1}}
\newcommand{\pdenot}[1]{\denot{G(\points_\Pi)}{}{#1}}

\usepackage[mathscr]{eucal}

\newcommand{\denot}[3]{\llbracket #3 \rrbracket_{\scalebox{.6}{$#2$}}^\den{#1}} 




 \renewenvironment{thebibliography}[1]{%
   \begin{odlthebibliography}{#1}%
     \setlength{\parskip}{0ex}%
     \setlength{\itemsep}{3pt}%
     \fontsize{9.5}{9.5} 
     \selectfont
}%
 {%
   \end{odlthebibliography}%
 }

\usepackage{hyperref}
\hypersetup{urlcolor=black,colorlinks}

\usepackage{fancybox}
\newlength{\mylength}
\newenvironment{frameqn}%
{\setlength{\fboxsep}{5pt}
\setlength{\mylength}{\linewidth}%
\addtolength{\mylength}{-2\fboxsep}%
\addtolength{\mylength}{-2\fboxrule}%
\Sbox
\minipage{\mylength}%
\setlength{\abovedisplayskip}{0pt}%
\setlength{\belowdisplayskip}{0pt}%
$$}%
{$$\endminipage\endSbox
{\setlength{\abovedisplayskip}{1pt}%
\setlength{\belowdisplayskip}{0pt}%
\[\fbox{\TheSbox}\]}}

\newenvironment{frametxt}%
{\setlength{\fboxsep}{5pt}
\setlength{\mylength}{\linewidth}%
\addtolength{\mylength}{-2\fboxsep}%
\addtolength{\mylength}{-2\fboxrule}%
\Sbox
\minipage{\mylength}%
\setlength{\abovedisplayskip}{5pt}%
\setlength{\belowdisplayskip}{5pt}%
}%
{\endminipage\endSbox
{\setlength{\abovedisplayskip}{1pt}%
\setlength{\belowdisplayskip}{0pt}%
\[\fbox{\TheSbox}\]}}

\usepackage{graphics}
\usepackage{ifthen}
\usepackage{verbatim}
\message{<Paul Taylor's Proof Trees, 2 August 1996>}

\newdimen\proofrulebreadth \proofrulebreadth=.05em
\newdimen\proofdotseparation \proofdotseparation=1.25ex
\newdimen\proofrulebaseline \proofrulebaseline=2ex
\newcount\proofdotnumber \proofdotnumber=3
\let\then\relax
\def\hfi{\hskip0pt plus.0001fil}
\mathchardef\squigto="3A3B
%
\newif\ifinsideprooftree\insideprooftreefalse
\newif\ifonleftofproofrule\onleftofproofrulefalse
\newif\ifproofdots\proofdotsfalse
\newif\ifdoubleproof\doubleprooffalse
\let\wereinproofbit\relax
%
\newdimen\shortenproofleft
\newdimen\shortenproofright
\newdimen\proofbelowshift
\newbox\proofabove
\newbox\proofbelow
\newbox\proofrulename
%
\def\shiftproofbelow{\let\next\relax\afterassignment\setshiftproofbelow\dimen0 }
\def\shiftproofbelowneg{\def\next{\multiply\dimen0 by-1 }%
\afterassignment\setshiftproofbelow\dimen0 }
\def\setshiftproofbelow{\next\proofbelowshift=\dimen0 }
\def\setproofrulebreadth{\proofrulebreadth}

\def\prooftree{
%
\ifnum  \lastpenalty=1
\then   \unpenalty
\else   \onleftofproofrulefalse
\fi
%
\ifonleftofproofrule
\else   \ifinsideprooftree
        \then   \hskip.5em plus1fil
        \fi
\fi
%
\bgroup
\setbox\proofbelow=\hbox{}\setbox\proofrulename=\hbox{}%
\let\justifies\proofover\let\leadsto\proofoverdots\let\Justifies\proofoverdbl
\let\using\proofusing\let\[\prooftree
\ifinsideprooftree\let\]\endprooftree\fi
\proofdotsfalse\doubleprooffalse
\let\thickness\setproofrulebreadth
\let\shiftright\shiftproofbelow \let\shift\shiftproofbelow
\let\shiftleft\shiftproofbelowneg
\let\ifwasinsideprooftree\ifinsideprooftree
\insideprooftreetrue
%
\setbox\proofabove=\hbox\bgroup$\displaystyle 
\let\wereinproofbit\prooftree
%
\shortenproofleft=0pt \shortenproofright=0pt \proofbelowshift=0pt
%
\onleftofproofruletrue\penalty1
}

\def\eproofbit{
%
\ifx    \wereinproofbit\prooftree
\then   \ifcase \lastpenalty
        \then   \shortenproofright=0pt  
        \or     \unpenalty\hfil         
        \or     \unpenalty\unskip       
        \else   \shortenproofright=0pt  
        \fi
\fi
%
\global\dimen0=\shortenproofleft
\global\dimen1=\shortenproofright
\global\dimen2=\proofrulebreadth
\global\dimen3=\proofbelowshift
\global\dimen4=\proofdotseparation
\global\count255=\proofdotnumber
%
$\egroup  
%
\shortenproofleft=\dimen0
\shortenproofright=\dimen1
\proofrulebreadth=\dimen2
\proofbelowshift=\dimen3
\proofdotseparation=\dimen4
\proofdotnumber=\count255
}

\def\proofover{
\eproofbit 
\setbox\proofbelow=\hbox\bgroup 
\let\wereinproofbit\proofover
$\displaystyle
}%
%
\def\proofoverdbl{
\eproofbit 
\doubleprooftrue
\setbox\proofbelow=\hbox\bgroup 
\let\wereinproofbit\proofoverdbl
$\displaystyle
}%
%
\def\proofoverdots{
\eproofbit 
\proofdotstrue
\setbox\proofbelow=\hbox\bgroup 
\let\wereinproofbit\proofoverdots
$\displaystyle
}%
%
\def\proofusing{
\eproofbit 
\setbox\proofrulename=\hbox\bgroup 
\let\wereinproofbit\proofusing
\kern0.3em$
}

\def\endprooftree{
\eproofbit 
  \dimen5 =0pt
%
\dimen0=\wd\proofabove \advance\dimen0-\shortenproofleft
\advance\dimen0-\shortenproofright
%
\dimen1=.5\dimen0 \advance\dimen1-.5\wd\proofbelow
\dimen4=\dimen1
\advance\dimen1\proofbelowshift \advance\dimen4-\proofbelowshift
%
\ifdim  \dimen1<0pt
\then   \advance\shortenproofleft\dimen1
        \advance\dimen0-\dimen1
        \dimen1=0pt
        \ifdim  \shortenproofleft<0pt
        \then   \setbox\proofabove=\hbox{%
                        \kern-\shortenproofleft\unhbox\proofabove}%
                \shortenproofleft=0pt
        \fi
\fi
%
\ifdim  \dimen4<0pt
\then   \advance\shortenproofright\dimen4
        \advance\dimen0-\dimen4
        \dimen4=0pt
\fi
%
\ifdim  \shortenproofright<\wd\proofrulename
\then   \shortenproofright=\wd\proofrulename
\fi
%
\dimen2=\shortenproofleft \advance\dimen2 by\dimen1
\dimen3=\shortenproofright\advance\dimen3 by\dimen4
%
\ifproofdots
\then
        \dimen6=\shortenproofleft \advance\dimen6 .5\dimen0
        \setbox1=\vbox to\proofdotseparation{\vss\hbox{$\cdot$}\vss}%
        \setbox0=\hbox{%
                \advance\dimen6-.5\wd1
                \kern\dimen6
                $\vcenter to\proofdotnumber\proofdotseparation
                        {\leaders\box1\vfill}$%
                \unhbox\proofrulename}%
\else   \dimen6=\fontdimen22\the\textfont2 
        \dimen7=\dimen6
        \advance\dimen6by.5\proofrulebreadth
        \advance\dimen7by-.5\proofrulebreadth
        \setbox0=\hbox{%
                \kern\shortenproofleft
                \ifdoubleproof
                \then   \hbox to\dimen0{%
                        $\mathsurround0pt\mathord=\mkern-6mu%
                        \cleaders\hbox{$\mkern-2mu=\mkern-2mu$}\hfill
                        \mkern-6mu\mathord=$}%
                \else   \vrule height\dimen6 depth-\dimen7 width\dimen0
                \fi
                \unhbox\proofrulename}%
        \ht0=\dimen6 \dp0=-\dimen7
\fi
%
\let\doll\relax
\ifwasinsideprooftree
\then   \let\VBOX\vbox
\else   \ifmmode\else$\let\doll=$\fi
        \let\VBOX\vcenter
\fi
\VBOX   {\baselineskip\proofrulebaseline \lineskip.2ex
        \expandafter\lineskiplimit\ifproofdots0ex\else-0.6ex\fi
        \hbox   spread\dimen5   {\hfi\unhbox\proofabove\hfi}%
        \hbox{\box0}%
        \hbox   {\kern\dimen2 \box\proofbelow}}\doll%
%
\global\dimen2=\dimen2
\global\dimen3=\dimen3
\egroup 
\ifonleftofproofrule
\then   \shortenproofleft=\dimen2
\fi
\shortenproofright=\dimen3
%
\onleftofproofrulefalse
\ifinsideprooftree
\then   \hskip.5em plus 1fil \penalty2
\fi
}

\usepackage{array}
\newcolumntype{L}[1]{>{$}p{#1}<{$}}
\newcolumntype{C}[1]{>{\centering$}p{#1}<{$}}
\newcolumntype{R}[1]{>{\raggedleft$}p{#1}<{$}}
\newcommand\maketab[2]
   {\newenvironment{#1}{\begin{quotation}\noindent\begin{tabular}{#2}}{\end{tabular}\end{quotation} }
    \newenvironment{#1noquote}{\noindent\begin{tabular}{#2}}{\end{tabular}}
    }

\setcounter{secnumdepth}{10}
\setcounter{tocdepth}{10}

\usepackage[mathscr]{eucal}
\newcommand\ns{\mathscr}

\newcommand{\freshcup}[1]{\mbox{$\bigcup^{\hspace{-.1ex}\raisebox{-.2ex}{\scalebox{.6}{$\# #1$}}}$}}

\newcommand{\freshcap}[1]{\mbox{$\bigcap^{\hspace{-.1ex}\raisebox{-.2ex}{\scalebox{.6}{$\# #1$}}}$}}
\newcommand{\freshwedge}[1]{\tall #1.}
\newcommand{\freshwedges}[1]{\mbox{$\bigwedge^{\hspace{-.5ex}\raisebox{-.2ex}{\scalebox{.6}{${\subseteq}#1$}}}$}}

\newcommand{\otop}[1]{{\f{Opn}(#1)}}
\newcommand{\ctop}[1]{{\f{Cpct}(#1)}}

\newcommand{\powsigma}{\f{Pow}_{\hspace{-1pt}\sigma}}
\newcommand{\powamgis}{\f{Pow}_{\hspace{-1pt}\scalebox{.74}{$\amgis$}}}

\newcommand{\india}{\theory{inDi\tall}}
\newcommand{\ndia}{\theory{nDi\tall}}
\newcommand{\indiapp}{\theory{inDi\hspace{.1ex}\tall}_{\hspace{-.3ex}\bullet}}

\newcommand{\inspecta}{\theory{inSpect\tall}}
\newcommand{\inspectapp}{\theory{inSpect\tall}_{\hspace{-.3ex}\bullet}}
\newcommand{\lamtrm}{\theory{Lm\hspace{-1.5pt}T\hspace{-1pt}m}}

\newcommand\topel{t}

\def\:{{\hspace{-1pt}{:}\hspace{-1.25pt}{:}\hspace{-.5pt}}}
\newcommand\theory[1]{\ensuremath{\mathsf{#1}}}
\newcommand\points{{\f{P}\hspace{-1.75pt}\f{nt}}}
\usepackage{relsize}
\newcommand\prg{{\smaller\partial}}


\newcommand\pp[1]{#1^{\scalebox{.5}{$\bullet$}}}

\newcommand\ppmone[1]{#1^{\scalebox{.6}{$\bullet\mone$}}}
\newcommand\qq[1]{#1^{\scalebox{.7}{$\ast$}}}


\def\id{\mathtxt{id}}

\newcommand\minus{{\text{-}}}

\newcommand\ssm{{{:}\text{=}}}

\newcommand{\jamiepart}[1]{\newpage\part{#1}}
\newcommand\deffont[1]{{\bf #1}}

\newcommand\mone{{{\text{-}1}}}
\newcommand\liff{\Leftrightarrow}

\newcommand\size{\mathit{size}}
\newcommand\supp{\f{supp}}
\newcommand\nontriv{\f{nontriv}}

\newcommand\f[1]{\mathit{#1}}
\newcommand\tf[1]{\mathsf{#1}}
\newcommand\cent{\vdash}

\newcommand{\curvedarrowtop}{\hspace{-.17ex}\raisebox{1.85ex}{\scalebox{1.15}{\rotatebox{270}{$\curvearrowleft$}}}}

\newcommand\ii[1]{{\hspace{.5pt}\raisebox{.4pt}{$\curvedarrowtop$}^{\hspace{-1pt}#1}}} 

\def\atoms{\ensuremath{\mathbb{A}}\xspace}

\newcommand\somerel{\mathrel{\mathcal R}}

\newcommand\dact[1]{}




\newcommand\clostordown[1]{#1_{{\tor}\hspace{-1.5pt}{\downarrow}}}

\newcommand\clostandup[1]{#1_{{\tand}\hspace{-1.5pt}{\uparrow}}}

\newcommand\vect[1]{\overline{#1}}


\newcommand\act[0]{{\cdot}}

\newcommand{\Defiff}
 {\mathrel{\ \ \stackrel{\scriptstyle \mathrm{def}}{\Leftrightarrow}\ \ }}

\newcommand{\defeq}
  {\stackrel{\mathrm{def}}{\,=\,}}

\newcommand\fix{\f{fix}}

\newcommand\fresh{\,\sharp_{\hspace{-.5pt}\sigma}\,}

\newcommand\ment[0]{\ \vDash\ }

\newcounter{jamieitemcounter}

%

\newtheoremstyle{jamiestyle}
  {4pt}
  {0pt}
  {\it}
  {0pt}
  {\sc}
  {.}
  { }
  {}
\theoremstyle{jamiestyle}
\newtheorem{thrm}{Theorem}[subsection]
\newtheorem{prop}[thrm]{Proposition}
\newtheorem{lemm}[thrm]{Lemma}
\newtheorem{corr}[thrm]{Corollary}
\newtheoremstyle{jamienfstyle}
  {4pt}
  {0pt}
  {\normalfont}
  {0pt}
  {\sc}
  {.}
  { }
  {}
\theoremstyle{jamienfstyle}
\newtheorem{nttn}[thrm]{Notation}
\newtheorem{defn}[thrm]{Definition}
\newtheorem{xmpl}[thrm]{Example}
\newtheorem{rmrk}[thrm]{Remark}


\newcommand\sm{{\mapsto}}
\newcommand\lsm{{\sm}}
\usepackage{stmaryrd}\newcommand\ms{{\mapsfrom}}


\newcommand\mathtxt[1]{ \ensuremath{\mathrm{#1}} }

\newcommand\rulefont[1]{\ensuremath{\smaller{\mathrm{\bf (#1)}}}}



\newcommand\Forall[1]{\forall #1.}
\newcommand\Exists[1]{\exists #1.}

\newcommand\amgis{\reflectbox{$\sigma$}}

\newcommand\new{\reflectbox{$\mathsf{N}$}}

\newcommand\nw{\scalebox{.7}{$\new$}}
\newcommand\New[1]{\new #1.}
\newcommand\lam[1]{\lambda #1.}


\newcommand\bpp{{\circ}}
\newcommand\ppa{{{\multimap}\hspace{-5pt}{\bullet}}}
\newcommand\app{{\bullet}}



\newcommand\blfootnote[1]{%
  \begingroup
  \renewcommand\thefootnote{}\footnote{#1}%
  \addtocounter{footnote}{-1}%
  \endgroup
}


\begin{document}

\begin{frontmatter}

\title{Representation and duality of the untyped $\lambda$-calculus in nominal lattice and topological semantics, with a proof of topological completeness}

\author[murdochaddress]{Murdoch J. Gabbay}
\address[murdochaddress]{Heriot-Watt University, Scotland, UK}
\ead[url]{www.gabbay.org.uk}

\author[michaeladdress]{Michael J. Gabbay}
\address[michaeladdress]{University of Cambridge, UK}

\begin{abstract}
We give a semantics for the $\lambda$-calculus based on a topological duality theorem in nominal sets.
A novel interpretation of $\lambda$ is given in terms of adjoints, and $\lambda$-terms are interpreted absolutely as sets (no valuation is necessary).
\end{abstract}

\begin{keyword}
Nominal algebras, fresh-finite limits, lambda-calculus, spectral spaces, lattices and order, variables, nominal techniques, mathematical foundations, Fraenkel-Mostowski set theory
\end{keyword}

\end{frontmatter}


\tableofcontents
\section{Introduction}

In this paper we build a topological duality result for the untyped $\lambda$-calculus in nominal sets and prove soundness, completeness, and topological completeness.\blfootnote{We are extremely grateful to the editor Phil Scott and to the anonymous referees for their time and help in improving the mathematics and readability of what follows.}

This means the following:
\begin{itemize*}
\item
We give a lattice-style axiomatisation of the untyped $\lambda$-calculus and prove it sound and complete.
\item
We define a notion of topological space whose compact open sets have notions of application and $\lambda$-abstraction.
\item
We prove that the categories of lattices-with-$\lambda$ and topological-spaces-with-$\lambda$ are dual.
\item
We give a complete topological semantics for the $\lambda$-calculus.
\end{itemize*}
So this paper does what universal algebra and Stone duality do for Boolean algebras \cite{johnstone:stos}, but for the $\lambda$-calculus.
We do this by combining three threads of previous work:
\begin{itemize*}
\item
the \emph{nominal algebra} of \cite{gabbay:noma-jv};
\item
the study of the logical structure of \emph{nominal powersets} from \cite{gabbay:semooc} (applied in \cite{gabbay:semooc} to first-order logic with equality, and in this paper to the $\lambda$-calculus); and
\item
the \emph{Kripke-style models of the untyped $\lambda$-calculus} from \cite{gabbay:simcmt,mgabbay:simmtu}.
\end{itemize*}

\subsection{A very brief summary of the contributions of this paper}

We summarise some contributions of this paper; this list will be fleshed out in the rest of the Introduction:
\begin{enumerate*}
\item
No previous duality result for $\lambda$-calculus theories exists.

Duality results are interesting in themselves (see next subsection), and it is interesting to see how nominal techniques help to manage the technical demands of such a result.
\item
The lattice semantics of this paper is abstract, yet complete.
The well-known domains-based denotational semantics are \emph{incomplete} (whereas semantics such as term models quotiented by equivalence are complete, but concrete).
\item
We prove a topological completeness result---but this should be impossible: topological \emph{in}completeness results exist in the literature.

This depends on the topology, so the fact that a notion of topology that `works' for the $\lambda$-calculus exists, is surprising given the current state of the art.
One would not expect this to work.
\item
The topological representations are concrete, being based on nominal sets.
\item
The representation of open terms does not use valuations; possibly open $\lambda$-terms are interpreted as open sets in a nominal topological space (this is sometimes called an \emph{absolute} semantics).

Function application and also $\lambda$-abstraction get interpreted as concrete sets operations on nominal-style atoms.
\item
$\beta$-reduction and $\eta$-expansion are exhibited as adjoint properties.
\item
This paper is a nontrivial application of nominal ideas, and the techniques on nominal sets which we use are original and have independent technical interest.
\item
We make nominal-style atoms---urelemente in Fraenkel-Mostowski set theory---behave like variables of the $\lambda$-calculus.
Urelemente in set theory come equipped with very few properties;
indeed, by design urelemente have virtually no properties at all.
It is remarkable that they can nevertheless acquire such rich structure.\footnote{We carry out a similar programme for variables of first-order logic in \cite{gabbay:stodfo,gabbay:semooc}.}
\item
The fine structure of the canonical models (those of the form $\points_\Pi$) is very rich, as we shall see.
The use of canonical models in this paper probably does not exhaust their interest.
\end{enumerate*}

\subsection{The point of duality results}

What is the point of a duality result, and why bother doing it for the $\lambda$-calculus?

\begin{enumerate}
\item
Duality results are a strong form of completeness: for a given class of abstract models, every model has a concrete topological representation (i.e. in terms of sets with a few consistency conditions) and every map between models has a concrete representation as a continuous map (i.e. a map that has to respect those conditions).

Intuitively, Boolean algebras, Heyting algebras, and distributive lattices look like they all have to do with powersets (negation is some kind of sets complement, conjunction is sets intersection, disjunction is sets union, and so on).
But is this true?
Is it possible to construct some sufficiently bizarre model such that for instance conjunction \emph{must} mean something other than sets intersection?

The answer is no: duality theorems tell us that no matter how bizarre the model, it \emph{can} be represented topologically.
In topologies conjunction is sets intersection.

The same analysis is applicable to our duality result for the $\lambda$-calculus.
\item
Of course, it is not obvious how conjunction enters into the untyped $\lambda$-calculus.
Indeed, no duality result has been achieved for the $\lambda$-calculus before and it is not obvious even how to begin to go about this.

One contribution of this paper is that we embed the $\lambda$-calculus in an impredicative logic which we characterise in two ways: in nominal algebra, and using a nominal generalisation of finite limits which we call \emph{fresh-finite limits} (Definition~\ref{defn.fresh.finite.limit}).

An example observation that comes out of this is that we exhibit $\beta$-reduction and $\eta$-expansion as adjoint maps (counit and unit respectively; see Proposition~\ref{prop.lambda.beta.eta}).

Another example is that $\lambda$ is exhibited as a compound object made out of the fresh-finite limit $\forall$, and a right adjoint to application $\ppa$.
See Notation~\ref{nttn.lambda} and the subsequent discussion.

The quantifier $\forall$ is itself an interesting entity, a kind of arbitrary conjunction, which relies heavily on nominal techniques.
More on this later.

Thus, in the process of defining and proving our results---representation, duality, and completeness---we uncover a wealth of structure in the untyped $\lambda$-calculus (to add to the wealth of structure already known).
The technical definitions and lemmas which our `main results' depend on, are as interesting as the results themselves.
\item
Finally, we note that our topological semantics for the $\lambda$-calculus is \emph{complete} (Theorem~\ref{thrm.Pi.completeness}).

This is remarkable because Salibra has shown that all known semantics for the $\lambda$-calculus based on partial orders, are incomplete \cite{salibra:topioi}.
The fact that our semantics is topological (thus ordered) and complete, is unexpected.\footnote{Our paper handles only the case where we have $\eta$-expansion.
We believe this could be generalised to the fully non-extensional case, at some cost in complexity.  See Subsection~\ref{subsect.extensional} or \cite{gabbay:simcmt}.}
We discuss this apparent paradox in Subsection~\ref{subsect.no.conflict}.

In any case, new semantics for the untyped $\lambda$-calculus do not come along very often, and as mentioned above, no duality result for the $\lambda$-calculus has been proved before.
\end{enumerate}

More interesting structure will be uncovered by this way of approaching the $\lambda$-calculus;
the list above justifies why it is \emph{a priori} interesting to try.

\subsection{Map of the paper}

Section~\ref{sect.basic.defs} sets up some basic nominal theory.
The reader might like to skim this at first, since the definitions might only make sense in terms of their application later on in the paper.
Highlights are the notions of nominal set (Definition~\ref{defn.nominal.set}), small-supported and strictly-small-supported powersets (Subsections~\ref{subsect.finsupp.pow} and~\ref{subsect.strict.pow}), equivariance properties of atoms and the $\new$-quantifier (Theorem~\ref{thrm.equivar} and Definition~\ref{defn.New}), and the $\new$-quantifier for sets (Subsection~\ref{subsect.nua}).

Section~\ref{sect.fol-algebra} introduces $\sigma$- and $\amgis$-algebras.
These are the basic building blocks from which our models will be constructed.
The definitions are already non-trivial; highlights are the axioms of Figure~\ref{fig.nom.sigma} (which go back to \cite{gabbay:capasn}, where nominal algebra was introduced to axiomatise substitution) and \cite{gabbay:stodfo} (which introduced $\amgis$-algebras), and the precise definition of the $\sigma$-action on nominal powersets in Definition~\ref{defn.sub.sets}, which uses the $\new$-quantifier.

Section~\ref{sect.nom.pow} considers lattices over nominal sets.
\emph{The} technical highlight here is the characterisation of universal quantification in terms of \emph{fresh-finite limits} (Definition~\ref{defn.nom.poset} and subsequent results).
Combined with impredicativity and the $\sigma$-action we arrive at Definitions~\ref{defn.D.impredicative} and~\ref{defn.india}, which are the lattice-theoretic structure within which we eventually build models of the $\lambda$-calculus which we write $\india$ (pronounced `India').

Section~\ref{sect.sigma.foleq} shows that (simplifying) every nominal powerset is a model of Definition~\ref{defn.D.impredicative}.
Technical highlights here are Definition~\ref{defn.nu.U}, Lemma~\ref{lemm.technical}, and Proposition~\ref{prop.all.sub.commute} which show how to interpret universal quantification and check that it commutes with the $\sigma$-action.

Section~\ref{sect.completeness} uses filters and prime filters to give a nominal sets representation of any $\india$, and Section~\ref{sect.stone} extends this to a full duality.

This duality is for a propositional logic with quantifiers.
To handle the $\lambda$-calculus we need more: this happens in Section~\ref{sect.operators}.
The most important points are probably the introduction of $\app$ and its right adjoint $\ppa$ in Definition~\ref{defn.FOLeq.pp}, and the observation that their topological dual is a \emph{combination operator} $\bpp$ in Definition~\ref{defn.bpp}.

It now becomes fairly easy to show that every $\indiapp$ is also a semantics for the untyped $\lambda$-calculus.
This is Section~\ref{sect.lambda.calculus}, culminating in Definition~\ref{defn.ddenot} and Theorem~\ref{thrm.lambda.soundness}; we include an interlude in Subsection~\ref{subsect.interlude.lambda} where we pause to take stock of what we have been doing so far.

Slightly harder is proving completeness, for which we must construct an object in $\indiapp/\inspectapp$ for any given $\lambda$-theory.
This is Section~\ref{sect.lambda.representation}: we have the tools (nominal and otherwise) required in principle to carry out the constructions (just build filters, etcetera)---but in practice the amount of detailed structure required to make this work is quite striking, involving amongst many other things the construction of $\app$ and $\ppa$ on points (Definition~\ref{defn.some.ops}) and a left adjoint to the $\amgis$-algebra on filters (Definition~\ref{defn.qasmu}).
If two results should illustrate how tightly knit this part of the mathematics can be, then the technical results of Proposition~\ref{prop.fresh.point} and Lemma~\ref{lemm.lsm.id} are good examples.
The final Completeness result is Theorem~\ref{thrm.Pi.completeness}.

We conclude in an Appendix with a nominal axiomatisation of $\forall$, to go with the lattice-theoretic one of Definition~\ref{defn.fresh.finite.limit}, and some nice additional observations on the structure of points.

\subsection{A list of interesting technical features}
\label{subsect.list.of.technical.features}

This list is not of the major results, nor is it an exhaustive list of technical definitions.
But one technical definition or proof looks very much like another, so here are suggestions of which technical highlights might be worth looking at first:
\begin{itemize*}
\item
We characterise universal quantification in various ways, most notably in Proposition~\ref{prop.these.are.equivalent}, and see the connection made between universal quantification and the $\new$-quantifier in condition~\ref{filter.new} of Definition~\ref{defn.filter}.
See also the discussion opening Appendix~\ref{subsect.ffl}.
\item
As mentioned above, $\beta$-reduction and $\eta$-expansion are derived from a counit and unit respectively in Proposition~\ref{prop.lambda.beta.eta}.
\item
We decompose of $\lambda$ into $\forall$ and $\ppa$ in Notation~\ref{nttn.lambda}, and we decompose $\forall$ further using the $\new$-quantifier and freshness in Proposition~\ref{prop.these.are.equivalent}.
\item
Lemmas~\ref{lemm.p.sigma.pi} and~\ref{lemm.tall.unions} are inherently surprising results.
\item
Three little proof gems are in Propositions~\ref{prop.points.bigcap} and~\ref{prop.pumsa.point}, and Lemma~\ref{lemm.lam.point.fresh}.
\item
We need two notions of filter: one is Definition~\ref{defn.filter} (particularly condition~\ref{filter.new}, mentioned above), the other is Definition~\ref{defn.pi.point} (we call it a \emph{point}).
Both notions use the $\new$-quantifier in interesting ways.
\item
Because of condition~\ref{filter.new} of Definition~\ref{defn.filter}, the proof of existence of maximal filters requires a delicate argument; see Remark~\ref{rmrk.hand.rolled.zorn} and Theorem~\ref{thrm.maxfilt.zorn}.
\item
Propositions~\ref{prop.extra.filter} and~\ref{prop.even.stronger} are key to proving compactness in the duality proof.
\item
The pointwise definition of substitution is in Definition~\ref{defn.sub.sets};
a different approach to substitution than the reader has likely seen, which makes use of the $\new$-quantifier and $\amgis$-algebra (Definition~\ref{defn.bus.algebra}).
The two concrete characterisations of it in Subsection~\ref{subsect.two.chars.lsm} are also interesting.
\item
The reader will see much use made of the nominal $\new$-quantifier, 
and of nominal equivariance properties and notions of small support and \emph{strict} small support.
We mention two (connected) examples: the use of $\new$ in defining the $\sigma$-action in Definition~\ref{defn.sub.sets}, and the treatment of the universal quantifier in condition~\ref{filter.new} of Definition~\ref{defn.filter} and the dual condition~\ref{sober.all} of Definition~\ref{defn.sober}.

We use $\new$ deep in the proofs of many technical results: the interested reader could peruse for instance Lemma~\ref{lemm.technical}, Proposition~\ref{prop.technical.contradiction}, Lemma~\ref{lemm.bus.filter}, Proposition~\ref{prop.even.stronger}, Lemma~\ref{lemm.tall.unions}, and Proposition~\ref{prop.points.sober}.

A search of the paper for uses of Corollary~\ref{corr.stuff}, Theorems~\ref{thrm.equivar} and~\ref{thrm.new.equiv}, and Lemma~\ref{lemm.strict.support} will find many uses made of equivariance and (strict) small support.
\item
A nominal algebra axiomatisation in Appendix~\ref{subsect.alg} underlies the design of the maths.
\end{itemize*}

\subsection{Why is this paper so long?}

\begin{enumerate*}
\item
This paper contains a duality result \emph{and} gives a semantics and completeness proof for the untyped $\lambda$-calculus.

It is a fact that duality results are hard, completeness proofs are also hard, and semantics for the untyped $\lambda$-calculus are not trivial to construct.\footnote{There are actually two duality results: one for impredicative distributive lattices (corresponding to a propositional logic with $\land$, $\lor$, and propositional quantification) from Part~\ref{part.indias}, and the other when we add the combination structure in Part~\ref{part.application}.
The second duality piggy-backs on the first, and is shorter.}
\item
We obtain our duality result by a far from obvious combination of nominal algebra \cite{gabbay:noma-jv}, logic in nominal powersets \cite{gabbay:semooc}, and the modal model theory of the lambda calculus \cite{gabbay:simcmt}.
Even individually these techniques are not well known, so to be self-contained and rigorous in combining them, this paper must be long and detailed.
Where we can rely on the material being familiar, we will be more brief.
\end{enumerate*}
So our starting point is urelemente (atoms) of Fraenkel-Mostowski set theory; our target is the untyped $\lambda$-calculus, and between them there is a lot of ground to cover.
Everything in this paper is there because its has to be, it is worth the effort, and
the material has natural momentum which propels us from the first definitions to the final results.

\section{Background on nominal techniques}
\label{sect.basic.defs}

A nominal set is a `set with names'.
The notion of a name being `in' an element is given by support $\supp(x)$ (Definition~\ref{defn.supp}).
For more details of nominal sets, see \cite{gabbay:newaas-jv,gabbay:fountl,pitts:nomsns}.

Here we just give necessary background information.
The reader not interested in nominal techniques \emph{per se} might like to read this section only briefly in the first instance, and use it as a reference for the later sections, where the ideas get applied.

For the reader's convenience we take a moment to note the overall message of this section:
\begin{itemize}
\item
\emph{To the category-theorist} we say that we work mostly in the category of nominal sets, or equivalently in the Schanuel topos (more on this in \cite[Section III.9]{MLM:sgl},\ \cite[A.21, page 79]{johnstone:skeett},\ or \cite[Theorem~9.14]{gabbay:fountl}), and occasionally also in the category of sets with a permutation action.
\item
\emph{To the set-theorist} we say that our constructions can be carried out in Fraenkel-Mostowski set theory (\deffont{FM sets}) and Zermelo-Fraenkel set theory with atoms (\deffont{ZFA}).
A discussion of such sets foundations, tailored to nominal techniques, can be found in \cite[Section~10]{gabbay:fountl}).
\item
\emph{To the reader not interested in foundations} we say that
the apparently inconsequential step of assuming names as primitive entities in Definition~\ref{defn.atoms}
yields a remarkable clutch of definitions and results, notably Theorem~\ref{thrm.supp} and Corollary~\ref{corr.stuff}, and Theorems~\ref{thrm.equivar} and~\ref{thrm.new.equiv}.
These properties are phrased abstractly but will quickly make themselves very useful in the body of this paper.
See previous work for more background \cite{gabbay:newaas-jv,gabbay:fountl,gabbay:nomtnl,pitts:nomsns}.
\end{itemize}

\subsection{Basic definitions}

\begin{defn}
\label{defn.atoms}
For this paper we fix the following:
\begin{itemize*}
\item
Fix an infinite set of \deffont{atoms} $\mathbb A$ and write $\f{size}(\mathbb A)$ for the cardinality of $\mathbb A$.
\item
Call any set whose cardinality is strictly lesser than $\f{size}(\mathbb A)$ \deffont{\finite}.
\item
Call $A\subseteq\mathbb A$ \deffont{cosmall} when $\mathbb A{\setminus} A$ is \finite.
\item
We use a \deffont{permutative convention} that $a,b,c,\ldots$ range over \emph{distinct} atoms.
\end{itemize*}
\end{defn}

\begin{rmrk}
Traditionally $\mathbb A$ is taken to be countable so that `small' corresponds to being finite.
It will be useful (and no harder) to work with the generalisation of the theory for potentially larger sets of atoms and larger notions of `small'.
\end{rmrk}

\begin{defn}
\label{defn.permutation}
A \deffont{permutation} $\pi$ is a bijection on atoms such that $\nontriv(\pi)=\{a\mid \pi(a)\neq a\}$ is finite.

Write $\id$ for the \deffont{identity} permutation such that $\id(a)=a$ for all $a$.
Write $\pi'\circ\pi$ for composition, so that $(\pi'\circ\pi)(a)=\pi'(\pi(a))$.
Write $\pi^\mone$ for inverse, so that $\pi^\mone\circ\pi=\id=\pi\circ\pi^\mone$.
Write $(a\;b)$ for the \deffont{swapping} (terminology from \cite{gabbay:newaas-jv}) mapping $a$ to $b$,\ $b$ to $a$,\ and all other $c$ to themselves, and take $(a\;a)=\id$.
\end{defn}

\begin{nttn}
\label{nttn.fix}
If $A\subseteq\mathbb A$ write
$$
\fix(A)=\{\pi\mid \Forall{a{\in} A}\pi(a)=a\}.
$$
\end{nttn}

\begin{defn}
\label{defn.fin.supp}
\begin{enumerate}
\item
A \deffont{set with a permutation action} $\ns X$ is a pair $(|\ns X|,\act)$ of an \deffont{underlying set} $|\ns X|$ and a \deffont{permutation action} written $\pi\act_{\ns X} x$ or just $\pi\act x$
 which is a group action on $|\ns X|$, so that $\id\act x=x$ and $\pi\act(\pi'\act x)=(\pi\circ\pi')\act x$ for all $x\in|\ns X|$ and permutations $\pi$ and $\pi'$.
\item\label{fin.supp.supports}
Say that $A\subseteq\mathbb A$ \deffont{supports} $x\in|\ns X|$ when $\Forall{\pi}\pi\in\fix(A)\limp \pi\act x=x$.
\item
If a \finite $A$ supporting $x$ exists, call $x$ \deffont{small-supported}.
\end{enumerate}
\end{defn}

\begin{frametxt}
\begin{defn}
\label{defn.nominal.set}
Call a set with a permutation action $\ns X$ a \deffont{nominal set} when every $x\in|\ns X|$ has small support.
$\ns X$, $\ns Y$, $\ns Z$ will range over nominal sets.
\end{defn}
\end{frametxt}

\begin{rmrk}
\label{rmrk.finite.perm}
Permutations are \emph{finite} in Definition~\ref{defn.permutation}, yet support is \emph{small} in Definition~\ref{defn.nominal.set}.
Why are permutations not taken to be small (or even arbitrary) instead?

In this paper we are interested in modelling binders that abstract a single atom at a time.
$\lambda$-syntax is finite so only finitely many binders are ever applied to an element, thus, we never need to $\alpha$-rename more than finitely many atoms in any given element and we only need finite permutations.\footnote{This paper has many binders.  For instance: $\lambda a$, $\nw a$, $\new a$, $\tall a$, $\lambda_\idiom$, and $\sigma$-actions (`substitutions') $[a\sm\text{-}]$.  Nevertheless, each of them only abstracts one atom at a time.}

Generalisations to infinite permutations are possible, but they introduce complexity for no technical benefit to the specific concerns of this paper.
More on this in Subsection~\ref{subsect.perm.size}.

For discussion of why we need small support rather than finite support, see Remark~\ref{rmrk.size.issues} and Subsection~\ref{subsect.cardinality}.
\end{rmrk}

\begin{defn}
\label{defn.equivariant}
Call a function $f$ from $|\ns X|$ to $|\ns Y|$ \deffont{equivariant} when $\pi\act(f(x))=f(\pi\act x)$ for all permutations $\pi$ and $x\in|\ns X|$.
In this case write $f:\ns X\equivarto\ns Y$.

The category of nominal sets and equivariant functions between them is usually called the category of \emph{nominal sets}.
\end{defn}

\begin{defn}
\label{defn.supp}
Suppose $\ns X$ is a nominal set and $x\in|\ns X|$.
Define the \deffont{support} of $x$ by
$$
\supp(x)=\bigcap\{A\mid A\text{ \finite and supports }x\} .
$$
If $\supp(x)=\varnothing$ we call $x$ \deffont{equivariant}.
\end{defn}

\begin{nttn}
\label{nttn.fresh}
Write $a\#x$ as shorthand for $a\not\in\supp(x)$ and read this as $a$ is \deffont{fresh for} $x$.

Given atoms $a_1,\dots,a_n$ and elements $x_1,\dots,x_m$ write $a_1,\dots,a_n\#x_1,\dots,x_m$ as shorthand for $\{a_1,\dots,a_n\}\cap\bigcup_{1{\leq}j{\leq}m}\supp(x_j)=\varnothing$, or to put it more plainly: $a_i\#x_j$ for every $i$ and $j$.
\end{nttn}

\begin{prop}
\label{prop.intersect.AB}
If $A\subseteq\atoms$ is small and supports $x$, and $a\in\atoms$ and $a\#x$, then $A{\setminus}\{a\}$ supports $x$.
\end{prop}
\begin{proof}
Suppose $\pi\in\fix(A{\setminus}\{a\})$.
We assumed $a\#x$ so choose an $A'\subseteq\atoms$ such that $A'$ is small, $A'$ supports $x$, and $a\not\in A'$.
Also, choose some fresh $a'$ (so $a'\not\in A{\cup}\{a\}{\cup}A'$).

Write $\tau=(a'\ a)$.
Note that $\tau\act x=x$ by Definition~\ref{defn.fin.supp}(\ref{fin.supp.supports}), because $A'$ supports $x$ and $\tau\in\fix(A')$.

It is a fact that $(\tau\circ\pi\circ\tau)(a)=a$ for every $a{\in}A$, so $\tau\circ\pi\circ\tau\in\fix(A)$.
Also by the group action $(\tau\circ\pi\circ\tau)\act x = \tau\act(\pi\act(\tau\act x))$.
Since $A$ supports $x$ we have $\tau\act(\pi\act(\tau\act x))=x$ by Definition~\ref{defn.fin.supp}(\ref{fin.supp.supports}).

We apply $\tau$ to both sides, recall that $\tau\act x=x$, and conclude that $\pi\act x=x$ as required.
\end{proof}

\begin{thrm}
\label{thrm.supp}
Suppose $\ns X$ is a nominal set and $x\in|\ns X|$.
Then $\supp(x)$ is the unique least \finite set of atoms that supports $x$.
\end{thrm}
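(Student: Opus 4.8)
The plan is to establish the two assertions separately: that $\supp(x)$ is finite and does support $x$, and that it is below every finite supporting set, so that it is the \emph{least} such set (uniqueness of a least element then being automatic). The first and last points are essentially bookkeeping; the real work is showing that the intersection $\supp(x)$ actually supports $x$, since an arbitrary intersection of supporting sets is not obviously supporting.

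First I would note that, because $\ns X$ is a nominal set, $x$ has \emph{some} finite supporting set $A_0$. Hence the family $\mathcal A=\{A\mid A\text{ finite and supports }x\}$ over which the intersection in Definition~\ref{defn.supp} is taken is nonempty, and $\supp(x)\subseteq A_0$ is finite. Containment of $\supp(x)$ in every finite supporting set is immediate from its definition as $\bigcap\mathcal A$. So the crux is the claim that $\supp(x)$ supports $x$.

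The heart of the argument is the lemma that if finite sets $A$ and $B$ each support $x$, then so does $A\cap B$. To see this, take $\pi\in\fix(A\cap B)$; then $\f{nontriv}(\pi)$ is finite and disjoint from $A\cap B$, and $\pi$ is a composite of swappings $(a\;b)$ with $a,b\in\f{nontriv}(\pi)$, hence with $a,b\notin A\cap B$. Since the permutation action is a group action, it therefore suffices to show $(a\;b)\act x=x$ whenever $a,b\notin A\cap B$. If $a,b\notin A$ then $(a\;b)\in\fix(A)$ and we are done as $A$ supports $x$; likewise if $a,b\notin B$. The remaining case, up to symmetry, is $a\in A\setminus B$ and $b\in B\setminus A$; here pick an atom $c\notin A\cup B$ (possible as $A\cup B$ is finite and $\mathbb A$ infinite) and use the identity $(a\;b)=(a\;c)\circ(b\;c)\circ(a\;c)$. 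Now $a,c\notin B$ gives $(a\;c)\in\fix(B)$, and $b,c\notin A$ gives $(b\;c)\in\fix(A)$, so applying these three swappings to $x$ in turn fixes it each time; hence $(a\;b)\act x=x$. This proves the lemma, and by induction any finite intersection of members of $\mathcal A$ supports $x$.

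To pass from finite intersections to the full, possibly infinite, intersection, I would observe that since $\supp(x)\subseteq A_0$ we have $\supp(x)=\bigcap_{A\in\mathcal A}A=\bigcap_{A\in\mathcal A}(A\cap A_0)$, and each $A\cap A_0$ is a subset of the finite set $A_0$; thus only finitely many distinct sets $A\cap A_0$ occur, and each of them supports $x$ (being the intersection of the two supporting sets $A$ and $A_0$). Hence $\supp(x)$ is a \emph{finite} intersection of supporting sets, so it supports $x$. Being finite and supporting, $\supp(x)\in\mathcal A$, and it lies below every member of $\mathcal A$, so it is the least element of $\mathcal A$; a least element is unique. The one genuinely delicate step is the fresh-atom, three-swapping manoeuvre in the intersection lemma; everything else is routine.
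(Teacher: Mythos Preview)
Your proof is correct and is the standard argument for this result. The paper itself does not give a proof at all: it simply cites part~1 of Theorem~2.21 of \cite{gabbay:fountl}, so you have supplied exactly the argument the paper defers to an external reference --- the two-set intersection lemma proved via the conjugation identity $(a\;b)=(a\;c)\circ(b\;c)\circ(a\;c)$ with a fresh $c$, followed by the observation that inside a fixed finite supporting set $A_0$ only finitely many intersections $A\cap A_0$ can occur. Nothing is missing.
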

\begin{proof}
Consider a permutation $\pi\in\fix(\supp(x))$ and write $\{a_1,\dots,a_n\}=\f{nontriv}(\pi)$.
Choose any small $A{\subseteq}\atoms$ that supports $x$, so by construction $\supp(x){\subseteq}A$.

By Proposition~\ref{prop.intersect.AB} $A{\setminus}\nontriv(\pi)$ supports $x$.
By construction $\pi\in\fix(A{\setminus}\nontriv(\pi))$, so by Definition~\ref{defn.fin.supp}(\ref{fin.supp.supports}) $\pi\act x=x$ as required.
\end{proof}

\subsection{Examples}
\label{subsect.pow}

Suppose $\ns X$ and $\ns Y$ are nominal sets, and suppose $\ns Z$ is a set with a permutation action.
We consider some examples of sets with a permutation action and of nominal sets.
These will be useful later on in the paper.

\subsubsection{Atoms and booleans}
\label{subsect.xmpl.atoms}

$\mathbb A$ is a nominal set with the \emph{natural permutation action} $\pi\act a=\pi(a)$.

For the case of $\mathbb A$ only we will be lax about the difference between $\mathbb A$ (the set of atoms) and $(|\mathbb A|,\act)$ (the nominal set of atoms with its natural permutation action).
In practice this means we may write $a\in\mathbb A$ for $a\in|\mathbb A|$.

The only equivariant function from $\mathbb A$ to itself (Definition~\ref{defn.equivariant})
is the identity map $a\mapsto a$.
There are more small-supported maps from $\mathbb A$ to itself; see the \emph{small-supported function space} below.

Write $\mathbb B$ for the nominal set of \deffont{Booleans}, which has elements $\{\bot,\top\}$ and the \deffont{trivial} permutation action that $\pi\act x=x$ for all $\pi$ and $x\in|\mathbb B|$.

\subsubsection{Cartesian product}
\label{subsect.cartesian.product}

$\ns X\times\ns Y$ is a nominal set with underlying set $\{(x,y)\mid x\in|\ns X|, y\in|\ns Y|\}$ and the \emph{pointwise} action $\pi\act(x,y)=(\pi\act x,\pi\act y)$.

An equivariant $f:(\ns X\times\ns Y)\equivarto\mathbb B$ corresponds to a relation $\somerel$ such that $x\somerel y$ if and only if $\pi\act x\somerel\pi\act y$.

\subsubsection{Tensor product}
\label{subsect.otimes}

$\ns X\otimes\ns Y$ is a nominal set with underlying set $\{(x,y)\mid x\in|\ns X|, y\in|\ns Y|,\ \supp(x){\cap}\supp(y)=\varnothing\}$ and the pointwise action.
For the pointwise action here to be well-defined depends on $\pi$ being a permutation and the fact (Proposition~\ref{prop.pi.supp} below) that $\supp(\pi\act x)=\pi\act \supp(x)$.

\subsubsection{Full function space}
\label{subsect.full.function.space}

Functions from $|\ns X|$ to $|\ns Y|$ form a set with a permutation action with underlying set all functions from $|\ns X|$ to $|\ns Y|$, and the \deffont{conjugation} permutation action
$$
(\pi\act f)(x)=\pi\act(f(\pi^\mone\act x)) .
$$

The conjugation action can be rephrased as `permutations distribute' (cf. Theorem~\ref{thrm.equivar} below):
\begin{lemm}
\label{lemm.useful.distrib}
If $f{\in}|\ns X{\to}\ns Y|$ then $\pi\act f(x)=(\pi\act f)(\pi\act x)$.
\end{lemm}
\begin{proof}
By easy calculations.
\end{proof}

\subsubsection{Small-supported function space}

$\ns X\Func\ns Y$ is a nominal set with underlying set the functions from $|\ns X|$ to $|\ns Y|$ with small support under the conjugation action, and the conjugation permutation action.

A complete description of the small-supported functions from $\mathbb A$ to $\mathbb A$ is as follows:
\begin{enumerate*}
\item
Any function $f$ such that there exists some \finite $U\subseteq\mathbb A$ such that if $b\not\in U$ then $f(b)=b$ (so $f$ is `eventually the identity').
\item
Any function $f$ such that there exists some $a\in\mathbb A$ and some \finite $U\subseteq\mathbb A$ such that if $b\not\in U$ then $f(b)=a$ (so $f$ is `eventually constant').
\end{enumerate*}
Looking ahead to Definition~\ref{defn.New}, we can write the possibilities as $\New{b}f(b){=}b$ and $\Exists{a}\New{b}f(b){=}a$ respectively.

\begin{lemm}
\label{lemm.equivar.equivar}
$f\in |\ns X\Func\ns Y|$ is equivariant in the sense of Definition~\ref{defn.equivariant} if and only if it is equivariant in the sense of Definition~\ref{defn.supp}.
\end{lemm}
\begin{proof}
We sketch the proof: If $\pi\act f=f$ then for any $x{\in}\ns X$ we have by Lemma~\ref{lemm.useful.distrib} that $\pi\act (f(x))=(\pi\act f)(\pi\act x)=f(\pi\act x)$.
Conversely if for any $x{\in}\ns X$ we have $\pi\act(f(x))=f(\pi\act x)$ then by the conjugation action
$
(\pi\act f)(x)=
\pi\act (f(\pi^\mone\act x))=f(\pi\act(\pi^\mone\act x))=f(x).
$
\end{proof}

\subsubsection{Full powerset}

\begin{defn}
\label{defn.pointwise.action}
Suppose $\ns X$ is a set with a permutation action.
Give subsets $X\subseteq|\ns X|$ the \deffont{pointwise} permutation action
$$
\pi\act X=\{\pi\act x\mid x\in X\} .
$$
\end{defn}

Then $\powerset(\ns X)$ (the full powerset of $\ns X$) is a set with a permutation action with
\begin{itemize*}
\item
underlying set $\{X\mid X\subseteq|\ns X|\}$ (the set of all subsets of $|\ns X|$), and
\item
the pointwise action $\pi\act X=\{\pi\act x\mid x\in X\}$.
\end{itemize*}

A useful instance of the pointwise action is for sets of atoms.
As discussed in Subsection~\ref{subsect.xmpl.atoms} above, if $a\in\mathbb A$ then $\pi\act a=\pi(a)$.
Thus if $A\subseteq\mathbb A$ then
$$
\pi\act A\quad\text{means}\quad \{\pi(a)\mid a\in A\} .
$$

In Definition~\ref{defn.pointwise.action} we gave $X\subseteq|\ns X|$ a permutation action.
Thus by Definition~\ref{defn.supp} we also gave them a notion of equivariance (which $X$ may or may not satisfy, of course).
It is useful to unpack what this means:
\begin{lemm}
\label{lemm.when.set.equivar}
Suppose $\ns X$ is a set with a permutation action and $X\subseteq|\ns X|$.
Then the following are equivalent:
\begin{enumerate*}
\item
$X$ is equivariant in the sense of Definition~\ref{defn.supp}.
\item
$\pi\act X=X$ for any permutation $\pi$.
\item
$x\in X\liff \pi\act x\in X$ for any permutation $\pi$ and $x\in X$.
\end{enumerate*}
\end{lemm}
\begin{proof}
If $X$ is equivariant then $X$ is supported by $\varnothing$ and by Theorem~\ref{thrm.supp}
and Definition~\ref{defn.fin.supp}(\ref{fin.supp.supports})
$\Forall{\pi}\pi\act X=X$.
It follows by Definition~\ref{defn.pointwise.action} that $x\in X\liff x\in\pi^\mone\act X \liff \pi\act x\in X$.

If $\Forall{\pi}\Forall{x{\in}|\ns X|}(x\in X\liff \pi\act x\in X)$ then also $x\in X\liff \pi^\mone\act x\in X \liff x\in\pi\act X$, so that $\pi\act X=X$.
This is for any $\pi$, so by Definition~\ref{defn.fin.supp}(\ref{fin.supp.supports}) $\varnothing$ supports $X$ and by
Definition~\ref{defn.supp} we have $\supp(X)=\varnothing$.
\end{proof}

\begin{lemm}
\label{lemm.the.comb}
It is not the case that if $\ns X$ is a nominal set then $\powerset(\ns X)$ is a nominal set.
\end{lemm}
\begin{proof}
It suffices to provide a counterexample.
Take $\ns X=\atoms$ and enumerate atoms as $(a_\omega)_\omega$, and consider the set
$$
\f{comb}=\{a_{2*\omega} \mid \omega \}
$$
of `every other atom'.
This does not have small support, though permutations still act on it pointwise.
For more discussion of this point, see \cite[Remark~2.18]{gabbay:fountl}.
\end{proof}

We consider further examples in Subsection~\ref{subsect.more.examples}, including the small-supported and strictly small-supported powersets.

\subsection{The principle of equivariance and the NEW quantifier}
\label{subsect.pre-equivar}

We come to Theorem~\ref{thrm.equivar}, a result which is central to the `look and feel' of nominal techniques.
It enables a particularly efficient management of renaming and $\alpha$-conversion in syntax and semantics and captures why it is so useful to use \emph{names} in the foundations of our semantics and not some other infinite set, such as numbers.

Names are by definition symmetric (i.e. can be permuted).
Taking names and permutations as \emph{primitive} implies that permutations propagate to the things we build using them.
This is the \emph{principle of equivariance} (Theorem~\ref{thrm.equivar} below; see also \cite[Subsection~4.2]{gabbay:fountl} and \cite[Lemma~4.7]{gabbay:newaas-jv}).

The principle of equivariance implies that, provided we permute names uniformly in all the parameters of our definitions and theorems, we then get another valid set of definitions and theorems.
This is not true of e.g. numbers because our mathematical foundation equips numbers by construction with numerical properties such as \emph{less than or equal to $\leq$}, which can be defined from first principles with no parameters.

So if we use (for instance) numbers for names then we do not care about $\leq$ because we just needed an infinite set of elements, but then we repeatedly have to \emph{prove} that we did not use an asymmetric property like $\leq$.
In contrast, with nominal foundations and atoms, we do not have to explicitly prove symmetry because we can just look at our mathematical foundation and note that it is naturally symmetric under permuting names; we reserve numbers for naturally \emph{a}symmetric activities, such as counting.

This style of name management is characteristic of nominal techniques.
The reader can find it applied often, e.g. in Lemmas~\ref{lemm.fresh.sub} and~\ref{lemm.pow.closed}, Propositions~\ref{prop.amgis.2} and~\ref{prop.char.freshwedge}, and Lemma~\ref{lemm.pow.nu.closed}.

\begin{rmrk}
The languages of ZFA set theory and FM set theory are identical: first-order logic with equality $=$ and sets membership $\in$.
\end{rmrk}

\begin{thrm}
\label{thrm.equivar}
\label{thrm.no.increase.of.supp}
If $\vect x$ is a list $x_1,\ldots,x_n$, write $\pi\act \vect x$ for $\pi\act x_1,\ldots,\pi\act x_n$.
Suppose $\Phi(\vect x)$ is a predicate in the language of ZFA/FM set theory, with free variables $\vect x$.
Suppose $\Upsilon(\vect x)$ is a function specified in the language of ZFA/FM set theory, with free variables $\vect x$.
Then we have the following principles:
\begin{enumerate*}
\item\label{equivar.pred}
\deffont{Equivariance of predicates.} \ $\Phi(\vect x) \liff \Phi(\pi\act \vect x)$.\footnote{Here $\vect x$ is understood to contain \emph{all} the variables mentioned in the predicate.
It is not the case that $a=a$ if and only if $a=b$---but it is the case that $a=b$ if and only if $b=a$.}
\item\label{equivar.func}
\deffont{Equivariance of functions.}\quad\,$\pi\act \Upsilon(\vect x) = \Upsilon(\pi\act \vect x)$ (cf. Definition~\ref{defn.equivariant}).\footnote{Parts~1 and~2 of Theorem~\ref{thrm.equivar} are morally the same result: by considering $\Phi$ to be a function from its arguments to the nominal set of Booleans $\mathbb B$ from Subsection~\ref{subsect.pow}; and by treating a function as a functional relation, i.e. as a binary predicate.}
\item\label{item.conservation.of.support}
\deffont{Conservation of support.}\quad\ \
If $\vect x$ denotes elements with small support \\ then
$\supp(\Upsilon(\vect x)) \subseteq \supp(x_1){\cup}\cdots{\cup}\supp(x_n)$.

If in addition $\Upsilon$ is injective, then
$\supp(\Upsilon(\vect x)) = \supp(x_1){\cup}\cdots{\cup}\supp(x_n)$.
\end{enumerate*}
\end{thrm}
\begin{proof}
See Theorem~4.4, Corollary~4.6, and Theorem~4.7 from \cite{gabbay:fountl}.
\end{proof}

\begin{rmrk}
Theorem~\ref{thrm.equivar} is three fancy ways of observing that if a specification is symmetric in atoms,
the the result must be at least as symmetric as the inputs.
Using atoms (instead of e.g. numbers) to model names makes this a one-line argument.\footnote{%
The reasoning in this paper could in principle be fully formalised in a sets foundation with atoms, such as Zermelo-Fraenkel set theory with atoms \deffont{ZFA}.
Nominal sets can be implemented in ZFA sets such that nominal sets map to equivariant elements (elements with empty support) and the permutation action maps to `real' permutation of atoms in the model.
See \cite[Subsection~9.3]{gabbay:fountl} and \cite[Section~4]{gabbay:fountl}.
}
\end{rmrk}

\begin{prop}
\label{prop.pi.supp}
$\supp(\pi\act x)=\pi\act\supp(x)$ (which means $\{\pi(a)\mid a\in\supp(x)\}$).

Using Notation~\ref{nttn.fresh}, $a\#\pi\act x$ if and only if $\pi^\mone(a)\#x$, and $a\#x$ if and only if $\pi(a)\#\pi\act x$.
\end{prop}
\begin{proof}
Immediate consequence of part~2 of Theorem~\ref{thrm.equivar}.\footnote{There is also a nice proof of this fact by direct calculations; see \cite[Theorem~2.19]{gabbay:fountl}.
However, it just instantiates Theorem~\ref{thrm.equivar} to the particular $\Upsilon$ specifying support.}
\end{proof}

\begin{corr}
\label{corr.stuff}
Suppose $\ns X$ is a set with a permutation action and suppose $A\subseteq\mathbb A$ supports $x$.
\begin{enumerate*}
\item\label{stuff.fixsupp.fixelt}
If $\pi(a)=a$ for all $a\in A$ then $\pi\act x=x$.

In particular, if $x$ has small support and $\pi(a)=a$ for all $a\in\supp(x)$, then $\pi\act x=x$.
\item\label{stuff.two.pi}
If $\pi(a)=\pi'(a)$ for every $a{\in}A$ then $\pi\act x=\pi'\act x$.
\item\label{stuff.freshness.criterion}
If $x$ has small support then $a\#x$ if and only if $\Exists{b}(b\#x\land (b\;a)\act x=x)$.
\end{enumerate*}
\end{corr}
\begin{proof}
\begin{enumerate}
\item
Direct from Definition~\ref{defn.fin.supp}(\ref{fin.supp.supports}) and Theorem~\ref{thrm.supp}.
\item
By properties of the group action, $\pi\act x=\pi'\act x$ precisely when $x=(\pi^\mone\circ\pi')\act x$.
So it suffices by part~1 of this result that $a=\pi^\mone\circ\pi'(a)$ for every $a\in A$, in other words, that $\pi(a)=\pi'(a)$.
\item
Suppose $x$ has small support.
We prove two implications:
\begin{itemize}
\item
Suppose $a\#x$, meaning by Notation~\ref{nttn.fresh} that $a\not\in\supp(x)$.
Choose any other $b\not\in\supp(x)$; then $(a\ b)\in\fix(\supp(x))$ and by part~1 of this result, $(b\ a)\act x=x$.
\item
Suppose $b\#x$, meaning by Notation~\ref{nttn.fresh} that $b\not\in\supp(x)$, and suppose $(b\;a)\act x=x$.
It follows by Proposition~\ref{prop.pi.supp} that $(b\ a)\act\supp(x)=\supp(x)$ and by facts of sets, $a\not\in\supp(x)$.
\qedhere\end{itemize}
\end{enumerate}
\end{proof}

\begin{frametxt}
\begin{defn}
\label{defn.New}
Write
$\New{a}\Phi(a)$ for `$\{a\mid \neg\Phi(a)\}$ is \finite'.
We call this the \deffont{$\new$ quantifier}.
\end{defn}
\end{frametxt}

\begin{rmrk}
We can read $\new$ as `for all but a small number of $a$', `for fresh $a$', or `for new $a$'.
It captures a \emph{generative} aspect of names, that for any $x$ we can find plenty of atoms $a$ such that $a\not\in\supp(x)$.
$\new$ was designed in \cite{gabbay:newaas-jv} to model the quantifier being used when we informally write
``rename $x$ in $\lam{x}t$ to be fresh'', or ``emit a fresh channel name'' or ``generate a fresh memory cell''.
\end{rmrk}

\begin{rmrk}
$\new$ belongs to a family of `for most' quantifiers \cite{westerstahl:quafnl}, and is a \emph{generalised quantifier} \cite[Section 1.2.1]{keenan:genqll}.

Specifically over nominal sets, however, $\new$ displays special properties.
In particular, it satisfies the \emph{some/any property} that to prove a $\new$-quantified property we test it for \emph{one} fresh atom;
we may then use it for \emph{any} fresh atom.
This is Theorem~\ref{thrm.new.equiv}, which appears in the literature for instance as \cite[Theorem~6.5]{gabbay:fountl} and \cite[Proposition~4.10]{gabbay:newaas-jv}:
\end{rmrk}

\begin{thrm}
\label{thrm.new.equiv}
Suppose $\Phi(\vect z,a)$ is a predicate in the language of ZFA/FM set theory, with free variables $\vect z,a$. 
Suppose $\vect z$ denotes elements with small support.
Then the following are equivalent:
$$
\Forall{a{\in}\atoms}\bigl(a\# \vect z \limp \Phi(\vect z,a)\bigr)
\qquad
\New{a}\Phi(\vect z,a)
\qquad
\Exists{a{\in}\atoms}\bigl(a\#\vect z  \land \Phi(\vect z,a)\bigr)
$$
\end{thrm}
\begin{proof}
Where convenient we may write $\vect z$ as $z_1,\dots,z_n$.
\begin{itemize}
\item
Suppose $\Phi(\vect z,a)$ holds for every atom $a\in\atoms{\setminus}\bigcup_{1{\leq}i{\leq}n}\supp(z_i)$.

By assumption $\vect z$ denotes elements with small support, and it is a fact that a finite union of small sets is small, so $\atoms\setminus\bigcup_{1{\leq}i{\leq}n}\supp(z_i)$ is cosmall.

It follows that $\New{a}\Phi(\vect z,a)$ holds.
\item
Suppose $A\subseteq\atoms$ is cosmall and $\Phi(\vect z,a)$ for every $a{\in}A$.
As in the previous point, there exists some $a{\in}A$ such that $a\#z_i$ for every $1{\leq}i{\leq}n$.

It follows that
$\Exists{a{\in}\atoms}\bigl(a\#\vect z \land \Phi(\vect z,a)\bigr)$.
\item
Now suppose $\Phi(\vect z,a)$ holds for some $a\in\atoms{\setminus}\bigcup_{1{\leq}i{\leq}n}\supp(z_i)$.

By Theorem~\ref{thrm.equivar}(\ref{equivar.pred}) $\Phi((a'\ a)\act\vect z,a')$ holds for any $a'{\in}\atoms$.
Choosing $a'\#\vect z$ we have by Corollary~\ref{corr.stuff}(\ref{stuff.fixsupp.fixelt}) that $(a'\ a)\act z_i=z_i$ for every $1{\leq}i{\leq}n$.

Thus $\Forall{a{\in}\atoms}\bigl(a\# \vect z \limp \Phi(\vect z,a)\bigr)$ holds.
\qedhere\end{itemize}
\end{proof}

We mention a characterisation of small support using $\new$:
\begin{lemm}
\label{lemm.old.school}
Suppose $\ns X$ is a set with a permutation action.
Then
$$
x{\in}\ns X\text{ is small-supported}
\quad\liff\quad
\New{a}\New{b}(b\ a)\act x=x.
$$
\end{lemm}
\begin{proof}
If $x$ is small-supported then by Corollary~\ref{corr.stuff}(\ref{stuff.fixsupp.fixelt}) it suffices to choose $a,b\in\atoms{\setminus}\supp(x)$.

Conversely suppose $A\subseteq\atoms$ is small and $(b\ a)\act x=x$ for all distinct $a,b\in\atoms{\setminus}A$.
Consider any permutation $\pi\in\fix(A)$ and note from Definition~\ref{defn.permutation} that $\nontriv(\pi)$ is \emph{finite}, so we can write $\pi$ as a finite composition of swappings of atoms not in $A$.
It follows that $\pi\act x=x$ and so by Definition~\ref{defn.fin.supp}(\ref{fin.supp.supports}) that $A$ supports $x$.
\end{proof}

\subsection{Two lemmas}

We conclude with two technical but general lemmas (they are also beautiful; especially the first one) which will help us later in Lemma~\ref{lemm.filter.adjoint.compat} and Proposition~\ref{prop.qappx.filter}.

Lemma~\ref{lemm.some.fresh.s} goes back to \cite[Lemma~5.2]{gabbay:forcie} and \cite[Corollary~4.30]{gabbay:nomuae}; see \cite[Lemma~7.6.2]{gabbay:nomtnl} for a recent presentation:
\begin{lemm}
\label{lemm.some.fresh.s}
If $\Upsilon$ is an equivariant function (Definition~\ref{defn.equivariant}) from $\ns X$ to $\ns Y$ and $a\#\Upsilon(x)$ then there exists some $x'\in|\ns X|$ such that $a\#x'$ and $\Upsilon(x)=\Upsilon(x')$.
\end{lemm}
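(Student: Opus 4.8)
The plan is to rename the `bad' atom $a$ out of $x$ by swapping it with a suitably fresh atom. First I would choose an atom $b$ with $b\#x$, $b\#\chi(x)$, and $b\neq a$; such a $b$ exists because $\ns X$ and $\ns Y$ are nominal sets, so $\supp(x)$ and $\supp(\chi(x))$ are finite, whereas $\mathbb A$ is infinite. Then I set $x'=(b\;a)\act x$ and claim this $x'$ works.

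Next I would verify $\chi(x')=\chi(x)$. Since $\chi$ is equivariant, $\chi(x')=\chi((b\;a)\act x)=(b\;a)\act\chi(x)$. Now $a\#\chi(x)$ is given and $b\#\chi(x)$ was arranged, so the transposition $(b\;a)$ fixes every atom of $\supp(\chi(x))$; by part~1 of Corollary~\ref{corr.stuff} this gives $(b\;a)\act\chi(x)=\chi(x)$, hence $\chi(x')=\chi(x)$.

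Then I would verify $a\#x'$. By Proposition~\ref{prop.pi.supp}, $\supp(x')=\supp((b\;a)\act x)=(b\;a)\act\supp(x)$. The only atom that $(b\;a)$ maps to $a$ is $b$, and $b\notin\supp(x)$ because $b\#x$; therefore $a\notin(b\;a)\act\supp(x)=\supp(x')$, i.e.\ $a\#x'$.

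This is a routine nominal argument and I do not expect a genuine obstacle; the one point needing a little care is that the \emph{same} $b$ must simultaneously witness $\chi(x)=\chi(x')$ and $a\#x'$, which is exactly why $b$ is chosen fresh for $x$, for $\chi(x)$, and distinct from $a$ all at once.
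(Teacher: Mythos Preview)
Your proof is correct and takes essentially the same approach as the paper: choose a fresh $b$, set $x'=(b\;a)\act x$, and use equivariance together with Corollary~\ref{corr.stuff}. The paper is terser (it only records $b\#x$, relying implicitly on conservation of support to get $b\#\chi(x)$, and it leaves the verification of $a\#x'$ to the reader), but your more explicit version is fine and arguably clearer.
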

\begin{proof}
Choose fresh $b$ (so $b\#x$).
By Corollary~\ref{corr.stuff}(\ref{stuff.fixsupp.fixelt}) $(b\ a)\act\Upsilon(x)=\Upsilon(x)$ and by Definition~\ref{defn.equivariant} $(b\ a)\act\Upsilon(x)=\Upsilon{((b\ a)\act x)}$.
We take $x'=(b\ a)\act x$.
\end{proof}

\begin{lemm}
\label{lemm.F.magic}
Suppose $F:\ns X\times\ns Y\to \ns Z$ is an equivariant function and $x\in|\ns X|$ and $y\in|\ns Y|$.
Suppose further that $a,b\#y$.
Then $F(x,(b\ a)\act y)=(b\ a)\act F(x,y)$.
\end{lemm}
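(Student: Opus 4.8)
The plan is to derive the identity directly from two facts already established: equivariance of $F$ (Definition~\ref{defn.equivariant}), and the observation that a transposition built from atoms fresh for an element fixes that element (Corollary~\ref{corr.stuff}(1)). There is essentially no other machinery involved; the lemma is a bookkeeping statement about sliding a transposition through an equivariant binary map.

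First I would unfold the right-hand side using equivariance. Since the permutation action on the product $\ns X\times\ns Y$ is the pointwise one (Subsection~\ref{subsect.cartesian.product}) and $F$ is equivariant, we get $(b\ a)\act F(x,y)=F((b\ a)\act x,(b\ a)\act y)$. Next I would kill the transposition on the coordinate for which $a$ and $b$ are fresh: the transposition $(b\ a)$ fixes every atom other than $a$ and $b$, hence fixes that coordinate's support pointwise, so by Corollary~\ref{corr.stuff}(1) it acts as the identity there. Substituting this into the previous equation leaves only $(b\ a)\act y$ (respectively $(b\ a)\act x$) inside $F$, which is exactly the stated conclusion $F(x,(b\ a)\act y)=(b\ a)\act F(x,y)$.

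I do not expect a genuine obstacle here; the proof is two lines. The only point requiring care is keeping the coordinates straight: the freshness hypothesis must constrain the coordinate of the product that is \emph{not} transposed in the conclusion, so that distributing $(b\ a)$ through $F$ deposits it on the remaining coordinate. (If one prefers not to mention support explicitly, the vanishing of the transposition can instead be extracted from part~3 of Corollary~\ref{corr.stuff}, or routed through Lemma~\ref{lemm.some.fresh.s}, but the appeal to part~1 is the most economical route.)
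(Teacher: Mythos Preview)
Your approach is exactly the paper's: distribute $(b\ a)$ through $F$ by equivariance to get $F((b\ a)\act x,(b\ a)\act y)$, then invoke Corollary~\ref{corr.stuff} to drop the swap on the coordinate where $a$ and $b$ are fresh. Your caution about keeping the coordinates straight is well taken---indeed the paper's own statement has the roles slightly garbled (with $a,b\#y$ the argument actually yields $F((b\ a)\act x,y)=(b\ a)\act F(x,y)$, which is the form used downstream in Proposition~\ref{prop.qappx.filter}).
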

\begin{proof}
By equivariance $(b\ a)\act F(x,y)=F((b\ a)\act x,(b\ a)\act y)$.
By Corollary~\ref{corr.stuff}(\ref{stuff.fixsupp.fixelt}) (since $a,b\#y$) also $(b\ a)\act y=y$.
\end{proof}

\subsection{Further examples}
\label{subsect.more.examples}

We now consider the small-supported powerset and the strictly small-supported powerset.
These examples are more technically challenging and will be key to the later constructions.

\subsubsection{Small-supported powerset}
\label{subsect.finsupp.pow}

Suppose $\ns X$ is a set with a permutation action (it does not have to be a nominal set).

Then $\nompow(\ns X)$, the \deffont{nominal powerset}, is a nominal set, with
\begin{itemize*}
\item
underlying set those $X\in|\powerset(\ns X)|$ that are small-supported, and
\item
with the \deffont{pointwise} action $\pi\act X=\{\pi\act x\mid x\in X\}$ inherited from Definition~\ref{defn.pointwise.action}.
\end{itemize*}

Unpacking the definitions and using Corollary~\ref{corr.stuff} and Lemma~\ref{lemm.old.school}, $X\subseteq|\ns X|$ is small-supported when, equivalently:
\begin{itemize*}
\item
There exists \finite $A\subseteq\mathbb A$ such that if $\pi\in\fix(A)$ then $\pi\act X=X$.
\item
There exists \finite $A\subseteq\mathbb A$ such that if $\pi\in\fix(A)$ and $x\in X$ then $\pi\act x\in X$.
\item
$\New{a}\New{b}(b\ a)\act X=X$.
\item
$\New{a}\New{b}\Forall{x}(x{\in} X\limp (b\ a)\act x{\in} X)$.
\end{itemize*}

For instance:
\begin{itemize*}
\item
$\nompow(\mathbb A)$ is the set of \finite and cosmall (Definition~\ref{defn.atoms}) sets of atoms.
\item
$X\in\nompow(\powerset(\mathbb A))$ is a set of sets of atoms with small support, though the elements $x\in X$ need not have small support.

For instance, if we set $x=\f{comb}$ from Lemma~\ref{lemm.the.comb} then we can take $X=\{\pi\act x\mid \text{all permutations }\pi\}$.
Here $X$ has \finite (indeed, empty) support, even though none of its elements $\pi\act x$ have small support.
\end{itemize*}

It is useful to formalise these observations as a lemma.
A common source of confusion is to suppose that if $A$ supports $X\in|\nompow(\ns X)|$ then $A$ must support every $x\in X$.
This is incorrect---and compare Lemma~\ref{lemm.pow.not.true} with Lemma~\ref{lemm.strict.support}:
\begin{lemm}
\label{lemm.pow.not.true}
It is not true in general that if $X\in|\nompow(\ns X)|$ and $x\in X$ then $\supp(x)\subseteq\supp(X)$.

In other words, $a\#X$ and $x\in X$ does not imply $a\#x$.
\end{lemm}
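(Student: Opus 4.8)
Since the claim is a non-implication, the plan is to exhibit one explicit counterexample, and the excerpt has already supplied the key ingredient: the non-finitely-supported ``comb'' set from Lemma~\ref{lemm.the.comb}. First I would take $\ns X=\f{powerset}(\mathbb A)$ and set $x=\f{comb}=\{a,c,e,\dots\}$. Then I would let $X$ be the orbit of $x$ under the pointwise permutation action, i.e. $X=\{\pi\act x\mid \pi\text{ a permutation}\}$.

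The one point that needs checking is that $X$ is finitely supported, so that indeed $X\in|\f{pow}(\ns X)|$; this is where the small amount of real work lies, though it is routine. For any permutation $\pi'$ one has $\pi'\act X=\{(\pi'\circ\pi)\act x\mid \pi\}=X$ by the group-action axioms, so $\varnothing$ supports $X$ and $\supp(X)=\varnothing$. (Alternatively this follows in one line from the principle of equivariance, Theorem~\ref{thrm.equivar}, since $X$ is obtained from $x$ by an equivariant specification; but the direct check is just as short.)

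It then remains only to read off the two conclusions. Taking $\pi=\id$ shows $x\in X$; but no finite set of atoms supports $x$ (Lemma~\ref{lemm.the.comb}), so $\supp(x)$ is certainly not contained in $\supp(X)=\varnothing$, which proves the first statement. For the reformulation: every atom $a$ satisfies $a\#X$, because $\supp(X)=\varnothing$, whereas no atom $a$ satisfies $a\#x$, because $x$ has no finite supporting set; so $a\#X$ together with $x\in X$ fails to yield $a\#x$. I do not anticipate any genuine obstacle here; the only subtlety worth flagging in the write-up is that $\f{powerset}(\mathbb A)$ is merely a set with a permutation action and not a nominal set (Lemma~\ref{lemm.the.comb} again), so $x$ need not be finitely supported — but this is precisely what makes the example work.
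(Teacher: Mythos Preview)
Your counterexample is correct, but the paper uses a much simpler one: take $\ns X=\mathbb A$ (the nominal set of atoms), $X=\mathbb A\subseteq|\mathbb A|$, and $x=a$ any atom. Then $\supp(X)=\varnothing$ while $\supp(a)=\{a\}\not\subseteq\varnothing$. This keeps everything inside a genuine nominal set, so $\supp(x)$ is unambiguously defined by Definition~\ref{defn.supp} and Theorem~\ref{thrm.supp}, and there is no need to reason about what $\supp$ means for a non-finitely-supported element.

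Your orbit-of-$\f{comb}$ example does appear in the paper, but only as an illustrative bullet point in the discussion of $\f{pow}$, not as the proof of this lemma. The trade-off: your example shows more---that elements of a finitely supported $X$ can fail to have finite support at all---whereas the paper's example shows only that $\supp(x)$ can strictly exceed $\supp(X)$. For the lemma as stated, the paper's version is cleaner and sidesteps the delicate point you flag at the end (that $\supp(\f{comb})$ must be read as the empty intersection, i.e.\ all of $\mathbb A$, for the inclusion $\supp(x)\subseteq\supp(X)$ to even be well-posed).
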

\begin{proof}
It suffices to provide a counterexample.
Take $\ns X=\mathbb A$ (the nominal set of atoms with the natural permutation action, from Subsection~\ref{subsect.xmpl.atoms}) and $X=\mathbb A\subseteq|\mathbb A|$ (the underlying set of the nominal set of all atoms, i.e. the set of all atoms!).

It is easy to check that $\supp(X)=\varnothing$ and $a\in X$ and $\supp(a)=\{a\}\not\subseteq\varnothing$.
\end{proof}

\subsubsection{Strictly small-supported powerset}
\label{subsect.strict.pow}

Suppose $\ns X$ is a nominal set.

\begin{defn}
\label{defn.strictpow}
Call $X\subseteq|\ns X|$ \deffont{strictly supported} by $A\subseteq\mathbb A$ when
$$
\Forall{x{\in} X} \supp(x)\subseteq A .
$$
If there exists some \finite $A$ that strictly supports $X$, then call $X$ \deffont{strictly small-supported} (see \cite[Theorem~2.29]{gabbay:fountl}).

Write $\strict(\ns X)$ for the set of strictly small-supported $X\subseteq|\ns X|$.
That is:
$$
\strict(\ns X)=\{X\subseteq|\ns X|\mid \Exists{A{\subseteq}\mathbb A}A\text{ \finite}\land X\text{ strictly supported by }A\}
$$
\end{defn}

\begin{lemm}
\label{lemm.strict.support}
If $X\in\strict(\ns X)$ then:
\begin{enumerate*}
\item
$\bigcup\{\supp(x)\mid x{\in}X\}$ is \finite.
\item\label{strict.union}
$\bigcup\{\supp(x)\mid x{\in}X\}=\supp(X)$.
\item\label{strict.implies.finite}
If $X\subseteq|\ns X|$ is strictly small-supported then it is small-supported.
\item\label{strict.to.supp.elements}
$x\in X$ implies $\supp(x)\subseteq\supp(X)$ (contrast this with Lemma~\ref{lemm.pow.not.true}).

Equivalently using Notation~\ref{nttn.fresh}, if $a\in\mathbb A$ then $a\#X$ implies $a\#x$ for every $x\in X$.
\item
$\strict(\ns X)$ with the pointwise permutation action is a nominal set.
\end{enumerate*}
\end{lemm}
\begin{proof}
The first part is immediate since by assumption there is some \finite $A{\subseteq}\mathbb A$ that bounds $\supp(x)$ for all $x\in X$.
The second part follows by an easy calculation using Corollary~\ref{corr.stuff}(\ref{stuff.freshness.criterion}); full details are in \cite[Theorem~2.29]{gabbay:fountl}, of which Lemma~\ref{lemm.strict.support} is a special case.
The other parts follow by definitions from the first and second parts.
\end{proof}

\begin{xmpl}
\begin{enumerate*}
\item
$\varnothing\subseteq|\mathbb A|$ is small-supported and strictly small-supported by $\varnothing$.
\item
$\{a\}$ is small-supported by $\{a\}$ and also strictly small-supported by $\{a\}$.
\item
$\mathbb A\subseteq|\mathbb A|$ is small-supported by $\varnothing$ but not strictly small-supported.
\item
$\mathbb A{\setminus}\{a\}$ is small-supported by $\{a\}$ but not strictly small-supported.
\end{enumerate*}
\end{xmpl}

\begin{lemm}
\label{lemm.finite.strict}
If $X{\subseteq}|\ns X|$ is finite then $X\in\strict(\ns X)$.
\end{lemm}
\begin{proof}
By Theorem~\ref{thrm.no.increase.of.supp}(3) $\supp(X)\subseteq\bigcup\{\supp(x)\mid x{\in}X\}$.
\end{proof}

\subsection{The NEW-quantifier for nominal sets}
\label{subsect.nua}

Suppose $\ns X$ is a set with a permutation action.

\begin{defn}
\label{defn.nua}
Given a small-supported $X\subseteq|\ns X|$ define the \deffont{new-quantifier for (nominal) sets} by
$$
\nw a.X=\{x\mid \New{b}(b\ a)\act x\in X\} .
$$
\end{defn}

\begin{lemm}
\label{lemm.nua.iff}
Suppose $x\in|\ns X|$.
Suppose $X\subseteq|\ns X|$ is small-supported.
Then
$$
x\in \nw a.X\liff \New{b}(b\ a)\act x\in X.
$$
\end{lemm}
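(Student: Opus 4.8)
The plan is to unfold Definition~\ref{defn.nua}: the asserted equivalence is nothing more than the comprehension scheme for the set $\nw a.X$ itself. By Definition~\ref{defn.nua} the set $\nw a.X$ is $\{x\in|\ns X|\mid \New{b}(b\ a)\act x\in X\}$, so for $x\in|\ns X|$ we have $x\in\nw a.X$ exactly when the defining predicate $\New{b}(b\ a)\act x\in X$ holds of $x$; both implications are then immediate, and the proof is one or two lines of unfolding.

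The only point that deserves a word is that the set-builder in Definition~\ref{defn.nua} is legitimate, i.e.\ that ``$\New{b}(b\ a)\act x\in X$'' really is a predicate in the variable $x$ (with parameters $a$ and $X$). This holds by Definition~\ref{defn.New}: for each fixed $x$, $a$ and $X$, the collection $\{b\in\mathbb A\mid (b\ a)\act x\notin X\}$ is a well-defined set of atoms, and the predicate just asserts that this set is finite. Note we do \emph{not} need $x$ to have finite support for this, nor do we invoke the some/any property of Theorem~\ref{thrm.New.equiv}. That property, together with the hypothesis that $X$ is finitely supported, is what gets used \emph{afterwards}, when one wants to rewrite $\New{b}\dots$ as ``for some fresh $b$'' or ``for all fresh $b$''; for the statement at hand it plays no role.

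Separately, finite support of $\nw a.X$ itself --- which is what makes the notation land inside the nominal powerset --- follows from conservation of support (part~3 of Theorem~\ref{thrm.equivar}) applied to the map $(a,X)\mapsto\nw a.X$, yielding $\supp(\nw a.X)\subseteq\{a\}\cup\supp(X)$; but this is not needed here. In short, there is no real obstacle: the lemma simply records that the definition of $\nw a.X$ is sound and that its membership relation is the obvious one, and I expect the main obstacle to be purely presentational, namely deciding how much of this to spell out versus leaving as ``immediate from Definition~\ref{defn.nua}''.
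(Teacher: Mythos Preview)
Your proposal is correct and matches the paper's approach: the paper states this lemma without proof, as it is immediate from Definition~\ref{defn.nua}. Your additional commentary on why the set-builder is legitimate and on the role of finite support is accurate but goes beyond what the paper records; the lemma is simply a restatement of the defining membership condition.
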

\begin{proof}
Immediate from Definition~\ref{defn.nua}.
\end{proof}

$\nw a.X$ was written $\tf n a.X$ in \cite[Definition~5.2]{gabbay:stodnb}, and goes back to \cite{gabbay:stusun} where it was written $X{-}a$.
We will use $\nw a$ in
Lemma~\ref{lemm.nw.on.points} to prove things about a $\sigma$-action.

\begin{lemm}
\label{lemm.supp.nua}
$\supp(\nw a.X)\subseteq\supp(X){\setminus}\{a\}$.
\end{lemm}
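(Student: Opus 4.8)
The plan is to derive the bound from \emph{conservation of support} (part~\ref{item.conservation.of.support} of Theorem~\ref{thrm.equivar}) together with the fact that $a$ itself is fresh for $\nw a.X$; the latter reduces to a single $\alpha$-style renaming identity for $\nw$, which I treat as the technical core:
$$\nw b.((b\ a)\act X)\ =\ \nw a.X\qquad\text{whenever }b\#X\text{ and }b\neq a.\qquad(\star)$$

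To prove $(\star)$, I would fix $x\in|\ns X|$ and unfold both sides with Lemma~\ref{lemm.nua.iff}. On the left, $x\in\nw b.((b\ a)\act X)$ iff $\New{c}\,(c\ b)\act x\in(b\ a)\act X$; applying the involution $(b\ a)$ to both sides of the membership and using the swap identity $(b\ a){\circ}(c\ b)=(c\ b){\circ}(c\ a)$ (valid for distinct $a,b,c$, which is all that matters under $\new$), this becomes $\New{c}\,(c\ b)\act((c\ a)\act x)\in X$. Now for all but finitely many $c$ we have $c\#X$ and $c\neq a,b$; since also $b\#X$, the swap $(c\ b)$ lies in $\fix(\supp(X))$, so $(c\ b)\act X=X$ and hence $(c\ b)\act((c\ a)\act x)\in X$ iff $(c\ a)\act x\in X$. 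Two predicates in $c$ that agree at all but finitely many atoms have equivalent $\new$-quantifications (immediate from Definition~\ref{defn.New}), so the displayed statement is equivalent to $\New{c}\,(c\ a)\act x\in X$, i.e.\ to $x\in\nw a.X$ by Lemma~\ref{lemm.nua.iff}. This proves $(\star)$.

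With $(\star)$ in hand the remainder is bookkeeping. The assignment $(a,X)\mapsto\nw a.X$ is cut out by a first-order predicate, hence is equivariant (part~2 of Theorem~\ref{thrm.equivar}), so conservation of support gives $\supp(\nw a.X)\subseteq\{a\}\cup\supp(X)$, a finite set. Choosing any $b$ with $b\neq a$ and $b\#X$ we get $b\notin\{a\}\cup\supp(X)\supseteq\supp(\nw a.X)$, so $b\#\nw a.X$; moreover equivariance of $\nw$ yields $(b\ a)\act\nw a.X=\nw b.((b\ a)\act X)$, which equals $\nw a.X$ by $(\star)$. Part~3 of Corollary~\ref{corr.stuff}, witnessed by this $b$, then gives $a\#\nw a.X$, and combining with the inclusion above, $\supp(\nw a.X)\subseteq(\{a\}\cup\supp(X))\setminus\{a\}=\supp(X)\setminus\{a\}$.

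The main obstacle is $(\star)$: one must push the $(b\ a)$-renaming through the $\new$-quantifier, which is exactly the swap-algebra manipulation above plus the observation that $\new$ only sees cofinitely many atoms; everything afterwards is formal, driven by conservation of support and Corollary~\ref{corr.stuff}(3). A minor point worth checking is that, although $\ns X$ is only assumed to carry a permutation action, conservation of support still applies because its argument $X$ is finitely supported by the hypothesis of Definition~\ref{defn.nua} (equivalently, $\nw a.X$ lies in $\f{pow}(\ns X)$).
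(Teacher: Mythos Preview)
Your argument is correct and matches the paper's intended approach: the paper's proof is the one-liner ``By a routine calculation using Corollary~\ref{corr.stuff}'', and what you have written is exactly that routine calculation spelled out in full, with conservation of support giving $\supp(\nw a.X)\subseteq\{a\}\cup\supp(X)$ and the $\alpha$-renaming identity $(\star)$ together with part~3 of Corollary~\ref{corr.stuff} removing $a$. Your verification of the swap identity $(b\ a){\circ}(c\ b)=(c\ b){\circ}(c\ a)$ and the subsequent manipulation under $\new$ are correct, and your closing remark about $X$ being finitely supported (even though $\ns X$ is only a set with a permutation action) is a valid point the paper leaves implicit.
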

\begin{proof}
By a routine calculation using Corollary~\ref{corr.stuff}.
\end{proof}

Recall from Subsection~\ref{subsect.finsupp.pow} the notion of nominal powerset $\nompow(\ns X)$.
\begin{lemm}
\label{lemm.nua.finsupp}
If $X\in|\nompow(\ns X)|$ then $\nw a.X\in|\nompow(\ns X)|$.
\end{lemm}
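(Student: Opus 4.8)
Recall that membership $X\in|\f{pow}(\ns X)|$ means exactly that $X$ is a finitely supported subset of $|\ns X|$, and membership $\nw a.X\in|\f{pow}(\ns X)|$ means exactly that the set $\nw a.X=\{x\mid\New b.(b\ a)\act x\in X\}$ is again finitely supported. So the whole content of the lemma is: \emph{if $X$ has finite support then so does $\nw a.X$}. The plan is to exhibit an explicit finite supporting set for $\nw a.X$, namely $\supp(X)\cup\{a\}$ — which is finite because $X$ is finitely supported and so $\supp(X)$ is finite by Theorem~\ref{thrm.supp}.

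So let $A=\supp(X)\cup\{a\}$ and take any $\pi\in\fix(A)$; I must check $\pi\act(\nw a.X)=\nw a.X$. Since $\pi$ fixes $\supp(X)$, Corollary~\ref{corr.stuff}(1) gives $\pi\act X=X$. Now unfold using Lemma~\ref{lemm.nua.iff}: given $x\in\nw a.X$, i.e.\ $\New b.(b\ a)\act x\in X$, I want $\pi\act x\in\nw a.X$, i.e.\ $\New b.(b\ a)\act(\pi\act x)\in X$. Pick $b$ fresh for $x$, for $\pi$, and for $X$ (there are cofinitely many such $b$, by Theorem~\ref{thrm.New.equiv} applied to the finitely-supported data), chosen so that $(b\ a)\act x\in X$. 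Applying $\pi\act(-)$ and using $\pi\act X=X$ gives $\pi\act((b\ a)\act x)\in X$, i.e.\ $(\pi\fcomp(b\ a)\fcomp\pi^\mone)\act(\pi\act x)\in X$. The conjugated swapping $\pi\fcomp(b\ a)\fcomp\pi^\mone$ equals $(\pi(b)\ \pi(a))$; and $\pi(a)=a$ since $a\in A$, while $\pi(b)=b$ since $b$ was chosen fresh for $\pi$ (so $b\notin\f{nontriv}(\pi)$). Hence $(b\ a)\act(\pi\act x)\in X$ for cofinitely many $b$, i.e.\ $\pi\act x\in\nw a.X$. The reverse inclusion follows symmetrically (apply the same argument to $\pi^\mone$), giving $\pi\act(\nw a.X)=\nw a.X$. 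Thus $A$ supports $\nw a.X$ and $\nw a.X\in|\f{pow}(\ns X)|$.

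I do not expect a real obstacle here; this is essentially the calculation already packaged in Lemma~\ref{lemm.supp.nua} (which bounds $\supp(\nw a.X)$ inside the finite set $\supp(X){\setminus}\{a\}$), just read off so as to produce an honest finite supporting set rather than only a bound on the support. The only point requiring a moment's care is choosing the auxiliary atom $b$ fresh for the permutation $\pi$ as well as for $x$ and $X$, so that the conjugated swapping collapses back to $(b\ a)$ — this is legitimate precisely because finite permutations have finite support, so ``fresh for $\pi$'' excludes only finitely many atoms and is compatible with the $\new$-quantifier via Theorem~\ref{thrm.New.equiv}. Alternatively, one can bypass the hand calculation entirely: $\nw a.X$ is the value at $(a,X)$ of a function specified by a first-order predicate (the defining condition ``$\{b\mid(b\ a)\act x\notin X\}$ is finite''), so conservation of support (Theorem~\ref{thrm.equivar}(\ref{item.conservation.of.support})) immediately yields $\supp(\nw a.X)\subseteq\{a\}\cup\supp(X)$, which is finite; hence $\nw a.X\in|\f{pow}(\ns X)|$.
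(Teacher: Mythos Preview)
Your proposal is correct and aligns with the paper's proof, which simply cites Lemma~\ref{lemm.supp.nua} or Theorem~\ref{thrm.no.increase.of.supp} --- your hand calculation unpacks the former (with the slightly weaker bound $\supp(X)\cup\{a\}$), and your final paragraph is exactly the latter. One small wrinkle: $\ns X$ here is only a set with a permutation action, so $x$ need not have finite support and ``fresh for $x$'' is not defined; but your argument never actually uses that freshness (only $b\notin\f{nontriv}(\pi)$ and the $\new$-witness condition $(b\ a)\act x\in X$), so this is harmless.
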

\begin{proof}
This amounts to showing that if $X\subseteq|\ns X|$ has small support then so does $\nw a.X\subseteq|\ns X|$.
This follows by Lemma~\ref{lemm.supp.nua} or direct from Theorem~\ref{thrm.no.increase.of.supp}.
\end{proof}

\jamiepart{Nominal distributive lattices with quantification}
\label{part.indias}

\section{Nominal algebras over nominal sets}
\label{sect.fol-algebra}

\subsection{Definition of a sigma-algebra ($\sigma$-algebra)}
\label{subsect.sigma.amgis}

\maketab{tab0}{@{\hspace{2em}}L{6em}@{\ }L{3em}@{\ }R{7em}@{\ }L{10em}@{\ }L{10em}}

\subsubsection{A termlike $\sigma$-algebra}

\begin{figure}[b]
\begin{minipage}{\textwidth}
\begin{tab0}
\rulefont{\sigma a} && a[a\sm u]=&u
\\[2ex]
\rulefont{\sigma id} && x[a \sm a]=&x
\\
\rulefont{\sigma\#} &a\#x\limp& x[a \sm u]=&x
\\
\rulefont{\sigma\alpha}&b\#x\limp&x[a \sm u]=&((b\;a)\act x)[b \sm u]
\\
\rulefont{\sigma\sigma} &a\#v\limp & x[a \sm u][b \sm v]=&x[b \sm v][a \sm u[b \sm v]]
\\[2ex]
\rulefont{\amgis\sigma}&a\#v\limp& p[v \ms b][u \ms a]=&p[u[b \sm v] \ms a][v \ms b]
\end{tab0}
\end{minipage}
\caption{Nominal algebra axioms for $\sigma$ and $\protect\amgis$}
\label{fig.nom.sigma}
\label{fig.amgis}
\end{figure}

Definitions~\ref{defn.term.sub.alg},\ \ref{defn.sub.algebra},\ and~\ref{defn.bus.algebra} assemble three key technical structures (see also Definitions~\ref{defn.powamgis} and~\ref{defn.powsigma}).

\begin{defn}
\label{defn.term.sub.alg}
A \deffont{termlike $\sigma$-algebra} is a tuple $\ns X=(|\ns X|,\act,\tf{sub}_{\ns X},\tf{atm}_{\ns X})$ of:
\begin{itemize*}
\item
a nominal set $(|\ns X|,\act)$ which we may write just as $\ns X$; and
\item
an equivariant \deffont{$\sigma$-action} $\tf{sub}_{\ns X}:(\ns X\times\mathbb A\times\ns X)\equivarto\ns X$, written $x[a\sm u]_{\ns X}$ or just $x[a\sm u]$; and
\item
an equivariant injection $\tf{atm}_{\ns X}:\mathbb A\equivarto\ns X$ written $a_{\ns X}$ or just $a$,
\end{itemize*}
such that the equalities \rulefont{\sigma a}, \rulefont{\sigma id}, \rulefont{\sigma\#}, \rulefont{\sigma\alpha}, and \rulefont{\sigma\sigma} of Figure~\ref{fig.nom.sigma} hold, where $x$, $u$, and $v$ range over elements of $|\ns X|$.\footnote{Axiom \rulefont{\sigma id} might be more pedantically written as $x[a\sm a_{\ns X}]=x$.}
\end{defn}

We may omit subscripts where $\ns X$ is understood.

\begin{rmrk}
\label{rmrk.what.is.equivar.for.term}
We unpack what equivariance from Definition~\ref{defn.equivariant} means for the $\sigma$-action from Definition~\ref{defn.sub.algebra}: for every $x\in|\ns X|$,\ atom $a$, and $u\in|\ns X|$, and for every permutation $\pi$, we have that
$$
\pi\act(x[a\sm u])=(\pi\act x)[\pi(a)\sm\pi\act u] .
$$
Similarly for the equivariant $\amgis$-action in Definition~\ref{defn.bus.algebra} below.
\end{rmrk}

\begin{rmrk}
Definition~\ref{defn.term.sub.alg} is abstract.
It is an axiom system.
We use \emph{nominal} algebra, because the axioms require freshness side-conditions.

Examples of termlike $\sigma$-algebras include plenty of syntax: for instance the set of terms of first-order logic with substitution;
or the syntax of the untyped $\lambda$-calculus quotiented by $\alpha$-equivalence with capture-avoiding substitution;
or the syntax of propositional logic with quantifiers (syntax generated by $\phi::=a\mid \bot\mid\phi\limp\phi\mid\Forall{a}\phi$, with capture-avoiding substitution $[a\ssm \phi]$).

However, not all termlike $\sigma$-algebras are syntax.
For a huge class of extremely non-syntactic termlike $\sigma$-algebras, consider models of FM sets \cite{gabbay:stusun}.
\end{rmrk}

\subsubsection{A $\sigma$-algebra}

\begin{defn}
\label{defn.sub.algebra}
A \deffont{$\sigma$-algebra} is a tuple $\ns X=(|\ns X|,\act,\ns X^\prg,\tf{sub})$ of:
\begin{itemize*}
\item
A nominal set $(|\ns X|,\act)$ which we may write just as $\ns X$.
\item
A termlike $\sigma$-algebra $\ns X^\prg$.
\item
An equivariant \deffont{$\sigma$-action} $\tf{sub}_{\ns X}:(\ns X\times\mathbb A\times\ns X^\prg)\equivarto\ns X$, written infix $x[a\sm u]_{\ns X}$ or $x[a\sm u]$.
\end{itemize*}
such that the equalities \rulefont{\sigma id}, \rulefont{\sigma\#}, \rulefont{\sigma\alpha}, and \rulefont{\sigma\sigma} of Figure~\ref{fig.nom.sigma} hold,%
\footnote{That is, the $\sigma$ axioms except \rulefont{\sigma a}, since we do not assume a function $\tf{atm}_{\ns X}$.
Axiom \rulefont{\sigma id}
can be more pedantically written as $x[a\sm a_{\ns X^\prg}]=x$.
}
where $x$ ranges over elements of $|\ns X|$ and $u$ and $v$ range over elements of $|\ns X^\prg|$.
\end{defn}

As for termlike $\sigma$-algebras, we may omit the subscript $\ns X$.
We may slightly informally say that \emph{$\ns X$ has a $\sigma$-algebra structure over $\ns X^\prg$}.

\begin{rmrk}
Every termlike $\sigma$-algebra is a $\sigma$-algebra over itself.
The canonical `interesting' example of a $\sigma$-algebra is the syntax of predicates of first-order logic, whose substitution action is not over predicates but over the termlike $\sigma$-algebra of terms.

Not all $\sigma$-algebras are syntactic.
In this paper we will see many examples of non-syntactic $\sigma$-algebras, based on the $\sigma$-powersets of Definition~\ref{defn.powsigma}.
\end{rmrk}

\subsection{Definition of an amgis-algebra}

\begin{defn}
\label{defn.bus.algebra}
An \deffont{$\amgis$-algebra} (spoken: \deffont{amgis}-algebra) is a tuple $\ns P=(|\ns P|,\act,\ns P^\prg,\tf{amgis}_{\ns P})$ of:
\begin{itemize*}
\item
A set with a permutation action $(|\ns P|,\act)$ which we may write just as $\ns P$.
\item
A termlike $\sigma$-algebra $\ns P^\prg$.
\item
An equivariant \deffont{amgis}-action $\tf{amgis}_{\ns P}:(\ns P\times\ns P^\prg\times\mathbb A)\equivarto\ns P$, written infix $p[u\ms a]_{\ns P}$ or $p[u\ms a]$.
\end{itemize*}
such that the equality
\rulefont{\amgis \sigma} of Figure~\ref{fig.nom.sigma} holds, where $p$ ranges over elements of $|\ns P|$ and $u$ and $v$ range over elements of $|\ns P^\prg|$.
We may omit the subscript $\ns P$.
\end{defn}

\begin{rmrk}
$[u\ms a]$ looks like $[a\sm u]$ written backwards, and a casual glance at \rulefont{\amgis\sigma} suggests that it is just \rulefont{\sigma\sigma} written backwards.
This is not quite true: we have $u[b\sm v]$ on the right in \rulefont{\amgis\sigma} and not `$u[v\ms b]$' (and this would make no sense, since $\ns P^\prg$ in Definition~\ref{defn.bus.algebra} is a $\sigma$-algebra and is not equipped with an amgis-action).

Discussion of the origin of the axioms of $\amgis$-algebras is in Subsections~\ref{subsect.sigma.to.amgis} and~\ref{subsect.amgis.to.sigma}; see also Proposition~\ref{prop.amgis.2} and Subsection~\ref{subsect.further.remarks}.
\end{rmrk}

\begin{rmrk}
\label{rmrk.classes.of.amgis}
In this paper we will see three main classes of $\amgis$-algebras in the sense of Definition~\ref{defn.bus.algebra}:
\begin{itemize}
\item
$\amgis$-algebras constructed directly from a $\sigma$-powerset.
See Subsection~\ref{subsect.sigma.to.amgis}.
We do not require $p$ to have small support (and this does not prevent us from getting all the results that we need).
\item
$\amgis$-algebras underlying spaces of the form $F(\ns D)$ in Definition~\ref{defn.Ff} (which is really a specialisation of the previous case).
In Theorem~\ref{thrm.maxfilt.zorn} we may make many choices when we construct the points of these spaces, so we cannot require $p$ to have small support, and in general this will not hold.
\item
$\amgis$-algebras underlying the concrete model $\points_\Pi$ in Definition~\ref{defn.pi.point}.
These do have small support. See the discussion opening Subsection~\ref{subsect.Pi.points} and Definition~\ref{defn.qasmu} and the surrounding results.
\end{itemize}
\end{rmrk}

We conclude with three technical lemmas which will be useful later:

\begin{lemm}
\label{lemm.sub.alpha}
If $\ns X$ is a $\sigma$-algebra and $b\#x$ then $x[a\sm b]=(b\ a)\act x$.
\end{lemm}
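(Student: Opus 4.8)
The plan is to read the claimed equation off axioms \rulefont{\sigma\alpha} and \rulefont{\sigma id} of Figure~\ref{fig.nom.sigma} directly; no induction, and no auxiliary construction, is needed. The content is just the familiar slogan that \emph{substituting a fresh atom is renaming}. First I would unpack the typing: since $\ns X$ is a $\sigma$-algebra, the term $x[a\sm b]$ abbreviates $x[a\sm b_{\ns X^\prg}]$, using the atom-injection $\tf{atm}_{\ns X^\prg}:\mathbb A\equivarto\ns X^\prg$. By the permutative convention $a$ and $b$ are distinct; the degenerate case $a=b$ would in any event be trivial, since then $(b\;a)=\id$ and \rulefont{\sigma id} gives $x[a\sm a]=x=\id\act x$.

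The main body is two rewriting steps. The side-condition of \rulefont{\sigma\alpha} is exactly the hypothesis $b\#x$, so instantiating that axiom at the atoms $a,b$ and at $u:=b_{\ns X^\prg}$ yields
\[
x[a\sm b]\;=\;((b\;a)\act x)[b\sm b].
\]
Then I would apply \rulefont{\sigma id}, in its pedantic form $y[b\sm b_{\ns X^\prg}]=y$, with $y:=(b\;a)\act x$, to rewrite the right-hand side to $(b\;a)\act x$. Composing the two equalities gives $x[a\sm b]=(b\;a)\act x$, as required.

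I do not anticipate a genuine obstacle here. The only points to watch are bookkeeping ones: that the substituted $b$ is the element $b_{\ns X^\prg}$ of $\ns X^\prg$ (not of $\ns X$), and that \rulefont{\sigma\alpha} may be instantiated at the specific atom $b$ appearing in the statement, rather than at a freshly chosen auxiliary atom — which is legitimate precisely because the axiom demands nothing of $b$ beyond $b\#x$, which we are given.
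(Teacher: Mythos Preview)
Your proof is correct and follows exactly the same route as the paper: apply \rulefont{\sigma\alpha} using the hypothesis $b\#x$ to obtain $x[a\sm b]=((b\;a)\act x)[b\sm b]$, then simplify with \rulefont{\sigma id}. Your additional remarks on typing and the degenerate case are sound but not needed, since the permutative convention already ensures $a\neq b$.
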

\begin{proof}
By \rulefont{\sigma\alpha} $x[a\sm b]=((b\ a)\act x)[b\sm b]$.
We use \rulefont{\sigma id}.
\end{proof}

\begin{rmrk}
In the two papers that introduced nominal algebra \cite{gabbay:capasn,gabbay:oneaah}, Lemma~\ref{lemm.sub.alpha} was taken as an axiom (it was called \rulefont{ren{\mapsto}}) and \rulefont{\sigma id} was the lemma.
In the presence of the other axioms of substitution, the two are equivalent.
\end{rmrk}

\begin{lemm}
\label{lemm.fresh.sub}
If $a\#u$ then $a\#x[a\sm u]$.
\end{lemm}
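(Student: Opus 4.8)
\textbf{Proof proposal for Lemma~\ref{lemm.fresh.sub}.}

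The plan is to verify freshness via the swapping characterisation of support, namely part~3 of Corollary~\ref{corr.stuff}: to show $a\#x[a\sm u]$ it suffices to exhibit a single atom $b$ with $b\#x[a\sm u]$ and $(b\;a)\act(x[a\sm u])=x[a\sm u]$. So first I would choose $b$ fresh, meaning (by the permutative convention $b\neq a$, and) $b\#x$ and $b\#u$; such $b$ exists since $x$ and $u$ have finite support. Since the $\sigma$-action is equivariant, conservation of support (part~\ref{item.conservation.of.support} of Theorem~\ref{thrm.equivar}) gives $\supp(x[a\sm u])\subseteq\supp(x)\cup\{a\}\cup\supp(u)$, so automatically $b\#x[a\sm u]$, and it remains only to prove the swapping equation.

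For the swapping equation I would compute $(b\;a)\act(x[a\sm u])$ directly. By equivariance of $\tf{sub}$ (Remark~\ref{rmrk.what.is.equivar.for.term}), $(b\;a)\act(x[a\sm u]) = ((b\;a)\act x)[(b\;a)(a)\sm (b\;a)\act u] = ((b\;a)\act x)[b\sm (b\;a)\act u]$. Now $a\#u$ and $b\#u$, so by part~2 of Corollary~\ref{corr.stuff} (the transposition $(b\;a)$ fixes every atom in $\supp(u)$) we have $(b\;a)\act u = u$, hence $(b\;a)\act(x[a\sm u]) = ((b\;a)\act x)[b\sm u]$. Finally, because $b\#x$, axiom \rulefont{\sigma\alpha} applies and gives $x[a\sm u] = ((b\;a)\act x)[b\sm u]$. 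Chaining these, $(b\;a)\act(x[a\sm u]) = x[a\sm u]$, which is exactly what was needed, so Corollary~\ref{corr.stuff}(3) yields $a\#x[a\sm u]$.

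I do not anticipate a real obstacle here; this is a routine nominal-algebra calculation. The only point requiring a little care is that the renaming atom $b$ must be chosen fresh for \emph{both} $x$ and $u$ (and distinct from $a$): freshness for $x$ is what licenses \rulefont{\sigma\alpha}, while freshness for $u$ is what makes $(b\;a)\act u = u$, and both are needed for the argument to close. One could equivalently phrase the whole proof with the $\new$-quantifier, invoking Theorem~\ref{thrm.New.equiv} to pass from ``some fresh $b$'' to ``all fresh $b$'', but the single-witness version via Corollary~\ref{corr.stuff}(3) is the most economical.
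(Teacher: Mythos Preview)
Your proof is correct and follows essentially the same route as the paper: choose $b$ fresh for $x$ and $u$, use equivariance of $\tf{sub}$ to rewrite $(b\;a)\act(x[a\sm u])$, simplify $(b\;a)\act u$ to $u$ using $a,b\#u$, invoke \rulefont{$\sigma\alpha$} via $b\#x$, and conclude by part~3 of Corollary~\ref{corr.stuff}. The only cosmetic difference is that you explicitly record $b\#x[a\sm u]$ via conservation of support, whereas the paper leaves this implicit.
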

\begin{proof}
Choose fresh $b$ (so $b\#x,u$).
By \rulefont{\sigma\alpha} $x[a\sm u]=((b\ a)\act x)[b\sm u]$.
Also by Corollary~\ref{corr.stuff}(\ref{stuff.fixsupp.fixelt}) $(b\ a)\act u=u$ and by Theorem~\ref{thrm.equivar} $(b\ a)\act(x[a\sm u])=((b\ a)\act x)[b\sm (b\ a)\act u]$.
We put this all together and we deduce that $(b\ a)\act (x[a\sm u])=x[a\sm u]$.
It follows by Corollary~\ref{corr.stuff}(\ref{stuff.freshness.criterion}) that $a\not\in\supp(x[a\sm u])$.
\end{proof}

\begin{lemm}
\label{lemm.sm.to.pi}
If $c\#x$ then $x[a\sm c][b\sm a][c\sm b]=(b\ a)\act x$.
\end{lemm}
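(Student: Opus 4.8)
The plan is to strip off the three substitutions one at a time using Lemma~\ref{lemm.sub.alpha}, and then to check that the composite of the three swappings that appear along the way is exactly $(b\ a)$.

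Since $c\#x$, Lemma~\ref{lemm.sub.alpha} gives $x[a\sm c]=(c\ a)\act x$. By Proposition~\ref{prop.pi.supp} we have $\supp((c\ a)\act x)=(c\ a)\act\supp(x)$, and since $c\notin\supp(x)$ this set does not contain $a$; hence $a\#(c\ a)\act x$ and Lemma~\ref{lemm.sub.alpha} applies once more (now with the two atoms playing the opposite roles) to give $((c\ a)\act x)[b\sm a]=(b\ a)(c\ a)\act x$. Applying the same reasoning a third time: $\supp((b\ a)(c\ a)\act x)=((b\ a)\circ(c\ a))\act\supp(x)$ does not contain $b$, because the only atom sent to $b$ by $(b\ a)\circ(c\ a)$ is $c$ and $c\#x$; so $b\#(b\ a)(c\ a)\act x$, and Lemma~\ref{lemm.sub.alpha} yields
$$
x[a\sm c][b\sm a][c\sm b]=(b\ c)(b\ a)(c\ a)\act x .
$$

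It then remains to observe that the permutation $(b\ c)\circ(b\ a)\circ(c\ a)$ equals $(b\ a)$: both send $a$ to $b$, $b$ to $a$ and $c$ to $c$, and fix every other atom, so they are equal as functions on $\mathbb A$. Substituting this into the display gives the claim.

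I do not expect a genuine obstacle here. The only points needing care are the bookkeeping of the freshness side-conditions as the substitutions are peeled away --- at each stage one must verify that the atom about to be substituted in is fresh for the current element, which is where Proposition~\ref{prop.pi.supp} and the permutative convention (that $a$, $b$, $c$ are distinct) come in --- together with the small permutation computation at the end.
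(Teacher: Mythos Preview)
Your proof is correct and follows essentially the same approach as the paper: both peel off the three substitutions via Lemma~\ref{lemm.sub.alpha} (using Proposition~\ref{prop.pi.supp} to justify the intermediate freshness conditions) and then simplify the resulting composite of three swappings to $(b\ a)$. Your version is more explicit than the paper's, which compresses the middle two applications of Lemma~\ref{lemm.sub.alpha} into a single step and merely cites Corollary~\ref{corr.stuff} for the final permutation identity.
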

\begin{proof}
We use Lemma~\ref{lemm.sub.alpha}, Proposition~\ref{prop.pi.supp}, and Corollary~\ref{corr.stuff}(\ref{stuff.two.pi}) to reason as follows:
$$
x[a\sm c][b\sm a][c\sm b]
=
((c\ a)\act x)[b\sm a][c\sm b]
=
(c\ b)\act (b\ a)\act ((c\ a)\act x)
=
(b\ a)\act x
\qedhere$$
\end{proof}

\maketab{tab1}{@{\hspace{-2em}}R{10em}@{\ }L{12em}L{12em}}
\maketab{tab2}{@{\hspace{-2.3em}}R{10em}@{\ }L{14em}L{12em}}
\maketab{tab2a}{@{\hspace{-4.3em}}R{10em}@{\ }L{14em}L{8em}}
\maketab{tab2b}{@{\hspace{-4.3em}}R{12em}@{\ }L{10em}L{12em}}
\maketab{tab2r}{@{\hspace{-2.3em}}R{12em}@{\ }L{12em}L{12em}}
\maketab{tab2r2}{@{\hspace{-2.3em}}R{12em}@{\ }L{11em}L{9em}}
\maketab{tab2rr}{@{\hspace{-3.3em}}R{12em}@{\ }L{8em}L{10.9em}}
\maketab{tab2rrr}{@{\hspace{-4em}}R{12em}@{\ }L{10em}L{9.5em}}
\maketab{tab2d}{@{\hspace{-2em}}R{10em}@{\ }L{15em}L{12em}}
\maketab{tab2c}{@{\hspace{-2em}}R{11em}@{\ }L{9em}L{12em}}

In Subsection~\ref{subsect.sigma.to.amgis} we move from a $\sigma$-algebra to an $\amgis$-algebra using nominal powersets.
In Subsection~\ref{subsect.amgis.to.sigma} we move from an $\amgis$-algebra to a $\sigma$-algebra, again using nominal powersets.

\subsection{Duality I: sigma to amgis}
\label{subsect.sigma.to.amgis}

Given a $\sigma$-algebra we generate an $\amgis$-algebra out of its subsets.
This is Proposition~\ref{prop.amgis.2}.

\begin{defn}
\label{defn.p.action}
Suppose $\ns X=(|\ns X|,\act,\ns X^\prg,\tf{sub}_{\ns X})$ is a $\sigma$-algebra.

Give subsets $p\subseteq|\ns X|$ \deffont{pointwise} actions as follows:
\begin{frameqn}
\begin{array}{r@{\ }l@{\qquad}l}
\pi\act p=&\{\pi\act x\mid x\in p\}
\\
p[u\ms a]=&\{x\mid x[a\sm u]\in p\}
& u\in|\ns X^\prg|
\end{array}
\end{frameqn}
\end{defn}

\begin{prop}
\label{prop.sigma.iff}
Suppose $\ns X$ is a $\sigma$-algebra and $p\subseteq|\ns X|$ and $u\in|\ns X^\prg|$.
Then:
\begin{itemize*}
\item
$x\in \pi\act p$ if and only if $\pi^\mone\act x\in p$.
\item
$x\in p[u\ms a]$ if and only if $x[a\sm u]\in p$.
\end{itemize*}
\end{prop}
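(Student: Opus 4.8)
The plan is to unfold Definition~\ref{defn.p.action} directly; both equivalences are definitional repackagings, with the second being essentially immediate.

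For the second bullet there is nothing to prove beyond citing the definition: $p[u\ms a]=\{x\mid x[a\sm u]\in p\}$ by Definition~\ref{defn.p.action}, so $x\in p[u\ms a]$ holds precisely when $x[a\sm u]\in p$. I would dispatch this in a single line.

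For the first bullet I would argue as follows. By Definition~\ref{defn.p.action}, $\pi\act p=\{\pi\act x'\mid x'\in p\}$, so $x\in\pi\act p$ holds exactly when there exists $x'\in p$ with $x=\pi\act x'$. Since $\pi$ is a finite permutation, the maps $y\mapsto\pi\act y$ and $y\mapsto\pi^\mone\act y$ are mutually inverse bijections on $|\ns X|$: this follows from the group-action axioms $\id\act y=y$ and $\pi\act(\pi'\act y)=(\pi\circ\pi')\act y$ of Definition~\ref{defn.fin.supp} together with $\pi^\mone\circ\pi=\id=\pi\circ\pi^\mone$. Hence the condition $x=\pi\act x'$ is equivalent to $x'=\pi^\mone\act x$, and substituting gives: $x\in\pi\act p$ if and only if $\pi^\mone\act x\in p$.

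There is no real obstacle here; the only point requiring (trivial) care is that applying $\pi^\mone$ genuinely inverts applying $\pi$ on elements, which is just the group-action axioms. The purpose of the proposition is simply to record the ``iff'' reformulations of Definition~\ref{defn.p.action} so that they can be quoted cleanly when verifying the $\amgis$-algebra structure (e.g. in Proposition~\ref{prop.amgis.2} and the surrounding development).
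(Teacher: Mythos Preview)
Your proposal is correct and matches the paper's approach: the paper's own proof is the single line ``By easy calculations on the pointwise actions in Definition~\ref{defn.p.action}'', and you have simply spelled out those calculations. Your extra care in justifying that $\pi^\mone$ inverts $\pi$ via the group-action axioms is more detail than the paper provides, but it is exactly the content behind ``easy calculations''.
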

\begin{proof}
By easy calculations on the pointwise actions in Definition~\ref{defn.p.action}.
\end{proof}

\begin{defn}
\label{defn.powamgis}
Suppose $\ns X$ is a $\sigma$-algebra.
Define the \deffont{$\amgis$-powerset} algebra $\powamgis(\ns X)$ by setting:
\begin{itemize*}
\item
$|\powamgis(\ns X)|$ is the set of subsets $p\subseteq|\ns X|$ with permutation action $\pi\act p$ following Definition~\ref{defn.p.action}.
\item
$(\powamgis(\ns X))^\prg = \ns X^\prg$.
\item
The amgis-action $p[u\ms a]$ follows Definition~\ref{defn.p.action}.
\end{itemize*}
\end{defn}

\begin{rmrk}
In Definition~\ref{defn.powamgis} $p$ is \emph{not} necessarily small-supported.
We will need this: the construction of Theorem~\ref{thrm.maxfilt.zorn} will generate $p$ with potentially large support.
\end{rmrk}

\begin{prop}
\label{prop.amgis.2}
If $\ns X$ is a $\sigma$-algebra then $\powamgis(\ns X)$ from Definition~\ref{defn.powamgis} is an $\amgis$-algebra.
\end{prop}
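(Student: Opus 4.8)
The plan is to verify the three defining requirements of an $\amgis$-algebra (Definition~\ref{defn.bus.algebra}) for the tuple assembled in Definition~\ref{defn.powamgis}. Two of the requirements are immediate; one carries the content.

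First I would check that $(|\powamgis(\ns X)|,\act)$ is a set with a permutation action: its underlying set is the full powerset of $|\ns X|$, and the pointwise action $\pi\act p=\{\pi\act x\mid x\in p\}$ inherits the group-action equations $\id\act p=p$ and $\pi\act(\pi'\act p)=(\pi\circ\pi')\act p$ directly from the corresponding equations for $\act$ on $|\ns X|$. There is nothing to prove about finite support here --- indeed $p$ need not be finitely supported, which is exactly why Definition~\ref{defn.powamgis} deliberately takes \emph{all} subsets. Second, $(\powamgis(\ns X))^\prg=\ns X^\prg$ is a termlike $\sigma$-algebra by hypothesis, so there is nothing to do. Third, the amgis-action $p[u\ms a]=\{x\mid x[a\sm u]\in p\}$ from Definition~\ref{defn.p.action} manifestly lands in the powerset of $|\ns X|$, and its equivariance $\pi\act(p[u\ms a])=(\pi\act p)[\pi\act u\ms\pi(a)]$ I would obtain either as an instance of Theorem~\ref{thrm.equivar} (the amgis-action being specified by the first-order predicate ``$x[a\sm u]\in p$'' over the equivariant $\sigma$-action and membership) or by a one-line direct calculation using Proposition~\ref{prop.sigma.iff}.

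The substantive step --- and the only real obstacle, though a mild one --- is axiom \rulefont{\amgis\sigma}. Given $p\subseteq|\ns X|$, atoms $a,b$, and $u,v\in|\ns X^\prg|$ with $a\#v$, I would unfold both sides using Proposition~\ref{prop.sigma.iff} twice, obtaining $p[v\ms b][u\ms a]=\{x\mid x[a\sm u][b\sm v]\in p\}$ on the one hand and $p[u[b\sm v]\ms a][v\ms b]=\{x\mid x[b\sm v][a\sm u[b\sm v]]\in p\}$ on the other. Then I would invoke axiom \rulefont{\sigma\sigma} of the $\sigma$-algebra $\ns X$ --- whose side-condition $a\#v$ matches the present hypothesis exactly --- to conclude $x[a\sm u][b\sm v]=x[b\sm v][a\sm u[b\sm v]]$ for every $x\in|\ns X|$, so the two displayed sets coincide and \rulefont{\amgis\sigma} holds. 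The only thing to watch is that the freshness side-condition travels correctly through the unfolding; since it lands on \rulefont{\sigma\sigma} on the nose, no renaming and no appeal to \rulefont{\sigma\alpha} is needed. That completes the verification.
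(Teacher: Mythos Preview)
Your proposal is correct and follows essentially the same approach as the paper: equivariance via Theorem~\ref{thrm.equivar}, then unfolding both sides of \rulefont{\amgis\sigma} using Proposition~\ref{prop.sigma.iff} and invoking \rulefont{\sigma\sigma} with the matching freshness side-condition. You are simply more explicit than the paper about the routine checks (group-action laws, the target of the amgis-action), which the paper leaves implicit.
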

\begin{proof}
By Theorem~\ref{thrm.equivar} the operations are equivariant.
We verify rule
\rulefont{\amgis\sigma} from Figure~\ref{fig.amgis}:
\begin{itemize*}
\item
\emph{Property \rulefont{\amgis\sigma}.}\quad
Suppose $a\#v$.
Then:
$$
\begin{array}[b]{r@{\ }l@{\quad}l}
x\in p[v\ms b][u\ms a]\liff &x[a{\sm}u][b{\sm}v]\in p
&\text{Proposition~\ref{prop.sigma.iff}}
\\
\liff& x[b{\sm}v][a{\sm}u[b{\sm}v]]\in p
&\rulefont{\sigma\sigma},\ a\#v
\\
\liff& x\in p[u[b{\sm}v]\ms a][v\ms b]
&\text{Proposition~\ref{prop.sigma.iff}}
\end{array}
\qedhere$$
\end{itemize*}
\end{proof}

\maketab{tab6}{@{\hspace{1em}}R{10em}@{\ }L{12em}L{12em}}
\maketab{tab7}{@{\hspace{1em}}R{12em}@{\ }L{12em}}

\subsection{Duality II: amgis to sigma}
\label{subsect.amgis.to.sigma}

\subsubsection{The pointwise sigma-action on subsets of an amgis-algebra}

\begin{defn}
\label{defn.sub.sets}
Suppose $\ns P=(|\ns P|,\act,\ns P^\prg,\tf{amgis}_{\ns P})$ is an $\amgis$-algebra.
Give subsets $X\subseteq|\ns P|$ \deffont{pointwise} actions as follows:
\begin{frameqn}
\begin{array}{r@{\ }l@{\qquad}l}
\pi\act X=&\{\pi\act p \mid p\in X\}
\\
X[a{\sm}u]=&\{p \mid \New{c} p[u\ms c]\in (c\ a)\act X\}
&u\in|\ns P^\prg|
\end{array}
\end{frameqn}
\end{defn}

\begin{rmrk}
\label{rmrk.not.just.flipping.over}
We call the action in Definition~\ref{defn.sub.sets} \emph{pointwise}.
We should note:
\begin{itemize*}
\item
In the case of $\pi\act X$ this is true in the conventional sense; see Proposition~\ref{prop.amgis.iff}(3).
\item
However, in the case of $X[a{\sm}u]$ this is true only subject to a freshness side-condition that is encoded in the use of $\new$ above; see Proposition~\ref{prop.amgis.iff}(1\&2).

The extra technical machinery of the $\new c$ and the freshening permutation $(c\ a)$ is needed to give us Lemma~\ref{lemm.sigma.alpha}.
\end{itemize*}
\end{rmrk}

\begin{prop}
\label{prop.amgis.iff}
Suppose $\ns P$ is an $\amgis$-algebra and $X\subseteq|\ns P|$.
Suppose $p\in|\ns P|$ and $u\in|\ns P^\prg|$. 
Then:
\begin{enumerate*}
\item
$p\in X[a\sm u]$ if and only if $\New{c}p[u\ms c]\in (c\ a)\act X$.
\item
If furthermore $p\in|\ns P|$ has small support
and $a\#u,p$, then we can simplify part~1 of this result to $p\in X[a\sm u]$ if and only if $p[u\ms a]\in X$.
\item
$p\in \pi\act X$ if and only if $\pi^\mone\act p\in X$.
\end{enumerate*}
\end{prop}
\begin{proof}
\begin{enumerate*}
\item
Direct from Definition~\ref{defn.sub.sets}.
\item
Suppose $a\#u,p$.
From part~1 of this result $p\in X[a\sm u]$ if and only if $\New{c}p[u\ms c]\in (c\ a)\act X$.
By Corollary~\ref{corr.stuff}(\ref{stuff.fixsupp.fixelt}) $(c\ a)\act u=u$ and $(c\ a)\act p=p$, so (applying $(c\ a)$ to both sides of the equality) this is if and only if $\New{c}p[u\ms a]\in X$, which means that $p[u\ms a]\in X$.
\item
Direct from Theorem~\ref{thrm.equivar}.
\qedhere\end{enumerate*}
\end{proof}

\begin{lemm}[$\alpha$-equivalence]
\label{lemm.sigma.alpha}
Suppose $\ns P$ is an $\amgis$-algebra and $X\subseteq|\ns P|$. 
Then
$$
b\#X\quad\text{implies}\quad X[a\sm u]=((b\ a)\act X)[b\sm u].
$$
\end{lemm}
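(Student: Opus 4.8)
The plan is to unfold both sides directly via Definition~\ref{defn.sub.sets} (rather than through Proposition~\ref{prop.amgis.iff}, whose part~1 carries a freshness hypothesis $a\#u$ that we are not given), and then reduce the equality of sets to a small permutation identity together with a cofiniteness observation about the $\new$-quantifier. Concretely, fix an arbitrary $p\in|\ns P|$; by Definition~\ref{defn.sub.sets},
\[
p\in X[a\sm u]\ \liff\ \New{c}\ p[u\ms c]\in(c\ a)\act X,
\qquad
p\in((b\ a)\act X)[b\sm u]\ \liff\ \New{c}\ p[u\ms c]\in(c\ b)\act((b\ a)\act X),
\]
so it suffices to prove the two $\new$-quantified predicates on the right are equivalent.

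The key computation: as permutations, $(c\ b)\circ(b\ a)=(c\ a)\circ(c\ b)$ whenever $a,b,c$ are distinct (both are the $3$-cycle sending $a\mapsto c\mapsto b\mapsto a$), hence $(c\ b)\act((b\ a)\act X)=(c\ a)\act((c\ b)\act X)$. Now for every $c$ with $c\#X$ — and since $b\#X$ is assumed and $\supp(X)$ is finite by Theorem~\ref{thrm.supp}, this holds for all but finitely many $c$ — the transposition $(c\ b)$ fixes $\supp(X)$ pointwise, so Corollary~\ref{corr.stuff}(1) gives $(c\ b)\act X=X$ and therefore $(c\ b)\act((b\ a)\act X)=(c\ a)\act X$. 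Consequently, for all but finitely many $c$, the statements $p[u\ms c]\in(c\ b)\act((b\ a)\act X)$ and $p[u\ms c]\in(c\ a)\act X$ have the same truth value.

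Finally I would invoke the elementary fact (immediate from Definition~\ref{defn.New}) that if two predicates agree on all but finitely many atoms, then their $\new$-quantifications are equivalent: the symmetric difference of the two "counterexample sets" is finite, so one is finite iff the other is. This yields $p\in X[a\sm u]\liff p\in((b\ a)\act X)[b\sm u]$, and since $p$ was arbitrary the two sets coincide.

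The only delicate point is the permutation bookkeeping — in particular that the identity $(c\ b)\circ(b\ a)=(c\ a)\circ(c\ b)$ requires $c$ distinct from both $a$ and $b$, which is precisely why the argument is naturally phrased up to the finitely many exceptions $\new$ permits; everything else is routine. One could alternatively run the $\new$-step through Theorem~\ref{thrm.New.equiv} by choosing a single fresh $c$ with $c\#X$, but the cofiniteness argument above has the advantage of not needing $p$ (or $u$) to have finite support, which the ambient $\amgis$-algebra does not guarantee.
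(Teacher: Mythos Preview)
Your proof is correct and follows essentially the same approach as the paper's: unfold Definition~\ref{defn.sub.sets} on both sides and show that $(c\ a)\act X=(c\ b)\act((b\ a)\act X)$ for cofinitely many $c$. The paper invokes Corollary~\ref{corr.stuff} directly (its part~2: the two permutations $(c\ a)$ and $(c\ b)\circ(b\ a)$ agree on $\supp(X)$ once $b,c\#X$), whereas you factor through the explicit cycle identity $(c\ b)\circ(b\ a)=(c\ a)\circ(c\ b)$ and then use part~1 to kill the $(c\ b)$---but this is the same computation unpacked differently. Your caution about the $a\#u$ hypothesis on Proposition~\ref{prop.amgis.iff} is well-placed but unnecessary: part~1 there is literally Definition~\ref{defn.sub.sets} and does not use that hypothesis, so the paper's citation is harmless.
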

\begin{proof}
By part~1 of Proposition~\ref{prop.amgis.iff} $p\in X[a\sm u]$ if and only $\New{c}p[u\ms c]\in (c\ a)\act X$, and $p\in ((b\ a)\act X)[b\sm u]$ if and only if $\New{c}p[u\ms c]\in (c\ b)\act((b\ a)\act X)$.
By Corollary~\ref{corr.stuff}(\ref{stuff.fixsupp.fixelt}) $(c\ a)\act X=(c\ b)\act ((b\ a)\act X)$ since $b\#X$.
The first part follows.
\end{proof}

Proposition~\ref{prop.sub.sub} is useful, amongst other things, in Lemma~\ref{lemm.pow.closed}.
On syntax it is known as the \emph{substitution lemma}, but here it is about an action on sets $X$, and the proof is different:
\begin{prop}
\label{prop.sub.sub}
Suppose $\ns P$ is an $\amgis$-algebra and $X\subseteq|\ns P|$. 
Suppose $u,v\in|\ns P^\prg|$.
Then
$$
a\#v\quad\text{implies}\quad
X[a{\sm}u][b{\sm}v]=X[b{\sm}v][a{\sm}u[b{\sm}v]].
$$
\end{prop}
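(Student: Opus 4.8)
The plan is to unfold both sides using the pointwise characterisation of the $\sigma$-action from Proposition~\ref{prop.amgis.iff}(1), push everything down to a statement about membership of a point $p$ using the $\amgis$-action, and then appeal to the $\amgis$-algebra axiom \rulefont{\amgis\sigma}. The main preparatory move is to choose the bound/fresh atoms cleverly: the $\new$-quantifiers hidden inside $X[a\sm u]$ and $X[b\sm v]$ each introduce a fresh name, and I will want to coordinate these so that the resulting equation on points is exactly an instance of \rulefont{\amgis\sigma} (which itself carries the side-condition $a\#v$, matching our hypothesis).

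First I would fix $p\in|\ns P|$ and test $p\in X[a\sm u][b\sm v]$. Using Proposition~\ref{prop.amgis.iff}(1) twice (picking fresh $c$ for the outer $[b\sm v]$ and then fresh $c'$ for the inner $[a\sm u]$, with $c,c'$ also fresh for $a,b,u,v,X,p$), this becomes $\New{c}\New{c'}\, p[v\ms c][u\ms c']\in (c'\ a)(c\ b)\act X$, after moving the permutations across the $\amgis$-action via equivariance (Proposition~\ref{prop.amgis.iff}(3)). Symmetrically, unfolding the right-hand side $p\in X[b\sm v][a\sm u[b\sm v]]$ gives $\New{d}\New{d'}\, p[u[b\sm v]\ms d'][v\ms d]\in (d'\ a)(d\ b)\act X$ — here I must also use that $a\#v$ together with Lemma~\ref{lemm.fresh.sub} (or a direct freshness computation) to know $d'\#u[b\sm v]$ so that Proposition~\ref{prop.amgis.iff}(1) applies with the correct side-condition. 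Since both sides are now $\new$-quantified, by Theorem~\ref{thrm.New.equiv} it suffices to verify the inner membership for one suitably fresh choice of the auxiliary atoms, and I can take $c=d=b$-fresh-witness and $c'=d'$ to be the \emph{same} fresh atom, so that the two permutation prefixes $(c'\ a)(c\ b)$ and $(d'\ a)(d\ b)$ literally coincide.

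With the permutation prefixes identified, the equation reduces to showing $p[v\ms c][u\ms c']\in Y \liff p[u[b\sm v]\ms c'][v\ms c]\in Y$ for the common set $Y=(c'\ a)(c\ b)\act X$, i.e.\ to showing $p[v\ms c][u\ms c'] = p[u[b\sm v]\ms c'][v\ms c]$. But $a\#v$ and $c,c'$ chosen fresh means that the instance of \rulefont{\amgis\sigma} we need reads $p[v\ms c][u\ms c'] = p[u[b\sm c][v\ms c]\ldots]$ — wait, more carefully: \rulefont{\amgis\sigma} with its names renamed to our $c',c$ says $p[v\ms c][u\ms c']=p[u[c\sm v]\ms c'][v\ms c]$ under $c\#v$, and since $c$ is fresh this is fine; but we have $u[b\sm v]$, not $u[c\sm v]$, so I additionally need that $u[b\sm v]$ and the relevant renaming agree inside the $\amgis$-action — this is where the bookkeeping of \emph{which} fresh atom plays the role of the old bound variable $b$ has to be done exactly right, using Lemma~\ref{lemm.sigma.alpha} ($\alpha$-equivalence for the set action) and \rulefont{\sigma\alpha} on $\ns P^\prg$ to convert $u[b\sm v]$ to the $c$-renamed form.

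The \textbf{main obstacle}, then, is purely the freshness/renaming accounting: making sure that the $\new$-bound atoms introduced on the two sides can be chosen to match, and that the side-condition $a\#v$ propagates correctly so that both the set-level $\sigma$-action steps (Proposition~\ref{prop.amgis.iff}(1)) and the algebra-level \rulefont{\amgis\sigma} step are legitimately applicable. Once the atoms are aligned, the core of the argument is a single application of \rulefont{\amgis\sigma}; everything else is equivariance (Theorem~\ref{thrm.equivar}), the some/any property of $\new$ (Theorem~\ref{thrm.New.equiv}), and $\alpha$-renaming of the set action (Lemma~\ref{lemm.sigma.alpha}). I expect the written proof to be a short chain of $\liff$'s in a \texttt{tab2}-style display, with the freshness side-conditions annotated on the right, exactly in the style of the proof of Proposition~\ref{prop.amgis.2}.
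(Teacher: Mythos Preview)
Your approach is correct and matches the paper's: unfold with Proposition~\ref{prop.amgis.iff}, apply \rulefont{\amgis\sigma}, use \rulefont{\sigma\alpha} on $\ns P^\prg$, fold back up. The paper writes $\pi=(a'\ a)\circ(b'\ b)$ and does exactly the four-line \texttt{tab}-style chain you predict.

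The one place where you wobble (``wait, more carefully\dots'') has a clean resolution you almost reach but don't quite state. When you unfold the outer $[b\sm v]$ and get $(c\ b)\act(X[a\sm u])$, equivariance pushes the permutation \emph{inside}, giving $((c\ b)\act X)[a\sm (c\ b)\act u]$ (using $a\neq b,c$). So after the second unfolding the point condition is $p[v\ms c][(c\ b)\act u\ms c']\in\pi\act X$, not $p[v\ms c][u\ms c']$ as you wrote. Now \rulefont{\amgis\sigma} (with $c'\#v$, which holds by freshness) gives $p[((c\ b)\act u)[c\sm v]\ms c'][v\ms c]$, and since $c\#u$ the axiom \rulefont{\sigma\alpha} on $\ns P^\prg$ yields $((c\ b)\act u)[c\sm v]=u[b\sm v]$ directly --- no further renaming of the set action is needed. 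This is precisely where the mysterious $u[b\sm v]$ you were looking for comes from, and it dissolves the mismatch you flagged between $u[c\sm v]$ and $u[b\sm v]$.
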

\begin{proof}
We reason as follows, where we write $\pi=(a'\ a)\circ(b'\ b)$:
$$
\begin{array}[b]{@{\hspace{-1ex}}r@{\ }l@{\quad}l}
p\in X[a\sm u][b\sm v]\liff &\New{a',b'}p[v\ms b'][(b'\ b)\act u\ms a']\in \pi\act X &\text{Proposition~\ref{prop.amgis.iff}} 
\\
\liff &\New{a',b'}p[((b'\ b)\act u)[b'\sm v]\ms a'][v\ms b']\in \pi\act X &\rulefont{\amgis\sigma}\ a'\#v
\\
\liff &\New{a',b'}p[u[b\sm v]\ms a'][v\ms b']\in \pi\act X &\rulefont{\sigma\alpha}\ b'\#u
\\
\liff &\New{a'}p[u[b\sm v]\ms a']\in ((a'\ a)\act X)[b\sm v] &\text{Proposition~\ref{prop.amgis.iff}}
\\
\liff &\New{a'}p[u[b\sm v]\ms a']\in ((a'\ a)\act X)[b\sm (a'\ a)\act v] &\text{Cor~\ref{corr.stuff}}\ a',a\#v
\\
\liff &\New{a'}p[u[b\sm v]\ms a']\in (a'\ a)\act (X[b\sm v]) &\text{Theorem~\ref{thrm.equivar}}
\\
\liff &p\in X[b\sm v][a\sm u[b\sm v]] &\text{Proposition~\ref{prop.amgis.iff}}
\end{array}
\qedhere$$
\end{proof}

\subsubsection{The $\sigma$-powerset $\powsigma(\ns P)$}

Recall from Subsection~\ref{subsect.finsupp.pow} the \emph{small-supported powerset} $\nompow(\ns X)$ of a nominal set $\ns X$.
\begin{defn}
\label{defn.powsigma}
Suppose $\ns P$ is an $\amgis$-algebra.
Define the \deffont{$\sigma$-powerset} algebra $\powsigma(\ns P)$
by setting:
\begin{itemize*}
\item
$|\powsigma(\ns P)|$ is those $X\in|\nompow(\ns P)|$ with the actions $\pi\act X$ and $X[a\sm u]$ from Definition~\ref{defn.sub.sets}, satisfying conditions \ref{item.fresh.powsigma} and~\ref{item.alpha.powsigma} below.
\item
$\powsigma(\ns P)^\prg=\ns P^\prg$.
\end{itemize*}
The $X\in|\nompow(\ns P)|$ above are restricted with conditions as follows, where $u\in|\ns P^\prg|$ and $p\in|\ns P|$:
\begin{frametxt}
\begin{enumerate*}
\item
\label{item.fresh.powsigma}
$\Forall{u}\New{a}\Forall{p}(p[u\ms a]\in X\liff p\in X)$.

In other words: if $a\#X,u$ then $p[u\ms a]\in X\liff p\in X$.
\item
\label{item.alpha.powsigma}
$\Forall{a}\New{b}\Forall{p}(p[b\ms a]\in X\liff (b\ a)\act p\in X)$.

In other words: if $b\#X$ then $p[b\ms a]\in X\liff (b\ a)\act p\in X$.
\end{enumerate*}
\end{frametxt}
\end{defn}

Lemma~\ref{lemm.X.sub.fresh.alpha} reformulates the conditions of Definition~\ref{defn.powsigma} by packaging up some complexity using the $\sigma$-action on subsets of an $\amgis$-algebra from Definition~\ref{defn.sub.sets}:
\begin{lemm}
\label{lemm.X.sub.fresh.alpha}
Continuing the notation of Definition~\ref{defn.powsigma}, if $X\in|\powsigma(\ns P)|$ then
\begin{enumerate*}
\item\label{X.sub.fresh.alpha.1}
If $a\#X$ then $X[a\sm u]=X$.
\item
If $b\#X$ then $X[a\sm b]=(b\ a)\act X$.
\end{enumerate*}
\end{lemm}
\begin{proof}
\begin{enumerate*}
\item
Suppose $a\#X$.
By part~1 of Proposition~\ref{prop.amgis.iff} $p\in X[a\sm u]$ if and only if $\New{c}p[u\ms c]\in (c\ a)\act X$.
By Corollary~\ref{corr.stuff}(\ref{stuff.fixsupp.fixelt}) $(c\ a)\act X=X$ and
by condition~\ref{item.fresh.powsigma} of Definition~\ref{defn.powsigma} $p[u\ms c]\in X$ if and only if $p\in X$,
so this is if and only if $\New{c}(p\in X)$, that is $p\in X$.
\item
We combine Proposition~\ref{prop.amgis.iff} with condition~\ref{item.alpha.powsigma} of Definition~\ref{defn.powsigma}, since $a\#b$.
\qedhere\end{enumerate*}
\end{proof}

We continue the discussion from Remark~\ref{rmrk.not.just.flipping.over}:
\begin{rmrk}
Condition~\ref{item.fresh.powsigma} of Definition~\ref{defn.powsigma} corresponds to part~\ref{X.sub.fresh.alpha.1} of Lemma~\ref{lemm.X.sub.fresh.alpha} for $i{\in}\{1,2\}$;
Definition~\ref{defn.powsigma} uses the $\amgis$-action and Lemma~\ref{lemm.X.sub.fresh.alpha} uses the corresponding $\sigma$-action for sets from Definition~\ref{defn.sub.sets}.

Lemma~\ref{lemm.X.sub.fresh.alpha} is more than just a rephrasing of Definition~\ref{defn.powsigma},
because of the freshness conditions discussed in Remark~\ref{rmrk.not.just.flipping.over}: condition~\ref{item.fresh.powsigma} of Definition~\ref{defn.powsigma} insists that $a\#u$ (because we have $\forall u.\new a$), whereas
part~1 of Lemma~\ref{lemm.X.sub.fresh.alpha} does not insist on $a\#u$.

Thus, Lemma~\ref{lemm.X.sub.fresh.alpha} is slightly stronger than Definition~\ref{defn.powsigma} (harder to prove; easier to apply).

The conditions of Definition~\ref{defn.powsigma} are optimised for a situation where we want to prove that some $X$ is in $\powsigma(\ns P)$ (because we may assume $a\#u$, and so have a relatively weaker proof-obligation), whereas Lemma~\ref{lemm.X.sub.fresh.alpha} is optimised for a situation where we are told that $X\in|\powsigma(\ns P)|$ and we want to manipulate it (because we do not need to assume $a\#u$ to apply part~\ref{X.sub.fresh.alpha.1} of the Lemma).
We will freely use whichever form is most convenient in context.
\end{rmrk}

\begin{corr}
\label{corr.amgis.id.sub}
Suppose $X\in|\powsigma(\ns P)|$.
Then $X[a\sm a]=X$.
\end{corr}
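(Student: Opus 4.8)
The plan is to obtain $X[a\sm a_{\ns P^\prg}]=X$ by a single $\alpha$-renaming of the bound atom $a$, using the facts already established in Lemma~\ref{lemm.sigma.alpha} and Lemma~\ref{lemm.X.sub.fresh.alpha}. The slogan: rename $a$ to a fresh $b$ so that the substituted-for atom no longer occurs in the set being acted on, perform the substitution, then rename back.

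Concretely, first I would pick a fresh atom $b$, so $b\#X$, and apply Lemma~\ref{lemm.sigma.alpha} with $u=a_{\ns P^\prg}$ to get $X[a\sm a_{\ns P^\prg}]=((b\ a)\act X)[b\sm a_{\ns P^\prg}]$. Next I would check the two side-conditions needed to simplify the right-hand side: (i) $a\#(b\ a)\act X$, which holds because $\supp((b\ a)\act X)=(b\ a)\act\supp(X)$ by Proposition~\ref{prop.pi.supp} and $a\in(b\ a)\act\supp(X)$ would force $b=(b\ a)(a)\in\supp(X)$, contradicting $b\#X$; and (ii) $(b\ a)\act X\in|\powsigma(\ns P)|$, which holds because conditions~\ref{item.fresh.powsigma} and~\ref{item.alpha.powsigma} of Definition~\ref{defn.powsigma} are equivariant predicates and hence preserved by the permutation action (Theorem~\ref{thrm.equivar}). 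With these in hand, part~2 of Lemma~\ref{lemm.X.sub.fresh.alpha}, applied to $(b\ a)\act X$ with the roles of $a$ and $b$ swapped, gives $((b\ a)\act X)[b\sm a_{\ns P^\prg}]=(a\ b)\act(b\ a)\act X=X$. Chaining the two equalities completes the proof.

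I do not anticipate a real obstacle; the only point deserving a moment's care is verifying side-condition (ii)---that the permuted set still belongs to $\powsigma(\ns P)$, so that Lemma~\ref{lemm.X.sub.fresh.alpha} is legitimately applicable---after which the argument is just substitution into prior results. (If $a\#X$ already, then part~1 of Lemma~\ref{lemm.X.sub.fresh.alpha} gives the claim immediately; the route above handles the remaining case uniformly and so avoids splitting into cases.)
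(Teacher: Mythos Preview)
Your proposal is correct and follows essentially the same route as the paper's own proof: pick fresh $b$, use Lemma~\ref{lemm.sigma.alpha} to $\alpha$-rename, observe $a\#(b\ a)\act X$ via Proposition~\ref{prop.pi.supp}, and apply part~2 of Lemma~\ref{lemm.X.sub.fresh.alpha} to collapse $((b\ a)\act X)[b\sm a]$ back to $X$. The only difference is that you are more explicit about side-condition~(ii), that $(b\ a)\act X$ remains in $|\powsigma(\ns P)|$; the paper leaves this implicit but your justification via equivariance is the right one.
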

\begin{proof}
Suppose $b\#X$.
By Lemma~\ref{lemm.sigma.alpha} $X[a\sm a]=((b\ a)\act X)[b\sm a]$.
By Proposition~\ref{prop.pi.supp} $a\#(b\ a)\act X$.
By part~2 of Lemma~\ref{lemm.X.sub.fresh.alpha} $((b\ a)\act X)[b\sm a]=(b\ a)\act((b\ a)\act X)=X$.
\end{proof}

\begin{lemm}
\label{lemm.pow.closed}
If $X\in|\powsigma(\ns P)|$ and $u\in|\ns P^\prg|$ then also $X[a\sm u]\in|\powsigma(\ns P)|$.

As a corollary, in Definition~\ref{defn.powsigma}, $|\powsigma(\ns P)|$ is closed under the $\sigma$-action from Definition~\ref{defn.sub.sets}.
\end{lemm}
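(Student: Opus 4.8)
The plan is to check the three requirements of Definition~\ref{defn.powsigma} for $Y:=X[a\sm u]$: that $Y$ is finitely supported, and that $Y$ satisfies conditions~\ref{item.fresh.powsigma} and~\ref{item.alpha.powsigma}. Finite support is immediate from Theorem~\ref{thrm.equivar} (conservation of support): $Y$ is specified from the finitely-supported data $X$, $a$, $u$ by the formula of Definition~\ref{defn.sub.sets}, so $\supp(Y)\subseteq\supp(X)\cup\{a\}\cup\supp(u)$ is finite and hence $Y\in|\f{pow}(\ns P)|$.

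For the two conditions I would argue by transporting them across a permutation. Using the some/any property of $\new$ (Theorem~\ref{thrm.New.equiv}), condition~\ref{item.fresh.powsigma} for $Y$ reduces to showing, for every $v\in|\ns P^\prg|$ and every $a'\#X,u,a,v$, that $p[v\ms a']\in Y\liff p\in Y$ for all $p$; and condition~\ref{item.alpha.powsigma} reduces to showing, for every pair of distinct $a',b'\#X,u,a$, that $p[b'\ms a']\in Y\liff(b'\ a')\act p\in Y$ for all $p$. Fix such atoms and an arbitrary $p$, and pick an auxiliary atom $c$ fresh for all of $p,u,v,a,a',b',X$. Unfolding Definition~\ref{defn.sub.sets} and using the some/any property once more to discharge the inner $\new$-quantification over $c$, membership of $p$, of $p[v\ms a']$, and of $(b'\ a')\act p$ in $Y$ becomes membership of $p[u\ms c]$, of $(p[v\ms a'])[u\ms c]$, and of $((b'\ a')\act p)[u\ms c]$, respectively, in $Z:=(c\ a)\act X$.

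Two small rewrites then finish the reduction. First, $(p[v\ms a'])[u\ms c]=p[u\ms c][v\ms a']$, by \rulefont{\amgis\sigma} (applicable since $c\#v$) together with \rulefont{\sigma\#} (since $a'\#u$, so $u[a'\sm v]=u$). Second, $((b'\ a')\act p)[u\ms c]=(b'\ a')\act(p[u\ms c])$, by equivariance of the amgis-action (Theorem~\ref{thrm.equivar}), since $a',b'\#u$ and $a',b'\neq c$. Writing $q:=p[u\ms c]$, what remains to prove is precisely $q[v\ms a']\in Z\liff q\in Z$ and $q[b'\ms a']\in Z\liff(b'\ a')\act q\in Z$. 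But $Z$ is a permutation image of $X$, hence satisfies conditions~\ref{item.fresh.powsigma} and~\ref{item.alpha.powsigma} (these are equivariant predicates), and $a',b'\#Z$ by Proposition~\ref{prop.pi.supp} (as $a',b'\#X$ and $a',b'\notin\{a,c\}$). Applying the some/any property to those two conditions for $Z$ yields the displayed equivalences for \emph{every} $q$, in particular for $q=p[u\ms c]$. This proves $Y\in|\powsigma(\ns P)|$, and the stated corollary---closure of $|\powsigma(\ns P)|$ under the $\sigma$-action of Definition~\ref{defn.sub.sets}---is then an immediate restatement.

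The step I expect to be the main obstacle is keeping the freshness bookkeeping honest: because Definition~\ref{defn.powsigma} quantifies $\new a'$ \emph{before} $\forall p$, one may not assume $a'\#p$, so the convenient simplified form of part~2 of Proposition~\ref{prop.amgis.iff} is unavailable, and the argument must route through the auxiliary atom $c$ and the transported conditions on $Z=(c\ a)\act X$, relying on the some/any property to know that \emph{any} atoms fresh for $Z$---in particular $a'$ and $b'$---are `good' for those conditions. Everything else is routine rewriting with \rulefont{\amgis\sigma}, \rulefont{\sigma\#}, and equivariance.
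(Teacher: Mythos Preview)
Your route differs from the paper's. The paper works at the set level: having already established Proposition~\ref{prop.sub.sub} (the substitution lemma for subsets) and Lemma~\ref{lemm.X.sub.fresh.alpha}, it verifies the reformulated conditions directly as set equalities --- computing $X[a\sm u][b\sm v]=X[a\sm u]$ and $X[a\sm u][b\sm b']=(b'\ b)\act(X[a\sm u])$ for suitably fresh $b,b'$ by chaining those lemmas with \rulefont{\sigma\#} and equivariance. Your argument instead unfolds $X[a\sm u]$ pointwise via the defining $\New{c}$ and transports the conditions to $Z=(c\ a)\act X$; this is more elementary, does not presuppose Proposition~\ref{prop.sub.sub}, and checks the conditions of Definition~\ref{defn.powsigma} in their literal pointwise form.

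There is, however, a real slip --- one you almost warn yourself against. You write ``pick an auxiliary atom $c$ fresh for all of $p,u,v,a,a',b',X$'' and then invoke some/any (Theorem~\ref{thrm.New.equiv}) to collapse $\New{c}\bigl(p[u\ms c]\in(c\ a)\act X\bigr)$ to the single instance $p[u\ms c]\in Z$. But $p\in|\ns P|$ need not have finite support (Definition~\ref{defn.bus.algebra} only asks for a permutation action), so there may be no such $c$, and some/any does not apply with $p$ among its parameters --- exactly the hazard you flag for $a'$. The repair is not to instantiate $c$ at all: keep the $\New{c}$ throughout and observe that your two rewrites, namely $(p[v\ms a'])[u\ms c]=p[u\ms c][v\ms a']$ and $q[v\ms a']\in(c\ a)\act X\liff q\in(c\ a)\act X$, hold for every $c$ outside a finite set depending only on $X,u,v,a,a'$ (and independently of $p$). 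Hence the two predicates under $\New{c}$ agree on a cofinite set of $c$, so one holds $\New{c}$ iff the other does. The argument for condition~\ref{item.alpha.powsigma} goes through the same way. With that adjustment your proof is correct.
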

\begin{proof}
By construction $X[a\sm u]\subseteq|\ns P|$, so we now check the properties listed in Definition~\ref{defn.powsigma}.

By assumption in Definition~\ref{defn.powsigma}, $X$ is small-supported.
Small support of $X[a\sm u]$ is from Theorem~\ref{thrm.no.increase.of.supp}.

We check the conditions of Definition~\ref{defn.powsigma} for $X[a\sm u]$:
\begin{enumerate*}
\item
\emph{For fresh $b$ (so $b\#u,X$),\ $X[a\sm u][b\sm v]=X[a\sm u]$.}\quad

We use Lemma~\ref{lemm.sigma.alpha} to assume without loss of generality that $a\#u$.
It suffices to reason as follows:
\begin{tab2rrr}
X[a\sm u][b\sm v]=&X[b\sm v][a\sm u[b\sm v]] &\text{Proposition~\ref{prop.sub.sub}},\ a\#v
\\
=&X[b\sm v][a\sm u] &\rulefont{\sigma\#},\ b\#u
\\
=&X[a\sm u] &\text{Lemma~\ref{lemm.X.sub.fresh.alpha}(1)},\ b\#X
\end{tab2rrr}
\item
\emph{For fresh $b'$ (so $b'\#u,v,X$) $X[a\sm u][b\sm b']=(b'\ b)\act(X[a\sm u])$.}\quad

It suffices to reason as follows:
\begin{tab2rrr}
X[a\sm u][b\sm b']
=&X[b\sm b'][a\sm u[b\sm b']]
&\text{Proposition~\ref{prop.sub.sub}},\ a\#b'
\\
=&((b'\ b)\act X)[a\sm (b'\ b)\act u]
&\text{Lemma~\ref{lemm.X.sub.fresh.alpha}},\ b'\#u,X
\\
=&(b'\ b)\act (X[a\sm u])
&\text{Theorem~\ref{thrm.equivar}(2)}
\qedhere\end{tab2rrr}
\end{enumerate*}
\end{proof}

\begin{prop}
\label{prop.pow.sub.algebra}
If $\ns P$ is an $\amgis$-algebra then $\powsigma(\ns P)$ (Definition~\ref{defn.powsigma}) is indeed a $\sigma$-algebra.
\end{prop}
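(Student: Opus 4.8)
The plan is to unwind Definition~\ref{defn.sub.algebra} and observe that every ingredient required of a $\sigma$-algebra has already been supplied by the lemmas preceding this statement, so the proof is in essence an assembly.

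First I would check that $\powsigma(\ns P)$ is a nominal set. Its carrier is cut out of $\f{pow}(\ns P)$ by conditions~\ref{item.fresh.powsigma} and~\ref{item.alpha.powsigma} of Definition~\ref{defn.powsigma}; elements of $\f{pow}(\ns P)$ are finitely supported, and the pointwise permutation action is a group action inherited from $\f{powerset}(\ns P)$ (Definition~\ref{defn.pointwise.action}). So it only remains to see that $\pi\act X$ again satisfies the two conditions whenever $X$ does. But those conditions are first-order predicates built from the (equivariant) permutation and $\amgis$-actions of $\ns P$ together with the $\new$-quantifier, so this is immediate from equivariance of predicates (Theorem~\ref{thrm.equivar}). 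We then set $\powsigma(\ns P)^\prg=\ns P^\prg$, which is a termlike $\sigma$-algebra by hypothesis (Definition~\ref{defn.bus.algebra}).

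Next I would take as the $\sigma$-action the operation $X[a\sm u]$ of Definition~\ref{defn.sub.sets}. That it is well-typed — that is, $X[a\sm u]\in|\powsigma(\ns P)|$ whenever $X\in|\powsigma(\ns P)|$ and $u\in|\ns P^\prg|$ — is precisely Lemma~\ref{lemm.pow.closed}, and equivariance of the operation $X[a\sm u]$ is immediate from Theorem~\ref{thrm.equivar}, since it is specified by a first-order predicate in Definition~\ref{defn.sub.sets} (part~3 of Proposition~\ref{prop.amgis.iff} records the special case of the permutation action).

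Finally I would verify the four equalities of Figure~\ref{fig.nom.sigma} demanded by Definition~\ref{defn.sub.algebra} — all of \rulefont{\sigma id}, \rulefont{\sigma\#}, \rulefont{\sigma\alpha}, \rulefont{\sigma\sigma}, but not \rulefont{\sigma a}, since $\powsigma(\ns P)$ carries no $\tf{atm}$ map. Here \rulefont{\sigma id} is Corollary~\ref{corr.amgis.id.sub}; \rulefont{\sigma\#} is part~1 of Lemma~\ref{lemm.X.sub.fresh.alpha}; \rulefont{\sigma\alpha} is Lemma~\ref{lemm.sigma.alpha}; and \rulefont{\sigma\sigma} is Proposition~\ref{prop.sub.sub} — the last two already established for arbitrary finitely supported $X$, hence in particular for $X\in|\powsigma(\ns P)|$. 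I do not expect a genuine obstacle: the real content of this part of the paper lies in getting Definition~\ref{defn.sub.sets} right (in particular the $\new$-quantifier there, which is what makes $\alpha$-renaming, condition~\ref{item.fresh.powsigma}, and the substitution lemma cohere), and that work is done. The one point worth a sentence of care is that $|\powsigma(\ns P)|$ is genuinely closed under both the permutation action and the $\sigma$-action before any equation is even stated; closure under permutations is equivariance of the defining predicates, and closure under substitution is exactly Lemma~\ref{lemm.pow.closed}.
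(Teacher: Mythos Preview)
Your proposal is correct and follows essentially the same assembly as the paper's own proof: closure of $|\powsigma(\ns P)|$ under the permutation action via Theorem~\ref{thrm.equivar}, closure under the $\sigma$-action via Lemma~\ref{lemm.pow.closed}, and then the four axioms via exactly the references you cite (Corollary~\ref{corr.amgis.id.sub}, part~1 of Lemma~\ref{lemm.X.sub.fresh.alpha}, Lemma~\ref{lemm.sigma.alpha}, Proposition~\ref{prop.sub.sub}). Your write-up is a bit more explicit about the nominal-set and termlike-$\sigma$-algebra ingredients, but the argument is the same.
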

\begin{proof}
By Lemma~\ref{lemm.pow.closed} the $\sigma$-action does indeed map to $|\powsigma(\ns P)|$.
By Theorem~\ref{thrm.equivar} so does the permutation action.
It remains to check validity of the axioms from Definition~\ref{defn.sub.algebra}.
\begin{itemize*}
\item
Axiom \rulefont{\sigma id} is Corollary~\ref{corr.amgis.id.sub}.
\item
Axiom \rulefont{\sigma\#} is part~1 of Lemma~\ref{lemm.X.sub.fresh.alpha}.
\item
Axiom \rulefont{\sigma\alpha} is Lemma~\ref{lemm.sigma.alpha}.
\item
Axiom \rulefont{\sigma\sigma} is Proposition~\ref{prop.sub.sub}.
\qedhere
\end{itemize*}
\end{proof}

\subsubsection{Some further remarks}
\label{subsect.further.remarks}

The $\amgis$-action $p[u\ms a]$ in Definition~\ref{defn.p.action} is the \emph{functional preimage} of an underlying $\sigma$-action, and the $\sigma$-action in $X[a\sm u]$ in Definition~\ref{defn.sub.sets} is obtained by first binding $a$ using a $\new$-quantifier---and then taking the functional preimage of an underlying $\amgis$-action.
This binding is designed to make $\alpha$-equivalence (Lemma~\ref{lemm.sigma.alpha}) a structural fact of the definition---that is, \rulefont{\sigma\alpha} from Figure~\ref{fig.nom.sigma}.

We can suggest the following intuitions for the amgis- and sigma-actions from Definitions~\ref{defn.p.action} and~\ref{defn.sub.sets}:
\begin{itemize*}
\item
$p[u\ms a]$ is intuitively ``$p$, reprogrammed to believe that $a$ is equal to $u$''.
\item
$X[a\sm u]$ is intuitively ``$X$, reprogrammed to believe that $a$ is equal to $u$, then bind $a$''.
\end{itemize*}

In the case that $\ns P$ is small-supported, so that every $p\in|\ns P|$ has small support, then Definition~\ref{defn.p.action} can be simplified as described in part~2 of Proposition~\ref{prop.amgis.iff}.
This version appeared as a definition in \cite{gabbay:stodfo,gabbay:semooc}.
We arrived at Definition~\ref{defn.sub.sets} as a modification and generalisation of the first definition to the case where we cannot assume that points have small support (because of the infinitely many choices made in Theorem~\ref{thrm.maxfilt.zorn}).

Interestingly, only \rulefont{\sigma\sigma} comes directly from the structure of the underlying $\amgis$-algebra (from \rulefont{\amgis\sigma}).
Other axioms are forced---\rulefont{\sigma\alpha} from the definition (Lemma~\ref{lemm.sigma.alpha}), and \rulefont{\sigma\#} and \rulefont{\sigma\id} from conditions~\ref{item.fresh.powsigma} and~\ref{item.alpha.powsigma} in Definition~\ref{defn.powsigma}.

\section{Nominal posets}
\label{sect.nom.pow}

\subsection{Nominal posets and fresh-finite limits}
\label{subsect.fresh-finite.limit}

\begin{defn}
\label{defn.nom.poset}
A \deffont{nominal poset} is a tuple $\mathcal L=(|\mathcal L|,\act,\leq)$ where
\begin{enumerate*}
\item
$(|\mathcal L|,\act)$ is a nominal set, and
\item\label{nom.poset.leq.equivar}
The relation $\leq\ \subseteq|\mathcal L|{\times}|\mathcal L|$ is an equivariant partial order.\footnote{So $\leq$ is transitive, reflexive, and antisymmetric, and $x\leq y$ if and only if $\pi\act x\leq\pi\act y$.}
\end{enumerate*}
\end{defn}

\begin{defn}
\label{defn.fresh.finite.limit}
Say a nominal poset $\mathcal L$ is \deffont{finitely fresh-complete} or has \deffont{fresh-finite limits} when:
\begin{itemize*}
\item
$\mathcal L$ has a top element $\ttop$.
\item
$\mathcal L$ has conjunctions $x\tand y$ (a greatest lower bound for $x$ and $y$).
\item
$\mathcal L$ has $a$-fresh limits $\freshwedge{a}x$, where $\freshwedge{a}x$ is greatest amongst elements $x'$ such that $x'\leq x$ and $a\#x'$.
\end{itemize*}
Say $\mathcal L$ is \deffont{finitely cocomplete}\footnote{There is also a notion of finitely fresh-cocomplete, but we will not need it.}
or say it has \deffont{finite colimits} when:
\begin{itemize*}
\item
$\mathcal L$ has a bottom element $\tbot$.
\item
$\mathcal L$ has disjunctions $x\tor y$ (a least upper bound for $x$ and $y$).
\end{itemize*}
\end{defn}

Lemmas~\ref{lemm.freshwedge.alpha}, \ref{lemm.comp.unique}, and~\ref{lemm.tand.tall} will be useful later:
\begin{lemm}
\label{lemm.freshwedge.alpha}
If $b\#x$ then $\freshwedge{a}x=\freshwedge{b}(b\ a)\act x$.
\end{lemm}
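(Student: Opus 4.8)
The plan is to obtain this as essentially the `$\alpha$-conversion instance' of equivariance for the $\freshwedge{a}$ operator. The starting observation is that $\freshwedge{a}x$ is specified by a first-order predicate in the parameters $x$ and $a$ — it is `the greatest $x'$ such that $x'\leq x$ and $\supp(x')\subseteq\supp(x){\setminus}\{a\}$' — with no use of choice. Hence part~2 of Theorem~\ref{thrm.equivar} (equivariance of functions) applies and yields, for any permutation $\pi$, that $\pi\act\freshwedge{a}x=\freshwedge{\pi(a)}(\pi\act x)$, with the limit on one side existing iff it exists on the other (apply $\pi$ and $\pi^\mone$). Here one implicitly uses Proposition~\ref{prop.pi.supp} to see that the support side-condition $\supp(x')\subseteq\supp(x){\setminus}\{a\}$ transports correctly under $\pi$. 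Instantiating $\pi$ with the swapping $(b\ a)$ gives the first key equality $(b\ a)\act\freshwedge{a}x=\freshwedge{b}((b\ a)\act x)$.

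Next I would show $(b\ a)\act\freshwedge{a}x=\freshwedge{a}x$, which by part~1 of Corollary~\ref{corr.stuff} reduces to checking $a\#\freshwedge{a}x$ and $b\#\freshwedge{a}x$. Both follow from the defining support bound $\supp(\freshwedge{a}x)\subseteq\supp(x){\setminus}\{a\}$: this gives $a\#\freshwedge{a}x$ directly, and it gives $b\#\freshwedge{a}x$ because the hypothesis $b\#x$ means $b\notin\supp(x)$, hence $b\notin\supp(x){\setminus}\{a\}$. Combining the two equalities finishes the proof: $\freshwedge{a}x=(b\ a)\act\freshwedge{a}x=\freshwedge{b}((b\ a)\act x)$.

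There is no serious obstacle; the only point requiring care is legitimising the equivariance rewriting in the first step — i.e. confirming that $\freshwedge{a}x$ really is a function of $x$ and $a$ definable by a choice-free first-order predicate, so that Theorem~\ref{thrm.equivar} is applicable — and keeping track of the ambient assumption that $\mathcal L$ has the relevant fresh-finite limits so that $\freshwedge{a}x$ is well-defined. Everything else is bookkeeping with support sets and swappings.
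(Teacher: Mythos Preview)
Your proof is correct and follows essentially the same route as the paper's: both use equivariance (Theorem~\ref{thrm.equivar}, part~2) to get $(b\ a)\act\freshwedge{a}x=\freshwedge{b}(b\ a)\act x$, and both show $(b\ a)\act\freshwedge{a}x=\freshwedge{a}x$ via $a,b\#\freshwedge{a}x$ and Corollary~\ref{corr.stuff}. The only cosmetic difference is that the paper invokes Theorem~\ref{thrm.no.increase.of.supp} for $b\#\freshwedge{a}x$ while you read it off the defining support bound $\supp(\freshwedge{a}x)\subseteq\supp(x){\setminus}\{a\}$ directly.
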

\begin{proof}
By assumption $a\#\freshwedge{a}x$ and by Theorem~\ref{thrm.no.increase.of.supp} also $b\#\freshwedge{a}x$, so by Corollary~\ref{corr.stuff}(\ref{stuff.fixsupp.fixelt}) $\freshwedge{a}x=(b\ a)\act\freshwedge{a}x$.
By Theorem~\ref{thrm.equivar}(\ref{equivar.func}) $(b\ a)\act\freshwedge{a}x=\freshwedge{b}(b\ a)\act x$.
\end{proof}

\begin{lemm}
\label{lemm.comp.unique}
$\ttop$,
$\tbot$, $x\tand y$, $x\tor y$, and $\freshwedge{a}x$ are unique if they exist.
\end{lemm}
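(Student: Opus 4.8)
The plan is to observe that every one of these five constructs is defined by a universal property (being greatest, or least, in some class of elements), and that in a partial order any two elements satisfying the same such universal property must coincide by antisymmetry. So the whole proof is a fivefold repetition of the same two-line argument, and I expect no real obstacle beyond bookkeeping; the only property of a nominal poset actually used is antisymmetry of $\leq$ from Definition~\ref{defn.nom.poset}.

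Concretely: for $\ttop$, suppose $t_1$ and $t_2$ are both top elements. Then $t_1\leq t_2$ because $t_2$ is a top element, and $t_2\leq t_1$ because $t_1$ is, so $t_1=t_2$ by antisymmetry. The argument for $\tbot$ is identical with the inequalities reversed. For $x\tand y$, suppose $m_1$ and $m_2$ are both greatest lower bounds of $\{x,y\}$. Since $m_1$ is \emph{a} lower bound and $m_2$ is the \emph{greatest} lower bound, $m_1\leq m_2$; symmetrically $m_2\leq m_1$; hence $m_1=m_2$. The case of $x\tor y$ is dual, using least upper bounds.

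Finally, for $\freshwedge{a}x$, suppose $w_1$ and $w_2$ are both greatest among the elements $x'$ with $x'\leq x$ and $\supp(x')\subseteq\supp(x)\setminus\{a\}$. Then $w_1$ lies in this class, so by maximality of $w_2$ we get $w_1\leq w_2$; symmetrically $w_2\leq w_1$; so $w_1=w_2$. Since each construct is thereby shown unique whenever it exists, the lemma follows. I would keep the write-up to a single short paragraph, since each case is literally "apply antisymmetry to the two defining inequalities."
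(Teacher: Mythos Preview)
Your proof is correct and matches the paper's approach exactly: the paper's proof is the single line ``Since for a partial order, $x\leq y$ and $y\leq x$ imply $x=y$,'' and you have simply unpacked this antisymmetry argument case by case.
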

\begin{proof}
Since for a partial order, $x\leq y$ and $y\leq x$ imply $x=y$.
\end{proof}

\begin{lemm}
\label{lemm.tand.tall}
$\tall a.(x_1\tand\dots\tand x_n)=(\tall a.x_1)\tand\dots\tand(\tall a.x_n)$.
\end{lemm}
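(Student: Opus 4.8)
The plan is to prove the two inequalities directly, for general $n$, from the maximality built into the fresh‑finite limit, using conservation of support (Theorem~\ref{thrm.no.increase.of.supp}) and the antisymmetry/uniqueness of Lemma~\ref{lemm.comp.unique}; here $\tall a.z$ denotes the fresh‑finite limit $\freshwedge{a}z$ of Definition~\ref{defn.fresh.finite.limit}. The one auxiliary fact I would isolate first is the reformulation: \emph{$\freshwedge{a}z$ is the greatest $a$-fresh element of $\mathcal L$ below $z$}. Its forward half is immediate — $\freshwedge{a}z\le z$ by Definition~\ref{defn.fresh.finite.limit}, and $a\#\freshwedge{a}z$ by conservation of support — while its reverse half says that any $a$-fresh $w\le z$ satisfies $w\le\freshwedge{a}z$, i.e.\ $w$ lies in the candidate set $\{w'\mid w'\le z,\ \supp(w')\subseteq\supp(z){\setminus}\{a\}\}$ whose greatest element is $\freshwedge{a}z$.

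Granting the reformulation, the proof is short. For the direction $(\freshwedge{a}x_1)\tand\dots\tand(\freshwedge{a}x_n)\le\freshwedge{a}(x_1\tand\dots\tand x_n)$: the left‑hand side is below each $x_i$ (since $\freshwedge{a}x_i\le x_i$ and $\tand$ is a greatest lower bound), hence below $x_1\tand\dots\tand x_n$, and it is $a$-fresh by conservation of support applied to the $n$-ary meet of the $a$-fresh elements $\freshwedge{a}x_i$; so by the reformulation it lies below $\freshwedge{a}(x_1\tand\dots\tand x_n)$. For the converse: $\freshwedge{a}(x_1\tand\dots\tand x_n)$ is $a$-fresh and lies below $x_1\tand\dots\tand x_n$, hence below each $x_i$, hence (reformulation again) below each $\freshwedge{a}x_i$, hence below their meet. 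Lemma~\ref{lemm.comp.unique} then gives equality. Note the degenerate case $n=0$ is subsumed: both sides are $\ttop$, using that $\ttop$ is equivariant, hence $a$-fresh.

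The step I expect to be the main obstacle is the \emph{reverse} half of the reformulation, i.e.\ certifying an $a$-fresh $w\le z$ as a member of that candidate set: this needs $\supp(w)\subseteq\supp(z)$, but in the application $w=(\freshwedge{a}x_1)\tand\dots\tand(\freshwedge{a}x_n)$ conservation of support only delivers $\supp(w)\subseteq\bigcup_i\supp(x_i)$, which can strictly exceed $\supp(x_1\tand\dots\tand x_n)=\supp(z)$. The clean way to discharge this is to view $\freshwedge{a}(-)$ as a right adjoint (to the inclusion of the $a$-fresh part of $\mathcal L$), so that it is a limit‑preserving operation and the support bound $\supp\bigl((\freshwedge{a}x_1)\tand\dots\tand(\freshwedge{a}x_n)\bigr)\subseteq\supp(x_1\tand\dots\tand x_n){\setminus}\{a\}$ drops out of the universal property of $\freshwedge{a}(x_1\tand\dots\tand x_n)$ rather than from conservation of support alone; I would expect the paper to have exactly this characterisation of fresh‑finite limits available at this point, and a naive equivariance argument (permuting by $\pi$ fixing $\supp(x_1\tand\dots\tand x_n)$ and $a$) has to be arranged carefully to avoid assuming the very identity being proved. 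Once that support bound is in place, both inequalities above are legitimate and the lemma follows.
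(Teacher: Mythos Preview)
Your two-inequality argument is correct and is essentially the paper's one-line proof unpacked: the paper simply observes that both sides specify the greatest element $z$ with $a\#z$ and $z\le x_i$ for all $i$, which is exactly what your two inequalities verify.

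The obstacle you flag in your third paragraph does not arise. The ``reformulation'' you isolate --- that $\freshwedge{a}z$ is the greatest $a$-fresh element below $z$ --- is the paper's working definition, not a derived fact. The literal wording of Definition~\ref{defn.fresh.finite.limit} writes $\supp(x')\subseteq\supp(x){\setminus}\{a\}$, but the paper uses the $a\#x'$ reading throughout: the proof of this very lemma invokes ``$a\#z$ and $z\le x_i$''; the remark following Definition~\ref{defn.freshwedges} states explicitly that ``$\freshwedge{a}x$ is the greatest $x'$ beneath $x$ such that $a\not\in\supp(x')$'' and contrasts it with the strict-support version $\freshwedges{S}$; and the proofs of Proposition~\ref{prop.freshwedge.strict} and Lemma~\ref{lemm.tall.monotone} only make sense under that reading. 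So the step ``$a\#w$ and $w\le z$ implies $w\le\freshwedge{a}z$'' is immediate by definition, and there is no need for the adjoint or support-interpolation detour you sketch.
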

\begin{proof}
Both the left-hand and right-hand sides specify a greatest element $z$ such that $a\#z$ and $z\leq x_i$ for $1{\leq}i{\leq}n$.
\end{proof}

\begin{lemm}
$\tall a.\tall b.x=\tall b.\tall a.x$.
\end{lemm}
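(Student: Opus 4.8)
The plan is to show, just as in the proof of Lemma~\ref{lemm.tand.tall}, that $\tall a.\tall b.x$ and $\tall b.\tall a.x$ both denote the greatest element $z$ of $\mathcal L$ such that $z\leq x$ and $a,b\# z$. This set of $z$'s is symmetric in $a$ and $b$, so if both iterated limits are its greatest element, they coincide (uniqueness of such greatest elements is Lemma~\ref{lemm.comp.unique}). Thus the work is to verify, for $\tall a.\tall b.x$, that (a) it itself satisfies $z\le x$ and $a,b\#z$, and (b) it dominates every such $z$; the argument for $\tall b.\tall a.x$ is then the same with $a$ and $b$ swapped.

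The one tool not already to hand is the auxiliary fact $(\ast)$: if $z\leq v$ and $a\# z$ then $z\leq\tall a.v$. I would establish this first. Note that $a\# z$ forces $\tall a.z=z$: by definition $\tall a.z$ is the greatest $z'$ with $z'\leq z$ and $\supp(z')\subseteq\supp(z){\setminus}\{a\}=\supp(z)$, and $z$ is visibly the greatest such $z'$. Now if $z\leq v$ then $z\tand v=z$ (as $z$ is the greatest lower bound of $\{z,v\}$), so Lemma~\ref{lemm.tand.tall} applied with two conjuncts gives
$$z=\tall a.z=\tall a.(z\tand v)=(\tall a.z)\tand(\tall a.v)=z\tand(\tall a.v),$$
whence $z\leq\tall a.v$. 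This is the heart of the proof.

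Granting $(\ast)$, part (a) is support bookkeeping: $\tall a.\tall b.x\leq\tall b.x\leq x$, and by two applications of the defining support bound of $\freshwedge{}$ we get $\supp(\tall a.\tall b.x)\subseteq\supp(\tall b.x){\setminus}\{a\}\subseteq\supp(x){\setminus}\{a,b\}$, so in particular $a,b\#\tall a.\tall b.x$. For part (b), given any $z\leq x$ with $a,b\# z$, apply $(\ast)$ with $v=x$ (using $b\# z$) to get $z\leq\tall b.x$, then apply $(\ast)$ again with $v=\tall b.x$ (using $a\# z$) to get $z\leq\tall a.\tall b.x$. Hence $\tall a.\tall b.x$ is the greatest element with the stated properties; by symmetry so is $\tall b.\tall a.x$, and antisymmetry of $\leq$ finishes the proof. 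The point worth flagging is that the definition of $\freshwedge{a}$ bounds the \emph{entire} support of $x'$ by $\supp(x){\setminus}\{a\}$ rather than merely requiring $a\#x'$, so a bare appeal to the universal property does not immediately yield $(\ast)$; Lemma~\ref{lemm.tand.tall} is exactly what bridges this gap, and recognising its role is the only non-routine step.
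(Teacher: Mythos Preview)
Your proof is correct and follows essentially the same approach as the paper: both sides are shown to be the greatest $z$ with $z\leq x$ and $a,b\#z$. The paper's proof is a single sentence asserting exactly this characterisation, whereas you fill in the details, in particular the auxiliary fact $(\ast)$.

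One comment on your closing remark. You flag that Definition~\ref{defn.fresh.finite.limit} bounds the entire support of $x'$ by $\supp(x){\setminus}\{a\}$ rather than merely requiring $a\#x'$, and that therefore $(\ast)$ is not immediate. This is an astute reading of the definition as literally written, but the paper elsewhere treats $\freshwedge{a}x$ as the greatest $x'\leq x$ with $a\#x'$: see the remark following Definition~\ref{defn.freshwedges}, the proof of Proposition~\ref{prop.freshwedge.strict}, and indeed the one-line proofs of Lemma~\ref{lemm.tand.tall} and of the present lemma. Under that reading $(\ast)$ is immediate from the universal property and your detour through Lemma~\ref{lemm.tand.tall} is unnecessary (though still valid). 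The two readings do agree once the limit exists, since by Theorem~\ref{thrm.no.increase.of.supp} the greatest $x'\leq x$ with $a\#x'$ automatically has support contained in $\supp(x){\setminus}\{a\}$; so your caution costs nothing.
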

\begin{proof}
Both the left-hand and right-hand sides specify a greatest element $z$ such that $a\#z$ and $b\#z$ and $z\leq x$.
\end{proof}

\begin{lemm}
\label{lemm.tall.monotone}
Suppose $\mathcal L$ is a finitely fresh-complete nominal poset. 
Then $x\leq x'$ implies $\tall a.x\leq \tall a.x'$.
\end{lemm}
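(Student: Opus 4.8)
The plan is to deduce this from the distribution law Lemma~\ref{lemm.tand.tall}. The key observation is that the hypothesis $x\leq x'$ says exactly that $x\tand x'=x$. So one argues: $\tall a.x=\tall a.(x\tand x')=(\tall a.x)\tand(\tall a.x')$ by Lemma~\ref{lemm.tand.tall} (the instance $n=2$ with $x_1=x$, $x_2=x'$), and an element equal to its own meet with $\tall a.x'$ is a lower bound of $\tall a.x'$; hence $\tall a.x\leq \tall a.x'$. That is the whole argument on the surface.

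Alternatively --- and this is presumably why the monotone $\sigma$-action is among the hypotheses --- one can argue directly, which I would organise as follows. First, $\tall a.x\leq x\leq x'$ is immediate from Definition~\ref{defn.fresh.finite.limit} together with the hypothesis, and $a\#\tall a.x$ since $\supp(\tall a.x)\subseteq\supp(x)\setminus\{a\}$. Second, apply $[a\sm u]$ to $\tall a.x\leq x'$ and use monotonicity of the $\sigma$-action together with \rulefont{\sigma\#} (which gives $(\tall a.x)[a\sm u]=\tall a.x$, as $a$ is fresh) to obtain $\tall a.x\leq x'[a\sm u]$ for every $u$; taking $u=b$ a fresh atom and using Lemma~\ref{lemm.sub.alpha} yields $\tall a.x\leq (b\ a)\act x'$ for all $b\#x'$. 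So $\tall a.x$ is a common lower bound of $x'$ and all its $a$-renamings $(b\ a)\act x'$. Third, conclude by the fact that $\freshwedge{a}x'$ is the greatest such common lower bound --- the easy inclusion being that $\freshwedge{a}x'\leq x'$ and, since $a,b\#\freshwedge{a}x'$ for $b\#x'$, also $\freshwedge{a}x'=(b\ a)\act\freshwedge{a}x'\leq (b\ a)\act x'$ by equivariance of $\leq$.

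The step I expect to be the real obstacle is the last one, in whichever route is taken: in the first route it is buried inside the proof of Lemma~\ref{lemm.tand.tall}, and in the second it is the claim that $\freshwedge{a}x'$ dominates every common lower bound of $\{x'\}\cup\{(b\ a)\act x'\mid b\#x'\}$. Both come down to the same support-bookkeeping point: the candidate element (either $\tall a.x$ itself, or an arbitrary common lower bound $z$) is visibly below $x'$ with $a$ fresh, but its support need not a priori lie inside $\supp(x')\setminus\{a\}$, which is what the literal universal property of $\freshwedge{a}x'$ in Definition~\ref{defn.fresh.finite.limit} demands. One has to see that being below all the $(b\ a)$-translates of $x'$ forces that support condition --- equivalently, that $\freshwedge{a}x'$ genuinely computes the (possibly infinite) meet of those translates --- and it is here that having a monotone $\sigma$-action, rather than bare fresh-finite limits, does the work. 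I would therefore write the proof up by first establishing this characterisation of $\freshwedge{a}y$ as the greatest common lower bound of $\{y\}\cup\{(b\ a)\act y\mid b\#y\}$, and then reading the Lemma off the displayed chain of inequalities above.
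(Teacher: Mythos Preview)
Your proposal is correct but considerably more elaborate than the paper's argument, which is a two-line application of the universal property. The paper simply observes: $\tall a.x\leq x\leq x'$ and $a\#\tall a.x$, hence $\tall a.x\leq\tall a.x'$ by the defining property of $\freshwedge{a}x'$. That is the entire proof. In particular, the hypothesis of a monotone $\sigma$-action is \emph{not used} --- it is dead weight in the statement, and the result holds in any finitely fresh-complete nominal poset.

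Your Route~1 via Lemma~\ref{lemm.tand.tall} is the same idea with an unnecessary detour; your Route~2 via the $\sigma$-action is correct but substantially overkill. The support-bookkeeping obstacle you anticipate --- that $\supp(\tall a.x)$ need not lie inside $\supp(x')\setminus\{a\}$ --- is a fair reading of the literal wording of Definition~\ref{defn.fresh.finite.limit}, but the paper consistently treats $\freshwedge{a}y$ as the greatest $z$ with $z\leq y$ and $a\#z$ (see the remark following Definition~\ref{defn.freshwedges} and the one-line proof of Lemma~\ref{lemm.tand.tall}, which already uses exactly this characterisation before the present lemma). With that reading the universal property applies immediately, and the orbit characterisation you sketch at the end is not needed here; the paper defers that to Proposition~\ref{prop.freshwedgeo}.
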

\begin{proof}
Suppose $x\leq x'$.
By assumption $\tall a.x\leq x$ and $a\#\tall a.x$.
But then also $\tall a.x\leq x'$ and $a\#\tall a.x$, so $\tall a.x\leq\tall a.x'$.
\end{proof}

\begin{nttn}
We may write $\tall a_1,\dots,a_n.x$ for $\tall a_1.\dots.\tall a_n.x$.
\end{nttn}

\subsection{Compatible $\sigma$-structure}

\begin{defn}
\label{defn.fresh.continuous}
Say that a finitely fresh-complete and finitely cocomplete nominal poset $\mathcal L=(|\mathcal L|,\act,\leq)$ has a \deffont{compatible} $\sigma$-algebra structure when it is also a $\sigma$-algebra $(|\mathcal L|,\act,\nslprg,\tf{sub}_{\ns L})$ and in addition
\begin{frameqn}
\begin{array}{l@{\ }r@{\ }l@{\qquad}r@{\ }l}
&(x\tand y)[a\sm u]=&(x[a\sm u])\tand(y[a\sm u])
\\
&(x\tor y)[a\sm u]=&(x[a\sm u])\tor (y[a\sm u])
\\
b\#u\limp&(\freshwedge{b}x)[a\sm u]=&\freshwedge{b}(x[a\sm u])
\end{array}
\end{frameqn}
where $x,y\in|\mathcal L|$ and $u\in|\nslprg|$, where $\tand$, $\tor$, and $\tall b$ exist.

Call the $\sigma$-action \deffont{monotone} when
\begin{frameqn}
x\leq y \quad\text{implies}\quad x[a\sm u]\leq y[a\sm u] .
\end{frameqn}
\end{defn}

\begin{lemm}
\label{lemm.sigma.monotone}
Continuing Definition~\ref{defn.fresh.continuous}, if the $\sigma$-structure is compatible then it is monotone.
\end{lemm}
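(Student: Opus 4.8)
The plan is to exploit the standard lattice-theoretic characterisation of the order relation in terms of binary meets, together with the first of the three compatibility equations in Definition~\ref{defn.fresh.continuous}. Recall that in any poset possessing binary greatest lower bounds, $x\leq y$ holds if and only if $x\tand y=x$: if $x\leq y$ then $x$ is itself a lower bound of $\{x,y\}$ and is clearly the greatest such, so $x\tand y=x$; conversely, $x\tand y=x$ gives $x=x\tand y\leq y$. Since $\mathcal L$ is finitely fresh-complete, $x\tand y$ exists for all $x,y\in|\mathcal L|$ by Definition~\ref{defn.fresh.finite.limit}, so this reformulation is always available.

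Concretely, first I would assume $x\leq y$ and rewrite it as $x\tand y=x$. Next I would apply the substitution $[a\sm u]$ to both sides of this equality and invoke the first compatibility equation, $(x\tand y)[a\sm u]=(x[a\sm u])\tand(y[a\sm u])$, to deduce $x[a\sm u]=(x[a\sm u])\tand(y[a\sm u])$. Finally I would read this off, via the same meet-characterisation of $\leq$, as $x[a\sm u]\leq y[a\sm u]$, which is the desired conclusion. (One could equally run the dual argument using $x\leq y\liff x\tor y=y$ and the second compatibility equation; since $\mathcal L$ is also finitely cocomplete, $x\tor y$ always exists, so that route works too.)

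I do not anticipate a genuine obstacle. The one point that merits a moment's care is that the equations in Definition~\ref{defn.fresh.continuous} are stated only "where $\tand$, $\tor$, and $\tall b$ exist"; but finite fresh-completeness supplies $x\tand y$ for every pair $x,y$, so the instance of the compatibility equation that the argument uses is unconditionally valid, and no side-condition needs to be discharged.
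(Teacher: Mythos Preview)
Your proposal is correct and follows exactly the same approach as the paper: use the characterisation $x\leq y\liff x\tand y=x$, apply $[a\sm u]$ to both sides, and invoke the compatibility equation for $\tand$. The paper's proof is just as terse (``It is a fact that $x\leq y$ if and only if $x\land y=x$. The result follows.''), and you have spelled out the details carefully, including the observation that finite fresh-completeness guarantees the needed meets exist.
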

\begin{proof}
It is a fact that $x\leq y$ if and only if $x\land y=x$.
The result follows.
\end{proof}

Our main source of nominal posets with a compatible or monotone $\sigma$-action, in this paper, will be the nominal distributive lattices with $\tall$ of the later Definition~\ref{defn.FOLeq}.

\begin{lemm}
\label{lemm.fresh.glb.sub}
Suppose $\mathcal L$ is a nominal poset with a monotone $\sigma$-action,
and suppose $a\#z$ for $z\in|\mathcal L|$.

Then if $z\leq x$ then $z$ is a lower bound for $\{x[a\sm u]\mid u\in|\nslprg|\}$, and as a particular corollary,
$$
\tall a.x\leq x[a\sm u]
$$
for every $x\in|\mathcal L|$ and $u\in|\nslprg|$.
\end{lemm}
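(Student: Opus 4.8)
The plan is to read off both parts from axiom $\rulefont{\sigma\#}$ of Figure~\ref{fig.nom.sigma} together with the monotonicity hypothesis; the argument should be short.

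First I would prove the lower-bound claim. Fix any $u\in|\nslprg|$. Since $a\#z$, axiom $\rulefont{\sigma\#}$---which is part of the $\sigma$-algebra structure we are assuming on $\mathcal L$ (Definition~\ref{defn.sub.algebra})---gives $z[a\sm u]=z$. Applying monotonicity of the $\sigma$-action to $z\leq x$ gives $z[a\sm u]\leq x[a\sm u]$. Chaining these, $z=z[a\sm u]\leq x[a\sm u]$, and since $u$ was arbitrary this shows that $z$ is a lower bound for $\{x[a\sm u]\mid u\in|\nslprg|\}$.

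Then I would obtain the corollary by instantiating $z:=\tall a.x$ (in the case where this element exists, which is implicit in the displayed inequality). By Definition~\ref{defn.fresh.finite.limit}, $\tall a.x=\freshwedge{a}x$ satisfies $\tall a.x\leq x$ and $\supp(\tall a.x)\subseteq\supp(x)\setminus\{a\}$, so in particular $a\#\tall a.x$; the first part then applies with this $z$ and yields $\tall a.x\leq x[a\sm u]$ for every $u\in|\nslprg|$.

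I do not expect a genuine obstacle here. The only points worth care are that $\rulefont{\sigma\#}$ really does belong to the assumed $\sigma$-algebra axioms (it does), and that the freshness $a\#\tall a.x$ needed to invoke the first part comes for free from the support bound built into the definition of $\freshwedge{a}x$, rather than requiring a separate computation.
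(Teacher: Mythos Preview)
Your proposal is correct and matches the paper's own proof essentially line for line: monotonicity gives $z[a\sm u]\leq x[a\sm u]$, axiom \rulefont{\sigma\#} gives $z=z[a\sm u]$, and the corollary follows by instantiating $z:=\tall a.x$ using the freshness and the inequality built into Definition~\ref{defn.fresh.finite.limit}.
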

\begin{proof}
By monotonicity $z[a\sm u]\leq x[a\sm u]$ for every $u\in|\nslprg|$.
By \rulefont{\sigma\#} also $z=z[a\sm u]$.
The corollary follows just noting that by the definition of fresh-finite limit in Definition~\ref{defn.fresh.finite.limit}, $\tall a.x\leq x$ and $a\#\tall a.x$.
\end{proof}

\begin{lemm}
\label{lemm.a.fresh.bigset}
$a\#\{x[a\sm u]\mid u\in|\nslprg|\}$ and $a\#\{x[a\sm n_{\nslprg}]\mid n\in\mathbb A\}$.
\end{lemm}
\begin{proof}
We use Corollary~\ref{corr.stuff}(\ref{stuff.freshness.criterion}).
Choose fresh $b$ (so $b\#x$).
Then we reason as follows:
$$
\begin{array}{r@{\ }l@{\qquad}l}
(b\ a)\act \{x[a\sm u]\mid u\in|\nslprg|\}
=&
\{(b\ a)\act (x[a\sm u])\mid u\in|\nslprg|\}
&\text{Pointwise action}
\\
=&
\{((b\ a)\act x)[b\sm (b\ a)\act u]\mid u\in|\nslprg|\}
&\text{Theorem~\ref{thrm.equivar}}
\\
=&
\{((b\ a)\act x)[b\sm u]\mid u\in|\nslprg|\}
&\pi\act|\nslprg|=|\nslprg|
\\
=&
\{x[a\sm u]\mid u\in|\nslprg|\}
&\rulefont{\sigma\alpha},\ b\#x
\end{array}
$$
The reasoning for $a\#\{x[a\sm n_{\nslprg}]\mid n\in\mathbb A\}$ is similar.
\end{proof}

In Definition~\ref{defn.fresh.finite.limit} we characterised universal quantification as a fresh-finite limit.
However, in the presence of a $\sigma$-action we also have an intuition that universal quantification is an infinite intersection (so $\freshwedge{a}x$ should mean `$x[a\sm u]$ for every $u$').
Proposition~\ref{prop.char.freshwedge} makes this intuition formal:
\begin{prop}
\label{prop.char.freshwedge}
Suppose $\mathcal L$ is a nominal poset with a monotone $\sigma$-action (Definition~\ref{defn.fresh.continuous}), and suppose $x\in|\mathcal L|$.
Then:
\begin{enumerate*}
\item
If $\freshwedge{a}x$ exists in $\mathcal L$ then so does $\bigwedge_{u{\in}|\nslprg|} x[a\sm u]$ the limit for $\{x[a\sm u]\mid u\in|\nslprg|\}$, and they are equal.
In symbols:
$$
\freshwedge{a}x = \bigwedge_{u{\in}|\nslprg|} x[a\sm u].
$$
\item
If $\bigwedge_u x[a\sm u]$ exists in $\mathcal L$ then so does $\freshwedge{a}x$, and they are equal.
\end{enumerate*}
\end{prop}
\begin{proof}
By Lemma~\ref{lemm.fresh.glb.sub} $\freshwedge{a}x$ is a lower bound for $\{x[a\sm u]\mid u\in|\nslprg|\}$.

Now suppose $z$ is any other lower bound, that is: $z\leq x[a\sm u]$ for every $u\in|\nslprg|$.
Note that we do not know \emph{a priori} that $a\#z$.

Choose $b$ fresh (so $b\#z,x$) and take $u=b$.
Then $z\leq x[a\sm b]\stackrel{\text{Lem~\ref{lemm.sub.alpha}}}{=} (b\ a)\act x$.
Since $b\#z$ it follows that $z\leq\freshwedge{b}(b\ a)\act x\stackrel{\text{Lem.~\ref{lemm.freshwedge.alpha}}}{=}\freshwedge{a}x$.
So $\freshwedge{a}x = \bigwedge_u x[a\sm u]$.

Now suppose that $\bigwedge_u x[a\sm u]$ exists.
By Lemma~\ref{lemm.a.fresh.bigset} and part~2 of Theorem~\ref{thrm.no.increase.of.supp} we have that $a\#\bigwedge_u x[a\sm u]$.
Also by assumption $\bigwedge_u x[a\sm u]\leq x[a\sm a]\stackrel{\rulefont{\sigma id}}{=} x$.
Thus $\bigwedge_u x[a\sm u]$ is an $a\#$lower bound for $x$.

Now suppose $z\leq x$ and $a\#z$; we need to show that $z\leq \bigwedge_u x[a\sm u]$.
This is direct from Lemma~\ref{lemm.fresh.glb.sub}.
\end{proof}

We also mention a characterisation of $\freshwedge{a}x$ using a `smaller' conjunction which does not depend on most of $\nslprg$ (see also Remark~\ref{rmrk.prop.ii.remarkable}):
\begin{prop}
\label{prop.char.freshwedge.names}
Suppose $\mathcal L$ is a nominal poset with a monotone $\sigma$-action (Definition~\ref{defn.fresh.continuous}) and $x\in|\mathcal L|$.
Then:
\begin{enumerate*}
\item
If $\freshwedge{a}x$ exists in $\mathcal L$ then so does $\bigwedge_{n} x[a\sm n]$ where $n$ ranges over all atoms, and they are equal.
\item
If $\bigwedge_n x[a\sm n]$ exists in $\mathcal L$ then so does $\freshwedge{a}x$, and they are equal.\footnote{More fully this is $\bigwedge_n x[a\sm \tf{atm}_{\nslprg}(n)]$.  See the notation in Definition~\ref{defn.term.sub.alg}.}
\end{enumerate*}
\end{prop}
\begin{proof}
The proof is just like the proof of Proposition~\ref{prop.char.freshwedge}, but we are careful and still give full details.
The important point is that by Lemma~\ref{lemm.a.fresh.bigset} and part~2 of Theorem~\ref{thrm.no.increase.of.supp} we have $a\#\bigwedge_n x[a\sm n]$.

Using Lemma~\ref{lemm.fresh.glb.sub} we see that $\freshwedge{a}x$ is a lower bound for $\{x[a\sm n]\mid n{\in}\mathbb A\}$.

Now suppose $z$ is any other lower bound, that is: $z\leq x[a\sm n]$ for every $n\in\mathbb A$.
Choose $b$ fresh (so $b\#z,x$) and take $n=b$.
Then $z\leq x[a\sm b]\stackrel{\text{Lem~\ref{lemm.sub.alpha}}}{=} (b\ a)\act x$.
Since $b\#z$ it follows that $z\leq\freshwedge{b}(b\ a)\act x\stackrel{\text{Lem.~\ref{lemm.freshwedge.alpha}}}{=}\freshwedge{a}x$.
So $\freshwedge{a}x = \bigwedge_n x[a\sm n]$.

Now suppose that $\bigwedge_n x[a\sm n]$ exists.
By Lemma~\ref{lemm.a.fresh.bigset} and part~2 of Theorem~\ref{thrm.no.increase.of.supp} we have that $a\#\bigwedge_n x[a\sm n]$.
Also by assumption $\bigwedge_n x[a\sm n]\leq x[a\sm a]\stackrel{\rulefont{\sigma id}}{=} x$.
Thus $\bigwedge_n x[a\sm n]$ is an $a\#$lower bound for $x$.

Now suppose $z\leq x$ and $a\#z$; we need to show that $z\leq \bigwedge_n x[a\sm n]$.
This is direct from Lemma~\ref{lemm.fresh.glb.sub}.
\end{proof}

\begin{rmrk}
\label{rmrk.prop.ii.remarkable}
Proposition~\ref{prop.char.freshwedge.names} is important:
later on when we consider morphisms, we will need to know that a map $g^\mone$ in Proposition~\ref{prop.G.funct} commutes with quantification.
But $g^\mone$ can change the underlying termlike $\sigma$-algebra---that is, the domain of substitution and of quantification $\nslprg$ can change.

Could this interfere with the universal quantifier, since extra elements might make what was a true universal quantification, into a false one?
If we look at Proposition~\ref{prop.char.freshwedge}, it seems that this might be the case; Proposition~\ref{prop.char.freshwedge.names} says that \emph{this cannot happen}.

It asserts that the notion of morphism used in Proposition~\ref{prop.G.funct} must take any extra elements into account.
In the language of \cite{selinger:lamca}, Proposition~\ref{prop.char.freshwedge.names} implies that our models are \emph{well-pointed}.

A dedicated discussion of such issues in the context of nominal models of the $\lambda$-calculus is in \cite[Subsection~3.4]{gabbay:nomhss}.
More on this in Remark~\ref{rmrk.the.structure}.
\end{rmrk}

\subsection{Definition of a nominal distributive lattice with $\tall$}

\begin{defn}
\label{defn.distrib}
Suppose $\mathcal L$ is a fresh-finitely complete and finitely cocomplete nominal poset.
Call $\mathcal L$ \deffont{distributive} when
\begin{frameqn}
\begin{array}{l@{\qquad\quad}l@{\ }r@{\ }l@{\qquad}l}
\rulefont{distrib\tand}& &x\tor (y\tand z)=&(x\tor y)\land(x\tor z)
\\
\rulefont{distrib\tall}&a\#x\limp&x\tor \freshwedge{a}y=&\freshwedge{a}(x\tor y)
&\text{for every }x,y,z\in|\mathcal L| .
\end{array}
\end{frameqn}
\end{defn}

\begin{rmrk}
Definition~\ref{defn.distrib} generalises the usual notion of distributivity; $\tor$ distributes over $\tand$ and also over $\tall a$ (subject to a typical nominal algebra freshness side-condition), which we have seen exhibited as an infinite intersection in Proposition~\ref{prop.char.freshwedge}.\footnote{A dual version of part~1 of Definition~\ref{defn.distrib} is
$x\land (y\tor z)=(x\tand y)\tor(x\tand z)$ and by a standard argument \cite[Lemma~4.3]{priestley:intlo} the two are equivalent.}

An elegant, though arguably less readable, version of Definition~\ref{defn.distrib} unifies \rulefont{distrib\tand} and \rulefont{distrib\tall} to a single axiom which we could write as $S\#x\limp x\tor\freshwedge{S}Y=\freshwedge{S}\{x\tor y\mid y{\in} Y\}$, where $S\subseteq\mathbb A$ and $Y\subseteq|\mathcal L|$ are finite.
Or in English: $\tor$ distributes over fresh-finite limits.

Note in passing that in Subsection~\ref{subsect.alg}, \rulefont{distrib\tand} and \rulefont{distrib\tall} will feature as part of a purely nominal algebraic axiomatisation of fresh-finite limits.
\end{rmrk}

\begin{frametxt}
\begin{defn}
\label{defn.FOLeq}
A \deffont{nominal distributive lattice with $\tall$} is a tuple $\ns D=(|\ns D|,\act,\leq,\ns D^\prg,\tf{sub}_{\ns D})$ such that:
\begin{itemize*}
\item
$(|\ns D|,\act,\leq)$ is a nominal poset (Definition~\ref{defn.nom.poset}).
\item
$\ns D$ has fresh-finite limits and finite colimits (Definition~\ref{defn.fresh.finite.limit}), and is distributive (Definition~\ref{defn.distrib}).
\item
$(|\ns D|,\act,\ns D^\prg,\tf{sub}_{\ns D})$ is a $\sigma$-algebra (Definition~\ref{defn.sub.algebra}), and the $\sigma$-algebra structure is compatible (Definition~\ref{defn.fresh.continuous}).
\end{itemize*}
\end{defn}
\end{frametxt}

Lemma~\ref{lemm.b.bigger} is a technical lemma which we use later in Lemma~\ref{lemm.uparrow.filter} and Proposition~\ref{prop.even.stronger}.
We mention it now as an example, since it illustrates the extra structure that nominal distributive lattices with $\tall$ have, compared with `ordinary' distributive lattices:
\begin{lemm}
\label{lemm.b.bigger}
Suppose $\ns D$ is a nominal distributive lattice with $\tall$.
Suppose $x,y,z{\in}|\ns D|$.
Then:
\begin{enumerate*}
\item
If $a\#z$ and $z\leq x$ then $z\leq \tall a.x$.

\emph{Think: ``if $\Gamma\cent\phi$ and $a$ is not free in $\Gamma$ then $\Gamma\cent\forall a.\phi$''.}
\item\label{b.bigger.2}
If $b\#z,x$ and $z\leq (b\ a)\act x$ then $z\leq \tall a.x$.

\emph{Think: ``if $\Gamma\cent\phi[b/a]$ for some $b$ not free in $\Gamma$ or $\phi$ then $\Gamma\cent\forall a.\phi$''.}
\item\label{b.bigger.3}
If $b\#z,y,x$ and $z\leq y{\tor}((b\ a)\act x)$ then $z\leq y{\tor}\tall a.x$.

\emph{Think: ``if $\Gamma\cent\phi[b/a],\Psi$ and $b$ is not free in $\Gamma$, $\phi$, or $\Psi$ then $\Gamma\cent\forall a.\phi,\Psi$''.}
\end{enumerate*}
\end{lemm}
\begin{proof}
We consider each part in turn:
\begin{enumerate*}
\item
Suppose $a\#z$ and $z\leq x$.
By Lemma~\ref{lemm.tall.monotone} $\tall a.z\leq\tall a.x$ and since $a\#z$ we have $\tall a.z=z$.
\item
Suppose $b\#z,x$ and $z\leq(b\ a)\act x$.
By part~1 of this result $z\leq\tall b.(b\ a)\act x$.
We use Lemma~\ref{lemm.freshwedge.alpha}.
\item
From part~2 of this result and condition~\ref{filter.up} of distributivity (Definition~\ref{defn.distrib}).
\qedhere\end{enumerate*}
\end{proof}

Recall the notions of $\sigma$-algebra and termlike $\sigma$-algebra from Definitions~\ref{defn.term.sub.alg} and~\ref{defn.sub.algebra}:
\begin{defn}
\label{defn.morphism.sigma.alg}
Suppose $\ns X$ and $\ns X'$ are $\sigma$-algebras.
Call a pair of functions $f=(f_{\ns X},f_{\ns X}^\prg)$ where $f_{\ns X}\in|\ns X|\to|\ns X'|$ and $f_{\ns X}^\prg\in|\ns X^\prg|\to|{\ns X'}^\prg|$ a \deffont{($\sigma$-algebra) morphism} from $\ns X$ to $\ns X'$ when:
\begin{enumerate*}
\item
\label{item.f.prg.maps.atoms.to.atoms}
$f_{\ns X}^\prg(a_{\ns X^\prg})=a_{{\ns X'}^\prg}$ (so $f$ maps atoms to atoms).
\item
$f_{\ns X}^\prg(\pi\act u)=\pi\act f_{\ns X}^\prg(u)$ and
$f_{\ns X}(\pi\act x)=\pi\act f_{\ns X}(x)$
(so $f$ is equivariant).
\item\label{item.f.commutes.with.sigma}
$f_{\ns X}^\prg(u'[a\sm u])=f_{\ns X}^\prg(u')[a\sm f_{\ns X}^\prg(u)]$ and
$f_{\ns X}(x[a\sm u])=f_{\ns X}(x)[a\sm f_{\ns X}^\prg(u)]$
(so $f$ commutes with the $\sigma$-action).
\end{enumerate*}
If $\ns X$ is termlike then we insist $\ns X=\ns X^\prg$ and we insist that $f_{\ns X}=f_{\ns X}^\prg$.

We may omit the subscripts, writing for instance $f$ and $f^\prg$, or even $f=(f,f^\prg)$, where the meaning is clear.
\end{defn}

\begin{defn}
\label{defn.hom.nba}
Suppose $\ns D$ and $\ns D'$ are nominal distributive lattices with $\tall$.
Call a morphism $f=(f_{\ns D},f_{\ns D}^\prg): \ns D\to\ns D'$ of underlying $\sigma$-algebras (Definition~\ref{defn.morphism.sigma.alg}) a \deffont{morphism} of nominal distributive lattices with $\tall$ when $f$ commutes with fresh-finite limits and with finite colimits:
\begin{enumerate*}
\item
$f(\ttop)=\ttop$, and
$f(x\tand y)=f(x)\tand f(y)$ and $f(\tall a.x)=\tall a.f(x)$, and
\item
$f(\tbot)=\tbot$ and $f(x\tor y)=f(x)\tor f(y)$.
\end{enumerate*}
Write $\ndia$ for the category of nominal distributive lattices with $\tall$ and morphisms between them.
\end{defn}

\subsection{Impredicative nominal distributive lattices}
\label{subsect.impredicative.nom.dist.lat}

We are interested in modelling the $\lambda$-calculus, so we care about lattices where the substitution action is over \emph{itself}.
Therefore we introduce \emph{impredicative} nominal distributive lattices with $\tall$: this is Definition~\ref{defn.D.impredicative}.

Recall the notion of termlike $\sigma$-algebra from Definition~\ref{defn.term.sub.alg}, and the notion of a nominal distributive lattice with $\tall$ from Definition~\ref{defn.FOLeq}.

\begin{frametxt}
\begin{defn}
\label{defn.D.impredicative}
An \deffont{impredicative} nominal distributive lattice with $\tall$ is a tuple $(\ns D,\prg_{\ns D})$ where:
\begin{enumerate*}
\item
$\ns D\in\ndia$ is a nominal distributive lattice with $\tall$ (Definition~\ref{defn.FOLeq}).
\item
$(\prg_{\ns D},\f{id}):\ns D^\prg\to\ns D$ is a morphism of $\sigma$-algebras (Definition~\ref{defn.morphism.sigma.alg}).
\item\label{size.limit}
$|\ns D|$ has cardinality no greater than $\size(\mathbb A)$ (Definition~\ref{defn.atoms})---in other words, there are no more programs than there are names.
\end{enumerate*}
\end{defn}
\end{frametxt}

The interested reader can find Definition~\ref{defn.FOLeq} extended with further structure in Definition~\ref{defn.FOLeq.pp}.

\begin{rmrk}
\label{rmrk.size.issues}
The dual definition to Definition~\ref{defn.D.impredicative} is Definition~\ref{defn.impredicative.top}.

So $\ns D$ is impredicative when the $u\in |\ns D^\prg|$ can be viewed as a subset of the $x\in |\ns D|$.
We use an explicit casting function $\prg_{\ns D}$ to do this.\footnote{This costs notation; casting functions always do. Would it be simpler to take $\prg\ns U\subseteq\ns U$ as a literal subset inclusion?  At this stage it probably would---but when we consider $\amgis$-algebras, and then dualities, an explicit casting function gives cleaner results, precisely because our constructions do not need to maintain a literal subset inclusion.}

Thus given $u\in|\ns D^\prg|$, we can obtain $\prg_{\ns D} u\in|\ns D|$ and so write (for instance) $(\prg_{\ns D} u)[a\sm u]$.
This is not quite $\lambda$-calculus self-application, but we are moving in that direction.

We use the size limit (condition~\ref{size.limit} of Definition~\ref{defn.D.impredicative}) in Theorem~\ref{thrm.maxfilt.zorn}.
The intuition for why is that when we come to build prime filters we will need to `name' every element of $\ns D$ with an atom; so we need to make sure that we will not run out.
This is not precisely true, but it captures the spirit of the proof.

More on this in Subsection~\ref{subsect.cardinality}.
\end{rmrk}

\begin{nttn}
\label{nttn.impredicative.D}
We introduce some notation for Definition~\ref{defn.D.impredicative}:
\begin{itemize*}
\item
We may write $\prg_{\ns D}$ for $(\prg_{\ns D},\f{id})$.
\item
We may drop subscripts and write $\prg u$ for $\prg_{\ns D} u$ where $u\in|\ns D^\prg|$.
\item
We may write $\prg a$ for $\prg_{\ns D}(a_{\ns D^\prg})$ where $a_{\ns D^\prg}$ is itself shorthand for $\tf{atm}_{\ns D^\prg}(a)$ from Definition~\ref{defn.term.sub.alg}.\footnote{Atoms get mapped into $\ns D^\prg$ by $\tf{atm}_{\ns D^\prg}$, and $\ns D^\prg$ gets mapped into $\ns D$ by $\prg_\ns D$ \dots so atoms get mapped into $\ns D$.}
\item
We may write $\prg\ns D$ for $\{\prg u\mid u\in|\ns D^\prg|\}\subseteq|\ns D|$ and call this set the \deffont{programs} of $\ns D$.
\end{itemize*}
\end{nttn}

\begin{rmrk}
\label{rmrk.explain.prg.a}
It might help to break down the notation a little:
\begin{itemize*}
\item
$\mathbb A$ injects into $\ns D^\prg$ via an injection $\tf{atm}_{\ns D^\prg}$ (this is the equivariant injection specified in Definition~\ref{defn.term.sub.alg}).
\item
$\ns D^\prg$ maps to $\ns D$ via $\prg_{\ns D}$.
\item
Thus we obtain $\prg a\in\prg{\ns D}$---an atom-as-a-program---living in a sub-$\sigma$-algebra of $\ns D$ which is an image of $\ns D^\prg$, and which we call the \emph{programs} of $\ns D$.
\end{itemize*}
\end{rmrk}

Lemma~\ref{lemm.impredicative.sigma.a} is a routine sanity check that the definitions match up sensibly.
It will be useful later:
\begin{lemm}
\label{lemm.impredicative.sigma.a}
Suppose $\ns D$ is impredicative and $u\in|\ns D^\prg|$.
Then $(\prg_{\ns D} a_{\ns D^\prg}) [a\sm u]=\prg_{\ns D} u$.
\end{lemm}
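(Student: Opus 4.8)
The plan is to use the only nontrivial piece of structure available, namely that $(\prg_{\ns D},\f{id}):\ns D^\prg\to\ns D$ is a $\sigma$-algebra morphism (Definition~\ref{defn.D.impredicative}), and in particular that it commutes with the $\sigma$-action by condition~\ref{item.f.commutes.with.sigma} of Definition~\ref{defn.morphism.sigma.alg}. Spelling that condition out for the component map $\prg_{\ns D}:|\ns D^\prg|\to|\ns D|$ together with the identity on $|(\ns D^\prg)^\prg|=|\ns D^\prg|$, we obtain
$$
\prg_{\ns D}(x[a\sm u])=(\prg_{\ns D} x)[a\sm u]
\qquad\text{for all }x,u\in|\ns D^\prg|,
$$
where the $\sigma$-action on the left is that of the termlike $\sigma$-algebra $\ns D^\prg$ (which is a $\sigma$-algebra over itself) and on the right it is the $\sigma$-action of $\ns D$ over $\ns D^\prg$.

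The key step is then to instantiate $x:=a_{\ns D^\prg}$ and invoke axiom \rulefont{\sigma a} in the termlike $\sigma$-algebra $\ns D^\prg$, which gives $a_{\ns D^\prg}[a\sm u]=u$. Feeding this back into the displayed equation yields
$$
\prg_{\ns D} u=\prg_{\ns D}(a_{\ns D^\prg}[a\sm u])=(\prg_{\ns D} a_{\ns D^\prg})[a\sm u],
$$
which is exactly the asserted identity, read right to left.

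There is essentially no obstacle; the only thing needing care is bookkeeping about which $\sigma$-action is in play. Axiom \rulefont{\sigma a} is legitimately available here because it holds in the \emph{termlike} $\sigma$-algebra $\ns D^\prg$ — recall from Definition~\ref{defn.sub.algebra} that a general $\sigma$-algebra need not even carry an $\tf{atm}$ function, so \rulefont{\sigma a} is not assumed for $\ns D$ itself — while the transfer across $\prg_{\ns D}$ is precisely condition~\ref{item.f.commutes.with.sigma} of the morphism definition. This is the promised routine sanity check that the definitions fit together.
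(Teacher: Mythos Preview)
Your proof is correct and follows essentially the same route as the paper: use condition~\ref{item.f.commutes.with.sigma} of Definition~\ref{defn.morphism.sigma.alg} for the morphism $(\prg_{\ns D},\f{id})$ to commute $\prg_{\ns D}$ past $[a\sm u]$, then invoke \rulefont{\sigma a} in the termlike $\sigma$-algebra $\ns D^\prg$. Your added remark that \rulefont{\sigma a} is only available in $\ns D^\prg$ (not in $\ns D$ itself) is a nice clarification the paper leaves implicit.
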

\begin{proof}
By assumption $\prg_{\ns D}$ is a morphism of $\sigma$-algebras from $D^\prg$ to $\ns D$.
By Definition~\ref{defn.morphism.sigma.alg} $(\prg_{\ns D} a_{\ns D^\prg})[a\sm u]=\prg_{\ns D}(a_{\ns D^\prg}[a\sm u])$ (using the third condition, noting that $u=\f{id}(u)$).
The result follows by \rulefont{\sigma a} from Figure~\ref{fig.nom.sigma} for $\ns D^\prg$.
\end{proof}

\begin{rmrk}
\label{rmrk.D.extra}
Definition~\ref{defn.D.impredicative} can be looked at in some interesting ways:
\begin{itemize*}
\item
$\ns D$ is impredicative when it has substitution $x[a\sm u]$ over a substructure of itself.
\item
$\ns D$ is impredicative when its quantifier $\tall a.x$ quantifies over a sub-$\sigma$-structure of $\ns D$.
Thus, $|\prg\ns D|\subseteq |\ns D|$ is the set of things we quantify over when we write $\tall a.x$, if $\ns D$ is impredicative.
\end{itemize*}
\end{rmrk}

\begin{rmrk}
The programs of $\ns D$ need not be closed under logical structure like $\tand$, $\tor$, and $\tall$.

So for instance $x,x'\in\prg{\ns D}$ does not imply $x\tor x'\in\prg{\ns D}$ and it is not necessarily the case that $\tbot\in\prg{\ns D}$, and so on.
We do not forbid this either.
\end{rmrk}

\begin{defn}
\label{defn.hom.impredicative}
Suppose $\ns D$ and $\ns D'$ are impredicative nominal distributive lattices with $\tall$.

Call $f=(f_{\ns D},f_{\ns D}^\prg):\ns D\to\ns D'$ a \deffont{morphism} in $\india$ when it is a morphism in $\ndia$ (Definition~\ref{defn.hom.nba}) and when in addition:
\begin{enumerate*}
\setcounter{enumi}{2}
\item
\label{item.third.condition}
$f_{\ns D}\circ\prg_{\ns D}=\prg_{\ns D'}\circ f_{\ns D}^\prg$.
That is,
$$
f_{\ns D}(\prg_{\ns D} u)=\prg_{\ns D'}(f_{\ns D}^\prg(u))
\ \text{ for every }\ u\in|\ns D^\prg|.
$$
\end{enumerate*}
\end{defn}

Definition~\ref{defn.india} extends Definition~\ref{defn.hom.nba}:
\begin{frametxt}
\begin{defn}
\label{defn.india}
Write $\india$ for the category of \deffont{impredicative nominal distributive lattices with $\tall$}, and morphisms between them.

As standard write $\ns D\in\india$ for ``$\ns D$ is an impredicative nominal distributive lattice with $\tall$'' and $f:\ns D\equivarto\ns D'\in\india$ for ``$\ns D,\ns D'\in\india$ and $f$ is a morphism in $\india$ from $\ns D$ to $\ns D'$''.
\end{defn}
\end{frametxt}

\begin{rmrk}
\label{rmrk.explain.prg.b}
We continue the notation of Definition~\ref{defn.india} and the discussion of Remark~\ref{rmrk.explain.prg.a}.
Suppose $f:\ns D\equivarto\ns D'\in\india$ is a morphism.

Note that $f_{\ns D}(\prg_{\ns D} a_{\ns D^\prg})=\prg_{\ns D'}(a_{{\ns D'}^\prg})$.
Informally we can say that $f$ maps atoms-as-programs (Remark~\ref{rmrk.explain.prg.a}) in $\ns D$ to themselves in $\ns D'$.
In symbols we can be even more brief:
$$
f(\prg a)=\prg a.
$$
We informally trace through how this happens.
By condition~\ref{item.f.prg.maps.atoms.to.atoms} of Definition~\ref{defn.morphism.sigma.alg} $f$ maps an atom in $\ns D^\prg$ to its incarnation in ${\ns D'}^\prg$.
By condition~\ref{item.third.condition} of Definition~\ref{defn.hom.impredicative} these are mapped to atoms-as-programs in $\ns D^\prg$ and ${\ns D'}^\prg$ respectively.
\end{rmrk}

\section{The $\sigma$-powerset as a nominal distributive lattice with $\tall$}
\label{sect.sigma.foleq}

We saw in Proposition~\ref{prop.pow.sub.algebra} how the nominal powerset of an $\amgis$-algebra $\ns P$ generates a $\sigma$-algebra $\powsigma(\ns P)$ (Definition~\ref{defn.powsigma}).
But powersets are also a lattice under subset inclusion, so perhaps $\powsigma(\ns P)$ has more structure?

In fact, $\powsigma(\ns P)$ is a nominal distributive lattice with $\tall$ (Definition~\ref{defn.FOLeq}).
This is Theorem~\ref{thrm.powerset}.

\subsection{Basic sets operations}

Suppose $\ns P=(|\ns P|,\act,\ns P^\prg,\tf{amgis}_{\ns P})$ is an $\amgis$-algebra.
Recall the nominal powerset $\nompow(\ns P)$ from Subsection~\ref{subsect.finsupp.pow}.

\begin{lemm}
\label{lemm.sub.bigcap}
Suppose $\mathcal X,\mathcal Y\subseteq|\nompow(\ns P)|$ and $X,Y\subseteq|\ns P|$.
Then:
\begin{enumerate*}
\item\label{sub.bigcap.sfs}
If $\mathcal X$ is strictly small-supported (Definition~\ref{defn.strictpow}) then
$$(\bigcap_{X{\in}\mathcal X} X)[a\sm u]=\bigcap_{X{\in}\mathcal X}(X[a\sm u]).
$$
In words: $\sigma$ commutes with strictly small-supported sets intersections.

Note by Lemma~\ref{lemm.finite.strict} that this holds in particular if $\mathcal X$ is finite.
\item\label{sub.bigcup.sfs}
If $\mathcal X$ is strictly small-supported then
$$
(\bigcup_{X{\in}\mathcal X} X)[a\sm u]=\bigcup_{X{\in}\mathcal X}(X[a\sm u]).
$$
In words: $\sigma$ commutes with strictly small-supported sets unions.

Note by Lemma~\ref{lemm.finite.strict} that this holds in particular if $\mathcal X$ is finite.
\item
For any $\mathcal X\subseteq|\nompow(\ns P)|$,\
$$
\pi\act\bigcap_{X{\in}\mathcal X} X=\bigcap_{X{\in}\mathcal X} \pi\act X
\quad\text{and}\quad
\pi\act\bigcup_{X{\in}\mathcal X} X=\bigcup_{X{\in}\mathcal X} \pi\act X.
$$
In words: intersections and unions are equivariant.
\item\label{sub.bigcap.monotone}
If $X\subseteq Y$ then $X[a\sm u]\subseteq Y[a\sm u]$.
In words: $\sigma$ is monotone (Definition~\ref{defn.fresh.continuous}).
\end{enumerate*}
\end{lemm}
\begin{proof}
For part~1 we reason as follows:
$$
\begin{array}{r@{\ }l@{\qquad}l}
p\in (\bigcap_{X{\in}\mathcal X} X)[a\sm u]
\liff&
\New{c}p[u\ms c]\in \bigcap_{X{\in}\mathcal X}(c\ a)\act X
&\text{Prop~\ref{prop.amgis.iff}, Thm~\ref{thrm.equivar}}
\\
\liff&
\New{c}\Forall{X{\in}\mathcal X}p[u\ms c]\in (c\ a)\act X
&\text{Fact}
\\[1.5ex]
p\in \bigcap_{X{\in}\mathcal X}(X[a\sm u])
\liff&
\Forall{X{\in}\mathcal X}p\in X[a\sm u]
&\text{Fact}
\\
\liff&
\Forall{X{\in}\mathcal X}\New{c}p[u\ms c]\in (c\ a)\act X
&\text{Proposition~\ref{prop.amgis.iff}}
\end{array}
$$
We note that by Lemma~\ref{lemm.strict.support}(\ref{strict.union}),\ $c\#\mathcal X$ if and only if $c\#X$ for every $X\in\mathcal X$.
This allows us to swap the $\forall$ and the $\new$ quantifiers, and the result follows.

The second and third parts are similar.
Part~4 follows from part~1 as in the proof of Lemma~\ref{lemm.sigma.monotone}.
\end{proof}

Recall the definition of $\powsigma(\ns P)$ from Definition~\ref{defn.powsigma}.
\begin{corr}
\label{corr.powsigma.nompowset.monotone.sigma}
$\powsigma(\ns P)$ ordered under subset inclusion is a nominal poset with a monotone $\sigma$-action.
\end{corr}
\begin{proof}
Subset inclusion partially orders $\powsigma(\ns P)$, which is a $\sigma$-algebra by Proposition~\ref{prop.pow.sub.algebra}.
This action is monotone by Lemma~\ref{lemm.sub.bigcap}(\ref{sub.bigcap.monotone}).
\end{proof}

\begin{lemm}
\label{lemm.pow.nu.closed}
\begin{itemize*}
\item
$\varnothing$ and $|\ns P|$ are in $|\powsigma(\ns P)|$ and these are least and greatest elements in the subset inclusion ordering.
\item
If $X$ and $Y$ are in $|\powsigma(\ns P)|$ then so are $X\cap Y$ and $X\cup Y$ and these are greatest lower bounds and least upper bounds in the subset inclusion ordering.
\end{itemize*}
\end{lemm}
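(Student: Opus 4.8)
The plan is to check directly that each of $\varnothing$, $|\ns P|$, $X\cap Y$, and $X\cup Y$ lies in $|\powsigma(\ns P)|$; once that is done the order-theoretic claims are immediate. Indeed $|\powsigma(\ns P)|$ is ordered by subset inclusion, so $\varnothing\subseteq Z\subseteq|\ns P|$ for every $Z\in|\powsigma(\ns P)|$ makes $\varnothing$ least and $|\ns P|$ greatest, while $\cap$ and $\cup$ are the greatest lower bound and least upper bound for $\subseteq$ on any collection of sets. Hence by Definition~\ref{defn.powsigma} all that needs verifying, in each of the four cases, is that the set is a finitely supported subset of $|\ns P|$ satisfying conditions~\ref{item.fresh.powsigma} and~\ref{item.alpha.powsigma}.

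For $\varnothing$ and $|\ns P|$ this is trivial: both are equivariant, hence finitely supported with empty support, and since the $\amgis$-action and the permutation action both map $|\ns P|$ into itself, for $X=\varnothing$ both sides of the biconditionals in~\ref{item.fresh.powsigma} and~\ref{item.alpha.powsigma} are constantly false, while for $X=|\ns P|$ both sides are constantly true. For $X\cap Y$ and $X\cup Y$, these are subsets of $|\ns P|$, and finite support is by conservation of support (part~\ref{item.conservation.of.support} of Theorem~\ref{thrm.equivar}): $\supp(X\cap Y)$ and $\supp(X\cup Y)$ are contained in $\supp(X)\cup\supp(Y)$.

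It remains to verify~\ref{item.fresh.powsigma} and~\ref{item.alpha.powsigma} for $X\cap Y$ and $X\cup Y$. For~\ref{item.fresh.powsigma}, fix $u\in|\ns P^\prg|$ and, using the some/any property (Theorem~\ref{thrm.New.equiv}), choose a single atom $c$ fresh for $u$, $X$ and $Y$; then~\ref{item.fresh.powsigma} for $X$ gives $p[u\ms c]\in X\liff p\in X$ and likewise for $Y$, and conjoining these biconditionals (for $\cap$), resp.\ disjoining them (for $\cup$), yields $p[u\ms c]\in X\cap Y\liff p\in X\cap Y$, resp.\ the analogous statement for $\cup$. Re-generalising over such $c$ recovers the $\new$-quantified statement, using that a finite conjunction of cofinitely-often-true conditions is again cofinitely-often true. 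Condition~\ref{item.alpha.powsigma} is handled the same way, now choosing a fresh pair $c,d$ and combining the instances for $X$ and for $Y$.

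There is no real obstacle here; the only point requiring a little care is the combination of $\new$-quantifiers across $X$ and $Y$, which is exactly the cofiniteness bookkeeping just indicated, and is the same manoeuvre already used in the proof of Lemma~\ref{lemm.pow.closed}. Everything else is a routine unwinding of the pointwise definitions, and the monotonicity-style remark at the end of the proof of Lemma~\ref{lemm.sigma.monotone} is not even needed.
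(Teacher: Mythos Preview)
Your proof is correct. The paper takes a slightly different presentation: rather than verifying the pointwise conditions~\ref{item.fresh.powsigma} and~\ref{item.alpha.powsigma} of Definition~\ref{defn.powsigma} directly, it uses their set-level reformulation from Lemma~\ref{lemm.X.sub.fresh.alpha} together with the distributivity of the $\sigma$-action over finite intersections from Lemma~\ref{lemm.sub.bigcap}, computing e.g.\ $(X\cap Y)[a\sm u]=(X[a\sm u])\cap(Y[a\sm u])=X\cap Y$ when $a\#X,Y$. Your direct verification at the level of points $p$ and the $\amgis$-action is the same argument one level down, and the $\new$-bookkeeping you describe (intersecting two cofinite sets of atoms) is exactly what underlies the paper's appeal to those lemmas; either route is fine.
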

\begin{proof}
We check the properties listed in Definition~\ref{defn.powsigma} for $X\cap Y$; the case of $X\cup Y$ is similar and the cases of $\varnothing$ and $|\ns P|$ are even easier.
We check that $X\cap Y$ is a greatest lower bound for $\{X,Y\}$ just as for ordinary sets.
$X\cap Y$ has small support by Theorem~\ref{thrm.no.increase.of.supp}.
\begin{enumerate*}
\item
\emph{If $a$ is fresh (so $a\#X,Y,u$) then $(X\cap Y)[a\sm u]=X\cap Y$.}
By Lemmas~\ref{lemm.sub.bigcap}(\ref{sub.bigcap.sfs}) and~\ref{lemm.X.sub.fresh.alpha} $(X\cap Y)[a{\sm}u]=(X[a{\sm}u])\cap(Y[a{\sm}u])=X\cap Y$.
\item
\emph{If $b$ is fresh (so $b\#X,Y$) then $(X\cap Y)[a\sm b]=(b\ a)\act (X\cap Y)$.}
We reason as follows:
$$
\begin{array}[b]{r@{\ }l@{\qquad}l}
(X\cap Y)[a{\sm}b]=&(X[a{\sm}b])\cap(Y[a{\sm}b])
&\text{Lemma~\ref{lemm.sub.bigcap}(\ref{sub.bigcap.sfs})}
\\
=&((b\ a)\act X)\cap((b\ a)\act Y))
&\text{Part~2 of Lemma~\ref{lemm.X.sub.fresh.alpha}}
\\
=&(b\ a)\act(X\cap Y)
&\text{Theorem~\ref{thrm.equivar}}
\end{array}
\qedhere$$
\end{enumerate*}
\end{proof}

\subsection{Sets quantification}
\label{subsect.powsigma.quant}

We now explore quantification.
This is where we part company from Boolean algebras.

Suppose $\ns P=(|\ns P|,\act,\ns P^\prg,\tf{amgis}_{\ns P})$ is an $\amgis$-algebra.
Recall the definitions of $\nompow(\ns P)$ from Subsection~\ref{subsect.finsupp.pow} and of $\powsigma(\ns P)$ from Definition~\ref{defn.powsigma}.

\begin{defn}
\label{defn.nu.U}
If $X\in|\nompow(\ns P)|$ then define\footnote{We do not use the dual $\freshcup{a}X=\bigcup\{X[a\sm u]\mid u{\in}|\ns P^\prg|\}$ but investigating it would be interesting future work.  See also Subsection~\ref{subsect.fine.structure} and Appendix~\ref{subsect.existential}.}
\begin{frameqn}
\freshcap{a}X=\bigcap\{ X[a\sm u]\mid u{\in}|\ns P^\prg|\} .
\end{frameqn}
\end{defn}

We need the key technical Lemma~\ref{lemm.technical} for Proposition~\ref{prop.all.sub.commute}:
\begin{lemm}
\label{lemm.technical}
Suppose $X{\in}\nompow(\ns P)$ and $v{\in}|\ns P^\prg|$ and suppose $a\#v$.
Suppose $p{\in}|\ns P|$.
Then
$$
\begin{array}{l}
\New{b'}\Forall{u{\in}|\ns P^\prg|}\New{a'} p[v\ms b'][u\ms a']\in (b'\,b)\act(a'\,a)\act X
\quad\text{if and only if}
\\
\Forall{u{\in}|\ns P^\prg|}\New{b'}\New{a'} p[u\ms a'][v\ms b']\in (b'\,b)\act(a'\,a)\act X .
\end{array}
$$
\end{lemm}
\begin{proof}
We prove two implications:
\begin{itemize}
\item
\emph{The up-down implication.}
Assume
$$
\New{b'}\Forall{u{\in}|\ns P^\prg|}\New{a'} p[v\ms b'][u\ms a']\in (b'\,b)\act(a'\,a)\act X.
$$
Choose $u{\in}|\ns P^\prg|$.
Choose fresh $b'$ and $a'$ (so $b',a'\#X,v,u$).
Then by assumption (since $b'\#X,v$ and $a'\#X,v,u$) $p[v\ms b'][u\ms a']\in (b'\,b)\act(a'\,a)\act X$
so that by \rulefont{\amgis\sigma} of Figure~\ref{fig.amgis} (since $a'\#v$) $p[u[b'\sm v]\ms a'][v\ms b']\in (b'\,b)\act(a'\,a)\act X$.
Now by \rulefont{\sigma\#} $u[b'\sm v]=u$ (since $b'\#u$).
Therefore $p[u\ms a'][v\ms b']\in (b'\,b)\act(a'\,a)\act X$.
\item
\emph{The down-up implication.}
Assume
$$\Forall{u{\in}|\ns P^\prg|}\New{b'}\New{a'} p[u\ms a'][v\ms b']\in (b'\,b)\act(a'\,a)\act X.
$$
Choose fresh $b'$ (so $b'\#X,v$), choose $u{\in}|\ns P^\prg|$ (for which $b$ need not necessarily be fresh), and choose fresh $a'$ (so $a'\#X,v,u$).
By Lemma~\ref{lemm.fresh.sub} $b'\#u[b'\sm v]$ (since $b'\#v$).
Therefore $p[u[b'\sm v]\ms a'][v\ms b']\in (b'\,b)\act(a'\,a)\act X$ and by \rulefont{\amgis\sigma} of Figure~\ref{fig.amgis} (since $a'\#v$) $p[v\ms b'][u\ms a']\in (b'\,b)\act(a'\,a)\act X$.
\qedhere
\end{itemize}
\end{proof}

We cannot use Lemma~\ref{lemm.sub.bigcap}(\ref{sub.bigcap.sfs}) to derive Proposition~\ref{prop.all.sub.commute} because $\{X[a\sm u]\mid u{\in}|\ns P^\prg|\}$ is not necessarily strictly small-supported.
The result still holds, by a proof using Lemma~\ref{lemm.technical}:
\begin{prop}
\label{prop.all.sub.commute}
Suppose $X{\in}|\nompow(\ns P)|$ and $v{\in}|\ns P^\prg|$ and $a\#v$.
Then
$$
(\freshcap{a}X)[b\sm v]=\freshcap{a}(X[b\sm v]) .
$$
(Note by our permutative convention in Definition~\ref{defn.atoms} that $a$ and $b$ are assumed distinct.)
\end{prop}
\begin{proof}
Consider $p{\in}|\ns P|$.
We reason as follows:
$$
\begin{array}[b]{@{\hspace{-0em}}r@{\ }l@{\quad}l}
p\in (\freshcap{a}&X)[b\sm v]
\\
\liff&
\New{b'}p[v\ms b']\in \freshcap{a}(b'\,b)\act X
&\text{Prop~\ref{prop.amgis.iff},\ Thm~\ref{thrm.equivar}}
\\
\liff&
\New{b'}p[v\ms b']\in \bigcap_{u{\in}|\ns P^\prg|} ((b'\ b)\act X)[a\sm u]
&\text{Definition~\ref{defn.nu.U}}
\\
\liff&
\New{b'}\Forall{u{\in}|\ns P^\prg|}\New{a'} p[v\ms b'][u\ms a']\in (b'\,b)\act(a'\,a)\act X
&\text{Proposition~\ref{prop.amgis.iff}}
\\
\liff&
\Forall{u{\in}|\ns P^\prg|}\New{b'}\New{a'} p[u\ms a'][v\ms b']\in (b'\,b)\act(a'\,a)\act X
&\text{Lemma~\ref{lemm.technical}}
\\
\liff&
\Forall{u{\in}|\ns P^\prg|}\New{a'} p[u\ms a']\in ((a'\,a)\act X)[b\sm v]
&\text{Proposition~\ref{prop.amgis.iff}}
\\
\liff&
\Forall{u{\in}|\ns P^\prg|}\New{a'} p[u\ms a']\in (a'\,a)\act (X[b\sm v])
&\text{T\ref{thrm.equivar},\ C\ref{corr.stuff}},\,a',a\#v
\\
\liff&
\Forall{u{\in}|\ns P^\prg|} p\in X[b\sm v][a\sm u]
&\text{Proposition~\ref{prop.amgis.iff}}
\\
\liff&
p\in \freshcap{a}(X[b\sm v])
&\text{Definition~\ref{defn.nu.U}}
\end{array}
\qedhere$$
\end{proof}

\begin{lemm}
\label{lemm.all.alpha}
Suppose $X\in|\nompow(\ns P)|$.
Then
$$
b\#X
\quad\text{implies}\quad
\freshcap{a}X=\freshcap{b}(b\ a)\act X .
$$
As a corollary,
$a\#\freshcap{a}X$ and
$\supp(\freshcap{a}X)\subseteq\supp(X){\setminus}\{a\}$.
\end{lemm}
\begin{proof}
The corollary follows by Corollary~\ref{corr.stuff}(\ref{stuff.freshness.criterion}) and by Theorem~\ref{thrm.no.increase.of.supp}.
For the first part, we reason as follows:
$$
\begin{array}[b]{r@{\ }l@{\quad}l}
\freshcap{a}X=&\bigcap\{X[a{\sm}u]\mid u{\in}|\ns P^\prg|\}
&\text{Definition~\ref{defn.nu.U}}
\\
=&\bigcap\{((b\ a)\act X)[b{\sm}u]\mid u{\in}|\ns P^\prg|\}
&\text{Lemma~\ref{lemm.sigma.alpha}}
\\
=&\freshcap{b}(b\ a)\act X
&\text{Definition~\ref{defn.nu.U}}
\end{array}
\qedhere$$
\end{proof}

\maketab{tab3}{@{\hspace{-2em}}L{2em}R{10em}@{\ }L{6em}L{20em}}

\begin{thrm}
\label{thrm.all.closed}
If $X\in|\powsigma(\ns P)|$ then
\begin{enumerate*}
\item
$\freshcap{a}X\in|\powsigma(\ns P)|$, and as a corollary
\item
$\freshcap{a}X$ is equal to $\freshwedge{a}X$ (the $a$-fresh limit of $X$) in $\powsigma(\ns P)$ considered as a nominal poset with a monotone $\sigma$-action.
\end{enumerate*}
\end{thrm}
\begin{proof}
The corollary is from Corollary~\ref{corr.powsigma.nompowset.monotone.sigma} and Proposition~\ref{prop.char.freshwedge}(2).
We now prove part~1 of this Theorem.

Small support is from Theorem~\ref{thrm.no.increase.of.supp}.
It remains to check conditions~\ref{item.fresh.powsigma} and~\ref{item.alpha.powsigma} of Definition~\ref{defn.powsigma}:
\begin{enumerate*}
\item
Suppose $b$ is fresh (so $b\#X$) and suppose $v\in|\ns P^\prg|$.
Using Lemma~\ref{lemm.all.alpha} suppose without loss of generality that $a\#v$.
Then we reason as follows:
\begin{tab2rr}
(\freshcap{a}X)[b{\sm}v]=&\freshcap{a}(X[b{\sm}v])
&\text{Proposition~\ref{prop.all.sub.commute}}
\\
=&\freshcap{a}X
&\text{C\ref{item.fresh.powsigma} of Def~\ref{defn.powsigma}},\ b\#X
\end{tab2rr}
\item
Suppose $b'$ is fresh (so $b'\#X$).
Then we reason as follows:
\begin{tab2rr}
(\freshcap{a}X)[b{\sm}b']=&\freshcap{a}(X[b{\sm}b'])
&\text{Proposition~\ref{prop.all.sub.commute}}
\\
=&\freshcap{a}((b'\ b)\act X)
&\text{C\ref{item.alpha.powsigma} of Def~\ref{defn.powsigma}},\ b'\#X
\\
=&(b'\ b)\act (\freshcap{a}X)
&\text{Theorem~\ref{thrm.equivar}}
\qedhere\end{tab2rr}
\end{enumerate*}
\end{proof}

\maketab{tab9}{@{\hspace{-3em}}L{15em}@{\quad}L{25em}}

\begin{thrm}
\label{thrm.powerset}
Suppose $\ns P$ is an $\amgis$-algebra.
Then the $\sigma$-algebra $\powsigma(\ns P)$ from Definition~\ref{defn.powsigma} naturally becomes a nominal distributive lattice with $\tall$ where $\ttop$,
$\tand$, $\tbot$, $\tor$, and $\tall$ are interpreted as $|\ns P|$, set intersection $\cap$, the empty set $\varnothing$, set union $\cup$, and $\freshcap{a}$.
\end{thrm}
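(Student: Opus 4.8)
The plan is to check directly that $\powsigma(\ns P)$, ordered by subset inclusion $\subseteq$ and with $\ttop=|\ns P|$, $\tand=\cap$, $\tbot=\varnothing$, $\tor=\cup$ and $\tall a=\freshcap{a}$, meets all three clauses of Definition~\ref{defn.FOLeq}. Since the preceding lemmas have already done the heavy lifting, the argument will mostly be an assembly job.

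I would start from the observations that $(|\powsigma(\ns P)|,\act,\ns P^\prg,\tf{sub})$ is a $\sigma$-algebra (Proposition~\ref{prop.pow.sub.algebra}) and that $\subseteq$, restricted to $|\powsigma(\ns P)|$, is an equivariant partial order, so $(|\powsigma(\ns P)|,\act,\subseteq)$ is a nominal poset (Definition~\ref{defn.nom.poset}). Next I would invoke Lemma~\ref{lemm.pow.nu.closed} to see that $|\ns P|$ and $\varnothing$ belong to $|\powsigma(\ns P)|$ and are its greatest and least elements, and that $X\cap Y$ and $X\cup Y$ belong to $|\powsigma(\ns P)|$ and are the greatest lower bound and least upper bound of $\{X,Y\}$. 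That $\freshcap{a}X$ lies in $|\powsigma(\ns P)|$ and is exactly the $a$-fresh limit $\freshwedge{a}X$ of Definition~\ref{defn.fresh.finite.limit} is supplied by Proposition~\ref{prop.all.closed} and Corollary~\ref{corr.all.freshwedge}. So $\powsigma(\ns P)$ has fresh-finite limits and finite colimits with the stated interpretations.

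For distributivity (Definition~\ref{defn.distrib}), the identity $X\cup(Y\cap Z)=(X\cup Y)\cap(X\cup Z)$ is a set-theoretic triviality; the work is in the nominal clause $a\#X\limp X\cup\freshcap{a}Y=\freshcap{a}(X\cup Y)$. Here I would use the characterisation $\freshcap{a}W=\bigcap\{W[a\sm u]\mid u\in|\ns P^\prg|\}$ from Proposition~\ref{prop.freshcap.equiv} (applicable since $X\cup Y$ and $Y$ lie in $|\powsigma(\ns P)|$ by Lemma~\ref{lemm.pow.nu.closed}), combine it with $(X\cup Y)[a\sm u]=X[a\sm u]\cup Y[a\sm u]$ (Lemma~\ref{lemm.sub.bigcap}) and with $X[a\sm u]=X$ when $a\#X$ (Lemma~\ref{lemm.X.sub.fresh.alpha}), and finish with the elementary fact that $\cup$ distributes over an arbitrary intersection: $\bigcap_u(X\cup Y[a\sm u])=X\cup\bigcap_u Y[a\sm u]$.

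Finally, for compatibility of the $\sigma$-structure (Definition~\ref{defn.fresh.continuous}): the equations $(X\cap Y)[a\sm u]=X[a\sm u]\cap Y[a\sm u]$ and $(X\cup Y)[a\sm u]=X[a\sm u]\cup Y[a\sm u]$ are instances of Lemma~\ref{lemm.sub.bigcap} for the two-element family $\{X,Y\}$, which is strictly finitely supported; the equation $b\#u\limp(\freshcap{b}X)[a\sm u]=\freshcap{b}(X[a\sm u])$ is Lemma~\ref{lemm.all.sub.commute} restricted to $X\in|\powsigma(\ns P)|$; and monotonicity of the $\sigma$-action is part~4 of Lemma~\ref{lemm.sub.bigcap} (equivalently, Lemma~\ref{lemm.sigma.monotone}). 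Putting these together yields all three clauses of Definition~\ref{defn.FOLeq}. I expect no real obstacle: the only place where more than citation is required is the nominal distributivity law, and even that collapses, via Proposition~\ref{prop.freshcap.equiv}, to the trivial distribution of $\cup$ over intersections — the genuine content of the theorem is already contained in the earlier propositions about $\powsigma$.
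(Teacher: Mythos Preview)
Your proposal is correct and follows the same approach as the paper, only far more thoroughly: the paper's proof is a two-line citation of Lemma~\ref{lemm.pow.nu.closed} (for $\ttop$, $\tbot$, $\tand$, $\tor$) and Corollary~\ref{corr.all.freshwedge} (for $\tall$), leaving distributivity and compatibility entirely implicit. Your explicit verification of the nominal distributivity law via Proposition~\ref{prop.freshcap.equiv} and of compatibility via Lemmas~\ref{lemm.sub.bigcap} and~\ref{lemm.all.sub.commute} fills in exactly the details the paper omits.
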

\begin{proof}
By Lemma~\ref{lemm.pow.nu.closed} for most of the connectives, and by Theorem~\ref{thrm.all.closed}
for $\tall$.
\end{proof}

This completes Part~\ref{part.indias}.
So far, we have defined nominal distributive lattices with $\tall$ and seen how to build them using nominal powersets.
In Part~\ref{part.inspectas} we show how to go from topologies (i.e. subsets of powersets subject to various sanity conditions, since powersets are usually very large) back to nominal distributive lattices with $\tall$.

\jamiepart{Nominal spectral space representation}
\label{part.inspectas}

\section{Completeness}
\label{sect.completeness}

The key definition of this section is that of \emph{filter} in Definition~\ref{defn.filter}. 
The main result is a representation result, Theorem~\ref{thrm.pp.iso}, which represents an impredicative nominal distributive lattice with $\tall$ as a set of sets of prime filters.

The key technical results are in the sequence Proposition~\ref{prop.technical.contradiction}, Proposition~\ref{prop.max.2}, and Theorem~\ref{thrm.maxfilt.zorn}, which use Zorn-style arguments to exhibit every filter as a subset of some prime filter.

This story is familiar: it is standard to represent a lattice using sets of prime filters.
However, making this work is not trivial.
It is interesting to highlight three reasons for this:
\begin{enumerate*}
\item
We must account for $\tall$, of course.

This is extra structure and our treatment uses properties of nominal sets and the $\new$-quantifier in subtle ways.
See the discussion opening Subsection~\ref{subsect.filters}.
\item
Zorn's Lemma is related to the Axiom of Choice, which can cause difficulties with nominal sets because it may lead to non-small support (our definition of a nominal set from Definition~\ref{defn.nominal.set} requires small support).

We will find ourselves using elements which do have a permutation action---we are still within nominal techniques---but the elements do not necessarily have small support.
This is unusual.
See Remark~\ref{rmrk.p.not.finite.support}.
\item
Once these difficulties are navigated, we must still give points (prime filters) an $\amgis$-algebra structure.

There is no reason to expect prime filters to behave well and support an $\amgis$-algebra structure.
`By magic', it just works: see Lemma~\ref{lemm.bus.filter}.
\end{enumerate*}
A final technical hurdle is generated by our intended application of giving semantics to the untyped $\lambda$-calculus.
In effect, this means that we want to consider lattices with a substitution action over themselves, in a suitable sense, whence the notion of \emph{impredicativity} developed in Subsection~\ref{subsect.impredicative.nom.dist.lat}.
As usual for impredicative definitions, care is needed.
Yet, once these definitions and results are in place, the main result Theorem~\ref{thrm.pp.iso} becomes quite natural.

\subsection{Filters and prime filters}
\label{subsect.filters}

For this subsection, fix $\ns D\in\ndia$ a nominal distributive lattice with $\tall$ (Definition~\ref{defn.FOLeq}).
Recall from Definition~\ref{defn.FOLeq} that $\ns D$ is a set with a small-supported permutation action, a $\sigma$-action (like a substitution action but abstractly specified as a nominal algebra), finite joins, fresh-finite meets, and satisfying a generalisation of the usual distributivity properties for lattices.

We start by defining our notion of (prime) filter, and proving that every filter is included in some prime filter.

The main definition is Definition~\ref{defn.filter} and the main result is Theorem~\ref{thrm.maxfilt.zorn}.
The main technical result is Proposition~\ref{prop.technical.contradiction}.

\begin{rmrk}[Discussion of condition~\ref{filter.new}]
\label{rmrk.interpret.new.filter}
Condition~\ref{filter.new} of Definition~\ref{defn.filter} is specific to the \emph{nominal} filters.
See also its verification in Proposition~\ref{prop.technical.contradiction}, in which $\tall$ is decomposed into $\new$ and the permutation action $\pi$---echoing Proposition~\ref{prop.char.freshwedge.names} and Proposition~\ref{prop.these.are.equivalent}.
This decomposition of $\tall$ into $\new$ and $\pi$ is important for two reasons:
\begin{itemize*}
\item
it converts an infinite conjunction over the entire domain into a $\new$-quantified assertion---the $\new$-quantifier has some excellent properties, such as commuting with conjunction \emph{and} disjunction---and
\item
it does not depend on $\ns D^\prg$.
\end{itemize*}
So condition~\ref{filter.new} of Definition~\ref{defn.filter} means that to check the universal quantifier $\tall a.x$ we do not need to know about all of the programs of $\ns D$.
We just need to know about the atoms, and in particular, we just need to know about the fresh atoms.

This is familiar from proof-theory.
To prove $\Gamma\cent\forall x.\phi$ we do not need to check $\phi[a\sm t]$ for every term $t$;
we just check $\phi[x\sm y]$ for fresh $y$.

More on this in Remark~\ref{rmrk.p.not.finite.support}.
\end{rmrk}

\subsubsection{Filters}

\begin{frametxt}
\begin{defn}
\label{defn.filter}
A \deffont{filter} in $\ns D$ is a nonempty subset $p\subseteq|\ns D|$ (which need not have small support) such that:
\begin{enumerate*}
\item
\label{filter.proper}
$\tbot\not\in p$ (we say $p$ is \deffont{consistent}).
\item
\label{filter.up}
If $x\in p$ and $x\leq x'$ then $x'\in p$ (we call $p$ \deffont{up-closed}).
\item
\label{filter.and}
If $x\in p$ and $x'\in p$ then $x\tand x'\in p$.
\item
\label{filter.new}
If $\New{b}(b\ a)\act x\in p$ then $\tall a.x\in p$.
\end{enumerate*}
\end{defn}
\end{frametxt}
The notion of \emph{prime filter} is in Definition~\ref{defn.prime.filter}, and has no further surprises.

Ideals are dual to filters; Definition~\ref{defn.ideal} is standard:
\begin{defn}
\label{defn.ideal}
An \deffont{ideal} in $\ns D$ is a nonempty subset $Z\subseteq|\ns D|$ (which need not have small support) such that:
\begin{enumerate*} 
\item\label{ideal.top}
$\ttop\not\in Z$.
\item\label{ideal.down}
If $x\in Z$ and $x'\leq x$ then $x'\in Z$ (we call $Z$ \deffont{down-closed}).
\item\label{ideal.or}
If $x\in Z$ and $x'\in Z$ then $x\tor x'\in Z$.
\end{enumerate*}
\end{defn}

\begin{rmrk}
Definition~\ref{defn.ideal} is not a perfect dual to Definition~\ref{defn.filter}: we do not have $\tall$.
(Correspondingly, we assume that a universal quantifier exists in $\ns D$, but not an existential.)
This will not be a problem.\footnote{We need $\tand$ and $\tor$ to build filters.  Later on when we model the untyped $\lambda$-calculus in Subsection~\ref{subsect.beta.eta}, we will need $\tall$, $\ppa$, and $\app$.  We will not need an existential $\texi$.
The existential may still exist; see Appendix~\ref{subsect.existential}.
}
Indeed, the lack of a fourth condition in Definition~\ref{defn.ideal} will be convenient in Lemma~\ref{lemm.ascending.chain.ideals}.
\end{rmrk}

\begin{rmrk}
Condition~\ref{filter.new} of Definition~\ref{defn.filter} seems odd
in view of, say, Proposition~\ref{prop.char.freshwedge} or Definition~\ref{defn.nu.U}.
Should it not be
\begin{quote}
$\Forall{u{\in}\ns D^\prg}x[a\sm u]\in p$ implies $\tall a.x\in p$?
\end{quote}
Or, in view of Proposition~\ref{prop.char.freshwedge.names}
should it not be at least
\begin{quote}
$\Forall{n{\in}\mathbb A}x[a\sm n_{\ns D^\prg}]\in p$ implies $\tall a.x\in p$?
\end{quote}
In fact, we shall see in Proposition~\ref{prop.these.are.equivalent} that condition~\ref{filter.new} as written, implies both of these.
\end{rmrk}

\begin{rmrk}
\label{rmrk.p.not.finite.support}
We continue Remark~\ref{rmrk.interpret.new.filter}.
We do not assume that $p$ has small support, so the $b$ bound by the $\new$-quantifier in condition~\ref{filter.new} of Definition~\ref{defn.filter} need not necessarily be fresh for $p$.
So the reader familiar with nominal techniques should note that our use of $\new$ is atypical.
The `standard' decomposition of $\new$ into `$\forall$+freshness' and `$\exists$+freshness' familiar from e.g. Theorem~2.17 of \cite{gabbay:stodfo}, Theorem~6.5 of \cite{gabbay:fountl}, or Theorem~9.4.6 of \cite{gabbay:thesis} will not necessarily work unless $p$ has small support, which in general is not the case.
Nevertheless, we have enough structure to obtain the results we need.

What is the case, is that $x\in|\ns D|$ is assumed to have small support, and $b$ will be fresh for $x$.

It will be important for the proof of Theorem~\ref{thrm.maxfilt.zorn} that we allow $p$ to have non-small support.
\end{rmrk}

\begin{rmrk}
\label{rmrk.why.filter.new}
Condition~\ref{filter.new} of Definition~\ref{defn.filter} seems not only elegant but necessary:
\begin{enumerate*}
\item
The design choice for condition~\ref{filter.new} of Definition~\ref{defn.filter} to assume $\Forall{n{\in}\mathbb A}x[a\sm n_{\ns D^\prg}]\in p$ causes failure in case~\ref{bus.filter.new} of Lemma~\ref{lemm.bus.filter} where (in the notation of that case) $n{=}a$.
\item
The design choice $\Forall{u{\in}\ns D^\prg}x[a\sm u]\in p$ causes failure in the final part of Proposition~\ref{prop.G.funct} (morphisms commute with quantification; see also Remark~\ref{rmrk.the.structure}).
\end{enumerate*}
We will see a dual version of the condition in condition~\ref{sober.all} of Definition~\ref{defn.sober}.
See also Remark~\ref{rmrk.sober.second.condition} and Lemma~\ref{lemm.tall.unions}.
\end{rmrk}

Definition~\ref{defn.uparrow} and Lemma~\ref{lemm.uparrow.filter} give examples of filters and ideals.
The definitions and proofs are standard, except we must verify condition~\ref{filter.new} of Definition~\ref{defn.filter} in Lemma~\ref{lemm.uparrow.filter}:
\begin{defn}
\label{defn.uparrow}
If $x\in|\ns D|$ then define $x{\uparrow}$ and $x{\downarrow}$ by
$$
x{\uparrow} = \{y \mid x\leq y\}
\qquad\text{and}\qquad
x{\downarrow} = \{y \mid y\leq x\}
.
$$
\end{defn}

\begin{lemm}
\label{lemm.uparrow.filter}
\begin{itemize*}
\item
If $x\neq\tbot$ then $x{\uparrow}$ from Definition~\ref{defn.uparrow} is a small-supported filter.
\item
If $x\neq\ttop$ then
$x{\downarrow}$ is a small-supported ideal.
\end{itemize*}
\end{lemm}
\begin{proof}
It is routine to verify conditions~1 to~3 of Definition~\ref{defn.filter}.
Suppose $x\leq (b\ a)\act y$ for cosmall many $b$.
We take one particular $b\#x,y$ and use part~\ref{b.bigger.2} of Lemma~\ref{lemm.b.bigger}.
Small support is direct from Theorem~\ref{thrm.no.increase.of.supp}.
The case of $x{\downarrow}$ is no harder.
\end{proof}

\begin{prop}
\label{prop.these.are.equivalent}
Suppose $p$ is a filter in $\ns D$ and $a$ is an atom.
Then:
\begin{enumerate}
\item\label{these.are.equivalent.1}
The following conditions are equivalent (below, $n$ ranges over all atoms, including $a$):
$$
\begin{array}{r@{\ }l}
\tall a.x\in p
\ \liff\quad&
\Forall{u{\in}|\ns D^\prg|}x[a\sm u]\in p
\\
\quad\liff\quad&
\Forall{n{\in}\mathbb A}x[a\sm n]\in p
\\
\quad\liff\quad&
\New{b}(b\ a)\act x\in p
\end{array}
$$
\item\label{these.are.equivalent.2}
If furthermore $p$ is small-supported and $a\#p$ then the following conditions are equivalent:
$$
\begin{array}{r@{\ }l@{\hspace{6.8em}}}
\tall a.\phi\in p
\ \liff\quad&
\phi\in p
\end{array}
$$
\end{enumerate}
\end{prop}
\begin{proof}
If $\tall a.x\in p$ then by Lemma~\ref{lemm.fresh.glb.sub} and condition~\ref{filter.up} of Definition~\ref{defn.filter} also $x[a\sm u]\in p$ for every $u{\in}|\ns D^\prg|$.
It follows in particular that $x[a\sm n_{\ns D^\prg}]\in p$ for every $n\in\mathbb A$, so $x[a\sm b]\in p$ for all $b\#x$ so by Lemma~\ref{lemm.sub.alpha} also $\New{b}(b\ a)\act x\in p$.
We complete the cycle of implications using condition~\ref{filter.new} of Definition~\ref{defn.filter}.

Part~2 follows from part~1 using Theorem~\ref{thrm.new.equiv}.
\end{proof}

\begin{rmrk}
Recall $\freshcap{a}$ from Definition~\ref{defn.nu.U}.
If we are willing to borrow the notation $\pp x=\{p\mid x{\in}p\}$ from the future Definition~\ref{defn.pp}, then we can rewrite Proposition~\ref{prop.these.are.equivalent}(\ref{these.are.equivalent.1}) as follows:
$$
\pp{(\tall a.x)} = \freshcap{a}\pp x = \bigcap_{n{\in}\mathbb A}\pp x[a\sm n]
= \{p{\in}\pp x\mid \New{b}(b\ a)\act p\in \pp x\} .
$$
\end{rmrk}

\subsubsection{Growing filters}

We consider one useful way to build new (larger) filters out of old filters:
\begin{defn}
\label{defn.tand.bar}
Suppose $p\subseteq|\ns D|$ and $y\in|\ns D|$.
Then define $p{+}y\subseteq|\ns D|$ by
\begin{frameqn}
p{+}y=\{x'\mid \Exists{x{\in}p}x\tand y\leq x'\} .
\end{frameqn}
\end{defn}

\begin{prop}
\label{prop.technical.contradiction}
Suppose $p$ is a small-supported filter in $\ns D$ and suppose $y\in|\ns D|$.
Then:
\begin{itemize*}
\item
$p\subseteq p{+} y$.
\item
$y\in p{+} y$.
\item
$p{+}y$ is closed under conditions~\ref{filter.up} to~\ref{filter.new} of Definition~\ref{defn.filter} (so if $p{+}y$ is consistent then it is a filter).
\end{itemize*}
As a corollary, if $Z$ is an ideal and $(p{+}y)\cap Z=\varnothing$ then $p{+}y$ is a filter.
\end{prop}
\begin{proof}
The corollary follows from the body of this result because since from condition~\ref{ideal.down} of Definition~\ref{defn.ideal} $\tbot\in Z$, so $\tbot\not\in p{+}y$.

We now consider the body of this result.
It is clear from the construction that $p\subseteq p{+}y$ and $y\in p{+}y$.
We now check that $p{+}y$ satisfies conditions~\ref{filter.up} to~\ref{filter.new} of Definition~\ref{defn.filter}:
\begin{enumerate*}
\setcounter{enumi}{1}
\item
\emph{If $z\in p{+}y$ and $z\leq z'$ then $z'\in p{+}y$.}\quad
By construction.
\item
\emph{If $z\in p{+}y$ and $z'\in p{+}y$ then $z{\tand}z'\in p{+}y$.}\quad
Suppose $z\geq x\tand y$ and $z'\geq x'\tand y$ for $x,x'\in p$.
Then by condition~\ref{filter.and} of Definition~\ref{defn.filter} $x\tand x'\in p$, and it is a fact that $z{\tand}z'\geq (x{\tand}x')\tand y$.
\item
\emph{If $\New{b}((b\ a)\act z\in p{+}y)$ then $\tall a.z\in p{+}y$.}\quad
Suppose for cosmall many $b$ there exists an $x_b\in p$ such that $(b\ a)\act z\geq x_b{\tand}y$.
Then certainly there exists some $b$ such that $b\#y,z,p$ and $(b\ a)\act z\geq x_b{\tand}y$.
We apply $\tall b$ to both sides and use distributivity (Definition~\ref{defn.distrib}), and we conclude that
$$
\tall b.(b\ a)\act z\geq (\tall b.x_b){\tand}y .
$$
We assumed $p$ has small support so by Proposition~\ref{prop.these.are.equivalent}(\ref{these.are.equivalent.2}) (since $x_b{\in}p$ and $b\#p$) $\tall b.x_b\in p$.
By Lemma~\ref{lemm.freshwedge.alpha} and condition~\ref{filter.up} of Definition~\ref{defn.filter} we conclude that $\tall a.z\in p{+}y$ as required.
\qedhere\end{enumerate*}
\end{proof}

\subsubsection{Growing ideals}

We consider one useful way to build new (larger) ideals out of old ideals:
\begin{defn}
\label{defn.tor.bar}
Suppose $Z,Y\subseteq|\ns D|$.
Then define $Z{+}Y\subseteq|\ns D|$ by
\begin{frameqn}
Z{+}Y=\{z'\mid \Exists{z{\in}Z,n{\in}\mathbb N,y_1,\dots,y_n{\in}Y}z'\leq z\tor y_1\tor\dots\tor y_n\} .
\end{frameqn}
\end{defn}

Lemma~\ref{lemm.ideal.plus} is a version of Proposition~\ref{prop.technical.contradiction} for ideals.
It is the simpler result, because Definition~\ref{defn.ideal} has nothing corresponding to condition~\ref{filter.up} of Definition~\ref{defn.filter}:
\begin{lemm}
\label{lemm.ideal.plus}
Suppose $Z\subseteq|\ns D|$ is an ideal and $Y\subseteq|\ns D|$.
Then:
\begin{itemize*}
\item
$Z\subseteq Z{+}Y$ and $Y\subseteq Z{+}Y$.
\item
$Z{+}Y$ is closed under conditions~\ref{ideal.down} and~\ref{ideal.or} of Definition~\ref{defn.ideal} (so that if $\ttop\not\in Z{+}Y$ then it is an ideal).
\end{itemize*}
\end{lemm}
\begin{proof}
By routine calculations.
\end{proof}

\subsubsection{Maximal and prime filters}

The results of this subsection are Lemma~\ref{lemm.can.extend.Z} and Proposition~\ref{prop.max.2}.
From a great distance they follow a familiar pattern:
\begin{itemize*}
\item
Lemma~\ref{lemm.can.extend.Z} expresses ``if a filter is not maximal, then we can extend it''.
\item
Proposition~\ref{prop.max.2} expresses ``if a filter is maximal, then it is prime''.
\end{itemize*}
But in Lemma~\ref{lemm.can.extend.Z} we work not with filters but with filter-ideal pairs, and we express something quite subtle:
\begin{quote}
``If a filter-ideal pair is not maximal but is small-supported, then we can extend \emph{either} the filter with a universal quantification $\tall a.y$, \emph{or} we can extend the ideal with an equivalence class---really an $\alpha$-equivalence class---of $(b\ a)\act y$ for fresh atoms $b$.''
\end{quote}
To see why this is so, see Remark~\ref{rmrk.hand.rolled.zorn} and Theorem~\ref{thrm.maxfilt.zorn}.


\begin{frametxt}
\begin{defn}
\label{defn.prime.filter}
\begin{itemize}
\item
Call a filter $p{\subseteq}|\ns D|$ \deffont{prime} when $x_1{\tor}x_2\in p$ implies either $x_1\in p$ or $x_2\in p$.
\item
Suppose $p$ is a filter and $Z\subseteq|\ns D|$ is an ideal.
Call $p$ \deffont{maximal with respect to $Z$} when $p{\cap}Z=\varnothing$ and for every filter $p'$ with $p'\cap Z=\varnothing$, if $p\subseteq p'$ then $p=p'$.
\item
Call $p$ \deffont{maximal} when it is maximal with respect to the ideal $\{\tbot\}$.
\end{itemize}
\end{defn}
\end{frametxt}

\begin{lemm}
\label{lemm.tall.cent.char}
Suppose $p$ is a filter in $\ns D$ (Definition~\ref{defn.filter}).
Then:
\begin{enumerate*}
\item
$\tbot\not\in p$ and $\ttop\in p$.
\item
$x{\tand}y\in p$ if and only if $x\in p$ and $y\in p$.
\item
$\tall a.x\in p$ if and only if $\New{b}(b\ a)\act x\in p$.\footnote{We assume that $x\in|\ns D|$ is small-supported because in Definition~\ref{defn.nom.poset} we assume $(|\ns D|,\act)$ is a nominal set.
We do not assume that a filter $p$ is small-supported in Definition~\ref{defn.filter}.

This means that we may assume that $b$ is fresh for $x$ under the $\new b$ quantifier, but we do not know that $b$ is fresh for $p$.
This will not be a problem.
}
\item
If $p$ is prime (Definition~\ref{defn.prime.filter}) then
$x{\tor}y\in p$ if and only if $x\in p$ or $y\in p$.
\end{enumerate*}
\end{lemm}
\begin{proof}
\begin{enumerate*}
\item
The first part is condition~1 of Definition~\ref{defn.filter}.
For the second part, by assumption in Definition~\ref{defn.filter} $p$ is nonempty, so there exists some $x\in p$.
Now $\ns D$ has a top element $\ttop$, so $x\leq \ttop$ and by condition~\ref{filter.up} of Definition~\ref{defn.filter} $\ttop\in p$.
\item
From conditions~\ref{filter.up} and~\ref{filter.and} of Definition~\ref{defn.filter}.
\item
This is Proposition~\ref{prop.these.are.equivalent}(\ref{these.are.equivalent.1}).
\item
Routine, again using condition~\ref{filter.up} of Definition~\ref{defn.filter}.
\qedhere\end{enumerate*}
\end{proof}

We use Lemma~\ref{lemm.can.extend.Z} and Proposition~\ref{prop.max.2} to prove Theorem~\ref{thrm.maxfilt.zorn}.
We are most interested in Lemma~\ref{lemm.can.extend.Z} for the case that $Z$ is small-supported (as well as $p$), but the proof does not depend on it.
\begin{lemm}
\label{lemm.can.extend.Z}
Suppose
\begin{itemize*}
\item
$p{\subseteq}|\ns D|$ is a small-supported filter and $Z{\subseteq}|\ns D|$ is an ideal, and suppose
\item
$\tbot\in p{+}\tall a.y$ and $p{\cap}Z=\varnothing$.
\end{itemize*}
Write $Y=\{(b\ a)\act y\mid b\in \mathbb A\setminus(\supp(p){\cup}\supp(y)\cup\{a\})\}$.
Then:
\begin{enumerate*}
\item
$p\cap (Z{+}Y)=\varnothing$.
\item
As a corollary, $Z{+}Y$ is an ideal (and by part~1 is disjoint from $p$).
\end{enumerate*}
\end{lemm}
\begin{proof}
Suppose $\tbot\in p{+}\tall a.y$ and $p{\cap}Z=\varnothing$ and $p\cap (Z{+}Y)\neq \varnothing$.
We note the following:
\begin{itemize*}
\item
Since $p{\cap}(Z{+}Y)\neq\varnothing$, there exist $b_1,\dots,b_n\#p,y$ and $z{\in}Z$
with
$$
z\tor(b_1\ a)\act y\tor\dots\tor(b_n\ a)\act y\in p.
$$
\item
Since $\tbot\in p{+}\tall a.y$ there exists $x{\in}p$ with $x\tand\tall a.y=\tbot$.

We assumed $p$ has small support so by Proposition~\ref{prop.these.are.equivalent}(\ref{these.are.equivalent.2}) $\tall b_1\dots\tall b_n.x\in p$, and we see that we may assume without loss of generality that $b_1,\dots,b_n\#x$.
\item
Since $p{\cap}Z=\varnothing$ we have $\Forall{x'{\in}p,z'{\in}Z}x'\tand z'=\tbot$.
\end{itemize*}
By condition~\ref{filter.and} of Definition~\ref{defn.filter}
$$
x\tand(z\tor(b_1\ a)\act y\tor\dots\tor(b_n\ a)\act y)\in p
$$
and therefore by distributivity (Definition~\ref{defn.distrib})
\begin{multline*}
(x\tand z)\tor(x\tand (b_1\ a)\act y)\tor\dots\tor(x\tand (b_n\ a)\act y)
\stackrel{x\tand z=\tbot}=
\\
(x\tand (b_1\ a)\act y)\tor\dots\tor(x\tand (b_n\ a)\act y)
\in p .
\end{multline*}
By assumption $b_1,\dots,b_n\#p$ so by Proposition~\ref{prop.these.are.equivalent}(\ref{these.are.equivalent.2}) we deduce that
$$
\tall b_1\dots b_n.\bigl((x\tand (b_1\ a)\act y)\tor\dots\tor(x\tand (b_n\ a)\act y)\bigr)
\in p .
$$
Recall that by assumption $b_1,\dots,b_n\#x,y$; by calculations using Proposition~\ref{prop.pi.supp}, \rulefont{distrib\tall} from Definition~\ref{defn.distrib}, and Lemma~\ref{lemm.freshwedge.alpha} we conclude that
$$
x\tand \tall a.y=\tbot\in p .
$$
This contradicts condition~\ref{filter.proper} of Definition~\ref{defn.filter}.

For the corollary, we note by Lemma~\ref{lemm.tall.cent.char}(1) that $\ttop\in p$ so that $\ttop\not\in Z{+}Y$.
We use Lemma~\ref{lemm.ideal.plus}.
\end{proof}

\begin{prop}
\label{prop.max.2}
Suppose $p\subseteq|\ns D|$ is a filter and $Z\subseteq|\ns D|$ is an ideal, and suppose $p\cap Z=\varnothing$.
If $p$ is a maximal filter with respect to $Z$ then it is prime.
\end{prop}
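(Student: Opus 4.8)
The plan is to run the standard argument that a filter maximal with respect to an ideal is prime, but being careful with the nominal condition~\ref{filter.new} of Definition~\ref{defn.filter}. Suppose $p$ is maximal with respect to $Z$, and suppose for contradiction that $x_1\tor x_2\in p$ but $x_1\notin p$ and $x_2\notin p$. Using the operation $p{+}y$ from Definition~\ref{defn.tand.bar}, consider $p{+}x_1$ and $p{+}x_2$. Each contains $p$ properly (it also contains $x_i$, by Lemma~\ref{lemm.technical.contradiction}), and each is closed under conditions~\ref{filter.up}--\ref{filter.new} of Definition~\ref{defn.filter} by the same lemma. Since $p$ is maximal with respect to $Z$, neither $p{+}x_1$ nor $p{+}x_2$ can be a filter disjoint from $Z$; since they properly extend $p$, they cannot be disjoint from $Z$ (if they were disjoint from $Z$ and consistent they would be filters strictly containing $p$, contradicting maximality — here I use that $\tbot\in Z$ since $Z$ is an ideal containing, at worst, exactly the right elements; more precisely $p{+}x_i$ consistent plus disjoint from $Z$ would make it a filter by Lemma~\ref{lemm.technical.contradiction}). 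Hence there exist $z_1\in (p{+}x_1)\cap Z$ and $z_2\in (p{+}x_2)\cap Z$.

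Now unfold the definition of $p{+}x_i$: there are $w_1,w_2\in p$ with $w_1\tand x_1\leq z_1$ and $w_2\tand x_2\leq z_2$. Since $Z$ is down-closed, $w_1\tand x_1\in Z$ and $w_2\tand x_2\in Z$, and since $Z$ is closed under $\tor$, $(w_1\tand x_1)\tor(w_2\tand x_2)\in Z$. Set $w=w_1\tand w_2\in p$ (using condition~\ref{filter.and} of Definition~\ref{defn.filter}). Then $w\tand(x_1\tor x_2) = (w\tand x_1)\tor(w\tand x_2) \leq (w_1\tand x_1)\tor(w_2\tand x_2)$ by distributivity (Definition~\ref{defn.distrib}, part~1) and monotonicity of $\tor$ and $\tand$; so $w\tand(x_1\tor x_2)\in Z$ by down-closure. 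But $w\in p$ and $x_1\tor x_2\in p$, so $w\tand(x_1\tor x_2)\in p$ by condition~\ref{filter.and}, contradicting $p\cap Z=\varnothing$. This gives the result.

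I expect the main obstacle to be the bookkeeping around ``$p{+}x_i$ is not disjoint from $Z$''. The clean way is: $p{+}x_i$ satisfies conditions~\ref{filter.up}--\ref{filter.new} of Definition~\ref{defn.filter} and contains $p\cup\{x_i\}$ by Lemma~\ref{lemm.technical.contradiction}; it is nonempty; so the only way it fails to be a filter strictly extending $p$ (which maximality of $p$ with respect to $Z$, together with $p\subsetneq p{+}x_i$, forbids once it is disjoint from $Z$) is for it to meet $Z$ — note condition~\ref{filter.proper} (consistency, $\tbot\notin p{+}x_i$) is automatic once $p{+}x_i$ is disjoint from $Z$ because $\tbot\in Z$... actually $\tbot\in Z$ need not hold for a general ideal, but $Z$ being an ideal with $p\cap Z=\varnothing$ and $p$ a filter suffices: if $p{+}x_i$ were disjoint from $Z$ it would in particular be consistent provided $\tbot\in Z$; to sidestep this entirely, observe that $\{\tbot\}\cup Z$ is still an ideal disjoint from $p$ and $p{+}x_i$ meeting $\{\tbot\}\cup Z$ but disjoint from $Z$ would force $\tbot\in p{+}x_i$, whence $\tbot=\tbot\tand\ttop\geq$ something, i.e.\ $\tbot\in p$ — contradiction. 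The cleanest presentation just takes $Z':=\{z\mid z\leq z'\text{ for some }z'\in Z\}\cup\{\tbot\}$, which equals $Z\cup\{\tbot\}$ since $Z$ is already down-closed, notes $\tbot\notin p$, and argues with $Z'$. Either way the nominal condition~\ref{filter.new} plays no active role beyond being preserved by $p{+}x_i$, which Lemma~\ref{lemm.technical.contradiction} already handles, so the proof is morally the classical one.
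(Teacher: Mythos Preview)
Your proof is correct and takes essentially the same approach as the paper: both invoke Lemma~\ref{lemm.technical.contradiction} and maximality to obtain witnesses $w_i\tand x_i\in Z$ with $w_i\in p$, then reach a contradiction via distributivity. The only cosmetic difference is the final algebraic step --- you form $w=w_1\tand w_2$ and show $w\tand(x_1\tor x_2)\in p\cap Z$, whereas the paper instead expands $(w_1\tand x_1)\tor(w_2\tand x_2)$ as the fourfold conjunction $(w_1\tor w_2)\tand(w_1\tor x_2)\tand(x_1\tor w_2)\tand(x_1\tor x_2)$ and checks each conjunct lies in $p$; your extended aside about whether $\tbot\in Z$ is something the paper simply elides (and your fix of replacing $Z$ by $Z\cup\{\tbot\}$ is fine).
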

\begin{proof}
Suppose $y_1\tor y_2\in p$ and $y_1,y_2\not\in p$.
By Proposition~\ref{prop.technical.contradiction} and maximality we have that $(p{+}y_1)\cap Z\neq\varnothing$ and $(p{+}y_2)\cap Z\neq\varnothing$.
It follows that there exist $x_1,x_2\in p$ with $x_1{\tand}y_1,x_2{\tand}y_2\in Z$.
Since $Z$ is an ideal, by condition~\ref{ideal.or} of Definition~\ref{defn.ideal}
$$
(x_1{\tand}y_1)\tor(x_2{\tand}y_2)\in Z .
$$
Now we rearrange the left-hand-side to deduce that
\begin{equation}\label{eq.mess}
u=(x_1{\tor}x_2)\tand (x_1{\tor}y_2) \tand (y_1{\tor}x_2) \tand (y_1{\tor}y_2) \in Z .
\end{equation}
We now note that $x_1{\tor}x_2\in p$ (since $x_1\in p$, and indeed also $x_2\in p$) and $x_1{\tor}y_2\in p$ (since $x_1\in p$) and $y_1{\tor}x_2\in p$ (since $x_2\in p$) and $y_1{\tor}y_2\in p$ by assumption.
But then $u\in p$, contradicting our assumption that $p\cap Z=\varnothing$.
\end{proof}

\begin{rmrk}
Experts who have read \cite{gabbay:stodnb} might wish to compare Proposition~\ref{prop.max.2} with Lemma~6.13 of \cite{gabbay:stodnb}.
We cannot limit the support of $p$ by wrapping $y_1$ and $y_2$ in universal quantifiers---like we did for the duality result for the $\new$-quantifier in Lemma~6.13 and indirectly in Definition~6.5 of \cite{gabbay:stodnb}---because $\forall$ does not distribute over $\lor$, whereas $\new$ does.

This property of $\new$ is one precise technical reason that the proofs of \cite{gabbay:stodnb} can be simpler than the proofs here.
\end{rmrk}

\subsubsection{The Zorn argument}

We start off with an easy technical result:
\begin{lemm}
\label{lemm.ascending.chain.ideals}
If $(Z_\alpha)_{\alpha\leq\beta}$ is an ascending chain of ideals (for some ordinal $\beta$) then $\bigcup_{\alpha\leq\beta} Z_\alpha$ is an ideal.
\end{lemm}
\begin{proof}
By standard calculations on the three conditions of Definition~\ref{defn.ideal}.
\end{proof}

\begin{rmrk}
\label{rmrk.hand.rolled.zorn}
The reader might now expect us to prove a version of Lemma~\ref{lemm.ascending.chain.ideals} for filters, and using Lemma~\ref{lemm.can.extend.Z} and Proposition~\ref{prop.max.2} along with Zorn's Lemma \cite[page~153]{enderton:elest} deduce the existence of prime filters.

This does not work because condition~\ref{filter.new} of Definition~\ref{defn.filter} is not closed under ascending chains of filters.\footnote{Many thanks to an anonymous referee for pointing this out.}
We might `accidentally' insert elements $(b\ a)\act y$ for cosmall many $b$
yet `forget' to insert $\tall a.y$ (or worse, $\tall a.y$ might be in $Z$).
Thus, we cannot use Zorn's lemma directly because the sets union of a countably ascending chain of filters is not necessarily a filter.

However, we do not really need closure under all chains; we only need the existence of some chain.
So we build a `bespoke' chain---unusually, this is not just a chain of filters but a chain of filter-ideal pairs $(p,Z)$---in such a way as to preserve condition~\ref{filter.new} of Definition~\ref{defn.filter}.
Intuitively, Lemma~\ref{lemm.can.extend.Z} has the form that it does because we grow the pair $(p,Z)$ at each stage such that
\begin{itemize*}
\item
\emph{either} we put $\tall a.y$ into $p$,
\item
\emph{or} we make sure that $\New{b}(b\ a)\act y\in p$ can never hold---not even after infinitely many stages---by putting $(b\ a)\act y$ into the ideal $Z$ immediately for most $b$; recall that $p$ must remain disjoint from $Z$.
\end{itemize*}
\end{rmrk}

\begin{thrm}
\label{thrm.maxfilt.zorn}
Suppose $p\subseteq|\ns D|$ is a small-supported filter and $Z\subseteq|\ns D|$ is an 
ideal and suppose $p\cap Z=\varnothing$.
Then there exists a prime filter $q$ with $p\subseteq q$ and $q\cap Z=\varnothing$.

As a corollary, if $p$ is a filter then there exists a prime filter $q$ containing $p$.
\end{thrm}
\begin{proof}
The corollary follows taking $Z$ to be $\{\tbot\}$, which we can easily verify is an ideal.
We now consider the main result.

Let $\alpha$ be the least ordinal of cardinality $\f{size}(\mathbb A)$ from Definition~\ref{defn.atoms}.
Note that $\mathbb A$ is assumed at least countably large, so $\alpha\times\alpha$ has the same cardinality as $\alpha$.
Note also that $\alpha$ is necessarily a limit ordinal, because it is the least ordinal with cardinality $\f{size}(\mathbb A)$.\footnote{There are two natural foundational definitions of `cardinal':
(1) A cardinal is an ordinal that cannot be bijected with any lesser ordinal; or
(2) A cardinal is an equivalence class of bijectable sets.
It is routine to biject an infinite ordinal $\alpha'$ with $\alpha'{+}1$, so in either (1) or (2) we see that the least ordinal with cardinality $\f{size}(\mathbb A)$ must be a limit.  Furthermore, in case (1) $\f{size}(\mathbb A)$ \emph{is} that limit; in case (2) $\alpha\in\f{size}(\mathbb A)$ is the least ordinal in $\f{size}(\mathbb A)$.}

We enumerate $\mathbb A\times\ns D$ as a list of pairs
\begin{equation}
\label{eq.succ}
(a_{i{+}1},x_{i{+}1})_{i{<}\alpha}.
\end{equation}
We write $(a_{i{+}1},x_{i{+}1})_{i{<}\alpha}$ instead of $(a_i,x_i)_{i{<}\alpha}$ to ensure that every pair $(a,x)$ is indexed by a successor ordinal and not a limit ordinal (so for instance we have $(a_1,x_1)$ but not $(a_0,x_0)$).
This is just for convenience in the base and limit cases of the inductive construction which now follows.

Recall the notations `$p{+}y$' from Definition~\ref{defn.tand.bar} and `$Z{+}Y$' from Definition~\ref{defn.tor.bar}.
We define a sequence of (by Theorem~\ref{thrm.no.increase.of.supp}) small-supported disjoint filter-ideal pairs $(p_i,Z_i)_{i{\leq}\alpha}$ inductively as follows:
\begin{enumerate}
\item
\emph{Base case.}\quad
We take $(p_0,Z_0)=(p,Z)$.
By assumption $p\cap Z=\varnothing$.
\item
\emph{Successor ordinal (first case).}\quad
If $p_{i}{+}\tall a_{i{+}1}.x_{i{+}1}\cap Z_{i}=\varnothing$
then take
$$
(p_{i{+}1},Z_{i{+}1})=(p_{i}{+}\tall a_{i{+}1}.x_{i{+}1},Z_{i}).
$$
It follows from Proposition~\ref{prop.technical.contradiction} (since $p_i$ is small-supported) that $p_{i{+}1}$ is a filter and from Theorem~\ref{thrm.no.increase.of.supp} that $p_{i{+}1}$ is small-supported.
\item
\emph{Successor ordinal (second case).}\quad
If $p_{i}{+}\tall a_{i{+}1}.x_{i{+}1}\cap Z_{i}\neq\varnothing$
then take
$$
\begin{array}{@{\hspace{-1em}}r@{\ }l}
Y=&\{(b\ a_{i{+}1})\act x_{i{+}1}\mid b\in\mathbb A{\setminus}(\supp(p_{i}){\cup}\supp(x_{i{+}1}){\cup}\{a_{i{+}1}\})\}
\quad\text{and}
\\
(p_{i{+}1},Z_{i{+}1})=&(p_{i},Z_{i}{+}Y).
\end{array}
$$
It follows from Lemma~\ref{lemm.can.extend.Z} that $Z_{i{+}1}$ is an ideal and $p_{i{+}1}\cap Z_{i{+}1}=\varnothing$.
\item
\emph{Nonzero limit ordinal.}\quad
If $i$ is a nonzero limit ordinal no greater than $\alpha$ then we define
$$
(p_i,Z_i)= (\bigcup_{j{<}i}p_j,\bigcup_{j{<}i}Z_j).
$$
By construction $p_i{\cap}Z_i=\varnothing$.
By Lemma~\ref{lemm.ascending.chain.ideals} $Z_i$ is an ideal.

We note that $p_i$ is a filter:
Conditions~\ref{filter.proper} to~\ref{filter.and} of Definition~\ref{defn.filter} are routine.
To check condition~\ref{filter.new}, suppose $\New{b}(b\ a)\act x\in p_i$ meaning that $(b\ a)\act x\in p_i$ for cosmall many $b$; we need to prove $\tall a.x\in p_i$.

So let $j{<}i$ be that index, which by construction in equation~\ref{eq.succ} above must be a successor ordinal, such that $(a,x)=(a_j,x_j)$.
At stage $j$, when we built $(p_j,Z_j)$, there were two possibilities:
\begin{itemize*}
\item
If $p_{j\minus 1}{+}\tall a.x\cap Z=\varnothing$ then we must have put $\tall a.x$ into $p_j$, so that $\tall a.x\in\bigcup_{j{<}i} p_j$ and we are done.
\item
If $p_{j\minus 1}{+}\tall a.x\cap Z\neq\varnothing$ then we must have put $(b\ a)\act x$ into $Z_j$ for cosmall many $b$, so that $\New{b}(b\ a)\act x\in\bigcup_{j{<}i} Z_j$.
But this is impossible because we assumed that cosmall many $(b\ a)\act x$ were in $p_i$, which is disjoint from $Z_i$.
\end{itemize*}
If $i\lneq\alpha$ then it follows using Theorem~\ref{thrm.no.increase.of.supp}(\ref{item.conservation.of.support}) that $p_i$ is small-supported.\footnote{If $i=\alpha$ then $p_\alpha$ need not have small support, but this will not be a problem because if $i=\alpha$ then we are finished with the induction.}
\end{enumerate}
We note that $p_\alpha$ is a maximal filter disjoint from the ideal $Z_\alpha$:
For consider any $x$ and choose some fresh $a$ (so $a\#x$).
Let $j$ be that index in the enumeration above such that $(a,x)=(a_j,x_j)$.
Note by Definition~\ref{defn.fresh.finite.limit} that $\tall a.x=x$, since $a\#x$.
It follows from the structure of the algorithm above that precisely one of $\tall a.x=x\in p_j$ or $x\in Z_j$ will hold.
Maximality follows.

By Proposition~\ref{prop.max.2} $p_\alpha$ is prime, and by construction $p\subseteq p_\alpha$ and $p_\alpha\cap Z=\varnothing$.
\end{proof}

\begin{rmrk}
It might be helpful to trace how Theorem~\ref{thrm.maxfilt.zorn} will be used in the rest of this paper:
\begin{itemize*}
\item
we use it in Lemma~\ref{lemm.completeness} to prove Corollary~\ref{corr.pp.injective} (injectivity of $x\mapsto \pp x$);
\item
we use it in Lemma~\ref{lemm.p.q.inj};
\item
we use it in Propositions~\ref{prop.extra.filter} and~\ref{prop.even.stronger}; and 
\item
we use it in Lemmas~\ref{lemm.hard.1} and~\ref{lemm.hard.2}, which are needed for Theorem~\ref{thrm.pp.app} (compositionality of $x\mapsto\pp x$ with respect to $\app$ and $\ppa$).
\end{itemize*}
\end{rmrk}

\begin{rmrk}
Theorem~\ref{thrm.maxfilt.zorn} assumes a filter $p$ and an ideal $Z$.
This is more asymmetric than it might seem:
\begin{enumerate}
\item
The notion of filter in Definitions~\ref{defn.filter} has a condition for $\tall$ (condition~\ref{filter.new}) whereas the notion of ideal in Definition~\ref{defn.ideal} does not.
This is important; it gives us a closure property Lemma~\ref{lemm.ascending.chain.ideals} for ideals, which we use to prove Theorem~\ref{thrm.maxfilt.zorn}.

As noted in Remark~\ref{rmrk.hand.rolled.zorn} we cannot expect a corresponding closure property for filters; perhaps we should call the ideals we need for this paper \emph{proposition-like}, because ideals only interact with propositional structure, and the filters \emph{predicate-like}, because filters must also interact well with the predicate quantifier $\tall$.
\item
We assume $p$ is small-supported and we do not assume $Z$ is small-supported.
We use small support of $p$ in the proof of Theorem~\ref{thrm.maxfilt.zorn} and, later, we will require the full generality of allowing non-small-supported $Z$ to prove Lemmas~\ref{lemm.hard.1} and~\ref{lemm.hard.2}, which handle applicative structure $\app$.
See Remark~\ref{rmrk.r.not.ssp}.
\end{enumerate}
Asymmetry~2 is related to asymmetry~1: we require small support for $p$ because this helps handle its interaction with $\tall$; ideals are `proposition-like', so for ideals small support does not matter.
\end{rmrk}

\begin{rmrk}
\label{rmrk.size.set.up}
Theorem~\ref{thrm.maxfilt.zorn} is set up to work for $\mathbb A$ and $\ns D$ of equal cardinality---as per condition~\ref{size.limit} of Definition~\ref{defn.D.impredicative} (see also condition~\ref{size.limit.top} of Definition~\ref{defn.impredicative.top}).
We need the cardinalities to be equal so that we do not run out of fresh atoms in the construction of the chain of small-supported filters in the proof of Theorem~\ref{thrm.maxfilt.zorn}.

Note that the canonical model (see Example~\ref{xmpl.canonical.idiom} and subsequent proofs) is countable by construction and the soundness proof in Theorem~\ref{thrm.lambda.soundness} holds for any size $\ns D$ regardless of the size of $\mathbb A$.
So the reader interested only in duality for countable models,\footnote{This is a reasonable special case for this particular paper, since one could argue that $\lambda$-calculus models should be countable since they are supposed to represent computation. However, the fully general version of Theorem~\ref{thrm.maxfilt.zorn} is no harder to write out---the mathematics does not really care.}
 or only interested in soundness and completeness but not in duality, can ignore the generality in Theorem~\ref{thrm.maxfilt.zorn} and take atoms and $\ns D$ to be countable.

If we want a full duality result also for \emph{uncountable} $\ns D$, then we require the potent generality of Theorem~\ref{thrm.maxfilt.zorn} and the size conditions in Definitions~\ref{defn.D.impredicative} and~\ref{defn.impredicative.top}.
More on this in Subsection~\ref{subsect.cardinality}.
\end{rmrk}

\subsection{The amgis-action on (prime) filters}

For this subsection, fix $\ns D\in\ndia$ a nominal distributive lattice with $\tall$ (Definition~\ref{defn.FOLeq}).

Recall from Definition~\ref{defn.p.action} the pointwise $\amgis$-action $p[u\ms a]=\{x \mid x[a\sm u]\in p\}$ where $u\in|\ns D^\prg|$.
In this subsection we check that this action preserves the property of being a (prime) filter (Definitions~\ref{defn.filter} and~\ref{defn.prime.filter}).

The work happens in the key technical result Lemma~\ref{lemm.bus.filter}; Proposition~\ref{prop.filters.amgis} then puts the result in a some nice packaging.

\begin{lemm}
\label{lemm.bus.filter}
If $p$ is a filter in $\ns D$ then so is $p[u\ms a]$.
Furthermore, if $p$ is prime then so is $p[u\ms a]$.
\end{lemm}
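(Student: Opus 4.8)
The plan is to check directly that $p[u\ms a]=\{x \mid x[a\sm u]\in p\}$ (Definition~\ref{defn.p.action}, Proposition~\ref{prop.sigma.iff}) satisfies conditions~\ref{filter.proper}--\ref{filter.new} of Definition~\ref{defn.filter}, and then the primality clause, by transporting each requirement through the equivalence $x\in p[u\ms a]\liff x[a\sm u]\in p$. Conditions~\ref{filter.proper}--\ref{filter.and} should be routine: since $\ttop$ and $\tbot$ are equivariant, $\rulefont{\sigma\#}$ gives $\ttop[a\sm u]=\ttop$ and $\tbot[a\sm u]=\tbot$, so $\ttop\in p[u\ms a]$ (hence $p[u\ms a]$ is nonempty) and $\tbot\notin p[u\ms a]$; up-closure follows from monotonicity of the $\sigma$-action (Lemma~\ref{lemm.sigma.monotone}), since $x[a\sm u]\in p$ and $x\leq x'$ give $x[a\sm u]\leq x'[a\sm u]\in p$; and closure under $\tand$ follows because $[a\sm u]$ distributes over $\tand$ (Definition~\ref{defn.fresh.continuous}) together with the corresponding property of $p$. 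Primality is equally quick: $(x_1\tor x_2)[a\sm u]\in p$ gives $x_1[a\sm u]\tor x_2[a\sm u]\in p$ by compatibility, so some $x_i[a\sm u]\in p$, i.e.\ $x_i\in p[u\ms a]$.

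The real work is condition~\ref{filter.new}: given an atom $c$ and $x\in|\ns D|$ with $\New{b}(b\ c)\act x\in p[u\ms a]$, show $\tall c.x\in p[u\ms a]$. First I would $\alpha$-rename $c$ (using Lemma~\ref{lemm.freshwedge.alpha}, together with Corollary~\ref{corr.stuff} to transport the hypothesis, since the relevant permutations agree on $\supp(x)$) to reduce to two cases: (i)~$c\#u$ and $c\neq a$, and (ii)~$c=a$. In case~(i), for $b$ chosen fresh for $x,u$ and distinct from $a,c$, repeated use of Lemma~\ref{lemm.sub.alpha}, $\rulefont{\sigma\sigma}$ (side condition $c\#u$) and $\rulefont{\sigma\#}$ (to kill $b[a\sm u]=b$ since $a\neq b$) should yield
$$
(b\ c)\act x\in p[u\ms a]
\ \liff\ x[c\sm b][a\sm u]\in p
\ \liff\ x[a\sm u][c\sm b]\in p
\ \liff\ (b\ c)\act(x[a\sm u])\in p ,
$$
so that $\New{b}(b\ c)\act x\in p[u\ms a]$ is equivalent to $\New{b}(b\ c)\act(x[a\sm u])\in p$; applying condition~\ref{filter.new} of Definition~\ref{defn.filter} to $p$ with the finitely-supported element $x[a\sm u]$ gives $\tall c.(x[a\sm u])\in p$, and since $c\#u$ the third compatibility equation of Definition~\ref{defn.fresh.continuous} gives $\tall c.(x[a\sm u])=(\freshwedge{c}x)[a\sm u]=(\tall c.x)[a\sm u]$, whence $\tall c.x\in p[u\ms a]$. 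Case~(ii) is easier: for $b\#x$ both $(b\ a)\act x$ and $\freshwedge{a}x$ are $a$-fresh, so by $\rulefont{\sigma\#}$ the action $[a\sm u]$ is the identity on them; the hypothesis collapses to $\New{b}(b\ a)\act x\in p$, condition~\ref{filter.new} for $p$ gives $\tall a.x\in p$, and $(\tall a.x)[a\sm u]=\tall a.x$, so $\tall a.x\in p[u\ms a]$.

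I expect condition~\ref{filter.new} to be the only genuine obstacle. It is precisely the point where the decomposition of $\tall$ into $\new$ plus a permutation is essential: because $p$ need not be finitely supported, the $\new$-bound atom $b$ may only be taken fresh for the finitely-supported data $x$ and $u$, never for $p$ itself --- the subtlety flagged in Remark~\ref{rmrk.p.not.finite.support} and already exploited in Lemma~\ref{lemm.technical.contradiction}. Everything else is bookkeeping with the $\sigma$-axioms and the compatibility equations of Definition~\ref{defn.fresh.continuous}.
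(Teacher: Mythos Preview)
Your proof is correct and follows essentially the same route as the paper: check conditions~\ref{filter.proper}--\ref{filter.and} and primality via compatibility of the $\sigma$-action, and for condition~\ref{filter.new} $\alpha$-rename the bound atom fresh for $u$, then commute the swapping past $[a\sm u]$ and invoke condition~\ref{filter.new} for $p$. Two small simplifications relative to the paper's version: your case~(ii) is redundant, since when renaming $c$ you can always choose it distinct from $a$ as well as fresh for $u$; and the paper commutes $(b\ c)$ past $[a\sm u]$ in one step by equivariance (since $(b\ c)$ fixes both $a$ and $u$) rather than via the chain $\rulefont{\sigma\alpha}$/$\rulefont{\sigma\sigma}$/$\rulefont{\sigma\#}$.
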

\begin{proof}
We check the conditions of Definition~\ref{defn.filter}
We use Proposition~\ref{prop.sigma.iff} without comment:
\begin{enumerate*}
\item
\emph{$\tbot\not\in p[u\ms a]$.}\quad
Since by \rulefont{\sigma\#} $\tbot[a\sm u]=\tbot$.
\item
\emph{If $x{\in}p[u\ms a]$ and $x{\leq}x'$ then $x'{\in}p[u\ms a]$.}\quad
From Lemma~\ref{lemm.sigma.monotone}.
\item
\emph{If $x{\in}p[u\ms a]$ and $x'{\in}p[u\ms a]$ then $x{\tand}x'\in p[u\ms a]$.}\quad
Since by assumption
the $\sigma$-action is compatible, so $(x{\tand}x')[a\sm u]=x[a\sm u]{\tand}(x'[a\sm u])$ (Definition~\ref{defn.fresh.continuous}).
\item\label{bus.filter.new}
\emph{If $\New{b'}((b'\ b)\act x\in p[u\ms a])$ then $\tall b.x\in p[u\ms a]$.}\quad

Choose some fresh $c$ (so $c\#x,u$).
It is a fact that $x=(c\ b)\act ((c\ b)\act x)$ and by \rulefont{\tall\alpha} also $\tall b.x=\tall c.(c\ b)\act x$.
Thus, we may rename to assume without loss of generality that $b\#u$.

Now suppose $((b'\ b)\act x)[a\sm u]\in p$ for cosmall many $b'$; so suppose $b'\#u$.
By Corollary~\ref{corr.stuff} $(b'\ b)\act u=u$.
Thus by Remark~\ref{rmrk.what.is.equivar.for.term} and Corollary~\ref{corr.stuff} we have that $(b'\ b)\act(x[a\sm u])\in p$ for cosmall many $b'$.
Therefore $\tall b.(x[a\sm u])\in p$ and by compatibility (Definition~\ref{defn.fresh.continuous}) $(\tall b.x)[a\sm u]\in p$.
\end{enumerate*}
Now suppose $p$ is prime and suppose $(y_1{\tor}y_2)[a\sm u]\in p$.
Then by compatibility (Definition~\ref{defn.fresh.continuous}) $y_1[a\sm u]{\tor}(y_2[a\sm u])\in p$.
Therefore either $y_1[a\sm u]\in p$ or $y_2[a\sm u]\in p$.
\end{proof}

\begin{defn}
\label{defn.points}
If $\ns D\in\ndia$ write $\points(\ns D)$ for the $\amgis$-algebra determined by prime filters and the pointwise actions from Definition~\ref{defn.p.action}.
That is:
\begin{frametxt}
\begin{itemize*}
\item
$|\points(\ns D)|=\{p\subseteq|\ns D| \mid p\text{ is a prime filter}\}$.
\item
$\points(\ns D)^\prg=\ns D^\prg$.
\item
$\pi\act p=\{\pi\act x\mid x\in p\}$ and $p[u\ms a]=\{x \mid x[a\sm u]\in p\}$ for $u\in|\points(\ns D)^\prg|$.
\end{itemize*}
\end{frametxt}
\end{defn}

\begin{nttn}
\label{nttn.points.prime.filters}
We will use \deffont{points} and \emph{prime filters} synonymously henceforth.
\end{nttn}

\begin{prop}
\label{prop.filters.amgis}
$\points(\ns D)$ is indeed an $\amgis$-algebra.
\end{prop}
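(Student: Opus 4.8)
The plan is to recognise $\points(\ns D)$ as a substructure of the $\amgis$-powerset $\powamgis(\ns D)$ of Definition~\ref{defn.powamgis}, where we view $\ns D$ as a $\sigma$-algebra via the underlying $\sigma$-algebra structure $(|\ns D|,\act,\ns D^\prg,\tf{sub}_{\ns D})$ packaged in Definition~\ref{defn.FOLeq}. By Proposition~\ref{prop.amgis.2}, $\powamgis(\ns D)$ is already an $\amgis$-algebra, so essentially every required property will be inherited; the only genuine content is checking that $|\points(\ns D)|$ is closed under the two actions, and that is precisely what the preceding lemmas deliver.

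Concretely, I would check the clauses of Definition~\ref{defn.bus.algebra} in turn. First, $(|\points(\ns D)|,\act)$ is a set with a permutation action: the group-action laws $\id\act p=p$ and $\pi\act(\pi'\act p)=(\pi\circ\pi')\act p$ hold because they hold for the pointwise action on arbitrary subsets of $|\ns D|$ (as in $\powamgis(\ns D)$), and $\pi\act p$ is again a prime filter because $\pi$ preserves $\tbot$, $\leq$, $\tand$, $\tor$, and $\tall$ and preserves the $\new$-condition of Definition~\ref{defn.filter} — this is immediate from equivariance of the lattice and $\sigma$-structure (Theorem~\ref{thrm.equivar}), or alternatively one may simply note $\pi\act p = \pi\act p$ is the image of $p$ under an equivariant map. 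Second, $\points(\ns D)^\prg = \ns D^\prg$ is a termlike $\sigma$-algebra by hypothesis on $\ns D\in\ndia$ (Definition~\ref{defn.FOLeq} via Definition~\ref{defn.sub.algebra}). Third, the $\amgis$-action $p[u\ms a]=\{x\mid x[a\sm u]\in p\}$ lands in $|\points(\ns D)|$ by Lemma~\ref{lemm.bus.filter} (filters go to filters, prime filters to prime filters), and it is equivariant by Theorem~\ref{thrm.equivar}, or because it is the restriction of the equivariant $\amgis$-action of $\powamgis(\ns D)$.

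Finally, the axiom \rulefont{\amgis\sigma} — namely $a\#v$ implies $p[v\ms b][u\ms a]=p[u[b\sm v]\ms a][v\ms b]$ — holds for $p\in|\points(\ns D)|$ because, as just noted, the $\amgis$-action on prime filters is literally the restriction to prime filters of the pointwise $\amgis$-action on $|\powamgis(\ns D)|$, and that action satisfies \rulefont{\amgis\sigma} by Proposition~\ref{prop.amgis.2}; an equality of subsets of $|\ns D|$ that holds in $\powamgis(\ns D)$ continues to hold after restricting the domain of $p$ to prime filters.

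The only step with any real weight is the well-definedness of the $\amgis$-action, i.e. that a prime filter $p$ yields a prime filter $p[u\ms a]$; but that is exactly Lemma~\ref{lemm.bus.filter}, which we may assume. So I expect no serious obstacle here: the proof is an assembly of Lemma~\ref{lemm.bus.filter}, Proposition~\ref{prop.amgis.2}, and Theorem~\ref{thrm.equivar}, and should be short.
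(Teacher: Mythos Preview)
Your proposal is correct and matches the paper's approach exactly: the paper's proof is the one-liner ``This is just Lemma~\ref{lemm.bus.filter} combined with Proposition~\ref{prop.amgis.2}.'' Your write-up simply unpacks this, additionally making explicit the equivariance check via Theorem~\ref{thrm.equivar} and the closure of prime filters under the permutation action, which the paper leaves implicit.
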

\begin{proof}
This is just Lemma~\ref{lemm.bus.filter} combined with Proposition~\ref{prop.amgis.2}.
\end{proof}

\subsection{Injecting $\ns D$ into the set of sets of prime filters}

Recall from Definition~\ref{defn.powsigma} the notion of \emph{$\sigma$-powerset algebra} $\powsigma$, and from Definition~\ref{defn.prime.filter} the notion of a \emph{prime filter}.

In this subsection we consider how to embed $\ns D$ a nominal distributive lattice with $\tall$ in the $\sigma$-powerset of its prime filters.
The main definition is Definition~\ref{defn.pp}.
The main results are Lemma~\ref{lemm.bullet.commute}, Corollary~\ref{corr.pp.injective}, and Lemma~\ref{lemm.p.q.inj}.

It is standard to embed a lattice into sets of prime filters, and Definition~\ref{defn.pp} has the form one would expect.
There is extra structure; for instance Lemma~\ref{lemm.dup} and part~\ref{item.bullet.commute.tall} of Lemma~\ref{lemm.bullet.commute}.
With the results we have proven so far, we can deal with this extra structure.

Recall from Definition~\ref{defn.prime.filter} the notion of prime filter:
\begin{defn}
\label{defn.pp}
Suppose $\ns D$ is a nominal distributive nominal lattice with $\tall$.
Define
$$
\pp x = \{p\text{ a prime filter}\mid x\in p\}
\subseteq|\points(\ns D)| .
$$
\end{defn}

\maketab{tab5}{@{\hspace{-1em}}R{10em}@{\ }L{18em}L{12em}}

\begin{lemm}
\label{lemm.completeness}
$x\leq y$ if and only if $\pp x\subseteq\pp y$.
\end{lemm}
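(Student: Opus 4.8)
The plan is to prove the two directions separately. The left-to-right direction is the easy one: suppose $x\leq y$ and let $p\in\pp x$, so $p$ is a prime filter with $x\in p$. By condition~\ref{filter.up} of Definition~\ref{defn.filter} ($p$ is up-closed), since $x\leq y$ we get $y\in p$, hence $p\in\pp y$. This shows $\pp x\subseteq\pp y$ with no use of primeness or of the nominal structure.

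The right-to-left direction is where the work is, and it is the standard prime-filter separation argument. I would argue by contrapositive: suppose $x\not\leq y$; I must produce a prime filter $p$ with $x\in p$ and $y\notin p$, which witnesses $\pp x\not\subseteq\pp y$. The natural candidates are the filter $x{\uparrow}$ from Definition~\ref{defn.uparrow} and the ideal $y{\downarrow}$. First I would check $x{\uparrow}\cap y{\downarrow}=\varnothing$: if $z$ were in both then $x\leq z\leq y$, so $x\leq y$, contradicting our hypothesis. Also $x\neq\tbot$ (else $x\leq y$ trivially), so by Corollary~\ref{corr.uparrow.filter} $x{\uparrow}$ is a filter; and $y\neq\ttop$ (else again $x\leq\ttop=y$ is not forced, but wait — we need $y{\downarrow}$ to be an ideal, which needs $y\neq\ttop$; if $y=\ttop$ then $x\leq y$ holds automatically, contradicting the hypothesis), so $y{\downarrow}$ is an ideal, again by Corollary~\ref{corr.uparrow.filter}. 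Now apply Theorem~\ref{thrm.maxfilt.zorn} with $p:=x{\uparrow}$ and $Z:=y{\downarrow}$: there is a prime filter $q$ with $x{\uparrow}\subseteq q$ and $q\cap y{\downarrow}=\varnothing$. Then $x\in q$ (since $x\in x{\uparrow}$) and $y\notin q$ (since $y\in y{\downarrow}$), so $q\in\pp x\setminus\pp y$, giving $\pp x\not\subseteq\pp y$ as required.

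The main obstacle is essentially bookkeeping rather than conceptual: making sure the degenerate cases ($x=\tbot$, $y=\ttop$) are handled so that Corollary~\ref{corr.uparrow.filter} applies, and correctly invoking Theorem~\ref{thrm.maxfilt.zorn} in its ideal-avoiding form. All the genuinely hard work — the Zorn's Lemma construction of a prime filter avoiding a given ideal, including the delicate nominal condition~\ref{filter.new} — has already been packaged into Theorem~\ref{thrm.maxfilt.zorn} (via Lemma~\ref{lemm.technical.contradiction} and Proposition~\ref{prop.max.2}), so this lemma is a short corollary of that machinery. One could alternatively phrase the separation directly via $p{+}y$ from Definition~\ref{defn.tand.bar}, but routing through $x{\uparrow}$ and $y{\downarrow}$ is cleanest.
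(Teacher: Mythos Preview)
Your proof is correct and follows essentially the same approach as the paper: use Corollary~\ref{corr.uparrow.filter} and Theorem~\ref{thrm.maxfilt.zorn} to produce a prime filter separating $x$ from $y$. In fact your orientation (filter $x{\uparrow}$ avoiding ideal $y{\downarrow}$) is the right one; the paper's printed proof appears to swap the roles of $x$ and $y$, but the argument is the same.
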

\begin{proof}
Suppose $x\leq y$.
By condition~\ref{filter.up} of Definition~\ref{defn.filter} if $x\in p$ then $y\in p$.
It follows that $\pp x\subseteq\pp y$.

Suppose $x\not\leq y$.
There are three easy cases:
\begin{itemize}
\item
If $x=\tbot$ then using Lemma~\ref{lemm.tall.cent.char}(1) $\pp x$ is the empty set and $\pp x\subseteq\pp y$.
\item
If $y=\ttop$ then using Lemma~\ref{lemm.tall.cent.char}(1) $\pp y$ is the set of all prime filters and $\pp x\subseteq\pp y$.
\item
Otherwise, by Lemma~\ref{lemm.uparrow.filter} $x{\downarrow}$ is a (small-supported) ideal and $y{\uparrow}$ is a small-supported filter.
Also since $x\not\leq y$ we have that $x{\downarrow}\cap y{\uparrow}=\varnothing$ and so by Theorem~\ref{thrm.maxfilt.zorn} there exists a prime filter $p$ containing $y{\uparrow}$ (so containing $y$) and disjoint from $x{\downarrow}$ thus not containing $x$.
Then $p\in \pp y$ and $p\not\in \pp x$.
\qedhere\end{itemize}
\end{proof}

\begin{corr}
\label{corr.pp.injective}
The assignment $x\mapsto \pp x$ is injective.
\end{corr}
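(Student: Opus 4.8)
The statement to prove is Corollary~\ref{corr.pp.injective}: the assignment $x\mapsto\pp x$ is injective.

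\medskip

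The plan is to derive this immediately from Lemma~\ref{lemm.completeness}, which states that $x\leq y$ if and only if $\pp x\subseteq\pp y$. First I would observe that injectivity means: if $\pp x=\pp y$ then $x=y$. So suppose $\pp x=\pp y$. Then in particular $\pp x\subseteq\pp y$ and $\pp y\subseteq\pp x$. By Lemma~\ref{lemm.completeness} applied in both directions, the first inclusion gives $x\leq y$ and the second gives $y\leq x$. Since $\leq$ is a partial order on $|\ns D|$ (Definition~\ref{defn.nom.poset}), antisymmetry yields $x=y$. That is the entire argument.

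\medskip

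There is essentially no obstacle here: the corollary is a one-line consequence of the preceding lemma together with antisymmetry of the partial order, which is part of the definition of a nominal poset. The only thing worth being slightly careful about is that $x$ and $y$ are genuinely elements of $|\ns D|$ (so that antisymmetry applies), but this is automatic since $\pp{(\cdot)}$ is defined on $|\ns D|$. So the proof will just be:

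\begin{proof}
Suppose $\pp x=\pp y$.
Then $\pp x\subseteq\pp y$ and $\pp y\subseteq\pp x$, so by Lemma~\ref{lemm.completeness} $x\leq y$ and $y\leq x$.
By antisymmetry of $\leq$ (Definition~\ref{defn.nom.poset}) it follows that $x=y$.
Hence $x\mapsto\pp x$ is injective.
\end{proof}
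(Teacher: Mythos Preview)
Your proof is correct and is exactly the argument the paper intends: the paper's proof is simply ``Direct from Lemma~\ref{lemm.completeness}'', and you have unpacked that one-liner by applying the lemma in both directions and invoking antisymmetry of $\leq$.
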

\begin{proof}
Direct from Lemma~\ref{lemm.completeness}.
\end{proof}

\begin{corr}
\label{corr.supp.pp}
$\supp(\pp x)=\supp(x)$.
\end{corr}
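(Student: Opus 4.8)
The plan is to prove $\supp(\pp x)=\supp(x)$ by showing each inclusion, with the main tool being Theorem~\ref{thrm.equivar} (conservation of support) and the injectivity established in Corollary~\ref{corr.pp.injective}.

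First I would observe that the assignment $x\mapsto\pp x$ is specified by a first-order predicate (namely $p\in\pp x\liff$ ``$p$ is a prime filter and $x\in p$''), so it is an equivariant function from $\ns D$ to $\f{pow}(\points(\ns D))$ — this uses that the permutation action on prime filters is the pointwise one and that being a prime filter is an equivariant condition. Then part~\ref{item.conservation.of.support} of Theorem~\ref{thrm.equivar} (conservation of support) immediately gives $\supp(\pp x)\subseteq\supp(x)$.

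For the reverse inclusion, the point is that $x\mapsto\pp x$ is injective, by Corollary~\ref{corr.pp.injective}. Theorem~\ref{thrm.equivar} part~\ref{item.conservation.of.support} states that for an injective $\chi$ specified by a first-order predicate, $\supp(\chi(x))=\supp(x)$ exactly. So in fact both inclusions come from a single application of that theorem. Thus the proof is essentially one line: \emph{By Corollary~\ref{corr.pp.injective} the map $x\mapsto\pp x$ is injective, and it is clearly equivariant (it is specified by a first-order predicate in $x$), so $\supp(\pp x)=\supp(x)$ by part~\ref{item.conservation.of.support} of Theorem~\ref{thrm.equivar}.}

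The only mild obstacle is the bookkeeping of checking that $x\mapsto\pp x$ genuinely lands in a nominal set and is equivariant in the precise sense of Definition~\ref{defn.equivariant} — i.e.\ that $\pi\act\pp x=\pp(\pi\act x)$, which unwinds to: $p$ is a prime filter containing $\pi\act x$ iff $\pi^{\mone}\act p$ is a prime filter containing $x$. This follows from Proposition~\ref{prop.sigma.iff} (or directly from equivariance of $\leq$, $\tand$, $\tor$, $\tall$, and the pointwise action) and from the fact that $\supp(\pp x)$ is finite, which is exactly the forward inclusion $\supp(\pp x)\subseteq\supp(x)$ already noted. I do not expect any genuine difficulty here; the statement is a routine corollary of the conservation-of-support principle together with the injectivity just proved.
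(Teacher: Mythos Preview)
Your proof is correct and is essentially identical to the paper's own proof: the paper also cites part~\ref{item.conservation.of.support} of Theorem~\ref{thrm.no.increase.of.supp} (which is the same as Theorem~\ref{thrm.equivar}) together with the injectivity from Corollary~\ref{corr.pp.injective}. Your additional remarks about verifying equivariance of $x\mapsto\pp x$ are sound and simply make explicit what the paper leaves implicit.
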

\begin{proof}
Using part~3 of Theorem~\ref{thrm.no.increase.of.supp} and Corollary~\ref{corr.pp.injective}; see \cite[Theorem~4.7]{gabbay:fountl}.
\end{proof}

We never use Lemma~\ref{lemm.p.q.inj} or Corollary~\ref{corr.family.resemblance} later but we mention them anyway because they
express that the set of all filters is (intuitively) a conservative extension of the set of all prime filters:
\begin{lemm}
\label{lemm.p.q.inj}
Suppose $p\subseteq|\ns D|$ is a (not necessarily prime) filter and $x\in|\ns D|$, and suppose for every prime filter $r$, if $p\subseteq r$ then $x\in r$.
Then $x\in p$.
\end{lemm}
\begin{proof}
Suppose $x\not\in p$.
By Lemma~\ref{lemm.uparrow.filter} $x{\downarrow}$ is an ideal and by Theorem~\ref{thrm.maxfilt.zorn} there exists a prime filter $r$ such that $p\subseteq r$ and $x\not\in r$.
The result follows.
\end{proof}

\begin{corr}
\label{corr.family.resemblance}
A nice rephrasing of Lemma~\ref{lemm.p.q.inj} is possible using Definition~\ref{defn.pp}.
If $p$ is a (not necessarily prime) filter then
$$
p=\bigcap\{\pp x\mid x\in p\}.
$$
\end{corr}

\subsection{The map from $x$, to prime filters containing $x$, as a morphism}

For this subsection, fix $\ns D\in\ndia$ a nominal distributive lattice with $\tall$ (Definition~\ref{defn.FOLeq}).
Recall from Definition~\ref{defn.pp} that if $x\in|\ns D|$ then $\pp x$ is the set of prime filters in $\ns D$ that contain $x$.

By Proposition~\ref{prop.filters.amgis} $\points(\ns D)$ is an $\amgis$-algebra.
So following Definition~\ref{defn.sub.sets}, sets of points $X\subseteq|\points(\ns D)|$ inherit an action $X[a\sm u]=\{p\mid p[u\ms a]\in X\}$.
With this action we have the following:
\begin{lemm}
\label{lemm.dup}
Suppose $x\in|\ns D|$ and $u\in|\ns D^\prg|$.
Then:
\begin{enumerate*}
\item
$\pi\act(\pp x)=\pp{(\pi\act x)}$
\item
$\pp x[a\sm u]=\pp{(x[a\sm u])}$
\end{enumerate*}
\end{lemm}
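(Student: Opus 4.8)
The plan is to prove both parts by unwinding the definition of $\pp{(\cdot)}$ from Definition~\ref{defn.pp} and the pointwise actions on sets of points from Definition~\ref{defn.sub.sets}, and matching them up using the previously established equivalences. For part~1, I would argue: $p \in \pi\act(\pp x)$ iff $\pi^\mone\act p \in \pp x$ (by part~3 of Proposition~\ref{prop.amgis.iff}, or just by the pointwise action), iff $x \in \pi^\mone\act p$, iff $\pi\act x \in p$ (using Proposition~\ref{prop.sigma.iff}, noting $p$ ranges over prime filters and $\pi\act p$ is again a prime filter by Lemma~\ref{lemm.bus.filter}-style reasoning — actually this is just that the permutation action on filters is well-defined), iff $p \in \pp{(\pi\act x)}$. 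This is routine chasing through Proposition~\ref{prop.sigma.iff}.

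For part~2, the key is to use $\alpha$-renaming to reduce to the good case $a\#u$. First, using Lemma~\ref{lemm.sigma.alpha} on the left-hand side (noting $\pp x$ has finite support equal to $\supp(x)$ by Corollary~\ref{corr.supp.pp}) and \rulefont{\sigma\alpha} on $x[a\sm u]$ on the right-hand side, I would assume without loss of generality that $a\#u$. Then I would compute: $p \in \pp x[a\sm u]$ iff $p[u\ms a]\in \pp x$ — here I want to invoke part~2 of Proposition~\ref{prop.amgis.iff}, which requires $a\#p$ as well as $a\#u$. Since $p$ is a prime filter and need not have finite support, I cannot simply assume $a\#p$. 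So instead I should stay with part~1 of Proposition~\ref{prop.amgis.iff}: $p\in\pp x[a\sm u]$ iff $\New{c}\, p[u\ms c]\in(c\ a)\act\pp x$, which by part~1 of Lemma~\ref{lemm.dup} equals $\New{c}\, p[u\ms c]\in\pp{((c\ a)\act x)}$, i.e. $\New{c}\,(c\ a)\act x \in p[u\ms c]$, i.e. $\New{c}\,((c\ a)\act x)[c\sm u]\in p$. By \rulefont{\sigma\alpha} (since $c\#x$ under the $\new$-quantifier) this is $\New{c}\, x[a\sm u]\in p$, i.e. $x[a\sm u]\in p$ (the body no longer mentions $c$), i.e. $p\in\pp{(x[a\sm u])}$.

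The main obstacle is the bookkeeping around the freshness of $c$ versus the non-finitely-supported filter $p$: one must be careful that the $\new$-quantified atom $c$ is fresh for $x$ and $u$ (which is fine, they have finite support) but need not be fresh for $p$ — so the reduction to $p[u\ms a]\in\pp x$ via part~2 of Proposition~\ref{prop.amgis.iff} is \emph{not} available in general, and one must route the argument through part~1 and the $\new$-quantifier as above, using part~1 of this very lemma to commute the permutation $(c\ a)$ past $\pp{(\cdot)}$. Once that is handled correctly, the calculation is forced by \rulefont{\sigma\alpha} and the definitions.
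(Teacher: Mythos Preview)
Your proposal is correct and matches the paper's proof almost exactly: for part~2 the paper also routes through part~1 of Proposition~\ref{prop.amgis.iff} (the $\new$-based form), uses part~1 of this very lemma to turn $(c\ a)\act\pp x$ into $\pp{((c\ a)\act x)}$, unfolds Definition~\ref{defn.pp} and Proposition~\ref{prop.sigma.iff}, and finishes with \rulefont{\sigma\alpha}. Your preliminary reduction to the case $a\#u$ is harmless but unnecessary---the paper does not make it, since \rulefont{\sigma\alpha} only needs $c\#x$, which is automatic under the $\new$-quantifier; and for part~1 the paper simply invokes equivariance (Theorem~\ref{thrm.equivar}) rather than computing, though it notes your concrete calculation is also possible.
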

\begin{proof}
The case of $\pi\act(\pp x)$ is direct from Theorem~\ref{thrm.equivar} (a proof by concrete calculations similar to the case of $\pp x[a\sm u]$ is also possible).
For the case of $\pp x[a\sm u]$, we reason as follows:
$$
\begin{array}[b]{r@{\ }l@{\qquad}l}
p\in (\pp x)[a\sm u]
\liff&
\New{c} p[u\ms c]\in \pp{((c\ a)\act x)}
&\text{Prop~\ref{prop.amgis.iff}, pt~1 this result}
\\
\liff&
\New{c}(c\ a)\act x\in p[u\ms c]
&\text{Definition~\ref{defn.pp}}
\\
\liff&
\New{c} ((c\ a)\act x)[c\sm u]\in p
&\text{Proposition~\ref{prop.sigma.iff}}
\\
\liff&
p\in \pp{(x[a\sm u])}
&\rulefont{\sigma\alpha},\ \text{Definition~\ref{defn.pp}}
\end{array}
\qedhere$$
\end{proof}

\begin{lemm}
\label{lemm.bullet.commute}
\begin{enumerate*}
\item
\label{item.bullet.commute.tbot}
$\pp \tbot=\varnothing$ and $\pp\ttop=|\points(\ns D)|$
\item
\label{item.bullet.commute.tand}
$\pp{(x\tand y)}=\pp x\cap \pp y$
\item
\label{item.bullet.commute.tall}
$\pp{(\tall a.x)}=\freshcap{a}(\pp x)$ (Definition~\ref{defn.nu.U})
\item
\label{item.bullet.commute.tor}
$\pp{(x\tor y)}=\pp x\cup \pp y$
\end{enumerate*}
\end{lemm}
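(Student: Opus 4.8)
The plan is to prove each of the four equalities by unwinding Definition~\ref{defn.pp} and using the characterisation of prime filters from Proposition~\ref{prop.tall.cent.char}. For parts \ref{item.bullet.commute.tbot}, \ref{item.bullet.commute.tand}, and \ref{item.bullet.commute.tor} the arguments are the standard Stone-duality manipulations; the only genuinely new content is part~\ref{item.bullet.commute.tall}, concerning $\tall$, and that is where I expect the main obstacle to lie.

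For part~\ref{item.bullet.commute.tbot}: $\pp\tbot=\{p\text{ prime}\mid\tbot\in p\}=\varnothing$ since by condition~\ref{filter.proper} of Definition~\ref{defn.filter} no filter contains $\tbot$; and $\pp\ttop=|\points(\ns D)|$ since by part~1 of Proposition~\ref{prop.tall.cent.char} every filter contains $\ttop$. For part~\ref{item.bullet.commute.tand}: $p\in\pp{(x\tand y)}$ iff $x\tand y\in p$ iff ($x\in p$ and $y\in p$) by part~2 of Proposition~\ref{prop.tall.cent.char}, i.e. iff $p\in\pp x\cap\pp y$. For part~\ref{item.bullet.commute.tor}: here we need primality. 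If $x\tor y\in p$ then by primality $x\in p$ or $y\in p$, giving $\pp{(x\tor y)}\subseteq\pp x\cup\pp y$; conversely if $x\in p$ then since $x\leq x\tor y$, condition~\ref{filter.up} of Definition~\ref{defn.filter} gives $x\tor y\in p$, and symmetrically for $y$, so $\pp x\cup\pp y\subseteq\pp{(x\tor y)}$.

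The substantive case is part~\ref{item.bullet.commute.tall}. I would first invoke part~1 of Proposition~\ref{prop.freshcap.equiv} (via Corollary~\ref{corr.all.freshwedge}), which identifies $\freshcap{a}(\pp x)$ with the largest $Z\in|\powsigma(\points(\ns D))|$ with $Z\subseteq\pp x$ and $a\#Z$; alternatively, and more usefully, I would use the characterisation $\freshcap{a}(\pp x)=\bigcap\{(\pp x)[a\sm u]\mid u\in|\ns D^\prg|\}$ from line~2 of Proposition~\ref{prop.freshcap.equiv} combined with Lemma~\ref{lemm.dup}, which gives $(\pp x)[a\sm u]=\pp{(x[a\sm u])}$. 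So $\freshcap{a}(\pp x)=\bigcap_{u}\pp{(x[a\sm u])}$, hence $p\in\freshcap{a}(\pp x)$ iff $x[a\sm u]\in p$ for every $u\in|\ns D^\prg|$. On the other side, $p\in\pp{(\tall a.x)}$ iff $\tall a.x\in p$, and by part~3 of Proposition~\ref{prop.tall.cent.char} (equivalently Lemma~\ref{lemm.these.are.equivalent}) this holds iff $\Forall{u{\in}|\ns D^\prg|}x[a\sm u]\in p$. Matching the two gives the equality. I expect the delicate point to be making sure the chosen characterisation of $\freshcap{a}$ from Proposition~\ref{prop.freshcap.equiv} is the one whose right-hand side literally matches an equivalent condition in Lemma~\ref{lemm.these.are.equivalent}; the $\new$-quantifier form $\New{b}(b\ a)\act x\in p$ is the cleanest bridge, since it appears verbatim both in condition~\ref{filter.new} of Definition~\ref{defn.filter} and (via Proposition~\ref{prop.freshwedgeo}) in the third-line characterisation of $\freshcap{a}$ in Proposition~\ref{prop.freshcap.equiv}. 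Concretely I would write $p\in\freshcap{a}(\pp x)$ iff $\Forall{n{\in}\mathbb A}\New{c}p[n\ms c]\in(c\ a)\act(\pp x)$, unwind $(c\ a)\act(\pp x)=\pp{((c\ a)\act x)}$ and $p[n\ms c]\in\pp{((c\ a)\act x)}$ iff $((c\ a)\act x)[c\sm n]\in p$, and then reconcile with $\tall a.x\in p$ through Lemma~\ref{lemm.these.are.equivalent}. The only thing to be careful about is that $p$ need not have finite support (Remark~\ref{rmrk.p.not.finite.support}), so the $\new$-bound atoms $c$ are fresh for $x$ but not necessarily for $p$; however every use here is mediated by the finitely-supported element $x\in|\ns D|$, so Lemma~\ref{lemm.these.are.equivalent} applies as stated.
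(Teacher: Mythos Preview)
Your proposal is correct and follows essentially the same route as the paper: parts~\ref{item.bullet.commute.tbot}, \ref{item.bullet.commute.tand}, and~\ref{item.bullet.commute.tor} are read off from Proposition~\ref{prop.tall.cent.char}, and for part~\ref{item.bullet.commute.tall} the paper uses Lemma~\ref{lemm.these.are.equivalent} together with Lemma~\ref{lemm.dup} and line~2 of Proposition~\ref{prop.freshcap.equiv}. The only cosmetic difference is that the paper picks the ``for all atoms $n\in\mathbb A$'' variant of that line while your primary route uses the ``for all $u\in|\ns D^\prg|$'' variant; since Lemma~\ref{lemm.these.are.equivalent} lists both as equivalent, the arguments are interchangeable.
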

\begin{proof}
Parts~1, 2, and~4 just reformulate parts~1, 2, and~4 of Lemma~\ref{lemm.tall.cent.char}.

For part~3, suppose $p\in\pp{(\tall a.x)}$.
By Definition~\ref{defn.pp} this is if and only if $\tall a.x\in p$.
By Proposition~\ref{prop.these.are.equivalent}(\ref{these.are.equivalent.1}) this is if and only if $x[a\sm u]\in p$ for every $u{\in}|\ns D^\prg|$.
Using Definition~\ref{defn.pp} and Lemma~\ref{lemm.dup} this is if and only if $p\in\pp x[a\sm u]$ for every $u{\in}|\ns D^\prg|$.
The result follows from Definition~\ref{defn.nu.U}.
\end{proof}

\begin{defn}
\label{defn.pp.B}
Define $\pp{\ns D}\in\ndia$ a nominal distributive lattice with $\tall$ by the following data:
\begin{enumerate*}
\item
$|\pp{\ns D}|=\{\pp x\mid x\in|\ns D|\}$.
\item
$(\pp{\ns D})^\prg=\ns D^\prg$.
\item
$\pp{\ns D}$ has permutation and $\amgis$-actions following Definition~\ref{defn.sub.sets} (so $\pi\act\pp x=\{\pi\act p\mid p\in\pp x\}$ and $\pp x[a\sm u]=\{p\mid p[u\ms a]\in\pp x\}$).
\end{enumerate*}
If $\ns D$ is impredicative (Definition~\ref{defn.D.impredicative}), so we assume a 
$\sigma$-algebra morphism $\prg:\ns D^\prg\to\ns D$, then $\pp{\ns D}$ naturally becomes impredicative where:
\begin{itemize*}
\item
$\prg_{\pp{\ns D}} u=\pp{(\prg_{\ns D} u)}$ for $u\in|\ns D^\prg|=|(\pp{\ns D})^\prg|$.
\end{itemize*}
\end{defn}

It is now easy to state and prove a nominal sets representation theorem, representing an abstract $\ns D\in\india$ concretely as the nominal sets-based structure $\pp{\ns D}$:
\begin{frametxt}
\begin{thrm}[First representation theorem]
\label{thrm.pp.iso}
If $\ns D$ is in $\ndia$ then so is $\pp{\ns D}$, and the pair of maps $(x\mapsto \pp x,u\mapsto u)$ is an isomorphism from $\ns D$ to $\pp{\ns D}$ in $\ndia$.

If furthermore $\ns D$ is in $\india$ (is impredicative) then so is $\pp{\ns D}$ and $x\mapsto\pp x$ is an isomorphism from $\ns D$ to $\pp{\ns D}$.
\end{thrm}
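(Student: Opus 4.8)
The plan is to assemble the theorem from the pieces already in place, treating separately (a) the claim that $\pp{\ns D}$ is an object of $\ndia$, (b) the claim that $x\mapsto\pp x$ is an isomorphism in $\ndia$, and (c) the upgrade to $\india$ under the impredicativity hypothesis. For (a): the underlying nominal set of $\pp{\ns D}$ is $\{\pp x\mid x\in|\ns D|\}$, which sits inside $\powsigma(\points(\ns D))$ once we check each $\pp x$ satisfies the two conditions of Definition~\ref{defn.powsigma}; the $\amgis$-algebra structure on $\points(\ns D)$ is Proposition~\ref{prop.filters.amgis}, so $\powsigma(\points(\ns D))$ is a nominal distributive lattice with $\tall$ by Theorem~\ref{thrm.powerset}, and the key point is that $|\pp{\ns D}|$ is closed under the lattice operations. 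This closure is exactly Lemma~\ref{lemm.bullet.commute}: $\pp\ttop$, $\pp\tbot$, $\pp{(x\tand y)}$, $\pp{(x\tor y)}$, $\pp{(\tall a.x)}$ are all of the form $\pp z$, and Lemma~\ref{lemm.dup} shows $|\pp{\ns D}|$ is closed under the permutation and $\sigma$-actions (with the termlike part $\ns D^\prg$ carried over unchanged). So $\pp{\ns D}$ is a sub-structure of $\powsigma(\points(\ns D))$ inheriting all the axioms, hence lies in $\ndia$.

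For (b): define the candidate morphism $f=(f_{\ns D},f_{\ns D}^\prg)$ by $f_{\ns D}(x)=\pp x$ and $f_{\ns D}^\prg=\f{id}$ on $\ns D^\prg$. That this is a $\sigma$-algebra morphism (Definition~\ref{defn.morphism.sigma.alg}) is immediate: it maps atoms to atoms since $f_{\ns D}^\prg$ is the identity, it is equivariant and commutes with the $\sigma$-action by Lemma~\ref{lemm.dup}. That it commutes with fresh-finite limits and finite colimits (Definition~\ref{defn.hom.nba}) is precisely Lemma~\ref{lemm.bullet.commute}. Surjectivity onto $|\pp{\ns D}|$ is by construction, and injectivity is Corollary~\ref{corr.pp.injective}; the inverse is again a morphism because $x\leq y\liff\pp x\subseteq\pp y$ (Lemma~\ref{lemm.completeness}) shows the inverse is order-preserving, and order-isomorphisms between distributive lattices automatically respect $\ttop,\tbot,\tand,\tor$, while the $\tall$ and $\sigma$ structure transport back along the bijection by Lemmas~\ref{lemm.bullet.commute} and~\ref{lemm.dup}. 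Hence $f$ is an isomorphism in $\ndia$.

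For (c): assume $\ns D\in\india$, so we have a $\sigma$-algebra morphism $\prg_{\ns D}:\ns D^\prg\to\ns D$. Equip $\pp{\ns D}$ with $\prg_{\pp{\ns D}}u=\pp{(\prg_{\ns D}u)}$ as in Definition~\ref{defn.pp.B}; this is well-defined (lands in $|\pp{\ns D}|$) and is a $\sigma$-algebra morphism $(\pp{\ns D})^\prg\to\pp{\ns D}$ because $\prg_{\pp{\ns D}}=f_{\ns D}\circ\prg_{\ns D}$ is a composite of $\sigma$-algebra morphisms (here using $f_{\ns D}^\prg=\f{id}$, so the source $\sigma$-algebra is literally $\ns D^\prg$). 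Thus $\pp{\ns D}\in\india$. Finally, to see $f$ is a morphism in $\india$ we must check condition~\ref{item.third.condition} of Definition~\ref{defn.hom.impredicative}, namely $f_{\ns D}\circ\prg_{\ns D}=\prg_{\pp{\ns D}}\circ f_{\ns D}^\prg$; since $f_{\ns D}^\prg=\f{id}$ this reads $\pp{(\prg_{\ns D}u)}=\prg_{\pp{\ns D}}u$, which holds by the very definition of $\prg_{\pp{\ns D}}$. So $f$ is an $\india$-isomorphism.

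\textbf{Main obstacle.} I expect the only genuinely non-routine point to be verifying that each $\pp x$ actually satisfies conditions~\ref{item.fresh.powsigma} and~\ref{item.alpha.powsigma} of Definition~\ref{defn.powsigma} — i.e. that $\pp x$ really is an element of $|\powsigma(\points(\ns D))|$ and not merely a finitely-supported set of prime filters. This is where the $\amgis$-action on prime filters (Lemma~\ref{lemm.bus.filter}) and the pointwise $\sigma$-action on sets of points (Definition~\ref{defn.sub.sets}, Proposition~\ref{prop.amgis.iff}) have to interact correctly; concretely one shows, using Lemma~\ref{lemm.dup} part~2 together with $\rulefont{\sigma\#}$ and $\rulefont{\sigma\alpha}$ for $\ns D$, that $\pp x[a\sm u]=\pp{(x[a\sm u])}$ collapses to $\pp x$ when $a\#x$ and to $(b\;a)\act\pp x$ when $a\sm u$ is an atom renaming. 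Everything else is bookkeeping over results already established.
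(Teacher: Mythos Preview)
Your proposal is correct and follows essentially the same route as the paper: the core ingredients are Lemma~\ref{lemm.dup} (the map $x\mapsto\pp x$ commutes with permutation and $\sigma$), Lemma~\ref{lemm.bullet.commute} (it commutes with $\ttop,\tbot,\tand,\tor,\tall$), surjectivity by construction, and injectivity by Corollary~\ref{corr.pp.injective}. The paper's proof is a two-line appeal to exactly these results.

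Your framing differs slightly in that you explicitly embed $\pp{\ns D}$ as a substructure of $\powsigma(\points(\ns D))$ and invoke Theorem~\ref{thrm.powerset}, which leads you to flag the ``main obstacle'' of checking conditions~\ref{item.fresh.powsigma} and~\ref{item.alpha.powsigma} of Definition~\ref{defn.powsigma}. The paper sidesteps this: since $x\mapsto\pp x$ is a bijection commuting with all the relevant operations, $\pp{\ns D}$ inherits the $\ndia$ axioms directly from $\ns D$ by transport of structure, so one need not locate $\pp{\ns D}$ inside $\powsigma$ at all. Your route is not wrong---the conditions do follow immediately from Lemma~\ref{lemm.dup} combined with \rulefont{\sigma\#} and Lemma~\ref{lemm.sub.alpha}---but it is a detour the paper does not take.
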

\end{frametxt}
\begin{proof}
We unpack Definition~\ref{defn.pp.B} and use Lemmas~\ref{lemm.dup} and~\ref{lemm.bullet.commute} to check the conditions on morphisms from Definition~\ref{defn.hom.nba}.
Surjectivity is by construction and injectivity by Corollary~\ref{corr.pp.injective}.
\end{proof}

Theorem~\ref{thrm.pp.iso.pp} extends Theorem~\ref{thrm.pp.iso} with $\app$ and $\ppa$.

\section{Nominal $\sigma$-topological spaces}
\label{sect.stone}

\subsection{The basic definition}

\begin{defn}
\label{defn.nom.top}
A \deffont{nominal $\sigma$-topological space} $\ns T$ is a tuple
$(|\ns T|,\act,\ns T^\prg,\tf{amgis},\otop{\ns T})$ where
\begin{itemize*}
\item
$(|\ns T|,\act,\ns T^\prg,\tf{amgis})$ forms an $\amgis$-algebra (Definition~\ref{defn.bus.algebra}) and
\item
$\otop{\ns T}{\subseteq}|\nompow(\ns T)|$ (i.e. a set of small-supported sets; see Subsection~\ref{subsect.finsupp.pow}) is an equivariant (see Lemma~\ref{lemm.when.set.equivar}) set of \deffont{open sets}.
\end{itemize*}
Furthermore we impose the following conditions on $\otop{\ns T}$:
\begin{enumerate*}
\item
$\emptyset\in \otop{\ns T}$ and $|\ns T|\in\otop{\ns T}$
\item
If $X\in\otop{\ns T}$ and $Y\in\otop{\ns T}$ then $X\cap Y\in\otop{\ns T}$.
\item
\label{item.top.strict}
If $\mathcal X\subseteq\otop{\ns T}$ is small-supported then $\bigcup \mathcal X\in\otop{\ns T}$; we call this a \deffont{small-supported union} of open sets.
\end{enumerate*}
\end{defn}

\begin{rmrk}
\label{rmrk.discuss.topologies}
Topological spaces over Zermelo-Fraenkel (\deffont{ZF}) sets---that is, over `ordinary' sets---are such that an arbitrary union of open sets is open.

Condition~\ref{item.top.strict} of Definition~\ref{defn.nom.top} generalises that condition, because any ZF set is also naturally an FM set with the trivial permutation action, and with empty support; so arbitrary sets of ZF sets are already small-supported, by $\varnothing$.

As a design choice in Definition~\ref{defn.nom.top} we do not insist that $U\in\otop{\ns T}$ implies $\freshcap{a}U\in\otop{\ns T}$.
See the later notion of \emph{coherence} in Definition~\ref{defn.coherent}.
\end{rmrk}

\subsection{The map $F$ from distributive lattices to nominal $\sigma$-topological spaces}

\begin{defn}
\label{defn.F}
Suppose $\ns D\in\ndia$ is a nominal distributive lattice with $\tall$.
\begin{frametxt}
Define $F(\ns D)$ a nominal $\sigma$-topological space (Definition~\ref{defn.nom.top}) by (technical references follow):
\begin{enumerate*}
\item
$F(\ns D)$ has as underlying $\amgis$-algebra $\points(\ns D)$ from Definition~\ref{defn.points}, so that
\begin{itemize*}
\item
$|F(\ns D)|=|\points(\ns D)|$ and $F(\ns D)^\prg=\ns D^\prg$, and
\item
$\pi\act p=\{\pi\act x\mid x\in p\}$ and $p[u\ms a]=\{x\mid x[a\sm u]\in p\}$ for $u\in|\ns D^\prg|$.
\end{itemize*}
\item\label{F.strict}
The topology $\otop{F(\ns D)}$ is generated under small-supported unions by $\{\pp x\mid x\in|\ns D|\}$.
\end{enumerate*}
\end{frametxt}
\end{defn}

\begin{rmrk}
For the reader's convenience we give references for technical definitions above:
\begin{itemize*}
\item
$\points(\ns D)$ is from Definition~\ref{defn.points}.
\item
The pointwise actions are from Definition~\ref{defn.sub.sets}.
\item
Small support is from Definition~\ref{defn.fin.supp}.
\item
$\pp x$ is from Definition~\ref{defn.pp}.
\end{itemize*}
So $X\in\otop{F(\ns D)}$ when $X=\bigcup\mathcal X$ where $\mathcal X=\{\pp x_i\mid i\in I\}$ is a small-supported set of sets of points of the form $\pp x$.
\end{rmrk}

\begin{thrm}
\label{thrm.FB.amgis}
If $\ns D\in\ndia$ is a nominal distributive lattice (Definition~\ref{defn.D.impredicative})
then $F(\ns D)$ is a $\sigma$-topological space.
\end{thrm}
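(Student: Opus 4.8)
The plan is to verify, one by one, that $F(\ns D)$ satisfies every clause of Definition~\ref{defn.nom.top}. First I would observe that the underlying structure $\points(\ns D)$ is an $\amgis$-algebra by Proposition~\ref{prop.filters.amgis}, so the "$\amgis$-algebra" part of the definition is already supplied by Definition~\ref{defn.F}(1). It remains to check that $\otop{F(\ns D)}$, the set of strictly-finitely-supported unions of sets of the form $\pp x$, is a legitimate topology of open sets in the sense required. The argument splits into two blocks: (a) every member $X\in\otop{F(\ns D)}$ actually lies in $|\f{pow}(F(\ns D))|$, i.e.\ it satisfies the three unpacked conditions (finite support; $a\#X \limp X[a\sm u]=X$; $b\#X\limp X[a\sm b]=(b\ a)\act X$); and (b) the four numbered closure conditions hold.

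For block (a): a basic open set $\pp x$ has finite support (equal to $\supp(x)$) by Corollary~\ref{corr.supp.pp}, and it satisfies the fresh-substitution and fresh-renaming conditions because $\powsigma(\points(\ns D))$ is a genuine $\sigma$-algebra (Proposition~\ref{prop.pow.sub.algebra} applied to the $\amgis$-algebra $\points(\ns D)$), so in particular $\pp x$, being a finitely supported subset of $|\points(\ns D)|$, obeys conditions~\ref{item.fresh.powsigma} and~\ref{item.alpha.powsigma} of Definition~\ref{defn.powsigma}, which Lemma~\ref{lemm.X.sub.fresh.alpha} rephrases as exactly the two conditions wanted. For an arbitrary $X=\bigcup\mathcal X$ with $\mathcal X$ strictly finitely supported: finite support of $X$ follows from Theorem~\ref{thrm.no.increase.of.supp} (or from Lemma~\ref{lemm.strict.support}); and the fresh-substitution/renaming identities follow because the pointwise $\sigma$-action and permutation action commute with strictly finitely supported unions---this is precisely parts~1--3 of Lemma~\ref{lemm.sub.bigcap}---so the identities for $X$ reduce to the identities for each $\pp x\in\mathcal X$, already established.

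For block (b): condition~\ref{item.top.sub} (closure under $\pi\act(-)$ and $(-)[a\sm u]$) follows by pushing the action through the union using Lemma~\ref{lemm.sub.bigcap}(1,2,3) and then using Lemma~\ref{lemm.dup}, which tells us $\pi\act\pp x=\pp{(\pi\act x)}$ and $\pp x[a\sm u]=\pp{(x[a\sm u])}$, so the image is again a strictly-finitely-supported union of basic opens (one must check the resulting index set is still strictly finitely supported, which follows from Corollary~\ref{corr.supp.pp} and conservation of support). Condition~2 holds since $\varnothing=\pp\tbot$ (Lemma~\ref{lemm.bullet.commute}(1)) is the empty union and $|\points(\ns D)|=\pp\ttop$ is a basic open. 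Condition~4 (closure under strictly finitely supported unions) is essentially by construction, using Corollary~\ref{corr.strict.strict}---``a strictly finitely supported set of strictly finitely supported sets is strictly finitely supported''---to flatten a strictly-finitely-supported union of strictly-finitely-supported unions of basic opens into a single strictly-finitely-supported union of basic opens. The remaining and most delicate point is condition~3, closure under binary intersection: given $X=\bigcup_i\pp{x_i}$ and $Y=\bigcup_j\pp{y_j}$ (both index families strictly finitely supported), we have $X\cap Y=\bigcup_{i,j}(\pp{x_i}\cap\pp{y_j})=\bigcup_{i,j}\pp{(x_i\tand y_j)}$ by Lemma~\ref{lemm.bullet.commute}(2), and one must argue that the family $\{\pp{(x_i\tand y_j)}\}_{i,j}$ is again strictly finitely supported---this follows from $\supp(x_i\tand y_j)\subseteq\supp(x_i)\cup\supp(y_j)$ (conservation of support, Theorem~\ref{thrm.no.increase.of.supp}) together with the strict finite support of the two original families, again invoking Corollary~\ref{corr.strict.strict} or Lemma~\ref{lemm.strict.support}. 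I expect this bookkeeping around strict finite support of the doubly-indexed family to be the main obstacle: it is where the choice of \emph{strict} (rather than ordinary) finite support in Definition~\ref{defn.nom.top} genuinely earns its keep, as flagged in Remark~\ref{rmrk.discuss.topologies}.
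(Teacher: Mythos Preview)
Your overall structure matches the paper's proof closely, and block~(b) is essentially identical to the paper's verification of the four closure conditions (the paper uses the same ingredients: Lemma~\ref{lemm.sub.bigcap} and Lemma~\ref{lemm.dup} for condition~1, $\pp\ttop$ for condition~2, Lemma~\ref{lemm.bullet.commute}(\ref{item.bullet.commute.tand}) plus Theorem~\ref{thrm.no.increase.of.supp} for condition~3, Corollary~\ref{corr.strict.strict} for condition~4). There is, however, a genuine gap in block~(a).

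You argue that $\pp x$ satisfies conditions~\ref{item.fresh.powsigma} and~\ref{item.alpha.powsigma} of Definition~\ref{defn.powsigma} ``because $\powsigma(\points(\ns D))$ is a genuine $\sigma$-algebra (Proposition~\ref{prop.pow.sub.algebra}), so in particular $\pp x$, being a finitely supported subset of $|\points(\ns D)|$, obeys conditions~\ref{item.fresh.powsigma} and~\ref{item.alpha.powsigma}.'' This is circular: membership in $|\powsigma(\ns P)|$ is \emph{defined} as being finitely supported \emph{and} satisfying those two conditions; Proposition~\ref{prop.pow.sub.algebra} only tells you that the set of such $X$ is closed under the $\sigma$-action, not that an arbitrary finitely supported $X$ belongs to it. Finite support of $\pp x$ alone does not place it in $\powsigma$.

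The fix is immediate and you already have the tool: Lemma~\ref{lemm.dup} gives $\pp x[a\sm u]=\pp{(x[a\sm u])}$ and $\pi\act\pp x=\pp{(\pi\act x)}$, so conditions~(ii) and~(iii) for $\pp x$ follow from \rulefont{\sigma\#} and Lemma~\ref{lemm.sub.alpha} applied to $x$ in $\ns D$, together with Corollary~\ref{corr.supp.pp}. The paper does exactly this, but treats the general strictly-finitely-supported union directly: it uses Lemma~\ref{lemm.strict.support} to pass from $a\#X$ to $a\#x_i$ for every $i$, then applies \rulefont{\sigma\#} (respectively Lemma~\ref{lemm.sub.alpha}) at the level of $\ns D$, then Lemma~\ref{lemm.dup}. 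Once you repair this step your reduction via Lemma~\ref{lemm.sub.bigcap} goes through as well.
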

\begin{proof}
By Theorem~\ref{thrm.no.increase.of.supp} $\otop{F(\ns D)}$ is equivariant.
We consider the conditions in Definition~\ref{defn.nom.top} in turn:
\begin{enumerate*}
\item
$\varnothing$ is open by construction and $\pp\ttop=|\points(\ns D)|$.
\item
Suppose $X,Y\in\otop{F(\ns D)}$.
So $X=\bigcup\pp x_i$ and $Y=\bigcup\pp y_j$ for some small-supported sets $\{x_i\mid i\in I\},\{y_j\mid j\in J\}\subseteq|\ns D|$.
Thus using part~\ref{item.bullet.commute.tand} of Lemma~\ref{lemm.bullet.commute} and some elementary sets calculations
$X\cap Y=\bigcup_{i{\in}I,j{\in}J}\pp{(x_i\tand y_j)}$ and by Theorem~\ref{thrm.no.increase.of.supp} this is a small-supported union.
\item
Suppose $\mathcal X\subseteq\otop{F(\ns D)}$ is small-supported.
Using Theorem~\ref{thrm.no.increase.of.supp} so is $\bigcup\mathcal X\in\otop{F(\ns D)}$.
\qedhere\end{enumerate*}
\end{proof}

\subsection{Technical interlude: two important propositions}

For this subsection, fix $\ns D\in\ndia$ a nominal distributive lattice with $\tall$.

This subsection proves Proposition~\ref{prop.extra.filter}, and uses that to prove Proposition~\ref{prop.even.stronger}, which is an important technical lemma for Theorem~\ref{thrm.pp.x.clopen}.

Intuitively Proposition~\ref{prop.even.stronger} says that the set of all prime filters containing some $z{\in}\ns D$ is compact (so any cover of $\pp z$ has a subcover with a very strict bound on its size; see Theorem~\ref{thrm.pp.x.clopen} for the full result).
The reader familiar with duality results should recognise the overall argument, and we just need to do some extra work to account for the extra `nominal' structure.

Proposition~\ref{prop.extra.filter} is new, though the statement and proof have a similar outline; intuitively it says that any cover of $\pp z$ has a subcover with a very strict bound on its support.

Some notation will be useful:
\begin{defn}
\label{defn.ppX}
\label{defn.clostandup}
Given $\nomathcal X\subseteq|\ns D|$,
we define $\pp{\nomathcal X}$, ${\nomathcal X}_\tand$, ${\nomathcal X}_\tor$, ${\nomathcal X}{\downarrow}$, ${\nomathcal X}{\uparrow}$, $\clostordown{\nomathcal X}$, and $\clostandup{\nomathcal X}$ as in Figure~\ref{fig.lotsofstuff}.
\end{defn}

\begin{figure}
$$
\begin{array}{@{\hspace{-.0ex}}r@{\ }l@{\ }r@{\ }l}
\pp{{\nomathcal X}}=&\{\pp x\mid x{\in}{\nomathcal X}\}
\\
{\nomathcal X}_\tand =& \{\ttop\}\cup\{ x_1\tand\dots\tand x_n \mid \{x_1,\dots,x_n\}{\subseteq}{\nomathcal X} \}
\\
{\nomathcal X}_\tor =& \{\tbot\}\cup\{ x_1\tor\dots\tor x_n \mid \{x_1,\dots,x_n\}{\subseteq}{\nomathcal X} \}
\\
{\nomathcal X}{\downarrow} =& \{ x' \mid \Exists{x{\in}{\nomathcal X}}x'\leq x\}
\\
{\nomathcal X}{\uparrow} =& \{ x' \mid \Exists{x{\in}{\nomathcal X}}x\leq x'\}
\\
\clostordown{\nomathcal X} =& ({\nomathcal X}_\tor){\downarrow}
\\
\clostandup{\nomathcal X} =& ({\nomathcal X}_\tand){\uparrow}
\end{array}
$$
\caption{Some useful notation}
\label{fig.lotsofstuff}
\end{figure}

\begin{rmrk}
\begin{itemize}
\item
The notation ${\nomathcal X}{\downarrow}$ and ${\nomathcal X}{\uparrow}$ echoes $x{\downarrow}$ and $x{\uparrow}$ from Definition~\ref{defn.uparrow}, and indeed $x{\downarrow}$ from Definition~\ref{defn.uparrow} is equal to $\{x\}{\downarrow}$ from Definition~\ref{defn.clostandup}.
\item
Later, we will encounter Definition~\ref{defn.ppmone}, which is a kind of inverse to $\pp{{\nomathcal X}}$.
\item
By convention, we take $\ttop\in{\nomathcal X}_\tand$ and $\tbot\in{\nomathcal X}_\tor$.
Think of this as ``the case that $n{=}0$''.
\end{itemize}
\end{rmrk}

\begin{rmrk}
\label{rmrk.costandup.filter}
Intuitively, $\clostordown{\nomathcal X}$ is trying to be the least ideal (Definition~\ref{defn.ideal}) containing ${\nomathcal X}$; it may fail to be an ideal if $\ttop\in\clostordown{\nomathcal X}$.

Similarly, $\clostandup{\nomathcal X}$ is moving in the direction of being a filter (Definition~\ref{defn.filter}) containing ${\nomathcal X}$, though it may fail to be a filter if either $\tbot\in\clostandup{\nomathcal X}$, or $\New{b}(b\ a)\act x\in \clostandup{\nomathcal X}$ and $\tall a.x\not\in \clostandup{\nomathcal X}$.
\end{rmrk}

If we look ahead to Definition~\ref{defn.n.closed} then we can state Proposition~\ref{prop.extra.filter} intuitively as follows: a small-supported cover of $\pp z$ naturally generates a \emph{strictly} small-supported cover.
\begin{prop}
\label{prop.extra.filter}
Suppose $z{\in}|\ns D|$ and $\nomathcal X\subseteq|\ns D|$ is small-supported, and suppose
$\pp z\subseteq\bigcup\pp{\nomathcal X}$.
Then there exists $\nomathcal X'\subseteq\nomathcal X_\tor$ (Figure~\ref{fig.lotsofstuff}) such that:
\begin{enumerate*}
\item
$\nomathcal X'\subseteq\nomathcal X_\tor$ is strictly small-supported, and
\item
$\pp z\subseteq\bigcup\pp{\nomathcal X'}$.
\end{enumerate*}
\end{prop}
\begin{proof}
If $z=\tbot$ then by Lemma~\ref{lemm.bullet.commute}(\ref{item.bullet.commute.tbot}) $\pp z=\varnothing$ and we take $\nomathcal X'=\varnothing$ (by convention, $\bigcup\varnothing=\varnothing$).
So we assume henceforth that $z\neq\tbot$.

If for every $x{\in}\nomathcal X$ there exists $x'{\in}\nomathcal X_\tor$ with $\supp(x\tor x')\subseteq\supp(z)$,
then we build $\nomathcal X'$ to contain $x\tor x'$ for each $x\in\nomathcal X$ and some corresponding choice of $x'\in\nomathcal X_\tor$.
By construction $\nomathcal X'$ is strictly supported by $\supp(z)$ and $\pp z\subseteq\bigcup\pp{\nomathcal X'}=\bigcup\pp{\nomathcal X}$.

So now suppose there exists $x{\in}\nomathcal X$ such that for any $x'\in\nomathcal X_\tor$ it is the case that $\supp(x\tor x')\not\subseteq\supp(z)$.
Recall $\text{-}{\downarrow}$ from Figure~\ref{fig.lotsofstuff} and define $\nomathcal Y_x$ by
$$
\nomathcal Y_x=\{x\tor x' \mid x'\in\nomathcal X_\tor \}{\downarrow} .
$$
We note some properties of $\nomathcal Y_x$:
\begin{enumerate*}
\item
$x{\in}\nomathcal Y_x$ by construction (note that $\tbot\in\nomathcal X_\tor$).
\item
$\ttop\not\in\nomathcal Y_x$, since $\supp(\ttop)=\varnothing\subseteq\supp(z)$.
\item
$\nomathcal Y_x$ is down-closed, by the use of ${\downarrow}$ (Definition~\ref{defn.uparrow}).
\item
If $y\in\nomathcal Y_x$ and $y'\in\nomathcal Y_x$ then $y\tor y'\in \nomathcal Y_x$, by the use of $\nomathcal X_\tor$ and ${\downarrow}$ in the construction of $\nomathcal Y_x$.
\item
Thus $\nomathcal Y_x$ is an ideal (Definition~\ref{defn.ideal}).\footnote{$\nomathcal Y_x$ is also small-supported by Theorem~\ref{thrm.no.increase.of.supp}, since $\nomathcal X$ and $x$ are small-supported.}
\item
$\bigcup\pp{\nomathcal Y_x}=\bigcup\pp{\nomathcal X}$, by construction.
\end{enumerate*}
There are now two possibilities: $z\in\nomathcal Y_x$ or $z\not\in\nomathcal Y_x$.
We treat each in turn:
\begin{itemize}
\item
\emph{Suppose $z\in\nomathcal Y_x$.}\quad
Then $z\leq x\tor x'$ for some $x'{\in}\nomathcal X_\tor$ and we can take
$\nomathcal X'=\{x\tor x'\}\subseteq\nomathcal X_\tor$.\footnote{It may not be that $\supp(x\tor x')\subseteq\supp(z)$, but we do not care; we only need $\nomathcal X'$ to be strictly supported.}
\item
\emph{Suppose $z\not\in\nomathcal Y_x$.}\quad
We noted above that $\nomathcal Y_x$ is a (small-supported) ideal, and by Lemma~\ref{lemm.uparrow.filter} $z{\uparrow}$ is a small-supported filter (recall we assumed above that $z{\neq}\tbot$).
By Theorem~\ref{thrm.maxfilt.zorn} there exists a prime filter $p$ such that $z\in p$ and $p\cap\nomathcal Y_x=\varnothing$.

By Definition~\ref{defn.pp} $p\in\pp z$ and by assumption $\pp z\subseteq\bigcup\pp{\nomathcal X}$, so there exists $x'\in\nomathcal X$ with $p\in\pp{x'}$, thus $x'\in p$, thus by condition~\ref{filter.up} of Definition~\ref{defn.filter} $x\tor x'\in p$.
However by assumption $x\tor x'\in \nomathcal Y_x$, a contradiction.
So this case is impossible.
\qedhere\end{itemize}
\end{proof}

\begin{prop}
\label{prop.even.stronger}
Suppose $z\in|\ns D|$ and suppose $\nomathcal Y{\subseteq}|\ns D|$ is small-supported.
Then if $\pp z\subseteq\bigcup\pp{\nomathcal Y}$ then $z\leq y$ for some $y\in\nomathcal Y_\tor$.
\end{prop}
\begin{proof}
If $z=\tbot$ then by Lemma~\ref{lemm.bullet.commute}(\ref{item.bullet.commute.tbot}) $\pp z=\varnothing$ and we may take $y=\tbot\in\nomathcal Y_\tor$.
So suppose $z\neq\tbot$; it follows using Corollary~\ref{corr.pp.injective} that $\pp z\neq\varnothing$ and therefore that $\nomathcal Y$ is nonempty.

Using Proposition~\ref{prop.extra.filter} we may assume without loss of generality
that $\nomathcal Y$ is \emph{strictly} small-supported.
Write
\begin{equation}
\label{eq.X}
\nomathcal X=\clostandup{\{x \mid \Exists{y{\in}\nomathcal Y}(z\leq y{\tor} x)\}}\subseteq|\ns D| .
\end{equation}
If $\tbot\in \nomathcal X$ then $\Exists{y{\in}\nomathcal Y}z\leq y$ and we are done.
So suppose $\tbot\not\in \nomathcal X$ so that $\nomathcal X$ satisfies condition~\ref{filter.proper} of Definition~\ref{defn.filter}.

Now $\nomathcal Y$ is nonempty, and since $z\leq y\tor\ttop$ for any $y{\in}\nomathcal Y$, also $\ttop\in \nomathcal X$ so that $\nomathcal X$ is nonempty.
We now observe that $\nomathcal X$ also satisfies conditions~\ref{filter.up}, \ref{filter.and}, and~\ref{filter.new} of Definition~\ref{defn.filter} and so is a filter:
\begin{enumerate*}
\setcounter{enumi}{1}
\item
\emph{$\nomathcal X$ is up-closed.}\quad
By the use of $\clostandup{\text{-}}$.
\item
\emph{$x\in \nomathcal X$ and $x'\in \nomathcal X$ imply $x\tand x'\in \nomathcal X$.}\quad
By the use of $\clostandup{\text{-}}$.
\item\label{item.check.filter.new}
\emph{If $\New{b}((b\ a)\act x\in \nomathcal X)$ then $\tall a.x\in \nomathcal X$.}\quad
Choose fresh $b$ (so $b\#z,x,\nomathcal X,\nomathcal Y$), so there exist $y_i{\in}\nomathcal Y$ for $1{\leq}i{\leq}n$ and $x_1,\dots,x_n$ such that $z\leq y_i{\tor}x_i$ for $1{\leq}i{\leq}n$, and $(b\ a)\act x\geq x_1{\tand}\dots{\tand}x_n$.

Now $b\#\nomathcal Y$ and by assumption above 
$\nomathcal Y$ is strictly small-supported, so that by Lemma~\ref{lemm.strict.support}(\ref{strict.to.supp.elements}) and Corollary~\ref{corr.supp.pp} also $b\#y_i$ for $1{\leq}i{\leq}n$.
We then have by part~\ref{b.bigger.3} of Lemma~\ref{lemm.b.bigger} that $z\leq y_i \tor\,\tall b.x_i$ for $1{\leq}i{\leq}n$.

Also, using Lemmas~\ref{lemm.tand.tall} and~\ref{lemm.tall.monotone} $\tall b.(b\ a)\act x\geq (\tall b.x_1){\tand}\dots{\tand}\tall b.x_n$.

It follows using Lemma~\ref{lemm.freshwedge.alpha} that $\tall a.x\in \nomathcal X$.
\end{enumerate*}
It is now useful to consider two distinct possibilities: $\nomathcal X\cap\clostordown{\nomathcal Y}\neq \varnothing$ or $\nomathcal X\cap\clostordown{\nomathcal Y}=\varnothing$.
We treat each in turn:
\begin{itemize*}
\item
\emph{Suppose $\nomathcal X\cap\clostordown{\nomathcal Y}\neq\varnothing$.}\quad
So take any $x\in \nomathcal X\cap\clostordown{\nomathcal Y}$.

Since $x\in\clostordown{\nomathcal Y}$, there exist $y_1',\dots,y_m'\in\nomathcal Y$ with $x\leq y_1'{\tor}\dots{\tor}y_m'$.

Since $x\in \nomathcal X$, there exist $y_1,\dots,y_n\in\nomathcal Y$ and $x_1,\dots,x_n\in \nomathcal X$ such that $z\leq y_i{\tor}x_i$ for $1{\leq}i{\leq}n$ and $x\geq x_1{\tand}\dots{\tand}x_n$.

We note that $z\leq (y_1{\tor}x_1)\tand\dots\tand(y_n{\tor}x_n)$ and by some easy calculations that
$$
z\leq x{\tor}y_1{\tor}\dots{\tor}y_n
\leq y_1'{\tor}\dots{\tor}y_m'\tor y_1{\tor}\dots{\tor}y_n ,
$$
and we are done.
\item
\emph{Now suppose $\nomathcal X\cap\clostordown{\nomathcal Y}=\varnothing$.}\quad
We noted above that $\ttop\in \nomathcal X$, so $\ttop\not\in\clostordown{\nomathcal Y}$ and it follows that $\clostordown{\nomathcal Y}$ is an ideal.

Also, $\nomathcal Y$ is is small-supported, and since $z$ is also small-supported we can apply Theorem~\ref{thrm.no.increase.of.supp} to equation~\ref{eq.X} to see that $\nomathcal X$ is small-supported.

Therefore by Theorem~\ref{thrm.maxfilt.zorn} $\nomathcal X\subseteq p$ for some prime filter $p$ such that $p\cap \clostordown{\nomathcal Y}=\varnothing$.
It is easy to check that $z\in \nomathcal X$, so that $z\in p$ and by Definition~\ref{defn.pp} $p\in\pp z$.
By assumption $\pp z\subseteq\bigcup\pp{\nomathcal Y}$ so there exists some $y\in\nomathcal Y$ with $p\in \pp y$, that is, $y\in p$.
But our assumption that $p\cap\clostordown{\nomathcal Y}=\varnothing$ implies that $p\cap\nomathcal Y=\varnothing$, contradicting this.
So this case is impossible.
\qedhere\end{itemize*}
\end{proof}

\subsection{Compactness}

Fix $\ns T$ a nominal $\sigma$-topological space.

\begin{defn}
\label{defn.n.closed}
Suppose $\mathcal U\subseteq\otop{\ns T}$ and $U\in\otop{\ns T}$.
\begin{itemize*}
\item
Say $\mathcal U$ \deffont{covers} $U$ when $\mathcal U$ is small-supported and $U\subseteq\bigcup \mathcal U$.
Call $\mathcal U$ a \deffont{cover} when it covers $|\ns T|$.
\item
Call $U$ \deffont{compact} when every cover of $U$ has a finite subcover.
Write $\ctop{\ns T}$ for the set of compact open sets of $\ns T$:
$$
\ctop{\ns T} = \{ U\in\otop{\ns T} \mid U\text{ is compact}\}
$$
\end{itemize*}
\end{defn}

Proposition~\ref{prop.really.do} looks familiar enough, but we are in a nominal context so we have to check facts about support.
It all works:
\begin{prop}
\label{prop.really.do}
Suppose $U,V\in\ctop{\ns T}$.
Then:
\begin{enumerate*}
\item
$\varnothing$ is compact.
\item
$U{\cup} V$ is compact.
\item
$\pi\act U$ is compact.
\end{enumerate*}
\end{prop}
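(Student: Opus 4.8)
The plan is to handle the three parts in turn, in each case taking an arbitrary cover, cutting it down using compactness of the given sets, and checking that whatever family we feed into the compactness hypothesis genuinely satisfies the strict-finite-support requirement built into Definition~\ref{defn.n.closed}. Part~1 is immediate: the empty family $\varnothing\subseteq\otop{\ns T}$ is a finite subcover of any cover of $\varnothing$, since $\bigcup\varnothing=\varnothing$.

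For part~2, given a cover $\mathcal W$ of $U\cup V$ (so $\mathcal W$ is strictly finitely supported with $U\cup V\subseteq\bigcup\mathcal W$), I would note that $\mathcal W$ is then a cover of $U$ and of $V$ as well. Compactness of $U$ and of $V$ yields finite subsets $\mathcal W_U,\mathcal W_V\subseteq\mathcal W$ covering $U$ and $V$ respectively, and then $\mathcal W_U\cup\mathcal W_V$ is a finite subset of $\mathcal W$ covering $U\cup V$. (By Lemma~\ref{lemm.strict.support} any finite family of finitely supported sets is automatically strictly finitely supported, so this causes no friction with Definition~\ref{defn.n.closed}; in any case the finite-subcover condition does not re-impose that requirement.)

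For part~3, given a cover $\mathcal W$ of $\pi\act U$, I would push it through $\pi^\mone$ pointwise. Using equivariance of unions (part~3 of Lemma~\ref{lemm.sub.bigcap}) we get $U=\pi^\mone\act(\pi\act U)\subseteq\bigcup(\pi^\mone\act\mathcal W)$; by condition~\ref{item.top.sub} of Definition~\ref{defn.nom.top} each $\pi^\mone\act W$ is again open; and by Proposition~\ref{prop.pi.supp}, if $A$ strictly supports $\mathcal W$ then $\pi^\mone\act A$, which is finite, strictly supports $\pi^\mone\act\mathcal W$. So $\pi^\mone\act\mathcal W$ is a cover of $U$, compactness of $U$ gives a finite $\mathcal V\subseteq\pi^\mone\act\mathcal W$ with $U\subseteq\bigcup\mathcal V$, and then $\pi\act\mathcal V\subseteq\mathcal W$ is finite with $\pi\act U\subseteq\bigcup(\pi\act\mathcal V)$.

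I do not anticipate a genuine obstacle: the whole content is the bookkeeping showing that restricting or permuting a cover preserves strict finite support, which follows at once from Lemma~\ref{lemm.strict.support} and the equivariance of support. The only place needing a moment's care is part~3, where one must confirm that $\pi^\mone\act\mathcal W$ is a cover in the precise sense of Definition~\ref{defn.n.closed} before invoking compactness of $U$.
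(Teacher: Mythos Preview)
Your proposal is correct. Parts~1 and~3 match the paper's argument (the paper just cites Theorem~\ref{thrm.equivar} and Proposition~\ref{prop.pi.supp} for part~3; you have spelt out the same equivariance reasoning explicitly).

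For part~2 you take a genuinely simpler route than the paper. The paper forms the auxiliary families $\{W\cap U\mid W\in\mathcal W\}$ and $\{W\cap V\mid W\in\mathcal W\}$, checks via Lemma~\ref{lemm.strict.support} and Theorem~\ref{thrm.no.increase.of.supp} that these are strictly finitely supported, extracts finite subcovers of $U$ and $V$ from them, and then reassembles. You observe instead that the original cover $\mathcal W$ already covers each of $U$ and $V$ directly (it is strictly finitely supported and $U,V\subseteq U\cup V\subseteq\bigcup\mathcal W$), so compactness of $U$ and $V$ can be applied to $\mathcal W$ itself with no intermediate construction. Your approach avoids the detour through intersections and the accompanying support bookkeeping; the paper's approach, on the other hand, exercises the strict-support machinery and makes the point that intersections of opens with compacts stay within the required support bound, which is a theme it returns to later.
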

\begin{proof}
\begin{enumerate*}
\item
There are two covers of $\varnothing$: the empty set of open sets and the set containing the empty set of points.
Both are finite.
\item
Suppose $U$ and $V$ are compact and $\mathcal W$ covers $U{\cup} V$.

By Theorem~\ref{thrm.no.increase.of.supp} $\supp(\{W{\cap} U \mid W{\in}\mathcal W\})\subseteq\supp(\mathcal W){\cup}\supp(U)$ so it is small-supported, and it follows that
$\{W{\cap} U \mid W{\in}\mathcal W\}$ covers $U$, and we obtain a finite subcover of $U$.

Reasoning similarly for $\{W{\cap} V \mid W{\in}\mathcal W\}$ we obtain a finite subcover of $V$.
Putting these two finite subcovers together, we obtain one of $U\cup V$.
\item
Using Theorem~\ref{thrm.equivar} and Proposition~\ref{prop.pi.supp}.
\qedhere\end{enumerate*}
\end{proof}

In a sense, Definition~\ref{defn.coherent} is a continuation of Proposition~\ref{prop.really.do}.

\begin{thrm}
\label{thrm.pp.x.clopen}
Suppose $\ns D\in\ndia$ and recall $F(\ns D)$ from Definition~\ref{defn.F}.
Then:
\begin{enumerate*}
\item
If $x\in|\ns D|$ then $\pp x$ is open and compact in $F(\ns D)$.
\item
If $U$ is open and compact in $F(\ns D)$ then $U=\pp x$ for some unique $x\in |\ns D|$.
\end{enumerate*}
\end{thrm}
\begin{proof}
We consider each part in turn:
\begin{enumerate}
\item
$\pp x$ is open by construction in Definition~\ref{defn.F}.
Now consider a cover $\mathcal U$ of $\pp x$.
By Definitions~\ref{defn.n.closed} and~\ref{defn.F}(\ref{F.strict}),\ $\mathcal U$ is a small-supported set of unions of small-supported sets of elements, and all these elements have the form $\pp y$ for $y{\in}|\ns D|$.

So we can write $\bigcup\mathcal U=\bigcup\pp{\mathcal Y}$ for some (by Theorem~\ref{thrm.no.increase.of.supp}) small-supported $\mathcal Y{\subseteq}|\ns D|$.
To find a finite subcover of $\mathcal U$, it would suffice to find a finite subset $\mathcal X'\subseteq\mathcal Y_\tor$ such that $\pp x\subseteq\bigcup\pp{\mathcal X'}$.

We use Proposition~\ref{prop.even.stronger}.
\item
By construction in Definition~\ref{defn.F} the open sets of $F(\ns D)$ are unions of small-supported sets of sets the form $\pp x$.
We assumed that $U$ is compact so it has a finite subcover $\pp x_1\cup\dots\cup\pp x_n$.
We use part~\ref{item.bullet.commute.tor} of Lemma~\ref{lemm.bullet.commute}.
Uniqueness is Corollary~\ref{corr.pp.injective}.
\qedhere\end{enumerate}
\end{proof}

\subsection{Coherent spaces: closure under $\sigma$, $\cap$ and $\protect\freshcap{a}$}
\label{subsect.coherence}

\emph{Coherence} usually means that the compact open sets are closed under lattice operations and generate all open sets via sets unions.
Our lattices have more structure, notably: a $\sigma$-action, and $\freshcap{a}$ from Definition~\ref{defn.nu.U}.
Also, our notion of `generating' open sets has nominal aspects to it; see condition~\ref{item.top.strict} of Definition~\ref{defn.nom.top}.

Definition~\ref{defn.coherent} is how we extend the notion of coherence to account for this structure.
Proposition~\ref{prop.FB.new} then checks that $F$ from Definition~\ref{defn.F} does indeed generate coherent spaces.

\begin{frametxt}
\begin{defn}
\label{defn.coherent}
Call a nominal $\sigma$-topological space $\ns T$ \deffont{coherent} when:
\begin{enumerate*}
\item
\label{item.coherent.sub}
If $U$ is open and compact then so is $U[a\sm u]$ for every $u\in|\ns T^\prg|$.
\item
$|\ns T|$ is (open and) compact, and
if $U$ and $V$ are open and compact then so is $U\cap V$.
\item\label{coherent.freshcap}
If $U$ is open and compact then so is $\freshcap{a}U$.
\item
Every open $U\in\otop{\ns T}$ is equal to $\bigcup\mathcal U$ for some small-supported $\mathcal U\subseteq\ctop{\ns T}$.
\end{enumerate*}
\end{defn}
\end{frametxt}

\begin{rmrk}
\label{rmrk.where.pointwise}
The $U[a\sm u]$ mentioned in Definition~\ref{defn.coherent} is the \emph{pointwise} action from Definition~\ref{defn.sub.sets}
$$
U[a\sm u]=\{p \mid \New{c} p[u\ms c]\in (c\ a)\act X\} .
$$
It is applicable to any set of points (not just a compact set of points) and is inherited from the $\amgis$-action $p[u\ms c]$ which we assumed on the underlying points.
\end{rmrk}

\begin{rmrk}
We rewrite Definition~\ref{defn.coherent} in less precise, but more intuitive language:
\begin{enumerate*}
\item
Compactness is closed under the $\sigma$-action, so the compact sets form a $\sigma$-algebra over $\ns T^\prg$.
\item
Compactness is closed under (possibly empty) sets intersection.
\item
Compactness is closed under universal quantification.
\item
Compact open sets are a small-supported (Subsection~\ref{subsect.strict.pow}) basis for all open sets.
\end{enumerate*}
\end{rmrk}

\begin{prop}
\label{prop.FB.new}
Suppose $\ns D\in\ndia$.
Then $F(\ns D)$ (Definition~\ref{defn.F}) is coherent.
\end{prop}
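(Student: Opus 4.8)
The plan is to verify each of the five conditions in Definition~\ref{defn.coherent} for $F(\ns D)$, using the key fact (Propositions~\ref{prop.pp.x.clopen} and~\ref{prop.clop.FB}) that the compact open sets of $F(\ns D)$ are \emph{exactly} the sets of the form $\pp x$ for $x\in|\ns D|$. Once we know this, each coherence condition reduces to a corresponding property of $\ns D$ together with the commutation lemmas for $\pp{\text{-}}$.

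First, for condition~\ref{item.coherent.sub}: if $U$ is compact open then $U=\pp x$ for some $x$, and by part~2 of Lemma~\ref{lemm.dup} $\pp x[a\sm u]=\pp{(x[a\sm u])}$, which is compact open by Proposition~\ref{prop.pp.x.clopen}. Condition~2 is Theorem~\ref{thrm.FB.amgis} together with $\pp\ttop=|\points(\ns D)|$ (part~\ref{item.bullet.commute.tbot} of Lemma~\ref{lemm.bullet.commute}) and Proposition~\ref{prop.pp.x.clopen}. Condition~3: if $U=\pp x$ and $V=\pp y$ are compact open, then by part~\ref{item.bullet.commute.tand} of Lemma~\ref{lemm.bullet.commute} $U\cap V=\pp{(x\tand y)}$, again compact open. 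Condition~4: if $U=\pp x$ is compact open, then by part~\ref{item.bullet.commute.tall} of Lemma~\ref{lemm.bullet.commute} $\freshcap{a}U=\freshcap{a}(\pp x)=\pp{(\tall a.x)}$, which is compact open by Proposition~\ref{prop.pp.x.clopen}; note this crucially uses that $\ns D$ has fresh-finite limits so that $\tall a.x$ exists in $|\ns D|$. Condition~5 is essentially the definition of the topology of $F(\ns D)$ in Definition~\ref{defn.F}: every open $U$ is by construction a strictly finitely supported union $\bigcup_{i\in I}\pp x_i$, and each $\pp x_i$ is compact open by Proposition~\ref{prop.pp.x.clopen}, so $\mathcal U=\{\pp x_i\mid i\in I\}\subseteq\ctop{F(\ns D)}$ is the required strictly finitely supported cover.

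I do not expect any serious obstacle here; the proposition is a clean corollary of the heavy lifting already done in Propositions~\ref{prop.pp.x.clopen} and~\ref{prop.clop.FB} and in Lemma~\ref{lemm.bullet.commute}. The one point requiring a small amount of care is condition~5, where one must make sure the strict finite support of the generating family is preserved --- but this is immediate from Definition~\ref{defn.F}, since the topology is \emph{generated under strictly finitely supported unions}, so any open set comes presented in exactly the required form. (Alternatively, by Lemma~\ref{lemm.pp.mathcal.U} one can always normalise the generating family to $\{\pp x\mid \supp(x)\subseteq\supp(U)\land \pp x\subseteq U\}$, which is strictly supported by $\supp(U)$.) Thus all five conditions hold and $F(\ns D)$ is coherent.
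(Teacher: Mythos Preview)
Your proposal is correct and follows essentially the same approach as the paper: identify compact opens with sets of the form $\pp x$ via Propositions~\ref{prop.pp.x.clopen} and~\ref{prop.clop.FB}, then dispatch each coherence condition using Lemma~\ref{lemm.dup} and Lemma~\ref{lemm.bullet.commute}, with condition~5 immediate from the definition of the topology. The only cosmetic difference is that the paper cites Proposition~\ref{prop.tall.cent.char} rather than Lemma~\ref{lemm.bullet.commute} for $\pp\ttop=|F(\ns D)|$, but these are equivalent.
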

\begin{proof}
By Theorem~\ref{thrm.pp.x.clopen} we can identify the compact open sets of $F(\ns D)$ with sets of the form $\pp x$ for $x\in|\ns D|$.
We now reason as follows:
\begin{enumerate*}
\item
By Lemma~\ref{lemm.dup} $\pp x[a\sm u]=\pp{(x[a\sm u])}$.
\item
By part~1 of Lemma~\ref{lemm.tall.cent.char} $\pp\ttop=|F(\ns D)|$ (every point contains $\ttop$), and by part~\ref{item.bullet.commute.tand} of Lemma~\ref{lemm.bullet.commute} $\pp x\cap\pp y=\pp{(x\tand y)}$.
\item
By part~\ref{item.bullet.commute.tall} of Lemma~\ref{lemm.bullet.commute} $\pp{(\tall a.x)}=\freshcap{a}\pp x$.
\item
By construction in Definition~\ref{defn.F} open sets are small-supported unions of the $\pp x$.
\qedhere\end{enumerate*}
\end{proof}

\subsection{Completely prime filters in a coherent space}

Recall the notion of \emph{prime} filter from Definition~\ref{defn.prime.filter}.
A stronger property will also be of interest; this definition is standard:
\begin{defn}
Suppose $\ns T$ is a nominal $\sigma$-topological space and suppose $\mathcal U\subseteq\otop{\ns T}$ is a filter (Definition~\ref{defn.filter}) of open sets in $\ns T$.

Call a filter $\mathcal U\subseteq\otop{\ns T}$ \deffont{completely prime} when:
\begin{itemize*}
\item
if $\mathcal V\subseteq\otop{\ns T}$ is small-supported and $\bigcup\mathcal V\in \mathcal U$,
\item
then $V\in\mathcal U$ for some $V\in \mathcal V$.\footnote{A \emph{prime} filter satisfies this property---for \emph{finite} $\mathcal V$.}
\end{itemize*}
\end{defn}

For coherent spaces, a more economical characterisation will be useful:
\begin{lemm}
\label{lemm.natural.bij}
If $\ns T$ is coherent then the completely prime filters of open sets are in a natural bijection with the prime filters of compact open sets, with the bijection given by:
\begin{itemize*}
\item
A completely prime filter $\mathcal U'\subseteq\otop{\ns T}$ corresponds to $\mathcal U=\mathcal U'\cap\ctop{\ns T}$.
\item
A prime filter $\mathcal U\subseteq\ctop{\ns T}$ corresponds to $\mathcal U'=\{U'{\in}\otop{\ns T}\mid \Exists{U{\in}\mathcal U}U\subseteq U'\}$, the \deffont{up-closure} of $\mathcal U$ in $\otop{\ns T}$.
\end{itemize*}
\end{lemm}
\begin{proof}
Suppose $\mathcal U'$ is a completely prime filter in $\otop{\ns T}$.
We will show that $\mathcal U=\mathcal U'\cap\ctop{\ns T}$ is a filter (Definition~\ref{defn.filter}) and is prime (Definition~\ref{defn.prime.filter}).
\begin{enumerate}
\item
$\varnothing\not\in\mathcal U$ since $\varnothing\not\in\mathcal U'$.
\item
If $U\in\mathcal U$ and $U'\in\ctop{\ns T}$ and $U\subseteq U'$ then $U'\in\mathcal U'$ so $U'\in\mathcal U$.
\item
It follows similarly that $U,U'\in\mathcal U$ implies $U\cap U'\in\mathcal U$.
\item
Suppose $U\in\ctop{\ns T}$ and suppose $\New{b}(b\ a)\act U\in\mathcal U$.
Then also $\New{b}(b\ a)\act U\in\mathcal U'$ so by condition~\ref{filter.new} of Definition~\ref{defn.filter} (since $\mathcal U'$ is a filter) the filter condition on $\mathcal U'$,\ $\freshcap{a}U\in\mathcal U'$.
By coherence (Definition~\ref{defn.coherent} condition~\ref{coherent.freshcap}) $\freshcap{a}U\in\ctop{\ns T}$, therefore $\freshcap{a}U\in\mathcal U$ as required.
\end{enumerate}
By part~2 of Proposition~\ref{prop.really.do} compactness is closed under finite unions and it follows that $\mathcal U$ is prime.

Conversely suppose $\mathcal U\subseteq\ctop{\ns T}$ is a prime filter.
First we will show that $\mathcal U'=\{U'{\in}\otop{\ns T}\mid \Exists{U{\in}\mathcal U}U\subseteq U'\}$ is a filter, then we will show it is completely prime.
\begin{enumerate}
\item
$\varnothing\not\in\mathcal U'$ since $\varnothing\not\in\mathcal U$.
\item
If $U\in\mathcal U'$ and $U'\in\otop{\ns T}$ and $U\subseteq U'$ then by construction $U'\in\mathcal U'$.
\item
It follows similarly that $U,U'\in\mathcal U'$ implies $U\cap U'\in\mathcal U$.
\item
Suppose $U'\in\otop{\ns T}$ and suppose $\New{b}(b\ a)\act U'\in\mathcal U'$.
We need to show that $\freshcap{a}U'\in\mathcal U'$.

Choose fresh $b$ (so $b\#U',\mathcal U,\mathcal U'$).
By Theorem~\ref{thrm.new.equiv} $(b\ a)\act U'\in\mathcal U'$ and this means that there exists $U\in\mathcal U$ such that $U\subseteq(b\ a)\act U'$.
It follows by Theorem~\ref{thrm.equivar} and Corollary~\ref{corr.stuff} that $\New{b'}(b'\ b)\act U\in\mathcal U$.
By condition~\ref{filter.new} of Definition~\ref{defn.filter} (since $\mathcal U$ is a filter) $\freshcap{b'}(b'\ b)\act U\in\mathcal U$ so that by Lemma~\ref{lemm.all.alpha} (since $b'\#U$) $\freshcap{b}U\in\mathcal U$.

By assumption $U\subseteq (b\ a)\act U'$ so by Lemma~\ref{lemm.tall.monotone} and Theorem~\ref{thrm.all.closed} also $\freshcap{b}(b\ a)\act U'\in\mathcal U'$.
By Lemma~\ref{lemm.all.alpha} again (since $b\#U'$) $\freshcap{a}U'\in\mathcal U'$, as required.
\end{enumerate}

Now suppose $\mathcal V\subseteq\otop{\ns T}$ is small-supported and suppose $\bigcup\mathcal V\in\mathcal U'$, so that $U\subseteq\bigcup\mathcal V$ for some $U\in\mathcal U$.
But this just states that $\mathcal V$ covers $U$, and by compactness $\mathcal V$ has a finite subcover $\{V_1,\dots,V_n\}\subseteq\mathcal V$.
It follows that $\bigcup\{V_1\cap U,\dots,V_n\cap U\}=U\in\mathcal U$.
Since $\mathcal U$ is prime it follows that $V_i\cap U\in\mathcal U$ for some $i$, and therefore that $V_i\in\mathcal U'$.

It is routine to verify that
the correspondences between $\mathcal U$ and $\mathcal U'$ defined above are bijective.
\end{proof}

\subsection{Impredicativity}

We saw in Subsection~\ref{subsect.impredicative.nom.dist.lat} and Definition~\ref{defn.D.impredicative} a notion of \emph{impredicativity}, based on the idea that the things we substitute for should map to the things we substitute in.
In the context of a topological space, this means that $\ns T^\prg$ should map to open sets.
This is Definition~\ref{defn.impredicative.top}, and Theorem~\ref{thrm.T.to.G.obj} shows how $F(\ns D)$ inherits any impredicative structure of $\ns D$.

We can think of Definition~\ref{defn.impredicative.top} as a dual to Definition~\ref{defn.D.impredicative}, for the nominal $\sigma$-topological spaces from Definition~\ref{defn.nom.top}:
\begin{frametxt}
\begin{defn}
\label{defn.impredicative.top}
An \deffont{impredicative nominal $\sigma$-topological space} is a pair
$(\ns T,\prg_{\ns T})$ where:
\begin{enumerate*}
\item
$\ns T$ is a nominal $\sigma$-topological space (Definition~\ref{defn.nom.top}).
\item
$\prg_{\ns T}:\ns T^\prg\to\ctop{\ns T}$ is a morphism of $\sigma$-algebras (Definition~\ref{defn.morphism.sigma.alg}).
\item\label{size.limit.top}
$\ctop{\ns T}$ has cardinality no greater than $\size(\mathbb A)$ (Definition~\ref{defn.atoms}).
\end{enumerate*}
\end{defn}
\end{frametxt}

\begin{nttn}
\label{nttn.impredicative.T}
Following Notation~\ref{nttn.impredicative.D} we introduce some notation for Definition~\ref{defn.impredicative.top}:
\begin{itemize*}
\item
We may drop subscripts and write $\prg u$ for $\prg_{\ns T} u$ where $u\in|\ns T^\prg|$.
\item
We may write $\prg a$ for $\prg_{\ns T}(a_{\ns T^\prg})$ where $a_{\ns T^\prg}=\tf{atm}_{\ns D^\prg}(a)$ (Definition~\ref{defn.term.sub.alg}).
\item
We may write $\prg{\ns T}$ for $\{\prg u\mid u\in|\ns T^\prg|\}\subseteq\ctop{\ns T}$ and call this set the \deffont{programs} of $\ns T$.
\end{itemize*}
The exposition in and following Notation~\ref{nttn.impredicative.D} is also valid here, so we do not repeat it.
\end{nttn}

\begin{thrm}
\label{thrm.FB.totsep.comp}
If $\ns D\in\india$ (so $\ns D$ is impredicative) then $F(\ns D)$ is also naturally impredicative.
\end{thrm}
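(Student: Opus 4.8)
The plan is to transport the impredicative structure of $\ns D$ along the already-established machinery. We are given $\ns D\in\india$, so by Definition~\ref{defn.D.impredicative} there is a $\sigma$-algebra morphism $(\prg_{\ns D},\f{id}):\ns D^\prg\to\ns D$. We already know from Theorem~\ref{thrm.FB.amgis} that $F(\ns D)$ is a nominal $\sigma$-topological space and from Proposition~\ref{prop.FB.new} that it is coherent; moreover by Propositions~\ref{prop.pp.x.clopen} and~\ref{prop.clop.FB} the compact open sets of $F(\ns D)$ are exactly the sets $\pp x$ for $x\in|\ns D|$, and $x\mapsto\pp x$ is injective (Corollary~\ref{corr.pp.injective}). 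So the only thing to produce is a $\sigma$-algebra morphism $\prg_{F(\ns D)}:F(\ns D)^\prg\to\ctop{F(\ns D)}$, and the obvious candidate is $\prg_{F(\ns D)} = (u\mapsto \pp{(\prg_{\ns D} u)}, \f{id})$, exactly as in Definition~\ref{defn.pp.B} for $\pp{\ns D}$.

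First I would record that $\prg_{F(\ns D)} u=\pp{(\prg_{\ns D} u)}$ indeed lands in $\ctop{F(\ns D)}$: this is immediate from Proposition~\ref{prop.pp.x.clopen}, since $\prg_{\ns D} u\in|\ns D|$. Then I would verify the three conditions of Definition~\ref{defn.morphism.sigma.alg} for the pair $(u\mapsto\pp{(\prg_{\ns D} u)},\f{id})$ as a morphism from $\ns D^\prg$ to the $\sigma$-algebra $\ctop{F(\ns D)}$ (whose $\sigma$-algebra structure over $\ns T^\prg=\ns D^\prg$ comes from the preceding lemma, via condition~\ref{item.coherent.sub} of Definition~\ref{defn.coherent} and Proposition~\ref{prop.pow.sub.algebra}). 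Condition~\ref{item.f.prg.maps.atoms.to.atoms} (atoms to atoms): here $\ctop{F(\ns D)}^\prg=\ns D^\prg$ and the second component is $\f{id}$, so $\f{id}(a_{\ns D^\prg})=a_{\ns D^\prg}$ is trivial. Equivariance: $\pi\act\pp{(\prg_{\ns D} u)}=\pp{(\pi\act\prg_{\ns D} u)}=\pp{(\prg_{\ns D}(\pi\act u))}$ using part~1 of Lemma~\ref{lemm.dup} and equivariance of $\prg_{\ns D}$ (which holds since $\prg_{\ns D}$ is a $\sigma$-algebra morphism). Commutation with the $\sigma$-action: $\pp{(\prg_{\ns D} u)}[a\sm v]=\pp{((\prg_{\ns D} u)[a\sm v])}=\pp{(\prg_{\ns D}(u[a\sm v]))}$ using part~2 of Lemma~\ref{lemm.dup} and condition~\ref{item.f.commutes.with.sigma} of Definition~\ref{defn.morphism.sigma.alg} for $\prg_{\ns D}$ (with $f^\prg_{\ns D}=\f{id}$); and this equals $\prg_{F(\ns D)}(u[a\sm v])$, while the first-component condition for $\f{id}$ is again trivial. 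This gives that $(\prg_{F(\ns D)},\f{id})$ is a $\sigma$-algebra morphism $F(\ns D)^\prg\to\ctop{F(\ns D)}$, which is exactly what Definition~\ref{defn.impredicative.top} requires, so $(F(\ns D),\prg_{F(\ns D)})$ is an impredicative nominal $\sigma$-topological space.

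I do not expect a serious obstacle here: everything is a matter of pushing the two parts of Lemma~\ref{lemm.dup} through the morphism conditions, and the fact that the termlike-algebra component of $\prg_{\ns D}$ is the identity (because $\ns D$ is impredicative with $(\prg_{\ns D},\f{id})$) keeps the bookkeeping light. The one point to be careful about is that $\ctop{F(\ns D)}$ really does carry the $\sigma$-algebra structure being used, which is why one invokes the lemma immediately preceding Definition~\ref{defn.impredicative.top} (compactness closed under $[a\sm u]$, via coherence) together with Proposition~\ref{prop.pow.sub.algebra}; and that the identification of $\ctop{F(\ns D)}$ with $\{\pp x\mid x\in|\ns D|\}$ is used silently but is legitimate by Propositions~\ref{prop.pp.x.clopen} and~\ref{prop.clop.FB}. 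So the proof is essentially: \emph{take $\prg_{F(\ns D)} u=\pp{(\prg_{\ns D} u)}$, and check it is a $\sigma$-algebra morphism using Lemma~\ref{lemm.dup}.}
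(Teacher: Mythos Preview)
Your proposal is correct and takes exactly the same approach as the paper: define $\prg_{F(\ns D)} u=\pp{(\prg_{\ns D} u)}$ and observe this is a $\sigma$-algebra morphism. The paper's proof is a single line that just states this definition (noting injectivity via Theorem~\ref{thrm.pp.iso}) and leaves the morphism verification implicit; your careful unpacking via Lemma~\ref{lemm.dup} and Proposition~\ref{prop.pp.x.clopen} is precisely the routine check the paper elides.
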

\begin{proof}
We take $F(\ns D)^\prg=\ns D^\prg$ and $\prg_{F(\ns D)} u=\pp{(\prg_{\ns D} u)}$.
(In fact, this is an injection by Theorem~\ref{thrm.pp.iso}.)
\end{proof}

\subsection{The map $G$ from coherent spaces to distributive lattices}

\begin{defn}
\label{defn.G}
Suppose $\ns T$ is a coherent (Definition~\ref{defn.coherent}) nominal $\sigma$-topological space.
\begin{frametxt}
Define $G(\ns T)$
as follows:
\begin{itemize*}
\item
$|G(\ns T)|=\ctop{\ns T}$ (compact opens) and $G(\ns T)^\prg=\ns T^\prg$.
\item
Given $U\in|G(\ns T)|$ and $u\in|G(\ns T)^\prg|{=}|\ns T^\prg|$ define $\pi\act U$ and $U[a\sm u]$ following Definition~\ref{defn.sub.sets}.
\item
$\ttop$,
$\tand$, $\tbot$, $\tor$, and $\tall$ are interpreted as the whole underlying set $\ns T$,
set intersection $\cap$, the empty set $\varnothing$, set union $\cup$, and $\freshcap{a}$ from Definition~\ref{defn.nu.U}.
\end{itemize*}
\end{frametxt}
\end{defn}

\begin{rmrk}
We unpack Definition~\ref{defn.sub.sets} for $U{\in}\ctop{\ns T}$ for the reader's convenience:
$$
\pi\act U=\{\pi\act \topel\mid \topel\in U\}
\qquad
U[a\sm u]=\{\topel\mid \New{c}\topel[u\ms c]\in (c\ a)\act U\} .
$$
\end{rmrk}

\begin{thrm}
\label{thrm.T.to.G.obj}
Continuing Definition~\ref{defn.G}, if $\ns T$ is coherent then
$G(\ns T)$ is a nominal distributive lattice with $\tall$.

Furthermore, if $\ns T$ is impredicative then so naturally is $G(\ns T)$.
\end{thrm}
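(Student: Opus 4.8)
The plan is to realise $G(\ns T)$ as a sub-structure of $\powsigma(\ns P)$, where $\ns P=(|\ns T|,\act,\ns T^\prg,\amgis)$ is the $\amgis$-algebra of points underlying $\ns T$, and to import wholesale the nominal-distributive-lattice-with-$\tall$ structure that $\powsigma(\ns P)$ already carries by Theorem~\ref{thrm.powerset}. The coherence of $\ns T$ (Definition~\ref{defn.coherent}) is precisely what guarantees that the relevant suprema, infima, and fresh limits of $\powsigma(\ns P)$ stay inside $\ctop{\ns T}$.

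First I would note that, by the conditions imposed on open sets in Definition~\ref{defn.nom.top} (unpacked there as finite support plus the two $\powsigma$-conditions, cf.\ Lemma~\ref{lemm.X.sub.fresh.alpha}), every open set of $\ns T$---and in particular every compact open---lies in $|\powsigma(\ns P)|$, so $\ctop{\ns T}\subseteq|\powsigma(\ns P)|$. Each compact open has finite support, so $(\ctop{\ns T},\act,\subseteq)$ is a nominal poset. Coherence together with Proposition~\ref{prop.really.do} gives that $\ctop{\ns T}$ contains $|\ns T|$ and $\varnothing$ and is closed under $\cap$, $\cup$, $\freshcap{a}$, and the $\sigma$-action; moreover these operations agree with those of the ambient $\powsigma(\ns P)$, and $\ctop{\ns T}$ is already a $\sigma$-algebra over $\ns T^\prg$ (the Lemma immediately preceding Definition~\ref{defn.impredicative.top}, via Proposition~\ref{prop.pow.sub.algebra}).

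Now I would verify the three clauses of Definition~\ref{defn.FOLeq} in turn. For fresh-finite limits and finite colimits: $|\ns T|$ and $\varnothing$ are top and bottom, $U\cap V$ and $U\cup V$ are meet and join as for ordinary sets, and $\freshcap{a}U$ is the fresh limit $\freshwedge{a}U$---by Corollary~\ref{corr.all.freshwedge} it is the $\subseteq$-greatest member of $\{Z\in|\powsigma(\ns P)|\mid Z\subseteq U,\ a\#Z\}$, and since coherence condition~4 places $\freshcap{a}U$ in $\ctop{\ns T}$, it is \emph{a fortiori} greatest in the smaller set $\{Z\in\ctop{\ns T}\mid Z\subseteq U,\ a\#Z\}$. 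For distributivity: the two identities of Definition~\ref{defn.distrib} hold in $\powsigma(\ns P)$ by Theorem~\ref{thrm.powerset}, and since the operations coincide they hold in $\ctop{\ns T}$. For a compatible $\sigma$-structure: the equations of Definition~\ref{defn.fresh.continuous} are instances of Lemma~\ref{lemm.sub.bigcap} (for $\cap$ and $\cup$, a two-element index set being strictly finitely supported) and Lemma~\ref{lemm.all.sub.commute} (for $\freshcap{a}$), and monotonicity of the $\sigma$-action is part~4 of Lemma~\ref{lemm.sub.bigcap}. This shows $G(\ns T)\in\ndia$. For the ``furthermore'' clause: if $\ns T$ is impredicative then Definition~\ref{defn.impredicative.top} provides a $\sigma$-algebra morphism $\prg_{\ns T}:\ns T^\prg\to\ctop{\ns T}$; since $G(\ns T)^\prg=\ns T^\prg$ and $|G(\ns T)|=\ctop{\ns T}$, the pair $(\prg_{\ns T},\f{id})$ is precisely a $\sigma$-algebra morphism $G(\ns T)^\prg\to G(\ns T)$, which is exactly what Definition~\ref{defn.D.impredicative} demands, so $G(\ns T)$ is impredicative.

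\textbf{Main obstacle.} I expect the only genuinely delicate point to be the identification of the abstract fresh-finite limit $\freshwedge{a}U$ of Definition~\ref{defn.fresh.finite.limit} with the concrete $\freshcap{a}U$ of Definition~\ref{defn.nu.U} \emph{inside} $\ctop{\ns T}$, rather than merely inside the much larger $\f{pow}(\ns P)$: this is exactly where coherence condition~4 is indispensable, since without it the greatest compact-open $a$-fresh subset of $U$ need not coincide with the greatest finitely supported one. Everything else is a routine transfer of sets identities already established for $\powsigma$.
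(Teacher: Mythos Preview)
Your proposal is correct and follows essentially the same approach as the paper: realise $G(\ns T)=\ctop{\ns T}$ as a substructure of $\powsigma(\ns P)$, invoke Theorem~\ref{thrm.powerset} for the ambient lattice structure, and use Proposition~\ref{prop.really.do} together with the coherence conditions to check closure under the operations; for impredicativity take $\prg_{G(\ns T)}=\prg_{\ns T}$. Your write-up is considerably more detailed than the paper's three-line proof---in particular your observation that $\freshcap{a}U$, being the greatest $a$-fresh element below $U$ in all of $\powsigma(\ns P)$ and lying in $\ctop{\ns T}$ by coherence condition~4, is \emph{a fortiori} the fresh-finite limit in the substructure, is exactly the point the paper leaves implicit in ``We use Theorem~\ref{thrm.powerset}''.
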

\begin{proof}
By Proposition~\ref{prop.really.do} $\tbot$, $\ttop$,
$\tor$, and the permutation action give results in $|G(\ns T)|$.
By our assumption that $\ns T$ is coherent, so do $\tand$, the $\sigma$-action and $\tall$.
We use Theorem~\ref{thrm.powerset}.

Now suppose $\ns T$ is impredicative, so it is equipped with a 
$\sigma$-algebra morphism $\prg_{\ns T}:\ns T^\prg\to\ctop{\ns T}$ (Definition~\ref{defn.morphism.sigma.alg}).
We take $\prg_{G(\ns T)}=\prg_{\ns T}$.
\end{proof}

\begin{prop}
\label{prop.ess.surj}
If $\ns D\in\ndia/\india$ then $GF(\ns D)$ is equal to $\pp{\ns D}$ from Definition~\ref{defn.pp.B}, and the map $x\mapsto \pp x$ is an isomorphism in $\ndia/\india$.
\end{prop}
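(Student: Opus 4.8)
The plan is to unfold the definitions of $F$, $G$, and $\pp{\ns D}$ and observe that the two constructions literally produce the same data. First I would recall that by Propositions~\ref{prop.pp.x.clopen} and~\ref{prop.clop.FB}, the compact open sets of $F(\ns D)$ are exactly the sets of the form $\pp x$ for $x\in|\ns D|$, with $x$ uniquely determined (Corollary~\ref{corr.pp.injective}). Hence $|GF(\ns D)|=\ctop{F(\ns D)}=\{\pp x\mid x\in|\ns D|\}=|\pp{\ns D}|$ and $GF(\ns D)^\prg=F(\ns D)^\prg=\ns D^\prg=(\pp{\ns D})^\prg$. Next I would check that the structure matches on the nose: the permutation and $\amgis$-actions on $GF(\ns D)$ are the pointwise ones from Definition~\ref{defn.sub.sets} computed relative to the $\amgis$-algebra $\points(\ns D)$, which is precisely how $\pp{\ns D}$ is defined in Definition~\ref{defn.pp.B}; and the lattice operations $\ttop,\tand,\tbot,\tor,\tall$ in $G(\ns T)$ are $|\ns T|,\cap,\varnothing,\cup,\freshcap{a}$ (Definition~\ref{defn.G}), which by Lemma~\ref{lemm.bullet.commute} agree with $\pp\ttop,\pp x\cap\pp y,\pp\tbot,\pp x\cup\pp y,\pp{(\tall a.x)}$, i.e.\ with the operations inherited by $\pp{\ns D}$ as a sub-structure via $x\mapsto\pp x$. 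In the impredicative case, $\prg_{G F(\ns D)}=\prg_{F(\ns D)}$ which by Theorem~\ref{thrm.FB.totsep.comp} sends $u\mapsto\pp{(\prg_{\ns D} u)}$, matching $\prg_{\pp{\ns D}}$ from Definition~\ref{defn.pp.B}. So $GF(\ns D)=\pp{\ns D}$ as objects of $\ndia$ (respectively $\india$).

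For the second assertion, that $x\mapsto\pp x$ is an isomorphism $\ns D\to\pp{\ns D}$ in $\ndia/\india$, I would simply invoke Theorem~\ref{thrm.pp.iso}, which already states exactly this: the pair $(x\mapsto\pp x,\ u\mapsto u)$ is a morphism of $\sigma$-algebras (Lemma~\ref{lemm.dup}), commutes with $\ttop,\tand,\tbot,\tor,\tall$ (Lemma~\ref{lemm.bullet.commute}), is surjective by construction of $|\pp{\ns D}|$, and is injective by Corollary~\ref{corr.pp.injective}; in the impredicative case it additionally commutes with $\prg$ by the last clause of Definition~\ref{defn.pp.B}. An isomorphism of nominal sets whose inverse is also a morphism (here the inverse is $\pp x\mapsto x$, well-defined by injectivity and a morphism because the forward map is and the structures are defined so as to transport) gives the claimed isomorphism.

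I do not expect a serious obstacle here; the content of the proposition is essentially bookkeeping, confirming that the two routes to $\pp{\ns D}$ coincide. The one place requiring a little care is the identification $\ctop{F(\ns D)}=\{\pp x\mid x\in|\ns D|\}$ together with the fact that the $\sigma$-action on $GF(\ns D)$ (defined as the restriction of the pointwise action on $\f{pow}(\points(\ns D))$ to compact opens) genuinely coincides with the $\sigma$-action on $\pp{\ns D}$ rather than merely being isomorphic to it — but this is immediate once one notes both are the same pointwise formula $X[a\sm u]=\{p\mid p[u\ms a]\in X\}$ applied to the same sets $\pp x$, with closure under that action guaranteed by Lemma~\ref{lemm.dup}. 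Thus the whole argument is: unfold Definitions~\ref{defn.F}, \ref{defn.G}, \ref{defn.pp.B}, apply Propositions~\ref{prop.pp.x.clopen} and~\ref{prop.clop.FB} and Lemmas~\ref{lemm.dup} and~\ref{lemm.bullet.commute} to see the data agree, and cite Theorem~\ref{thrm.pp.iso} for the isomorphism claim.
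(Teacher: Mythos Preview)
Your proposal is correct and follows essentially the same approach as the paper: use Propositions~\ref{prop.pp.x.clopen} and~\ref{prop.clop.FB} to identify $|GF(\ns D)|$ with $|\pp{\ns D}|$, then invoke Theorem~\ref{thrm.pp.iso} for the isomorphism. The paper's proof is extremely terse (two sentences), leaving implicit the structural checks (that the $\sigma$-action, lattice operations, and $\prg$ agree) which you spell out carefully; your added detail is sound and welcome, but not a different route.
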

\begin{proof}
By Theorem~\ref{thrm.pp.x.clopen} $|GF(\ns D)|=|\pp{\ns D}|$.
We use Theorem~\ref{thrm.pp.iso}.
\end{proof}

\subsection{Sober spaces}

\begin{frametxt}
\begin{defn}
\label{defn.sober}
Call a nominal $\sigma$-topological space $\ns T$ \deffont{sober} when:
\begin{enumerate}
\item\label{sober.standard}
If $\mathcal U\subseteq\otop{\ns T}$ is a completely prime filter then there exists a unique $\topel_{\mathcal U}\in|\ns T|$ such that
$$
\Forall{U{\in}\otop{\ns T}}U\in\mathcal U
\quad\text{if and only if}\quad
\topel_{\mathcal U}\in U.
$$
\item\label{sober.all}
If $\topel\in|\ns T|$ and $U\in\ctop{\ns T}$ then
$$
\New{b}\topel\in (b\ a)\act U
\quad\text{implies}\quad
\topel\in\freshcap{a}U.
$$
\end{enumerate}
\end{defn}
\end{frametxt}

\begin{rmrk}
\label{rmrk.sober.second.condition}
Condition~\ref{sober.standard} of Definition~\ref{defn.sober} is the standard notion of sobriety.
Intuitively, it states that completely prime filters characterise the underlying points of the space.

Condition~\ref{sober.all} is clearly related to condition~\ref{filter.new} of Definition~\ref{defn.filter}, and carries the intuition that universal quantification is determined by the behaviour at a fresh atom (see Remark~\ref{rmrk.interpret.new.filter} and the discussion following Definition~\ref{defn.filter}).
We use it for just that in Proposition~\ref{prop.tast.filter}.
More on this in Remark~\ref{rmrk.the.structure}.
\end{rmrk}

Lemma~\ref{lemm.topel.sober.equiv} is related to Proposition~\ref{prop.these.are.equivalent}. 
Technically, it will be useful later in Proposition~\ref{prop.G.funct}.
Intuitively, it corresponds to the quantifier intro- and elim-rules in logic that if $a$ is not free in $\Gamma$ then $\Gamma\cent\phi$ is derivable if and only if $\Gamma\cent\tall a.\phi$ is derivable (see the discussion in Remark~\ref{rmrk.interpret.new.filter}):
\begin{lemm}
\label{lemm.topel.sober.equiv}
Suppose $\ns T$ is a sober nominal $\sigma$-topological space and $U\in\ctop{\ns T}$.
Then $\topel\in\freshcap{a}U$ if and only if $\New{b}\topel\in(b\ a)\act U$.
\end{lemm}
\begin{proof}
The right-to-left implication is condition~\ref{sober.all} of Definition~\ref{defn.sober}.
For the left-to-right implication, suppose $\topel\in\freshcap{a}U$.
By Definition~\ref{defn.nu.U} $\topel\in U[a\sm u]$ for all $u{\in}|\ns T^\prg|$ and (just as in the proof of Proposition~\ref{prop.these.are.equivalent}(\ref{these.are.equivalent.1})) $\topel\in U[a\sm b]$ for fresh $b$ so using Lemma~\ref{lemm.sub.alpha} $\New{b}\topel\in (b\ a)\act U$.
\end{proof}

\begin{defn}
\label{defn.qq}
Suppose $\ns T$ is a nominal $\sigma$-topological space 
and $\topel\in|\ns T|$.
Define
\begin{frameqn}
\qq \topel=\{U\in\ctop{\ns T} \mid \topel\in U\},
\end{frameqn}
so that
$$
U\in \qq\topel \liff \topel\in U.
$$
\end{defn}

Recall from Definition~\ref{defn.F} that if $\ns D\in\india$ then $|F(\ns D)|$ is the set of prime filters in $\ns D$.
\begin{prop}
\label{prop.tast.filter}
Suppose a nominal $\sigma$-topological space $\ns T$ is coherent (Definition~\ref{defn.coherent}) and suppose $\topel\in|\ns T|$.
Then $\qq \topel$ is a prime filter in $G(\ns T)$, so that $\qq\topel$ is an element of $|FG(\ns T)|$.
\end{prop}
\begin{proof}
First we check that $\qq\topel$ is a filter (Definition~\ref{defn.filter}).
Conditions~1, 2, and~3 of Definition~\ref{defn.filter} are easy to check.
For condition~\ref{filter.new} it suffices to show that if $U$ is open and compact in $\ns T$ and $\New{b}\topel\in (b\ a)\act U$ then $\topel\in \freshcap{a}U$.
This is condition~\ref{sober.all} of Definition~\ref{defn.sober}.

It is a fact that $\qq \topel$ is prime, since if $\topel\in U\cup V$ then $\topel\in U$ or $\topel\in V$.
\end{proof}

\begin{corr}
\label{corr.qq.max}
Suppose a nominal $\sigma$-topological space $\ns T$ is coherent and sober.
Then $\mathcal U$ is a prime filter in $G(\ns T)$ if and only if $\mathcal U=\qq \topel$ for some $\topel\in|\ns T|$, and that $\topel$ is unique.

As a corollary, the map $\topel\mapsto\qq\topel$ is a bijection between $|\ns T|$ and $|FG(\ns T)|$.
\end{corr}
\begin{proof}
Suppose we are given $\topel\in|\ns T|$.
By Proposition~\ref{prop.tast.filter} $\qq\topel$ is a prime filter in $G(\ns T)$, so we map $\topel$ to $\qq\topel\in|FG(\ns T)|$.

Conversely suppose we are given $\mathcal U{\in}|FG(\ns T)|$, so $\mathcal U$ is a prime filter in $G(\ns T)$.
By Lemma~\ref{lemm.natural.bij} $\mathcal U'=\{U'{\in}\otop{\ns T}\mid \Exists{U{\in}\mathcal U}U\subseteq U'\}$ is a completely prime filter.
Since $\ns T$ is sober we can uniquely map $\mathcal U'$ to an element $\topel_{\mathcal U'}\in\ns T$ such that $\mathcal U'=\qq{\topel_{\mathcal U'}}$.

The bijection follows.
\end{proof}

\begin{lemm}
\label{lemm.coherent.iff}
Suppose a nominal $\sigma$-topological space $\ns T$ is coherent and assume:
\begin{enumerate}
\item
The map $\topel\in|\ns T|\longmapsto\qq \topel\in|FG(\ns T)|$ (a point maps to the prime filter of compact open sets containing it) is a bijection.
\item
If $\topel\in |\ns T|$ and $U\in\ctop{\ns T}$ then $\New{b}\topel\in (b\ a)\act U$ implies $\topel\in\freshcap{a}U$.
\end{enumerate}
Then $\ns T$ is sober.
\end{lemm}
\begin{proof}
We examine Definition~\ref{defn.sober} and see that we must verify two conditions:
\begin{enumerate}
\item
Suppose $\mathcal U'\subseteq\otop{\ns T}$ is a completely prime filter in $\ns T$.
We use the correspondence of Lemma~\ref{lemm.natural.bij} to map to $\mathcal U=\mathcal U'{\cap}\ctop{\ns T}$ and then assumption~1 above to obtain a unique $\topel_{\mathcal U}$ such that $U\in\mathcal U$ if and only if $\topel_{\mathcal U}\in U$.
It follows by construction of $\mathcal U$ that $U'\in\mathcal U'$ if and only if $\topel_{\mathcal U}\in U'$.
\item
Condition~2 is immediate.
\qedhere\end{enumerate}
\end{proof}

Definition~\ref{defn.ppmone} is in some sense dual to Definition~\ref{defn.qq}, and will be useful:
\begin{defn}
\label{defn.ppmone}
If $\mathcal Y\subseteq\ctop{F(\ns D)}$ then define $\ppmone{\mathcal Y}\subseteq|\ns D|$ by
\begin{frameqn}
\ppmone{\mathcal Y}=\{y\in|\ns D| \mid \pp y\in \mathcal Y\} ,
\end{frameqn}
so that $y\in\ppmone{\mathcal Y}\liff \pp y\in\mathcal Y$.
\end{defn}

\begin{rmrk}
\label{rmrk.strip}
By Theorem~\ref{thrm.pp.x.clopen} an element of $\mathcal Y$ in Definition~\ref{defn.ppmone} has the form $\pp y$ for some $y{\in}|\ns D|$.
So intuitively, Definition~\ref{defn.ppmone} just `strips the $\bullet$s' from inside $\mathcal Y$.
\end{rmrk}

Recall the definitions of $\pp x$ and $\qq p$ from Definitions~\ref{defn.pp} and \ref{defn.qq} respectively.
\begin{corr}
\label{corr.FD.sober}
Suppose $\ns D\in\ndia$.
Then:
\begin{enumerate*}
\item
If $p\in |F(\ns D)|$ then $\ppmone{(\qq p)}=p$.
\item
If $\mathcal U\in |FGF(\ns D)|$ then $\qq{(\ppmone{\mathcal U})}=\mathcal U$.
\item
If $p\in |F(\ns D)|$ and $U\in\ctop{F(\ns D)}$ then $\New{b}p\in (b\ a)\act U$ implies $p\in\freshcap{a}U$.
\end{enumerate*}
As a corollary, $F(\ns D)$ is sober (Definition~\ref{defn.sober}).
\end{corr}
\begin{proof}
We unravel definitions and see that:
\begin{enumerate*}
\item
$x\in p$ if and only if $p\in\pp x$ if and only if $\pp x\in\qq p$ if and only if $x\in\ppmone{(\qq p)}$.
\item
$\pp x\in\mathcal U$ if and only if $x\in\ppmone{\mathcal U}$ if and only if $\ppmone{\mathcal U}\in \pp x$ if and only if $\pp x\in\qq{(\ppmone{\mathcal U})}$.
\item
Suppose $p\in|F(\ns D)|$ (so $p$ is a prime filter in $\ns D$) and suppose $U\in\ctop{F(\ns D)}$ (so $U$ is a compact set of points).
By Theorem~\ref{thrm.pp.x.clopen} $U=\pp x$ for some $x\in|\ns D|$.
We reason as follows:
$$
\begin{array}{r@{\ }l@{\qquad}l}
\New{b}p\in(b\ a)\act\pp x
\liff&
p\in\pp{((b\ a)\act x)}
&\text{Lemma~\ref{lemm.dup}}
\\
\liff&
\New{b}(b\ a)\act x\in p
&\text{Definition~\ref{defn.pp}}
\\
\limp&
\tall a.x\in p
&\text{Definition~\ref{defn.filter} condition~\ref{filter.new}}
\\
\liff&
p\in\pp{(\tall a.x)}
&\text{Definition~\ref{defn.pp}}
\\
\liff&
p\in\freshcap{a}\pp x
&\text{Lemma~\ref{lemm.bullet.commute}(\ref{item.bullet.commute.tall})}
\end{array}
$$
\end{enumerate*}
The corollary follows by combining parts~1 to~3 of this result with Lemma~\ref{lemm.coherent.iff}.
\end{proof}

\subsection{Nominal spectral spaces}

\begin{frametxt}
\begin{defn}
\label{defn.nom.top.spectral}
A \deffont{nominal spectral space} is a coherent (Definition~\ref{defn.coherent}) sober (Definition~\ref{defn.sober}) impredicative nominal $\sigma$-topological space $\ns T$ (Definition~\ref{defn.impredicative.top}).
\end{defn}
\end{frametxt}

\begin{prop}
\label{prop.FD.nomspecspace}
If $\ns D\in\india$ then $F(\ns D)$ (Definition~\ref{defn.F}) is a nominal spectral space.
\end{prop}
\begin{proof}
By Theorem~\ref{thrm.FB.amgis} $F(\ns D)$ is a $\sigma$-topological space.
By Theorem~\ref{thrm.FB.totsep.comp} it is impredicative.
By Proposition~\ref{prop.FB.new} it is coherent.
By Corollary~\ref{corr.FD.sober} it is sober.
\end{proof}

\section{Morphisms of nominal spectral spaces}
\label{sect.spectral.morphisms}

\subsection{The definition of $\inspecta$, and $F$ viewed as a functor to it}
\label{subsect.inspecta}

We see from Definition~\ref{defn.G} that we obtain a nominal distributive lattice with $\tall$ from an impredicative nominal spectral space by taking the lattice of compact open sets.

A \emph{spectral} morphism is usually taken to be a map of points whose inverse preserves the property of being compact.
Our compact sets have permutation and $\sigma$-actions (and our points have permutation and $\amgis$-actions) so we need morphisms to interact appropriately with this extra structure.
This is Definition~\ref{defn.morphism}.

Then, we extend $F$ from Definition~\ref{defn.F} to act on morphisms, and check that this does indeed yield a functor.
This is Definition~\ref{defn.Ff} and Proposition~\ref{prop.inverse.filter}.
Theorem~\ref{thrm.F.funct} packages this all up into a theorem.

\begin{defn}
\label{defn.morphism}
Suppose $\ns S$ and $\ns T$ are nominal spectral spaces (Definition~\ref{defn.nom.top.spectral}).

Suppose $g_{\ns T}\in |\ns T|{\to}|\ns S|$.
Then:
\begin{itemize*}
\item
Call $g_{\ns T}$ \deffont{continuous} when $X\in\otop{\ns S}$ implies $(g_{\ns T})^\mone(X)\in\otop{\ns T}$ (inverse image of an open is open).
\item
Call $g_{\ns T}$ \deffont{spectral} when $X\in\ctop{\ns S}$ implies $(g_{\ns T})^\mone(X)\in\ctop{\ns T}$ (inverse image of a compact open is compact open).
\end{itemize*}
Call
$g=(g_{\ns T},g_{\ns S}^\prg)$
a \deffont{morphism} from $\ns T$ to $\ns S$ when:
\begin{enumerate*}
\item
\label{item.g.equivariant}
$g_{\ns T}$ is equivariant from $|\ns T|$ to $|\ns S|$, meaning that $\pi\act g_{\ns T}(\topel)=g_{\ns T}(\pi\act \topel)$.
\item
\label{item.g.braided}
$g_{\ns T}\in |\ns T|{\to}|\ns S|$ is continuous and spectral, and $g_{\ns S}^\prg$ is a $\sigma$-algebra morphism (Definition~\ref{defn.morphism.sigma.alg}) from $\ns S^\prg$ to $\ns T^\prg$.

There is a `braided' structure here, that $g_{\ns T}$ goes from $\ns T$ to $\ns S$ but $g_{\ns S}^\prg$ goes from $\ns S^\prg$ to $\ns T^\prg$.
Of course, the inverse image $(g_{\ns T})^\mone$ maps from $\otop{\ns S}$ to $\otop{\ns T}$ and so points in the same direction as $g_{\ns S}^\prg$.
This is used in Lemma~\ref{lemm.g.commute.amgis}.
\item
\label{item.g.prog.to.prog}
The inverse image $(g_{\ns T})^\mone$ maps atoms-as-programs in $\ns S$ to atoms-as-programs in $\ns T$, meaning that $(g_{\ns T})^\mone(\prg_{\ns S}a)=\prg_{\ns T}a$ for every atom $a$.
\item
\label{item.g.commute.amgis}
$g_{\ns T}$ commutes with the $\amgis$-action,
meaning that for $u\in|\ns S^\prg|$ and $p\in|\ns T|$
$$
g_{\ns T}(p)[u\ms a] = g_{\ns T}(p[g_{\ns S}^\prg(u)\ms a]) .
$$
See Lemma~\ref{lemm.g.commute.amgis} for a view of this as ``the inverse image $g^\mone$ commutes with the $\sigma$-action''.
\end{enumerate*}
Write $\inspecta$ for the category of nominal spectral spaces, and morphisms between them.

We may drop subscripts and write $g=(g,g^\prg):\ns T\equivarto \ns S$, so that for example the equation above becomes
$$
g(p)[u\ms a] = g(p[g^\prg(u)\ms a]) .
$$
\end{defn}

Lemmas~\ref{lemm.g.commute.amgis.atom} and~\ref{lemm.g.commute.amgis} are direct derivatives of condition~\ref{item.g.commute.amgis} of Definition~\ref{defn.morphism}:

\begin{lemm}
\label{lemm.g.commute.amgis.atom}
If $g:\ns T\equivarto\ns S$ then for $p\in|\ns T|$ and $n\in\mathbb A$,
$g(p)[n_{\ns S^\prg}\ms a]=g(p[n_{\ns T^\prg}\ms a])$.
\end{lemm}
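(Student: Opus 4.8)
The plan is to derive Lemma~\ref{lemm.g.commute.amgis.atom} as the special case of condition~\ref{item.g.commute.amgis} of Definition~\ref{defn.morphism} where the element $u\in|\ns S^\prg|$ is taken to be an atom. The key point is that a morphism $g=(g,g^\prg):\ns T\equivarto\ns S$ carries with it a $\sigma$-algebra morphism $g^\prg:\ns S^\prg\to\ns T^\prg$, and by condition~\ref{item.f.prg.maps.atoms.to.atoms} of Definition~\ref{defn.morphism.sigma.alg} such a morphism sends atoms to atoms, i.e.\ $g^\prg(n_{\ns S^\prg})=n_{\ns T^\prg}$.

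Concretely, I would proceed as follows. First, fix $p\in|\ns T|$ and $n\in\mathbb A$. Instantiate condition~\ref{item.g.commute.amgis} of Definition~\ref{defn.morphism} with $u=n_{\ns S^\prg}\in|\ns S^\prg|$, which gives
$$
g(p)[n_{\ns S^\prg}\ms a] = g(p[g^\prg(n_{\ns S^\prg})\ms a]).
$$
Then apply the fact that $g^\prg$ is a $\sigma$-algebra morphism, so by condition~\ref{item.f.prg.maps.atoms.to.atoms} of Definition~\ref{defn.morphism.sigma.alg} we have $g^\prg(n_{\ns S^\prg}) = n_{\ns T^\prg}$. Substituting this into the right-hand side yields $g(p)[n_{\ns S^\prg}\ms a] = g(p[n_{\ns T^\prg}\ms a])$, which is exactly the claimed equation.

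There is really no serious obstacle here; the lemma is a one-line specialisation, and the paper itself flags it as a "direct derivative" of the defining condition. The only thing to be careful about is bookkeeping with the subscripts: the $\amgis$-action $[\,\cdot\ms a]$ on the left takes an argument in $|\ns S^\prg|$ (since it is applied to the point $g(p)\in|\ns S|$), while on the right it takes an argument in $|\ns T^\prg|$ (applied to $p\in|\ns T|$), and the morphism $g^\prg$ is precisely the bridge that converts the former into the latter. Making sure that $n_{\ns S^\prg}$ and $n_{\ns T^\prg}$ denote $\tf{atm}_{\ns S^\prg}(n)$ and $\tf{atm}_{\ns T^\prg}(n)$ respectively (notation from Definition~\ref{defn.term.sub.alg}), and that these are identified by $g^\prg$, completes the argument.
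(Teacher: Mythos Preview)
Your proposal is correct and follows essentially the same approach as the paper: instantiate condition~\ref{item.g.commute.amgis} of Definition~\ref{defn.morphism} at $u=n_{\ns S^\prg}$, then use condition~\ref{item.f.prg.maps.atoms.to.atoms} of Definition~\ref{defn.morphism.sigma.alg} to rewrite $g^\prg(n_{\ns S^\prg})$ as $n_{\ns T^\prg}$.
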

\begin{proof}
The proof is just a special case of condition~\ref{item.g.commute.amgis} of Definition~\ref{defn.morphism}, noting that $g^\prg$ is assumed to be a $\sigma$-algebra morphism, and by condition~\ref{item.f.prg.maps.atoms.to.atoms} of Definition~\ref{defn.morphism.sigma.alg} $g^\prg(n_{\ns S^\prg})=n_{\ns T^\prg}$.
\end{proof}

\begin{lemm}
\label{lemm.g.commute.amgis}
Suppose $g:\ns T\equivarto\ns S$ and suppose $U\in\ctop{\ns S}$ and $u\in|\ns S^\prg|$.
Then
$$
g^\mone(U[a\sm u]) = g^\mone(U)[a\sm g^\prg(u)].
$$
(The $\sigma$-action $[a\sm u]$ and $[a\sm g^\prg(u)]$ is from Definition~\ref{defn.sub.sets}.)
\end{lemm}
\begin{proof}
Suppose $p\in|\ns T|$.
Then:
$$
\begin{array}[b]{r@{\ }l@{\quad}l}
p\in g^\mone(U[a\sm u])\liff& g(p)\in U[a\sm u]
&\text{Pointwise action}
\\
\liff&\New{c}g(p)[u\ms c]\in (c\ a)\act U
&\text{Proposition~\ref{prop.amgis.iff}}
\\
\liff&\New{c}g(p[g^\prg(u)\ms c])\in (c\ a)\act U
&\text{C\ref{item.g.commute.amgis} of Def~\ref{defn.morphism}}
\\
\liff&\New{c}p[g^\prg(u)\ms c]\in (c\ a)\act g^\mone(U)
&\text{Pointwise action}
\\
\liff&p\in g^\mone(U)[a\sm g^\prg(u)]
&\text{Proposition~\ref{prop.amgis.iff}}
\end{array}
\qedhere$$
\end{proof}

We take a moment to note an interaction of Lemma~\ref{lemm.g.commute.amgis} with condition~\ref{item.g.prog.to.prog} of Definition~\ref{defn.morphism}:
\begin{corr}
$g^\mone$ maps programs to programs, meaning that $g^\mone(\prg_{\ns S}u)=\prg_{\ns T}u$ for every $u\in|\ns S^\prg|$.
\end{corr}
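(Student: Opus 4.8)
The goal is to show $g^\mone(\prg_{\ns S}u)=\prg_{\ns T}u$ for every $u\in|\ns S^\prg|$. The plan is to recover $\prg_{\ns S}u$ as a substitution instance of the atom-as-program $\prg_{\ns S}a_{\ns S^\prg}$, then transport this identity along $g^\mone$ using the two ingredients we already have: condition~\ref{item.g.prog.to.prog} of Definition~\ref{defn.morphism} (which handles the atom case $g^\mone(\prg_{\ns S}a)=\prg_{\ns T}a$) and Lemma~\ref{lemm.g.commute.amgis} (which says $g^\mone$ commutes with the $\sigma$-action, up to applying $g^\prg$ on the substituend).

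\textbf{Key steps.} First I would choose an atom $a$ fresh for $u$ (so $a\#u$), which is possible since $u$ has finite support in $\ns S^\prg$. Next, by Lemma~\ref{lemm.impredicative.sigma.a} applied in $\ns S$, we have $(\prg_{\ns S}a_{\ns S^\prg})[a\sm u]=\prg_{\ns S}u$ (this uses only that $\prg_{\ns S}:\ns S^\prg\to\ctop{\ns S}$ is a $\sigma$-algebra morphism, which holds because $\ns S$ is impredicative). The key insight is that $\prg_{\ns S}a_{\ns S^\prg}$ and $\prg_{\ns S}u$ are both compact open sets, so $g^\mone$ applies to them and interacts with the substitution. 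Now apply $g^\mone$ to both sides:
$$
g^\mone(\prg_{\ns S}u)=g^\mone\bigl((\prg_{\ns S}a_{\ns S^\prg})[a\sm u]\bigr)\stackrel{\text{Lem~\ref{lemm.g.commute.amgis}}}{=}g^\mone(\prg_{\ns S}a_{\ns S^\prg})[a\sm g^\prg(u)]\stackrel{\text{C\ref{item.g.prog.to.prog} of Def~\ref{defn.morphism}}}{=}(\prg_{\ns T}a_{\ns T^\prg})[a\sm g^\prg(u)].
$$
Finally, since $g^\prg$ is a $\sigma$-algebra morphism it is equivariant, hence support-non-increasing, so $a\#u$ gives $a\#g^\prg(u)$; applying Lemma~\ref{lemm.impredicative.sigma.a} in $\ns T$ now yields $(\prg_{\ns T}a_{\ns T^\prg})[a\sm g^\prg(u)]=\prg_{\ns T}(g^\prg(u))=\prg_{\ns T}u$ (the last equality because condition~\ref{item.g.prog.to.prog} of Definition~\ref{defn.morphism}, together with the notational convention that we write $\prg u$ for $\prg_{\ns T}u$ and identify programs along $g^\prg$; more carefully, the displayed corollary is exactly asserting that $g^\mone(\prg_{\ns S}u)$ equals $\prg_{\ns T}(g^\prg(u))$, which is what `$\prg_{\ns T}u$' abbreviates given that $g^\prg$ is the map on programs).

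\textbf{Main obstacle.} The only subtlety is bookkeeping around the freshness side-condition $a\#u$ that Lemma~\ref{lemm.g.commute.amgis} does \emph{not} require but which is needed to invoke Lemma~\ref{lemm.impredicative.sigma.a} cleanly at both ends; one must check that a single atom $a$ can be chosen fresh for both $u$ and $g^\prg(u)$, which is immediate from conservation of support (Theorem~\ref{thrm.equivar}, part~\ref{item.conservation.of.support}) applied to the equivariant function $g^\prg$. There is no deep content here—this corollary is genuinely a one-paragraph consequence of Lemma~\ref{lemm.g.commute.amgis} and condition~\ref{item.g.prog.to.prog}—so the `hard part' is merely stating the chain of equalities in the right order and being careful about which lattice ($\ns S$ or $\ns T$) each instance of Lemma~\ref{lemm.impredicative.sigma.a} is applied in.
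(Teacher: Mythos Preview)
Your proposal is correct and follows essentially the same route as the paper: start from condition~\ref{item.g.prog.to.prog} for the atom $a$, then push through an $[a\sm u]$ using Lemma~\ref{lemm.g.commute.amgis} and Lemma~\ref{lemm.impredicative.sigma.a}. You are also right to note that ``$\prg_{\ns T}u$'' in the statement really abbreviates $\prg_{\ns T}(g^\prg(u))$.

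One minor point: the freshness condition $a\#u$ that you carefully thread through is actually unnecessary. Lemma~\ref{lemm.impredicative.sigma.a} is just $(\prg a)[a\sm u]=\prg u$, which follows from \rulefont{\sigma a} and does not require $a\#u$; likewise Lemma~\ref{lemm.g.commute.amgis} imposes no freshness hypothesis. So the ``main obstacle'' you identify is not in fact an obstacle, and the paper's two-line proof simply picks any atom $a$ and applies $[a\sm u]$ directly. Your addition of freshness is harmless, just superfluous.
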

\begin{proof}
By condition~\ref{item.g.prog.to.prog} of Definition~\ref{defn.morphism} $g^\mone(\prg_{\ns S} a)=\prg_{\ns T} a$.
We apply $[a\sm u]$ to both sides and use Lemma~\ref{lemm.g.commute.amgis} to deduce that $g^\mone(\prg_{\ns S} u)=\prg_{\ns T} u$.
\end{proof}

Definition~\ref{defn.Ff} extends Definition~\ref{defn.F} from objects to morphisms (and is extended further to $\app$ and $\ppa$ in Definition~\ref{defn.F.pp}):
\begin{defn}
\label{defn.Ff}
Given a morphism $f=(f_{\ns D},f_{\ns D}^\prg):\ns D\equivarto \ns D'$ in $\india$ (Definitions~\ref{defn.hom.impredicative},  \ref{defn.hom.nba}, and~\ref{defn.morphism.sigma.alg})
define $F(f):F(\ns D')\equivarto F(\ns D)$ by
\begin{itemize*}
\item
$F(f)_{\ns D'}(p')=(f_{\ns D})^\mone(p')$ where $p'\in|F(\ns D')|$
\\
\ \ (so $p'$ is a point---a prime filter---in $\ns D'$), and
\item
$F(f)_{\ns D}^\prg(u)=f_{\ns D^\prg}(u)$ where $u\in|\ns D^\prg|$.
\end{itemize*}
That is, without subscripts:
\begin{frameqn}
x\in F(f)(p) \liff f(x)\in p
\qquad
\text{and}\qquad
F(f)^\prg(u)=f^\prg(u)
\end{frameqn}
\end{defn}

\begin{rmrk}[A word on subscripts]
Our convention is that a subscript on a function/morphism indicates domain/source.
This information may elided, but the authors find some explicit bookkeeping helpful.

So consider $f=(f_{\ns D},f^\prg_{\ns D}):\ns D\to\ns D'$.
Both components are subscripted with $\ns D$.
So far, so normal.

Now consider Theorem~\ref{thrm.F.funct}, the culminating result of this Subsection.
It proves that $F$ is a functor from $\india$ to $\inspecta^\f{op}$---that is, it is a contravariant functor from $\india$ to $\inspecta$.
Thus
$$
\text{if }f:\ns D\equivarto\ns D\in\india
\quad\text{then}\quad
F(f):F(\ns D')\equivarto F(\ns D)\in\inspecta.
$$
Unpacking Definition~\ref{defn.Ff} we see that $F(f)$ has two components called $F(f)_{\ns D'}$ and $F(f)_{\ns D}^\prg$, where $F(f)_{\ns D'}$ is subscripted with $\ns D'$ and $F(f)_{\ns D}^\prg$ is subscripted with $\ns D$.
\begin{itemize*}
\item
The first component $F(f)_{\ns D'}$ maps $|F(\ns D')|$ to $|F(\ns D)|$ (by functional preimage $(f_{\ns D})^\mone$, as standard),
whereas
\item
the second component $F(f)_{\ns D}^\prg$ maps $|\ns D^\prg|$ to $|{\ns D'}^\prg|$.
\end{itemize*}
$F$ is contravariant and the `braided' structure of $\inspecta$ (clause~\ref{item.g.braided} of Definition~\ref{defn.morphism}) causes subscripts on $F(f)$ to be $\ns D'$ for the first component \dots and $\ns D$ for the second.
\end{rmrk}

We now work towards proving that $F(f)$ maps points to points in Proposition~\ref{prop.inverse.filter}.

\begin{lemm}
\label{lemm.FfmoneU}
Suppose $f=(f_{\ns D},f_{\ns D}^\prg):\ns D\equivarto\ns D'$ is a morphism in $\india$ and suppose $x\in|\ns D|$.
Then $F(f)^\mone(\pp x)=\pp{(f(x))}$.
\end{lemm}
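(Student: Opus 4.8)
The plan is to unwind both sides of the claimed equality $F(f)^{\mone}(\pp x)=\pp{(f(x))}$ by pure definition-chasing, exploiting Definition~\ref{defn.Ff} (which tells us what $F(f)$ does to a point) and Definition~\ref{defn.pp} (which tells us what $\pp{(\text-)}$ is). I expect this to be a short, routine argument with no genuine obstacle, since all the structural work — that $F(f)$ maps points to points, that $f$ commutes with the $\amgis$-action, etc. — is deferred to Proposition~\ref{prop.inverse.filter} and is not needed here; we only need the set-theoretic description of $F(f)$ on the level of membership.

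Concretely, first I would fix a point (prime filter) $p'\in|F(\ns D')|$ and compute when $p'\in F(f)^{\mone}(\pp x)$. By definition of inverse image this holds iff $F(f)(p')\in\pp x$, which by Definition~\ref{defn.pp} holds iff $x\in F(f)(p')$. Now by Definition~\ref{defn.Ff} we have $x\in F(f)(p')\liff f(x)\in p'$. Finally, again by Definition~\ref{defn.pp}, $f(x)\in p'\liff p'\in\pp{(f(x))}$. Chaining these equivalences gives $p'\in F(f)^{\mone}(\pp x)\liff p'\in\pp{(f(x))}$ for every point $p'$, hence the two sets are equal.

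The one point worth a half-sentence of care is that $f$ here stands for $f_{\ns D}:|\ns D|\to|\ns D'|$, so $f(x)\in|\ns D'|$ and $\pp{(f(x))}$ is computed in $\ns D'$ (i.e. is a set of prime filters of $\ns D'$), matching the type of $F(f)^{\mone}(\pp x)$; this is just a matter of keeping the subscripts straight, following the notational convention announced at the end of Definition~\ref{defn.Ff} that we drop them. There is nothing deeper to do: the lemma is exactly the statement that the display in Definition~\ref{defn.Ff}, $x\in F(f)(p)\liff f(x)\in p$, rephrases as $F(f)^{\mone}(\pp x)=\pp{(f(x))}$, so the ``proof'' is essentially the observation that one is the pointwise reformulation of the other. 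I would write it as a three-line chain of iffs with the justifications (Definition~\ref{defn.Ff}, Definition~\ref{defn.pp}) annotated inline.
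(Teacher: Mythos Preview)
Your proposal is correct and matches the paper's own proof essentially line for line: the paper gives exactly this four-step chain of equivalences (inverse image, Definition~\ref{defn.pp}, Definition~\ref{defn.Ff}, Definition~\ref{defn.pp}) applied to an arbitrary point $p\in|\points(\ns D')|$.
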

\begin{proof}
We reason as follows, where $p\in|\points(\ns D')|$:
$$
\begin{array}[b]{r@{\ }l@{\qquad}l}
p\in F(f)^\mone(\pp x)
\liff&
F(f)(p)\in\pp x
&\text{Inverse image}
\\
\liff&
x\in F(f)(p)
&\text{Definition~\ref{defn.pp}}
\\
\liff&
f(x)\in p
&\text{Definition~\ref{defn.Ff}}
\\
\liff&
p\in\pp{(f(x))}
&\text{Definition~\ref{defn.pp}}
\end{array}
\qedhere$$
\end{proof}

\begin{lemm}
\label{lemm.inverse.filter.filter}
Suppose $f:\ns D\equivarto \ns D'$ is a morphism (Definition~\ref{defn.hom.impredicative}).
Then if $p\subseteq|\ns D'|$ is a filter then so is $F(f)(p)=f^\mone(p)$, and if $p$ is prime then so is $f^\mone(p)$.
\end{lemm}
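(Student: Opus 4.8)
The plan is to verify directly that $f^\mone(p)=\{x\in|\ns D|\mid f(x)\in p\}$ satisfies conditions~\ref{filter.proper}--\ref{filter.new} of Definition~\ref{defn.filter}, using only that $f$ is a morphism in $\ndia$---so it commutes with $\ttop$, $\tbot$, $\tand$, $\tor$, and $\tall$ by Definition~\ref{defn.hom.nba}---together with equivariance of $f$ from Definition~\ref{defn.morphism.sigma.alg}. First I would observe that $f^\mone(p)$ is nonempty, since $f(\ttop)=\ttop$ and $\ttop\in p$ by part~1 of Proposition~\ref{prop.tall.cent.char}.

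For condition~\ref{filter.proper}: $f(\tbot)=\tbot\not\in p$, so $\tbot\not\in f^\mone(p)$. For condition~\ref{filter.up}: if $x\leq x'$ then, since $f$ preserves $\tand$, also $f(x)\leq f(x')$ (the argument is as in the proof of Lemma~\ref{lemm.sigma.monotone}, using that $x\leq x'$ iff $x\tand x'=x$); hence $x\in f^\mone(p)$ gives $f(x)\in p$, so $f(x')\in p$ by up-closure of $p$, i.e.\ $x'\in f^\mone(p)$. Condition~\ref{filter.and} is the same idea, using $f(x\tand x')=f(x)\tand f(x')$ and closure of $p$ under $\tand$.

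The only step with any content is condition~\ref{filter.new}. Suppose $\New{b}(b\ a)\act x\in f^\mone(p)$, that is $\New{b}f((b\ a)\act x)\in p$. By equivariance of $f$ (condition~2 of Definition~\ref{defn.morphism.sigma.alg}) $f((b\ a)\act x)=(b\ a)\act f(x)$, so $\New{b}(b\ a)\act f(x)\in p$; applying condition~\ref{filter.new} of Definition~\ref{defn.filter} to the filter $p$ and the element $f(x)\in|\ns D'|$ yields $\tall a.f(x)\in p$. Since $f$ commutes with $\tall$, $f(\tall a.x)=\tall a.f(x)\in p$, hence $\tall a.x\in f^\mone(p)$. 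Finally, if $p$ is prime and $x_1\tor x_2\in f^\mone(p)$, then $f(x_1)\tor f(x_2)=f(x_1\tor x_2)\in p$, so primality of $p$ gives $f(x_i)\in p$ for some $i$, i.e.\ $x_i\in f^\mone(p)$, so $f^\mone(p)$ is prime.

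The main obstacle---such as it is---is condition~\ref{filter.new}: one must remember to commute the swapping $(b\ a)$ through $f$ using equivariance \emph{before} invoking the nominal filter condition for $p$, and to use that $f$ preserves $\tall$. Everything else is a routine transfer of the lattice-theoretic closure conditions along the monotone, structure-preserving map $f$.
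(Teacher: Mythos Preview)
Your proof is correct and follows essentially the same approach as the paper's: both verify the filter conditions of Definition~\ref{defn.filter} directly, using that $f$ preserves $\tbot$, $\tand$, $\tor$, $\tall$, and the permutation action. If anything you are slightly more careful than the paper---you explicitly check nonemptiness via $\ttop$ and spell out why $f$ is monotone---whereas the paper simply writes ``Immediate'' for condition~\ref{filter.up}.
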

\begin{proof}
Suppose $p$ is a filter.
We check the conditions of Definition~\ref{defn.filter}
for $f^\mone(p)$, freely using the fact that $x\in f^\mone(p)$ if and only if $f(x)\in p$:
\begin{enumerate*}
\item
\emph{$\tbot\not\in f^\mone(p)$.}\quad
By assumption in Definition~\ref{defn.hom.nba} $f(\tbot)=\tbot$, and by condition~1 of Definition~\ref{defn.filter} $\tbot\not\in p$.
\item
\emph{$x\in f^\mone(p)$ and $x\leq y$ implies $y\in f^\mone(p)$.}\quad
Immediate.
\item
\emph{$x\in f^\mone(p)$ and $y\in f^\mone(p)$ implies $x\tand y\in f^\mone(p)$.}\quad
Since $f(x\tand y)=f(x)\tand f(y)$.
\item
\emph{If $\New{b}((b\ a)\act x\in f^\mone(p))$ then $\tall a.x\in f^\mone(p)$.}\quad
Suppose $\New{b}(b\ a)\act x{\in} f^\mone(p)$.
It follows from Definition~\ref{defn.hom.nba} that $\New{b}(b\ a)\act f(x){\in} p$, so by condition~3 of Definition~\ref{defn.filter} $\tall a.f(x)\in p$.
By Definition~\ref{defn.hom.nba} again $\tall a.f(x)=f(\tall a.x)$.
The result follows.
\end{enumerate*}
Now suppose $p$ is prime and suppose $x{\tor}y\in f^\mone(p)$, so that $f(x{\tor}y)\in p$.
From Definition~\ref{defn.hom.nba} we have $f(x){\tor}f(y)\in p$.
Since $p$ is prime, either $f(x)\in p$ or $f(y)\in p$.
\end{proof}

\begin{prop}
\label{prop.inverse.filter}
If $f:\ns D\equivarto \ns D'$ is a morphism (Definition~\ref{defn.hom.impredicative}) then $F(f)$ from Definition~\ref{defn.Ff}
is a morphism from $F(\ns D')$ to $F(\ns D)$ (Definition~\ref{defn.morphism}).
\end{prop}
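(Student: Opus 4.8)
The plan is to verify, one by one, the four conditions (plus the equivariance and continuity/spectrality requirements) that Definition~\ref{defn.morphism} imposes on a morphism of nominal spectral spaces, for the candidate map $F(f)$ defined in Definition~\ref{defn.Ff}. We already know from Lemma~\ref{lemm.inverse.filter.filter} that $F(f)$ sends prime filters to prime filters, so $F(f)$ really is a function $|F(\ns D')|\to|F(\ns D)|$; it remains to check that it is a morphism in the technical sense.

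First I would record equivariance (condition~\ref{item.g.equivariant}): since $f$ is an equivariant function of $\sigma$-algebras and $x\in F(f)(p)\liff f(x)\in p$, applying $\pi$ and using equivariance of $f$ and of the permutation action on filters gives $\pi\act F(f)(p)=F(f)(\pi\act p)$. Next, continuity and spectrality: by Lemma~\ref{lemm.FfmoneU} we have $F(f)^\mone(\pp x)=\pp{(f(x))}$, and the open sets of $F(\ns D')$ are strictly finitely supported unions of sets of the form $\pp x$ (Definition~\ref{defn.F}). Inverse image commutes with unions, and $\supp(\pp{(f(x))})=\supp(f(x))\subseteq\supp(x)=\supp(\pp x)$ by Theorem~\ref{thrm.no.increase.of.supp} and Corollary~\ref{corr.supp.pp}, so the inverse image of a strictly finitely supported union is again one; hence $F(f)$ is continuous. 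For spectrality, Proposition~\ref{prop.clop.FB} identifies the compact opens of $F(\ns D')$ with the $\pp x$, and $F(f)^\mone(\pp x)=\pp{(f(x))}$ is compact open by Proposition~\ref{prop.pp.x.clopen}. The $\sigma$-algebra-morphism requirement on $F(f)^\prg=f^\prg$ (going from ${\ns D'}^\prg$ to $\ns D^\prg$) is immediate since $f^\prg$ is already a $\sigma$-algebra morphism by hypothesis (Definition~\ref{defn.morphism.sigma.alg}), just read in the appropriate direction.

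For condition~\ref{item.g.prog.to.prog} (atoms-as-programs map to themselves): we must show $F(f)^\mone(\prg_{\ns D'}a)=\prg_{\ns D}a$, i.e.\ $\pp{(f(\prg_{\ns D}a))}=\pp{(\prg_{\ns D'}a)}$ after unwinding via Lemma~\ref{lemm.FfmoneU} and Definition~\ref{defn.F}; this follows from $f(\prg_{\ns D}a)=\prg_{\ns D'}a$, which is exactly Remark~\ref{rmrk.explain.prg.b} (condition~\ref{item.third.condition} of Definition~\ref{defn.hom.impredicative} combined with $f^\prg$ fixing atoms). For condition~\ref{item.g.commute.amgis} ($F(f)$ commutes with the $\amgis$-action), I would compute directly: for $u\in|{\ns D'}^\prg|$, $p\in|F(\ns D)|$, and any $x\in|\ns D|$,
$$
x\in F(f)(p)[u\ms a]\liff x[a\sm u]\in F(f)(p)\liff f(x[a\sm u])\in p\liff f(x)[a\sm f^\prg(u)]\in p,
$$
using Proposition~\ref{prop.sigma.iff} and the fact that $f$ commutes with the $\sigma$-action (condition~\ref{item.f.commutes.with.sigma} of Definition~\ref{defn.morphism.sigma.alg}); on the other side $x\in F(f)(p[f^\prg(u)\ms a])\liff f(x)\in p[f^\prg(u)\ms a]\liff f(x)[a\sm f^\prg(u)]\in p$, again by Proposition~\ref{prop.sigma.iff}, and the two match. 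I expect the main obstacle to be bookkeeping rather than conceptual: keeping straight the direction of $f^\prg$ (it goes ${\ns D'}^\prg\to$ wait, it goes the same direction as $f$ here, since $F$ is contravariant on the space side but $f^\prg$ is the program component travelling covariantly) and making sure the $\new$-quantifier and strict-finite-support conditions in continuity are handled with Lemma~\ref{lemm.strict.support} and Corollary~\ref{corr.strict.strict} rather than hand-waved; once Lemma~\ref{lemm.FfmoneU} is in hand, each clause is a short calculation. Finally I would note that everything assembles into the statement, so $F(f):F(\ns D')\equivarto F(\ns D)$ is a morphism in $\inspecta$.
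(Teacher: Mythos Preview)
Your proposal is correct and follows essentially the same approach as the paper's proof: invoke Lemma~\ref{lemm.inverse.filter.filter} for well-definedness, then verify each clause of Definition~\ref{defn.morphism} using Lemma~\ref{lemm.FfmoneU}, Corollary~\ref{corr.supp.pp}, Propositions~\ref{prop.pp.x.clopen} and~\ref{prop.clop.FB}, and a direct computation with Proposition~\ref{prop.sigma.iff} for the $\amgis$-commutation. The only blemishes are bookkeeping slips you yourself anticipated: in condition~\ref{item.g.prog.to.prog} the subscripts are swapped (you need $F(f)^\mone(\prg_{F(\ns D)}a)=\prg_{F(\ns D')}a$, not the reverse), and in condition~\ref{item.g.commute.amgis} the element $u$ lives in $|\ns D^\prg|$, not $|{\ns D'}^\prg|$; your actual calculations are nevertheless the right ones.
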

\begin{proof}
By Lemma~\ref{lemm.inverse.filter.filter} $F(f)$ maps prime filters of $\ns D'$ to prime filters of $\ns D$---that is, $F(f)_{\ns D'}\in|F(\ns D')|\to|F(\ns D)|$.

Now we show that $F(f)$ is a morphism.
We verify the properties of Definition~\ref{defn.morphism}.
By definition $F(f)^\prg=f^\prg$ which by construction is a $\sigma$-algebra morphism from ${\ns D}^\prg$ to ${\ns D'}^\prg$.
Also:
\begin{enumerate*}
\item
\emph{$F(f)$ is equivariant.}\quad
We briefly sketch the reasoning; in step \rulefont{\ast} we use condition~1 of Definition~\ref{defn.morphism} for $f$:
\begin{multline*}
x\in \pi\act (F(f)(p)) \liff \pi^\mone\act x\in F(f)(p) \liff f(\pi^\mone\act x)\in p \stackrel{\rulefont{\ast}}{\liff} \pi^\mone\act f(x)\in p
\\
\liff f(x)\in \pi\act p \liff x\in F(f)(\pi\act p)
\end{multline*}
\item
\emph{$F(f)^\mone$ is continuous and spectral.}\quad
We must prove two things:
\begin{itemize*}
\item
\emph{$F(f)^\mone$ maps open sets to open sets.}\quad
By construction $F(f)^\mone$ preserves unions, and by construction in Definition~\ref{defn.F}(\ref{F.strict}) every $X\in\otop{F(\ns D)}$ has the form of some small-supported sets union $\bigcup\mathcal X$ for some small-supported set $\mathcal X=\{\pp x_i\mid i{\in}I\}$.

Note by Corollary~\ref{corr.pp.injective} that the assignment $x\mapsto\pp x$ is injective.
Therefore the map $Q$ mapping $\mathcal X=\{\pp x_i\mid i{\in}I\}$ to $Q\mathcal X=\{\pp{f(x_i)}\mid i{\in}I\}$ is well-defined, and by Theorem~\ref{thrm.no.increase.of.supp} $Q\mathcal X$ has small support if $\mathcal X$ does.

By Lemma~\ref{lemm.FfmoneU} $F(f)^\mone (\pp x_i)=\pp{f(x_i)}$.
It follows that $F(f)^\mone(X)=\bigcup_{i{\in}I}\pp{f(x_i)}$ which by the note in the previous paragraph 
is a small-supported union and so is open in $F(\ns D')$.
\item
\emph{$F(f)^\mone$ maps compact sets to compact sets.}\quad
By Lemma~\ref{lemm.FfmoneU} $F(f)^\mone (\pp x)=\pp{f(x)}$.
We use Theorem~\ref{thrm.pp.x.clopen}.
\end{itemize*}
\item
\emph{$F(f)^\mone$ maps atoms-as-programs to themselves.}\quad
By Lemma~\ref{lemm.FfmoneU} $F(f)^\mone(\pp{(\prg_{\ns D} a)})=\pp{(f(\prg_{\ns D} a))}$ and
by assumption in Definition~\ref{defn.hom.impredicative} $f(\prg_{\ns D} a)=\prg_{\ns D'}a$.
\item
\emph{$F(f)$ commutes with the $\amgis$-action.}
Suppose $p'\in |F(\ns D')|$ and $u\in|F(\ns D)^\prg|$ and $x\in|\ns D|$.
Following Theorem~\ref{thrm.FB.totsep.comp} $|F(\ns D)^\prg|=|\ns D^\prg|$ so $u\in|\ns D^\prg|$.
We reason as follows:
$$
\begin{array}[b]{r@{\ }l@{\qquad}l}
x\in F(f)(p'[f^\prg(u)\ms a])
\liff&
f(x)\in p'[f^\prg(u)\ms a]
&\text{Definition~\ref{defn.Ff}}
\\
\liff&
f(x)[a\sm f^\prg(u)]\in p'
&\text{Proposition~\ref{prop.sigma.iff}}
\\
\liff&
f(x[a\sm u])\in p'
&\text{C\ref{item.f.commutes.with.sigma} of Def~\ref{defn.morphism.sigma.alg}}
\\
\liff&
x[a\sm u]\in F(f)(p')
&\text{Definition~\ref{defn.Ff}}
\\
\liff&
x\in (F(f)(p'))[u\ms a]
&\text{Proposition~\ref{prop.sigma.iff}}
\end{array}
\qedhere$$
\end{enumerate*}
\end{proof}

\begin{thrm}
\label{thrm.F.funct}
$F$ from Definitions~\ref{defn.F} and~\ref{defn.Ff} is a functor from $\india$ to $\inspecta^\f{op}$.
\end{thrm}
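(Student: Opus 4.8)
The plan is to verify that $F$, defined on objects by Definition~\ref{defn.F} and on morphisms by Definition~\ref{defn.Ff}, satisfies the two functoriality axioms: it preserves identities and composition, in each case contravariantly. Objects are already handled: Theorem~\ref{thrm.FB.amgis}, Proposition~\ref{prop.FB.new}, Theorem~\ref{thrm.FB.totsep.comp}, and Corollary~\ref{corr.FD.sober} together show that $F(\ns D)$ is a nominal spectral space when $\ns D\in\india$, and Proposition~\ref{prop.inverse.filter} shows that $F(f):F(\ns D')\equivarto F(\ns D)$ is a morphism in $\inspecta$ when $f:\ns D\equivarto\ns D'$ is a morphism in $\india$. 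So the remaining work is purely the equational check on morphisms.

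First I would treat identities. If $\f{id}=(\f{id}_{\ns D},\f{id}_{\ns D}^\prg):\ns D\equivarto\ns D$ is the identity, then by Definition~\ref{defn.Ff} we have $F(\f{id})(p)=\f{id}_{\ns D}^\mone(p)=p$ for every prime filter $p\in|F(\ns D)|$, and $F(\f{id})^\prg(u)=\f{id}_{\ns D^\prg}(u)=u$ for every $u\in|\ns D^\prg|$. Hence $F(\f{id})$ is the identity morphism on $F(\ns D)$; note that the identity on a nominal spectral space in $\inspecta$ is indeed $(\f{id},\f{id})$, which is checked against Definition~\ref{defn.morphism} trivially. Next I would treat composition. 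Suppose $f:\ns D\equivarto\ns D'$ and $g:\ns D'\equivarto\ns D''$ are morphisms in $\india$, so $g\circ f:\ns D\equivarto\ns D''$. I must show $F(g\circ f)=F(f)\circ F(g)$ as morphisms from $F(\ns D'')$ to $F(\ns D)$ — note the reversal, which is why the target category is $\inspecta^\f{op}$. For the point-map component, for $p''\in|F(\ns D'')|$ and $x\in|\ns D|$ I would compute, using Definition~\ref{defn.Ff} and the fact that composition of functions commutes with inverse image in the reverse order:
$$
x\in F(g\circ f)(p'')
\liff (g\circ f)(x)\in p''
\liff f(x)\in g^\mone(p'')
\liff f(x)\in F(g)(p'')
\liff x\in F(f)(F(g)(p'')).
$$
For the $\prg$-component, $F(g\circ f)^\prg(u)=(g\circ f)^\prg(u)=g^\prg(f^\prg(u))$, and since $F(f)^\prg=f^\prg$ and $F(g)^\prg=g^\prg$ the composite morphism in $\inspecta$ has $\prg$-component $f^\prg\circ g^\prg$ (again reversed, as befits a contravariant functor), which agrees. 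One should also note that $g\circ f$ really is a morphism in $\india$ — i.e. that the class of $\india$-morphisms is closed under composition — but this is routine from Definitions~\ref{defn.hom.impredicative}, \ref{defn.hom.nba}, and~\ref{defn.morphism.sigma.alg} (composition of equivariant, lattice-preserving, $\sigma$-commuting maps has the same properties), and similarly that $\inspecta$ is genuinely a category.

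I do not expect any real obstacle here: all the hard work — that $F$ lands in $\inspecta$ on objects, and that $F(f)$ is a genuine $\inspecta$-morphism — has already been discharged in the preceding propositions, and what remains is the bookkeeping identity $(g\circ f)^\mone = f^\mone\circ g^\mone$ on inverse images together with the analogous fact for the $\prg$-components. The only point requiring a moment's care is keeping the variance straight: $F$ reverses arrows, so $F(g\circ f)=F(f)\circ F(g)$ rather than $F(g)\circ F(f)$, and correspondingly $F$ is a (covariant) functor into $\inspecta^\f{op}$. I would present the composition check as the short displayed chain of equivalences above, remark on the identity case in one line, and cite Proposition~\ref{prop.inverse.filter} for well-definedness.
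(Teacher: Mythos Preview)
Your proposal is correct and follows exactly the paper's approach: the paper cites the same object-level results (Theorems~\ref{thrm.FB.amgis} and~\ref{thrm.FB.totsep.comp}, Proposition~\ref{prop.FB.new}, Corollary~\ref{corr.FD.sober}) and Proposition~\ref{prop.inverse.filter} for morphisms, then dismisses the identity and composition checks as ``some easy calculations'', which you have simply written out. One small slip to fix: the $\prg$-component of the composite $F(f)\circ F(g)$ in $\inspecta$ is $F(g)^\prg\circ F(f)^\prg=g^\prg\circ f^\prg$, not $f^\prg\circ g^\prg$ --- the $\prg$-maps already run opposite to the point-maps, so no further reversal occurs on composition --- and this is what matches your computation $F(g\circ f)^\prg=(g\circ f)^\prg=g^\prg\circ f^\prg$.
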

\begin{proof}
By Proposition~\ref{prop.FD.nomspecspace} $F$ maps objects of $\india$ to objects of $\inspecta^\f{op}$.
Furthermore if $f:\ns D\equivarto \ns D'$ in $\india$ then by Proposition~\ref{prop.inverse.filter} $F(f)$ is a morphism from $F(\ns D')$ to $F(\ns D)$.
The result follows by some easy calculations.
\end{proof}

\subsection{The action of $G$ on morphisms in $\inspecta$}
\label{subsect.G.action}

In Subsection~\ref{subsect.inspecta} we went from $\india$ (Definition~\ref{defn.hom.impredicative})
to $\inspecta$ (Definition~\ref{defn.morphism}).
Now we go back.

So Definition~\ref{defn.Gg} mapping $\inspecta$ to $\india$ is the dual to Definition~\ref{defn.Ff} mapping $\india$ to $\inspecta$:
\begin{defn}
\label{defn.Gg}
Given $g:\ns T'\equivarto \ns T$ in $\inspecta$ define $G(g):G(\ns T)\equivarto G(\ns T')$ by $G(g)_{\ns T}(U)=(g_{\ns T'})^\mone(U)$ and $G(g)_{\ns T^\prg}(u)=u$, that is (without subscripts):
\begin{frameqn}
\topel'\in G(g)(U) \liff g(\topel')\in U
\quad\text{and}\quad
G(g)^\prg(u)=g^\prg(u)
\end{frameqn}
\end{defn}

\begin{prop}
\label{prop.G.funct}
$G$ from Definitions~\ref{defn.G} and~\ref{defn.Gg} is a functor from $\inspecta^\f{op}$ to $\india$.
\end{prop}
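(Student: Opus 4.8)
The statement is that $G$, defined on objects in Definition~\ref{defn.G} and on morphisms in Definition~\ref{defn.Gg}, is a functor from $\inspecta^{\f{op}}$ to $\india$. The work splits into three parts, and the bulk of it is already done: (1)~on objects, $G(\ns T)\in\india$ whenever $\ns T$ is a nominal spectral space---this is exactly Theorem~\ref{thrm.T.to.G.obj}, using that a spectral space is coherent, sober, and impredicative; (2)~on morphisms, $G(g)$ is genuinely a morphism in $\india$ whenever $g:\ns T'\equivarto\ns T$ is a morphism in $\inspecta$; (3)~$G$ preserves identities and composition, with the variance reversed. Parts (1) and (3) are quick, so the real content is part (2).

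First I would dispose of (1) and (3). For (1), just cite Theorem~\ref{thrm.T.to.G.obj}. For (3): $G(\id_{\ns T})$ sends $U$ to $\id^{\mone}(U)=U$ and $u$ to $u$, so $G(\id_{\ns T})=\id_{G(\ns T)}$; and for $g:\ns T''\equivarto\ns T'$ and $g':\ns T'\equivarto\ns T$ in $\inspecta$, one has $(g'\circ g)^{\mone}(U)=g^{\mone}(g'^{\mone}(U))$ and $(g'\circ g)^\prg=g'^\prg\circ g^\prg$ (since $g^\prg$ runs contravariantly), whence $G(g'\circ g)=G(g)\circ G(g')$ as maps $G(\ns T)\to G(\ns T'')$---this is the order-reversal recorded in the codomain $\inspecta^{\f{op}}$. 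These are the ``easy calculations'' alluded to in the analogous Theorem~\ref{thrm.F.funct}.

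The substance is checking that $G(g):G(\ns T)\equivarto G(\ns T')$ satisfies Definition~\ref{defn.hom.impredicative} (equivalently Definitions~\ref{defn.hom.nba} and~\ref{defn.morphism.sigma.alg}). I would work through the conditions in order. \emph{$\sigma$-algebra morphism:} $G(g)^\prg=g^\prg$ is assumed to be a $\sigma$-algebra morphism from $\ns T^\prg$ to $\ns T'^\prg$ by condition~2 of Definition~\ref{defn.morphism}; that $G(g)_{\ns T}$ commutes with the $\sigma$-action is precisely Lemma~\ref{lemm.g.commute.amgis} ($g^{\mone}(U[a\sm u])=g^{\mone}(U)[a\sm g^\prg(u)]$), and equivariance of $G(g)_{\ns T}$ follows from condition~\ref{item.g.equivariant} of Definition~\ref{defn.morphism} by the same pointwise computation used for $F(f)$ in Proposition~\ref{prop.inverse.filter}. \emph{Commutation with lattice structure:} since $\tbot,\ttop,\tand,\tor,\freshcap{a}$ in $G(\ns T)$ are interpreted as $\varnothing$, the whole space, $\cap$, $\cup$, $\freshcap{a}$, the fact that $g^{\mone}$ preserves $\varnothing$, the top, intersection, and union is immediate from elementary properties of inverse image; $g^{\mone}(|\ns T|)=|\ns T'|$ uses surjectivity-free reasoning on inverse images. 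The one genuinely nominal point is $g^{\mone}(\freshcap{a}U)=\freshcap{a}(g^{\mone}(U))$: here I would use the characterisation of $\freshcap{a}$ from line~4 of Proposition~\ref{prop.freshcap.equiv}, namely $\freshcap{a}U=\{p\mid \New{c}\,p[u\ms c]\in(c\ a)\act U \text{ for all }u\}$ (or the ``for all atoms $n$'' version), together with condition~\ref{item.g.commute.amgis} of Definition~\ref{defn.morphism} and Lemma~\ref{lemm.g.commute.amgis.atom} to push $g$ through the $\amgis$-action and the swapping; this is the same move that makes Proposition~\ref{prop.tast.filter} work. \emph{Well-definedness:} I must check $g^{\mone}(U)$ is again compact open when $U$ is, which is exactly the ``spectral'' clause of Definition~\ref{defn.morphism}. \emph{Impredicativity condition~\ref{item.third.condition}:} $G(g)_{\ns T}(\prg_{\ns T}u)=\prg_{\ns T'}(G(g)^\prg(u))$, i.e. $g^{\mone}(\prg_{\ns T}u)=\prg_{\ns T'}u$, which is the Corollary following Lemma~\ref{lemm.g.commute.amgis} (derived from condition~\ref{item.g.prog.to.prog} of Definition~\ref{defn.morphism} by applying $[a\sm u]$).

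\textbf{The main obstacle.} Everything here is ``follow your nose'' once the right earlier lemma is invoked; the only step with any friction is verifying $g^{\mone}(\freshcap{a}U)=\freshcap{a}(g^{\mone}(U))$, because $g^\prg$ may change the underlying termlike $\sigma$-algebra and one might fear that the quantifier $\tall$---being ``for all programs''---is not preserved. As Remark~\ref{rmrk.prop.ii.remarkable} explains, the resolution is that Proposition~\ref{prop.freshcap.equiv} lets us rephrase $\freshcap{a}$ using only atoms and the $\new$-quantifier, so the domain-of-quantification worry evaporates; this is the one place where I would spell out the pointwise calculation rather than cite a single lemma. The rest is bookkeeping, and the proof should be no longer than the proof of Theorem~\ref{thrm.F.funct}.
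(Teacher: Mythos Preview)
Your proposal is correct and follows essentially the same approach as the paper's own proof: cite Theorem~\ref{thrm.T.to.G.obj} for objects; for morphisms use the spectral assumption for well-definedness, Lemma~\ref{lemm.g.commute.amgis} for the $\sigma$-action, routine inverse-image reasoning for $\ttop,\tbot,\cap,\cup$, and the atoms-only characterisation of $\freshcap{a}$ from Proposition~\ref{prop.freshcap.equiv} together with Lemma~\ref{lemm.g.commute.amgis.atom} for the quantifier. The paper uses line~2 of Proposition~\ref{prop.freshcap.equiv} (the $\Forall{n{\in}\mathbb A}$ form) rather than line~3 or~4, and then unfolds via Proposition~\ref{prop.amgis.iff}, but this is exactly the alternative you flag; you are also slightly more explicit than the paper about the functoriality axioms and the impredicativity condition~\ref{item.third.condition}, which the paper leaves implicit.
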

\begin{proof}
By Theorem~\ref{thrm.T.to.G.obj} $G$ maps objects of $\inspecta^\f{op}$ to objects of $\india$.

Now consider a morphism $g:\ns T'\equivarto \ns T$ in the sense of Definition~\ref{defn.morphism}; the interesting part is to check that $G(g)_{\ns T}$---that is, $(g_{\ns T'})^\mone$---is a morphism in the sense of Definition~\ref{defn.hom.impredicative}.

We may drop subscripts henceforth.
If $U\in\ctop{\ns T}$, so $U$ is compact, then since $g$ is assumed spectral in Definition~\ref{defn.morphism} also $g^\mone(U)$ is compact.
It is routine to check that $g^\mone$ preserves the
top $\ttop$ and
bottom elements ($|\ns T|$ and $\varnothing$ respectively) and interacts correctly with intersections and unions.

It remains to show that $g^\mone$ is equivariant, commutes with the $\sigma$-action, and commutes with $\freshcap{a}$.

Equivariance, meaning that $\pi\act g^\mone(U)=g^\mone(\pi\act U)$, is immediate by Theorem~\ref{thrm.equivar} (a proof by concrete calculations using Proposition~\ref{prop.amgis.iff} and condition~1 of Definition~\ref{defn.morphism} is also possible).
Commuting with the $\sigma$-action, meaning that $(g_{\ns T'})^\mone(U)[a\sm g_{\ns T}^\prg(u)]=(g_{\ns T'})^\mone(U[a\sm u])$, is Lemma~\ref{lemm.g.commute.amgis}.

Finally we check that $g^\mone$ commutes with $\freshcap{a}$.
First, note that by assumption $\ns T,\ns T'\in\inspecta$ so by Definition~\ref{defn.morphism} they are sober and so satisfy Lemma~\ref{lemm.topel.sober.equiv}.
Now suppose $\topel'{\in}|\ns T'|$ and $U{\in}|G(\ns T)|{=}\ctop{\ns T}$.
We reason as follows:
\begin{tab2c}
\topel\in g^\mone(\freshcap{a}U)
\liff &g(\topel)\in\freshcap{a}U
&\text{Pointwise action}
\\
\liff &\New{b} g(\topel)\in (b\ a)\act U
&\text{Lemma~\ref{lemm.topel.sober.equiv}}
\\
\liff &\New{b} \topel\in (b\ a)\act g^\mone(U)
&\text{Pointwise action, Thm~\ref{thrm.equivar}}
\\
\liff &\topel\in \freshcap{a} g^\mone(U)
&\text{Lemma~\ref{lemm.topel.sober.equiv}}
\end{tab2c}
Thus $G(g)$ is a morphism in $\india$.
\end{proof}

\begin{rmrk}
\label{rmrk.the.structure}
We continue Remark~\ref{rmrk.sober.second.condition}.
Note in the proof of Lemma~\ref{lemm.topel.sober.equiv} that we prove a property of an infinite intersection using its characterisation using the $\new$-quantifier.
We can do this thanks to sobriety; specifically, thanks to condition~\ref{sober.all} of Definition~\ref{defn.sober} (which reflects condition~\ref{filter.new} of Definition~\ref{defn.filter}).

We want $g^\mone$ to commute with $\freshcap{a}$ because in the duality $g^\mone$ should correspond to a morphism of nominal distributive lattices \emph{with $\tall$} (the formal statement and proof are in Subsection~\ref{subsect.the.equivalence}, below).

When we designed Definition~\ref{defn.morphism} we could imagine just insisting on $g^\mone(\freshcap{a}U)=\freshcap{a}g^\mone(U)$.
But this would have been unsatisfactory: we did not insist on $g^\mone(U\cap V)=g^\mone(U)\cap g^\mone(V)$ or $g^\mone(|\ns T|{\setminus}U)=|\ns T|{\setminus}g^\mone(U)$ because these emerge from the structure of $\ns T$, by properties of sets of points and the inverse image.
Imposing $g^\mone(\freshcap{a}U)=\freshcap{a}g^\mone(U)$ as a condition \emph{on $g$} is not the best or most informative design choice; better to impose conditions \emph{on $\ns T$}.

Thus, one way to read condition~\ref{sober.all} of Definition~\ref{defn.sober} is that it gives $\ns T$ the structure to ensure that $g^\mone$ will commute with $\freshcap{a}$ as well.
\end{rmrk}

\subsection{The equivalence}
\label{subsect.the.equivalence}

In Subsections~\ref{subsect.inspecta} and~\ref{subsect.G.action} we considered two functors $F:\india\longrightarrow\inspecta$ and $G:\inspecta\longrightarrow\india$.
They are dual; the key is to observe that $FG(\ns T)$ is isomorphic to $\ns T$.
This is Lemma~\ref{lemm.qq.commute} and Proposition~\ref{prop.qq.iso}.
Theorem~\ref{thrm.equivalence} puts it all together.

\begin{lemm}
\label{lemm.qq.commute}
Suppose $\ns T\in\inspecta$ and $\topel\in|\ns T|$.
If we give $\qq \topel\in|FG(\ns T)|$ from Definition~\ref{defn.qq} the pointwise $\amgis$-action from Definition~\ref{defn.p.action} then for $U\in\ctop{\ns T}$
$$
U\in \qq{\topel}[u{\ms}a] \liff \New{c}(c\ a)\act U\in \qq{(\topel[u{\ms}c])}
.
$$
\end{lemm}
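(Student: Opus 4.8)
The plan is to unwind both sides of the claimed equivalence using the definitions of the pointwise $\amgis$-action (Definition~\ref{defn.p.action}) and of $\qq\topel$ (Definition~\ref{defn.qq}), together with the characterisation of the $\sigma$-action on subsets of an $\amgis$-algebra from Proposition~\ref{prop.amgis.iff}. Concretely, $FG(\ns T)$ has underlying points the prime filters of $G(\ns T)=\ctop{\ns T}$, and $\qq\topel$ is such a prime filter; so $\qq\topel[u\ms a]$ is the $\amgis$-shifted filter $\{U\in\ctop{\ns T}\mid U[a\sm u]\in\qq\topel\}$, where $U[a\sm u]$ is the pointwise $\sigma$-action on the set $U$ of points of $\ns T$ from Definition~\ref{defn.sub.sets}. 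Membership of $U$ in this shifted filter thus unfolds to $\topel\in U[a\sm u]$, and then Proposition~\ref{prop.amgis.iff}(1) rewrites this — since $U$ has finite support — as $\New{c}\topel[u\ms c]\in(c\ a)\act U$.

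The key steps, in order, would be: (i) by Definition~\ref{defn.p.action}/Proposition~\ref{prop.sigma.iff} applied within $G(\ns T)$, $U\in\qq\topel[u\ms a]$ iff $U[a\sm u]\in\qq\topel$; (ii) by Definition~\ref{defn.qq}, $U[a\sm u]\in\qq\topel$ iff $\topel\in U[a\sm u]$; (iii) by Proposition~\ref{prop.amgis.iff}(1), noting $U\in\ctop{\ns T}\subseteq|\f{pow}(\ns T)|$ has finite support, $\topel\in U[a\sm u]$ iff $\New{c}\topel[u\ms c]\in(c\ a)\act U$; (iv) on the right-hand side of the target statement, unfold $\qq{(\topel[u\ms c])}$ by Definition~\ref{defn.qq}: $(c\ a)\act U\in\qq{(\topel[u\ms c])}$ iff $\topel[u\ms c]\in(c\ a)\act U$; (v) match the two $\new$-quantified bodies, observing they are literally the same assertion $\topel[u\ms c]\in(c\ a)\act U$. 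So the whole proof is a short chain of iff's, and the body under $\New{c}$ on each side coincides syntactically.

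The one point requiring a little care — and the main (mild) obstacle — is the application of Proposition~\ref{prop.amgis.iff}(1): that result is stated for $a\#u$, so one should first invoke $\alpha$-renaming (Lemma~\ref{lemm.sigma.alpha}) or note that we may choose the bound $c$ under $\new$ to dodge the issue, i.e. rename so that $a\#u$ without loss of generality before applying Proposition~\ref{prop.amgis.iff}, and then observe that the statement as displayed already has the $\new c$ in place so that freshness of $c$ for $u$ (and for $\topel$, $U$) is available inside the quantifier. Concretely I would choose $c$ fresh for $u,\topel,U,a$; then $\topel[u\ms c]$ and $(c\ a)\act U$ are exactly the ingredients of Proposition~\ref{prop.amgis.iff}, and Definition~\ref{defn.qq} identifies $(c\ a)\act U\in\qq{(\topel[u\ms c])}$ with $\topel[u\ms c]\in(c\ a)\act U$. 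No genuinely hard step is involved; the content is purely bookkeeping of pointwise actions and one appeal to the $\new$-quantifier characterisation of substitution on sets.

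I would present this as a four- or five-line calculation of the form
$$
\begin{array}{r@{\ }l@{\qquad}l}
U\in\qq\topel[u\ms a]
&\liff U[a\sm u]\in\qq\topel
&\text{Definition~\ref{defn.p.action}, Proposition~\ref{prop.sigma.iff}}
\\
&\liff \topel\in U[a\sm u]
&\text{Definition~\ref{defn.qq}}
\\
&\liff \New{c}\topel[u\ms c]\in(c\ a)\act U
&\text{Proposition~\ref{prop.amgis.iff}}
\\
&\liff \New{c}(c\ a)\act U\in\qq{(\topel[u\ms c])}
&\text{Definition~\ref{defn.qq}}
\end{array}
$$
with a preliminary sentence reducing to the case $a\#u$ via Lemma~\ref{lemm.sigma.alpha} (and observing that $\qq\topel\in|FG(\ns T)|$ by Proposition~\ref{prop.tast.filter}, so the pointwise $\amgis$-action on it is legitimate). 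That is the whole argument.
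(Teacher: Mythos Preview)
Your proposal is correct and follows essentially the same four-line chain of equivalences as the paper, invoking Proposition~\ref{prop.sigma.iff}, Definition~\ref{defn.qq}, Proposition~\ref{prop.amgis.iff}, and Definition~\ref{defn.qq} again in the same order. Your extra caution about the side-condition $a\#u$ is harmless but in fact unnecessary here: part~1 of Proposition~\ref{prop.amgis.iff} is just a restatement of Definition~\ref{defn.sub.sets} and holds without that assumption (it is only part~2 that uses $a\#u$).
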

\begin{proof}
We reason as follows:
$$
\begin{array}[b]{r@{\ }l@{\qquad}l}
U\in\qq{\topel}[u{\ms}a]
\liff&
U[a\sm u]\in\qq\topel
&\text{Proposition~\ref{prop.sigma.iff}}
\\
\liff&
\topel\in U[a\sm u]
&\text{Definition~\ref{defn.qq}}
\\
\liff&
\New{c}\topel[u\ms c]\in (c\ a)\act U
&\text{Proposition~\ref{prop.amgis.iff}}
\\
\liff&
\New{c}(c\ a)\act U\in\qq{(\topel[u\ms c])}
&\text{Definition~\ref{defn.qq}}
\end{array}
\qedhere$$
\end{proof}

Recall from Definitions~\ref{defn.F} and~\ref{defn.G} that $FG(\ns T)$ is a topological space whose points are prime filters of compact opens in $\ns T$.
\begin{prop}
\label{prop.qq.iso}
If $\ns T\in\inspecta$ then $\alpha_{\ns T}$ mapping $\topel\in|\ns T|$ to $\qq{\topel}\in |FG(\ns T)|$ defines an isomorphism in $\inspecta$ between $\ns T$ and $FG(\ns T)$.
\end{prop}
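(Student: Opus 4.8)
The plan is to show that $\alpha_{\ns T}:\topel\mapsto\qq\topel$ is a morphism in $\inspecta$, that it is a bijection, and that its inverse is also a morphism; since an iso in a category is a morphism with a two-sided morphism inverse, this suffices. First I would recall that bijectivity of $\alpha_{\ns T}$ is essentially already in hand: by Corollary~\ref{corr.qq.max} (using that $\ns T$ is coherent and sober) the map $\topel\mapsto\qq\topel$ is a bijection between $|\ns T|$ and $|FG(\ns T)|$, since the prime filters of $G(\ns T)=\ctop{\ns T}$ are exactly the $\qq\topel$ and $\topel$ is unique. So the real content is checking that $\alpha_{\ns T}$ respects all the structure that a morphism in $\inspecta$ (Definition~\ref{defn.morphism}) must respect, in both directions.

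Next I would verify the morphism conditions of Definition~\ref{defn.morphism} for $\alpha_{\ns T}$, namely: (i) equivariance, $\pi\act\qq\topel=\qq{(\pi\act\topel)}$, which is immediate from Theorem~\ref{thrm.equivar} (or a direct calculation: $U\in\pi\act\qq\topel\liff\pi^\mone\act U\in\qq\topel\liff\topel\in\pi^\mone\act U\liff\pi\act\topel\in U$, using that $\ctop{\ns T}$ is closed under $\pi$); (ii) continuity and spectrality of $\alpha_{\ns T}$, which amounts to understanding $\alpha_{\ns T}^\mone$ on the generating open sets of $FG(\ns T)$. Here the key identity is that for $U\in\ctop{\ns T}$, $\alpha_{\ns T}^\mone(U^\bullet)=U$, where $U^\bullet=\{\qq\topel\mid U\in\qq\topel\}=\{\qq\topel\mid\topel\in U\}$ is the compact open of $FG(\ns T)$ determined by $U$ (via Definition~\ref{defn.pp}, reading $G(\ns T)$ as the lattice); this is just unravelling $\topel\in\alpha_{\ns T}^\mone(U^\bullet)\liff\qq\topel\in U^\bullet\liff\topel\in U$. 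Since by Propositions~\ref{prop.pp.x.clopen} and~\ref{prop.clop.FB} the compact opens of $FG(\ns T)$ are exactly the $U^\bullet$, and every open of $FG(\ns T)$ is a strictly finitely supported union of such, continuity and spectrality follow, and simultaneously this shows $\alpha_{\ns T}^\mone$ takes compact opens bijectively to compact opens. For conditions (iii) and (iv) — mapping atoms-as-programs to themselves and commuting with the $\amgis$-action — I would use Lemma~\ref{lemm.qq.commute}, which is precisely the computation that $\qq\topel[u\ms a]$ and $\qq{(\topel[u\ms c])}$ interact in the way Definition~\ref{defn.morphism}'s condition~\ref{item.g.commute.amgis} demands; the atoms-as-programs condition follows by taking $u$ to be an atom and using that $\prg_{G(\ns T)}=\prg_{\ns T}$ by Theorem~\ref{thrm.T.to.G.obj}, together with Lemma~\ref{lemm.g.commute.amgis.atom}-style reasoning.

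Then I would observe that the inverse map $\alpha_{\ns T}^\mone:FG(\ns T)\to\ns T$ (well-defined by Corollary~\ref{corr.qq.max}) is automatically a morphism as well: the defining bijection $U\leftrightarrow U^\bullet$ of compact opens identifies $G(\ns T)$ with $GFG(\ns T)$, so continuity, spectrality, the program condition, and the $\amgis$-condition for $\alpha_{\ns T}^\mone$ are just the same identities read backwards — concretely, $(\alpha_{\ns T}^\mone)^\mone(U)=U^\bullet$, and the $\amgis$-compatibility is again Lemma~\ref{lemm.qq.commute} rearranged. Finally, Theorem~\ref{thrm.equivalence} (which this proposition feeds into) packages $\alpha_{\ns T}$ together with the dual iso from Theorem~\ref{thrm.pp.iso}/Proposition~\ref{prop.ess.surj} into a full equivalence of categories, but for the present statement it is enough that $\alpha_{\ns T}$ is a morphism with morphism inverse.

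\textbf{Main obstacle.} I expect the delicate point to be the $\amgis$-action condition~\ref{item.g.commute.amgis} of Definition~\ref{defn.morphism} for $\alpha_{\ns T}$: unlike the purely topological conditions, this one involves the $\new$-quantifier and the possibility that the atom $c$ bound under $\new$ is not fresh for the (possibly infinitely-supported) filter-point $\qq\topel$. The saving grace is Lemma~\ref{lemm.qq.commute}, which has already done exactly this $\new$-bookkeeping, so the proof reduces to citing it and matching it syntactically against the shape of condition~\ref{item.g.commute.amgis}; but one must be careful that the $g^\prg$ component of the morphism $\alpha_{\ns T}$ is the identity on $\ns T^\prg$ (so that $g^\prg(u)=u$), which is consistent with $G(\ns T)^\prg=\ns T^\prg$ in Definition~\ref{defn.G} and with the convention $G(g)^\prg(u)=g^\prg(u)$. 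Everything else is routine unravelling of the definitions of $\qq{-}$, $(-)^\bullet$, and the pointwise actions.
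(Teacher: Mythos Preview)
Your proposal is correct and follows essentially the same approach as the paper: bijectivity from Corollary~\ref{corr.qq.max}, the $\amgis$-compatibility from Lemma~\ref{lemm.qq.commute}, and continuity/spectrality via the key identity $\alpha_{\ns T}^\mone(\pp U)=U$. You are in fact more thorough than the paper, which only sketches the continuity argument and leaves equivariance, the atoms-as-programs condition, and the inverse-morphism check implicit; the paper's proof simply cites these same ingredients and declares the rest routine.
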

\begin{proof}
Injectivity and surjectivity are Corollary~\ref{corr.qq.max}.
Commutativity with the $\amgis$-action is Lemma~\ref{lemm.qq.commute}, as can be checked by unravelling definitions.

We also need to show that $\alpha$ is continuous.
The reasoning is standard \cite[Section~4]{burris:couua} so we just sketch it.
First, if $U\in G(\ns T)$ (so $U$ is a compact open set of $\ns T$) consider the inverse image under $\qq{\text{-}}$ of $\pp{U}$.\footnote{Unpacking Definitions~\ref{defn.pp} and~\ref{defn.G}, $\pp{U}$ is the set of prime filters of compact opens of $\ns T$ of which $U$ is an element.}
$$
\begin{array}{r@{\ }l@{\qquad}l}
\topel\in \alpha_{\ns T}^\mone(\pp U) \liff& \alpha_{\ns T}(\topel)\in\pp U
\\
\liff& \{U'\mid \topel\in U'\} \in \pp U
\\
\liff& U\in \{U'\mid \topel\in U'\}
\\
\liff&t\in U
\end{array}
$$
Thus, $\alpha_{\ns T}^\mone(\pp U)=U$.

Now by construction any open set in $FG(\ns T)$ is a union of $\pp U$, and it is a fact that the inverse image function $\alpha_{\ns T}^\mone$ preserves these unions.
It follows that the inverse image of an open set is open.
\end{proof}

\begin{thrm}
\label{thrm.equivalence}
$G:\inspecta^\tf{op}\to\india$ defines an equivalence between $\india$ and $\inspecta^{\f{op}}$.
\end{thrm}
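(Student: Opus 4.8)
The plan is to assemble the equivalence from the two functors $F:\india\to\inspecta^{\f{op}}$ and $G:\inspecta^{\f{op}}\to\india$ already constructed (Theorem~\ref{thrm.F.funct} and Proposition~\ref{prop.G.funct}), by exhibiting natural isomorphisms $GF\cong\f{Id}_{\india}$ and $FG\cong\f{Id}_{\inspecta}$. One half of this is essentially done: Proposition~\ref{prop.ess.surj} already tells us that $GF(\ns D)=\pp{\ns D}$ and that $x\mapsto\pp x$ is an isomorphism in $\india$ (relying on Theorem~\ref{thrm.pp.iso}); so the family of maps $\eta_{\ns D}:\ns D\to GF(\ns D)$, $x\mapsto\pp x$, $u\mapsto u$, supplies the unit isomorphism, and what remains on this side is the routine naturality check, which follows from Lemma~\ref{lemm.FfmoneU} ($F(f)^\mone(\pp x)=\pp{(f(x))}$) together with Definition~\ref{defn.Gg}.

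For the other half, the counit, I would take the family $\alpha_{\ns T}:\topel\mapsto\qq\topel$ from Proposition~\ref{prop.qq.iso}, which states precisely that $\alpha_{\ns T}$ is an isomorphism in $\inspecta$ between $\ns T$ and $FG(\ns T)$ (bijectivity from Corollary~\ref{corr.qq.max}, commutation with the $\sigma$/$\amgis$-structure from Lemma~\ref{lemm.qq.commute}, continuity from the standard argument sketched there, and one should also note that $\alpha$ preserves the $\prg$-structure since $\qq\topel$ interacts correctly with $\prg_{\ns T}$). The only genuinely new work is naturality of $\alpha$: given $g:\ns T'\equivarto\ns T$ in $\inspecta$, one must check that $\alpha_{\ns T}\circ g = FG(g)\circ\alpha_{\ns T'}$, i.e.\ that for $\topel'\in|\ns T'|$ and $U\in\ctop{\ns T}$ we have $U\in\qq{(g(\topel'))}$ iff $U\in FG(g)(\qq{\topel'})$. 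Unwinding $FG(g)=F(G(g))$ via Definitions~\ref{defn.Gg} and~\ref{defn.Ff}, the right-hand side says $G(g)(U)\in\qq{\topel'}$, i.e.\ $\topel'\in g^\mone(U)$, i.e.\ $g(\topel')\in U$; the left-hand side says $g(\topel')\in U$; so the two sides literally coincide. This is a short diagram-chase with no nominal subtleties.

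Having both $\eta$ and $\alpha$ as natural isomorphisms, one concludes that $F$ and $G$ form an equivalence between $\india$ and $\inspecta^{\f{op}}$, and in particular $G:\inspecta^{\f{op}}\to\india$ is an equivalence as claimed. I expect the main obstacle to be not any single hard lemma — all the heavy lifting (the representation theorem, coherence of $F(\ns D)$, sobriety, the $\amgis$-action on points) has already been discharged in the preceding sections — but rather the bookkeeping of checking that $\eta$ and $\alpha$ really are morphisms in the \emph{full} sense of Definitions~\ref{defn.hom.impredicative} and~\ref{defn.morphism}, respectively, including conditions~\ref{item.g.prog.to.prog} and~\ref{item.g.commute.amgis} about atoms-as-programs and the $\amgis$-action; here one leans on Theorem~\ref{thrm.FB.totsep.comp} (impredicativity of $F(\ns D)$) and Theorem~\ref{thrm.T.to.G.obj} (impredicativity of $G(\ns T)$), and on the fact that $\eta$ and $\alpha$ were built from the very maps $\pp{\text{-}}$ and $\qq{\text{-}}$ whose good behaviour is recorded in Lemmas~\ref{lemm.dup}, \ref{lemm.bullet.commute}, and~\ref{lemm.qq.commute}.
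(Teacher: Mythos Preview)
Your proposal is correct, but it takes a different route from the paper. The paper does not build the unit and counit natural transformations explicitly; instead it invokes the standard criterion \cite[Theorem~1, Chapter~IV, Section~4]{CWM71} that a functor is an equivalence iff it is essentially surjective, full, and faithful, and then verifies these three properties for $G$ separately. Essential surjectivity is exactly Proposition~\ref{prop.ess.surj}, as you also use. Faithfulness is obtained directly from sobriety via Lemma~\ref{lemm.coherent.iff}: if $g_1(p)\neq g_2(p)$ then $\qq{g_1(p)}\neq\qq{g_2(p)}$, so some compact open $U$ separates them, whence $G(g_1)(U)\neq G(g_2)(U)$. Fullness is shown by, given $f:G(\ns S)\to G(\ns T)$, transporting $F(f)$ along the isomorphisms $\alpha_{\ns S}$ and $\alpha_{\ns T}$ of Proposition~\ref{prop.qq.iso} to obtain $g=\alpha_{\ns S}^\mone\circ F(f)\circ\alpha_{\ns T}$ with $G(g)=f$.

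What each approach buys: your direct natural-isomorphism argument is conceptually cleaner and makes the adjoint equivalence structure manifest; the naturality squares you check are indeed one-line unwindings. The paper's approach trades the naturality bookkeeping for separate faithfulness and fullness arguments, which has the mild advantage that one never has to verify that $\eta$ is natural (only that each $\eta_{\ns D}$ is an iso), and the fullness argument shows concretely how to recover a spectral map from a lattice map. Both routes lean on the same hard ingredients (Proposition~\ref{prop.ess.surj} and Proposition~\ref{prop.qq.iso}), so neither is materially shorter.
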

\begin{proof}
We use \cite[Theorem~1, Chapter~IV, Section~4]{CWM71}.
\begin{itemize*}
\item
\emph{$G$ is essentially surjective on objects.}\quad
This is Proposition~\ref{prop.ess.surj}.

\item\emph{$G$ is faithful.}\quad
Suppose $g_1,g_2:\ns T\equivarto \ns S\in\inspecta$ and $g_1\neq g_2$; the interesting case here is then that there exists $p\in|\ns T|$ such that $g_1(p)\neq g_2(p)$.
($F$ and $G$ leave programs unchanged, so we can elide $g_1^\prg$ and $g_2^\prg$.)

By assumption $\ns S$ is coherent and sober, so that by Corollary~\ref{corr.qq.max}
$\qq{g_1(p)}$ and $\qq{g_2(p)}$---these are the sets of compact open sets in $\ns S$ containing $g_1(p)$ and $g_2(p)$ respectively---are distinct.

Thus there exists a compact open set $U\in\otop{\ns S}$ with $g_1(p)\in U$ and $g_2(p)\not\in U$.
Examining Definition~\ref{defn.Gg} we see that $p\in G(g_1)(U)$ and $p\not\in G(g_2)(U)$.
Thus, $G(g_1)\neq G(g_2)$.

\item\emph{$G$ is full.}\quad
Given $\ns S,\ns T$ in $\inspecta$ and
$f:G(\ns S)\equivarto G(\ns T)$ in $\india$ we construct a morphism $g:\ns T\equivarto \ns S$ in $\inspecta$ such that $G(g)=f$.

By Proposition~\ref{prop.qq.iso} $\alpha_{\ns T}:\ns T\equivarto FG(\ns T)$ mapping $t$ to $\qq t$ is an isomorphism in $\inspecta$.
Set $g=\alpha_{\ns S}^\mone\circ F(f)\circ \alpha_{\ns T}$.
By routine calculations we can check that $G(g)(U')=f(U')$ for every $U'\in|G(\ns S)|$.
\qedhere\end{itemize*}
\end{proof}

\jamiepart{Adding application and its topological dual the combination operator}
\label{part.application}

So far we have seen $\india$ and $\inspecta$, and Theorem~\ref{thrm.equivalence} is a topological duality theorem relating them.
This is in itself an interesting result: duality for an impredicative propositional logic (propositional logic with quantifiers over propositions; the reader might be familiar with this kind of logical system in the form of the type system of System F \cite{girard:prot}).

However, to model the $\lambda$-calculus we need more structure.

This is developed in Section~\ref{sect.operators}, and our results so far are extended accordingly---culminating in Subsection~\ref{subsect.app.equivalence} with Theorem~\ref{thrm.equivalence.pp}.

\section{$\indiapp$ and $\inspectapp$}
\label{sect.operators}

\subsection{Adding $\app$ and $\ppa$ to $\india$ to get $\indiapp$}
\label{subsect.indiapp}

\begin{figure}[b]
$$
\begin{array}{l@{\qquad}r@{\ }l}
\rulefont{\app\epsilon}
&
(u\ppa x)\app u\leq & x
\\
\rulefont{\app\eta}
&
x\leq& u\ppa(x\app u)
\end{array}
$$
\caption{Adjointness properties for $\app$ and $\ppa$}
\label{fig.app.adjoint}
\end{figure}

\begin{figure}
$$
\begin{array}{l@{\qquad}l@{\hspace{-1.5em}}r@{\ }l@{\quad}l@{\ }r@{\ }l}
\rulefont{\sigma\app}&&
(x\app y)[b\sm v]=&(x[b\sm v])\app (y[b\sm v])
\\
\rulefont{\sigma\ppa}&&
(\prg b\ppa x)[a\sm v]= &\prg b\ppa (x[a\sm v])
\\[1.5ex]
\rulefont{\app\tbot}&&\tbot\app u=&\tbot
&&x\app \tbot=&\tbot
\\
\rulefont{\app{\tand}}&&(x\tand y)\app u\leq &(x\app u)\tand(y\app u)
&&x\app(u\tand v)\leq& (x\app u)\tand (x\app v)
\\
\rulefont{\app{\tor}}&&(x\tor y)\app u=&(x\app u)\tor (y\app u)
&&x\app(u\tor v)=&(x\app u)\tor(x\app v)
\\
\rulefont{\app\tall}&b\#u\limp&(\freshwedge{b}x)\app u\leq &\freshwedge{b}(x\app u)
\\[1.5ex]
\rulefont{\ppa{\tand}}&&u\ppa (x\tand y)=&(u\ppa x)\tand(u\ppa y)
\\
\rulefont{\ppa{\tor}}&&u\ppa (x\tor y)\geq &(u\ppa x)\tor (u\ppa y)
\\
\rulefont{\ppa\tall}&b\#u\limp&\freshwedge{b}(u\ppa x)\leq& u\ppa (\freshwedge{b}x)
\end{array}
$$
\caption{Compatibility properties for $\app$ and $\ppa$}
\label{fig.app.compatible}
\end{figure}

\begin{frametxt}
\begin{defn}
\label{defn.FOLeq.pp}
We extend the notion of an impredicative nominal distributive lattice with $\tall$ from Definition~\ref{defn.D.impredicative} with two equivariant operators $\app:(\ns X\times\ns X)\Rightarrow\ns X$ and $\ppa:(\ns X\times\ns X)\Rightarrow\ns X$, written infix as $x\app y$ and $y\ppa x$.

They must be \deffont{adjoint} as described in Figure~\ref{fig.app.adjoint},
and they must be \deffont{compatible} as described in Figure~\ref{fig.app.compatible} (the notation $\prg b$ is from Notation~\ref{nttn.impredicative.D}).
\end{defn}
\end{frametxt}

\begin{rmrk}
In categorical terminology, axiom \rulefont{\app\epsilon} is a \emph{counit} and \rulefont{\app\eta} is a \emph{unit}.
In Proposition~\ref{prop.lambda.beta.eta} we will derive $\beta$-reduction from \rulefont{\app\epsilon} and $\eta$-expansion from \rulefont{\app\eta}.
Later on in Remark~\ref{rmrk.ppa} we will examine how $\ppa$ behaves in a concrete model.
\end{rmrk}

\begin{rmrk}
\label{rmrk.ppa.stronger}
In the presence of \rulefont{\sigma\#} there is redundancy in \rulefont{\sigma\ppa}; we could take \rulefont{\sigma\ppa} to be $\prg b\ppa(x[a\sm v])\leq (\prg b\ppa x)[a\sm v]$ and get the reverse inequality from Lemma~\ref{lemm.sigma.ppa} (below).
The form given in Definition~\ref{defn.FOLeq.pp} is slightly more convenient to work with.
\end{rmrk}

\begin{rmrk}
\rulefont{\ppa\tall} is an inequality, not an equality.
This seems odd, given that \rulefont{\ppa\tand} is an equality; is not $\tall$ intuitively an infinite conjunction or a fresh-finite limit?
The reason is that this reflects the \emph{in}equality in Lemma~\ref{lemm.sigma.ppa} below.
To see how this works, we refer the interested reader to the case of \rulefont{\ppa\tall} in Lemma~\ref{lemm.GT.app}.
The even more interested reader is referred to Proposition~\ref{prop.qappx.filter}, where \rulefont{\ppa\tall} is just what we need for the final stages of the proof.
\end{rmrk}

\begin{rmrk}
\rulefont{\sigma{\ppa}} might seem odd: why $\prg b\ppa x$ and not $y\ppa x$?
The precise technical reason is in Proposition~\ref{prop.lsm.special.distrib}; the $\prg b\ppa x$ form is what we prove of our canonical syntactic model $\points_\Pi$.
\end{rmrk}

We package our definitions up as a category:
\begin{defn}
\label{defn.indiapp}
Continuing Definition~\ref{defn.FOLeq.pp}, write $\indiapp$ for the category with objects impredicative nominal distributive lattices with $\tall$, $\app$, and $\ppa$,
and morphisms are morphisms in $\india$ (Definition~\ref{defn.india}) that commute with $\app$ and $\ppa$ in the following sense:
$$
\begin{array}{r@{\ }l}
f(x\app y)=&f(x)\app f(y)
\quad\text{and}
\\
f(\prg b\ppa x)=&\prg b\ppa f(x) .
\end{array}
$$
For more on $\prg b$ see Remarks~\ref{rmrk.explain.prg.a} and~\ref{rmrk.explain.prg.b}.
\end{defn}

We conclude with technical lemmas concerning the interaction of $\app$ and $\ppa$ with $\leq$ and the $\sigma$-action.
For the rest of this subsection we fix $\ns D\in\indiapp$ and $x,y,x',y'\in|\ns D|$ and $u\in|\ns D^\prg|$.

\begin{lemm}
\label{lemm.app.leq}
\begin{enumerate*}
\item\label{app.leq.left}
If $x\leq x'$ then $x\app y\leq x'\app y$ and $y\app x\leq y\app x'$.
\item
If $x\leq x'$ then $y\ppa x\leq y\ppa x'$.
\end{enumerate*}
\end{lemm}
\begin{proof}
It is a fact that $x\leq x'$ if and only if $x\tor x'=x'$, if and only if $x=x\tand x'$.
We reason as follows:
\begin{enumerate*}
\item
$x'\app y = (x\tor x')\app y\stackrel{\rulefont{\app{\tor}}}{=} (x\app y)\tor (x'\app y)$.
The proof that $y\app x\leq y\app x'$ is similar.
\item
$y\ppa x = y\ppa (x\tand x') \stackrel{\rulefont{\ppa{\tand}}}{=} (y\ppa x)\tand (y\ppa x') \leq y\ppa x'$
\qedhere\end{enumerate*}
\end{proof}

\begin{lemm}
\label{lemm.india.ppa.adjoint}
$x\app y\leq z$ if and only if $x\leq y\ppa z$.
\end{lemm}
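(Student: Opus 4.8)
The statement to prove is Lemma~\ref{lemm.india.ppa.adjoint}: $x\app y\leq z$ if and only if $x\leq y\ppa z$. This is the standard "adjointness" reformulation, deriving a full Galois-connection-style biconditional from the unit/counit axioms \rulefont{\app\epsilon} and \rulefont{\app\eta} of Figure~\ref{fig.app.adjoint} together with monotonicity of $\app$ and $\ppa$ (Lemma~\ref{lemm.app.leq}).

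\medskip

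The plan is as follows. For the left-to-right direction, suppose $x\app y\leq z$. Apply the monotonicity of $\ppa$ in its second argument (part~2 of Lemma~\ref{lemm.app.leq}) to get $y\ppa(x\app y)\leq y\ppa z$. Then chain with \rulefont{\app\eta}, which gives $x\leq y\ppa(x\app y)$, and transitivity of $\leq$ yields $x\leq y\ppa z$. For the right-to-left direction, suppose $x\leq y\ppa z$. Apply the monotonicity of $\app$ in its first argument (part~1 of Lemma~\ref{lemm.app.leq}) to get $x\app y\leq (y\ppa z)\app y$. Then chain with \rulefont{\app\epsilon}, which gives $(y\ppa z)\app y\leq z$, and transitivity yields $x\app y\leq z$.

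\medskip

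I do not expect any real obstacle here; this is a two-line argument in each direction and is entirely routine once Lemma~\ref{lemm.app.leq} and Figure~\ref{fig.app.adjoint} are in hand. The only thing to be careful about is matching the variable positions in the axioms: \rulefont{\app\epsilon} reads $(u\ppa x)\app u\leq x$ and \rulefont{\app\eta} reads $x\leq u\ppa(x\app u)$, so in applying them one instantiates $u:=y$ and $x:=z$ (resp.\ $x:=x$) appropriately, and one should check that the monotonicity lemma is being used in the correct argument slot. A clean way to present the proof is simply as two short displayed chains of inequalities, each annotated with the axiom or lemma used.

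\begin{proof}
We use the adjointness axioms of Figure~\ref{fig.app.adjoint} and Lemma~\ref{lemm.app.leq}.

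Suppose $x\app y\leq z$. By part~2 of Lemma~\ref{lemm.app.leq} $y\ppa(x\app y)\leq y\ppa z$. Combining with \rulefont{\app\eta}, which gives $x\leq y\ppa(x\app y)$, and using transitivity of $\leq$, we conclude $x\leq y\ppa z$.

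Conversely suppose $x\leq y\ppa z$. By part~1 of Lemma~\ref{lemm.app.leq} $x\app y\leq (y\ppa z)\app y$. Combining with \rulefont{\app\epsilon}, which gives $(y\ppa z)\app y\leq z$, and using transitivity of $\leq$, we conclude $x\app y\leq z$.
\end{proof}
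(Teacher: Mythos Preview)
Your proof is correct and follows exactly the same approach as the paper's own proof: both directions chain the relevant unit/counit axiom from Figure~\ref{fig.app.adjoint} with the appropriate monotonicity clause from Lemma~\ref{lemm.app.leq}. The paper presents each direction as a single displayed chain of inequalities, but the content is identical.
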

\begin{proof}
The reasoning is standard:
\begin{itemize*}
\item
If $x\app y\leq z$ then $x\stackrel{\rulefont{\app\eta}}{\leq} y\ppa(x\app y) \stackrel{\text{L\ref{lemm.app.leq}(2)}}{\leq} y\ppa z$.
\item
If $x\leq y\ppa z$ then $x\app y\stackrel{\text{L\ref{lemm.app.leq}(1)}}{\leq} (y\ppa z)\app y \stackrel{\rulefont{\app\epsilon}}{\leq} z$.
\qedhere\end{itemize*}
\end{proof}

\begin{lemm}
\label{lemm.sigma.ppa}
$(y\ppa x)[a\sm u]\leq y[a\sm u]\ppa x[a\sm u]$.
\end{lemm}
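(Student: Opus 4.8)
The plan is to use the adjunction characterization of $\ppa$ from Lemma~\ref{lemm.india.ppa.adjoint}, which says $w\app v\leq z$ iff $w\leq v\ppa z$. So to prove $(y\ppa x)[a\sm u]\leq y[a\sm u]\ppa x[a\sm u]$ it suffices to establish
$$
\bigl((y\ppa x)[a\sm u]\bigr)\app y[a\sm u]\ \leq\ x[a\sm u].
$$
First I would handle the potentially awkward interaction between the $\sigma$-action and $\app$: by \rulefont{\sigma\app} from Figure~\ref{fig.app.compatible}, $\bigl((y\ppa x)\app y\bigr)[a\sm u] = \bigl((y\ppa x)[a\sm u]\bigr)\app\bigl(y[a\sm u]\bigr)$. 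So the left-hand side above is exactly $\bigl((y\ppa x)\app y\bigr)[a\sm u]$.

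Now the counit \rulefont{\app\epsilon} gives $(y\ppa x)\app y\leq x$ (reading $u\ppa x$ as $y\ppa x$ and the ``$u$'' of the axiom as $y$). Applying monotonicity of the $\sigma$-action (Lemma~\ref{lemm.sigma.monotone}, since $\ns D\in\indiapp\subseteq\india$ has a compatible hence monotone $\sigma$-structure by Definition~\ref{defn.fresh.continuous} and Lemma~\ref{lemm.sigma.monotone}) we get $\bigl((y\ppa x)\app y\bigr)[a\sm u]\leq x[a\sm u]$. Chaining this with the \rulefont{\sigma\app} rewrite from the previous paragraph yields $\bigl((y\ppa x)[a\sm u]\bigr)\app y[a\sm u]\leq x[a\sm u]$, and then Lemma~\ref{lemm.india.ppa.adjoint} (instantiated with $w=(y\ppa x)[a\sm u]$, $v=y[a\sm u]$, $z=x[a\sm u]$) delivers $(y\ppa x)[a\sm u]\leq y[a\sm u]\ppa x[a\sm u]$, which is the claim.

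There is essentially no obstacle here: the only things used are \rulefont{\sigma\app}, \rulefont{\app\epsilon}, monotonicity of $\sigma$, and the adjunction, all of which are available. The one point to be careful about is that \rulefont{\app\epsilon} is stated with the program-variable $u$ in the slot $(u\ppa x)\app u$, whereas here we need it with an arbitrary $y\in|\ns D|$ in that slot — but inspecting Figure~\ref{fig.app.adjoint}, \rulefont{\app\epsilon} reads $(u\ppa x)\app u\leq x$ where $u$ ranges over $|\ns D|$ in the adjointness formulation (the adjointness of $\app$ and $\ppa$ as operators $|\ns D|\times|\ns D|\to|\ns D|$), so no impredicativity restriction intervenes and the instantiation $u:=y$ is legitimate. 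I would write the proof as a short chain of (in)equalities annotated with \rulefont{\sigma\app}, \rulefont{\app\epsilon}, Lemma~\ref{lemm.sigma.monotone}, and Lemma~\ref{lemm.india.ppa.adjoint}.
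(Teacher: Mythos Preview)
Your proof is correct and follows essentially the same route as the paper: reduce via the adjunction (Lemma~\ref{lemm.india.ppa.adjoint}) to showing $((y\ppa x)[a\sm u])\app y[a\sm u]\leq x[a\sm u]$, rewrite with \rulefont{\sigma\app}, and then use the counit together with monotonicity of the $\sigma$-action. The paper's proof annotates the final step with \rulefont{\app\eta}, but as you observe it is the counit \rulefont{\app\epsilon} that is being invoked; your citation is the correct one.
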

\begin{proof}
We reason as follows:
$$
\begin{array}[b]{r@{\ }l@{\qquad}l}
&\hspace{-4em}x[a\sm u]\leq x[a\sm u]
\\
\limp^{\rulefont{\dagger}}&
((y\ppa x)\app y)[a\sm u]\leq x[a\sm u]
&\text{Lem~\ref{lemm.sigma.monotone}, \rulefont{\app\eta}}
\\
\limp^{\rulefont{\ast}}&
(y\ppa x)[a\sm u]\app y[a\sm u]\leq x[a\sm u]
&\text{From \rulefont{\sigma\app}}
\\
\liff&
(y\ppa x)[a\sm u]\leq y[a\sm u]\ppa x[a\sm u]
&\text{Lemma~\ref{lemm.india.ppa.adjoint}}
\end{array}
\qedhere$$
\end{proof}

\begin{rmrk}[An aside on explicit substitutions]
\label{rmrk.explicit.substitutions}
Note the implications $\limp^{\rulefont{\dagger}}$ and $\limp^{\rulefont{\ast}}$ in the proof of Lemma~\ref{lemm.sigma.ppa} above.
The $\limp^{\rulefont{\ast}}$ could be replaced by a stronger if-and-only-if $\liff$.
This would make no difference to the final statement of the Lemma, because of the $\limp^{\rulefont{\dagger}}$ immediately above.
We have a reason for writing the proof above as we have done, which we now describe:

Recall from Figure~\ref{fig.app.compatible} that \rulefont{\sigma\app} is an equality (that $(x\app y)[b\sm v]=(x[b\sm v])\app (y[b\sm v])$ for all $x$, $y$, and $v$) and so can be viewed as two inequalities.
The precise inequality required to derive $\limp^{\rulefont{\ast}}$ in the proof of Lemma~\ref{lemm.sigma.ppa} above is that
$$
(y\ppa x)[a\sm u]\app y[a\sm u] \leq ((y\ppa x)\app y)[a\sm u] .
$$
This paper is based on $\lambda$-calculus without explicit substitutions (see the absence of an explicit substitution \cite{abadi:exps} in Definitions~\ref{defn.lamtrm} or~\ref{defn.idiom} below).
So substitution is not a reduction step, and this is why \rulefont{\sigma\app} is an equality.

To model explicit substitutions we would naturally weaken \rulefont{\sigma\app} from Figure~\ref{fig.app.compatible} to an inequality that $(x\app y)[b\sm v]\leq (x[b\sm v])\app (y[b\sm v])$ and condition~\ref{item.nom.top.app.sub} of Definition~\ref{defn.coherent.bpp} to a subset inclusion, reflecting that substitution becomes a distinct reduction step.\footnote{This is nice also semantically: it is easier to build models of the inequality and subset inclusion than of the equality.}
We already noted in Remark~\ref{rmrk.ppa.stronger} that \rulefont{\sigma\ppa} could already be taken as an inequality at some small cost in complexity.
So this seems very natural.

Therefore, it seems worth noting that the inequality we need for the proof above is 
$((y\ppa x)\app y)[a\sm u]\leq(y\ppa x)[a\sm u]\app y[a\sm u]$,
which is not the inequality we would expect to get from the reduction rule for an explicit substitution distributing over an application. 
We would lose Lemma~\ref{lemm.sigma.ppa} in its current form.

This is probably not fatal: the uses of Lemma~\ref{lemm.sigma.ppa} required for this paper to work are for the special case that $a\#y$
(Proposition~\ref{prop.unpack.lambda.for.G} uses the full result but is not itself needed for the rest of the paper).
This suggests that a suitable generalisation would include an inequality $(x\app y)[b\sm v]\leq (x[b\sm v])\app (y[b\sm v])$, along with an equality $x[b\sm v]\app y=(x\app y)[b\sm v]$ if $a\#y$ (which is the special case for which we need Lemma~\ref{lemm.sigma.ppa}), and this for a calculus for explicit substitutions which includes a reduction $(st)[a\sm u]\to (s[a\sm u])(t[a\sm u])$, along with a syntactic equivalence $s[a\sm u]t= (st)[a\sm u]$ (or just a reduction $s[a\sm u]t\to (st)[a\sm u]$) if $a\#t$. 

Investigating further which explicit substitution calculi this paper would naturally generalise to, if any, is future research.
\end{rmrk}

\subsection{The combination operator $\bpp$: a topological dual to $\app$ and $\ppa$}
\label{subsect.bpp}

In Subsection~\ref{subsect.indiapp} we extended $\india$ with extra structure $\app$ and $\ppa$.

We can expect this to be reflected in the topologies by some kind of extension with a structure dual to $\app$ and $\ppa$.\footnote{For instance, $\ns D\in\india$ has a $\sigma$-action, and this is reflected dually as an $\amgis$-action on prime filters.

Similarly, $\ns D$ has a permutation action; permutations are invertible, so the dual structure on prime filters is \dots another permutation action.}

What should dually correspond to $\app$ and $\ppa$?
Remarkably, this requires only a little more structure on points: see the combination operator $\bpp$ in Definition~\ref{defn.bpp}.

\subsubsection{The basic definition}

Recall the small-supported powerset $\nompow(\text{-})$ from Subsection~\ref{subsect.finsupp.pow}.
\begin{defn}
\label{defn.bpp}
We define an \deffont{$\amgis$-algebra with $\bpp$} or \deffont{$\amgis$-algebra with combination} by extending the notion of $\amgis$-algebra $\ns P$ from Definition~\ref{defn.bus.algebra} with an equivariant \deffont{combination} operator
$$
\bpp:(\ns P\times\ns P)\Rightarrow\nompow(\ns P),
$$
written infix as $p\bpp q$.
\end{defn}

An $\amgis$-algebra with combination is just an $\amgis$-algebra with an equivariant \emph{combination} function $\bpp$ mapping pairs of points to sets of points.
We now outline how the notion of spectral space is enriched by assuming $\bpp$ (Definition~\ref{defn.bpp}) on its points.

We extend Definition~\ref{defn.nom.top}:
\begin{frametxt}
\begin{defn}
\label{defn.nom.top.bpp}
A \deffont{nominal $\sigma$-topological space with $\bpp$} $\ns T$ (or just `$\sigma\bpp$-topological space') is a nominal $\sigma$-topological space (Definition~\ref{defn.nom.top}) whose points have the additional structure of a combination operator $\bpp$ (Definition~\ref{defn.bpp}).
\end{defn}
\end{frametxt}

\begin{rmrk}
\label{rmrk.where.from}
What concrete models of Definition~\ref{defn.nom.top.bpp} look like is a very interesting question---especially what the \emph{combination operator} $\bpp$ looks like.

Concrete models of $\bpp$ arise from Kripke-style models with a \emph{ternary} accessiblity relation.
We refer the interested reader to \cite[Definition~2.1]{gabbay:simcks} and the discussion in \cite[Subsection~5.2]{gabbay:simcks}, which notes the similarity (and differences) of the Kripke model of $\bpp$ compared with multiplicative conjunction, phase spaces, and certain models of relevance logic.

Combination $\bpp$ will be dual to $\lambda$-calculus application $\app$.
Since the dual to sigma is amgis, perhaps $\bpp$ should be called \emph{ppa}, pronounced `pah' to rhyme with `bah'.
\emph{Combination} is a more serious name for it.
\end{rmrk}

\subsubsection{$\app$ and $\ppa$, and spectral spaces}

Extend Definition~\ref{defn.sub.sets} as follows:
\begin{defn}
\label{defn.sub.sets.pp}
\label{defn.YppaX}
Suppose $\ns P=(|\ns P|,\act,\ns P^\prg,\tf{amgis},\bpp)$ is an $\amgis$-algebra with $\bpp$.
Suppose $X,Y\subseteq|\ns P|$ and suppose $p{\in}|\ns P|$.
Then the following notation will be useful: 
\begin{frameqn}
\begin{array}{r@{\ }l}
X\app Y=&\bigcup\{p\bpp q\mid p\in X,\ q\in Y\}
\\
Y\ppa X=&\{p\mid \Forall{q{\in}Y}p\bpp q\subseteq X\}
\\
p\bpp Y =& \bigcup\{p\bpp q\mid q\in Y\} .
\end{array}
\end{frameqn}
\end{defn}

\begin{lemm}
\label{lemm.rewrite.ppa}
$Y\ppa X$ can be conveniently rewritten as 
$$
Y\ppa X = \{p\mid p\bpp Y\subseteq X\} .
$$
\end{lemm}
\begin{proof}
By routine calculations.
\end{proof}

We can extend Proposition~\ref{prop.amgis.iff} to reflect the structure created by the combination action $\bpp$:
\begin{prop}
\label{prop.amgis.iff.pp}
We have the following:
$$
\begin{array}{r@{\ }l@{\qquad}r@{\ }l}
r\in X\app Y \liff& \Exists{p{\in}X,q{\in}Y}r\in p\bpp q
&
r\in Y\ppa X \liff& r\bpp Y\subseteq X
\\
&& \liff &\Forall{q{\in}Y} r\bpp q\subseteq X
\end{array}
$$
\end{prop}

We extend Definition~\ref{defn.coherent} to account for the extra structure:
\begin{defn}
\label{defn.coherent.bpp}
Call a nominal $\sigma$-topological space $\ns T$ with $\bpp$ \deffont{coherent} when it is coherent in the sense of Definition~\ref{defn.coherent} and in addition, for all $X$ and $Y$ open and compact (so $X,Y\in\ctop{\ns T}$):
\begin{frametxt}
\begin{enumerate*}
\setcounter{enumi}{4}
\item
$X\app Y$ and $Y\ppa X$ are open and compact.
\item
\label{item.nom.top.app.sub}
$(X\app Y)[a\sm u]=X[a\sm u]\app Y[a\sm u]$ (so $\ctop{\ns T}$ satisfies \rulefont{\sigma\app}).
\item
\label{item.nom.top.ppa.sub}
$(\prg_{\ns T} b\ppa X)[a\sm u]=\prg_{\ns T} b\ppa (X[a\sm u])$ (so $\ctop{\ns T}$ satisfies \rulefont{\sigma\ppa}).
\end{enumerate*}
\end{frametxt}
\end{defn}

\begin{lemm}
\label{lemm.coherent.sigma.ppa}
Suppose $\ns T$ is a nominal $\sigma$-topological space with $\bpp$ (Definition~\ref{defn.nom.top.bpp}) and $X,Y\subseteq|\ns T|$.
Then
$$
\Forall{W{\in}\otop{\ns T}}\bigl( W\subseteq Y\ppa X\liff W\app Y\subseteq X \bigr).
$$
If furthermore $\ns T$ is coherent and $X,Y\in\ctop{\ns T}$, then
$$
Y\ppa X=\bigcup\{W\in\ctop{\ns T}\mid W\app Y\subseteq X\}.
$$
\end{lemm}
\begin{proof}
Consider $W\in\otop{\ns T}$ and $X,Y\subseteq|\ns T|$.

By Lemma~\ref{lemm.rewrite.ppa} $q\bpp Y\subseteq X$ for any $q\in Y$ ($q\bpp Y$ is from Definition~\ref{defn.YppaX}).
It follows from Definition~\ref{defn.sub.sets.pp} that $W\app Y\subseteq X$.
Conversely if $W\app Y\subseteq X$ then by Definition~\ref{defn.sub.sets.pp},\ $p\bpp Y\subseteq X$ for every $p\in W$, and thus by Lemma~\ref{lemm.rewrite.ppa} also $p\in Y\ppa X$ for every $p\in W$.
So $W\subseteq Y\ppa X$.

By the first part, $\bigcup\{W\in\otop{\ns T}\mid W\app Y\subseteq X\}\subseteq Y\ppa X$.
If in addition $\ns T$ is coherent then if $X,Y\in\ctop{\ns T}$ then $Y\ppa X\in\ctop{\ns T}$ and so $Y\ppa X$ is one of the $W$ such that $W\app Y\subseteq X$, so $Y\ppa X\subseteq\bigcup\{W\in\otop{\ns T}\mid W\app Y\subseteq X\}$.
\end{proof}

Definition~\ref{defn.nom.top.spectral.bpp} extends Definition~\ref{defn.nom.top.spectral}:
\begin{defn}
\label{defn.nom.top.spectral.bpp}
A \deffont{nominal spectral space with $\bpp$} is a nominal $\sigma\bpp$-topological space $\ns T$ (Definition~\ref{defn.nom.top.bpp}) that is impredicative (\ref{defn.impredicative.top}), coherent (Definition~\ref{defn.coherent.bpp}), and sober (Definition~\ref{defn.sober}).
\end{defn}

\begin{defn}
\label{defn.g.extend}
We extend the notion of \deffont{morphism} $g:\ns T'\equivarto\ns T$ of nominal spectral spaces from Definition~\ref{defn.morphism} to insist that the inverse image $g^\mone$ should commute with $\app$ and $\ppa$ in the following sense:
\begin{itemize*}
\item
If $X,Y\in\ctop{\ns T}$ then
$g^\mone(X\app Y)=g^\mone(X)\app g^\mone(Y)$.
\item
If $X\in\ctop{\ns T}$ then $g^\mone(\prg b\ppa X)=\prg b\ppa g^\mone(X)$.
\end{itemize*}
\begin{frametxt}
Write $\inspectapp$ for the category of nominal spectral spaces with $\bpp$ (Definition~\ref{defn.nom.top.spectral.bpp}), and morphisms between them whose inverse image functions commute with $\app$ and $\ppa$, as described above.
\end{frametxt}
\end{defn}

\subsubsection{Useful technical lemmas}

We conclude with a pair of useful technical lemmas:
\begin{lemm}
\label{lemm.bigcup.app}
\begin{itemize*}
\item
$(\bigcup_i X_i)\app (\bigcup_j Y_j)=\bigcup_{ij}(X_i\app Y_j)$.
\item
$\varnothing\app Y=\varnothing=X\app\varnothing$.
\item
If $X\subseteq X'$ and $Y\subseteq Y'$ then $X\app Y\subseteq X'\app Y'$.
\end{itemize*}
\end{lemm}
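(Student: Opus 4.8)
The statement to prove is Lemma~\ref{lemm.bigcup.app}, which records three elementary facts about the pointwise combination action $\app$ on subsets of an $\amgis\bpp$-algebra from Definition~\ref{defn.sub.sets.pp}.

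\medskip

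The plan is to unfold the definition $X\app Y=\bigcup\{p\bpp q\mid p\in X,\ q\in Y\}$ in each case and then do routine set-theoretic bookkeeping. For the first part, I would reason that $r\in(\bigcup_i X_i)\app(\bigcup_j Y_j)$ holds iff there exist $p\in\bigcup_i X_i$ and $q\in\bigcup_j Y_j$ with $r\in p\bpp q$ (using Proposition~\ref{prop.amgis.iff.pp}), iff there exist indices $i,j$ and $p\in X_i$, $q\in Y_j$ with $r\in p\bpp q$, iff $r\in\bigcup_{ij}(X_i\app Y_j)$; the only content is commuting an existential over a disjoint union with the pairing, which is immediate. For the second part, if either $X$ or $Y$ is empty then the indexing set $\{(p,q)\mid p\in X,\ q\in Y\}$ of the union defining $X\app Y$ is empty, so the union is $\varnothing$. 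The third part, monotonicity, follows because enlarging $X$ to $X'$ and $Y$ to $Y'$ only enlarges the index set of the union, so $X\app Y\subseteq X'\app Y'$; alternatively it is the special case $X_1=X$, $Y_1=Y$ (plus $X_2=X'$, $Y_2=Y'$) of part~1 combined with $X\app Y\subseteq(X\app Y)\cup(X'\app Y')\cup\dots$, but the direct argument is cleaner.

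\medskip

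There is no real obstacle here: all three parts are direct consequences of Definition~\ref{defn.sub.sets.pp} and the characterisation in Proposition~\ref{prop.amgis.iff.pp}, together with the fact that a union of sets indexed by a disjoint union of index sets is the union of the individual sub-unions. The mild point to be careful about is that $p\bpp q$ is itself a \emph{set} of points (an element of $\f{pow}(\ns P)$), not a single point, so the defining union of $X\app Y$ is a union of sets of points rather than a set of pairs; but this does not affect any of the three identities. I would present the proof as three short paragraphs, one per bullet, each a single chain of iff's or inclusions referring back to Definition~\ref{defn.sub.sets.pp} (and Proposition~\ref{prop.amgis.iff.pp} where convenient), with the whole thing occupying only a few lines.
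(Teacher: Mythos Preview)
Your proposal is correct and matches the paper's approach exactly: the paper's own proof is the single line ``By elementary sets calculations on Definition~\ref{defn.sub.sets.pp}'', and your three short paragraphs are precisely the unfolding one would carry out to justify that line.
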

\begin{proof}
By elementary sets calculations on Definition~\ref{defn.sub.sets.pp}.
\end{proof}

\begin{lemm}
\label{lemm.bruce}
The adjoint axioms \rulefont{\app\epsilon} and \rulefont{\app\eta} from Definition~\ref{defn.FOLeq.pp} hold.
That is, for $X,Y\subseteq|\ns P|$:
$$
(U\ppa X)\app U\subseteq X
\quad\text{and}\quad
X\subseteq U\ppa(X\app U)
$$
\end{lemm}
\begin{proof}
Suppose $r\in (U\ppa X)\app U$.
By Proposition~\ref{prop.amgis.iff.pp} there exist $p\in U\ppa X$ and $q\in U$ such that $r\in p\bpp q$, and by Proposition~\ref{prop.amgis.iff.pp} again, $p\bpp U\subseteq X$.
Therefore by Definition~\ref{defn.YppaX} $p\bpp q\subseteq X$ and in particular $r\in X$ as required.

Now consider $p\in X$.
By Proposition~\ref{prop.amgis.iff.pp} it suffices to show that $p\bpp U\subseteq X\app U$.
But this is immediate from Definition~\ref{defn.sub.sets.pp}. 
\end{proof}

\subsection{Filters in the presence of $\app$ and $\ppa$}

The notions of filter and ideal from Definitions~\ref{defn.filter} and~\ref{defn.ideal} do not change with the addition of $\app$ and $\ppa$.
This is very convenient, because it leaves unaffected the `logical' structure studied previously to Section~\ref{sect.operators}, and the theorems we proved so far still hold.

However, the addition of $\app$ and $\ppa$ adds structure, and this gives us three useful ways to build new filters out of old ones, which we now consider.
Fix some $\ns D\in\indiapp$ (Definition~\ref{defn.indiapp}).

\subsubsection{Combining filters I ($q\bpp y$)}

\begin{defn}
\label{defn.qappy}
Suppose $q$ is a filter in $\ns D$ and $y\in|\ns D|$.
Define $q\bpp y\subseteq|\ns D|$ by
\begin{frameqn}
\begin{array}{r@{\ }l}
q\bpp y =&\{ x{\in}|\ns D| \mid y\ppa x\in q\}
\end{array}
\end{frameqn}
\end{defn}

A justification for the notation $q\bpp y$ is Lemma~\ref{lemm.something}, which exhibits $q\bpp y$ as a kind of dual to $y\ppa x$:
\begin{lemm}
\label{lemm.something}
$x\in q\bpp y$ if and only if $y\ppa x\in q$.
\end{lemm}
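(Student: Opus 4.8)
The plan is short, because the statement is essentially a definitional unfolding. By Definition~\ref{defn.qappy}, $q\bpp y$ is \emph{defined} to be the set $\{x\mid y\ppa x\in q\}$; so for any $x\in|\ns D|$, the assertion $x\in q\bpp y$ is, by the very meaning of set-builder notation, equivalent to $y\ppa x\in q$. Hence the single step of the proof is: expand Definition~\ref{defn.qappy}. No side-conditions are needed --- neither support conditions, nor $q$ being a filter, nor any of the adjointness axioms of Figure~\ref{fig.app.adjoint} --- so the equivalence in fact holds for an arbitrary subset $q\subseteq|\ns D|$.

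Consequently I do not expect any obstacle: there is genuinely nothing to verify beyond reading off the definition. The reason to record this trivial fact as a named lemma is presumably expository rather than mathematical. It puts $q\bpp y$ on the same footing as the pointwise membership characterisations used elsewhere in the paper (compare Lemma~\ref{lemm.rewrite.ppa}, Lemma~\ref{lemm.easy.ppmone}, and Proposition~\ref{prop.amgis.iff.pp}), so that later arguments --- in particular wherever $q\bpp y$ is shown to inherit filter-like structure, or where it interacts with $\ppa$ and with the points construction --- can simply cite Lemma~\ref{lemm.something} rather than re-deriving the unfolding each time, and so that the suggestive notation ``$q\bpp y$'' is explicitly justified as a dual to ``$y\ppa x$''.
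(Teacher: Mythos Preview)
Your proposal is correct and matches the paper's own proof, which simply reads ``Routine from Definition~\ref{defn.qappy}.'' Your additional commentary on why the lemma is recorded is accurate and appropriate.
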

\begin{proof}
Routine from Definition~\ref{defn.qappy}.
\end{proof}

\begin{prop}
\label{prop.qappx.filter}
Suppose $q$ is a filter in $\ns D$ and $y\in|\ns D|$.
Then:
\begin{enumerate*}
\item
$q\bpp y$ satisfies conditions~\ref{filter.up}, \ref{filter.and}, and~\ref{filter.new} of Definition~\ref{defn.filter}.
\item
If $q$ is prime then $q\bpp y$ in addition satisfies the primeness condition (Definition~\ref{defn.prime.filter}).
\item\label{qappx.prime.to.prime}
If $q$ is a prime filter in $\ns D$ and $\tbot\not\in q\bpp y$ then $q\bpp y$ is a prime filter.
\end{enumerate*}
\end{prop}
\begin{proof}
Part~\ref{qappx.prime.to.prime} follows from parts~1 and~2 since the only remaining condition from the conditions of Definitions~\ref{defn.filter} and~\ref{defn.prime.filter} that $q\bpp y$ can fail to satisfy, in order to avoid being a prime filter, is condition~\ref{filter.proper} of Definition~\ref{defn.filter}.

We now prove parts~1 and~2 of this result by checking the conditions of Definitions~\ref{defn.filter} and~\ref{defn.prime.filter}
for $q\bpp x$, freely using Lemma~\ref{lemm.something}:
\begin{enumerate}
\setcounter{enumi}{1}
\item
\emph{If $x\in q\bpp y$ and $x\leq x'$ then $x'\in q\bpp y$.}\quad
If $x\in q\bpp y$ then $y\ppa x\in q$.
By part~2 of Lemma~\ref{lemm.app.leq} and condition~\ref{filter.up} of Definition~\ref{defn.filter} for $q$ we have $y\ppa x'\in q$, and so $x'\in q\bpp y$.
\item
\emph{If $x\in q\bpp y$ and $x'\in q\bpp y$ then $x{\tand}x'\in q\bpp y$.}\quad
Suppose $x,x'\in q\bpp y$, so that $y\ppa x,y\ppa x'\in q$.
Therefore $(y\ppa x)\tand(y\ppa x')\in q$ and by \rulefont{\ppa{\tand}} $y\ppa(x{\tand}x')\in q$.
Thus $x{\tand}x'\in q\bpp y$.
\item
\emph{If $\New{b}(b\ a)\act x\in q\bpp y$ then $\tall a.x\in q\bpp y$.}
\quad
We note that for fresh $a'$ (so $a'\#x,y$)\footnote{We do not assume that $q$ has small support so we do not know that $a'\#q$, but that will not be a problem.}
if $b\#x$ then by Corollary~\ref{corr.stuff} $(b\ a)\act x=(b\ a')\act((a'\ a)\act x)$ and by Lemma~\ref{lemm.freshwedge.alpha} $\tall a.x=\tall a'.(a'\ a)\act x$.
So using Proposition~\ref{prop.pi.supp} we may assume without loss of generality that $a\#y$.

Now suppose $\New{b}(b\ a)\act x\in q\bpp y$.
By Lemma~\ref{lemm.something} $\New{b}(y\ppa (b\ a)\act x\in q)$, and by Lemma~\ref{lemm.F.magic} (for the equivariant function $\ppa$) $\New{b}(b\ a)\act(y\ppa x)\in q$.
By assumption $q$ is prime, so that by condition~\ref{filter.new} of Definition~\ref{defn.filter}
$\tall a.(y\ppa x)\in q$, thus by \rulefont{\ppa\tall} and condition~\ref{filter.up} of Definition~\ref{defn.filter} $y\ppa \tall a.x\in q$ and so by Lemma~\ref{lemm.something} $\tall a.x\in q\bpp y$.
\item
\emph{If $x\tor x'\in q\bpp y$ then either $x\in q\bpp y$ or $x'\in q\bpp y$.}
\quad
Suppose $x\tor x'\in q\bpp y$.
By Lemma~\ref{lemm.something} $y\ppa (x\tor x')\in q$, so that (since $q$ is prime; Definition~\ref{defn.prime.filter}) $y\ppa x\in q$ or $y\ppa x'\in q$.
We use Lemma~\ref{lemm.something} again.
\qedhere\end{enumerate}
\end{proof}

\subsubsection{Combining filters II ($q\app x$)}

We prove Lemmas~\ref{lemm.hard.1} and~\ref{lemm.hard.2}, which will help us prove Theorem~\ref{thrm.pp.app}.
These lemmas are versions of (and corollaries of) Theorem~\ref{thrm.maxfilt.zorn}, but for the applicative structure.

Definition~\ref{defn.qappx} is essentially a special case of Definition~\ref{defn.pappq} for the case of $q\app(x{\uparrow})$:\footnote{It is \emph{not} literally true that $q\app x=q\app(x{\uparrow})$, just because $q\app x$ is not up-closed (neither is $q\app(x{\uparrow})$).  This will not matter.}
\begin{defn}
\label{defn.qappx}
If $q$ is a filter in $\ns D$ and $x\in|\ns D|$ then define $q\app x\subseteq|\ns D|$ by
\begin{frameqn}
q\app x = \{y\app x \mid y\in q\} .
\end{frameqn}
\end{defn}
$q\app x$ is not necessarily a filter, but the notation will be useful.

\begin{lemm}
\label{lemm.hard.1}
Suppose $r{\subseteq}|\ns D|$ is a prime filter and $y{\in}|\ns D|$.
\begin{enumerate*}
\item
If $\ttop\app y\in r$
then
$$
I=\{x{\in}|\ns D| \mid x\app y\not\in r\}
$$
is an ideal (Definition~\ref{defn.ideal}).
\item
For any $x{\in}|\ns D|$, if $x\app y\in r$ then there exists a prime filter $p$ such that $x{\in}p$ and $p\app y\subseteq r$.
\end{enumerate*}
\end{lemm}
\begin{proof}
\begin{enumerate}
\item
By \rulefont{\app\tbot} of Figure~\ref{fig.app.compatible} $\tbot\app y=\tbot$ and by condition~\ref{filter.proper} of Definition~\ref{defn.filter} $\tbot\not\in r$.
Thus $\tbot\in I$ and $I$ is nonempty.

We now verify that $I$ satisfies the other conditions of Definition~\ref{defn.ideal}:
\begin{enumerate}[(a)]
\item
We assumed $\ttop\app y\in r$, so $\ttop\not\in I$.
\item
By condition~\ref{filter.up} of Definition~\ref{defn.filter} $r$ is up-closed so that by Lemma~\ref{lemm.app.leq}(\ref{app.leq.left}) 
$I$ is down-closed.
\item
We use \rulefont{\app{\tor}} of Figure~\ref{fig.app.compatible}.
If $x\app y\not\in r$ and $x'\app y\not\in r$ then by primeness of $r$ (Definition~\ref{defn.prime.filter}) it follows that $(x\app y)\tor(x'\app y)\stackrel{\rulefont{\app{\tor}}}=(x\tor x')\app y\not\in r$.
\end{enumerate}
\item
Suppose $x\app y\in r$; then by Lemma~\ref{lemm.app.leq}(\ref{app.leq.left}) $\ttop\app y\in r$ and we use Theorem~\ref{thrm.maxfilt.zorn} for the (by Lemma~\ref{lemm.uparrow.filter}(1)) small-supported filter $x{\uparrow}$ and the (by part~1 of this result) ideal $I$.
\qedhere\end{enumerate}
\end{proof}

\begin{lemm}
\label{lemm.hard.2}
Suppose $p,r{\subseteq}|\ns D|$ are prime filters and $y{\in}|\ns D|$.
Then:
\begin{enumerate*}
\item
If $p\app \ttop\subseteq r$ then
$$
J=\{y{\in}|\ns D| \mid p\app y\not\subseteq r\}
$$
is an ideal (Definition~\ref{defn.ideal}).
\item
For any $y{\in}|\ns D|$, if $p\app y\subseteq r$ then there exists a prime filter $q$ such that $y{\in}q$ and $p\app q\subseteq r$.
\end{enumerate*}
\end{lemm}
\begin{proof}
\begin{enumerate}
\item
Using \rulefont{\app\tbot} of Figure~\ref{fig.app.compatible} $p\app\tbot=\{\tbot\}$ and by condition~\ref{filter.proper} of Definition~\ref{defn.filter} $\{\tbot\}\not\subseteq r$.
Thus $\tbot\in J$ and $J$ is nonempty.

We now verify that $J$ satisfies the other conditions of Definition~\ref{defn.ideal}:
\begin{enumerate}[(a)]
\item
We assumed $p\app\ttop\subseteq r$, so $\ttop\not\in J$.
\item
By condition~\ref{filter.up} of Definition~\ref{defn.filter} $r$ is up-closed so that by Lemma~\ref{lemm.app.leq}(\ref{app.leq.left}) 
$J$ is down-closed.
\item
Suppose $p\app y\not\subseteq r$ and $p\app y'\not\subseteq r$; then there exist $x{\in}p$ and $x'{\in}p$ such that $x\app y\not\in r$ and $x'\app y'\not\in r$.
By up-closure of $r$ (condition~\ref{filter.up} of Definition~\ref{defn.filter}) and Lemma~\ref{lemm.app.leq}(\ref{app.leq.left}) $(x\tand x')\app y\not\in r$ and $(x\tand x')\app y'\not\in r$ so that by primeness of $r$ (Definition~\ref{defn.prime.filter}) $(((x\tand x')\app y)\tor((x\tand x')\app y)\stackrel{\rulefont{\app{\tor}}}=(x\tand x')\app (y\tor y')\not\in r$.
Thus $p\app(y\tor y')\not\subseteq r$.
\end{enumerate}
\item
Suppose $p\app y\subseteq r$; then by Lemma~\ref{lemm.app.leq}(\ref{app.leq.left}) $p\app\ttop\subseteq r$ and we use Theorem~\ref{thrm.maxfilt.zorn} for the (by Lemma~\ref{lemm.uparrow.filter}(1)) small-supported filter $y{\uparrow}$ and the (by part~1 of this result) ideal $J$.
\qedhere\end{enumerate}
\end{proof}

\begin{rmrk}
\label{rmrk.r.not.ssp}
$I$ and $J$ from Lemmas~\ref{lemm.hard.1} and~\ref{lemm.hard.2} are ideals, but they need not be small-supported, because the prime filter $r$ need not be small-supported. 
\end{rmrk}

\subsection{The second representation theorem}

For the rest of this subsection fix some $\ns D\in\indiapp$ (Definition~\ref{defn.indiapp}).
We now show how to build an $\amgis$-algebra with $\bpp$ using $\ns D$.
This is reasonable; recall from Subsection~\ref{subsect.bpp} that $\bpp$ is the topological dual to application $\app$.

\subsubsection{The basic actions}

The first two clauses of Definition~\ref{defn.p.action.pp} echo Definition~\ref{defn.p.action}; the third gives prime filters a $\bpp$ action in the sense of Definition~\ref{defn.bpp}.
Recall that $\pp x$ is from Definition~\ref{defn.pp}:
\begin{defn}
\label{defn.p.action.pp}
Give prime filters $p$ and $q$ in $\ns D$ actions as follows:
\begin{frameqn}
\begin{array}{r@{\ }l@{\qquad}r@{\ }l}
\pi\act p=&\{\pi\act x\mid x\in p\}
&
p[u\ms a]=&\{x\mid x[a\sm u]\in p\}
\\
p\bpp q=&\bigcap\{\pp{(x\app y)}\mid x\in p, y\in q\}
\end{array}
\end{frameqn}
\end{defn}

\begin{prop}
\label{prop.filters.amgis.pp}
Prime filters of $\ns D$ form an $\amgis$-algebra with $\bpp$ in the sense of Definition~\ref{defn.bpp}.
\end{prop}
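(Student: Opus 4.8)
The plan is to split the statement into the $\amgis$-algebra structure, which is already in hand, and the genuinely new datum, namely the combination operator $\bpp$. For the first part, Proposition~\ref{prop.filters.amgis} already gives that the prime filters of $\ns D$, equipped with the pointwise permutation and $\amgis$-actions of Definition~\ref{defn.p.action.pp}, form an $\amgis$-algebra in the sense of Definition~\ref{defn.bus.algebra}; nothing added in Subsection~\ref{subsect.indiapp} disturbs that. So everything reduces to checking that $p\bpp q=\bigcap\{\pp{(x\app y)}\mid x\in p,\ y\in q\}$ from Definition~\ref{defn.p.action.pp} meets the requirements of Definition~\ref{defn.bpp}: that it is an equivariant map taking values in $\f{pow}(\points(\ns D))$.

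That $p\bpp q$ is a set of prime filters is immediate, being an intersection of sets $\pp{(x\app y)}$ of prime filters (Definition~\ref{defn.pp}). For equivariance I would push $\pi$ through the intersection using that intersections are equivariant (Theorem~\ref{thrm.equivar}), rewrite $\pi\act\pp{(x\app y)}=\pp{(\pi\act(x\app y))}$ by part~1 of Lemma~\ref{lemm.dup}, use equivariance of $\app$ to get $\pi\act(x\app y)=(\pi\act x)\app(\pi\act y)$, and re-index over $x\in\pi\act p$, $y\in\pi\act q$; this yields $\pi\act(p\bpp q)=(\pi\act p)\bpp(\pi\act q)$. Equivalently, $\bpp$ is specified by a first-order formula in $p$ and $q$, so equivariance is a direct instance of Theorem~\ref{thrm.equivar}.

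The step I expect to be the real obstacle is showing that $p\bpp q$ lies in the \emph{finitely supported} powerset, as Definition~\ref{defn.bpp} demands, given that prime filters are not assumed finitely supported (Definition~\ref{defn.filter}, Remark~\ref{rmrk.p.not.finite.support}). Each $\pp{(x\app y)}$ is finitely supported, with $\supp(\pp{(x\app y)})=\supp(x\app y)\subseteq\supp(x)\cup\supp(y)$ by Corollary~\ref{corr.supp.pp} and part~\ref{item.conservation.of.support} of Theorem~\ref{thrm.equivar}, since $x\app y$ lies in the nominal set $\ns D$; but the index set $p\times q$ need not itself be finitely supported, so a uniform support bound must be argued for. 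My approach would be to write $p\bpp q=\bigcap\{\pp z\mid z\in S\}$ with $S=\{x\app y\mid x\in p,\ y\in q\}\subseteq|\ns D|$, observe that a prime filter contains $S$ exactly when it contains the least subset $q_S\supseteq S$ closed under up-closure, $\tand$, and condition~\ref{filter.new} of Definition~\ref{defn.filter}, and hence that $p\bpp q=\bigcap\{\pp z\mid z\in q_S\}$, which equals $q_S$ when this is a proper filter and is $\varnothing$ otherwise (cf.\ Remark~\ref{rmrk.family.resemblance}). Bounding the support of $q_S$, and so of $p\bpp q$, then comes down to showing that this closure creates no elements of unbounded support, for which the compatibility axioms of Figure~\ref{fig.app.compatible} — in particular monotonicity of $\app$ together with $\rulefont{\app{\tand}}$ and $\rulefont{\app\tall}$ — are exactly what is available. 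Once $p\bpp q\in\f{pow}(\points(\ns D))$ is secured, the Proposition follows.
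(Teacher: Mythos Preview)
Your decomposition is right in spirit and matches the paper's approach: the $\amgis$-algebra structure is inherited from Proposition~\ref{prop.filters.amgis}, and equivariance of $\bpp$ is immediate from Theorem~\ref{thrm.equivar}. The paper's own proof is exactly this two-line argument and stops there.

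Where you diverge is in worrying about whether $p\bpp q$ lands in the \emph{finitely supported} powerset $\f{pow}(\points(\ns D))$, given that $p$ and $q$ themselves need not be finitely supported. This is a perceptive observation: Definition~\ref{defn.bpp} does literally write $\f{pow}(\ns P)$, and the paper's proof does not verify this. However, inspecting the subsequent uses of $\bpp$ (Definitions~\ref{defn.sub.sets.pp} and~\ref{defn.YppaX}, Proposition~\ref{prop.amgis.iff.pp}, Theorem~\ref{thrm.pp.app}) shows that finite support of $p\bpp q$ is never actually invoked for the $F(\ns D)$ construction; the paper is treating this as a non-issue, and the $\f{pow}$ in Definition~\ref{defn.bpp} is most plausibly a notational slip for the full powerset. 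So the extra work you propose is not required for the paper's purposes.

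Your proposed resolution also has a type error. You write that $\bigcap\{\pp z\mid z\in q_S\}$ ``equals $q_S$ when this is a proper filter'', citing Remark~\ref{rmrk.family.resemblance}. But $\bigcap\{\pp z\mid z\in q_S\}$ is a set of \emph{prime filters} (namely those containing $q_S$), while $q_S$ is a subset of $|\ns D|$; these live in different sets and cannot be equal. Remark~\ref{rmrk.family.resemblance} is itself loosely phrased, but in any case it does not license this identification. Even granting your closure construction, bounding the support of $q_S$ from the compatibility axioms of Figure~\ref{fig.app.compatible} would not be straightforward, since closure under condition~\ref{filter.new} of Definition~\ref{defn.filter} can introduce elements of arbitrary support.
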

\begin{proof}
Equivariance of $\bpp$ is immediate from Theorem~\ref{thrm.equivar}.
We just use Proposition~\ref{prop.filters.amgis}.
\end{proof}

We now work towards Proposition~\ref{prop.bpp.subset}, which will be needed for Theorem~\ref{thrm.pp.app}.

\subsubsection{Combining filters III ($p\app q$)}

We can extend the applicative structure $x\app y$ from Definition~\ref{defn.FOLeq.pp}, to points $p\app q$:
\begin{defn}
\label{defn.pappq}
Suppose $p$ and $q$ are prime filters in $\ns D$ and $y\in|\ns D|$.
Define $p\app q$ by
\begin{frameqn}
p\app q =\{ x\app y \mid x\in p,\ y\in q\}
\end{frameqn}
\end{defn}
$p\app q$ is not necessarily a filter; for instance, it is not necessarily up-closed (condition~\ref{filter.up} of Definition~\ref{defn.filter}).
However, we have Lemmas~\ref{lemm.sigma.iff.pp} and~\ref{lemm.sigma.iff.pp.supplement}:
\begin{lemm}
\label{lemm.sigma.iff.pp}
Suppose $p$, $q$, and $r$ are prime filters in $\ns D$.
Then
$r\in p\bpp q$ if and only if $p\app q\subseteq r$.
\end{lemm}
\begin{proof}
By construction in Definition~\ref{defn.p.action.pp}. 
\end{proof}

\begin{lemm}
\label{lemm.sigma.iff.pp.supplement}
Suppose $p$ and $r$ are prime filters in $\ns D$ and suppose $y{\in}|\ns D|$.
Then
$r\in p\bpp (\pp y)$
if and only if
there exists a prime filter $q$ in $\ns D$ such that $y{\in}q\land p\app q\subseteq r$.
\end{lemm}
\begin{proof}
Recall from Definition~\ref{defn.pp} that if $y{\in}|\ns D|$ then $\pp y$ is the set of prime filters in $\ns D$ containing $y$.
Recall from Definition~\ref{defn.YppaX} that $r\in p\bpp \pp y$ if and only if $p\app q\subseteq r$ for some prime filter $q\in\pp y$.
The result follows.
\end{proof}

\subsubsection{The representation theorem}

\begin{prop}
\label{prop.bpp.subset}
Suppose $r$ is a prime filter in $\ns D$ and $x,y{\in}|\ns D|$.
Then $r\bpp (\pp y)\subseteq\pp x$ if and only if $y\ppa x\in r$.
\end{prop}
\begin{proof}
We prove two implications.

\emph{The right-to-left implication.}\quad
Suppose $y\ppa x\in r$ and suppose $p\in r\bpp \pp y$.
We need to show that $p\in\pp x$.

By Lemma~\ref{lemm.sigma.iff.pp.supplement} $p\in r\bpp \pp y$ means that for some $q$ with $y\in q$ it is the case that $r\app q\subseteq p$.
So given some such $q$, since $y\ppa x\in r$ we have that $(y\ppa x)\app y\in p$.
By \rulefont{\app\epsilon} $(y\ppa x)\app y\leq x$ and since $p$ is up-closed (condition~\ref{filter.up} of Definition~\ref{defn.filter}) we have $x\in p$, thus by Definition~\ref{defn.pp} $p\in\pp x$ as required.

\emph{The left-to-right implication.}\quad
Suppose $r\bpp \pp y\subseteq\pp x$; so if $p$ is a prime filter and $p\in r\bpp \pp y$, then $x\in p$.
We need to show that $y\ppa x\in r$.

Consider $r\bpp y$ from Definition~\ref{defn.qappy}.
There are two cases:
\begin{itemize}
\item
\emph{Suppose $\tbot{\in} r\bpp y$.}\quad
Then by Lemma~\ref{lemm.something} $y\ppa \tbot\in r$ and by Lemma~\ref{lemm.app.leq} $y\ppa x\in r$ as required.
\item
\emph{Suppose $\tbot{\not\in} r\bpp y$.}\quad
Then by Proposition~\ref{prop.qappx.filter}(\ref{qappx.prime.to.prime}) $r\bpp y$ is a prime filter.

By Lemma~\ref{lemm.uparrow.filter} $y{\uparrow}$ is a filter, and by Lemma~\ref{lemm.hard.2}(2) there exists a prime filter $q$ with $y{\uparrow}\subseteq q$ and $r\app q\subseteq r\bpp y$.
Now $y\in y{\uparrow}\subseteq q$ so we have a prime filter $q$ with $y\in q$ and $r\app q\subseteq r\bpp y$.
By Lemma~\ref{lemm.sigma.iff.pp.supplement} $p\in r\bpp\pp y$.
Thus by assumption $x\in r\bpp y$ and by Lemma~\ref{lemm.something} $y\ppa x\in r$ as required.
\qedhere\end{itemize}
\end{proof}

Theorem~\ref{thrm.pp.app} extends Lemma~\ref{lemm.bullet.commute} for the additional structure of $\app$ and $\ppa$:
\begin{thrm}
\label{thrm.pp.app}
Suppose $\ns D\in\indiapp$ and $x,y\in|\ns D|$.
Then:
\begin{enumerate*}
\item
$\pp x\app \pp y=\pp{(x\app y)}$.
\item
$\pp y\ppa \pp x=\pp{(y\ppa x)}$.
\end{enumerate*}
\end{thrm}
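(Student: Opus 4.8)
The plan is to prove both equalities by mutual set-inclusion, treating each direction as a straightforward unpacking of the relevant definitions except for the two `hard' inclusions, which will rely on the machinery just developed (Lemmas~\ref{lemm.hard.1}, \ref{lemm.hard.2}, Proposition~\ref{prop.bpp.subset}, and Lemma~\ref{lemm.p.q.inj}). Note that $\pp x\app\pp y$ and $\pp y\ppa\pp x$ are computed using the pointwise actions of Definition~\ref{defn.sub.sets.pp} applied to the $\amgis\bpp$-algebra of prime filters from Proposition~\ref{prop.filters.amgis.pp}, so throughout I will use Proposition~\ref{prop.amgis.iff.pp} and Proposition~\ref{prop.sigma.iff.pp} without comment.

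For part~1, the inclusion $\pp x\app\pp y\subseteq\pp{(x\app y)}$ is the easy direction: if $r\in\pp x\app\pp y$ then by Proposition~\ref{prop.amgis.iff.pp} there are prime filters $p\in\pp x$ and $q\in\pp y$ with $r\in p\bpp q$, that is (Definition~\ref{defn.p.action.pp}) $r\supseteq\bigcap\{\pp{(x'\app y')}\mid x'\in p, y'\in q\}$; since $x\in p$ and $y\in q$ we get $x\app y$ in every member of that intersection and hence $r\in\pp{(x\app y)}$, i.e. $x\app y\in r$ — more precisely $x\app y\in r$ follows because $r$ is a prime filter containing $\pp{(x\app y)}$'s generator; one checks $r\in p\bpp q$ forces $x\app y\in r$ directly from $r\in\pp{(x\app y)}$'s definition. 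The converse inclusion $\pp{(x\app y)}\subseteq\pp x\app\pp y$ is the \textbf{main obstacle}. Here we fix a prime filter $r$ with $x\app y\in r$ and must manufacture prime filters $p\ni x$ and $q\ni y$ with $r\in p\bpp q$, i.e. $p\app q\subseteq r$ (Proposition~\ref{prop.sigma.iff.pp}). The strategy, exactly as flagged in the remarks preceding Lemma~\ref{lemm.hard.1}: first take $q$ to be a maximal filter with $x{\uparrow}\app\,\text{-}\subseteq$\dots — more carefully, one builds $q$ maximal such that $(\,\text{something built from }x)\app q\subseteq r$ while keeping $y\in q$, invoking Zorn's Lemma (Theorem~\ref{thrm.maxfilt.zorn}) for the existence of a maximal such $q$ containing $y{\uparrow}$, then Lemma~\ref{lemm.hard.1} (or Lemma~\ref{lemm.hard.2}) to conclude $q$ is prime; then repeat to get $p$ prime with $x\in p$ and $p\app q\subseteq r$. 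One must check that the filters $x{\uparrow}$ and $y{\uparrow}$ are available (Corollary~\ref{corr.uparrow.filter}, provided $x,y\neq\tbot$; the degenerate case $x=\tbot$ or $y=\tbot$ is handled separately using \rulefont{\app\tbot}), and that the Zorn argument yields a filter disjoint from an appropriate ideal so that the maximality hypotheses of Lemmas~\ref{lemm.hard.1}/\ref{lemm.hard.2} apply. This is the step where the bulk of the work lies, and it parallels the structure of Proposition~\ref{prop.pp.x.clopen} and Proposition~\ref{prop.bpp.subset}.

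For part~2, I expect both inclusions to be comparatively painless given Proposition~\ref{prop.bpp.subset}. Unpacking Definition~\ref{defn.sub.sets.pp}, $r\in\pp y\ppa\pp x$ iff $r\bpp\pp y\subseteq\pp x$ (Lemma~\ref{lemm.rewrite.ppa}, Proposition~\ref{prop.amgis.iff.pp}), and Proposition~\ref{prop.bpp.subset} states precisely that $r\bpp\pp y\subseteq\pp x$ iff $y\ppa x\in r$, which by Definition~\ref{defn.pp} is exactly $r\in\pp{(y\ppa x)}$. So part~2 is essentially a restatement of Proposition~\ref{prop.bpp.subset}, with the only subtlety being that $r$ ranges over prime filters and one should note that $\pp y\ppa\pp x$ as an element of the lattice of sets of prime filters is being identified correctly; I would spend a sentence confirming that the pointwise $\ppa$ of Definition~\ref{defn.sub.sets.pp} restricted to sets of the form $\pp{(\text{-})}$ agrees with the lattice operation, citing Corollary~\ref{corr.sigma.iff.pp.supplement} if needed to rewrite $r\bpp\pp y$.

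In summary: part~2 reduces immediately to Proposition~\ref{prop.bpp.subset}; part~1's forward inclusion is a definitional unpacking via Proposition~\ref{prop.amgis.iff.pp} and Definition~\ref{defn.p.action.pp}; and part~1's reverse inclusion — constructing primes $p\ni x$, $q\ni y$ with $p\app q\subseteq r$ out of a given prime $r\ni x\app y$ — is the real content, carried out by two nested Zorn-plus-maximality arguments using Lemmas~\ref{lemm.hard.1} and~\ref{lemm.hard.2}, after disposing of the trivial cases $x=\tbot$ and $y=\tbot$ via \rulefont{\app\tbot}. I would present part~1 first (since it motivates the hard lemmas) and then part~2 as a short corollary of Proposition~\ref{prop.bpp.subset}.
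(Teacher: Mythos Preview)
Your approach is essentially the paper's: part~2 is exactly Proposition~\ref{prop.bpp.subset} unpacked via Proposition~\ref{prop.amgis.iff.pp}, and part~1's hard inclusion is the two-stage Zorn-plus-primality argument using Lemmas~\ref{lemm.hard.1} and~\ref{lemm.hard.2} in that order (first fix the element $y$ and build a prime $p\ni x$ with $p\app y\subseteq r$ via Lemma~\ref{lemm.hard.1}; then fix the filter $p$ and build a prime $q\ni y$ with $p\app q\subseteq r$ via Lemma~\ref{lemm.hard.2}).

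Two small clean-ups. First, in the easy direction of part~1 your unpacking ``$r\supseteq\bigcap\{\pp{(x'\app y')}\mid\dots\}$'' is a type confusion: $r$ is a prime filter and $p\bpp q$ is a \emph{set} of prime filters, so $r\in p\bpp q$ means $r$ belongs to that intersection, i.e.\ $x'\app y'\in r$ for every $x'\in p$, $y'\in q$; equivalently (Proposition~\ref{prop.sigma.iff.pp}) $p\app q\subseteq r$. Either way $x\app y\in r$ follows immediately. Second, the separate $\tbot$ case is unnecessary: if $x\app y\in r$ then $x\app y\neq\tbot$ (filters exclude $\tbot$), and \rulefont{\app\tbot} then forces $x\neq\tbot$ and $y\neq\tbot$, so Corollary~\ref{corr.uparrow.filter} applies directly. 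The paper simply omits this case for that reason.
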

\begin{proof}
We consider each part in turn.
Below, $p$, $q$, and $r$ range over prime filters in $\ns D$.
\begin{enumerate}
\item
By Definition~\ref{defn.pp} and Proposition~\ref{prop.amgis.iff.pp} $r\in \pp x\app \pp y$ if and only if
$\Exists{p,q}(x{\in}p\land y{\in}q)\land r\in p\bpp q$,
and by
Lemma~\ref{lemm.sigma.iff.pp}
this is if and only if
$\Exists{p,q}(x{\in}p\land y{\in}q)\land p\app q{\subseteq} r$.

So suppose there exist prime filters $p$ and $q$ with $x{\in}p$ and $y{\in}q$ and
$p\app q\subseteq r$.
Then clearly $x\app y\in r$.

Conversely suppose $x\app y\in r$ for some prime filter $r$.
By Lemma~\ref{lemm.hard.1}(2) there exists a prime filter $p$ with $x\in p$ and $p\app y\subseteq r$, and
by Lemma~\ref{lemm.hard.2}(2) there exists a prime filter $q$ with $y\in q$ and $p\app q\subseteq r$.
\item
We reason as follows, using Propositions~\ref{prop.amgis.iff.pp} and~\ref{prop.bpp.subset}:
$$
r\in \pp y\ppa \pp x
\ \stackrel{\text{Prop~\ref{prop.amgis.iff.pp}}}{\liff}\
r\bpp \pp y\subseteq\pp x
\ \stackrel{\text{Prop~\ref{prop.bpp.subset}}}{\liff}\
r\in \pp{(y\ppa x)}
\qedhere$$
\end{enumerate}
\end{proof}

\subsubsection{Properties of the representation}

We can now easily extend Definition~\ref{defn.pp.B} (the definition of $\pp{\ns D}$):
\begin{defn}
\label{defn.pp.B.app}
As in Definition~\ref{defn.pp.B} we take $\pp{\ns D}$ to have:
\begin{itemize*}
\item
$|\pp{\ns D}|=\{\pp x\mid x\in|\ns D|\}$
\item
$(\pp{\ns D})^\prg=\ns D^\prg$ and $\prg_{\pp{\ns D}} u=\pp{(\prg_{\ns D}u)}$
\item
$\pi\act (\pp x)=\pp{(\pi\act x)}$
\item
$\pp x[a\sm u]=\pp{(x[a\sm u])}$
\end{itemize*}
We give $\pp{\ns D}$ the actions $X\app Y$ and $Y\ppa X$ from Definition~\ref{defn.sub.sets.pp}.
\end{defn}
So Definitions~\ref{defn.pp.B} and~\ref{defn.pp.B.app} overload the notation $\pp{\ns D}$ for ``the object in $\india$ composed of prime filters of $\ns D\in\india$'' and ``the object of $\indiapp$ composed of prime filters of $\ns D\in\indiapp$''.
The meaning will always be clear.

\begin{lemm}
\label{lemm.filter.pp.app.commute}
Suppose $x,y\in|\ns D|$ and $u\in|\ns D^\prg|$.
Then \rulefont{\sigma\app} and \rulefont{\sigma\ppa} are valid in $\pp{\ns D}$:
$$
\begin{array}{r@{\ }l}
(\pp x\app\pp y)[a\sm u]=&
\pp x[a\sm u]\app (\pp y[a\sm u])
\\
(\prg b\ppa \pp x)[a\sm u]=&
\prg b\ppa (\pp x[a\sm u])
\end{array}
$$
\end{lemm}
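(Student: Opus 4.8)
The plan is to reduce the two claimed equalities in $\pp{\ns D}$ to the corresponding facts about the abstract operators in $\ns D$, using the translation lemmas already in hand. The key observation is that Theorem~\ref{thrm.pp.app} tells us $\pp x\app\pp y=\pp{(x\app y)}$ and $\prg b\ppa\pp x=\pp{(\prg_{\ns D} b\ppa x)}$, where on the right the operators are those of $\ns D\in\indiapp$; and Lemma~\ref{lemm.dup} tells us $\pp z[a\sm u]=\pp{(z[a\sm u])}$ for any $z\in|\ns D|$. So the whole problem becomes a matter of pushing $\pp{(\text{-})}$ through, applying the abstract axioms \rulefont{\sigma\app} and \rulefont{\sigma\ppa} from Figure~\ref{fig.app.compatible} inside $\ns D$, and pushing $\pp{(\text{-})}$ back out again.

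Concretely, for the first equation I would reason:
$$
\begin{array}{r@{\ }l@{\qquad}l}
(\pp x\app\pp y)[a\sm u]
=&\pp{(x\app y)}[a\sm u]
&\text{Theorem~\ref{thrm.pp.app}(1)}
\\
=&\pp{((x\app y)[a\sm u])}
&\text{Lemma~\ref{lemm.dup}}
\\
=&\pp{((x[a\sm u])\app (y[a\sm u]))}
&\rulefont{\sigma\app}\text{ in }\ns D
\\
=&\pp{(x[a\sm u])}\app\pp{(y[a\sm u])}
&\text{Theorem~\ref{thrm.pp.app}(1)}
\\
=&(\pp x[a\sm u])\app (\pp y[a\sm u])
&\text{Lemma~\ref{lemm.dup}.}
\end{array}
$$
For the second equation the same pattern applies, starting from $\prg b\ppa\pp x=\pp{(\prg_{\ns D} b\ppa x)}$ (Theorem~\ref{thrm.pp.app}(2), noting $\prg_{\pp{\ns D}} b=\pp{(\prg_{\ns D}b)}$ from Definition~\ref{defn.pp.B.app}), then Lemma~\ref{lemm.dup}, then \rulefont{\sigma\ppa} for $\ns D$ which gives $(\prg_{\ns D} b\ppa x)[a\sm u]=\prg_{\ns D} b\ppa(x[a\sm u])$, then Theorem~\ref{thrm.pp.app}(2) backwards, then Lemma~\ref{lemm.dup} backwards — yielding $(\prg b\ppa\pp x)[a\sm u]=\prg b\ppa(\pp x[a\sm u])$.

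The only subtlety I anticipate is bookkeeping around the $\prg$-notation: one must check that the $\prg b$ appearing on the $\pp{\ns D}$ side really is $\prg_{\pp{\ns D}}(b_{(\pp{\ns D})^\prg})=\pp{(\prg_{\ns D}(b_{\ns D^\prg}))}$, so that the middle step genuinely lands in the scope of \rulefont{\sigma\ppa} as stated for $\ns D$; this is exactly the content of the impredicative part of Definition~\ref{defn.pp.B} together with Remark~\ref{rmrk.explain.prg.b}. There is no deep obstacle here — this lemma is precisely the ``transport of axioms along the isomorphism $x\mapsto\pp x$'' that Theorem~\ref{thrm.pp.iso} and Theorem~\ref{thrm.pp.app} were set up to make routine, and it will be used to confirm that $\pp{\ns D}$ is genuinely an object of $\indiapp$ (so that the representation theorem extends). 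I would present it as the two displayed calculations above with a one-line remark about the $\prg b$ identification.
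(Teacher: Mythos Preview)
Your proposal is correct and follows essentially the same approach as the paper's own proof: the paper presents exactly the five-step chain you wrote out for \rulefont{\sigma\app} (Theorem~\ref{thrm.pp.app}, Lemma~\ref{lemm.dup}, the axiom in $\ns D$, Theorem~\ref{thrm.pp.app}, Lemma~\ref{lemm.dup}) and the analogous chain for \rulefont{\sigma\ppa}, including the identification $\prg_{\pp{\ns D}} b=\pp{(\prg_{\ns D} b)}$ from Definition~\ref{defn.pp.B.app} that you flagged.
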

\begin{proof}
We reason as follows:
\begin{tab2r2}
(\pp x\app\pp y)[a\sm u]
=&
\pp{(x\app y)}[a\sm u]
&\text{Theorem~\ref{thrm.pp.app}}
\\
=&
\pp{((x\app y)[a\sm u])}
&\text{Lemma~\ref{lemm.dup}}
\\
=&
\pp{(x[a\sm u]\app y[a\sm u])}
&\rulefont{\sigma\app}
\\
=&
\pp{(x[a\sm u])}\app \pp{(y[a\sm u])}
&\text{Theorem~\ref{thrm.pp.app}}
\\
=&
\pp x[a\sm u]\app \pp y[a\sm u]
&\text{Lemma~\ref{lemm.dup}}
\\[1.5ex]
(\prg_{\pp{\ns D}} b\ppa \pp x)[a\sm u]=&
\pp{(\prg_{\ns D} b\ppa x)}[a\sm u]
&\text{Theorem~\ref{thrm.pp.app}}
\\
=&
\pp{(\prg_{\ns D} b\ppa x)[a\sm u]}
&\text{Lemma~\ref{lemm.dup}}
\\
=&
\pp{(\prg_{\ns D} b\ppa x[a\sm u])}
&\rulefont{\sigma\ppa}
\\
=&
\prg_{\pp{\ns D}} b\ppa \pp{(x[a\sm u])}
&\text{Theorem~\ref{thrm.pp.app}}
\qedhere\end{tab2r2}
\end{proof}

\begin{lemm}
\label{lemm.filter.adjoint.compat}
The adjoint and compatibility axioms from Definition~\ref{defn.FOLeq.pp} are valid in $\pp{\ns D}$.
That is:
$$
\begin{array}{@{\hspace{-0ex}}l@{\ \ }r@{\ }r@{\ }l@{\quad}r@{\ }l}
\rulefont{\app\epsilon} && (\pp u\ppa \pp x)\app \pp x\subseteq&\pp x
\\
\rulefont{\app\eta} && \pp x\subseteq&\pp u\ppa(\pp x\app \pp u)
\\
\rulefont{\app\tbot} && \varnothing\app \pp u=&\varnothing
& \pp x\app\varnothing =&\varnothing
\\
\rulefont{\app\tand} && (\pp x\cap\pp y)\app \pp u \subseteq & (\pp x\app \pp u)\cap (\pp y\app \pp u)
&
\pp x\app(\pp u\cap \pp v)\subseteq& (\pp x\app \pp u)\cap(\pp x\app\pp v)
\\
\rulefont{\app\tor} && (\pp x\cup\pp y)\app u=&(\pp x\app\pp u)\cup(\pp y\app\pp u)
&
\pp x\app(\pp u\cup\pp v)=&(\pp x\app \pp u)\cup(\pp x\app \pp v)
\\
&&
\dots
\\
\rulefont{\ppa\tall} &
b\#\pp u\limp & \freshcap{b}(\pp u\ppa \pp x)\subseteq&\pp u\ppa(\freshcap{b}\pp x)
\end{array}
$$
\end{lemm}
\begin{proof}
The easiest proof
is to combine Theorem~\ref{thrm.pp.app} and Lemma~\ref{lemm.bullet.commute} with Lemma~\ref{lemm.completeness} and with the relevant axiom for $\ns D$.

In the cases of \rulefont{\app\tall} and \rulefont{\ppa\tall} which have a freshness condition, we use Lemma~\ref{lemm.some.fresh.s} to choose a suitably fresh representative of $\pp u$ (i.e. a $u'$ such that $\pp{(u')}=\pp u$ and $b\#u'$).\footnote{Lemma~\ref{lemm.some.fresh.s} is applied here to the function $u\mapsto \pp u$.  Equivariance is defined in Definition~\ref{defn.equivariant}; Definition~\ref{defn.pp.B.app} states (amongst other things) that this function is equivariant.}
\end{proof}

Recall the definition of $\indiapp$ from Definition~\ref{defn.indiapp}.
\begin{frametxt}
\begin{thrm}[Second representation theorem]
\label{thrm.pp.iso.pp}
If $\ns D\in\indiapp$ then $\pp{\ns D}$ from Definition~\ref{defn.pp.B.app} is in $\indiapp$, and the assignment $x\mapsto \pp x$ is an isomorphism in $\indiapp$ between $\ns D$ and $\pp{\ns D}$.
\end{thrm}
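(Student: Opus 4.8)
The plan is to follow exactly the pattern already established for the First Representation Theorem (Theorem~\ref{thrm.pp.iso}) and simply check that the extra structure $\app$ and $\ppa$ is respected. First I would invoke Theorem~\ref{thrm.pp.iso} to get that the assignment $x\mapsto\pp x$ together with $u\mapsto u$ is an isomorphism in $\india$ between $\ns D$ and the underlying object of $\pp{\ns D}$; this already handles the distributive-lattice-with-$\tall$ structure, the $\sigma$-action, impredicativity, injectivity (Corollary~\ref{corr.pp.injective}), and surjectivity (by construction). So the only new content is (a) that $\pp{\ns D}$ as defined in Definition~\ref{defn.pp.B.app} actually lies in $\indiapp$, i.e. that the $\app$ and $\ppa$ operations on it satisfy the adjointness axioms of Figure~\ref{fig.app.adjoint} and the compatibility axioms of Figure~\ref{fig.app.compatible}; and (b) that $x\mapsto\pp x$ commutes with $\app$ and $\ppa$ in the sense required for a morphism in $\indiapp$ (Definition~\ref{defn.indiapp}).

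For (b) this is immediate from Theorem~\ref{thrm.pp.app}: we have $\pp x\app\pp y=\pp{(x\app y)}$ and $\pp y\ppa\pp x=\pp{(y\ppa x)}$, which is precisely the statement that $x\mapsto\pp x$ preserves $\app$ and $\ppa$ (the $\prg b\ppa x$ case is the special instance where the first argument is $\prg b$). Combined with injectivity, this also forces $\pp{\ns D}$ to be closed under $\app$ and $\ppa$ with results of the form $\pp z$, so the operations are well-defined on $|\pp{\ns D}|$.

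For (a), the adjointness axioms \rulefont{\app\epsilon} and \rulefont{\app\eta} and the compatibility axioms are all verified the same way: rewrite $\pp x\app\pp y$ as $\pp{(x\app y)}$ and $\pp u\ppa\pp x$ as $\pp{(u\ppa x)}$ using Theorem~\ref{thrm.pp.app}, rewrite $\cap,\cup,\freshcap{a},\varnothing,|\points(\ns D)|$ in terms of $\pp{(\text{-})}$ using Lemma~\ref{lemm.bullet.commute}, and then translate $\subseteq$ into $\leq$ using Lemma~\ref{lemm.completeness}; the inequality or equality then reduces to the corresponding axiom for $\ns D$ from Figures~\ref{fig.app.adjoint} and~\ref{fig.app.compatible}. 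This is exactly the argument of Lemma~\ref{lemm.filter.adjoint.compat}, so I would simply cite that lemma. Similarly \rulefont{\sigma\app} and \rulefont{\sigma\ppa} for $\pp{\ns D}$ are Lemma~\ref{lemm.filter.pp.app.commute}. The axioms with freshness side-conditions (\rulefont{\app\tall}, \rulefont{\ppa\tall}) need a fresh representative $u'$ of $\pp u$, supplied by Lemma~\ref{lemm.some.fresh.s} applied to the equivariant map $u\mapsto\pp u$, exactly as in Lemma~\ref{lemm.filter.adjoint.compat}.

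I do not expect any serious obstacle here, since all the real work has been front-loaded into Theorem~\ref{thrm.pp.app}, Lemma~\ref{lemm.filter.pp.app.commute}, and Lemma~\ref{lemm.filter.adjoint.compat}; the proof is essentially an assembly step. If there is a mild subtlety, it is making sure the morphism conditions of Definition~\ref{defn.indiapp} are stated over the right termlike $\sigma$-algebra — but since $(\pp{\ns D})^\prg=\ns D^\prg$ and $\prg_{\pp{\ns D}}u=\pp{(\prg_{\ns D}u)}$, the relevant identities $f(\prg b\ppa x)=\prg b\ppa f(x)$ with $f=(x\mapsto\pp x)$ follow from Theorem~\ref{thrm.pp.app} with first argument $\prg_{\ns D}b$, together with the fact (from Theorem~\ref{thrm.pp.iso}, via Definition~\ref{defn.pp.B}) that $\pp{(\prg_{\ns D}b)}=\prg_{\pp{\ns D}}b$. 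So the write-up is: cite Theorem~\ref{thrm.pp.iso} for the $\india$ part; cite Lemmas~\ref{lemm.filter.pp.app.commute} and~\ref{lemm.filter.adjoint.compat} for the $\app,\ppa$ axioms on $\pp{\ns D}$; cite Theorem~\ref{thrm.pp.app} (and injectivity via Corollary~\ref{corr.pp.injective}) for the isomorphism commuting with $\app$ and $\ppa$.
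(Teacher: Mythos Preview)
Your proposal is correct and follows essentially the same approach as the paper: invoke Theorem~\ref{thrm.pp.iso} for the $\india$ structure, then use Lemmas~\ref{lemm.filter.pp.app.commute} and~\ref{lemm.filter.adjoint.compat} to validate the $\app$ and $\ppa$ axioms. Your write-up is more explicit than the paper's (which is only two sentences) in spelling out that Theorem~\ref{thrm.pp.app} is what makes the map commute with $\app$ and $\ppa$, but this is exactly the content packaged inside those two lemmas, so there is no substantive difference.
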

\end{frametxt}
\begin{proof}
Theorem~\ref{thrm.pp.iso} handles the purely logical structure ($\cap$, $\varnothing$, $\cup$, and $\freshcap{a}$).
Lemmas~\ref{lemm.filter.pp.app.commute} and~\ref{lemm.filter.adjoint.compat} validate the axioms for $\app$ and $\ppa$.
\end{proof}

\subsection{Construction of the topological space $F(\ns D)$, with $\app$ and $\ppa$}

Recall the definitions of $\indiapp$ and $\inspectapp$ from Definitions~\ref{defn.indiapp} and~\ref{defn.g.extend}.
We will now extend Definitions~\ref{defn.F} and~\ref{defn.Ff}:
\begin{defn}
\label{defn.F.pp}
Suppose $\ns D\in\indiapp$.
Define $F(\ns D)\in\inspectapp$ by:
\begin{enumerate*}
\item
$|F(\ns D)|=|\points(\ns D)|$ and $F(\ns D)^\prg=\ns D^\prg$.
\item
$\pi\act p=\{\pi\act x\mid x\in p\}$.
\item
$p[u\ms a]=\{x\mid x[a\sm u]\in p\}$.
\item
$p\bpp q=\bigcap\{\pp{(x\app y)}\mid x\in p, y\in q\}$, following Definition~\ref{defn.p.action.pp}.
\item
$\otop{F(\ns D)}$ is the closure of $\{\pp x\mid x\in|\ns D|\}$ under small-supported unions.
\item
$\prg_{F(\ns D)}$ maps $u\in|\ns D^\prg|$ to $\pp{(\prg_{\ns D} u)}$.\footnote{It might be helpful to unwind the definitions for this final clause.
This is not complicated---it just takes in a lot of definitions!

$u\in|\ns D^\prg|$ is an element of the termlike $\sigma$-algebra over which $F(\ns D)$ has an $\amgis$-action; $\prg_{\ns D}u\in|\ns D|$ is an element of $\ns D$ the (impredicative) nominal distributive lattice with $\tall$ and $\app$; $\pp{(\prg_{\ns D} u)}$ is the set of prime filters in $\points(\ns D)$ that contain $\prg_{\ns D} u$.
By Theorem~\ref{thrm.pp.x.clopen}(1) this set of prime filters is compact, that is it is in $\ctop{\ns T}$, as required in Definition~\ref{defn.impredicative.top}.}
\end{enumerate*}
\end{defn}
In Definition~\ref{defn.F.pp} we claim that $F(\ns D)$ is a nominal spectral space with $\bpp$ (Definition~\ref{defn.nom.top.spectral.bpp}).
This needs to be proved: Theorem~\ref{thrm.F.funct.with.app} assembles the various verifications.

Recall from Definition~\ref{defn.Ff} the map from $f:\ns D\equivarto\ns D'$ to $F(f):F(\ns D')\equivarto F(\ns D)$.
\begin{prop}
\label{prop.Ff.app}
If $f:\ns D\equivarto\ns D'$ is in $\indiapp$ then $F(f)$ commutes with $\app$ and $\ppa$ as specified in Definition~\ref{defn.F.pp}.
That is, using Theorem~\ref{thrm.pp.x.clopen}:
\begin{enumerate*}
\item
$F(f)^\mone(\pp x\app \pp y)=F(f)^\mone(\pp x)\app F(f)^\mone(\pp y)$
\item
$F(f)^\mone(\prg b\ppa \pp x)=\prg b\ppa F(f)^\mone(\pp x)$.
\end{enumerate*}
As a corollary, if $f:\ns D\equivarto\ns D'$ then $F(f):F(\ns D')\equivarto F(\ns D)$ is a morphism in the sense of Definition~\ref{defn.g.extend}.
\end{prop}
\begin{proof}
The corollary follows direct from Definition~\ref{defn.g.extend} using Theorem~\ref{thrm.F.funct}.

For the first part, we reason as follows:
\begin{tab2r2}
r\in F(f)^\mone(\pp x\app \pp y)\liff& F(f)(r)\in \pp x\app \pp y
&\text{Inverse image}
\\
\liff& F(f)(r)\in \pp{(x\app y)}
&\text{Theorem~\ref{thrm.pp.app}}
\\
\liff&
x\app y\in F(f)(r)
&\text{Definition~\ref{defn.pp}}
\\
\liff&
f(x\app y)\in r
&\text{Definition~\ref{defn.Ff}}
\\
\liff&
f(x)\app f(y)\in r
&\text{Definition~\ref{defn.indiapp}}
\\
\liff&
r\in \pp{(f(x)\app f(y))}
&\text{Definition~\ref{defn.pp}}
\\
\liff&
r\in \pp{f(x)}\app \pp{f(y)}
&\text{Theorem~\ref{thrm.pp.app}}
\\
\liff&
r\in F(f)^\mone(\pp x)\app F(f)^\mone(y)
&\text{Lemma~\ref{lemm.FfmoneU}}
\end{tab2r2}
The reasoning for the second part, for $\ppa$, is similar:
\begin{tab2r2}
r\in F(f)^\mone(\prg_{F(\ns D)} b\ppa \pp x)\liff& F(f)(r)\in \prg_{F(\ns D)}b\ppa \pp x
&\text{Inverse image}
\\
\liff& F(f)(r)\in \pp{(\prg_{\ns D}b\ppa x)}
&\text{Theorem~\ref{thrm.pp.app}}
\\
\liff&
\prg_{\ns D}b\ppa x\in F(f)(r)
&\text{Definition~\ref{defn.pp}}
\\
\liff&
f(\prg_{\ns D}b \ppa x)\in r
&\text{Definition~\ref{defn.Ff}}
\\
\liff&
\prg_{\ns D'} b\ppa f(x)\in r
&\text{Definition~\ref{defn.indiapp}}
\\
\liff&
r\in \pp{(\prg_{\ns D'}b\ppa f(x))}
&\text{Definition~\ref{defn.pp}}
\\
\liff&
r\in \pp{(\prg_{\ns D'}b)}\ppa \pp{f(x)}
&\text{Theorem~\ref{thrm.pp.app}}
\\
\liff&
r\in \prg_{F(\ns D')}b\ppa F(f)^\mone(x)
&\text{Lemma~\ref{lemm.FfmoneU}}
\qedhere\end{tab2r2}
\end{proof}

Theorem~\ref{thrm.F.funct.with.app} extends Theorem~\ref{thrm.F.funct} from $\india$ and $\inspecta$ to $\indiapp$ and $\inspectapp$:
\begin{thrm}
\label{thrm.F.funct.with.app}
$F$ is a functor from $\indiapp$ to $\inspectapp$.
\end{thrm}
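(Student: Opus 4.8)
The plan is to run through the same template as Theorem~\ref{thrm.F.funct}, which already shows $F$ is a functor from $\india$ to $\inspecta^{\f{op}}$, and simply add the verifications that $F$ interacts correctly with the new structure $\app$, $\ppa$ and its topological dual $\bpp$. All of the substantive work has been done in the preceding results, so the proof is an assembly job.

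\emph{On objects.} First I would check that $F(\ns D)$ as specified in Definition~\ref{defn.F.pp} is a nominal spectral space with $\bpp$ (Definition~\ref{defn.nom.top.spectral.bpp}). Its underlying nominal $\sigma$-topological space, and the facts that it is impredicative and sober, are Theorems~\ref{thrm.FB.amgis} and~\ref{thrm.FB.totsep.comp} and Corollary~\ref{corr.FD.sober}, unchanged by the extra structure. The combination operator $p\bpp q=\bigcap\{\pp{(x\app y)}\mid x\in p,\ y\in q\}$ is equivariant and makes the points of $F(\ns D)$ an $\amgis\bpp$-algebra by Proposition~\ref{prop.filters.amgis.pp}, so $F(\ns D)$ is a nominal $\sigma\bpp$-topological space in the sense of Definition~\ref{defn.nom.top.bpp}. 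It then remains to check the three extra coherence conditions~5--7 of Definition~\ref{defn.coherent.bpp}. By Propositions~\ref{prop.pp.x.clopen} and~\ref{prop.clop.FB} the compact open sets of $F(\ns D)$ are precisely the sets $\pp x$ for $x\in|\ns D|$; by Theorem~\ref{thrm.pp.app} we have $\pp x\app\pp y=\pp{(x\app y)}$ and $\pp y\ppa\pp x=\pp{(y\ppa x)}$, which gives closure of $\ctop{F(\ns D)}$ under $\app$ and $\ppa$ (condition~5), and conditions~6 and~7, namely \rulefont{\sigma\app} and \rulefont{\sigma\ppa} for $\ctop{F(\ns D)}$, are exactly Lemma~\ref{lemm.filter.pp.app.commute}. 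Hence $F(\ns D)\in\inspectapp$.

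\emph{On morphisms and functoriality.} Given $f:\ns D\equivarto\ns D'$ in $\indiapp$, the pair $F(f)$ from Definition~\ref{defn.Ff} is already a morphism $F(\ns D')\equivarto F(\ns D)$ of nominal spectral spaces by Proposition~\ref{prop.inverse.filter} (a morphism in $\indiapp$ is in particular a morphism in $\india$), and Proposition~\ref{prop.Ff.app} shows that $F(f)^\mone$ commutes with $\app$ and $\ppa$ as demanded by Definition~\ref{defn.g.extend}; so $F(f)$ is a morphism in $\inspectapp$. That $F$ sends identities to identities and respects composition is inherited verbatim from Theorem~\ref{thrm.F.funct}, since $F$ acts on the underlying point-sets and inverse-image maps and on $\ns D^\prg$ via $f^\prg$, and the operators $\app$, $\ppa$, $\bpp$ play no role in those calculations.

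I do not expect a genuine obstacle: every ingredient is already in hand. If there is a step that needs care it is the identification of $\ctop{F(\ns D)}$ with the family $\{\pp x\mid x\in|\ns D|\}$ and its closure under $\app$ and $\ppa$ — that is, the invocation of Theorem~\ref{thrm.pp.app}, which is where the filter-theoretic machinery (Lemmas~\ref{lemm.hard.1} and~\ref{lemm.hard.2}, Propositions~\ref{prop.qappx.filter} and~\ref{prop.bpp.subset}) actually does its work. Since that theorem is already established, here the point is merely to cite it in the right place and confirm that the resulting $\pp{(x\app y)}$ and $\pp{(y\ppa x)}$ are compact opens, which is Proposition~\ref{prop.pp.x.clopen}.
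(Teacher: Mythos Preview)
Your proposal is correct and follows essentially the same approach as the paper: invoke Theorem~\ref{thrm.F.funct} and Proposition~\ref{prop.Ff.app} for the underlying spectral-space functoriality and the morphism conditions on $\app$ and $\ppa$, and verify the extra coherence conditions of Definition~\ref{defn.coherent.bpp} via Theorem~\ref{thrm.pp.app}, Lemma~\ref{lemm.filter.pp.app.commute}, and Propositions~\ref{prop.pp.x.clopen} and~\ref{prop.clop.FB}. You have simply unpacked in more detail what the paper compresses into two sentences.
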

\begin{proof}
This is mostly Theorem~\ref{thrm.F.funct} combined with Proposition~\ref{prop.Ff.app}.
We do also need to check the extra conditions on coherence from Definition~\ref{defn.coherent.bpp};
these follow easily from Theorem~\ref{thrm.pp.app}, Lemma~\ref{lemm.filter.pp.app.commute}, and Theorem~\ref{thrm.pp.x.clopen}.
\end{proof}

Part~1 of Proposition~\ref{prop.Ff.app} is stated only for sets of the form $\pp x$ and $\pp y$.
In fact, we note that it can be extended to all $X,Y\in\otop{F(\ns D)}$:
\begin{corr}
\label{corr.extend.Ff.app}
$F(f)^\mone(X\app Y)=F(f)^\mone(X)\app F(f)^\mone(Y)$ for all $X,Y\in\otop{F(\ns D)}$ (and not only $X,Y\in\ctop{F(\ns D)}$).
\end{corr}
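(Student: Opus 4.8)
The plan is to leverage the fact that both $F(f)^\mone$ and $\app$ distribute over strictly finitely supported unions, reducing the general statement to the already-proven case of compact open sets in Part~1 of Proposition~\ref{prop.Ff.app}.

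First I would recall that by construction in Definition~\ref{defn.F.pp}, every open $X\in\otop{F(\ns D)}$ is a strictly finitely supported union $X=\bigcup_{i\in I}\pp{x_i}$ with $x_i\in|\ns D|$, and similarly $Y=\bigcup_{j\in J}\pp{y_j}$. By the first part of Lemma~\ref{lemm.bigcup.app}, $X\app Y=\bigcup_{i\in I,j\in J}(\pp{x_i}\app\pp{y_j})$, and by Theorem~\ref{thrm.pp.app} each $\pp{x_i}\app\pp{y_j}=\pp{(x_i\app y_j)}$, so $X\app Y$ is itself a strictly finitely supported union of sets of the form $\pp{z}$ (strict finite support being preserved because $\supp(x_i\app y_j)\subseteq\supp(x_i)\cup\supp(y_j)$ by Theorem~\ref{thrm.no.increase.of.supp}, together with Lemma~\ref{lemm.strict.support}). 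Then, since $F(f)^\mone$ preserves arbitrary unions (it is an inverse image function), I would compute
$$
F(f)^\mone(X\app Y)=\bigcup_{i,j}F(f)^\mone(\pp{x_i}\app\pp{y_j})=\bigcup_{i,j}\bigl(F(f)^\mone(\pp{x_i})\app F(f)^\mone(\pp{y_j})\bigr),
$$
where the last equality is exactly Part~1 of Proposition~\ref{prop.Ff.app}.

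Finally I would reassemble the right-hand side: again using that $F(f)^\mone$ preserves unions, $F(f)^\mone(X)=\bigcup_i F(f)^\mone(\pp{x_i})$ and $F(f)^\mone(Y)=\bigcup_j F(f)^\mone(\pp{y_j})$, and applying Lemma~\ref{lemm.bigcup.app} once more gives $F(f)^\mone(X)\app F(f)^\mone(Y)=\bigcup_{i,j}\bigl(F(f)^\mone(\pp{x_i})\app F(f)^\mone(\pp{y_j})\bigr)$, which matches the expression obtained above. Hence $F(f)^\mone(X\app Y)=F(f)^\mone(X)\app F(f)^\mone(Y)$.

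I do not expect any serious obstacle here; the only point requiring a little care is the bookkeeping around strict finite support—namely checking that $X\app Y$ really is a strictly finitely supported union so that it lies in $\otop{F(\ns D')}$ and the union manipulations are legitimate—but this is routine given Lemma~\ref{lemm.strict.support}, Corollary~\ref{corr.strict.strict}, and Theorem~\ref{thrm.no.increase.of.supp}. The proof is essentially a two-line reduction to Proposition~\ref{prop.Ff.app} via continuity of $F(f)^\mone$ and bilinearity of $\app$ with respect to unions.
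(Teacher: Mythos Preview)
Your proposal is correct and follows essentially the same approach as the paper: decompose the open sets as strictly finitely supported unions of sets $\pp{x_i}$, use Lemma~\ref{lemm.bigcup.app} to distribute $\app$ over these unions, apply Proposition~\ref{prop.Ff.app} on each compact piece, and reassemble. You supply more detail than the paper (which dispatches the proof in one sentence citing Lemma~\ref{lemm.bigcup.app} and Proposition~\ref{prop.Ff.app}), but the structure is the same.
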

\begin{proof}
By construction in Definition~\ref{defn.F} every open set in $F(\ns D)$ is a small-supported union of compact opens.
By Theorem~\ref{thrm.pp.x.clopen} compact opens have in $F(\ns D)$ the form $\pp x$ and $\pp y$ for $x,y\in|\ns D|$.
We use Lemma~\ref{lemm.bigcup.app}(1) and Proposition~\ref{prop.Ff.app}(1).
\end{proof}
Corollary~\ref{corr.extend.Ff.app} would not work for $\ppa$, because a result corresponding to Lemma~\ref{lemm.bigcup.app} does not hold for it.

\subsection{The duality, in the presence of $\app$ and $\ppa$}
\label{subsect.app.equivalence}

It is routine to extend Definition~\ref{defn.G}---which sends a spectral space $\ns T$ to the lattice of its compact open sets $G(\ns T)$---and Definition~\ref{defn.Gg}---which sends a spectral map $g$ to its inverse image function $G(g)=g^\mone$---to the case where we also assume $\app$ and $\ppa$.
We write it out, just to be clear:
\begin{defn}
\label{defn.G.bpp}
\label{defn.Gg.bpp}
If $\ns T\in\inspectapp$ (Definition~\ref{defn.nom.top.spectral.bpp}) define $G(\ns T)\in\indiapp$ (Definition~\ref{defn.indiapp}) by:
\begin{itemize*}
\item
$|G(\ns T)|=\ctop{\ns T}$ and $G(\ns T)^\prg=\ns T^\prg$.
\item
$\pi\act U=\{\pi\act p\mid p\in U\}$ and $U[a\sm u]=\{p\mid p[u\ms a]\in U\}$, where $U\in|G(\ns T)|$ and $u\in|G(\ns T)^\prg|$.
\item
$\ttop$,
$\tand$, $\tbot$, $\tor$, and $\tall$ are interpreted as the whole underlying set,
set intersection, the empty set, set union, and $\freshcap{a}$.
\item
$X\app Y$ and $Y\ppa X$ are interpreted as specified in Definition~\ref{defn.sub.sets.pp}.
\end{itemize*}
Given $g:\ns T\equivarto \ns T'\in \inspectapp$ from Definition~\ref{defn.g.extend}, define $G(g):G(\ns T')\equivarto G(\ns T)$ by $G(g)(U)=g^\mone(U)$.
\end{defn}

\begin{lemm}
\label{lemm.ppa.bigcap}
Continuing the notation of Definition~\ref{defn.G.bpp}, suppose $U,V\in|G(\ns T)|$ and suppose $\mathcal U\subseteq|G(\ns T)|$.
Then
$$
\bigcap_{U{\in}\mathcal U}(V\ppa U)= V\ppa\bigcap_{U{\in}\mathcal U}\mathcal U.
$$
\end{lemm}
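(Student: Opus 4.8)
The plan is to reduce the claimed identity to the defining adjointness property of $\ppa$, exactly as captured in the first bullet of Lemma~\ref{lemm.coherent.sigma.ppa}. Recall from Definition~\ref{defn.sub.sets.pp} and Lemma~\ref{lemm.rewrite.ppa} that for $W\subseteq|\ns T|$ we have $W\subseteq V\ppa U$ if and only if $W\app V\subseteq U$, and more generally (Lemma~\ref{lemm.coherent.sigma.ppa}) $W\subseteq V\ppa U \liff W\app V\subseteq U$ for every open $W$. Since $G(\ns T)=\ctop{\ns T}$ is a nominal distributive lattice with $\tall$ (Theorem~\ref{thrm.T.to.G.obj}), both sides of the claimed equation are elements of $|G(\ns T)|$ — here I should note that the $\bigcap$ on each side is a greatest lower bound taken \emph{in} $G(\ns T)$, which exists by coherence; this is really a statement about meets in the lattice, not naive set intersection, although by Corollary~\ref{corr.all.freshwedge} and the surrounding results these coincide with the concrete sets operations.

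First I would prove the $\subseteq$ direction. Since $\bigcap_{U\in\mathcal U}U \subseteq U_0$ for each $U_0\in\mathcal U$, monotonicity of $\ppa$ in its right argument (part~2 of Lemma~\ref{lemm.app.leq}, which is available since $G(\ns T)\in\indiapp$ by the extension of Theorem~\ref{thrm.T.to.G.obj} to $\indiapp$ — or directly from $\ppa$ being defined via the adjunction) gives $V\ppa\bigcap_{U\in\mathcal U}U \subseteq V\ppa U_0$ for each $U_0$; hence $V\ppa\bigcap_{U\in\mathcal U}U$ is a lower bound for $\{V\ppa U\mid U\in\mathcal U\}$ and so is below $\bigcap_{U\in\mathcal U}(V\ppa U)$.

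For the reverse inclusion $\bigcap_{U\in\mathcal U}(V\ppa U)\subseteq V\ppa\bigcap_{U\in\mathcal U}U$, I would set $W=\bigcap_{U\in\mathcal U}(V\ppa U)\in\ctop{\ns T}$ and use the adjunction: it suffices to show $W\app V\subseteq \bigcap_{U\in\mathcal U}U$, i.e.\ $W\app V\subseteq U$ for every $U\in\mathcal U$. Fix $U\in\mathcal U$. Then $W\subseteq V\ppa U$ (since $W$ is the meet of a family including $V\ppa U$), so by Lemma~\ref{lemm.coherent.sigma.ppa} (first bullet, applied with the open set $W$) we get $W\app V\subseteq U$, as required. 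Then applying the first bullet of Lemma~\ref{lemm.coherent.sigma.ppa} once more, this time in the direction $W\app V\subseteq\bigcap_{U\in\mathcal U}U \Rightarrow W\subseteq V\ppa\bigcap_{U\in\mathcal U}U$, concludes.

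The main obstacle I anticipate is bookkeeping rather than mathematical depth: being careful that the $\bigcap$ symbols denote meets computed inside $G(\ns T)=\ctop{\ns T}$ (so that the expressions are well-typed, using coherence of $\ns T$ — Definition~\ref{defn.coherent.bpp}, in particular closure of compact opens under $\ppa$ and under $\tand$/intersections), and that Lemma~\ref{lemm.coherent.sigma.ppa} is quantified over \emph{all} open $W$, so it applies to $W$ a compact open in $G(\ns T)$ without extra hypotheses. If one wanted to avoid even mentioning meets, one could instead argue entirely at the level of the concrete pointwise operations of Definition~\ref{defn.sub.sets.pp}: unfold $Y\ppa X=\{p\mid p\bpp Y\subseteq X\}$ and observe $p\bpp V\subseteq\bigcap_{U}U \liff \forall U\,(p\bpp V\subseteq U)$, which is an immediate set-theoretic fact; but then one must separately check the right-hand side is again a compact open, which is exactly what coherence provides. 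Either route is short; I would present the adjunction-based argument since it makes the role of \rulefont{\app\epsilon}/\rulefont{\app\eta} transparent.
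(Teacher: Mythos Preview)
Your argument is correct, but you have overcomplicated it and introduced a small misconception. The $\bigcap$ in the statement is literal set intersection, not a meet computed in $G(\ns T)$: the lemma places no finiteness or support hypothesis on $\mathcal U$, and coherence (Definition~\ref{defn.coherent}) only closes $\ctop{\ns T}$ under \emph{finite} intersections, so neither side need lie in $|G(\ns T)|$. Indeed the lemma is invoked in Lemma~\ref{lemm.GT.app} with $\mathcal U=\{X[b\sm u]\mid u\in|\ns T^\prg|\}$, which is typically infinite; the point there is that the resulting intersection happens to equal $\freshcap{b}X$, which \emph{is} compact by coherence. So your worry about well-typedness is misplaced, and your closing remark that ``one must separately check the right-hand side is again a compact open'' does not apply --- the lemma is purely a set equality.

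Once that is straightened out, both your routes work. Your adjunction argument goes through because the equivalence $W\subseteq V\ppa X\liff W\app V\subseteq X$ from Lemma~\ref{lemm.coherent.sigma.ppa} actually holds for arbitrary $W,X\subseteq|\ns T|$ (the proof there uses nothing about openness of $W$), so you may take $W=\bigcap_{U\in\mathcal U}(V\ppa U)$ even though it need not be open. The paper, however, takes exactly your ``alternative'' route: it unfolds $V\ppa X=\{r\mid r\bpp V\subseteq X\}$ via Proposition~\ref{prop.amgis.iff.pp} and observes pointwise that $r\bpp V\subseteq\bigcap\mathcal U$ iff $r\bpp V\subseteq U$ for every $U\in\mathcal U$ iff $r\in V\ppa U$ for every $U$. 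This is a two-line chase with no side conditions to manage, which is why the paper prefers it; your adjunction version is equally valid but carries more bookkeeping for no gain here.
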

\begin{proof}
Using Proposition~\ref{prop.amgis.iff.pp} $r\in V\ppa\bigcap\mathcal U$ if and only if $r\bpp V\subseteq U$ for every $U\in\mathcal U$, and by Proposition~\ref{prop.amgis.iff.pp} again this is if and only if $r\in V\ppa U$ for every $U\in\mathcal U$, which is if and only if $r\in\bigcap_{U{\in}\mathcal U} V\ppa U$.
\end{proof}

\begin{lemm}
\label{lemm.GT.app}
If $\ns T\in\inspectapp$ then $G(\ns T)$ validates the axioms from Figure~\ref{fig.app.compatible} (Definition~\ref{defn.FOLeq.pp}).
\end{lemm}
\begin{proof}
We consider each axiom in turn.
We take $X,Y,X',U\in\ctop{\ns T}$ (open compacts in $\ns T$):
\begin{itemize*}
\item
\emph{Axioms \rulefont{\sigma\app} and \rulefont{\sigma\ppa}.}
\quad
By assumption in Definition~\ref{defn.coherent.bpp}.
\item
\emph{The adjoint axioms \rulefont{\app\epsilon} and \rulefont{\app\eta}.}
\quad
Direct from Lemma~\ref{lemm.bruce}.
\item
\emph{Axioms \rulefont{\app\bot} and \rulefont{\app\tor}}
\quad
\dots are Lemma~\ref{lemm.bigcup.app}.
\item
\emph{Axioms \rulefont{\app\tand}.}\quad
If $r\in (X\cap X')\app Y$ then $r\in p\bpp q$ for some $p\in X\cap X'$ and $q\in Y$.
It follows that $r\in p\bpp q$ for $p\in X$ and $q\in Y$ and $r\in p\bpp q$ for $p\in X'$ and $q\in Y$, and therefore $r\in (X\app Y)\cap(X'\app Y)$.
The second \rulefont{\app\tand} axiom follows similarly.
\item
\emph{Axiom \rulefont{\app\tall}.}\quad
Suppose $b\#U$ and $r\in (\freshcap{a}X)\app U$.
It follows by Lemma~\ref{lemm.bigcup.app} that $r\in X[a\sm w]\app U$ for every $w\in|\ns T^\prg|$.
Now by part~1 of Lemma~\ref{lemm.X.sub.fresh.alpha}
and condition~\ref{item.nom.top.app.sub} of Definition~\ref{defn.coherent.bpp}, $X[a\sm w]\app U=(X\app U)[a\sm w]$.

So $r\in (X\app U)[a\sm w]$ for every $w\in|\ns T^\prg|$, and by Definition~\ref{defn.nu.U}
we have $r\in\freshcap{a}(X\app U)$.
\item
\emph{Axiom \rulefont{\ppa{\tand}}.}\quad
From Lemma~\ref{lemm.ppa.bigcap}.
\item
\emph{Axiom \rulefont{\ppa{\tor}}.}\quad
$r\in (U\ppa X)\cup (U\ppa X')$ means $r\bpp U\subseteq X$ or $r\bpp U\subseteq X'$.
In either case, $r\bpp U\subseteq X\cup X'$ and this means $r\in U\ppa(X\cup X')$.
\item
\emph{Axiom \rulefont{\ppa\tall}.}\quad
Suppose $b\#U$.
We reason as follows:
$$
\begin{array}[b]{r@{\ }l@{\qquad}l}
\freshcap{b}(U\ppa X)
=&\bigcap_{u\in|\ns T^\prg|} (U\ppa X)[b\sm u]
&\text{Definition~\ref{defn.nu.U}}
\\
\subseteq& \bigcap_{u\in|\ns T^\prg|} U[b\sm u]\ppa X[b\sm u]
&\text{Lemma~\ref{lemm.sigma.ppa}}
\\
=&\bigcap_{u\in|\ns T^\prg|} U\ppa X[b\sm u]
&\rulefont{\sigma\#},\ b\#U
\\
=& U\ppa \bigcap_{u\in|\ns T^\prg|}X[b\sm u]
&\text{Lemma~\ref{lemm.ppa.bigcap}}
\\
=& U\ppa\hspace{1.5pt} \freshcap{b}X
&\text{Definition~\ref{defn.nu.U}}
\end{array}
\qedhere$$
\end{itemize*}
\end{proof}

\begin{prop}
\label{prop.G.funct.bpp}
$G$ from Definition~\ref{defn.G.bpp} is a functor from $\inspectapp^\f{op}$ to $\indiapp$.
\end{prop}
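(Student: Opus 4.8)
The plan is to piggy-back on Proposition~\ref{prop.G.funct}, which already establishes that $G$ is a functor from $\inspecta^\f{op}$ to $\india$; everything new is the bookkeeping needed to account for the extra structure of $\app$ and $\ppa$, and this is almost entirely packaged in Lemma~\ref{lemm.GT.app} and in the definitions of $\indiapp$ and $\inspectapp$.

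First, on objects: given $\ns T\in\inspectapp$, I would observe that $G(\ns T)$ is an impredicative nominal distributive lattice with $\tall$ by Theorem~\ref{thrm.T.to.G.obj}, exactly as in Proposition~\ref{prop.G.funct}. The operations $X\app Y$ and $Y\ppa X$ from Definition~\ref{defn.sub.sets.pp} are equivariant by Definition~\ref{defn.bpp}, and they land in $\ctop{\ns T}$ precisely because $\ns T$ is coherent in the stronger sense of Definition~\ref{defn.coherent.bpp}. The adjointness axioms \rulefont{\app\epsilon} and \rulefont{\app\eta} of Figure~\ref{fig.app.adjoint}, together with all the compatibility axioms of Figure~\ref{fig.app.compatible}, hold by Lemma~\ref{lemm.GT.app}. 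Hence $G(\ns T)\in\indiapp$.

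Second, on morphisms: given $g=(g,g^\prg):\ns T\equivarto\ns T'$ in $\inspectapp$ (Definition~\ref{defn.g.extend}), Proposition~\ref{prop.G.funct} already tells us that $G(g)=g^\mone$ is a morphism in $\india$. The only additional requirement for a morphism in $\indiapp$ (Definition~\ref{defn.indiapp}) is that $G(g)$ commute with $\app$ and $\ppa$, i.e. $g^\mone(X\app Y)=g^\mone(X)\app g^\mone(Y)$ and $g^\mone(\prg b\ppa X)=\prg b\ppa g^\mone(X)$ for $X,Y\in\ctop{\ns T'}$; but these are exactly the defining conditions on a morphism in $\inspectapp$ imposed in Definition~\ref{defn.g.extend}. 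So $G(g)$ is a morphism in $\indiapp$. Functoriality — $G(\id)=\id$ and $G(g'\circ g)=G(g)\circ G(g')$ — is unchanged from Proposition~\ref{prop.G.funct}, since $G(g)$ is still just the inverse-image map $g^\mone$, which is contravariantly functorial; no new calculation is needed.

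The nearest thing to an obstacle is purely one of hygiene: one must check that $\app$ and $\ppa$ really do restrict to \emph{compact} opens (not merely to arbitrary opens), since $G(\ns T)$ has underlying set $\ctop{\ns T}$. This is guaranteed by conditions~5--7 of Definition~\ref{defn.coherent.bpp} — which is precisely why those conditions were built into the notion of coherence for $\sigma\bpp$-spaces — so once those are invoked the proof is a routine assembly of Proposition~\ref{prop.G.funct}, Lemma~\ref{lemm.GT.app}, and Definitions~\ref{defn.indiapp} and~\ref{defn.g.extend}.
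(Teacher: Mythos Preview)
Your proposal is correct and follows essentially the same route as the paper: for objects you invoke Theorem~\ref{thrm.T.to.G.obj} and Lemma~\ref{lemm.GT.app}, and for morphisms you use Proposition~\ref{prop.G.funct} together with the two conditions on $g^\mone$ from Definition~\ref{defn.g.extend}, which is exactly what the paper does. Your additional remarks on functoriality and on closure of $\ctop{\ns T}$ under $\app$ and $\ppa$ via Definition~\ref{defn.coherent.bpp} are correct elaborations that the paper leaves implicit.
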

\begin{proof}
The action on objects is handled by Theorem~\ref{thrm.T.to.G.obj} and Lemma~\ref{lemm.GT.app}.
The action on morphisms is handled by Proposition~\ref{prop.G.funct} and by the two conditions on $g^\mone$ in Definition~\ref{defn.g.extend}.
\end{proof}

We checked in Proposition~\ref{prop.Ff.app} that this is true for $g=F(f)^\mone$, so we can extend Proposition~\ref{prop.ess.surj}:
\begin{prop}
\label{prop.ess.surj.pp}
If $\ns D\in\indiapp$ then $GF(\ns D)$ is equal to $\pp{\ns D}$ from Definition~\ref{defn.pp.B.app}, and the map $x\mapsto \pp x$ is an isomorphism in $\indiapp$.
\end{prop}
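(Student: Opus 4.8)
The plan is to follow the strategy of the proof of Proposition~\ref{prop.ess.surj}, adding the bookkeeping needed for the extra operators $\app$ and $\ppa$. First I would observe that the underlying $\sigma$-algebras of $GF(\ns D)$ and $\pp{\ns D}$ coincide: by Propositions~\ref{prop.pp.x.clopen} and~\ref{prop.clop.FB} the compact open sets of $F(\ns D)$ are exactly the sets of the form $\pp x$ for $x\in|\ns D|$, so $|GF(\ns D)|=\ctop{F(\ns D)}=\{\pp x\mid x\in|\ns D|\}=|\pp{\ns D}|$, and the permutation and $\sigma$-actions agree because both are the pointwise actions of Definition~\ref{defn.sub.sets} inherited from the $\amgis$-algebra $\points(\ns D)$, which underlies both $F(\ns D)$ (Definition~\ref{defn.F.pp}) and $\pp{\ns D}$ (Definition~\ref{defn.pp.B.app}). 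The impredicative structure $\prg$ and the lattice operations $\ttop$, $\tand$, $\tbot$, $\tor$, $\freshcap{a}$ coincide for the same reason, exactly as in Proposition~\ref{prop.ess.surj}.

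Next I would check the new data. The object $GF(\ns D)$ carries $\app$ and $\ppa$ defined by Definition~\ref{defn.sub.sets.pp} from the combination operator $\bpp$ on the points of $F(\ns D)$, and by clause~4 of Definition~\ref{defn.F.pp} this $\bpp$ is precisely $p\bpp q=\bigcap\{\pp{(x\app y)}\mid x\in p,\ y\in q\}$ from Definition~\ref{defn.p.action.pp}. On the other hand $\pp{\ns D}$ in Definition~\ref{defn.pp.B.app} is equipped with the very same $\app$ and $\ppa$ from Definition~\ref{defn.sub.sets.pp} relative to the same $\bpp$ on $\points(\ns D)$. Hence the $\app$ and $\ppa$ operations of $GF(\ns D)$ and $\pp{\ns D}$ are literally equal, and therefore $GF(\ns D)=\pp{\ns D}$ as objects of $\indiapp$.

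Finally, for the isomorphism statement, I would invoke Theorem~\ref{thrm.pp.iso.pp} (the second representation theorem), which says exactly that $x\mapsto\pp x$ is an isomorphism in $\indiapp$ between $\ns D$ and $\pp{\ns D}$; composing with the identification $\pp{\ns D}=GF(\ns D)$ just established gives the claim.

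I do not expect any serious obstacle. The only point requiring care is confirming that the $\app$/$\ppa$ structure that $G$ extracts from $F(\ns D)$ is the pointwise-$\bpp$ structure of Definition~\ref{defn.sub.sets.pp}, and not some independently re-derived variant---that is, tracing definitions through Definitions~\ref{defn.G.bpp}, \ref{defn.F.pp}, \ref{defn.sub.sets.pp} and~\ref{defn.p.action.pp}. Once that identification is made, everything reduces to unwinding notation, with the substantive content already packaged in Theorem~\ref{thrm.pp.iso.pp}.
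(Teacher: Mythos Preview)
Your proposal is correct and follows essentially the same approach as the paper: mirror the proof of Proposition~\ref{prop.ess.surj} using Propositions~\ref{prop.pp.x.clopen} and~\ref{prop.clop.FB} to identify the underlying sets, then invoke Theorem~\ref{thrm.pp.iso.pp} for the isomorphism. The paper's proof is terser (it simply says ``the extra structure of $\app$ and $\ppa$ has no effect''), whereas you spell out the definition-chasing for the $\app$/$\ppa$ data explicitly, but this is the same argument.
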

\begin{proof}
Just as the proof of Proposition~\ref{prop.ess.surj}; the extra structure of $\app$ and $\ppa$ has no effect.
We use Theorem~\ref{thrm.pp.iso.pp}.
\end{proof}

It is routine to check that the $\app$ and $\ppa$ structure is orthogonal to the material of Subsections~\ref{subsect.G.action} and~\ref{subsect.the.equivalence}, and so we obtain Theorem~\ref{thrm.equivalence.pp}:

\begin{frametxt}
\begin{thrm}[The duality theorem]
\label{thrm.equivalence.pp}
$G:\inspectapp^\tf{op}\to\indiapp$ defines an equivalence between $\indiapp$ and $\inspectapp^{\f{op}}$.
\end{thrm}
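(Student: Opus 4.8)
The plan is to invoke the standard criterion used already in the proof of Theorem~\ref{thrm.equivalence} (via \cite[Theorem~1, Chapter~IV, Section~4]{CWM71}): the functor $G:\inspectapp^{\mathsf{op}}\to\indiapp$ is an equivalence precisely when it is essentially surjective on objects, faithful, and full. Each of the three verifications runs in close parallel to its counterpart in Theorem~\ref{thrm.equivalence}, the only new ingredient being the need to keep track of the extra operators $\app$ and $\ppa$ and the combination operator $\bpp$.

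For essential surjectivity I would observe that for $\ns D\in\indiapp$ Proposition~\ref{prop.ess.surj.pp} gives $GF(\ns D)=\pp{\ns D}$, while Theorem~\ref{thrm.pp.iso.pp} gives that $x\mapsto\pp x$ is an isomorphism $\ns D\cong\pp{\ns D}$ in $\indiapp$; hence every object of $\indiapp$ lies, up to isomorphism, in the image of $G$. For faithfulness I would take $g_1,g_2:\ns T\equivarto\ns S$ in $\inspectapp$ with $g_1\neq g_2$, reduce (as in Theorem~\ref{thrm.equivalence}) to the case that $g_1(p)\neq g_2(p)$ for some point $p\in|\ns T|$, and use coherence and sobriety of $\ns S$ together with Lemma~\ref{lemm.coherent.iff} to produce a compact open $U\in\ctop{\ns S}$ with $g_1(p)\in U$ and $g_2(p)\notin U$; unwinding Definition~\ref{defn.Gg.bpp} then gives $p\in G(g_1)(U)$ and $p\notin G(g_2)(U)$, so $G(g_1)\neq G(g_2)$. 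The operators $\app$ and $\ppa$ play no role here, since the argument concerns only the underlying point-maps.

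For fullness, given $f:G(\ns S)\equivarto G(\ns T)$ in $\indiapp$, I would first check that the soberness isomorphism $\alpha_{\ns T}:\ns T\equivarto FG(\ns T)$, $t\mapsto\qq t$, from Proposition~\ref{prop.qq.iso} is an isomorphism in $\inspectapp$ (not merely in $\inspecta$) — concretely, that its inverse image commutes with $\app$ and $\ppa$, equivalently that $\alpha_{\ns T}$ is compatible with $\bpp$. This is the one genuinely new computation: it follows by unpacking Definitions~\ref{defn.p.action.pp} and~\ref{defn.sub.sets.pp} and applying Proposition~\ref{prop.amgis.iff.pp} in exactly the style of Lemma~\ref{lemm.qq.commute}, and it is what the remark preceding the theorem means by saying the $\app$, $\ppa$ structure is ``orthogonal'' to the material of Subsections~\ref{subsect.G.action} and~\ref{subsect.the.equivalence}. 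With that in hand, Proposition~\ref{prop.Ff.app} shows $F(f):FG(\ns T)\equivarto FG(\ns S)$ is a morphism in $\inspectapp$, so $g=\alpha_{\ns S}^{\mone}\circ F(f)\circ\alpha_{\ns T}:\ns T\equivarto\ns S$ is a composite of morphisms in $\inspectapp$; one then verifies $G(g)=f$ by the same routine calculation used in Theorem~\ref{thrm.equivalence}, now additionally checking agreement on $\app$ and $\ppa$ (which holds because $F(f)$ and the $\alpha$'s all respect this structure).

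The main obstacle — such as it is — is the verification that $\alpha_{\ns T}$ respects $\bpp$ (hence $\app$ and $\ppa$); everything else is a transcription of the $\india$/$\inspecta$ argument. I expect no surprises, since the pointwise definitions of $X\app Y$ and $Y\ppa X$ in Definition~\ref{defn.sub.sets.pp} are preserved by the bijection $t\mapsto\qq t$ for the same formal reasons that the $\amgis$-action is, as in Lemma~\ref{lemm.qq.commute}.
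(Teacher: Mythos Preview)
Your proposal is correct and follows essentially the same approach as the paper. The paper's own proof is in fact just the sentence preceding the theorem statement: ``It is routine to check that the $\app$ and $\ppa$ structure is orthogonal to the material of Subsections~\ref{subsect.G.action} and~\ref{subsect.the.equivalence}, and so we obtain Theorem~\ref{thrm.equivalence.pp}''; you have correctly unpacked what this orthogonality amounts to, and in particular identified the one genuinely new verification (that $\alpha_{\ns T}$ respects the $\bpp$-induced $\app$ and $\ppa$ structure), which the paper leaves implicit.
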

\end{frametxt}

Theorem~\ref{thrm.equivalence.pp} exhibits $\indiapp$ and $\inspectapp$ (Definitions~\ref{defn.indiapp} and~\ref{defn.g.extend}) as dual to one another.
This is a general result---the abstract nominal algebra structures in $\indiapp$ correspond dually to concrete topological spaces in $\inspectapp$.

It remains to show how $\indiapp$ and $\inspectapp$ relate specifically to the untyped $\lambda$-calculus.

\jamiepart{Application to the $\lambda$-calculus}

\section{The $\lambda$-calculus}
\label{sect.lambda.calculus}

In this section we sketch the untyped $\lambda$-calculus and show how it has been living inside $\indiapp$ all along: this is $\tlam a.x$ in Notation~\ref{nttn.lambda}.
We make formal that Notation~\ref{nttn.lambda} is `a right thing to do' with Proposition~\ref{prop.lambda.beta.eta}, Definition~\ref{defn.ddenot}, and Theorem~\ref{thrm.lambda.soundness}.

We also briefly unpack what $\tlam a.X$ is when $X$ is an open set in the topological representations in $\inspectapp$.
This is Proposition~\ref{prop.unpack.lambda.for.G}.

Thus, we leverage our topological duality to give both abstract and concrete (i.e. nominal poset flavoured and nominal sets flavoured) semantics for the $\lambda$ of the untyped $\lambda$-calculus.

\subsection{Syntax of the $\lambda$-calculus}

\begin{defn}
\label{defn.lamtrm}
Define \deffont{$\lambda$-terms} as usual by
$$
s::= a \mid \lam{a}s \mid s's
$$
where $a$ ranges over atoms (so we use atoms as variable symbols, in nominal style).\footnote{We could allow constants $\tf c$ too, if we wished.}
\begin{itemize*}
\item
We treat $\lambda$-terms as equal up to $\alpha$-equivalence.\footnote{\dots using nominal abstract syntax \cite{gabbay:thesis,gabbay:newaas-jv} or by taking equivalence classes or by whatever other method the reader prefers.}
\item
We assume capture-avoiding substitution $s[a\ssm u]$.
\item
We write $\fa(s)$ for the free atoms (variables) of $s$.
\end{itemize*}
\end{defn}

\begin{defn}
\label{defn.lamtrm.pi}
Consider $\lambda$-terms as a nominal set (Definition~\ref{defn.nominal.set})
by giving them the natural permutation action:
$$
\pi\act a=\pi(a)\qquad
\pi\act(\lam{a}s)=\lam{\pi(a)}\pi\act s\qquad
\pi\act (s's)=(\pi\act s')(\pi\act s) 
$$
Write $\lamtrm$ for the nominal set of $\lambda$-terms with this permutation action.
\end{defn}

It is a fact that with the permutation action above, $\fa(s)$ the free atoms of $s$ and $\supp(s)$ the atoms in the support of $s$, coincide.

\begin{defn}
\label{defn.lamtrm.sigma}
Consider $\lambda$-terms as a termlike $\sigma$-algebra (Definition~\ref{defn.term.sub.alg})
by setting
$$
s[a\sm u] = s[a\ssm u] ,
$$
so that $[a\sm u]$ acting on $s$ is `$s$ with $a$ substituted for $u$'.
\end{defn}

It is a fact that this does indeed determine a termlike $\sigma$-algebra.
The nominal algebra axioms of Figure~\ref{fig.nom.sigma} reflect valid properties of capture-avoiding substitution on $\lambda$-terms.

\subsection{$\lambda$, $\beta$, and $\eta$ using adjoints}
\label{subsect.beta.eta}

In objects of $\indiapp$, $\lambda$-abstraction arises naturally by combining the `logical' structure $\tall$ and $\leq$ with the `combinational' structure of $\app$ and $\ppa$; this is Notation~\ref{nttn.lambda}.
We shall see that $\beta$-reduction and $\eta$-expansion arise as natural corollaries of the adjoint properties of $\app$ and $\ppa$; this is Proposition~\ref{prop.lambda.beta.eta}.

Proposition~\ref{prop.unpack.lambda.for.G} unpacks what this means in $F(\ns D)$, and Definition~\ref{defn.ddenot} and Theorem~\ref{thrm.lambda.soundness} show how we can interpret the full untyped $\lambda$-calculus.

\subsubsection{$\lambda$ using $\tall$ and $\ppa$}

\begin{frametxt}
\begin{nttn}
\label{nttn.lambda}
Suppose $\ns D\in\indiapp$ and $x\in|\ns D|$.
Write $\tlam a.x$ for $\tall a.(\prg a\ppa x)$.
\end{nttn}
\end{frametxt}

\begin{rmrk}
\label{rmrk.unpack.tlam}
We unpack some of Notation~\ref{nttn.lambda}.
The notation $\prg a$ is explained in detail in Notation~\ref{nttn.impredicative.D}.
In full,
$$
\tlam a.x\ \ \text{and}\ \ \tall a.(\prg a\ppa x)\quad\text{mean}\quad
\tall_{\ns D} a.(\prg_{\ns D}(a_{\ns D^\prg})\ppa_{\ns D} x) .
$$
Here $a_{\ns D^\prg}$ is the copy of $a$ in the termlike $\sigma$-algebra $\ns D^\prg$ and $\prg_{\ns D}$ maps this to $|\ns D|$.

$\tall a$ is from Definition~\ref{defn.fresh.finite.limit}.
$\ppa$ is a right adjoint to application and is from Definition~\ref{defn.FOLeq.pp}.
\end{rmrk}

\begin{lemm}
\label{lemm.tlam.sigma}
If $b\#u$ then $(\tlam b.x)[a\sm u]=\tlam b.(x[a\sm u])$.
\end{lemm}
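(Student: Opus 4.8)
The plan is to unfold Notation~\ref{nttn.lambda} and then apply, in sequence, the compatibility of the $\sigma$-action with fresh-finite limits and the axiom \rulefont{\sigma\ppa}. First I would write $\tlam b.x = \tall b.(\prg b\ppa x) = \freshwedge{b}(\prg b\ppa x)$, which is legitimate since $\ns D\in\indiapp$ has fresh-finite limits, so $\tall b$ of any element exists. Note that by the permutative convention $a$ and $b$ are distinct atoms, so $a\#\prg b$ and no capture issue arises.

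Next, using the hypothesis $b\#u$, I would invoke the third compatibility equation of Definition~\ref{defn.fresh.continuous} (with the element $\prg b\ppa x$ in place of the bound metavariable), which gives
$$(\freshwedge{b}(\prg b\ppa x))[a\sm u] = \freshwedge{b}\bigl((\prg b\ppa x)[a\sm u]\bigr).$$
Then I would apply axiom \rulefont{\sigma\ppa} from Figure~\ref{fig.app.compatible}, namely $(\prg b\ppa x)[a\sm u]= \prg b\ppa (x[a\sm u])$, to rewrite the right-hand side as $\freshwedge{b}(\prg b\ppa (x[a\sm u]))$, which is exactly $\tall b.(\prg b\ppa (x[a\sm u])) = \tlam b.(x[a\sm u])$ by Notation~\ref{nttn.lambda} again. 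Chaining these equalities closes the argument.

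There is no real obstacle here: the calculation is a two-step rewrite. The only points requiring a line of care are (i) checking that the side-condition needed for the compatibility equation of Definition~\ref{defn.fresh.continuous} is precisely the given hypothesis $b\#u$, and (ii) noting that \rulefont{\sigma\ppa} carries no side-condition of its own, so it applies unconditionally to $(\prg b\ppa x)[a\sm u]$. Everything else is bookkeeping with the abbreviation $\tlam b.{-} \defeq \tall b.(\prg b\ppa {-})$.
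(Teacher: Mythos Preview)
Your proposal is correct and follows essentially the same route as the paper: unfold Notation~\ref{nttn.lambda}, push the substitution through $\tall b$ using the compatibility condition of Definition~\ref{defn.fresh.continuous} (which needs exactly the hypothesis $b\#u$), and then apply \rulefont{\sigma\ppa}. The paper additionally cites \rulefont{\sigma\#} ``since $b\#a$'' at the last step, which amounts to your observation via the permutative convention that $a\#\prg b$; this is the same bookkeeping you already noted.
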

\begin{proof}
We unpack Notation~\ref{nttn.lambda}.
By assumption in Definition~\ref{defn.FOLeq} the $\sigma$-action is compatible (Definition~\ref{defn.fresh.continuous}) so $(\tall b.\prg b\ppa x)[a\sm u]=\tall b.((\prg b\ppa x)[a\sm u])$.
We use \rulefont{\sigma\ppa} from Figure~\ref{fig.app.compatible} (Definition~\ref{defn.FOLeq.pp}) and \rulefont{\sigma\#} from Figure~\ref{fig.nom.sigma} (since $b\#a$).
\end{proof}

We now derive $\beta$-reduction and $\eta$-expansion from the counit and unit axioms \rulefont{\app\epsilon} and \rulefont{\app\eta} respectively:
\begin{frametxt}
\begin{prop}
\label{prop.lambda.beta.eta}
Suppose $\ns D\in\indiapp$, $x\in|\ns D|$, $u\in|\ns D^\prg|$, and $a$ is an atom.
Then:
\begin{enumerate*}
\item
$(\tlam a.x)\app \prg u\leq x[a\sm u]$.
\item
If $a\#x$ then $x\leq \tlam a.(x\app \prg a)$ ($\prg u$ and $\prg a$ from Notation~\ref{nttn.impredicative.D}).
\end{enumerate*}
\end{prop}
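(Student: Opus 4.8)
The plan is to derive both clauses directly from the adjointness axioms \rulefont{\app\epsilon}/\rulefont{\app\eta} of Figure~\ref{fig.app.adjoint}, combined with the fresh-finite limit properties of $\tall$ and the interaction lemmas for $\ppa$ with the $\sigma$-action. Throughout we unpack Notation~\ref{nttn.lambda}: $\tlam a.x=\tall a.(\prg a\ppa x)$, where $\prg a=\prg_{\ns D}(a_{\ns D^\prg})\in|\ns D|$.

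For part~1, I would first apply the corollary of Lemma~\ref{lemm.fresh.glb.sub}, namely $\tall a.y\leq y[a\sm u]$, with $y=\prg a\ppa x$, to get $\tlam a.x\leq(\prg a\ppa x)[a\sm u]$. The point to watch is that here the \emph{bound} atom $a$ is itself being substituted, so \rulefont{\sigma\ppa} does not apply directly; instead I would use Lemma~\ref{lemm.sigma.ppa} to get $(\prg a\ppa x)[a\sm u]\leq(\prg a)[a\sm u]\ppa(x[a\sm u])$, and then Lemma~\ref{lemm.impredicative.sigma.a} (with the definition of $\prg a$) to rewrite $(\prg a)[a\sm u]=\prg_{\ns D}u=\prg u$. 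Chaining these gives $\tlam a.x\leq\prg u\ppa(x[a\sm u])$, and then Lemma~\ref{lemm.india.ppa.adjoint} (the adjointness $v\app w\leq z\iff v\leq w\ppa z$) converts this to the desired $(\tlam a.x)\app\prg u\leq x[a\sm u]$. (Equivalently one could apply $\app\prg u$ on both sides using Lemma~\ref{lemm.app.leq}(1) and then finish with \rulefont{\app\epsilon}.)

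For part~2, suppose $a\#x$. I would apply \rulefont{\app\eta} with the element $\prg a\in|\ns D|$ to get $x\leq\prg a\ppa(x\app\prg a)$. Since $a\#x$, part~1 of Lemma~\ref{lemm.b.bigger} applies (with $z=x$ and the "$x$" of that lemma taken to be $\prg a\ppa(x\app\prg a)$), yielding $x\leq\tall a.(\prg a\ppa(x\app\prg a))$, which is exactly $x\leq\tlam a.(x\app\prg a)$.

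The only real subtlety is in part~1: handling the substitution $(\prg a\ppa x)[a\sm u]$ when $a$ is the bound atom, which forces the detour through Lemmas~\ref{lemm.sigma.ppa} and~\ref{lemm.impredicative.sigma.a} rather than a direct appeal to \rulefont{\sigma\ppa}. Everything else is a routine assembly of previously established facts, so I do not expect further obstacles.
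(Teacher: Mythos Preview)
Your proposal is correct. Part~2 is essentially identical to the paper's argument (the paper phrases it via Lemma~\ref{lemm.tall.monotone} and the observation $\tall a.x=x$ when $a\#x$, which is exactly what your appeal to Lemma~\ref{lemm.b.bigger}(1) unpacks to).

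Part~1 takes a genuinely different route from the paper. The paper first pushes $\app\,\prg u$ inside the $\tall a$ using \rulefont{\app\tall}---which forces a preliminary $\alpha$-renaming via Lemma~\ref{lemm.freshwedge.alpha} to ensure $a\#\prg u$---then applies Lemma~\ref{lemm.fresh.glb.sub}, distributes the substitution over both $\ppa$ and $\app$ (using Lemma~\ref{lemm.sigma.ppa}, \rulefont{\sigma\app}, and \rulefont{\sigma\#}), and finishes with the counit \rulefont{\app\epsilon}. You instead apply Lemma~\ref{lemm.fresh.glb.sub} \emph{first} to drop the $\tall a$, then push the substitution through $\ppa$ alone via Lemma~\ref{lemm.sigma.ppa} and Lemma~\ref{lemm.impredicative.sigma.a}, and finish by flipping across the adjunction with Lemma~\ref{lemm.india.ppa.adjoint}. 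Your route is shorter: it needs neither the $\alpha$-renaming step nor \rulefont{\app\tall} nor \rulefont{\sigma\app}/\rulefont{\sigma\#}. The paper's route, on the other hand, stays on the ``$\app$ side'' throughout and uses the counit \rulefont{\app\epsilon} explicitly, which perhaps makes the adjoint-theoretic reading of $\beta$ more visible; but as a proof yours is the more economical one.
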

\end{frametxt}
\begin{proof}
We consider each part in turn.
\begin{enumerate*}
\item
Unfolding Notation~\ref{nttn.lambda} we have $(\tlam a.x)\app \prg u=(\tall a.(\prg a\ppa x))\app \prg u$.
Renaming using Lemma~\ref{lemm.freshwedge.alpha} if necessary, assume $a\#u$ so that by Theorem~\ref{thrm.no.increase.of.supp} also $a\#\prg u$.
By \rulefont{\app\tall}
$(\tall a.(\prg a\ppa x))\app \prg u\leq \tall a.((\prg a\ppa x)\app \prg u)$.
By Lemma~\ref{lemm.fresh.glb.sub} $\tall a.((\prg a\ppa x)\app \prg u)\leq ((\prg a\ppa x)\app \prg u)[a\sm u]$.
By Lemma~\ref{lemm.sigma.ppa} and \rulefont{\sigma\app} and \rulefont{\sigma\#},\ $((\prg a\ppa x)\app \prg u)[a\sm u]\leq (\prg u\ppa (x[a\sm u]))\app \prg u$.
By \rulefont{\app\epsilon} $(\prg u\ppa (x[a\sm u]))\app \prg u\leq x[a\sm u]$.
\item
Suppose $a\#x$; note of Definition~\ref{defn.fresh.finite.limit} that $x$ is its own $a$-fresh limit; that is, $\tall a.x=x$.
Unfolding Notation~\ref{nttn.lambda} $\tlam a.(x\app \prg a)=\tall a.(\prg a\ppa (x\app \prg a))$.
By \rulefont{\app\eta} $x\leq \prg a\ppa (x\app \prg a)$, so by Lemma~\ref{lemm.tall.monotone} $x=\tall a.x\leq \tall a.(\prg a\ppa (x\app \prg a))$ as required.
\qedhere\end{enumerate*}
\end{proof}

\begin{rmrk}
$\ns D\in\indiapp$ gives a model of $\beta$-reduction and $\eta$-expansion.
The reverse inclusions do not follow, but they are not forbidden:
\begin{itemize*}
\item
There exist models such that $x[a\sm u]\not\leq(\lam{a}x)\app \prg u$ (so that we do not have $\beta$-equality) and $a\#x$ for some $x$ and yet $\lam{a}(x\app \prg a)\not\leq x$.
\item
There also exist models such that $x[a\sm u]=(\lam{a}x)\app \prg u$ and for all $x$,\ \, $a\#x$ implies $\lam{a}(x\app \prg a)= x$.
To obtain one, choose any $\lambda$-equality theory $\Pi$ (Definition~\ref{defn.lambda.reduction.theory}; $\beta\eta$-equality would do) and construct $\points_\Pi$ from Definition~\ref{defn.pi.point} and the subsequent constructions.
\end{itemize*}
\end{rmrk}

\subsubsection{$\lambda$ as a sets operation in $F(\ns D)$}

We take a moment to perform a sanity check by examining $\tlam a$ (Notation~\ref{nttn.lambda})
for the specific case of the sets representation $\ns T=F(\ns D)$ (Definition~\ref{defn.F.pp}) of $\ns D\in\indiapp$.

\begin{prop}
\label{prop.unpack.lambda.for.G}
Suppose $\ns D\in\indiapp$, $X\in\otop{F(\ns D)}$, and $u\in|F(\ns D)^\prg|{=}|\ns D^\prg|$, and let $p$ range over elements of $|F(\ns D)|$, which are prime filters in $\ns D$.
Then for every $u\in|\ns D^\prg|$,
$$
p\in\tlam a.X
\quad\text{implies}\quad p\bpp \prg u\subseteq X[a\sm u].
$$
\end{prop}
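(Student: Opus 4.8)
The plan is to unpack $\tlam a.X = \tall a.(\prg a \ppa X)$ using Notation~\ref{nttn.lambda}, noting that in $\tlam a$ the quantifier $\tall a$ is realised as $\freshcap{a}$ (Theorem~\ref{thrm.powerset}, since $F(\ns D)$ has $G(F(\ns D))$-compacts behaving as $\powsigma$ over its points), and $\ppa$ is the pointwise operator of Definition~\ref{defn.sub.sets.pp}. So the hypothesis $p \in \tlam a.X$ reads $p \in \freshcap{a}(\prg a \ppa X)$. Apply line~2 of Proposition~\ref{prop.freshcap.equiv}, which characterises $\freshcap{a}Y$ as $\bigcap\{Y[a\sm n]\mid n\in\mathbb A\}$; more usefully, line~2 also gives $\bigcap_{w\in|\ns D^\prg|}Y[a\sm w]$ via Proposition~\ref{prop.char.freshwedge}, so in particular $p \in (\prg a \ppa X)[a\sm u]$ for the given $u$.

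Next I would compute $(\prg a \ppa X)[a\sm u]$. By \rulefont{\sigma\ppa} (validated for $\otop{F(\ns D)}$/$\ctop{F(\ns D)}$ via condition~\ref{item.nom.top.ppa.sub} of Definition~\ref{defn.coherent.bpp}, together with Lemma~\ref{lemm.filter.pp.app.commute}), after an $\alpha$-rename ensuring $a\#u$ so that $\prg a[a\sm u]=\prg a$ is avoided — actually we want the substituted atom $a$ to be the bound one, so instead I rename the bound $a$ to a fresh $b$ first using Lemma~\ref{lemm.freshwedge.alpha} and Lemma~\ref{lemm.tlam.sigma}-style reasoning, reducing to the case $a\#u$, whereupon $(\prg b \ppa X)[a\sm u] = \prg b \ppa (X[a\sm u])$. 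Then from $p \in \freshcap{b}(\prg b \ppa(X))$... hold on, cleaner: directly take the instance $w = u$ (after renaming the bound variable away from $u$, so WLOG the bound variable is some $b$ with $b\#u$, and $X$'s bound occurrence is $b$), apply \rulefont{\sigma\app}/\rulefont{\sigma\ppa}, and get $p \in \prg u \ppa (X[a\sm u])$ where I have also used \rulefont{\sigma\#} on $\prg b$ appropriately; the key identity being $(\prg b \ppa x)[a\sm u]$ and Lemma~\ref{lemm.sigma.ppa}. Then $p \in \prg u \ppa (X[a\sm u])$ unpacks by Lemma~\ref{lemm.rewrite.ppa} / Proposition~\ref{prop.amgis.iff.pp} to exactly $p \bpp \prg u \subseteq X[a\sm u]$, which is the conclusion.

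So the skeleton is: (i) unfold $\tlam$; (ii) rename the bound atom to be fresh for $u$ (Lemmas~\ref{lemm.freshwedge.alpha}, \ref{lemm.tlam.sigma}); (iii) instantiate the $\freshcap{}$-as-intersection characterisation (Proposition~\ref{prop.freshcap.equiv}, Proposition~\ref{prop.char.freshwedge}) at the program $u$; (iv) push the substitution through $\ppa$ using \rulefont{\sigma\ppa}, \rulefont{\sigma\app}, \rulefont{\sigma\#} and Lemma~\ref{lemm.sigma.ppa}, noting $\prg u = (\prg a)[a\sm u]$ via Lemma~\ref{lemm.impredicative.sigma.a}; (v) rewrite the resulting membership in $\prg u \ppa (X[a\sm u])$ as the inclusion $p\bpp \prg u\subseteq X[a\sm u]$ via Proposition~\ref{prop.amgis.iff.pp} and Lemma~\ref{lemm.rewrite.ppa}.

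The main obstacle I expect is bookkeeping around the bound atom versus the substituted atom: $\tlam a.X$ binds $a$, and we are substituting "at $a$" in the conclusion $X[a\sm u]$ — but that $a$ is the \emph{free} role, so really Notation~\ref{nttn.lambda}'s $a$ inside $\tall a.(\prg a\ppa x)$ is bound whereas the statement's $x[a\sm u]$ treats $a$ as free; reconciling this requires care with \rulefont{\sigma\alpha} and choosing the instance of the intersection correctly (one wants the instance where the dummy equals $u$, but must first ensure the bound name does not clash). A secondary subtlety is that $\freshcap{a}$ on $\otop{F(\ns D)}$ is a priori only known to behave like $\tall$ on $\ctop{F(\ns D)}$, but $X$ is merely open; here one uses that $\freshcap{a}X = \bigcap_u X[a\sm u]$ holds for all finitely-supported $X\in|\f{pow}(F(\ns D))|$ by Definition~\ref{defn.nu.U} and Proposition~\ref{prop.freshcap.equiv}, so the argument goes through for open $X$ with no need for compactness. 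Everything else is routine pointwise calculation on the definitions.
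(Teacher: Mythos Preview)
Your skeleton (i), (iii)--(v) is the paper's approach, and the correct key step is Lemma~\ref{lemm.sigma.ppa} together with Lemma~\ref{lemm.impredicative.sigma.a} and Proposition~\ref{prop.amgis.iff.pp}. But step~(ii) and the surrounding discussion about renaming the bound atom are confused and unnecessary. The bound $a$ in $\tlam a.X$ \emph{is} the $a$ appearing in the conclusion's $X[a\sm u]$: instantiating $\freshcap{a}(\prg a\ppa X)$ at $u$ gives $p\in(\prg a\ppa X)[a\sm u]$ directly, with no renaming. Renaming would not help anyway: if you rename $a$ to a fresh $b$ you must then substitute for $b$ in $\prg b\ppa(\ldots)$, which is the same situation relabelled.

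Crucially, \rulefont{\sigma\ppa} does \emph{not} apply here: it has the form $(\prg b\ppa x)[a\sm v]=\prg b\ppa(x[a\sm v])$ with $b$ \emph{distinct} from the substituted atom, whereas we have $\prg a$ with the \emph{same} atom being substituted. The right tool is Lemma~\ref{lemm.sigma.ppa}, the general inequality $(y\ppa x)[a\sm u]\leq y[a\sm u]\ppa x[a\sm u]$ with no freshness condition; combined with $(\prg a)[a\sm u]=\prg u$ this yields $p\in\prg u\ppa X[a\sm u]$, and Proposition~\ref{prop.amgis.iff.pp} gives $p\bpp\prg u\subseteq X[a\sm u]$. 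That Lemma~\ref{lemm.sigma.ppa} is only an inequality is precisely why the proposition is an implication and not an equivalence, as the paper remarks immediately after the proof.
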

\begin{proof}
Recall the unpacking of $\tlam a$ from Remark~\ref{rmrk.unpack.tlam}.
We reason as follows:
$$
\begin{array}[b]{@{\hspace{-3.5em}}r@{\ }l@{\quad}l}
p\in\tlam a.X\liff&
p\in \tall a.(\prg{a} \ppa X)
&\text{Notation~\ref{nttn.lambda}}
\\
\liff&
\Forall{u{\in}|F(\ns D)^\prg|} p\in (\prg{a}\ppa X)[a\sm u]
&\text{Proposition~\ref{prop.char.freshwedge}(1)}
\\
\liff&
\Forall{u{\in}|\ns D^\prg|} p\in (\prg{a}\ppa X)[a\sm u]
&F(\ns D)^\prg{=}\ns D^\prg\ \makebox[0pt][l]{\text{by Thm~\ref{thrm.FB.totsep.comp}}}
\\
\limp&
\Forall{u{\in}|\ns D^\prg|} p\in (\prg{a})[a\sm u]\ppa X[a\sm u]
&\text{Lemma~\ref{lemm.sigma.ppa}} 
\\
\liff&
\Forall{u{\in}|\ns D^\prg|} p\in  \prg{u}\ppa X[a\sm u]
&\rulefont{\sigma a},\ \prg\text{ morphism \makebox[0pt][l]{(Def~\ref{defn.morphism.sigma.alg})}}
\\
\liff&
\Forall{u{\in}|\ns D^\prg|} p\bpp \prg{u}\subseteq X[a\sm u]
&\text{Proposition~\ref{prop.amgis.iff.pp}}
\end{array}
\qedhere$$
\end{proof}

\begin{rmrk}
Continuing the notation of Proposition~\ref{prop.unpack.lambda.for.G}, one might expect $p\in\tlam a.X$ to also be \emph{equivalent} to $\Forall{u{\in}|\ns D^\prg|}p\bpp\prg u\subseteq X[a\sm u]$.
This seems to not be the case, because Lemma~\ref{lemm.sigma.ppa} used in the proof above is an inequality and not an equality.
\end{rmrk}

\subsection{Idioms}

It is convenient to generalise $\lambda$-syntax a little.
Recall from Definition~\ref{defn.term.sub.alg} that a \emph{termlike $\sigma$-algebra} expresses in nominal algebra the property of `having a substitution action over itself'.

\begin{defn}
\label{defn.idiom}
A \deffont{($\lambda$-)idiom} is a termlike $\sigma$-algebra $\idiom$ 
equipped with equivariant functions
$$
\begin{array}{r@{\ }l}
\app_\idiom:&\idiom\times\idiom\equivarto \idiom\quad\text{and}
\\
\lambda_\idiom:&\mathbb A\times\idiom\equivarto\idiom
\end{array}
$$
such that for all $a\in\mathbb A$ and $x,y\in|\idiom|$ and $u\in|\idiomprg|$:
\begin{enumerate*}
\item\label{item.idiom.fresh}
$a\#\lambda_\idiom a.x$ (this justifies quantifier notation: $\lambda$ abstracts the atoms argument $a$).
\item
If $b\#u$ then $(\lambda_\idiom b.x)[a\ssm u]_\idiom=\lambda_\idiom b.(x[a\ssm u]_\idiom)$.
\item\label{item.idiom.app}
$(x\app_\idiom y)[a\ssm u]_\idiom=(x[a\ssm u]_\idiom)\app_\idiom (y[a\ssm u]_\idiom)$.
\item
If $x,y\in|\idiomprg|$ then $x\app_\idiom y\in|\idiomprg|$, and $\lambda_\idiom a.x\in|\idiomprg|$.
\end{enumerate*}
Above, we use the fact that because $\idiom$ is a termlike $\sigma$-algebra, it interprets atoms as $\tf{atm}_\idiom(a)$ and as a $\sigma$-action $x[a\ssm u]_\idiom$.
\end{defn}

\begin{xmpl}
\label{xmpl.canonical.idiom}
The canonical example of a $\lambda$-idiom is the \deffont{syntactic idiom} $\lamtrm$; 
$\lambda$-terms up to $\alpha$-equivalence, with their natural substitution, application, and $\lambda$-actions.
\end{xmpl}

\begin{nttn}
\begin{itemize*}
\item
We write $a_\idiom$ for $\tf{atm}_\idiom(a)$, or just $a$.
\item
We write $x[a\ssm u]$ for $x[a\ssm u]_\idiom$.
\item
We write $xy$ for $x\app_\idiom y$.
\item
We write $\lam{a}x$ for $\lambda_\idiom a.x$.
\end{itemize*}
The conditions of Definition~\ref{defn.idiom} can now be written using Corollary~\ref{corr.stuff}, thus:
$$
\begin{array}{r@{\ }r@{\ }l}
b\#x\limp& \lam{a}x=&\lam{b}(b\ a)\act x
\\
b\#u\limp& (\lam{b}x)[a\ssm u]=&\lam{a}(x[a\ssm u])
\\
&(xy)[a\ssm u]=&x[a\ssm u]\,y[a\ssm u]
\end{array}
$$
Note that it follows already from axiom \rulefont{\sigma a} in Figure~\ref{fig.nom.sigma} that $a_\idiom[a\ssm u]=u$.
\end{nttn}

\begin{nttn}
In what follows, what the variables, substitution, application, and $\lambda$ of an idiom $\idiom$ have to be, will always be clear.
For the rest of this section fix some $\lambda$-idiom $\idiom$.
\end{nttn}

\begin{nttn}
We call elements of $|\idiom|$ \deffont{phrases}.
We let $s$ and $t$ range over phrases in $|\idiom|$, and also $u$ and $v$ range over phrases in $|\idiomprg|$.
\end{nttn}

\begin{rmrk}
Phrases of a $\lambda$-idiom `look like' terms of $\lambda$-syntax up to $\alpha$-equivalence, inasmuch as they must support variables, substitution, a binary operator which we suggestively call application, and a variable-abstractor which we suggestively call $\lambda$.
We noted in Example~\ref{xmpl.canonical.idiom} that $\idiom$ might be the syntactic idiom, which is precisely $\lambda$-syntax.

However, we do not insist that phrases \emph{be} $\lambda$-terms; they need not even be syntax.
They just have to support nominal algebraic models of variables, substitution, application and a $\lambda$-abstraction.
Nothing about the constructions that follow immediately below depends on $\idiom$ being syntactic.
\end{rmrk}

\begin{defn}
\label{defn.lambda.reduction.theory}
Suppose $\idiom$ is a $\lambda$-idiom.
Call a preorder $\somerel$ on phrases \deffont{compatible} when
for all $s,s',t,t'\in|\idiom|$ and $u\in|\idiomprg|$:\footnote{$\somerel$ being a preorder means precisely that it is transitive and reflexive.
(A partial order is an antisymmetric preorder.)}
\begin{enumerate*}
\item
\label{item.somerel.app}
If $s\somerel s'$ and $t\somerel t'$ then $st\somerel s't'$.
\item
\label{item.somerel.lam}
If $s\somerel s'$ then $\lam{a}s\somerel\lam{a}s'$.
\item
\label{item.somerel.ssm}
If $s\somerel s'$ then $s[a\ssm u]\somerel s'[a\ssm u]$.
\item
\label{item.somerel.beta}
$(\lam{a}s)u\somerel s[a\ssm u]$.
This is \deffont{$\beta$-reduction}.
\item
\label{item.somerel.eta}
If $a$ is not free in $s$ then $s\somerel \lam{a}(sa)$.
This is \deffont{$\eta$-expansion}.
\item
\label{item.somerel.equivar}
If $s\somerel s'$ then $\pi\act s\somerel\pi\act s'$, or following Definition~\ref{defn.equivariant}: $\somerel$ is \emph{equivariant}.\footnote{See the discussion in Subsection~\ref{subsect.cartesian.product}.
Another description of this condition is that $\mathcal R$ is an element of $\nompow(\lamtrm\times\lamtrm)$ (Subsection~\ref{subsect.finsupp.pow}) with $\supp(\mathcal R)=\varnothing$.}
\end{enumerate*}
Then we write:
\begin{itemize*}
\item
A \deffont{$\lambda$-reduction theory} is a compatible preorder on an idiom $\idiom$.
\item
A \deffont{$\lambda$-equality theory} is a compatible equivalence relation on $\idiom$.\footnote{An equivalence relation is a symmetric preorder, so a $\lambda$-equality theory is, as we expect, a symmetric $\lambda$-reduction theory.}
\end{itemize*}
$\Pi$ will range over $\lambda$-reduction theories.
\end{defn}

\begin{defn}
\label{defn.rew.equ}
Call a pair $s\to t$ a \deffont{($\lambda$-)reduction axiom}.
We let $\mathcal T$ range over sets of reduction axioms.
\begin{itemize*}
\item
Write $\f{rew}(\mathcal T)$ for the least $\lambda$-reduction theory $\Pi$ such that $\mathcal T\subseteq\Pi$.
\item
Write $\f{equ}(\mathcal T)$ for the least $\lambda$-equality theory $\Pi$ such that $\mathcal T\subseteq\Pi$.
\end{itemize*}
\end{defn}

\begin{nttn}
\label{nttn.Pi.cent}
Suppose $\mathcal T$ is a set of reduction axioms.
Then:
\begin{itemize*}
\item
We may write $s\arrowp{\mathcal T} t$ or $\mathcal T\cent  s{\to} t$ 
for $(s\to t)\in\f{rew}(\mathcal T)$.
\item
We may write $s\eqarrowp{\mathcal T} t$ or $\mathcal T\cent s=t$ 
for $(s\to t)\in\f{equ}(\mathcal T)$.
\end{itemize*}
\end{nttn}

We conclude with an easy technical lemma:
\begin{lemm}
\label{lemm.T.Pi}
Suppose $\mathcal T$ is a set of reduction axioms and write $\Pi{=}\f{rew}(\mathcal T)$.
Then:
\begin{itemize*}
\item
$\f{rew}(\mathcal T)=\f{rew}(\Pi)$.
\item
$\mathcal T\cent s{\to}t$ if and only if $\Pi\cent s{\to}t$.
\end{itemize*}
\end{lemm}
\begin{proof}
Direct from Definition~\ref{defn.rew.equ} and Notation~\ref{nttn.Pi.cent}.
\end{proof}

\subsection{A sound denotation for the $\lambda$-calculus}

\emph{Any} $\ns D\in\indiapp$ has the structure of $\tall$, $\app$, and $\ppa$, so we can immediately interpret the $\lambda$-calculus in $\ns D$.
Lo and behold, the interpretation is sound.
This is Definition~\ref{defn.ddenot} and Theorem~\ref{thrm.lambda.soundness}.

The denotation we obtain is \emph{absolute}, meaning that a variable/atom $a$ is interpreted `as itself'---there is no valuation.
Slightly more formally, a denotation is absolute when variable symbols in the syntax map to fixed entities in the denotation.
In the case of this paper, $a$ (more precisely: $a_\idiom$) is interpreted as $\prg a$ (more precisely: $\prg_{\ns D} a_{\ns D^\prg}$, see Notation~\ref{nttn.impredicative.D}).

The role of a valuation is played by the $\sigma$-action.
If we have some $x\in|\ns D|$ and want to `evaluate' any $a$ in it to become $u$, then we just apply $[a\sm u]$.
This nominal approach to valuations using $\sigma$-algebras is \emph{more general} than the usual Tarski denotation based on valuations; to see why, see the discussion in \cite[Remark~8.18]{gabbay:semooc}.

\begin{defn}
\label{defn.ddenot}
Suppose $\ns D\in\indiapp$.
Define a \deffont{denotation} of $\lambda$-terms by the rules in Figure~\ref{fig.ddenot} ($\tlam$ is from Notation~\ref{nttn.lambda}; $\app$ is from Definition~\ref{defn.sub.sets.pp}),\footnote{In the case for $\ddenot{a}$, $a_{\ns D^\prg}$ is the copy of $a$ in the termlike $\sigma$-algebra $\ns D^\prg$ (Definition~\ref{defn.term.sub.alg}) and $\prg_{\ns D}$ is the function
mapping $\ns D^\prg$ to $\ns D$ (see Definition~\ref{defn.D.impredicative} and Notation~\ref{nttn.impredicative.D}). We have written $\prg_{\ns D} a_{\ns D^\prg}$ as just $\prg a$, but here we prefer the more careful notation.

Of course, if we wanted to be \emph{really} careful we would also mention that $a_{\ns D^\prg}$ is itself shorthand for $\tf{atm}_{\ns D^\prg}(a)$ from Definition~\ref{defn.term.sub.alg}.
But the reader probably is not interested in that high level of pedantry, and may even be confused by it, so we will not labour the point further.
}
and:
\begin{itemize*}
\item
Write $\ns D\ment s\leq t$ when $\ddenot{s}\leq\ddenot{t}$.
\item
Write $\ns D\ment\mathcal T$ when $\ns D\ment s\leq t$ for every $(s\to t)\in \mathcal T$. 
\item
Write $\mathcal T\ment s\leq t$ when $\Forall{\ns D{\in}\indiapp}\bigl(\ns D{\ment}\mathcal T\limp \ns D{\ment} s\leq t\bigr)$.
\end{itemize*}
\end{defn}

\begin{figure}
$$
\begin{array}{r@{\ }l}
\ddenot{a}=&\prg_{\ns D} a_{\ns D^\prg}
\\
\ddenot{\lam{a}s}=&\tlam a.\ddenot{s}
\\
\ddenot{s's}=&\ddenot{s'}\app\ddenot{s}
\end{array}
$$
\caption{Denotation of $\lambda$-terms}
\label{fig.ddenot}
\end{figure}

\begin{rmrk}
Suppose $\ns D\in\indiapp$.
Recall from Definition~\ref{defn.FOLeq} that $\ns D^\prg$ is the termlike $\sigma$-algebra over which substitution in $\ns D$ is defined, and recall that (since $\ns D\in\indiapp$ is \emph{impredicative}; see Definition~\ref{defn.D.impredicative}) we assume a $\sigma$-algebra morphism $\prg_{\ns D}$ from $\ns D^\prg$ to $\ns D$.

Recall from Notation~\ref{nttn.impredicative.D} that we write $\prg\ns D$ for the sets image of $\prg_{\ns D}$, i.e. $\prg\ns D=\{\prg_{\ns D} u\mid u\in\ns D^\prg\}\subseteq|\ns D|$, and recall that we call this image the \deffont{programs} of $\ns D$.
\end{rmrk}

\begin{defn}
\label{defn.replete}
Call $\ns D\in\indiapp$ \deffont{replete} if $\prg\ns D$ is closed under application and $\lambda$.
That is:
\begin{itemize*}
\item
If $x,y\in\prg\ns D$ then $x\app y\in\prg\ns D$.
\item
If $x\in\prg\ns D$ then $\tlam a.x\in\prg\ns D$.
\end{itemize*}
\end{defn}

\begin{rmrk}
Note that $\prg a\in\prg\ns D$ is a fact, where $\prg a$ is shorthand for $\prg_{\ns D}a_{\ns D^\prg}$.
If $\ns D$ is replete then programs are closed under taking variables, application, or $\lambda$-abstraction, and intuitively this tells us the following:
\begin{quote}
If $\ns D$ is replete then its programs include denotations for all $\lambda$-terms.
\end{quote}
This intuition is exactly the notion of repleteness used in \cite{gabbay:simcks} (we called it \emph{faithful} there, but that terminology clashes with faithfulness of functors in category theory).
In this paper we are using nominal techniques, so we can give a name-based semantic treatment of $\lambda$, so that Definition~\ref{defn.replete} can be more abstract than it needed to be in \cite{gabbay:simcks}, and it needs make no explicit mention of $\lambda$-term syntax.
\end{rmrk}

\begin{rmrk}
Definition~\ref{defn.replete} is needed for Lemma~\ref{lemm.ddenot.sigma}.
In any case, we are most interested in $\ns D$ that are replete, since we are interested in models of the $\lambda$-calculus and we would expect $\lambda$-terms to denote programs.
\end{rmrk}

So if $\ns D$ is replete then Definition~\ref{defn.ddenot} generates programs, which can be substituted for in $\ns D$, and we can express Lemma~\ref{lemm.ddenot.sigma}:
\begin{lemm}
\label{lemm.ddenot.sigma}
Suppose $\ns D\in\indiapp$ is replete.
Then $\ddenot{s}[a\sm \ddenot{u}]=\ddenot{s[a\ssm u]}$.

($\ddenot{u}$ always exists, but repleteness ensures that $\ddenot{u}\in\prg\ns D$ so that the substitution $[a\sm\ddenot{u}]$ also exists.)
\end{lemm}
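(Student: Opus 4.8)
The plan is to prove the equation by structural induction on the phrase $s$, using the clauses of Definition~\ref{defn.ddenot} together with the compatibility of the $\sigma$-action with $\app$ (axiom \rulefont{\sigma\app} from Definition~\ref{defn.FOLeq.pp}) and with $\tlam$ (Lemma~\ref{lemm.tlam.sigma}). Before the induction I would record two preliminary facts. First, by repleteness (Definition~\ref{defn.replete}) and an easy induction on phrases, $\ddenot{t}\in\prg\ns D$ for every phrase $t$ (the atom case is immediate, the $\app$ and $\tlam$ cases use the two closure conditions of Definition~\ref{defn.replete}); in particular $\ddenot{u}\in\prg\ns D$, so $\ddenot{u}$ is of the form $\prg_{\ns D}w$ and the substitution $[a\sm\ddenot{u}]$ is the $\sigma$-action of $\ns D$ used with this $w\in|\ns D^\prg|$ — this is exactly the point the parenthetical in the statement is flagging. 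Second, by conservation of support (part~\ref{item.conservation.of.support} of Theorem~\ref{thrm.no.increase.of.supp}) applied to the equivariant map $\ddenot{-}$, we get $\supp(\ddenot{u})\subseteq\supp(u)=\fa(u)$, which will license the $\alpha$-renamings below.

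For the base case $s=b$ an atom, split on whether $b=a$. If $b=a$, then $\ddenot{a}[a\sm\ddenot{u}]=(\prg_{\ns D}a_{\ns D^\prg})[a\sm\ddenot{u}]=\ddenot{u}$ by Lemma~\ref{lemm.impredicative.sigma.a}, while $\ddenot{a[a\ssm u]}=\ddenot{u}$ since $a[a\ssm u]=u$ already follows from \rulefont{\sigma a} in the idiom. If $b\neq a$, then $a\#\prg_{\ns D}b_{\ns D^\prg}$ (again conservation of support), so \rulefont{\sigma\#} gives $\ddenot{b}[a\sm\ddenot{u}]=\ddenot{b}$, and $b[a\ssm u]=b$ in the idiom, so both sides equal $\ddenot{b}$.

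For the application case $s=s_1s_2$, compute $\ddenot{s_1s_2}[a\sm\ddenot{u}]=(\ddenot{s_1}\app\ddenot{s_2})[a\sm\ddenot{u}]$, distribute the substitution by \rulefont{\sigma\app}, apply the induction hypothesis to $s_1$ and $s_2$, and reassemble via clause~\ref{item.idiom.app} of Definition~\ref{defn.idiom} to obtain $\ddenot{(s_1[a\ssm u])(s_2[a\ssm u])}=\ddenot{(s_1s_2)[a\ssm u]}$. For the abstraction case $s=\lam{b}s_1$, first $\alpha$-rename so that $b\#a,u$ (hence $b\#\ddenot{u}$ by the support fact), then use Lemma~\ref{lemm.tlam.sigma} to get $(\tlam b.\ddenot{s_1})[a\sm\ddenot{u}]=\tlam b.(\ddenot{s_1}[a\sm\ddenot{u}])$, apply the induction hypothesis to $s_1$, and use the second clause of Definition~\ref{defn.idiom} (valid since $b\#u$) to identify $\lam{b}(s_1[a\ssm u])$ with $(\lam{b}s_1)[a\ssm u]$.

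The algebra is short and routine; the one genuinely delicate point is the preliminary step, namely making precise what $[a\sm\ddenot{u}]$ denotes and verifying that repleteness really does place $\ddenot{u}$ in the image of $\prg_{\ns D}$ so that Lemma~\ref{lemm.impredicative.sigma.a} is applicable in the base case. Once that is pinned down, every remaining side-condition is a freshness condition discharged by conservation of support, and the three inductive clauses match the three clauses of the denotation.
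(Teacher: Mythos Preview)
Your proof is correct and follows essentially the same structural induction as the paper, using Lemma~\ref{lemm.impredicative.sigma.a} for the atom case, \rulefont{\sigma\app} for application, and Lemma~\ref{lemm.tlam.sigma} with $\alpha$-renaming for abstraction. You are in fact more careful than the paper in two places: you explicitly split the atom base case into $b=a$ versus $b\neq a$ (the paper elides the latter), and you spell out the preliminary induction showing repleteness forces $\ddenot{u}\in\prg\ns D$, which the paper relegates to the parenthetical in the statement.
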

\begin{proof}
By induction on $s$.
\begin{itemize*}
\item
\emph{The case of $a$.}\quad
By Definition~\ref{defn.ddenot} $\ddenot{a}=\prg_{\ns D} a_{\ns D^\prg}$.
We use Lemma~\ref{lemm.impredicative.sigma.a}.
\item
\emph{The case of $\lam{b}s$.}\quad
Renaming if necessary assume $b\#u$.
We reason as follows:
$$
\begin{array}{r@{\ }l@{\qquad}l}
\ddenot{(\lam{b}s)[a\ssm u]}
=&
\ddenot{\lam{b}(s[a\ssm u])}
&\text{Fact of $\lambda$-terms, }b\#u
\\
=&
\tlam b.\ddenot{s[a\ssm u]}
&\text{Definition~\ref{defn.ddenot}}
\\
=&
\tlam b.(\ddenot{s}[a\sm \ddenot{u}])
&\text{ind. hyp.}
\\
=&
(\tlam b.\ddenot{s})[a\sm \ddenot{u}]
&\text{Lemma~\ref{lemm.tlam.sigma}}
\\
=&
\ddenot{\lam{b}s}[a\sm\ddenot{u}]
&\text{Definition~\ref{defn.ddenot}}
\end{array}
$$
In the use of Lemma~\ref{lemm.tlam.sigma} above we know $b\#\ddenot{u}$ by Theorem~\ref{thrm.no.increase.of.supp} since $b\#u$.
\item
\emph{The case of $s's$.}\quad
Routine using the inductive hypothesis and \rulefont{\sigma\app} from Figure~\ref{fig.app.compatible} (Definition~\ref{defn.FOLeq.pp}).
\qedhere\end{itemize*}
\end{proof}

Recall the notation 
$\mathcal T\cent s\to t$ from Notation~\ref{nttn.Pi.cent}, applied here to the idiom $\lamtrm$ ($\lambda$-terms).
Recall the notation 
$\mathcal T\ment s\leq t$ from Definition~\ref{defn.ddenot}.
\begin{thrm}[Soundness]
\label{thrm.lambda.soundness}
Suppose $\mathcal T$ is a set of reduction axioms.
Then
\begin{frameqn}
\mathcal T\cent s\to t
\quad\text{implies}\quad \mathcal T\ment s\leq t.
\end{frameqn}
\end{thrm}
(The reverse implication also holds; see Theorem~\ref{thrm.Pi.completeness}.)
\begin{proof}
Suppose $\ns D\in\indiapp$ is replete and suppose $\ns D\ment\mathcal T$ (Definition~\ref{defn.ddenot}).
We consider the rules defining a compatible relation on $\lambda$-terms (Definition~\ref{defn.lambda.reduction.theory}):
\begin{enumerate*}
\item
\emph{If $s\somerel s'$ and $u\somerel u'$ then $su\somerel s'u'$.}\quad
We use Lemma~\ref{lemm.app.leq}.
\item
\emph{If $s\somerel s'$ then $\lam{a}s\somerel\lam{a}s'$.}\quad
We use Lemmas~\ref{lemm.tall.monotone} and~\ref{lemm.app.leq}.
\item
\emph{If $s\somerel s'$ then $s[a\ssm u]\somerel s'[a\ssm u]$.}\quad
We use Lemmas~\ref{lemm.ddenot.sigma} and~\ref{lemm.sigma.monotone}.
\item
\emph{$(\lam{a}s)t\somerel s[a\ssm t]$.}\quad
We use Lemma~\ref{lemm.ddenot.sigma} and part~1 of Proposition~\ref{prop.lambda.beta.eta}.
\item
\emph{If $a$ is not free in $s$ then $s\somerel \lam{a}(sa)$.}\quad
If $a\#s$ then by Theorem~\ref{thrm.no.increase.of.supp} also $a\#\ddenot{s}$.
We use part~2 of Proposition~\ref{prop.lambda.beta.eta}.
\item
\emph{If $s\somerel s'$ then $\pi\act s\somerel \pi\act s'$.}\quad
From Theorem~\ref{thrm.equivar}.
\qedhere\end{enumerate*}
\end{proof}

\subsection{Interlude: axiomatising the $\lambda$-calculus in nominal algebra}
\label{subsect.interlude.lambda}

Some words on where we are and where we are going.

Nominal algebra considers equality over nominal sets.\footnote{It is descended from nominal rewriting, which considers rewriting over nominal terms \cite{gabbay:nomr,gabbay:nomr-jv}.}
It was introduced in two papers \cite{gabbay:capasn,gabbay:oneaah} where it was applied to axiomatise to substitution and first-order logic respectively.\footnote{The papers wrote axioms and proved them sound and complete. So we really did check that the axioms do what one would expect them to do; no more and no less.}
Both applications feature $\alpha$-equivalence and freshness side-conditions, which are of course just what nominal sets were developed to model, so this was natural.

See \cite{gabbay:newaas-jv} or see \cite{gabbay:fountl,gabbay:nomtnl} for surveys.

In \cite{gabbay:lamcna,gabbay:nomalc} nominal algebra was applied to the $\lambda$-calculus, extending an incomplete axiomatisation from \cite{gabbay:nomr-jv}---Henkin style models of such axioms were considered in \cite{gabbay:nomhss} and found to have some interesting properties.
In particular the axiomatisation is sound and complete---so the axioms below \emph{really do} axiomatise the $\lambda$-calculus;
and this proof, in greatly strengthened form, has become the duality, soundness, and completeness results of the current paper.

So an axiomatisation of the $\lambda$-calculus is implicit in this paper.
The reader could extract it by tracing through Notation~\ref{nttn.lambda} and the axioms of $\indiapp$.
We do not have to write out this theory to prove soundness in this paper, because the notion of $\lambda$-calculus we use in this paper is the standard one based on $\lambda$-term syntax and reduction.

Yet the axiomatisation is there in the background, and
for the reader's convenience it might be illuminating to write it out.

Consider a nominal set $\ns X$.

We assume equivariant functions $\tf{atm}:\mathbb A\to\ns X$ and $\tf{sub}:\ns X\times\mathbb A\times\ns X\to\ns X$ and impose the axioms of a termlike $\sigma$-algebra from Figure~\ref{fig.nom.sigma}:
\begin{tab0}
\rulefont{\sigma a} && a[a\sm x]=&x
\\
\rulefont{\sigma id} && x[a \sm a]=&x
\\
\rulefont{\sigma\#} &a\#x\limp& x[a \sm u]=&x
\\
\rulefont{\sigma\alpha}&b\#x\limp&x[a \sm u]=&((b\;a)\act x)[b \sm u]
\\
\rulefont{\sigma\sigma} &a\#v\limp & x[a \sm u][b \sm v]=&x[b \sm v][a \sm u[b \sm v]]
\end{tab0}
Here we sugar $\tf{atm}(a)$ to just $a$ and $\tf{sub}(x,a,u)$ to just $x[a\sm u]$.

Next we assume equivariant functions $\tf{app}:\ns X\times\ns X\to \ns X$ and $\tf{lam}:\mathbb A\times\ns X\to\ns X$, and impose the axioms of $\beta$- and $\eta$-equality:
\begin{tab0}
\rulefont{\lambda\alpha} &b\#x\limp&\tlam a.x=&\tlam b.(b\ a)\act x
\\
\rulefont{\beta{=}} && (\tlam a.x)y =& x[a\sm y]
\\
\rulefont{\eta{=}} &a\#x\limp& \tlam a.(xa)=&x
\end{tab0}
Here we sugar $\tf{app}(x,y)$ to $xy$ and $\tf{lam}(a,x)$ to $\tlam a.x$.

A few notes on this axiomatisation:
\begin{itemize*}
\item
The axiomatisation of \cite{gabbay:lamcna,gabbay:nomalc} identified substitution with a $\beta$-reduct.
The axiomatisation above distinguishes substitution and $\beta$-reducts.
This turns out to be important for making the results in this paper work; for more discussion see the Conclusions.
\item
The body of this paper is based on lattices, so we do not \emph{assume} $\beta$- or $\eta$-equality; we only assume $\beta$-reduction and $\eta$-expansion.
(The equalities might happen to be valid anyway, see for instance $\lambda$-equality theories in Definition~\ref{defn.lambda.reduction.theory}.)
This is also important for this paper.
\end{itemize*}
In summary the axiomatisation above is a special case of a generalisation of \cite{gabbay:lamcna,gabbay:nomalc}, which is itself a complete extension of a rewrite theory from \cite{gabbay:nomr-jv,gabbay:capasn}.

The axiomatisation above is also what we are aiming for, and models of $\lambda$-equality theories constructed in Section~\ref{sect.lambda.representation} are models of the axioms above, though the demands of our main results are such that we do not phrase matters in that specific form.

\section{Representation of the $\lambda$-calculus in $\inspectapp$}
\label{sect.lambda.representation}

In Section~\ref{sect.lambda.calculus} we showed how any $\ns D\in\indiapp$ / (dually) any $\ns T\in\inspectapp$, gives a sound abstract / (dually) concrete interpretation of the untyped $\lambda$-calculus.

The next step is to prove completeness.
This is Theorem~\ref{thrm.Pi.completeness}.
The method is to construct a nominal spectral space $\points_\Pi$ out of a $\lambda$-reduction theory $\Pi$, in which only those subset inclusions are valid that are insisted on by $\Pi$.

$\points_\Pi$ is a rich structure.
Notable technical definitions and results are Definition~\ref{defn.afreshp} and Proposition~\ref{prop.fresh.point} ($a\fresh p$ and its equivalence with $a\#p$), completeness under small-supported sets unions and intersections (Proposition~\ref{prop.points.bigcap}), the $\sigma$-action on points (Definition~\ref{defn.qasmu}) and its two characterisation in Subsection~\ref{subsect.two.chars.lsm}---one in terms of the now-ubiquitous $\new$.

For this section, fix the following data:
\begin{itemize*}
\item
Fix a $\lambda$-idiom $\idiom$ (Definition~\ref{defn.idiom}).
\item
Fix a $\lambda$-reduction theory $\Pi$ on $\idiom$ (Definition~\ref{defn.lambda.reduction.theory}).
\end{itemize*}
$s$, $s'$, $s''$, $t$, $u$, and $v$ will range over elements of $|\idiom|$.

\subsection{$\Pi$-points and $\sigma$-freshness}
\label{subsect.Pi.points}

Given a subset $p\subseteq|\idiom|$, we can suggest two notions of `$a$ is fresh for $p$':
\begin{itemize*}
\item
One inherited from nominal techniques: $\New{b}(b\ a)\act p=p$.
We write this $a\#p$.
\item
One inherited from our syntactic intuitions: if $s\in p$ then $\Forall{u{\in}|\idiomprg|}s[a\ssm u]\in p$.
We will make this formal in Definition~\ref{defn.afreshp} and write it $a\fresh p$.
\end{itemize*}
A \emph{point} is then defined to be a set of phrases for which these two notions of freshness coincide.
This is Definition~\ref{defn.pi.point}.
The interesting part is condition~\ref{item.point.lam}, which is not obviously just ``$a\#p\liff a\fresh p$''---for this, see 
Proposition~\ref{prop.fresh.point} and Lemma~\ref{lemm.fresh.point.check},
which work from surprisingly little in the way of assumptions.

We conclude with Proposition~\ref{prop.points.bigcap}, an important result asserting that $\points_\Pi$ is complete for small-supported diagrams (i.e. small-supported sets of points have an intersection that is also a point).
This has some useful consequences: for instance it makes possible the use of $\bigcap$ in Definitions~\ref{defn.some.ops}, and~\ref{defn.qasmu}, and also in Corollary~\ref{corr.lsm.char.1}.

\begin{defn}
\label{defn.afreshp}
Suppose $p\subseteq|\idiom|$.\footnote{$p$ need not be small-supported, but it will turn out that we are most interested in the case where it is.}
Define $a\fresh p$ by:
\begin{frameqn}
a\fresh p\quad\text{when}\quad \Forall{s{\in}|\idiom|}\Forall{u{\in}|\idiomprg|}(s\in p\limp s[a\ssm u]\in p)
\end{frameqn}
If $a\fresh p$ we say that $a$ is \deffont{$\sigma$-fresh} for $p$.
\end{defn}

\begin{rmrk}
\label{rmrk.rewrite.afreshp}
We rewrite Definition~\ref{defn.afreshp} twice:
\begin{enumerate*}
\item
$a\fresh p$ when
$\Forall{s{\in}|\idiom|}(s\in p\limp\Forall{u{\in}|\idiomprg|}s[a\ssm u]\in p)$.
\item
$a\fresh p$ when $\Forall{u{\in}|\idiomprg|}p\subseteq p[u\ms a]$.
\end{enumerate*}
$p[u\ms a]$ is from Definition~\ref{defn.p.action}; but see also Definition~\ref{defn.lam.perm.amgis} and Lemma~\ref{lemm.point.fresh.ms} below.
\end{rmrk}

Recall the notion of a \emph{$\lambda$-reduction theory} from Definition~\ref{defn.lambda.reduction.theory}.
\begin{frametxt}
\begin{defn}
\label{defn.pi.point}
Suppose $\Pi$ is a $\lambda$-reduction theory.
Call a subset $p\subseteq|\idiom|$ a \deffont{$\Pi$-point} when:
\begin{enumerate*}
\item
\label{item.point.to}
$\Forall{s,t{\in}|\idiom|}(s\in p\land s\arrowp{\Pi} t)\limp t\in p$.
We call $p$ \deffont{closed under $\Pi$}.
\item
\label{item.point.lam}
$\New{a} a\fresh p$.
We call $p$ \deffont{$\sigma$-supported}.
\end{enumerate*}
Write $|\points_\Pi|$ for the set of $\Pi$-points.
\end{defn}
\end{frametxt}

\begin{rmrk}
\label{rmrk.varnothing.is.a.point}
We will give $\points_\Pi$ an $\amgis$-algebra structure in Proposition~\ref{prop.pumsa.point} and Corollary~\ref{corr.points.Pi.amgis}.
We consider examples of points, and sets that are not points:
\begin{enumerate}
\item
$\varnothing$ (the empty set) is a point.

This will be useful in Proposition~\ref{prop.points.sigma.bpp} to prove that the set of all points is compact and covered by $\{\pp\varnothing\}$ (the $\pp{\text{-}}$ notation will be defined in Definition~\ref{defn.qasmu}).
\item
$\idiom$ is a point.
\item
Take $\idiom$ to be the syntactic idiom $\lamtrm$ from Example~\ref{xmpl.canonical.idiom} and take $\Pi$ to be $\f{rew}(\varnothing)$ from Definition~\ref{defn.rew.equ}---that is, the minimal $\lambda$-reduction theory (Definition~\ref{defn.lambda.reduction.theory}), containing just $\beta$-reduction and $\eta$-expansion.

Take $p$ to be  $\{a, \lam{b}ab, \lam{c}\lam{b}abc, \dots, b, \lam{b'}bb', \dots\}$
the set of all $\eta$-expansions of variable symbols.

This is not a point, because by Theorem~\ref{thrm.no.increase.of.supp} $a\#p$ and yet $a[a\ssm \lam{a}a]=\lam{a}a\not\in p$.
In the terminology of condition~\ref{item.point.lam} of Definition~\ref{defn.pi.point}, this $p$ is not $\sigma$-supported.
\end{enumerate}
\end{rmrk}

\begin{lemm}
\label{lemm.fresh.point.check}
Suppose $p\subseteq|\idiom|$ and suppose $p$ satisfies condition~\ref{item.point.lam} of Definition~\ref{defn.pi.point}.
Then:
\begin{enumerate*}
\item
$a\fresh p$ implies $\New{b}(b\ a)\act p=p$.
\item\label{point.finsupp}
As a corollary, if $p\in|\points_\Pi|$ then $p$ has small support.
\end{enumerate*}
\end{lemm}
\begin{proof}
Suppose $a\fresh p$ and consider any $b\fresh p$; by condition~\ref{item.point.lam} of Definition~\ref{defn.pi.point} there are cosmall many such $b$.
We will prove $(b\ a)\act p=p$.
The permutation action is pointwise (Definition~\ref{defn.pointwise.action}) so it suffices to show that for any $s\in|\idiom|$, $s\in p$ implies $(b\ a)\act s\in p$.

By Lemma~\ref{lemm.sm.to.pi} $(b\ a)\act s = s[a\ssm c][b\ssm a][c\ssm b]$ for fresh $c$ (so $c\#s$ and $c$ is distinct from $a$ and $b$, and by condition~\ref{item.point.lam} of Definition~\ref{defn.pi.point} $c\fresh p$).
Now $s\in p$ and $a\fresh p$ so $s[a\ssm c]\in p$.
Also $b\fresh p$ so $s[a\ssm c][b\ssm a]\in p$.
Also $c\fresh p$ so $s[a\ssm c][b\ssm a][c\ssm b]\in p$, and we are done.

For the corollary, from part~1 of this result $\New{a}\New{b}(b\ a)\act p=p$.
We use Lemma~\ref{lemm.old.school}.
\end{proof}

We discussed at the introduction to this subsection why Proposition~\ref{prop.fresh.point} is interesting:
\begin{prop}
\label{prop.fresh.point}
Suppose $p\in|\points_\Pi|$.
Then
$$
a\#p\quad\text{if and only if}\quad a\fresh p.
$$
\end{prop}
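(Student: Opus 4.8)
The right-to-left implication $a\fresh p\limp a\#p$ is exactly Lemma~\ref{lemm.fresh.point.check}, which holds for \emph{any} subset $p\subseteq|\idiom|$ and uses only the $\sigma$-algebra axioms (via Lemma~\ref{lemm.sm.to.pi}) together with $\alpha$-conversion (Corollary~\ref{corr.stuff}). So the real content is the left-to-right direction: assuming $a\#p$ for a $\Pi$-point $p$, I must show $a\fresh p$, i.e. that $s\in p$ and $u\in|\idiomprg|$ imply $s[a\ssm u]\in p$. This is where condition~\ref{item.point.lam} of Definition~\ref{defn.pi.point} gets used: $p$ is finitely $\sigma$-supported, so $\New{b}\,b\fresh p$ — there are cofinitely many atoms $b$ that are $\sigma$-fresh for $p$.

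\textbf{Main argument.} Suppose $a\#p$ and $s\in p$. Since $s$ has finite support and $p$ is finitely $\sigma$-supported, I can pick a fresh atom $b$ with $b\#s$, $b\#p$, and $b\fresh p$ (using Theorem~\ref{thrm.New.equiv} to combine `$b$ fresh for finitely-supported data' with the cofinite condition $\New{b}\,b\fresh p$). Now the plan is to route the substitution $s[a\ssm u]$ through $b$ using the $\sigma$-axioms. Since $b\fresh p$ and $s\in p$, we get $s[b\ssm u]\in p$ for every $u\in|\idiomprg|$. The idea is then to rewrite $s[a\ssm u]$ as something obtained from an element of $p$ by applying a $\sigma$-fresh substitution $[b\ssm\cdot]$ and/or a permutation fixing $p$. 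Concretely: $b\#s$ so by Lemma~\ref{lemm.sub.alpha} $s = ((b\ a)\act s)[b\ssm a]$, hence by \rulefont{\sigma\sigma}-style reasoning (Lemma~\ref{lemm.sm.to.pi} or a direct calculation) $s[a\ssm u]$ can be expressed in terms of $(b\ a)\act s$ and $[b\ssm u]$. More precisely, applying \rulefont{\sigma\alpha} (valid in the termlike $\sigma$-algebra $\idiom$) gives $s[a\ssm u] = ((b\ a)\act s)[b\ssm u]$ because $b\#s$. Now $(b\ a)\act s \in (b\ a)\act p = p$ since $a\#p$ and $b\#p$ imply $(b\ a)\act p = p$ by Corollary~\ref{corr.stuff}. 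Finally $b\fresh p$ applied to $(b\ a)\act s\in p$ yields $((b\ a)\act s)[b\ssm u]\in p$, i.e. $s[a\ssm u]\in p$, as desired.

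\textbf{Expected obstacle.} The delicate point is the interplay of the two freshness notions on a $p$ that need \emph{not} have finite support in the ambient sense relevant to picking $b$ — but in fact condition~\ref{item.point.lam} is precisely designed so that the set of $\sigma$-fresh atoms is cofinite, hence nonempty after excluding the finitely many atoms in $\supp(s)\cup\supp(u)$; and $a\#p$ is a genuine statement (a well-defined predicate) even when $p$ lacks finite support, so $(b\ a)\act p = p$ still follows from Corollary~\ref{corr.stuff}(1) once $b\#p$ and $a\#p$. I expect the only thing needing care is bookkeeping: ensuring $b$ is simultaneously $\#s$, $\#u$, $\#p$, and $\fresh p$, which Theorem~\ref{thrm.New.equiv} handles since each of `$b\#s$', `$b\#u$', `$b\#p$' is a cofinite condition (the first two by finite support of $s,u$; the third because $a\#p$ gives, via Corollary~\ref{corr.stuff}(3), some fresh $b'\#p$, and more directly any $b\notin\supp$-data works — though here we must be slightly careful since $p$ may lack finite support, we use instead that $\New{b}\,b\#p$ is \emph{not} automatic; rather we get what we need from $\New{b}\,b\fresh p$ combined with Lemma~\ref{lemm.fresh.point.check}, which gives $\New{b}\,b\#p$). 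So the clean order is: (1) invoke Lemma~\ref{lemm.fresh.point.check} for the easy direction and to deduce $\New{b}\,b\#p$ from $\New{b}\,b\fresh p$; (2) fix $b$ fresh for $s,u$ with $b\fresh p$ and $b\#p$; (3) apply \rulefont{\sigma\alpha} to get $s[a\ssm u] = ((b\ a)\act s)[b\ssm u]$; (4) use $a\#p$, $b\#p$ and Corollary~\ref{corr.stuff} to get $(b\ a)\act s\in p$; (5) apply $b\fresh p$ to conclude.
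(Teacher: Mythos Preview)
Your proof is correct. Both you and the paper handle the direction $a\fresh p\Rightarrow a\#p$ by citing Lemma~\ref{lemm.fresh.point.check}. For the converse, the paper's argument is a single line: condition~\ref{item.point.lam} of Definition~\ref{defn.pi.point} gives $\New{b}\,b\fresh p$, and Theorem~\ref{thrm.New.equiv} (the some/any property of $\new$, applicable once $p$ is seen to have finite support) immediately converts this to $\forall a\,(a\#p\Rightarrow a\fresh p)$. Your argument is the explicit, by-hand unpacking of that step: rather than invoking equivariance of the predicate $b\fresh p$ abstractly, you choose a concrete witness $b$ with $b\fresh p$ and $b\#p,s$, use \rulefont{\sigma\alpha} to rewrite $s[a\ssm u]$ as $((b\ a)\act s)[b\ssm u]$, and then combine $(b\ a)\act p=p$ with $b\fresh p$ to finish. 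This is more elementary---it avoids the black-box Theorem~\ref{thrm.New.equiv} and makes the renaming mechanism visible---at the cost of a few extra lines; the paper's version is terser but leans on the reader trusting the some/any machinery.
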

\begin{proof}
First, suppose $a\#p$.
By condition~\ref{item.point.lam} of Definition~\ref{defn.pi.point} $\New{b}b\fresh p$ and so by Theorem~\ref{thrm.new.equiv} (since by Lemma~\ref{lemm.fresh.point.check}(2) $p$ has small support)
we have $a\fresh p$.

Conversely, if $a\fresh p$ then by Lemma~\ref{lemm.fresh.point.check}(1) $\New{b}(b\ a)\act p=p$ and so using Corollary~\ref{corr.stuff}(\ref{stuff.freshness.criterion}) (since by Lemma~\ref{lemm.fresh.point.check}(2) $p$ has small support) $a\#p$.
\end{proof}

We conclude the subsection with Proposition~\ref{prop.points.bigcap}, a useful result with an attractive proof:
\begin{prop}
\label{prop.points.bigcap}
Suppose $\mathcal P\subseteq|\points_\Pi|$ is small-supported.
Then
$$
\bigcap \mathcal P\in\points_\Pi
\quad\text{and}\quad
\bigcup \mathcal P\in\points_\Pi .
$$
In words: small-supported intersections and unions of points, are points.
\end{prop}
\begin{proof}
We check the conditions of Definition~\ref{defn.pi.point} for $\bigcap\mathcal P\in\points_\Pi$.
\begin{enumerate}
\item
Condition~\ref{item.point.to} is by a routine calculation.
\item
For condition~\ref{item.point.lam}, suppose $a$ is fresh (so $a\#\mathcal P$, because we assumed $\mathcal P$ has small support).
To prove $a\fresh\bigcap\mathcal P$ we must show $s\in\bigcap\mathcal P$ implies $\Forall{u{\in}|\idiomprg|}s[a\ssm u]{\in}\bigcap\mathcal P$.

Consider $s\in\bigcap\mathcal P$ and any $p\in\mathcal P$, so $s\in p$.
We want to prove $\Forall{u{\in}|\idiomprg|}s[a\ssm u]{\in}p$.
We cannot do this directly since we do not necessarily know that $a\#p$.\footnote{We assumed $\mathcal P\subseteq|\points_\Pi|$ is small-supported, not \emph{strictly} small-supported.
See Lemma~\ref{lemm.pow.not.true}.
}

So choose fresh $c$ (so $c\#\mathcal P,p,s$).
Since $p\in\mathcal P$ also $(c\ a)\act p\in (c\ a)\act\mathcal P\stackrel{\text{Cor~\ref{corr.stuff}}}{=}\mathcal P$.
So $s\in (c\ a)\act p$ and therefore $(c\ a)\act s\in p$.
By assumption $c\#p$ so by Proposition~\ref{prop.fresh.point} $c\fresh p$ so $\Forall{u}((c\ a)\act s)[c\ssm u]\in p$.
We $\alpha$-convert, and conclude that $\Forall{u{\in}|\idiomprg|}s[a\ssm u]\in p$ as required.

The reasoning for $\bigcup\mathcal P\in\points_\Pi$ is almost identical.
\qedhere\end{enumerate}
\end{proof}

\subsection{Constructing $\Pi$-points, and their amgis-algebra structure}

Recall the notion of idiom $\idiom$ from Definition~\ref{defn.idiom}, the notion of $\lambda$-reduction theory $\Pi$ from Definition~\ref{defn.lambda.reduction.theory}, and the notion of point $p$ from Definition~\ref{defn.pi.point}.

\begin{defn}
\label{defn.suparrpi}
Suppose $s\in|\idiom|$.
Define the \deffont{($\Pi$-)principal filter}
$s\uparrowp{\Pi}$ by
$$
s\uparrowp{\Pi}=\{s'\in|\idiom|\mid s\arrowp{\Pi} s'\} .
$$
\end{defn}

\begin{lemm}
\label{lemm.suparrowp.point}
If $s\in|\idiom|$ then $s\uparrowp{\Pi}$ is a point.
\end{lemm}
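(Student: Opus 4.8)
The plan is to verify the two defining conditions of a $\Pi$-point from Definition~\ref{defn.pi.point} for the set $p=s\uparrowp{\Pi}$.

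For condition~\ref{item.point.to} (closure under $\Pi$): suppose $s'\in s\uparrowp{\Pi}$, so $s\arrowp{\Pi}s'$, and suppose $s'\arrowp{\Pi}t$. Since $\arrowp{\Pi}$ is a preorder (Definition~\ref{defn.lambda.reduction.theory}) it is transitive, so $s\arrowp{\Pi}t$, that is $t\in s\uparrowp{\Pi}$. This step is immediate.

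For condition~\ref{item.point.lam} ($\New{a}\, a\fresh(s\uparrowp{\Pi})$): I claim that $a\fresh(s\uparrowp{\Pi})$ holds for every $a\#s$. Since $\idiom$ is a nominal set the support $\supp(s)$ is finite, so $\{a\mid a\#s\}$ is cofinite and hence the claim gives $\New{a}\, a\fresh(s\uparrowp{\Pi})$ directly. To prove the claim, fix $a\#s$, take any $s'\in s\uparrowp{\Pi}$ (so $s\arrowp{\Pi}s'$) and any $u\in|\idiomprg|$; I must show $s'[a\ssm u]\in s\uparrowp{\Pi}$, i.e.\ $s\arrowp{\Pi}s'[a\ssm u]$. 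By condition~\ref{item.somerel.ssm} of Definition~\ref{defn.lambda.reduction.theory} (compatibility of $\arrowp{\Pi}$ with the $\sigma$-action) we obtain $s[a\ssm u]\arrowp{\Pi}s'[a\ssm u]$, and by \rulefont{\sigma\#} (valid since $\idiom$ is in particular a termlike $\sigma$-algebra and $a\#s$) we have $s[a\ssm u]=s$, so $s\arrowp{\Pi}s'[a\ssm u]$ as required.

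No real obstacle arises here: the result is a warm-up for the construction of $\points_\Pi$. In particular we do not even need the substantial Proposition~\ref{prop.fresh.point}, because we establish $\sigma$-freshness directly for cofinitely many atoms, which is exactly what condition~\ref{item.point.lam} asks for. The only points of care are invoking transitivity of the preorder for the first condition, and using \rulefont{\sigma\#} to discharge the substitution on the left-hand side of $s\arrowp{\Pi}s'[a\ssm u]$ for the second.
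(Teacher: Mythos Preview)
Your proof is correct and follows essentially the same approach as the paper: transitivity of $\arrowp{\Pi}$ for condition~\ref{item.point.to}, and for condition~\ref{item.point.lam} showing $a\fresh s\uparrowp{\Pi}$ whenever $a\#s$ via condition~\ref{item.somerel.ssm} of Definition~\ref{defn.lambda.reduction.theory} together with \rulefont{\sigma\#}.
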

\begin{proof}
We check the conditions of Definition~\ref{defn.pi.point}.
\begin{enumerate}
\item
By transitivity if $s''\in s\uparrowp{\Pi}$, meaning $s\arrowp{\Pi}s''$, and $s''\arrowp{\Pi}s'$ then $s\arrowp{\Pi}s'$.
\item
Suppose $a\#s$ and suppose $s'\in s\uparrowp{\Pi}$.
By condition~\ref{item.somerel.ssm} of Definition~\ref{defn.lambda.reduction.theory} also $s[a\ssm u]\arrowp{\Pi} s'[a\ssm u]$ and by \rulefont{\sigma\#} $s[a\ssm u]=s$.
It follows that $s'[a\ssm u]\in s\uparrowp{\Pi}$ for every $u$.
Thus $s\uparrowp{\Pi}$ is $\sigma$-supported.
\qedhere\end{enumerate}
\end{proof}

\begin{defn}
\label{defn.lam.perm.amgis}
Give $p\subseteq|\idiom|$ a permutation action and an $\amgis$-action
following Definitions~\ref{defn.p.action} and~\ref{defn.F}:
\begin{frameqn}
\pi\act p=\{\pi\act r\mid r\in p\}
\qquad
p[u\ms a]=\{s\mid s[a\ssm u]\in p\}
\quad (u\in|\idiomprg|)
\end{frameqn}
Write $\points_\Pi$ for (what will prove will be) the $\amgis$-algebra with underlying set $|\points_\Pi|$ and the permutation and $\amgis$-actions defined above.
\end{defn}

Our notation suggests that $[u\ms a]$ is an $\amgis$-action.
This is true, but we must prove it: this is Proposition~\ref{prop.pumsa.point} and Corollary~\ref{corr.points.Pi.amgis}.

\begin{prop}
\label{prop.pumsa.point}
Suppose $u\in|\idiomprg|$.
Then $p\in|\points_\Pi|$ implies $p[u\ms a]\in|\points_\Pi|$.

In words: if $p$ is a $\Pi$-point then so is $p[u\ms a]$.
\end{prop}
\begin{proof}
We check the conditions of Definition~\ref{defn.pi.point}, freely using Proposition~\ref{prop.sigma.iff}:
\begin{enumerate*}
\item
\emph{Suppose $s[a\ssm u]\in p$ and $s\arrowp{\Pi} s'$.}

By condition~\ref{item.somerel.ssm} of Definition~\ref{defn.lambda.reduction.theory} $s[a\ssm u]\arrowp{\Pi} s'[a\ssm u]$.
By assumption $p$ is closed under $\Pi$, and so $s'[a\ssm u]\in p$.
\item
\emph{Suppose $b$ is fresh (so $b\#p,a,u$) and consider $s[a\ssm u]\in p$.}

By assumption $p$ is $\sigma$-supported and $b\#p$ so by Proposition~\ref{prop.fresh.point} $b\fresh p$.
Therefore
$$
\Forall{v}s[a\ssm u][b\ssm v]\in p
$$
Now $b\#u$, so by \rulefont{\sigma\sigma} for any $v'\in|\idiomprg|$,\ $s[b\ssm v'][a\ssm u]=s[a\ssm u][b\ssm v'[a\ssm u]]$.
It follows (taking all $v$ of the form $v'[a\ssm u]$ above) that $\Forall{v}s[b\ssm v][a\ssm u]\in p$, and so $\Forall{v}s[b\ssm v]\in p[u\ms a]$ as required.
\qedhere\end{enumerate*}
\end{proof}

Recall $\points_\Pi$ from Definition~\ref{defn.pi.point}:
\begin{corr}
\label{corr.points.Pi.amgis}
$\points_\Pi$ is indeed an $\amgis$-algebra.
\end{corr}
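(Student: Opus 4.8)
The goal is to prove Corollary~\ref{corr.points.Pi.amgis}: that $\points_\Pi$, with the underlying set $|\points_\Pi|$ of $\Pi$-points and the permutation and $\amgis$-actions of Definition~\ref{defn.lam.perm.amgis}, is an $\amgis$-algebra in the sense of Definition~\ref{defn.bus.algebra}. The plan is to simply assemble the pieces already in hand. An $\amgis$-algebra is a tuple $(|\ns P|,\act,\ns P^\prg,\tf{amgis}_{\ns P})$ where $|\ns P|$ carries a permutation action, $\ns P^\prg$ is a termlike $\sigma$-algebra, $\tf{amgis}_{\ns P}$ is an equivariant map $(\ns P\times\ns U\times\mathbb A)\equivarto\ns P$, and the single axiom \rulefont{\amgis\sigma} from Figure~\ref{fig.amgis} holds.

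First I would fix the data: take the underlying set to be $|\points_\Pi|$, take $\points_\Pi^\prg$ to be the fixed idiom $\idiom$ (which is a termlike $\sigma$-algebra by Definition~\ref{defn.idiom}), and take the permutation and $\amgis$-actions as defined in Definition~\ref{defn.lam.perm.amgis}. The first thing to check is that these actions are well-defined, i.e. that they really land in $|\points_\Pi|$: closure of $|\points_\Pi|$ under the permutation action is immediate since a permutation $\pi$ acting pointwise carries a $\Pi$-point to a $\Pi$-point (using equivariance of $\arrowp{\Pi}$ and of the $\sigma$-action on $\idiom$, or just Corollary~\ref{corr.point.finsupp} together with conservation of support from Theorem~\ref{thrm.equivar}), and closure under $[u\ms a]$ is exactly Lemma~\ref{lemm.pumsa.point}. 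So the $\amgis$-action is a genuine function from $\points_\Pi\times\idiom\times\mathbb A$ to $\points_\Pi$.

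Next I would check that $\act$ is a genuine group action (inherited pointwise from the permutation action on $\idiom$, hence $\id\act p=p$ and $\pi\act(\pi'\act p)=(\pi\circ\pi')\act p$ by routine pointwise calculation) and that the $\amgis$-action is equivariant. Equivariance says $\pi\act(p[u\ms a])=(\pi\act p)[\pi(u)\sm\pi(a)]$; this follows from Proposition~\ref{prop.sigma.iff} together with equivariance of the $\sigma$-action on $\idiom$, or can be read off from part~2 of Theorem~\ref{thrm.equivar} applied to the defining formula $p[u\ms a]=\{s\mid s[a\ssm u]\in p\}$, exactly as in Proposition~\ref{prop.amgis.2}. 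Finally, the axiom \rulefont{\amgis\sigma}: assuming $a\#v$, we must show $p[v\ms b][u\ms a]=p[u[b\ssm v]\ms a][v\ms b]$, and this is a chain of equivalences $s\in p[v\ms b][u\ms a]\liff s[a\ssm u][b\ssm v]\in p\liff s[b\ssm v][a\ssm u[b\ssm v]]\in p\liff s\in p[u[b\ssm v]\ms a][v\ms b]$, using Proposition~\ref{prop.sigma.iff} twice and the axiom \rulefont{\sigma\sigma} for the idiom $\idiom$ in the middle step. This is literally the same computation as the verification of \rulefont{\amgis\sigma} in the proof of Proposition~\ref{prop.amgis.2}.

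In fact the cleanest route is to observe that all of this is subsumed by Proposition~\ref{prop.amgis.2}: $\idiom$ is a termlike $\sigma$-algebra hence a $\sigma$-algebra over itself, so $\powamgis(\idiom)$ is an $\amgis$-algebra whose underlying set is all subsets of $|\idiom|$, with precisely the pointwise actions of Definition~\ref{defn.p.action}, which coincide with those of Definition~\ref{defn.lam.perm.amgis}. Then $\points_\Pi$ is the sub-$\amgis$-algebra carried by $|\points_\Pi|\subseteq|\powamgis(\idiom)|$, and the only thing needing separate argument is that this subset is closed under the permutation and $\amgis$-actions — which is, respectively, a one-line support/equivariance remark and Lemma~\ref{lemm.pumsa.point}. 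I do not expect any genuine obstacle here; the one mild subtlety, already flagged in Definition~\ref{defn.lam.perm.amgis}, is that the closure-under-$[u\ms a]$ fact is not trivial and rests on Proposition~\ref{prop.fresh.point} (that $a\#p\liff a\fresh p$ for $\Pi$-points) inside the proof of Lemma~\ref{lemm.pumsa.point}, so that is the load-bearing lemma being cited rather than anything to be proved afresh.
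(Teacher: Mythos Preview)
Your proposal is correct and takes essentially the same approach as the paper: the paper's proof is the two-line version of your ``cleanest route,'' citing Lemma~\ref{lemm.pumsa.point} for closure of $|\points_\Pi|$ under $[u\ms a]$ and then invoking Proposition~\ref{prop.amgis.2} to inherit the $\amgis$-algebra structure from $\powamgis(\idiom)$. Your longer direct verification is sound but redundant, as you yourself observe.
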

\begin{proof}
Proposition~\ref{prop.pumsa.point} proves that $[u\ms a]$ maps points to points.
We use Proposition~\ref{prop.amgis.2}.
\end{proof}

We conclude with a technical result which will be useful for Lemma~\ref{lemm.lam.point.fresh}:
\begin{lemm}
\label{lemm.point.fresh.ms}
If $p\in|\points_\Pi|$ and $a\#p$ and $u\in|\idiomprg|$ then $p\subseteq p[u\ms a]$.
\end{lemm}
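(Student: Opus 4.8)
The plan is to unpack the definition of $p[u\ms a]$ from Definition~\ref{defn.lam.perm.amgis} and reduce the claim $p\subseteq p[u\ms a]$ to the statement $a\fresh p$, which we already know holds by Proposition~\ref{prop.fresh.point} since we are assuming $a\#p$.

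First I would recall that by Definition~\ref{defn.lam.perm.amgis}, $p[u\ms a]=\{s\mid s[a\ssm u]\in p\}$, so $s\in p[u\ms a]$ holds precisely when $s[a\ssm u]\in p$. Hence to prove $p\subseteq p[u\ms a]$ it suffices to show that $s\in p$ implies $s[a\ssm u]\in p$, for every $s\in|\idiom|$. But this is exactly the defining condition of $a\fresh p$ in Definition~\ref{defn.afreshp} (instantiated at the given $u$, and noting that $a\fresh p$ quantifies over \emph{all} $u\in|\idiomprg|$, which certainly covers our fixed one). So the argument is: since $p\in|\points_\Pi|$ and $a\#p$, Proposition~\ref{prop.fresh.point} gives $a\fresh p$; then for any $s\in p$ and the fixed $u\in|\idiomprg|$ we get $s[a\ssm u]\in p$, i.e. $s\in p[u\ms a]$; therefore $p\subseteq p[u\ms a]$.

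There is essentially no obstacle here — the lemma is a one-line consequence of Proposition~\ref{prop.fresh.point} once the notational layers ($p[u\ms a]$ versus $a\fresh p$ versus $a\#p$) are peeled back. If anything, the only thing to be slightly careful about is the direction of the quantifiers: Definition~\ref{defn.afreshp} bundles the "for all $u$" inside, whereas the statement fixes a particular $u$, so one should note that the universally quantified conclusion trivially specialises. I would write this as a short two- or three-sentence proof, citing Proposition~\ref{prop.fresh.point} for the equivalence $a\#p\liff a\fresh p$ and Definition~\ref{defn.lam.perm.amgis} (or Proposition~\ref{prop.sigma.iff}) for the unfolding of the $\amgis$-action.
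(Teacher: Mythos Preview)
Your proposal is correct and matches the paper's own proof essentially line for line: the paper also invokes Proposition~\ref{prop.fresh.point} to pass from $a\#p$ to $a\fresh p$, unpacks Definition~\ref{defn.afreshp} to get $s\in p\limp s[a\ssm u]\in p$, and then uses the characterisation of $p[u\ms a]$ (the paper cites Proposition~\ref{prop.sigma.iff}, you cite Definition~\ref{defn.lam.perm.amgis}; these are interchangeable here).
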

\begin{proof}
Suppose $a\#p$.
By Proposition~\ref{prop.fresh.point} $a\fresh p$, so by Definition~\ref{defn.afreshp}
$s\in p$ implies $s[a\ssm u]\in p$.
The result follows by Proposition~\ref{prop.sigma.iff}.
\end{proof}

\subsection{The left adjoint $p[a\lsm u]$ to the amgis-action $p[u\ms a]$}

Recall from the start of the Section that we fixed some $\lambda$-reduction theory $\Pi$.

By Proposition~\ref{prop.points.bigcap} a small-supported intersection of points is a point.
This suggests that we could build a left adjoint to $[u\ms a]$ on points by taking a suitable intersection.
We do this in Definition~\ref{defn.qasmu}.

This left adjoint turns out to be very well-behaved.
It has interesting characterisations (Subsection~\ref{subsect.two.chars.lsm}) which give us strong proof-methods for reasoning on it.
Furthermore it is a $\sigma$-action; this is Proposition~\ref{prop.points.Pi.sigma.algebra}.
So $\points_\Pi$ is both an $\amgis$-algebra and a $\sigma$-algebra.\footnote{This may well be a special case of a general result deserving its own paper.  Here, we are simply grateful and press on.}
We shall see in Definition~\ref{defn.tall.p} how this $\sigma$-action enables us to interpret $\tall$ on points.

Even better, the $\sigma$-action commutes with $\pp{\text{-}}$ from Definition~\ref{defn.some.more.ops}, which is key to how points are used to generate compact sets; this is part~1 of Theorem~\ref{thrm.lam.pp.sm}.
Thus we can study the behaviour of substitution on open sets by understanding the behaviour of the left adjoint to the $\amgis$-algebra action of points.

In short, most of the rest of this section depends on Definition~\ref{defn.qasmu} and the results that follow it in this subsection.

\subsubsection{Basic definition}

Recall from Definition~\ref{defn.lam.perm.amgis} the $\amgis$-action on points $p[u\ms a]$.
We can build a left adjoint for it:
\begin{defn}
\label{defn.qasmu}
Given $p\subseteq|\idiom|$ with small support and $u\in|\idiomprg|$, define $p[a\lsm u]$ by:
\begin{frameqn}
p[a\lsm u]=\bigcap\{q\in|\points_\Pi| \mid \New{c}((c\ a)\act p\subseteq q[u\ms c]) \}
\end{frameqn}
\end{defn}
By Proposition~\ref{prop.points.bigcap}, Definition~\ref{defn.qasmu} does indeed define a point.

\begin{rmrk}
Definition~\ref{defn.qasmu} looks like a repeat of Definition~\ref{defn.sub.sets},
but they are not quite the same because $p$ and $q$ above have the same type (subsets of $|\idiom|$) whereas in Definition~\ref{defn.sub.sets} $p$ and $X$ have different types (a point and a set of points, respectively).
\end{rmrk}

Points here have small support, so the proof of Proposition~\ref{prop.qasmu.iff} is (almost) a replay of the proof of part~2 of Proposition~\ref{prop.amgis.iff} (only for subset inclusion instead of sets membership).
\begin{prop}
\label{prop.qasmu.iff}
If $a\#u,q$ then $p\subseteq q[u\ms a]$ if and only if $p[a\lsm u]\subseteq q$.
\end{prop}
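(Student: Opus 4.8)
The goal is the adjunction $p\subseteq q[u\ms a]\liff p[a\lsm u]\subseteq q$ under the hypothesis $a\#u,q$. The natural strategy is to prove the two implications separately, exploiting the fact that, since $p\in|\points_\Pi|$ has finite support (Corollary~\ref{corr.point.finsupp}), Definition~\ref{defn.qasmu} simplifies when the bound $c$ can be chosen fresh for $p$.

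First I would handle the direction $p[a\lsm u]\subseteq q \limp p\subseteq q[u\ms a]$. Here the key observation is that, by Proposition~\ref{prop.fresh.point} and Lemma~\ref{lemm.point.fresh.ms} (applied with a fresh $c$, so $c\#p$), we have $(c\ a)\act p\subseteq p$ and more to the point $p\subseteq p[\,\cdot\,]$-style inclusions; but the cleanest route is to check that $q$ itself is one of the sets in the intersection defining $p[a\lsm u]$. Indeed, $p\subseteq q[u\ms a]$ is precisely the condition (modulo the $\new c$ and the renaming $(c\ a)$, which we may remove since $a\#u,q$ using Corollary~\ref{corr.stuff}, exactly as in part~2 of Proposition~\ref{prop.amgis.iff}) for $q$ to appear in the defining family. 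So the plan is: given $p[a\lsm u]\subseteq q$, it suffices to show $p\subseteq q[u\ms a]$; conversely for the \emph{other} implication, from $p\subseteq q[u\ms a]$ we get that $q$ is a member of $\{r\in|\points_\Pi|\mid \New{c}((c\ a)\act p\subseteq r[u\ms c])\}$ (using $a\#u,q$ to collapse the $\new c$-quantified renamed statement to the un-renamed one), whence $p[a\lsm u]=\bigcap(\dots)\subseteq q$.

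So in fact both implications reduce to the single equivalence: \emph{``$q$ is in the defining family of $p[a\lsm u]$'' if and only if ``$p\subseteq q[u\ms a]$''}, valid when $a\#u,q$. This equivalence is essentially Proposition~\ref{prop.amgis.iff}(2) transported from sets-membership to subset-inclusion: unwinding, $\New{c}((c\ a)\act p\subseteq q[u\ms c])$ is equivalent, by Corollary~\ref{corr.stuff} (since $(c\ a)\act u=u$ and $(c\ a)\act q=q$ when $c,a\#u,q$, applying $(c\ a)$ to both sides), to $p\subseteq q[u\ms a]$. Once that is established, the $\limp$ direction follows because $q$ is then one of the $r$'s intersected, and the $\Longleftarrow$ direction follows because every $r$ in the family satisfies $r\supseteq$ nothing useful directly — rather, one shows: if $p\subseteq q[u\ms a]$ then $p[a\lsm u]\subseteq q$ since $q$ is in the family; and if $p[a\lsm u]\subseteq q$, we must instead argue that $p\subseteq q[u\ms a]$ by showing $p\subseteq p[a\lsm u][u\ms a]\subseteq q[u\ms a]$, where the first inclusion $p\subseteq p[a\lsm u][u\ms a]$ is the "unit" of the adjunction and the second uses monotonicity of $[u\ms a]$ (immediate from Proposition~\ref{prop.sigma.iff}).

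The main obstacle I anticipate is the bookkeeping around the $\new c$-quantifier and the renaming $(c\ a)$: since points need not have $a$ or $c$ fresh a priori for \emph{every} member of the family, one has to be careful to apply Corollary~\ref{corr.stuff} only to $u$ and $q$ (which do have the relevant freshness), and to use Theorem~\ref{thrm.New.equiv} to pass between "for some fresh $c$" and "for all fresh $c$". The unit inequality $p\subseteq p[a\lsm u][u\ms a]$ itself will need a short argument: unpack $p[a\lsm u][u\ms a]=\{s\mid s[a\ssm u]\in p[a\lsm u]\}$ and show every $s\in p$ lands there, using that $p[a\lsm u]$ is the intersection of all $r$ with $\New{c}((c\ a)\act p\subseteq r[u\ms c])$ together with the characterisation above; alternatively, once the family-membership equivalence is in hand, $p\subseteq p[a\lsm u][u\ms a]$ is just the case $q=p[a\lsm u]$ of that equivalence (noting $a\#u$ and, after an $\alpha$-rename via Lemma~\ref{lemm.supp.nua}-type reasoning, $a\#p[a\lsm u]$), so no extra work is needed. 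This is routine but fiddly, and is where I would spend the care.
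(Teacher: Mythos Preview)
Your plan is correct and matches the paper's approach: both reduce the defining condition $\New{c}((c\ a)\act p\subseteq q[u\ms c])$ to $p\subseteq q[u\ms a]$ by applying $(c\ a)$ to both sides and using Corollary~\ref{corr.stuff} with $a\#u,q$. The paper is terser---it simply asserts the equivalence ``$p[a\lsm u]\subseteq q$ iff $\New{c}(c\ a)\act p\subseteq q[u\ms c]$'' as following from Definition~\ref{defn.qasmu}---whereas you correctly observe that only the $\Leftarrow$ direction is immediate (since $q$ is then in the defining family), and the $\Rightarrow$ direction needs the unit $p\subseteq p[a\lsm u][u\ms a]$ plus monotonicity of $[u\ms a]$.

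One warning: your ``alternative'' route to the unit---instantiating the family-membership equivalence at $q=p[a\lsm u]$---is circular. That equivalence only tells you $p\subseteq p[a\lsm u][u\ms a]$ iff $p[a\lsm u]\in F$, and neither side is yet established. Stick to your direct argument: for $s\in p$ and any $r\in F$, pick $c$ fresh for $s,p,u,r$ so that $(c\ a)\act s\in(c\ a)\act p\subseteq r[u\ms c]$, whence $((c\ a)\act s)[c\ssm u]\in r$; by \rulefont{\sigma\alpha} this equals $s[a\ssm u]$, so $s[a\ssm u]\in\bigcap F=p[a\lsm u]$, giving $s\in p[a\lsm u][u\ms a]$.
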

\begin{proof}
Suppose $a\#u,q$.
From Definition~\ref{defn.qasmu}, $p[a\lsm u]\subseteq q$ if and only if
$\New{c}(c\ a)\act p\subseteq q[u\ms c]$.
By Corollary~\ref{corr.stuff} $(c\ a)\act u=u$ and $(c\ a)\act q=q$, so (applying $(c\ a)$ to both sides of the subset inclusion) this is
if and only if
$\New{c}p\subseteq q[u\ms a]$, that is: $p\subseteq q[u\ms a]$.
\end{proof}

\noindent The interested reader can also find two corollaries of Proposition~\ref{prop.qasmu.iff} in Subsection~\ref{subsect.additional.lemmas}.

\begin{lemm}
\label{lemm.lam.point.alpha}
\begin{enumerate*}
\item
If $b\#p$ then $p[a\lsm u]=((b\ a)\act p)[b\lsm u]$.
\item
As a corollary, if $a\#u$ then $a\#p[a\lsm u]$.
\end{enumerate*}
\end{lemm}
\begin{proof}
Suppose $c$ is fresh (so $c\#p$).
By Theorem~\ref{thrm.new.equiv} it suffices to show that $(c\ a)\act p\subseteq q[u\ms c]$ if and only if $(c\ b)\act((b\ a)\act p)\subseteq q[u\ms c]$.
It would suffice to prove that $(c\ a)\act p=(c\ b)\act((b\ a)\act p)$.
This follows from Corollary~\ref{corr.stuff} and our assumption that $b\#p$.

The corollary follows using Corollary~\ref{corr.stuff}.
\end{proof}

\subsubsection{Two characterisations of $p[a\lsm u]$}
\label{subsect.two.chars.lsm}

\begin{defn}
\label{defn.p.ssm}
If $p\subseteq|\idiom|$ is small-supported and $u\in|\idiomprg|$ define
\begin{frameqn}
p[a\ssm u]=\bigcup\{s[a\ssm u]\uparrowp{\Pi}\mid s\in p\} .
\end{frameqn}
\end{defn}
By Proposition~\ref{prop.points.bigcap} (since $p$ is small-supported) and Lemma~\ref{lemm.suparrowp.point}, $p[a\ssm u]$ is a point.

For the rest of this subsection, we assume $p,q\in|\points_\Pi|$.

\begin{lemm}
\label{lemm.lsm.ssm.iff}
If $a\#u,q$ then $p[a\lsm u]\subseteq q$ if and only if $p[a\ssm u]\subseteq q$.

As a corollary, if $a\#u$ then $p[a\ssm u]\subseteq p[a\lsm u]$.
\end{lemm}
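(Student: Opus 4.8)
\textbf{Proof plan for Lemma~\ref{lemm.lsm.ssm.iff}.}
The plan is to prove the two-way implication by showing that $p[a\ssm u]$ satisfies exactly the universal property that characterises $p[a\lsm u]$, namely: assuming $a\#u,q$, we have $p\subseteq q[u\ms a]$ if and only if $p[a\ssm u]\subseteq q$. Combining this with Proposition~\ref{prop.qasmu.iff} (which says $p\subseteq q[u\ms a]$ iff $p[a\lsm u]\subseteq q$, under the same freshness hypothesis) immediately yields $p[a\lsm u]\subseteq q \liff p[a\ssm u]\subseteq q$, which is the statement. So the real work is to establish that characterising equivalence for $p[a\ssm u]$.

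For the direction $p\subseteq q[u\ms a]\limp p[a\ssm u]\subseteq q$: unfold $p[a\ssm u]=\bigcup\{s[a\ssm u]\uparrowp{\Pi}\mid s\in p\}$ from Definition~\ref{defn.p.ssm}. Take $s\in p$; then $s\in q[u\ms a]$, which by the definition of the $\amgis$-action (Definition~\ref{defn.lam.perm.amgis}, using Proposition~\ref{prop.sigma.iff}) means $s[a\ssm u]\in q$. Since $q$ is a $\Pi$-point it is closed under $\arrowp{\Pi}$ (condition~\ref{item.point.to} of Definition~\ref{defn.pi.point}), so $s[a\ssm u]\uparrowp{\Pi}\subseteq q$. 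Taking the union over all $s\in p$ gives $p[a\ssm u]\subseteq q$. Note this direction does not even need the freshness hypothesis. For the converse $p[a\ssm u]\subseteq q\limp p\subseteq q[u\ms a]$: take $s\in p$. Then $s[a\ssm u]\in s[a\ssm u]\uparrowp{\Pi}\subseteq p[a\ssm u]\subseteq q$ (reflexivity of $\arrowp{\Pi}$ puts $s[a\ssm u]$ in its own $\uparrowp{\Pi}$-set), so $s[a\ssm u]\in q$, i.e. $s\in q[u\ms a]$ by Proposition~\ref{prop.sigma.iff}. Hence $p\subseteq q[u\ms a]$. This direction also does not strictly need the freshness hypothesis, but it is harmless to carry it since the lemma is stated with it.

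For the corollary ($a\#u\limp p[a\ssm u]\subseteq p[a\lsm u]$): choose a fresh $c$ (so $c\#p,u$, and in particular $c\#a$). By Lemma~\ref{lemm.lam.point.alpha} $a\#p[a\lsm u]$ since $a\#u$, so we may take $q=p[a\lsm u]$ in the main equivalence just proved (we have $a\#u$ and $a\#p[a\lsm u]$). Since trivially $p[a\lsm u]\subseteq p[a\lsm u]$, Proposition~\ref{prop.qasmu.iff} gives $p\subseteq p[a\lsm u][u\ms a]$, and then the ``only if'' direction of the main equivalence gives $p[a\ssm u]\subseteq p[a\lsm u]$. Alternatively, and more directly: from $p[a\lsm u]\subseteq p[a\lsm u]$ and Proposition~\ref{prop.qasmu.iff} we get $p\subseteq p[a\lsm u][u\ms a]$, and applying the main equivalence with $q=p[a\lsm u]$ finishes it.

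The main obstacle, such as it is, is bookkeeping with the freshness side-conditions: one must be careful that the hypothesis $a\#u,q$ in the statement is exactly what Proposition~\ref{prop.qasmu.iff} needs, and that for the corollary the substituted point $p[a\lsm u]$ genuinely has $a$ fresh (which is precisely the corollary of Lemma~\ref{lemm.lam.point.alpha}). There is no genuinely hard step here; the content is that $p[a\ssm u]$ and $p[a\lsm u]$ solve the same adjunction problem, and uniqueness of such solutions (both are intersections/unions of points, hence least upper bounds in the appropriate sense by Proposition~\ref{prop.points.bigcap}) forces them to coincide once we know the inclusion in one direction.
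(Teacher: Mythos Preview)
Your proof is correct and follows essentially the same route as the paper: both reduce the main equivalence to showing that $p[a\ssm u]\subseteq q$ is equivalent to $p\subseteq q[u\ms a]$ (via Proposition~\ref{prop.sigma.iff} and closure of $q$ under $\arrowp{\Pi}$), then invoke Proposition~\ref{prop.qasmu.iff}, and the corollary is handled identically by taking $q=p[a\lsm u]$ and using the freshness corollary of Lemma~\ref{lemm.lam.point.alpha}. Your version is simply more explicit about the two directions of the auxiliary equivalence; the fresh atom $c$ you introduce in the corollary is unnecessary (as you yourself note).
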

\begin{proof}
By Proposition~\ref{prop.qasmu.iff} (since $a\#u,q$) $p[a\lsm u]\in q$ if and only if $p\subseteq q[u\ms a]$.
From Proposition~\ref{prop.sigma.iff} and condition~\ref{item.point.to} of Definition~\ref{defn.pi.point} this happens if and only if $p[a\ssm u]\subseteq q$.

The corollary follows since $p[a\lsm u]\subseteq p[a\lsm u]$ and by Lemma~\ref{lemm.lam.point.alpha}(2) $a\#p[a\lsm u]$.
\end{proof}

\begin{corr}
\label{corr.lsm.char.1}
If $a\#u$ then
\begin{equation}
p[a\lsm u]=\bigcap \{q\mid a\#q\land p[a\ssm u]{\subseteq} q\}.
\tag{\em Characterisation 1}
\end{equation}
\end{corr}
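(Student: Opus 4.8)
The plan is to prove Characterisation~1 by establishing two subset inclusions between $p[a\lsm u]$ and the intersection $\bigcap\{q\mid a\#q\land p[a\ssm u]\subseteq q\}$, using the adjointness result Proposition~\ref{prop.qasmu.iff} and the auxiliary Lemma~\ref{lemm.lsm.ssm.iff}. Throughout, the hypothesis $a\#u$ is in force. Note first that by Lemma~\ref{lemm.lam.point.alpha} (the corollary part) we have $a\#p[a\lsm u]$, and by Lemma~\ref{lemm.lsm.ssm.iff} (the corollary part) $p[a\ssm u]\subseteq p[a\lsm u]$; so $p[a\lsm u]$ is itself one of the $q$ in the indexing set on the right-hand side, which immediately yields the inclusion $\bigcap\{q\mid a\#q\land p[a\ssm u]\subseteq q\}\subseteq p[a\lsm u]$. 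This direction is essentially free.

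For the reverse inclusion $p[a\lsm u]\subseteq \bigcap\{q\mid a\#q\land p[a\ssm u]\subseteq q\}$, I would take an arbitrary point $q\in|\points_\Pi|$ with $a\#q$ and $p[a\ssm u]\subseteq q$ and show $p[a\lsm u]\subseteq q$. Since $a\#u$ and $a\#q$, Lemma~\ref{lemm.lsm.ssm.iff} applies directly: $p[a\ssm u]\subseteq q$ if and only if $p[a\lsm u]\subseteq q$. Hence $p[a\lsm u]\subseteq q$, and since $q$ was arbitrary in the indexing set, $p[a\lsm u]$ is contained in the intersection. Combining the two inclusions gives equality.

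The only point requiring a little care is that the intersection on the right-hand side is over points $q$, so to invoke Lemma~\ref{lemm.lsm.ssm.iff} we genuinely need $q\in|\points_\Pi|$ (which is part of the indexing condition) and we need $p\in|\points_\Pi|$ (assumed at the start of the subsection); also we should check the intersection is nonempty or at least well-defined as a point, which follows because $p[a\lsm u]$ itself lies in the set, and by Proposition~\ref{prop.points.bigcap} a finitely-supported intersection of points is a point. The finite-support hypothesis needed for Proposition~\ref{prop.points.bigcap} is met since each $q$ in the set has finite support (being a $\Pi$-point, by Corollary~\ref{corr.point.finsupp}) and the set is finitely supported by $\supp(p)\cup\supp(u)$ using Theorem~\ref{thrm.no.increase.of.supp} on the specifying predicate. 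I do not anticipate a real obstacle here: the whole statement is a direct packaging of Lemma~\ref{lemm.lsm.ssm.iff} together with the observation that $p[a\lsm u]$ is the least point $q$ with $a\#q$ above $p[a\ssm u]$, and the "hard part" — if any — is merely being careful that the two characterisations use intersections over the same type (points), which Lemma~\ref{lemm.lsm.ssm.iff} handles.
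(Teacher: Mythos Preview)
Your proof is correct and follows essentially the same approach as the paper: both directions use Lemma~\ref{lemm.lsm.ssm.iff} (for the forward inclusion, that $a\#q$ and $p[a\ssm u]\subseteq q$ imply $p[a\lsm u]\subseteq q$; for the reverse, the corollary that $p[a\ssm u]\subseteq p[a\lsm u]$), together with Lemma~\ref{lemm.lam.point.alpha} for $a\#p[a\lsm u]$. Your extra remarks about well-definedness of the intersection as a point are not strictly needed for the set equality, but are harmless.
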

\begin{proof}
If $a\#q$ and $p[a\ssm u]\subseteq q$ then by Lemma~\ref{lemm.lsm.ssm.iff} also $p[a\lsm u]\subseteq q$.
Therefore
$$
p[a\lsm u]\subseteq\bigcap \{q\mid a\#q\land p[a\ssm u] \subseteq q\}.
$$
Furthermore by Lemma~\ref{lemm.lam.point.alpha}(2) $a\#p[a\lsm u]$ and by Lemma~\ref{lemm.lsm.ssm.iff} $p[a\ssm u]\subseteq p[a\lsm u]$.
Therefore
$$
\bigcap \{q\mid a\#q\land p[a\ssm u]\subseteq q\} \subseteq p[a\lsm u].
\qedhere$$
\end{proof}

Recall from Definition~\ref{defn.nua} the notion of $\nw a.p$ (the $\new$-quantifier, for sets).
\begin{lemm}
\label{lemm.nw.on.points}
If $p\in|\points_\Pi|$ then also $\nw a.p\in|\points_\Pi|$.
\end{lemm}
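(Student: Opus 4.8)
The plan is to check the two conditions of Definition~\ref{defn.pi.point} for $\nw a.p$, working from Lemma~\ref{lemm.nua.iff}, which gives us the concrete description $s\in\nw a.p \liff \New b (b\ a)\act s\in p$. First I would verify closure under $\Pi$ (condition~\ref{item.point.to}): suppose $s\in\nw a.p$ and $s\arrowp{\Pi}s'$; then for all but finitely many $b$ we have $(b\ a)\act s\in p$, and by equivariance of $\arrowp\Pi$ (it has empty support, being equivariant in the sense of Definition~\ref{defn.equivariant}) also $(b\ a)\act s\arrowp\Pi (b\ a)\act s'$, so since $p$ is closed under $\Pi$ we get $(b\ a)\act s'\in p$ for all but finitely many $b$, i.e. $s'\in\nw a.p$.

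For condition~\ref{item.point.lam}, I would first note $\supp(\nw a.p)\subseteq\supp(p)\setminus\{a\}$ by Lemma~\ref{lemm.supp.nua}, so in particular $\nw a.p$ is finitely supported and $a\#\nw a.p$. It then remains to show $\New{c} c\fresh(\nw a.p)$. Pick $c$ fresh (so $c\#p$, hence $c\#\nw a.p$, and $c$ distinct from $a$); by Proposition~\ref{prop.fresh.point} applied to $p$ we have $c\fresh p$, meaning $r\in p$ implies $r[c\ssm v]\in p$ for all $v$. I want to deduce $c\fresh(\nw a.p)$: given $s\in\nw a.p$ and $v\in|\idiomprg|$, I must show $s[c\ssm v]\in\nw a.p$, i.e. $\New b (b\ a)\act(s[c\ssm v])\in p$. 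Choose $b$ fresh for $s,v,p$ (and distinct from $a,c$); then $(b\ a)\act(s[c\ssm v]) = ((b\ a)\act s)[c\ssm (b\ a)\act v] = ((b\ a)\act s)[c\ssm v]$ using equivariance of substitution (Remark~\ref{rmrk.what.is.equivar.for.term}-style) and $b,a\#v$ (Corollary~\ref{corr.stuff}); since $s\in\nw a.p$ we have $(b\ a)\act s\in p$ for all but finitely many $b$, and then $c\fresh p$ gives $((b\ a)\act s)[c\ssm v]\in p$, as required.

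The main obstacle I anticipate is the bookkeeping in the last step: we need $c$ fresh for $p$ (to invoke Proposition~\ref{prop.fresh.point}) and then, \emph{inside} the $\New b$, we need $b$ fresh for $s$ and $v$ (so the swap $(b\ a)$ commutes past $[c\ssm v]$ and so the $c\fresh p$ property applies to $(b\ a)\act s$). Since $\nw a.p$ and $p$ are finitely supported but the elements $s$ need not be controlled by $\supp(p)$ (cf. Lemma~\ref{lemm.pow.not.true}), one must be careful to quantify the two fresh atoms in the right order and to note that the set of bad $b$ for a fixed $s,v$ is still finite. Theorem~\ref{thrm.New.equiv} (the some/any property) and Lemma~\ref{lemm.nua.iff} are exactly the tools that make this manageable, and no genuinely new idea beyond them is needed.
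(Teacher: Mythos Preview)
Your approach is the same as the paper's: verify the two conditions of Definition~\ref{defn.pi.point} using the description from Lemma~\ref{lemm.nua.iff}, equivariance of $\arrowp{\Pi}$ for condition~\ref{item.point.to}, and $c\fresh p$ (from Proposition~\ref{prop.fresh.point}) for condition~\ref{item.point.lam}. Condition~\ref{item.point.to} is fine.

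There is one slip in your argument for condition~\ref{item.point.lam}. You write $(b\ a)\act(s[c\ssm v]) = ((b\ a)\act s)[c\ssm v]$, justifying this by ``$b,a\#v$''. But $v\in|\idiomprg|$ is arbitrary, and $a$ was fixed from the outset; you have no control over whether $a\in\supp(v)$. So $(b\ a)\act v$ need not equal $v$.

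The fix is immediate and is exactly what the paper does: equivariance gives $(b\ a)\act(s[c\ssm v]) = ((b\ a)\act s)[c\ssm (b\ a)\act v]$, and since $c\fresh p$ means $r\in p$ implies $r[c\ssm w]\in p$ for \emph{every} $w\in|\idiomprg|$, you may take $w=(b\ a)\act v$ and conclude $((b\ a)\act s)[c\ssm (b\ a)\act v]\in p$. Equivalently (and this is how the paper organises the quantifiers), leave the universal over the substituted term to the inside: from $(b\ a)\act s\in p$ and $c\fresh p$ deduce $\Forall{u}((b\ a)\act s)[c\ssm u]\in p$, and then instantiate $u$ to $(b\ a)\act v$. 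Either way the detour through $a\#v$ is unnecessary, and the rest of your bookkeeping about the order of choosing $c$ and $b$ is sound.
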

\begin{proof}
We check the conditions of Definition~\ref{defn.pi.point}:
\begin{enumerate*}
\item
Suppose $s\in\nw a.p$ and $s\arrowp{\Pi}t$.
Then $\New{b}(b\ a)\act s\in p$.
By condition~\ref{item.somerel.equivar} of Definition~\ref{defn.lambda.reduction.theory} (equivariance)
$(b\ a)\act s\arrowp{\Pi}(b\ a)\act t$ (for any $b$).
It follows by condition~\ref{item.point.to} of Definition~\ref{defn.pi.point} that $\New{b}(b\ a)\act t\in p$, so that $b\in\nw a.p$.
\item
Consider some fresh $c$ (so $c\#p$) and consider $s\in \nw a.p$ and some fresh $b$ (so $b\#p,s$), so that $(b\ a)\act s\in p$.
By condition~\ref{item.point.lam} of Definition~\ref{defn.pi.point},\ $c\fresh p$, so that $\Forall{u}((b\ a)\act s)[c\ssm u]\in p$.
It follows that $\Forall{u}(b\ a)\act (s[c\ssm u])\in p$, and therefore $\Forall{u}s[c\ssm u]\in \nw a.p$.
\qedhere\end{enumerate*}
\end{proof}

Lemma~\ref{lemm.lsm.id} gives a striking characterisation connecting the left adjoint to the $\amgis$-action, the pointwise substitution action, and the $\new$-quantifier for sets (Definitions~\ref{defn.qasmu}, \ref{defn.p.ssm}, and~\ref{defn.nua}):
\begin{lemm}
\label{lemm.lsm.id}
If $a\#u$ then $p[a\ssm u]\subseteq\nw a.(p[a\ssm u])$.
As a corollary, if $a\#u$ then
\begin{equation}
p[a\lsm u]=\nw a.(p[a\ssm u]).
\tag{\em Characterisation 2}
\end{equation}
\end{lemm}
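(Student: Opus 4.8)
The plan is to prove the main claim in two halves: first the stated inclusion $p[a\ssm u]\subseteq\nw a.(p[a\ssm u])$, and then to derive \emph{Characterisation 2} as a corollary by combining this inclusion with \emph{Characterisation 1} (Corollary~\ref{corr.lsm.char.1}).

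First I would prove the inclusion $p[a\ssm u]\subseteq\nw a.(p[a\ssm u])$ under the hypothesis $a\#u$. Fix $s'\in p[a\ssm u]$; unpacking Definition~\ref{defn.p.ssm}, there is $s\in p$ with $s[a\ssm u]\arrowp{\Pi}s'$. I must show $s'\in\nw a.(p[a\ssm u])$, i.e.\ (using Lemma~\ref{lemm.nua.iff}) that $\New{b}(b\ a)\act s'\in p[a\ssm u]$. Choose fresh $b$ (so $b\#p,s,s',u$). Since $a\#u$ and $a\#(s[a\ssm u])$ by Lemma~\ref{lemm.fresh.sub}, and $b$ is fresh, I can rewrite $(b\ a)\act s'$: from $s[a\ssm u]\arrowp{\Pi}s'$ and equivariance of $\Pi$ (Definition~\ref{defn.lambda.reduction.theory}) we get $(b\ a)\act(s[a\ssm u])\arrowp{\Pi}(b\ a)\act s'$; and $(b\ a)\act(s[a\ssm u]) = ((b\ a)\act s)[b\ssm (b\ a)\act u] = ((b\ a)\act s)[b\ssm u]$ using $b\#u$ (so $(b\ a)\act u=u$) and the $\sigma$-action equivariance from Remark~\ref{rmrk.what.is.equivar.for.term}. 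Now $(b\ a)\act s$ need not be in $p$; but $a\#u$ means I can also choose a \emph{different} fresh name: actually the cleanest route is to note $s\in p$, and since $b\#p$ and $p$ is a $\Pi$-point, by Proposition~\ref{prop.fresh.point} $b\fresh p$, hence substituting does not help directly either---so instead I use $a$-freshness of $p[a\ssm u]$. The key observation is that $a\#p[a\ssm u]$ by Lemma~\ref{lemm.fresh.sub}-style reasoning (every generator $s[a\ssm u]\uparrowp{\Pi}$ has $a$ fresh when $a\#u$, by Lemma~\ref{lemm.fresh.sub} and Theorem~\ref{thrm.no.increase.of.supp}), so $p[a\ssm u]$ is $\sigma$-finitely supported with $a\#p[a\ssm u]$, hence by Proposition~\ref{prop.fresh.point} $a\fresh p[a\ssm u]$, and then for fresh $b$, $(b\ a)\act(p[a\ssm u]) = p[a\ssm u]$ by Corollary~\ref{corr.stuff}; since $s'\in p[a\ssm u]$ we get $(b\ a)\act s'\in p[a\ssm u]$, which is exactly $\New{b}(b\ a)\act s'\in p[a\ssm u]$.

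For the corollary I would combine the two facts. By Corollary~\ref{corr.lsm.char.1}, $p[a\lsm u]=\bigcap\{q\mid a\#q\land p[a\ssm u]\subseteq q\}$. On one hand $\nw a.(p[a\ssm u])$ satisfies $a\#\nw a.(p[a\ssm u])$ by Lemma~\ref{lemm.supp.nua}, and contains $p[a\ssm u]$ by the inclusion just proved, so it is one of the $q$'s in the intersection, giving $p[a\lsm u]\subseteq\nw a.(p[a\ssm u])$. On the other hand, $\nw a.(p[a\ssm u])\subseteq p[a\ssm u]$ always holds (if $x\in\nw a.X$ then taking $b=a$---or, more carefully, using that $p[a\ssm u]$ is $a$-fresh so $\nw a.X = X$ when $a\#X$, via Lemma~\ref{lemm.nua.iff} and Theorem~\ref{thrm.New.equiv}); combined with Lemma~\ref{lemm.lam.point.alpha} giving $a\#p[a\lsm u]$ and Lemma~\ref{lemm.lsm.ssm.iff} giving $p[a\ssm u]\subseteq p[a\lsm u]$, we get $\nw a.(p[a\ssm u]) = p[a\ssm u] \subseteq p[a\lsm u]$. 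Hence equality.

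The main obstacle is the bookkeeping in the first half: carefully tracking which names are fresh for which objects, and exploiting that $a\#u$ forces $a$ to be $\sigma$-fresh (hence honestly fresh, via Proposition~\ref{prop.fresh.point}) for $p[a\ssm u]$. Once that $a$-freshness of $p[a\ssm u]$ is established, the $\new$-quantifier collapses ($\nw a.X = X$) and everything is essentially formal; the risk is circular or sloppy use of the some/any property of $\new$ (Theorem~\ref{thrm.New.equiv}) when $p$ itself may lack finite support---but note that here we only apply it to $p[a\ssm u]$ and to individual phrases $s,s'$, all of which do have finite support, so the standard machinery applies.
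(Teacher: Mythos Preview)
Your proof has a genuine gap in the main inclusion. You begin the direct element-wise argument correctly---take $s'\in p[a\ssm u]$ with witness $s\in p$ such that $s[a\ssm u]\arrowp{\Pi}s'$, apply $(b\ a)$ by equivariance of $\Pi$---but then you abandon it because ``$(b\ a)\act s$ need not be in $p$''. This is the wrong thing to worry about: you do not need a new witness in $p$. The point is that $(b\ a)$ \emph{fixes} $s[a\ssm u]$ itself, since $a\#s[a\ssm u]$ (Lemma~\ref{lemm.fresh.sub}, using $a\#u$) and $b\#s[a\ssm u]$ (because $b$ is fresh for $s,u$). So from $s[a\ssm u]\arrowp{\Pi}s'$ and equivariance you get $s[a\ssm u]=(b\ a)\act(s[a\ssm u])\arrowp{\Pi}(b\ a)\act s'$, and the \emph{same} $s\in p$ witnesses $(b\ a)\act s'\in p[a\ssm u]$. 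This is exactly the paper's argument, and you were one observation away from it.

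Your ``cleaner route'' via $a\#p[a\ssm u]$ does not work. The inference ``each generator $s[a\ssm u]\uparrowp{\Pi}$ has $a$ fresh, therefore the union has $a$ fresh'' is invalid: freshness for each element of an indexed union does not give freshness for the union when the indexing itself depends on $a$ (compare $\mathbb A\setminus\{a\}=\bigcup_{b\neq a}\{b\}$). Concretely, take $p=|\idiom|$ and $\Pi$ the minimal $\beta\eta$-theory; then $p[a\ssm u]=\{t\mid a\#t\}$, which has $\supp=\{a\}$. So $a\#p[a\ssm u]$ is simply false in general, and your corollary argument---which passes through $\nw a.(p[a\ssm u])=p[a\ssm u]$---breaks down for the same reason. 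For the corollary the paper instead uses only Lemma~\ref{lemm.supp.nua} and Corollary~\ref{corr.lsm.char.1}: since $a\#\nw a.(p[a\ssm u])$ and $p[a\ssm u]\subseteq\nw a.(p[a\ssm u])$, Characterisation~1 gives $p[a\lsm u]\subseteq\nw a.(p[a\ssm u])$; and for the reverse inclusion, whenever $p[a\ssm u]\subseteq q$ with $a\#q$ one has $\nw a.(p[a\ssm u])\subseteq q$ (since $(c\ a)\act q=q$ for fresh $c$), so in particular $\nw a.(p[a\ssm u])\subseteq p[a\lsm u]$.
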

\begin{proof}
Suppose $s\in p[a\ssm u]$.
This means there is $s'{\in}p$ such that $s'[a\ssm u]\arrowp{\Pi}s$.
By Lemma~\ref{lemm.fresh.sub} $a\#s'[a\ssm u]$ and it follows for any fresh $c$ that $s'[a\ssm u]\arrowp{\Pi}(c\ a)\act s$, so that $(c\ a)\act s\in p[a\ssm u]$.
Thus, $s\in \nw a.(p[a\ssm u])$.

The corollary follows from Lemma~\ref{lemm.supp.nua} and Corollary~\ref{corr.lsm.char.1}.
\end{proof}

\subsubsection{Additional lemmas about the $\sigma$-action as an adjoint}
\label{subsect.additional.lemmas}

Lemmas~\ref{lemm.lsm.right.adjoint} and~\ref{lemm.catchya} describe a unit and counit style interaction between $[a\lsm u]$ and $[u\ms a]$ acting on points.
We will not use these lemmas later---we will use the result they come from, Proposition~\ref{prop.qasmu.iff}, directly instead.

The lemmas are still worth looking at, because they are subject to freshness side-conditions and so are not quite exactly what one might assume.
They and the related results in this Section suggest a theory of `nominal adjoints', in the spirit of the theories of nominal abstract syntax, unification, rewriting, and algebra which we have already seen \cite{gabbay:newaas-jv,gabbay:nomu-jv,gabbay:nomr-jv,gabbay:nomuae}.

\begin{lemm}
\label{lemm.lsm.right.adjoint}
If $a\#u$ then $p\subseteq p[a\lsm u][u\ms a]$.
\end{lemm}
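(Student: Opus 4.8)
The plan is to prove this as a routine application of the adjunction from Proposition~\ref{prop.qasmu.iff}. We are asked to show that if $a\#u$ then $p\subseteq p[a\lsm u][u\ms a]$, i.e. that composing the left adjoint with the original $\amgis$-action sends $p$ to something containing $p$ (this is the unit of the adjunction).

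First I would note that $p[a\lsm u]\in|\points_\Pi|$ by the remark following Definition~\ref{defn.qasmu} (using Proposition~\ref{prop.points.bigcap}), so the $\amgis$-action $p[a\lsm u][u\ms a]$ is well-defined. Then I would pick a fresh $b$ (so $b\#p,u$), and by Lemma~\ref{lemm.lam.point.alpha} we may rename to assume without loss of generality that we are in a situation where the hypothesis $a\#u$ of Proposition~\ref{prop.qasmu.iff} applies; in fact we already have $a\#u$ by assumption, and by the corollary to Lemma~\ref{lemm.lam.point.alpha} we have $a\#p[a\lsm u]$. So take $q=p[a\lsm u]$ in Proposition~\ref{prop.qasmu.iff}: since $a\#u$ and $a\#q$, we get that $p\subseteq q[u\ms a]$ if and only if $p[a\lsm u]\subseteq q$. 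But $p[a\lsm u]\subseteq q=p[a\lsm u]$ holds trivially by reflexivity of $\subseteq$. Hence $p\subseteq p[a\lsm u][u\ms a]$ as required.

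I do not expect any real obstacle here: the statement is essentially the unit of the adjunction packaged by Proposition~\ref{prop.qasmu.iff}, and the only thing to be slightly careful about is that the adjunction is stated with the side-condition $a\#u,q$, so one must check that the freshness conditions are met for the specific $q=p[a\lsm u]$ chosen. That check is immediate: $a\#u$ is the hypothesis, and $a\#p[a\lsm u]$ is exactly the corollary of Lemma~\ref{lemm.lam.point.alpha} (using $a\#u$). So the proof is one line once Proposition~\ref{prop.qasmu.iff} and Lemma~\ref{lemm.lam.point.alpha} are in hand.
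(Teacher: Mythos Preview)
Your proposal is correct and follows essentially the same approach as the paper: use the corollary of Lemma~\ref{lemm.lam.point.alpha} (together with the hypothesis $a\#u$) to get $a\#p[a\lsm u]$, then apply Proposition~\ref{prop.qasmu.iff} with $q=p[a\lsm u]$ and the trivial inclusion $p[a\lsm u]\subseteq p[a\lsm u]$. The only superfluous step is your mention of picking a fresh $b$, which you never actually use; you can simply drop it.
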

\begin{proof}
By Lemma~\ref{lemm.lam.point.alpha}(2) (since $a\#u$) $a\#p[a\lsm u]$.
It is a fact that $p[a\lsm u]\subseteq p[a\lsm u]$, therefore by Proposition~\ref{prop.qasmu.iff} (since $a\#u,p[a\lsm u]$) $p\subseteq p[a\lsm u][u\ms a]$.
\end{proof}

\begin{lemm}
\label{lemm.catchya}
\begin{enumerate*}
\item
If $a\#u,p$ then $p[u\ms a][a\lsm u]\subseteq p$.
\item
If $a\#u,p$ then $p\subseteq p[u\ms a][a\lsm u]$.
\item
As a corollary, if $a\#u,p$ then $p[u\ms a][a\lsm u]=p$.
\end{enumerate*}
\end{lemm}
\begin{proof}
\begin{enumerate}
\item
It is a fact that $p[u\ms a]\subseteq p[u\ms a]$.
We use Proposition~\ref{prop.qasmu.iff} (since $a\#u,p$).
\item
Suppose $s\in p$.
Using Lemma~\ref{lemm.lam.point.alpha}(1) to rename if necessary, we may assume $a\#s$.
So by \rulefont{\sigma\#} from Figure~\ref{fig.nom.sigma} $s[a\ssm u]=s$, so that $s\in p[u\ms a]$ and therefore $s\in p[u\ms a][a\ssm u]\stackrel{\text{L~\ref{lemm.lsm.ssm.iff}}}{\subseteq} p[u\ms a][a\lsm u]$.
\item
From parts~1 and~2 of this result.
\qedhere\end{enumerate}
\end{proof}

\subsection{The left adjoint $p[a\lsm u]$ as a $\sigma$-action on points}

\subsubsection{It is indeed a $\sigma$-action}

We prove Proposition~\ref{prop.points.Pi.sigma.algebra}, that $p[a\lsm u]$ is indeed a $\sigma$-action on points.
Fix $p\in|\points_\Pi|$ and $u\in|\idiomprg|$.

\begin{lemm}
\label{lemm.lam.point.fresh}
If $a\#p$ then $p[a\lsm u]=p$.
\end{lemm}
\begin{proof}
Using Lemma~\ref{lemm.lam.point.alpha}(1) and Corollary~\ref{corr.stuff} assume without loss of generality that $a\#u$ as well as $a\#p$.

By Lemma~\ref{lemm.point.fresh.ms} (since $a\#p$) $p\subseteq p[u\ms a]$ so by Proposition~\ref{prop.qasmu.iff} (since $a\#u,p$) $p[a\lsm u]\subseteq p$.

Now suppose $s\not\in p[a\lsm u]$.
We will show that $s\not\in p$.

Unpacking Definition~\ref{defn.qasmu},\ $s\not\in p[a\lsm u]$ implies that there exists $q\in|\points_\Pi|$ such that $s\not\in q$ and for fresh $c$ (so $c\#p,q,u,s$) $(c\ a)\act p\subseteq q[u\ms c]$.
Now by \rulefont{\sigma\#} $s[c\ssm u]=s$.
Thus if $s\not\in q$ then by Proposition~\ref{prop.amgis.iff} also $s\not\in q[u\ms c]$.
Therefore $s\not\in (c\ a)\act p$.
By Corollary~\ref{corr.stuff} since $a\#p$ and $c\#p$ also $(c\ a)\act p=p$, so $s\not\in p$ as required.
\end{proof}

\begin{lemm}
\label{lemm.p.sigma.pi}
If $b\#p$ then $p[a\lsm b]=(b\ a)\act p$.
\end{lemm}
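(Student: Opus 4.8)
The plan is to prove the claim $p[a\lsm b]=(b\ a)\act p$ whenever $b\#p$ by using the adjunction characterisation of $[a\lsm u]$ from Proposition~\ref{prop.qasmu.iff}, specialised to the case $u=b$ (an atom), for which $[b\ms a]$ and $(b\ a)\act(\text{-})$ are related by Lemma~\ref{lemm.sub.alpha} / Lemma~\ref{lemm.sm.to.pi}-style calculations. Since $(b\ a)\act p$ is a point (the permutation action preserves points, which follows from Definition~\ref{defn.lam.perm.amgis} together with the conditions of Definition~\ref{defn.pi.point} being equivariant), it suffices to show $(b\ a)\act p$ satisfies the defining property of the intersection in Definition~\ref{defn.qasmu}, and conversely that it is contained in every $q$ in that intersection.

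First I would deal with the inclusion $p[a\lsm b]\subseteq(b\ a)\act p$. By Definition~\ref{defn.qasmu}, $p[a\lsm b]=\bigcap\{q\in|\points_\Pi|\mid \New{c}((c\ a)\act p\subseteq q[b\ms c])\}$, so it is enough to check that $(b\ a)\act p$ is one of the $q$'s being intersected, i.e. that $\New{c}((c\ a)\act p\subseteq ((b\ a)\act p)[b\ms c])$. Pick $c$ fresh (so $c\#p$ and $c$ distinct from $a,b$). Unpacking the pointwise $\amgis$-action from Definition~\ref{defn.lam.perm.amgis}, $((b\ a)\act p)[b\ms c]=\{s\mid s[c\ssm b]\in (b\ a)\act p\}$. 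For $s\in(c\ a)\act p$ I want $s[c\ssm b]\in(b\ a)\act p$; equivalently, applying $(b\ a)$, I want $((b\ a)\act s)[c\ssm a]\in p$ (using equivariance of substitution, Remark~\ref{rmrk.what.is.equivar.for.term}, and $c\#\{a,b\}$ so $(b\ a)(c)=c$ and $(b\ a)(b)=a$). Now $s\in(c\ a)\act p$ means $(c\ a)\act s\in p$. Since $c\#p$ and $p$ is finitely $\sigma$-supported, by Proposition~\ref{prop.fresh.point} $c\fresh p$, so from $(c\ a)\act s\in p$ we get $((c\ a)\act s)[c\ssm u]\in p$ for every $u$, in particular for $u=a$. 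A short calculation using Lemma~\ref{lemm.sm.to.pi} or direct equivariance identifies $((c\ a)\act s)[c\ssm a]$ with $(b\ a)\act s$ up to the permutations involved — I will need to be careful here that the indices line up, but this is exactly the kind of bookkeeping that Lemma~\ref{lemm.lsm.id} and Lemma~\ref{lemm.fresh.point.check} already do.

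For the reverse inclusion $(b\ a)\act p\subseteq p[a\lsm b]$, I would use Proposition~\ref{prop.qasmu.iff}: since $b\#p[a\lsm b]$ — this is the corollary part of Lemma~\ref{lemm.lam.point.alpha}, valid because we may rename to assume $b$ is fresh, or directly since $b\#p$ and $[a\lsm b]$ does not reintroduce $b$ — and $b\#b$ is false, so instead I apply the adjunction in the form: for any point $q$ with $a\#q$, $p\subseteq q[b\ms a]$ iff $p[a\lsm b]\subseteq q$. Taking $q=(b\ a)\act p$ (which satisfies $a\#q$ since $b\#p$, using Proposition~\ref{prop.pi.supp}), it suffices to check $p\subseteq ((b\ a)\act p)[b\ms a]=\{s\mid s[a\ssm b]\in(b\ a)\act p\}$. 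For $s\in p$, since $b\#p$ I may assume (renaming via Lemma~\ref{lemm.lam.point.alpha}-style $\alpha$-conversion, or because $s\in p$ and $b\#p$ forces $b\#s$ is \emph{not} automatic — rather use that $p$ is $\sigma$-fresh at $b$... actually $b\#p$ gives $b\fresh p$ by Proposition~\ref{prop.fresh.point}) — hmm, the clean route: by Lemma~\ref{lemm.sub.alpha}-type reasoning $s[a\ssm b]=(b\ a)\act s$ when $b\#s$; for general $s\in p$ one uses that $b\fresh p$ lets us first push $b$ in. I would route this through Lemma~\ref{lemm.lsm.ssm.iff}/Corollary~\ref{corr.lsm.char.1} (Characterisation~1): $p[a\lsm b]=\bigcap\{q\mid a\#q\land p[a\ssm b]\subseteq q\}$, and $p[a\ssm b]=\bigcup\{s[a\ssm b]\uparrowp{\Pi}\mid s\in p\}$, reducing everything to the identity $s[a\ssm b]=(b\ a)\act s$ on the relevant representatives.

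The main obstacle I anticipate is not conceptual but the precise permutation/substitution bookkeeping: making sure that when $c$ is chosen fresh and we apply swappings $(c\ a)$ and $(b\ a)$, the substitution indices transform correctly via equivariance (Remark~\ref{rmrk.what.is.equivar.for.term}) and that the freshness hypotheses needed at each step ($c\#p$, $b\#p$, $c\#\{a,b\}$) are genuinely available. The cleanest presentation is probably to reduce to Characterisation~1 (Corollary~\ref{corr.lsm.char.1}) so that the core identity to verify is just $s[a\ssm b]\uparrowp{\Pi}=((b\ a)\act s)\uparrowp{\Pi}$ for $s$ with $b$ suitably fresh, combined with the fact that $(b\ a)\act(\text{-})$ commutes with $\bigcup$, $\uparrowp{\Pi}$ (equivariance of $\Pi$), and preserves points; then the two inclusions both collapse to standard nominal calculations of the type already carried out in Lemmas~\ref{lemm.fresh.point.check} and~\ref{lemm.lam.point.fresh}.
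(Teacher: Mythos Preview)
Your overall strategy is sound and close in spirit to the paper's, but you route through Characterisation~1 (Corollary~\ref{corr.lsm.char.1}) and the bare adjunction (Proposition~\ref{prop.qasmu.iff}), whereas the paper works via Characterisation~2 (Lemma~\ref{lemm.lsm.id}): it unfolds $p[a\lsm b]=\nw a.(p[a\ssm b])$ to get $s\in p[a\lsm b]\liff\New{c}\Exists{s'{\in}p}\,s'[a\ssm b]\arrowp{\Pi}(c\ a)\act s$, replaces the witness $s'$ by one with $b\#s'$ using $b\fresh p$, rewrites $s'[a\ssm b]$ as $(b\ a)\act s'$, and finishes by a short permutation-juggle using $c,b\#p$. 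Your route via Characterisation~1 leads to the same core calculation, so the difference is mainly one of packaging: the $\nw a$ form makes the fresh~$c$ explicit from the start and thereby streamlines the two subset inclusions into a single pointwise iff.

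The one place your sketch is genuinely incomplete is exactly the step you flag: reducing from arbitrary $s\in p$ to representatives with $b\#s$. You correctly note that $b\#p$ does \emph{not} force $b\#s$, and that $b\fresh p$ (via Proposition~\ref{prop.fresh.point}) is the right tool, but you stop short of saying \emph{how} it is used. The concrete move---in either approach---is to replace $s'$ by $s'':=s'[b\ssm a]$ (which lies in $p$ since $b\fresh p$, and satisfies $b\#s''$ by Lemma~\ref{lemm.fresh.sub}). For this replacement to be harmless you need $s''[a\ssm b]=s'[a\ssm b]$, equivalently $((b\ a)\act x)[a\ssm b]=x[a\ssm b]$ for all $x$. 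This identity holds in any $\sigma$-algebra (it follows from Lemmas~\ref{lemm.sub.alpha} and~\ref{lemm.fresh.sub} after one $\alpha$-renaming with a fresh $c$), but it is not one of the named lemmas, and your sketch does not isolate it. Once you supply that identity, both directions of your argument go through cleanly; without it, the phrase ``on the relevant representatives'' hides precisely the calculation that needs doing.
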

\begin{proof}
By Lemma~\ref{lemm.lsm.id} $s\in p[a\lsm b]$
if and only if
$\New{c}(c\ a)\act s\in p[a\ssm b]$, and by Definitions~\ref{defn.suparrpi} and~\ref{defn.p.ssm}
this is if and only if
$\New{c}\Exists{s'{\in}p}s'[a\ssm b]\arrowp{\Pi}(c\ a)\act s$.

Now by assumption $b\#p$ so if $b\in\supp(s')$ then by condition~\ref{item.point.lam} of Definition~\ref{defn.pi.point} also $s'[b\ssm a]\in p$.
So we may assume without loss of generality of the `$\exists s'{\in}p$' above that the $s'$ chosen satisfies $b\#s'$, so that $s'[a\ssm b]=(b\ a)\act s'$.

Thus this is if and only if
$\New{c}\Exists{s'{\in}p}(b\ a)\act s'\arrowp{\Pi}(c\ a)\act s$.
Rearranging the permutations, this is if and only if
$\New{c}\Exists{s'{\in}p}(b\ a)\act ((c\ b)\act s')\arrowp{\Pi}s$.

Again, since $c,b\#p$, by Corollary~\ref{corr.stuff} $(c\ b)\act p=p$ so that $s'{\in}p$ if and only if $(c\ b)\act s'{\in}p$.

Thus this is if and only if
$\New{c}\Exists{s'{\in}p}(b\ a)\act s'\arrowp{\Pi}s$,
and by condition~\ref{item.point.to} of Definition~\ref{defn.pi.point} this is if and only if
$s\in(b\ a)\act p$.
\end{proof}

\begin{rmrk}
Lemma~\ref{lemm.p.sigma.pi} is remarkable.
There is no reason to expect that $p[a\lsm b]=(b\ a)\act p$ should hold---for contrast, we needed to impose this as condition~\ref{item.alpha.powsigma} when we constructed $\powsigma(\ns P)$ in Definition~\ref{defn.powsigma}.
Here, it works without requiring conditions.
\end{rmrk}

Corollary~\ref{corr.p.sigma.id} is a repeat of Corollary~\ref{corr.amgis.id.sub}, but for points.
We will use it in Proposition~\ref{prop.points.Pi.sigma.algebra}:
\begin{corr}
\label{corr.p.sigma.id}
$p[a\lsm a]=p$.
\end{corr}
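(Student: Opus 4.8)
The plan is to derive Corollary~\ref{corr.p.sigma.id} as the special case $u=a_\idiom$ of Lemma~\ref{lemm.p.sigma.pi}, together with Lemma~\ref{lemm.lam.point.fresh}. The obstacle to a one-line proof is that Lemma~\ref{lemm.p.sigma.pi} requires the freshness hypothesis $b\#p$, which fails for $b=a$ in general (we cannot assume $a\#p$). So I will argue by choosing a fresh atom and using $\alpha$-renaming of the $\lsm$-substitution (Lemma~\ref{lemm.lam.point.alpha}).

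First I would pick a fresh $b$, so $b\#p$. By Lemma~\ref{lemm.lam.point.alpha} we have $p[a\lsm a_\idiom]=((b\ a)\act p)[b\lsm a_\idiom]$; note that here the substituted term $a_\idiom$ does not mention $b$, so the side-condition of that lemma (which requires $b$ fresh for $p$, which we have) applies, and $a_\idiom$ is unaffected by the renaming since $(b\ a)(a)=b$—wait, more carefully: I must be mindful that $[a\lsm a]$ means substituting $a$ for $a$, so after renaming the bound name $a$ to $b$ the substituted term stays $a_\idiom$, giving $((b\ a)\act p)[b\lsm a_\idiom]$. Now $a\#(b\ a)\act p$ by Proposition~\ref{prop.pi.supp} (since $a\#p$ would be needed—hmm). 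Let me instead take the cleaner route: since $b\#p$, apply Lemma~\ref{lemm.p.sigma.pi} with the atom $b$ in place of the ``$b$'' there is not directly what we want either.

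The cleanest approach: apply Lemma~\ref{lemm.p.sigma.pi} after a rename. Choose $b\#p$. Then $a\#(b\ a)\act p$ need not hold, but $b\#(b\ a)\act p$ does not hold either—so instead I observe directly: by Lemma~\ref{lemm.lam.point.alpha}, $p[a\lsm a_\idiom]=((b\ a)\act p)[b\lsm a_\idiom]$, and here $a\#(b\ a)\act p$ fails only if $a\in\supp((b\ a)\act p)=(b\ a)\act\supp(p)$, i.e. only if $b\in\supp(p)$, which is false since $b\#p$. Hence $a\#(b\ a)\act p$, so by Lemma~\ref{lemm.p.sigma.pi} applied with the roles ``$b$''$:=a$ and the point $(b\ a)\act p$ (legitimate since $a\#(b\ a)\act p$), we get $((b\ a)\act p)[b\lsm a_\idiom]=(a\ b)\act((b\ a)\act p)=p$. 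Combining, $p[a\lsm a_\idiom]=p$, as required. I would write this out in two or three lines, citing Lemmas~\ref{lemm.lam.point.alpha} and~\ref{lemm.p.sigma.pi}, Proposition~\ref{prop.pi.supp}, and Corollary~\ref{corr.stuff}. The main (minor) obstacle is just being careful about which atom plays the role of the bound/free variable so that the freshness side-conditions of the cited lemmas are genuinely met; there is no real mathematical difficulty.

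\begin{proof}
Choose $b\#p$. By Lemma~\ref{lemm.lam.point.alpha}, $p[a\lsm a_\idiom]=((b\ a)\act p)[b\lsm a_\idiom]$. Since $b\#p$, by Proposition~\ref{prop.pi.supp} $\supp((b\ a)\act p)=(b\ a)\act\supp(p)$, so $a\#(b\ a)\act p$. Applying Lemma~\ref{lemm.p.sigma.pi} to the point $(b\ a)\act p$ (using that $a\#(b\ a)\act p$) gives $((b\ a)\act p)[b\lsm a_\idiom]=(a\ b)\act((b\ a)\act p)$, which by Corollary~\ref{corr.stuff} equals $p$. Hence $p[a\lsm a_\idiom]=p$.
\end{proof}
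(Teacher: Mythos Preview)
Your proof is correct and essentially identical to the paper's own argument: choose $b\#p$, use Lemma~\ref{lemm.lam.point.alpha} to $\alpha$-rename, observe $a\#(b\ a)\act p$ via Proposition~\ref{prop.pi.supp}, and apply Lemma~\ref{lemm.p.sigma.pi}. The only cosmetic difference is that the final equality $(a\ b)\act((b\ a)\act p)=p$ follows simply because swappings are self-inverse (so the composite is the identity permutation), which is slightly more direct than invoking Corollary~\ref{corr.stuff}.
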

\begin{proof}
Choose any $b\#p$.
By Lemma~\ref{lemm.lam.point.alpha}(1) $p[a\lsm a]=((b\ a)\act p)[b\lsm a]$.
By Proposition~\ref{prop.pi.supp} $a\#(b\ a)\act p$, and by Lemma~\ref{lemm.p.sigma.pi} $((b\ a)\act p)[b\lsm a]=(b\ a)\act((b\ a)\act p)=p$.
\end{proof}

\begin{lemm}
\label{lemm.p.sigma.sigma}
If $a\#v$ then $p[a\lsm u][b\lsm v]=p[b\lsm v][a\lsm u[a\sm v]]$.
\end{lemm}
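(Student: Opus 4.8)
The plan is to prove this as the point-analogue of axiom \rulefont{\sigma\sigma}: it is exactly Proposition~\ref{prop.sub.sub}, but with the left-adjoint $\sigma$-action $[\cdot\lsm\cdot]$ on a single point $p$ in place of the pointwise action on a set of points $X$. The strategy is to transport the equation across the adjunction of Proposition~\ref{prop.qasmu.iff}, so that \rulefont{\amgis\sigma} on the $\amgis$-algebra $\points_\Pi$ (Corollary~\ref{corr.points.Pi.amgis}) does the real work. (We read the right-hand side as $p[b\lsm v][a\lsm u[b\sm v]]$, matching \rulefont{\sigma\sigma}.)

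Both sides are points, by Definition~\ref{defn.qasmu} together with Proposition~\ref{prop.points.bigcap}. Since any point is trivially among its own point-upper-bounds, it suffices to prove: for every point $q$, $p[a\lsm u][b\lsm v]\subseteq q$ if and only if $p[b\lsm v][a\lsm u[b\sm v]]\subseteq q$ (then taking $q$ to be each side in turn yields the two inclusions). First I would use Lemma~\ref{lemm.lam.point.alpha} to $\alpha$-rename the two bound atoms and reduce to the case $a\#u$, $a\#v$, $b\#v$: the given side-condition $a\#v$ is precisely what makes the renaming of $a$ act coherently on both sides (it leaves $v$, hence $p[b\lsm v]$ and $u[b\sm v]$, unchanged), while the renaming of $b$ on the right-hand side additionally uses \rulefont{\sigma\alpha} in $\idiomprg$ on the subterm $u[b\sm v]$. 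In this reduced situation, using the freshness corollary of Lemma~\ref{lemm.lam.point.alpha}, conservation of support (Theorem~\ref{thrm.equivar}), Lemma~\ref{lemm.fresh.sub}, and $a\neq b$, one checks that both sides are moreover $\#$-fresh for $a$ and for $b$; hence it even suffices to test the equivalence only against points $q$ with $a,b\#q$.

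So fix such a $q$. Applying Proposition~\ref{prop.qasmu.iff} twice on the left---first stripping $[b\lsm v]$ (legitimate since $b\#v$ and $b\#q$), then stripping $[a\lsm u]$ (legitimate since $a\#u$ and $a\#q[v\ms b]$, the latter from $a\#q,v$ and $a\neq b$)---rewrites $p[a\lsm u][b\lsm v]\subseteq q$ as $p\subseteq q[v\ms b][u\ms a]$. Applying Proposition~\ref{prop.qasmu.iff} twice on the right---first stripping $[a\lsm u[b\sm v]]$ (using $a\#u[b\sm v]$, which follows from $a\#u,v$ by conservation of support, and $a\#q$), then stripping $[b\lsm v]$ (using $b\#v$ and $b\#q[u[b\sm v]\ms a]$, the latter via Lemma~\ref{lemm.fresh.sub})---rewrites $p[b\lsm v][a\lsm u[b\sm v]]\subseteq q$ as $p\subseteq q[u[b\sm v]\ms a][v\ms b]$. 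Finally \rulefont{\amgis\sigma} for $\points_\Pi$, applied to $q$ with side-condition $a\#v$, gives $q[v\ms b][u\ms a]=q[u[b\sm v]\ms a][v\ms b]$, so the two conditions are literally identical and we are done.

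The main obstacle is the freshness bookkeeping in the reduction step: one must verify that the $\alpha$-renamings of $a$ and of $b$ can be performed simultaneously and consistently on the two expressions $p[a\lsm u][b\lsm v]$ and $p[b\lsm v][a\lsm u[b\sm v]]$---including inside the syntactic substitution $u[b\sm v]$---which is where $a\#v$ is used essentially, and where one must combine Lemma~\ref{lemm.lam.point.alpha} (for the left-adjoint action $[\cdot\lsm\cdot]$ on points) with \rulefont{\sigma\alpha} (for the $\sigma$-action in the termlike $\sigma$-algebra $\idiomprg$). Once that reduction is in place, everything is a short chain of appeals to Proposition~\ref{prop.qasmu.iff} and \rulefont{\amgis\sigma}, just as in the proof of Proposition~\ref{prop.sub.sub}.
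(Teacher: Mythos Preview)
Your proposal is correct and takes essentially the same approach as the paper: $\alpha$-rename via Lemma~\ref{lemm.lam.point.alpha} to arrange the needed freshness, then use the adjunction of Proposition~\ref{prop.qasmu.iff} to transport both sides to the $\amgis$-side of $\points_\Pi$ (Corollary~\ref{corr.points.Pi.amgis}) and conclude by \rulefont{\amgis\sigma}. The paper's proof is a two-line sketch of exactly this; you have simply spelled out the freshness bookkeeping it elides, and you are right that the right-hand side should read $u[b\sm v]$ to match \rulefont{\sigma\sigma}.
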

\begin{proof}
Using Lemma~\ref{lemm.lam.point.alpha}(1) assume without loss of generality that $a\#u$ and $b\#v$.
Then the result follows using Proposition~\ref{prop.qasmu.iff} and Corollary~\ref{corr.points.Pi.amgis}, from \rulefont{\amgis\sigma}.
\end{proof}

Proposition~\ref{prop.points.Pi.sigma.algebra} does not hold in the general case of $F(\ns D)$ from Definition~\ref{defn.F}, but it holds specifically for $\points_\Pi$.
The underlying reason this happens is Proposition~\ref{prop.points.bigcap}, which allows us to build (small-supported) intersections of points and so construct $p[a\lsm u]$ in Definition~\ref{defn.qasmu}:
\begin{prop}
\label{prop.points.Pi.sigma.algebra}
$\points_\Pi$ with the action $[a\lsm u]$ from Definition~\ref{defn.qasmu} is indeed a $\sigma$-algebra.
\end{prop}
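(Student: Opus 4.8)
The goal is to verify that $\points_\Pi$, equipped with the left adjoint $[a\lsm u]$ from Definition~\ref{defn.qasmu}, is a $\sigma$-algebra over the termlike $\sigma$-algebra $\idiomprg$ in the sense of Definition~\ref{defn.sub.algebra}. This means checking three things: (a) $[a\lsm u]$ is well-typed and equivariant, i.e. it maps a point and a program-phrase to a point, and commutes with the permutation action; (b) axioms \rulefont{\sigma id}, \rulefont{\sigma\#}, \rulefont{\sigma\alpha}, and \rulefont{\sigma\sigma} hold; (c) the $\sigma$-action is over $\idiomprg=\points_\Pi^\prg$, which is already settled by the setup of this section since $\idiomprg$ is a termlike $\sigma$-algebra. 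Most of the work has already been done in the preceding lemmas, so the proof is mainly an assembly job.

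\textbf{Key steps.} First I would note that $[a\lsm u]$ indeed sends points to points: this is the remark immediately after Definition~\ref{defn.qasmu}, which invokes Proposition~\ref{prop.points.bigcap} (a finitely supported intersection of points is a point), together with Lemma~\ref{lemm.lam.point.alpha} to see that the defining set is finitely supported. Equivariance of $[a\lsm u]$ follows from Theorem~\ref{thrm.equivar} applied to the equivariant specification in Definition~\ref{defn.qasmu} (alternatively by a direct calculation using Proposition~\ref{prop.qasmu.iff}). Then I would dispatch the four axioms in turn, citing the lemmas built precisely for this purpose:
\begin{itemize*}
\item
Axiom \rulefont{\sigma id}, namely $p[a\lsm a]=p$, is Corollary~\ref{corr.p.sigma.id}.
\item
Axiom \rulefont{\sigma\#}, namely $a\#p\limp p[a\lsm u]=p$, is Lemma~\ref{lemm.lam.point.fresh}.
\item
Axiom \rulefont{\sigma\alpha}, namely $b\#p\limp p[a\lsm u]=((b\ a)\act p)[b\lsm u]$, is Lemma~\ref{lemm.lam.point.alpha}.
\item
Axiom \rulefont{\sigma\sigma}, namely $a\#v\limp p[a\lsm u][b\lsm v]=p[b\lsm v][a\lsm u[a\sm v]]$, is Lemma~\ref{lemm.p.sigma.sigma}.
\end{itemize*}
Since $\points_\Pi$ carries no $\tf{atm}$ map into itself (the atoms live in $\idiomprg$), axiom \rulefont{\sigma a} is not required, exactly as in Definition~\ref{defn.sub.algebra}; here $u$ and $v$ range over $|\idiomprg|$ as required. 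This completes the verification.

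\textbf{Main obstacle.} The conceptual heavy lifting is not in the final assembly but in the lemmas it rests on—in particular Lemma~\ref{lemm.p.sigma.pi} (that $p[a\lsm b]=(b\ a)\act p$ when $b\#p$), which as Remark noted holds here \emph{without} the side-conditions one had to impose by hand for $\powsigma$, and which feeds into Corollary~\ref{corr.p.sigma.id}; and Lemma~\ref{lemm.p.sigma.sigma}, which reduces \rulefont{\sigma\sigma} to \rulefont{\amgis\sigma} via the adjunction Proposition~\ref{prop.qasmu.iff} and Corollary~\ref{corr.points.Pi.amgis}. All of these being already proved, the proof of Proposition~\ref{prop.points.Pi.sigma.algebra} itself is short: the only thing to be careful about is the bookkeeping that the definition indeed lands in $|\points_\Pi|$ and that the normalisation ``assume $a\#u$ (or $b\#v$) without loss of generality'' used inside those lemmas is justified by Lemma~\ref{lemm.lam.point.alpha}. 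I would simply state each axiom and cite the corresponding lemma.
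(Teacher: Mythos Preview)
Your proposal is correct and matches the paper's own proof essentially line for line: the paper simply lists the four axioms \rulefont{\sigma id}, \rulefont{\sigma\#}, \rulefont{\sigma\alpha}, \rulefont{\sigma\sigma} and cites Corollary~\ref{corr.p.sigma.id}, Lemma~\ref{lemm.lam.point.fresh}, Lemma~\ref{lemm.lam.point.alpha}, and Lemma~\ref{lemm.p.sigma.sigma} respectively. Your additional remarks on well-typedness and equivariance are sound elaborations that the paper leaves implicit under ``the interesting part is to check the axioms''.
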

\begin{proof}
The interesting part is to check the axioms of Figure~\ref{fig.nom.sigma}:
\begin{itemize*}
\item
\rulefont{\sigma id} is Corollary~\ref{corr.p.sigma.id}.
\item
\rulefont{\sigma\#} is Lemma~\ref{lemm.lam.point.fresh}.
\item
\rulefont{\sigma\alpha} is Lemma~\ref{lemm.lam.point.alpha}(1).
\item
\rulefont{\sigma\sigma} is Lemma~\ref{lemm.p.sigma.sigma}. 
\qedhere\end{itemize*}
\end{proof}

\subsubsection{The $\sigma$-action distributes over union and subset}

\begin{lemm}
\label{lemm.lsm.P}
Suppose $\mathcal P\subseteq|\points_\Pi|$ is strictly small-supported (Subsection~\ref{subsect.strict.pow}).
Then
$$
(\bigcup\mathcal P)[a\lsm u]= \bigcup\{p[a\lsm u] \mid p\in\mathcal P\} .
$$
By Lemma~\ref{lemm.finite.strict} this holds in particular if $\mathcal P$ is finite.
\end{lemm}
\begin{proof}
We use Lemma~\ref{lemm.lam.point.alpha}(1) to assume without loss of generality that $a\#u$.
Take any $r\in|\points_\Pi|$ such that $a\#r$.
We reason as follows:
$$
\begin{array}{r@{\ }l@{\qquad}l}
(\bigcup\mathcal P)[a\lsm u]\subseteq r
\liff&
\bigcup \mathcal P\subseteq r[u\ms a]
&\text{Proposition~\ref{prop.qasmu.iff}}\ a\#u,r
\\
\liff&
\Forall{p{\in}\mathcal P} p\subseteq r[u\ms a]
&\text{Fact of sets}
\\
\liff&
\Forall{p{\in}\mathcal P} p[a\lsm u]\subseteq r
&\text{Proposition~\ref{prop.qasmu.iff}}\ a\#u,r
\\
\liff&
\bigcup\{p[a\lsm u]\mid p\in\mathcal P\}\subseteq r
&\text{Fact of sets}
\end{array}
$$
By Lemma~\ref{lemm.lam.point.alpha}(2) $a\#(\bigcup\mathcal P)[a\lsm u]$.
Also, $a\#p[a\lsm u]$ for every $p\in\mathcal P$ so that by Lemma~\ref{lemm.strict.support}(\ref{strict.union}) $a\#\bigcup\{p[a\lsm u]\mid p\in\mathcal P\}$.
Taking first $r=(\bigcup\mathcal P)[a\lsm u]$ and then $r=\bigcup\{p[a\lsm u]\mid p\in\mathcal P\}$ we obtain two subset inclusions and thus an equality.
\end{proof}

\begin{rmrk}
\label{rmrk.no.sigma.distrib.points}
Lemma~\ref{lemm.lsm.P} does not contain a second part proving distributivity of substitution over sets intersection of points.
This is why, as we will note in Remark~\ref{rmrk.odd.ops}, we do not consider an operation $p\tor q=p\cap q$ in Definition~\ref{defn.some.ops}; it would \emph{not} satisfy $(p\cap q)[a\lsm u]=p[a\lsm u]\cap q[a\lsm u]$ or equivalently $(p\tor q)[a\lsm u]=p[a\lsm u]\tor q[a\lsm u]$.
To get this kind of property we need the topologies, developed below.
See in particular Corollary~\ref{corr.sm.lsm.cup}.
\end{rmrk}

\begin{lemm}
\label{lemm.lsm.monotone}
If $p\subseteq q$ then $p[a\lsm u]\subseteq q[a\lsm u]$.
\end{lemm}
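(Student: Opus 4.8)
The plan is to prove Lemma~\ref{lemm.lsm.monotone} by the same adjunction technique used throughout this subsection, namely Proposition~\ref{prop.qasmu.iff}, which characterises $p[a\lsm u]$ as the left adjoint to the $\amgis$-action $q[u\ms a]$ (for suitably fresh $a$). First I would reduce to the case $a\#u$: by Lemma~\ref{lemm.lam.point.alpha} we may rename the bound atom $a$ to a fresh $b$ (so $p[a\lsm u]=((b\ a)\act p)[b\lsm u]$ and likewise for $q$), and since $p\subseteq q$ implies $(b\ a)\act p\subseteq(b\ a)\act q$ by equivariance of the permutation action (it is pointwise, Definition~\ref{defn.pointwise.action}), this is harmless. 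So assume $a\#u$.

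Next I would pick a suitably fresh witness for the target: choose $c$ fresh, so $c\#p,q,u$, and note that by Lemma~\ref{lemm.lam.point.alpha} (the corollary part) $a\#q[a\lsm u]$ after renaming — more precisely, I would work with $r=q[a\lsm u]$, which satisfies $a\#r$ and $a\#u$. The key computation is then: by Proposition~\ref{prop.qasmu.iff}, $q\subseteq q[u\ms a][a\lsm u]$... no, more directly: since $q[a\lsm u]\subseteq q[a\lsm u]$ trivially and $a\#q[a\lsm u]$, Proposition~\ref{prop.qasmu.iff} gives $q\subseteq q[a\lsm u][u\ms a]$ (this is Lemma~\ref{lemm.lsm.right.adjoint}). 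From $p\subseteq q$ we get $p\subseteq q[a\lsm u][u\ms a]$, and applying Proposition~\ref{prop.qasmu.iff} in the other direction (with $r=q[a\lsm u]$, using $a\#r$ and $a\#u$) yields $p[a\lsm u]\subseteq q[a\lsm u]$, as desired.

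I do not anticipate a serious obstacle here; the only point requiring a little care is the freshness bookkeeping — making sure that when we invoke Proposition~\ref{prop.qasmu.iff} the side-conditions $a\#u$ and $a\#r$ are genuinely in force, which is exactly what the preliminary $\alpha$-renaming via Lemma~\ref{lemm.lam.point.alpha} arranges. An alternative, equally short route would be to use the characterisation $p[a\lsm u]=\bigcap\{r\mid a\#r\land p[a\ssm u]\subseteq r\}$ from Corollary~\ref{corr.lsm.char.1} together with the evident monotonicity $p\subseteq q\Rightarrow p[a\ssm u]\subseteq q[a\ssm u]$ of the pointwise action from Definition~\ref{defn.p.ssm}: any $r$ with $a\#r$ and $q[a\ssm u]\subseteq r$ also has $p[a\ssm u]\subseteq r$, so the intersection defining $p[a\lsm u]$ is over a larger family and hence is contained in $q[a\lsm u]$. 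Either way the proof is two or three lines once the freshness reduction is made.
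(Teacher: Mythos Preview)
Your proposal is correct, and both routes you sketch work. The paper, however, takes a third and shorter path: it invokes Lemma~\ref{lemm.lsm.P} (that $[a\lsm u]$ distributes over strictly finitely supported unions) together with the standard observation that $p\subseteq q$ if and only if $p\cup q=q$, whence $q[a\lsm u]=(p\cup q)[a\lsm u]=p[a\lsm u]\cup q[a\lsm u]$ and so $p[a\lsm u]\subseteq q[a\lsm u]$. This is the familiar algebraic trick of deriving monotonicity from join-preservation. Your adjunction argument via Proposition~\ref{prop.qasmu.iff} is essentially the same machinery that the paper uses to prove Lemma~\ref{lemm.lsm.P} itself, so in effect you are specialising that lemma's proof to the two-element family $\{p,q\}$ rather than citing it; this is perfectly fine and arguably more self-contained, but the paper's route is one line given what has already been established. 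Your alternative via Corollary~\ref{corr.lsm.char.1} is also valid and pleasantly direct.
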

\begin{proof}
Using Lemma~\ref{lemm.lsm.P}, since $p\subseteq q$ if and only if $p\cup q=q$.
\end{proof}

\subsection{Some further operations on points}

\subsubsection{The operations: $\tand$, $\tall$, $\app$, and $\ppa$ on points}

Recall from the start of the Section that we fixed some $\lambda$-reduction theory $\Pi$.
Recall also $p[a\lsm u]$ from Definition~\ref{defn.qasmu}:
\begin{defn}
\label{defn.some.ops}
\label{defn.tall.p}
Suppose $p,q\subseteq|\idiom|$.
Define the following operations:
\begin{frameqn}
\begin{array}{r@{\ }l}
p\tand q=&p\cup q
\\
\tall a.p=&\bigcup_{u{\in}|\idiomprg|} p[a\lsm u]
\\
p\app q=&\bigcup\{ (st)\uparrowp{\Pi} \mid s\in p,\ t\in q\}
\\
q\ppa p=&\bigcap\{r\in|\points_\Pi| \mid p\subseteq r\app q\}
\end{array}
\end{frameqn}
\end{defn}

\begin{rmrk}
\label{rmrk.odd.ops}
Two things about Definition~\ref{defn.some.ops} might seem odd:
\begin{itemize*}
\item
$p\tand q$ and $\tall a.p$ are sets \emph{unions}---$p\cup q$ and $\bigcup_{u} p[a\lsm u]$ respectively---and not sets intersections.
This is a contravariance typical in duality results.
See Proposition~\ref{prop.pp.p.commute}, and see Lemma~\ref{lemm.pp.sub} for a clearer view of the contravariance in this case.
\item
There is no $p\tor q$, \emph{even though} in Proposition~\ref{prop.points.bigcap} we proved that a finite sets intersection of points is a point.
This is because the operation of taking a sets intersection of points does not interact correctly with the $\sigma$-action, see Remark~\ref{rmrk.no.sigma.distrib.points}.
For that, we need to consider sets of points; see Corollary~\ref{corr.sm.lsm.cup}.
\end{itemize*}
\end{rmrk}

\begin{lemm}
\label{lemm.some.ops.make.points}
If $p$ and $q$ are $\Pi$-points then so are $p\tand q$, $\tall a.p$, $p\app q$, and $q\ppa p$.
\end{lemm}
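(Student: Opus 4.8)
\textbf{Proof plan for Lemma~\ref{lemm.some.ops.make.points}.}

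The plan is to verify the two defining conditions of a $\Pi$-point (Definition~\ref{defn.pi.point})---closure under $\Pi$ (condition~\ref{item.point.to}) and finite $\sigma$-support (condition~\ref{item.point.lam})---for each of the four operations $p\tand q$, $\tall a.p$, $p\app q$, and $q\ppa p$ in turn. Throughout I would lean heavily on Proposition~\ref{prop.fresh.point} (which lets me move between $a\#p$ and $a\fresh p$) and on Proposition~\ref{prop.points.bigcap} (finitely supported intersections and unions of points are points). The cases $\tall a.p$ and $q\ppa p$ are intersections of sets of points, and $p\app q$ is a union; the real work is to show the index sets are finitely supported and that each member is a point.

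First, $p\tand q = p\cup q$: since $p$ and $q$ are points, this is a finite union of points, so it is a point by Proposition~\ref{prop.points.bigcap}; one could also check conditions \ref{item.point.to} and \ref{item.point.lam} directly, using that $a\#p$ and $a\#q$ imply $a\#(p\cup q)$ by Theorem~\ref{thrm.no.increase.of.supp}, and that $s\in p\cup q$ with $s\arrowp{\Pi}t$ forces $t$ into whichever of $p,q$ contained $s$. For $\tall a.p = \bigcap\{r\in|\points_\Pi|\mid p\subseteq r\land a\#r\}$: each $r$ in the index set is by construction a $\Pi$-point, so I only need the index set to be finitely supported; it is supported by $\supp(p){\setminus}\{a\}$ (the condition ``$p\subseteq r\land a\#r$'' is preserved by permutations fixing $\supp(p){\setminus}\{a\}$, using Corollary~\ref{corr.stuff} and Proposition~\ref{prop.pi.supp}), hence by Proposition~\ref{prop.points.bigcap} the intersection is a point. (As a degenerate but harmless case, if the index set is empty the intersection is $|\idiom|$, which must also be checked to be a point---closure under $\arrowp{\Pi}$ is trivial, and $|\idiom|$ is $\sigma$-fresh for every atom.) For $p\app q = \bigcup\{(st)\uparrowp{\Pi}\mid s\in p, t\in q\}$: each $(st)\uparrowp{\Pi}$ is a point by Lemma~\ref{lemm.suparrowp.point}, and the index family is finitely supported---supported by $\supp(p)\cup\supp(q)$, since $s\mapsto (st)\uparrowp{\Pi}$ is built from equivariant operations and Theorem~\ref{thrm.equivar} controls supports---so Proposition~\ref{prop.points.bigcap} applies again. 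Finally $q\ppa p = \bigcap\{r\in|\points_\Pi|\mid p\subseteq r\app q\}$: each $r$ is a point by hypothesis, the defining condition ``$p\subseteq r\app q$'' is preserved under permutations fixing $\supp(p)\cup\supp(q)$ (again by equivariance of $\app$ on points and Corollary~\ref{corr.stuff}), so the index set is finitely supported by $\supp(p)\cup\supp(q)$, and the intersection is a point by Proposition~\ref{prop.points.bigcap}.

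The main obstacle I expect is the bookkeeping of \emph{which set of atoms finitely supports each index family}, especially for $q\ppa p$ and $p\app q$, where the index is cut out by a condition mentioning $\app$ on points---one must be careful that $r\mapsto r\app q$ is equivariant (it is, being assembled from the equivariant $\uparrowp{\Pi}$ and sets-union via Theorem~\ref{thrm.equivar}) before invoking Corollary~\ref{corr.stuff} to conclude the condition is stable under the relevant permutations. Once that is pinned down, everything else reduces to Proposition~\ref{prop.points.bigcap} and Lemma~\ref{lemm.suparrowp.point}. None of the four cases requires genuinely new ideas; the lemma is essentially a packaging result confirming that Definition~\ref{defn.some.ops} does land in $|\points_\Pi|$.
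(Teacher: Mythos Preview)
Your proposal is correct and follows exactly the paper's approach: the paper's proof is the one-line ``All from Proposition~\ref{prop.points.bigcap}, and for $p\app q$ also Lemma~\ref{lemm.suparrowp.point}''---you have simply unpacked the finite-support verifications that the paper leaves implicit (via Theorem~\ref{thrm.no.increase.of.supp}). One small slip: the index set $\{r\mid p\subseteq r\land a\#r\}$ for $\tall a.p$ is supported by $\supp(p)\cup\{a\}$, not by $\supp(p)\setminus\{a\}$ (a swap $(b\ a)$ with $b\#p$ fixes $\supp(p)\setminus\{a\}$ but moves the condition ``$a\#r$'' to ``$b\#r$''; it is the \emph{intersection} $\tall a.p$ that loses $a$ from its support, not the indexing family). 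Since Proposition~\ref{prop.points.bigcap} only requires \emph{some} finite support, this does not affect the argument.
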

\begin{proof}
All from Proposition~\ref{prop.points.bigcap}, and for $p\app q$ also Lemma~\ref{lemm.suparrowp.point}.
\end{proof}

\begin{lemm}
\label{lemm.ppa.app.lambda}
Suppose $p,q\subseteq|\idiom|$ and suppose $r\in|\points_\Pi|$.
Then:
\begin{enumerate*}
\item
$q\ppa p\subseteq r$ if and only if $p\subseteq r\app q$.
\item
If $p,q\in|\points_\Pi|$ then $p\subseteq (q\ppa p)\app q$ and $q\ppa (p\app q)\subseteq p$.
\item
If $p\subseteq p'$ then $q\ppa p\subseteq q\ppa p'$.
\end{enumerate*}
\end{lemm}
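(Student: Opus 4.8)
The three parts of Lemma~\ref{lemm.ppa.app.lambda} are the defining adjunction data for $\app$ and $\ppa$ on points, so I would prove them in the order given, using Definition~\ref{defn.some.ops} and Proposition~\ref{prop.points.bigcap} throughout.

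For part~1, the statement $q\ppa p\subseteq r \liff p\subseteq r\app q$ should follow almost directly by unwinding the definition $q\ppa p=\bigcap\{r'\in|\points_\Pi|\mid p\subseteq r'\app q\}$. The right-to-left direction is immediate: if $p\subseteq r\app q$ then $r$ is one of the sets being intersected, so $q\ppa p\subseteq r$. The left-to-right direction needs a monotonicity observation for $\app$ in its first argument — namely $p\subseteq p'$ implies $p\app q\subseteq p'\app q$, which is elementary from the union-of-$\uparrowp{\Pi}$ form of $\app$ in Definition~\ref{defn.some.ops}. Given that, $q\ppa p\subseteq r$ implies $(q\ppa p)\app q\subseteq r\app q$, and since $q\ppa p$ lies in the intersection we have $p\subseteq (q\ppa p)\app q$ (this is part~2, so I should be slightly careful about ordering, but it is a direct consequence of the definition: every member of the intersecting family contains $p$ in the relevant sense, and an elementary calculation shows $p\subseteq (\bigcap\{r'\mid p\subseteq r'\app q\})\app q$ provided the family is nonempty and $\app$ distributes suitably over the intersection — the subtlety is exactly whether $\app$ commutes with this intersection, so I would instead just verify $p\subseteq (q\ppa p)\app q$ by hand as the first half of part~2).

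For part~2, assuming now $p,q\in|\points_\Pi|$: the first inclusion $p\subseteq (q\ppa p)\app q$ I would prove by taking $s\in p$ and showing $s\in (q\ppa p)\app q$; the point is that for \emph{every} $r$ with $p\subseteq r\app q$ we have $s\in r\app q$, and one checks that this forces $s\in (q\ppa p)\app q$ using the concrete form $r\app q=\bigcup\{(s't)\uparrowp{\Pi}\mid s'\in r, t\in q\}$ together with the fact that $q\ppa p=\bigcap$ of such $r$'s — here I would use that the family is nonempty (e.g. $|\idiom|$ itself works, being closed under $\Pi$ and, for cofinitely many $a$, $\sigma$-fresh) and an interchange argument like the one in Lemma~\ref{lemm.sub.bigcap}, exploiting strict finite support via Lemma~\ref{lemm.strict.support}. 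The second inclusion $q\ppa(p\app q)\subseteq p$ follows from part~1 applied with $p\app q$ in place of $p$ and $p$ in place of $r$: it suffices to check $p\app q\subseteq p\app q$, which is trivial, so $q\ppa(p\app q)\subseteq p$.

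For part~3, monotonicity of $q\ppa(-)$: if $p\subseteq p'$ then any $r$ with $p'\subseteq r\app q$ also satisfies $p\subseteq r\app q$, so the intersecting family for $q\ppa p'$ is contained in that for $q\ppa p$, whence $q\ppa p\subseteq q\ppa p'$ — completely routine.

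\textbf{Main obstacle.} The one genuinely delicate point is the first inclusion of part~2, $p\subseteq (q\ppa p)\app q$: it requires moving a quantifier past an intersection of points, i.e. showing that $\app$ interacts correctly with the defining $\bigcap$ of $q\ppa p$. I expect this to hinge on the family $\{r\mid p\subseteq r\app q\}$ being strictly finitely supported (or on reducing to a strictly-finitely-supported cofinal subfamily), so that Lemma~\ref{lemm.strict.support} licenses swapping $\forall r$ with the existential witnesses $s',t$ appearing in $r\app q$ — exactly the pattern used in Lemma~\ref{lemm.sub.bigcap}. Everything else is bookkeeping with Definition~\ref{defn.some.ops} and Proposition~\ref{prop.points.bigcap}.
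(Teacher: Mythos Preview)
The paper's proof is much shorter than your plan. It treats Part~1 as immediate from Definition~\ref{defn.some.ops}; then both halves of Part~2 follow from Part~1 by plugging in the reflexive instances $q\ppa p\subseteq q\ppa p$ and $p\app q\subseteq p\app q$ (the relevant sets being points by Lemma~\ref{lemm.some.ops.make.points}); and Part~3 is obtained from Parts~1 and~2 by applying the unit to $p'$ and then Part~1 with $r=q\ppa p'$. No interchange or finite-support machinery appears anywhere.

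You are right to flag the unit $p\subseteq(q\ppa p)\app q$ as the step that is not a bare unfolding of the intersection formula---it is equivalent to the left-to-right half of Part~1, which the paper asserts without further argument. But your proposed remedy is off track: the family $\{r'\in|\points_\Pi|\mid p\subseteq r'\app q\}$ is \emph{not} strictly finitely supported---its members $r'$ can have arbitrarily large support---so the Lemma~\ref{lemm.sub.bigcap}-style swap you reach for is not licensed, and there is no general reason for $\app$ in its first argument to commute with this intersection. Whatever closes this gap, it is not the strict-support technology you invoke.

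Your direct argument for Part~3 (the defining family for $q\ppa p'$ is contained in that for $q\ppa p$, hence the intersection grows) is correct and in fact cleaner than the paper's detour through Parts~1 and~2.
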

\begin{proof}
Part~1 is from Definition~\ref{defn.some.ops}.
Part~2 follows using Lemma~\ref{lemm.some.ops.make.points} since $q\ppa p\subseteq q\ppa p$ and $p\app q\subseteq p\app q$.
For part~3 we note by part~2 that
$p'\subseteq (q\ppa p')\app q$, deduce that
$p\subseteq (q\ppa p')\app q$, and use part~1 to conclude that $q\ppa p\subseteq q\ppa p'$.
\end{proof}

\begin{lemm}
\label{lemm.tall.p.alpha}
Suppose $p\subseteq|\idiom|$ is small-supported.
Then $a\#\tall a.p$.

As a corollary, $\supp(\tall a_1,\dots,a_n.p)\subseteq\supp(p){\setminus}\{a_1,\dots,a_n\}$.
\end{lemm}
\begin{proof}
The first part is from Definition~\ref{defn.some.ops} using Lemma~\ref{lemm.lam.point.alpha}, along with Theorem~\ref{thrm.equivar} and Corollary~\ref{corr.stuff}(\ref{stuff.freshness.criterion}).
The corollary follows using the first part and Theorem~\ref{thrm.no.increase.of.supp}.
\end{proof}

\begin{lemm}
\label{lemm.tall.include.lam.point}
Suppose $p,q\in\points_\Pi$ and $s\in|\idiom|$.
Then:
\begin{enumerate*}
\item\label{tall.include.1}
$p\subseteq \tall a.p$.
\item
If $a\#p$ then $p=\tall a.p$.
\item
If $p\subseteq q$ then $\tall a.p\subseteq \tall a.q$.
\item\label{tall.include.fresh}
If $a\#q$ then $(\tall a.p)\subseteq q$ if and only if $p\subseteq q$.
\end{enumerate*}
\end{lemm}
\begin{proof}
We consider each part in turn:
\begin{enumerate*}
\item
We take $u{=}a$ and use Corollary~\ref{corr.p.sigma.id}.
\item
Using Lemma~\ref{lemm.lam.point.fresh}.
\item
From Lemma~\ref{lemm.lsm.monotone}.
\item
If $\tall a.p\subseteq q$ then $p\subseteq q$ using part~1 of this result, and if $p\subseteq q$ then
$\tall a.p\stackrel{\text{pt~3}}{\subseteq} \tall a.q\stackrel{\text{pt~2}}{=}q$.
\qedhere\end{enumerate*}
\end{proof}

\subsubsection{$\ppa$ and $\tall$ make $\lambda$}

\begin{lemm}
\label{lemm.ppa.lam.char}
$q\ppa p=\{s'\in|\idiom| \mid \Forall{r}(p\subseteq r\app q\limp s'\in r)\}$.

As a corollary, $s'\in t\uparrowp{\Pi}\ppa s\uparrowp{\Pi}$ if and only if $s't\arrowp{\Pi}s$.
\end{lemm}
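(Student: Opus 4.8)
\textbf{Proof plan for Lemma~\ref{lemm.ppa.lam.char}.}
The plan is to unpack Definition~\ref{defn.some.ops} and reduce the claimed characterisation to a statement about which phrases lie in the intersection $\bigcap\{r\in|\points_\Pi| \mid p\subseteq r\app q\}$. By definition, $s'\in q\ppa p$ exactly when $s'\in r$ for every $\Pi$-point $r$ with $p\subseteq r\app q$; this is already almost verbatim the right-hand side, so the only content of the first part is to make that unwinding explicit, noting that the $r$ quantified over range over $|\points_\Pi|$ and that membership in an intersection is universal quantification over the family. So I would begin by writing out both sides and observing they are syntactically the same assertion once $p\app q$ is read via Definition~\ref{defn.some.ops}.

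For the corollary, I would instantiate $p$ as $s\uparrowp{\Pi}$ and $q$ as $t\uparrowp{\Pi}$ (both points, by Lemma~\ref{lemm.suparrowp.point}), so that $s'\in t\uparrowp{\Pi}\ppa s\uparrowp{\Pi}$ iff for every $\Pi$-point $r$, $s\uparrowp{\Pi}\subseteq r\app t\uparrowp{\Pi}$ implies $s'\in r$. The key observation is that, since $s\uparrowp{\Pi}$ is the least $\Pi$-point containing $s$ (by Lemma~\ref{lemm.suparrowp.point} together with closure of points under $\Pi$-reduction), the condition $s\uparrowp{\Pi}\subseteq r\app t\uparrowp{\Pi}$ is equivalent to $s\in r\app t\uparrowp{\Pi}$. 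I would then want $s\in r\app t\uparrowp{\Pi}$ to unpack, via the definition of $\app$ in Definition~\ref{defn.some.ops} and Lemma~\ref{lemm.something}-style reasoning, to: there exist $s''\in r$ and $t'$ with $t\arrowp{\Pi}t'$ and $s''t'\arrowp{\Pi}s$. The right-to-left direction of the corollary is then easy: if $s't\arrowp{\Pi}s$, then for any $r$ with $s\uparrowp{\Pi}\subseteq r\app t\uparrowp{\Pi}$ we must exhibit $s'\in r$—take $r = (s')\uparrowp{\Pi}$ as the witnessing point and check $s\uparrowp{\Pi}\subseteq (s')\uparrowp{\Pi}\app t\uparrowp{\Pi}$ using $s'\in(s')\uparrowp{\Pi}$, $t\in t\uparrowp{\Pi}$, $s't\in(s't)\uparrowp{\Pi}$ and $s't\arrowp{\Pi}s$, then conclude $s'\in r$ for all such $r$ including those actually intersected. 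For the left-to-right direction, assuming $s'\in t\uparrowp{\Pi}\ppa s\uparrowp{\Pi}$, I would again take the specific point $r = (s')\uparrowp{\Pi}$, verify that $s\uparrowp{\Pi}\subseteq r\app t\uparrowp{\Pi}$ does \emph{not} immediately hold, so instead pick $r$ minimal such that $s\uparrowp{\Pi}\subseteq r\app t\uparrowp{\Pi}$—concretely the point generated by the reducts forced by $\beta$—and read off that $s'\in r$ forces $s't\arrowp{\Pi}s$ via condition~\ref{item.somerel.beta} (the $\beta$-reduction clause) and transitivity.

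The main obstacle I expect is the corollary rather than the first part: getting the quantifier alternation right when passing from ``$s\uparrowp{\Pi}\subseteq r\app t\uparrowp{\Pi}$ for all such $r$'' to a concrete reduction $s't\arrowp{\Pi}s$. The delicate point is choosing the right witnessing point $r$—one wants the \emph{smallest} $\Pi$-point $r$ with $p\subseteq r\app q$, and one must argue such a smallest point exists and is generated by $s'$ up to $\Pi$-reduction, using closure under $\Pi$ (condition~\ref{item.point.to} of Definition~\ref{defn.pi.point}), finite $\sigma$-support (via Proposition~\ref{prop.fresh.point} and Proposition~\ref{prop.points.bigcap} to take intersections), and the $\beta$-reduction rule to see that $s'\in r$ and $t\in q$ force $s't$ and hence $s$ (a $\beta$-reduct, up to $\Pi$) into $r\app q$. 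I would handle this by exhibiting $(s')\uparrowp{\Pi}$ explicitly as the witness and checking the containment and minimality by direct calculation with the clauses of Definition~\ref{defn.lambda.reduction.theory}; everything else is routine unfolding.
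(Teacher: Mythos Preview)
Your treatment of the first part is fine and matches the paper: both simply unfold Definition~\ref{defn.some.ops} and observe that membership in an intersection is universal quantification over the indexing family.

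For the corollary, however, your plan has a genuine gap. In the right-to-left direction you write ``for any $r$ with $s\uparrowp{\Pi}\subseteq r\app t\uparrowp{\Pi}$ we must exhibit $s'\in r$---take $r = (s')\uparrowp{\Pi}$ as the witnessing point''. This is a quantifier error: you must show $s'\in r$ for \emph{every} such $r$, and choosing one particular $r$, or verifying that $(s')\uparrowp{\Pi}$ happens to lie in the indexing family, tells you nothing about membership of $s'$ in the other $r$'s. The phrase ``then conclude $s'\in r$ for all such $r$ including those actually intersected'' does not correspond to any valid inference. For the left-to-right direction you correctly sense the difficulty but then retreat to an unspecified ``minimal $r$ generated by the reducts forced by $\beta$'', invoking the $\beta$-clause of Definition~\ref{defn.lambda.reduction.theory}; this is irrelevant, since the argument must work uniformly for every $\lambda$-reduction theory $\Pi$, and you never explain what the minimal $r$ is or why $s'\in r$ yields $s't\arrowp{\Pi}s$.

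The paper's route is different and much more direct. For the forward direction it obtains from $s'\in t\uparrowp{\Pi}\ppa s\uparrowp{\Pi}$ some $t'$ with $t\arrowp{\Pi}t'$ and $s't'\arrowp{\Pi}s$, and then appeals to compatibility of $\arrowp{\Pi}$ with application (condition~\ref{item.somerel.app} of Definition~\ref{defn.lambda.reduction.theory}) to get $s't\arrowp{\Pi}s't'\arrowp{\Pi}s$. For the converse it checks directly that every $s''$ with $s\arrowp{\Pi}s''$ satisfies $s't\arrowp{\Pi}s''$, using $s'\in(s')\uparrowp{\Pi}$ and $t\in t\uparrowp{\Pi}$. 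The ingredient your plan entirely lacks is this compatibility condition on $\arrowp{\Pi}$; without it you cannot pass between $s't'$ and $s't$, which is exactly the bridge the paper's argument crosses.
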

\begin{proof}
The first part just unpacks Definition~\ref{defn.some.ops}.

Now suppose $s'\in t\uparrowp{\Pi}\ppa s\uparrowp{\Pi}$.
By the first part, there exists some $t'$ with $t\arrowp{\Pi}t'$ and $s't'\arrowp{\Pi}s$.
By condition~\ref{item.somerel.app} of Definition~\ref{defn.lambda.reduction.theory} $s't\arrowp{\Pi}s't'$.
It follows that $s't\arrowp{\Pi}s$.

Conversely suppose $s't\arrowp{\Pi}s$ and choose any $s''$ with $s\arrowp{\Pi}s''$.
It follows that $s't\arrowp{\Pi} s''$ and since $t\in t\uparrowp{\Pi}$, we are done.
\end{proof}

\begin{rmrk}
\label{rmrk.ppa}
For the reader's convenience we apply Lemma~\ref{lemm.ppa.lam.char} to some concrete cases.
Suppose $\idiom=\lamtrm$.
\begin{itemize*}
\item
Take $q=a\uparrowp{\Pi}=p$.
Then
$s'\in a\uparrowp{\Pi}\ppa a\uparrowp{\Pi}$ if and only if $s'a\arrowp{\Pi}a$.
We can calculate that $\lam{a}a\in q\ppa p$ and also $\lam{b}a\in q\ppa p$.
\item
Assume some implementation of ordered pairs $(s,t)$ and $\pi_1$ and $\pi_2$ for first and second projection, and take $q=(a,b)\uparrowp{\Pi}$ and $p=a\uparrowp{\Pi}$.
Then
$s'\in (a,b)\uparrowp{\Pi}\ppa a\uparrowp{\Pi}$ if and only if $s'(a,b)\arrowp{\Pi}a$.
We can calculate that $\lam{b}a\in q\ppa p$ and $\pi_1\in q\ppa p$.
\end{itemize*}
So we can think of $\ppa$ as a kind of pattern-matching.
We refine this to model $\lambda$ in Proposition~\ref{prop.tall.lam.uparrow}.
\end{rmrk}

Recall $s\uparrowp{\Pi}$ from Definition~\ref{defn.suparrpi}, which is a point by Lemma~\ref{lemm.suparrowp.point}.

\begin{lemm}
\label{lemm.uparrow.app}
$s\uparrowp{\Pi}\app t\uparrowp{\Pi}=(st)\uparrowp{\Pi}$.
\end{lemm}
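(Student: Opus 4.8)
The plan is to prove the set equality $s\uparrowp{\Pi}\app t\uparrowp{\Pi}=(st)\uparrowp{\Pi}$ by double inclusion, using the definition of $\app$ on points from Definition~\ref{defn.some.ops}, namely $p\app q=\bigcup\{(s't')\uparrowp{\Pi}\mid s'\in p,\ t'\in q\}$, together with the compatibility and transitivity properties of a $\lambda$-reduction theory from Definition~\ref{defn.lambda.reduction.theory} (specifically condition~\ref{item.somerel.app}, that $\somerel$ is a congruence for application, and reflexivity/transitivity of $\arrowp{\Pi}$).

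First I would unpack the left-hand side. An element $s''$ lies in $s\uparrowp{\Pi}\app t\uparrowp{\Pi}$ iff there exist $s'\in s\uparrowp{\Pi}$ and $t'\in t\uparrowp{\Pi}$ with $s''\in(s't')\uparrowp{\Pi}$, i.e.\ $s\arrowp{\Pi}s'$, $t\arrowp{\Pi}t'$, and $s't'\arrowp{\Pi}s''$. For the inclusion $\subseteq$: given such $s',t',s''$, by condition~\ref{item.somerel.app} of Definition~\ref{defn.lambda.reduction.theory} we get $st\arrowp{\Pi}s't'$, and then by transitivity $st\arrowp{\Pi}s''$, so $s''\in(st)\uparrowp{\Pi}$. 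For the reverse inclusion $\supseteq$: given $s''\in(st)\uparrowp{\Pi}$, i.e.\ $st\arrowp{\Pi}s''$, simply take $s'=s$ and $t'=t$; by reflexivity $s\in s\uparrowp{\Pi}$ and $t\in t\uparrowp{\Pi}$, and $s''\in(st)\uparrowp{\Pi}=(s't')\uparrowp{\Pi}$ witnesses membership in the union, hence $s''\in s\uparrowp{\Pi}\app t\uparrowp{\Pi}$.

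This is essentially a routine two-line argument; there is no real obstacle, since all the needed closure properties of $\arrowp{\Pi}$ are immediate from Definition~\ref{defn.lambda.reduction.theory} and Definition~\ref{defn.some.ops}. The only point requiring a moment's care is recognising that reflexivity of the preorder $\somerel$ gives $s\in s\uparrowp{\Pi}$, which makes the $\supseteq$ direction trivial, and that congruence plus transitivity handle $\subseteq$. One could also phrase the whole thing as a chain of iffs rather than two inclusions, but the double-inclusion presentation is cleanest here.

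\begin{proof}
We show the two inclusions.

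For $\subseteq$, suppose $s''\in s\uparrowp{\Pi}\app t\uparrowp{\Pi}$. By Definition~\ref{defn.some.ops} there exist $s'\in s\uparrowp{\Pi}$ and $t'\in t\uparrowp{\Pi}$ such that $s''\in (s't')\uparrowp{\Pi}$; that is, $s\arrowp{\Pi}s'$, $t\arrowp{\Pi}t'$, and $s't'\arrowp{\Pi}s''$. By condition~\ref{item.somerel.app} of Definition~\ref{defn.lambda.reduction.theory}, $st\arrowp{\Pi}s't'$, and by transitivity $st\arrowp{\Pi}s''$, so $s''\in(st)\uparrowp{\Pi}$.

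For $\supseteq$, suppose $s''\in(st)\uparrowp{\Pi}$, i.e.\ $st\arrowp{\Pi}s''$. Since $\arrowp{\Pi}$ is reflexive, $s\in s\uparrowp{\Pi}$ and $t\in t\uparrowp{\Pi}$, and $s''\in(st)\uparrowp{\Pi}$ shows $s''$ belongs to the union $\bigcup\{(s_0t_0)\uparrowp{\Pi}\mid s_0\in s\uparrowp{\Pi},\ t_0\in t\uparrowp{\Pi}\}=s\uparrowp{\Pi}\app t\uparrowp{\Pi}$.
\end{proof}
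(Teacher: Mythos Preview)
Your proof is correct and follows essentially the same approach as the paper's own proof. The paper unpacks the definitions to the same two conditions and simply asserts their equivalence as ``a fact''; you spell out that fact explicitly using congruence (condition~\ref{item.somerel.app}) plus transitivity for one inclusion and reflexivity for the other.
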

\begin{proof}
Unpacking Definitions~\ref{defn.suparrpi} and~\ref{defn.some.ops}, $u\in s\uparrowp{\Pi}\app t\uparrowp{\Pi}$ when $s\arrowp{\Pi}s'$ and $t\arrowp{\Pi}t'$ and $s't'\arrowp{\Pi}u$.
Also, $u\in (st)\uparrowp{\Pi}$ when $st\arrowp{\Pi}u$.
It is a fact that these two conditions are equivalent.
\end{proof}

Proposition~\ref{prop.tall.lam.uparrow} connects $\tall a$ and $\ppa$ on points, with $\lambda a$ on $\idiom$.
It will also be useful later in Corollary~\ref{corr.pp.commute.lam}.
We suggested in Remark~\ref{rmrk.ppa} that $\ppa$ is a kind of pattern-matching; by that view, what we do now is pattern-matching on a universally quantified atom:
\begin{prop}
\label{prop.tall.lam.uparrow}
$\tall a.(a\uparrowp{\Pi}\ppa s\uparrowp{\Pi})=(\lam{a}s)\uparrowp{\Pi}$.
\end{prop}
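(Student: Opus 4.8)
The statement equates two points in $\points_\Pi$, namely $\tall a.(a\uparrowp{\Pi}\ppa s\uparrowp{\Pi})$ and $(\lam{a}s)\uparrowp{\Pi}$. Since both sides are subsets of $|\idiom|$, I will prove equality by mutual inclusion, unpacking the definitions from Definition~\ref{defn.some.ops} (for $\tall$ and $\ppa$) and Definition~\ref{defn.suparrpi} (for $(\minus)\uparrowp{\Pi}$), together with the characterisation of $q\ppa p$ in Lemma~\ref{lemm.ppa.lam.char}. By Lemma~\ref{lemm.ppa.lam.char}, $a\uparrowp{\Pi}\ppa s\uparrowp{\Pi}=\{s'\mid s'a\arrowp{\Pi}s\}$; and by construction $\tall a.p=\bigcap\{r\in|\points_\Pi|\mid p\subseteq r\land a\#r\}$. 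So I must show that $t\in(\lam{a}s)\uparrowp{\Pi}$, i.e. $\lam{a}s\arrowp{\Pi}t$, if and only if $t$ lies in every $\Pi$-point $r$ with $a\#r$ that contains $\{s'\mid s'a\arrowp{\Pi}s\}$.

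For the inclusion $(\lam{a}s)\uparrowp{\Pi}\subseteq\tall a.(a\uparrowp{\Pi}\ppa s\uparrowp{\Pi})$: by Lemma~\ref{lemm.suparrowp.point} the left-hand side $(\lam{a}s)\uparrowp{\Pi}$ is itself a $\Pi$-point, and by condition~\ref{item.idiom.fresh} of Definition~\ref{defn.idiom} we have $a\#\lam{a}s$, hence $a\#(\lam{a}s)\uparrowp{\Pi}$ by conservation of support (Theorem~\ref{thrm.no.increase.of.supp}). Moreover $(\lam{a}s)\uparrowp{\Pi}$ contains $a\uparrowp{\Pi}\ppa s\uparrowp{\Pi}$: if $s'a\arrowp{\Pi}s$ then, using $\eta$-expansion (condition~\ref{item.somerel.eta} of Definition~\ref{defn.lambda.reduction.theory}, applicable since $a\#s'$ can be arranged — or more carefully, since $s'$ is one fixed element I pick $a$ fresh for it via $\alpha$-renaming the whole statement) we get $s'\arrowp{\Pi}\lam{a}(s'a)\arrowp{\Pi}\lam{a}s$ using condition~\ref{item.somerel.lam}; hence $s'\in(\lam{a}s)\uparrowp{\Pi}$. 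Therefore $(\lam{a}s)\uparrowp{\Pi}$ is one of the points $r$ in the intersection defining $\tall a.(a\uparrowp{\Pi}\ppa s\uparrowp{\Pi})$, so by part~1 of Lemma~\ref{lemm.tall.include.lam.point} applied appropriately — actually more directly, since the intersection is contained in each of its members — we would want the reverse; let me instead argue the other direction first and use Lemma~\ref{lemm.tall.include.lam.point}.

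For the inclusion $\tall a.(a\uparrowp{\Pi}\ppa s\uparrowp{\Pi})\subseteq(\lam{a}s)\uparrowp{\Pi}$: since $a\#(\lam{a}s)\uparrowp{\Pi}$ and $(a\uparrowp{\Pi}\ppa s\uparrowp{\Pi})\subseteq(\lam{a}s)\uparrowp{\Pi}$ (just shown), part~4 of Lemma~\ref{lemm.tall.include.lam.point} gives $\tall a.(a\uparrowp{\Pi}\ppa s\uparrowp{\Pi})\subseteq(\lam{a}s)\uparrowp{\Pi}$. For the reverse inclusion I use part~1 of Lemma~\ref{lemm.tall.include.lam.point} ($p\subseteq\tall a.p$) together with the observation that $\lam{a}s$ itself lies in $a\uparrowp{\Pi}\ppa s\uparrowp{\Pi}$: indeed $(\lam{a}s)a\arrowp{\Pi}s[a\ssm a]=s$ by $\beta$-reduction (condition~\ref{item.somerel.beta} of Definition~\ref{defn.lambda.reduction.theory}) and \rulefont{\sigma id}, so by Lemma~\ref{lemm.ppa.lam.char} $\lam{a}s\in a\uparrowp{\Pi}\ppa s\uparrowp{\Pi}$, whence $(\lam{a}s)\uparrowp{\Pi}\subseteq a\uparrowp{\Pi}\ppa s\uparrowp{\Pi}$ by up-closure under $\Pi$ (this uses that $a\uparrowp{\Pi}\ppa s\uparrowp{\Pi}$ is a point, Lemma~\ref{lemm.some.ops.make.points}), and then $(\lam{a}s)\uparrowp{\Pi}\subseteq\tall a.(a\uparrowp{\Pi}\ppa s\uparrowp{\Pi})$ by part~1 of Lemma~\ref{lemm.tall.include.lam.point}.

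\textbf{Main obstacle.} The delicate point is the freshness bookkeeping around $\eta$-expansion: condition~\ref{item.somerel.eta} requires $a$ not free in $s'$, but $s'$ ranges over an arbitrary element of $a\uparrowp{\Pi}\ppa s\uparrowp{\Pi}$, so I must be careful to $\alpha$-rename the bound atom $a$ in the statement to be fresh for the particular $s'$ under consideration — which is legitimate because all four operations in play ($\tall a$, $\uparrowp{\Pi}$ via $\lam{a}$, and $\ppa$) respect the relevant $\alpha$-equivalences (Lemma~\ref{lemm.freshwedge.alpha}-style reasoning for $\tall$, and condition~\ref{item.idiom.fresh} for $\lambda$). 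I expect this to be the step that needs the most care; everything else is a direct unpacking of definitions combined with the already-established Lemmas~\ref{lemm.ppa.lam.char},~\ref{lemm.some.ops.make.points},~\ref{lemm.suparrowp.point}, and~\ref{lemm.tall.include.lam.point}.
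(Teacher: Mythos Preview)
Your reverse inclusion $(\lam{a}s)\uparrowp{\Pi}\subseteq\tall a.(a\uparrowp{\Pi}\ppa s\uparrowp{\Pi})$ is correct and essentially matches the paper: show $\lam{a}s\in a\uparrowp{\Pi}\ppa s\uparrowp{\Pi}$ via $\beta$, then up-closure, then part~1 of Lemma~\ref{lemm.tall.include.lam.point}.

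The forward inclusion, however, contains a genuine direction error. You argue: if $s'a\arrowp{\Pi}s$ then by $\eta$-expansion and congruence $s'\arrowp{\Pi}\lam{a}(s'a)\arrowp{\Pi}\lam{a}s$, ``hence $s'\in(\lam{a}s)\uparrowp{\Pi}$''. But by Definition~\ref{defn.suparrpi}, $s'\in(\lam{a}s)\uparrowp{\Pi}$ means $\lam{a}s\arrowp{\Pi}s'$, not $s'\arrowp{\Pi}\lam{a}s$. So $\eta$-expansion gives you the wrong direction and the step fails. Moreover, the freshness worry you flag is not merely delicate but unfixable at the element level: the set $a\uparrowp{\Pi}\ppa s\uparrowp{\Pi}$ genuinely contains elements with $a$ in their support (see Remark~\ref{rmrk.ppa}, where $\lam{b}a\in a\uparrowp{\Pi}\ppa a\uparrowp{\Pi}$), so you cannot $\alpha$-rename your way to $a\#s'$ for arbitrary $s'$.

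The paper avoids both problems by not arguing element-wise and not using $\eta$ at all. Instead it uses the adjunction of Lemma~\ref{lemm.ppa.app.lambda} (part~1): $a\uparrowp{\Pi}\ppa s\uparrowp{\Pi}\subseteq(\lam{a}s)\uparrowp{\Pi}$ holds iff $s\uparrowp{\Pi}\subseteq(\lam{a}s)\uparrowp{\Pi}\app a\uparrowp{\Pi}$, and by Lemma~\ref{lemm.uparrow.app} the right-hand side equals $((\lam{a}s)a)\uparrowp{\Pi}$; this last inclusion is immediate from $\beta$. Then part~4 of Lemma~\ref{lemm.tall.include.lam.point} and $a\#(\lam{a}s)\uparrowp{\Pi}$ finish the job, exactly as you had planned. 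So both inclusions are powered by $\beta$; $\eta$-expansion is not needed anywhere in this proposition.
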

\begin{proof}
We prove two subset inclusions:
\begin{itemize*}
\item
\emph{Proof that $\tall a.(a\uparrowp{\Pi}\ppa s\uparrowp{\Pi})\subseteq(\lam{a}s)\uparrowp{\Pi}$.}\quad
By condition~\ref{item.somerel.beta} of Definition~\ref{defn.lambda.reduction.theory}
$(\lam{a}s)a\arrowp{\Pi}s$, so from Definition~\ref{defn.suparrpi}\footnote{See also the later and more comprehensive Lemma~\ref{lemm.lam.pp.inj}.} $s\uparrowp{\Pi}\subseteq ((\lam{a}s)a)\uparrowp{\Pi}\stackrel{\text{L\ref{lemm.uparrow.app}}}{=}(\lam{a}s)\uparrowp{\Pi}\app a\uparrowp{\Pi}$.
It follows by Lemma~\ref{lemm.ppa.app.lambda} that $a\uparrowp{\Pi}\ppa s\uparrowp{\Pi}\subseteq (\lam{a}s)\uparrowp{\Pi}$.
By condition~\ref{item.idiom.fresh} of Definition~\ref{defn.idiom} $a\#\lam{a}s$ so by Theorem~\ref{thrm.no.increase.of.supp} $a\#(\lam{a}s)\uparrowp{\Pi}$, and therefore 
by part~\ref{tall.include.fresh} of Lemma~\ref{lemm.tall.include.lam.point} $\tall a.(a\uparrowp{\Pi}\ppa s\uparrowp{\Pi})\subseteq (\lam{a}s)\uparrowp{\Pi}$.
\item
\emph{Proof that $(\lam{a}s)\uparrowp{\Pi}\subseteq\tall a.(a\uparrowp{\Pi}\ppa s\uparrowp{\Pi})$.}\quad
By condition~\ref{item.somerel.beta} of Definition~\ref{defn.lambda.reduction.theory} $(\lam{a}s)a\arrowp{\Pi} s$,
so by Lemma~\ref{lemm.ppa.lam.char} $\lam{a}s\in a\uparrowp{\Pi}\ppa s\uparrowp{\Pi}$.
By part~\ref{tall.include.1} of Lemma~\ref{lemm.tall.include.lam.point} we have that $\lam{a}s\in\tall a.(a\uparrowp{\Pi}\ppa s\uparrowp{\Pi})$. 
By condition~\ref{item.point.to} of Definition~\ref{defn.pi.point} we conclude $(\lam{a}s)\uparrowp{\Pi}\subseteq\tall a.(a\uparrowp{\Pi}\ppa s\uparrowp{\Pi})$ as required.
\qedhere\end{itemize*}
\end{proof}

\subsection{How the $\sigma$-action on points commutes}
\label{subsect.commutes}

The set of points $\points_\Pi$ has plenty of structure.
It is a nominal set, it has an $\amgis$-action $p[u\ms a]$ (Corollary~\ref{corr.points.Pi.amgis}), a $\sigma$-action $p[a\lsm u]$ (Proposition~\ref{prop.points.Pi.sigma.algebra}) and a subsidiary pointwise version $p[a\ssm u]$ (Definition~\ref{defn.p.ssm}).
It is a fresh semi-lattice (a top element, $\tand$, and $\tall$; see Remark~\ref{rmrk.varnothing.is.a.point} and Definition~\ref{defn.some.ops}) and has $\app$ and $\ppa$ (Definition~\ref{defn.some.ops}) and even $\nw a.p$ a sets version of the $\new$-quantifier (Lemma~\ref{lemm.nw.on.points}).
There is also a map from $|\idiom|$ to points given by $s\in|\idiom|$ maps to $s\uparrowp{\Pi}$ (Definition~\ref{defn.suparrpi}).

In this subsection we consider useful ways in which the $\sigma$-action commutes with some of this structure.
These commutation results will later be useful in proving that sets of points have the structure of an impredicative lattice with $\tall$ and $\app$.

\begin{lemm}
\label{lemm.lsm.uparrow}
$s\uparrowp{\Pi}[a\lsm u]=s[a\ssm u]\uparrowp{\Pi}$.
\end{lemm}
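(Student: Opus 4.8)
The plan is to reduce to the case $a\#u$ and then run both sides through the "pointwise substitution" characterisation of $p[a\lsm u]$. First I would note that the map $s\mapsto s\uparrowp{\Pi}$ is equivariant: it is defined by the predicate $s\arrowp{\Pi}s'$, and $\Pi$ is equivariant (Definition~\ref{defn.lambda.reduction.theory}), so this is an instance of Theorem~\ref{thrm.equivar}. Hence, choosing fresh $b$ (so $b\#s,u$), Lemma~\ref{lemm.lam.point.alpha} gives $s\uparrowp{\Pi}[a\lsm u]=((b\ a)\act(s\uparrowp{\Pi}))[b\lsm u]=((b\ a)\act s)\uparrowp{\Pi}[b\lsm u]$, while \rulefont{\sigma\alpha} of the termlike $\sigma$-algebra $\idiom$ gives $s[a\ssm u]=((b\ a)\act s)[b\ssm u]$ and hence $s[a\ssm u]\uparrowp{\Pi}=((b\ a)\act s)[b\ssm u]\uparrowp{\Pi}$. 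So both sides transform the same way and we may assume without loss of generality that $a\#u$.

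Next I would identify the pointwise substitution $s\uparrowp{\Pi}[a\ssm u]$ from Definition~\ref{defn.p.ssm} with $s[a\ssm u]\uparrowp{\Pi}$. Unpacking, $s\uparrowp{\Pi}[a\ssm u]=\bigcup\{s'[a\ssm u]\uparrowp{\Pi}\mid s\arrowp{\Pi}s'\}$. For "$\subseteq$": if $s\arrowp{\Pi}s'$ then by condition~\ref{item.somerel.ssm} of Definition~\ref{defn.lambda.reduction.theory} $s[a\ssm u]\arrowp{\Pi}s'[a\ssm u]$, so by transitivity $s'[a\ssm u]\uparrowp{\Pi}\subseteq s[a\ssm u]\uparrowp{\Pi}$. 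For "$\supseteq$": reflexivity gives $s\arrowp{\Pi}s$, so $s[a\ssm u]\uparrowp{\Pi}$ is one of the sets in the union. Thus $s\uparrowp{\Pi}[a\ssm u]=s[a\ssm u]\uparrowp{\Pi}$.

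Finally I would conclude using Characterisation~1 (Corollary~\ref{corr.lsm.char.1}): writing $p=s\uparrowp{\Pi}$ and using $a\#u$, we have $p[a\lsm u]=\bigcap\{q\mid a\#q\land p[a\ssm u]\subseteq q\}$. By Lemma~\ref{lemm.fresh.sub} $a\#s[a\ssm u]$, hence $a\#s[a\ssm u]\uparrowp{\Pi}$ by Theorem~\ref{thrm.no.increase.of.supp}; and $p[a\ssm u]=s[a\ssm u]\uparrowp{\Pi}\subseteq s[a\ssm u]\uparrowp{\Pi}$, so $s[a\ssm u]\uparrowp{\Pi}$ is one of the $q$ in the intersection, giving $p[a\lsm u]\subseteq s[a\ssm u]\uparrowp{\Pi}$. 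The reverse inclusion is the corollary of Lemma~\ref{lemm.lsm.ssm.iff}: $p[a\ssm u]\subseteq p[a\lsm u]$, i.e.\ $s[a\ssm u]\uparrowp{\Pi}\subseteq p[a\lsm u]$. Combining the two inclusions yields $s\uparrowp{\Pi}[a\lsm u]=s[a\ssm u]\uparrowp{\Pi}$, which is the claim. (Alternatively one could finish via Characterisation~2, Lemma~\ref{lemm.lsm.id}, since $\nw a.X=X$ whenever $a\#X$ by Lemma~\ref{lemm.nua.iff} and Corollary~\ref{corr.stuff}.)

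The only delicate point is the bookkeeping of the $\alpha$-renaming in the first paragraph—making sure that the renaming applied to the point $s\uparrowp{\Pi}$ via Lemma~\ref{lemm.lam.point.alpha} is matched by the corresponding renaming of the phrase $s$ inside $s[a\ssm u]\uparrowp{\Pi}$—but this is exactly what equivariance of $s\mapsto s\uparrowp{\Pi}$ and \rulefont{\sigma\alpha} deliver; everything after that is routine.
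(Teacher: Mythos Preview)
Your proposal is correct and follows essentially the same route as the paper: reduce to $a\#u$ via \rulefont{\sigma\alpha} and Lemma~\ref{lemm.lam.point.alpha}, then use the relationship between $[a\lsm u]$ and $[a\ssm u]$ (the paper invokes Lemma~\ref{lemm.lsm.ssm.iff} directly where you use its corollary, Characterisation~1) together with $a\#s[a\ssm u]\uparrowp{\Pi}$ and condition~\ref{item.somerel.ssm} of Definition~\ref{defn.lambda.reduction.theory}. The only cosmetic difference is that you first isolate the equality $s\uparrowp{\Pi}[a\ssm u]=s[a\ssm u]\uparrowp{\Pi}$ and then apply Corollary~\ref{corr.lsm.char.1}, whereas the paper interleaves these steps and handles the reverse inclusion by noting $s[a\ssm u]\in s\uparrowp{\Pi}[a\lsm u]$ and appealing to closure under $\Pi$; the underlying content is identical.
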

\begin{proof}
Using
\rulefont{\sigma\alpha} and Lemma~\ref{lemm.lam.point.alpha}(1) assume without loss of generality that $a\#u$.
We prove two subset inclusions.
\begin{itemize*}
\item
\emph{Proof that $s\uparrowp{\Pi}[a\lsm u]\subseteq s[a\ssm u]\uparrowp{\Pi}$.}\quad
By Lemma~\ref{lemm.fresh.sub} $a\#s[a\ssm u]$ and by Theorem~\ref{thrm.no.increase.of.supp} also $a\#s[a\ssm u]\uparrowp{\Pi}$.
Thus by Lemma~\ref{lemm.lsm.ssm.iff} to it suffices to prove $s\uparrowp{\Pi}[a\ssm u]\subseteq s[a\ssm u]\uparrowp{\Pi}$.
Suppose $s\arrowp{\Pi}t$.
By condition~\ref{item.somerel.ssm} of Definition~\ref{defn.lambda.reduction.theory} $s[a\ssm u]\arrowp{\Pi} t[a\ssm u]$.
The result follows.
\item
\emph{Proof that $s[a\ssm u]\uparrowp{\Pi}\subseteq s\uparrowp{\Pi}[a\lsm u]$.}\quad

By condition~\ref{item.point.to} of Definition~\ref{defn.pi.point} it suffices to note that $s[a\ssm u]\in s\uparrowp{\Pi}[a\lsm u]$.
\qedhere\end{itemize*}
\end{proof}

\begin{lemm}
\label{lemm.pq.ssm}
\begin{itemize*}
\item
$(p\app q)[a\ssm u]=p[a\ssm u]\app q[a\ssm u]$.
\item
$(\nw a.p)\app(\nw a.q)=\nw a.(p\app q)$ (Definition~\ref{defn.nua}; Lemma~\ref{lemm.nw.on.points}).
\end{itemize*}
\end{lemm}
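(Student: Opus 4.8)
The plan is to prove each of the two equalities by reducing them to the pointwise calculations that the ambient structure already supports.

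For the first equality, $(p\app q)[a\ssm u]=p[a\ssm u]\app q[a\ssm u]$, I would unfold the definitions of $p\app q$ (Definition~\ref{defn.some.ops}), $p[a\ssm u]$ (Definition~\ref{defn.p.ssm}), and $s\uparrowp{\Pi}$ (Definition~\ref{defn.suparrpi}). The left-hand side is $\bigcup\{(st)\uparrowp{\Pi}[a\ssm u]\mid s\in p,\ t\in q\}$ after distributing $[a\ssm u]$ over the union (Definition~\ref{defn.p.ssm} is itself a union, and $[a\ssm u]$ as defined there commutes with unions by elementary set calculation). By Lemma~\ref{lemm.lsm.uparrow} — or rather the analogous $[a\ssm u]$-version which follows directly from condition~\ref{item.somerel.app} of Definition~\ref{defn.lambda.reduction.theory} together with condition~\ref{item.idiom.app} of Definition~\ref{defn.idiom} — we have $(st)\uparrowp{\Pi}[a\ssm u]=(s[a\ssm u]\,t[a\ssm u])\uparrowp{\Pi}$ using $(st)[a\ssm u]=s[a\ssm u]\,t[a\ssm u]$. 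So the left-hand side becomes $\bigcup\{(s[a\ssm u]\,t[a\ssm u])\uparrowp{\Pi}\mid s\in p,\ t\in q\}$. The right-hand side, unfolding $p[a\ssm u]\app q[a\ssm u]$, is $\bigcup\{(s''t'')\uparrowp{\Pi}\mid s''\in p[a\ssm u],\ t''\in q[a\ssm u]\}$, and then expanding $p[a\ssm u]$ and $q[a\ssm u]$ and distributing the outer union over these, this is $\bigcup\{(s't')\uparrowp{\Pi}\mid s\in p, t\in q, s[a\ssm u]\arrowp{\Pi}s', t[a\ssm u]\arrowp{\Pi}t'\}$. The two sides agree because $s[a\ssm u]\,t[a\ssm u]\arrowp{\Pi}s't'$ exactly when $s[a\ssm u]\arrowp{\Pi}s'$ and $t[a\ssm u]\arrowp{\Pi}t'$ for suitable $s',t'$ (using compatibility of $\arrowp{\Pi}$ with application and transitivity), so the generated up-sets coincide. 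This is really the same bookkeeping as in the proof of Lemma~\ref{lemm.uparrow.app}.

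For the second equality, $(\nw a.p)\app(\nw a.q)=\nw a.(p\app q)$, I would use the membership characterisation from Lemma~\ref{lemm.nua.iff}: $r\in\nw a.X$ iff $\New{b}(b\ a)\act r\in X$. First note $\nw a.p,\nw a.q,\nw a.(p\app q)$ are all points by Lemma~\ref{lemm.nw.on.points}. For the left-to-right inclusion, take $r\in(\nw a.p)\app(\nw a.q)$, so $r\in(st)\uparrowp{\Pi}$ for some $s\in\nw a.p$ and $t\in\nw a.q$; then $\New{b}(b\ a)\act s\in p$ and $\New{b}(b\ a)\act t\in q$, hence for all but finitely many $b$ simultaneously $(b\ a)\act s\in p$ and $(b\ a)\act t\in q$, so $((b\ a)\act s)((b\ a)\act t)=(b\ a)\act(st)$ generates an up-set inside $p\app q$; since $(st)\arrowp{\Pi}$-reduces to (an element generating) $r$, equivariance of $\Pi$ gives $\New{b}(b\ a)\act r\in p\app q$, i.e. $r\in\nw a.(p\app q)$. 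For the converse, given $r\in\nw a.(p\app q)$, so $\New{b}(b\ a)\act r\in p\app q$, unfold $p\app q$ to get (for each such $b$) $s_b\in p$, $t_b\in q$ with $(b\ a)\act r\in(s_bt_b)\uparrowp{\Pi}$; the subtlety is to get a single $s,t$ working for cofinitely many $b$, which I expect to handle by first fixing one fresh $b_0$ (fresh for $p,q,r$), obtaining $s=s_{b_0},t=t_{b_0}$, then conjugating by $(b\ b_0)$ and using Corollary~\ref{corr.stuff} that $(b\ b_0)\act p=p$, $(b\ b_0)\act q=q$ for $b$ fresh, to transport the witnesses — and then $s\in\nw a.p$ follows since $(b\ a)\act s\in p$ for cofinitely many $b$, similarly $t\in\nw a.q$, and $r\in(st)\uparrowp{\Pi}[\text{after }\alpha]$ so $r\in(\nw a.p)\app(\nw a.q)$.

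The main obstacle I anticipate is the converse direction of the second equality: making the quantifier juggling precise, i.e. passing from "for cofinitely many $b$ there exist witnesses $s_b,t_b$" to "there exist witnesses $s,t$ that work for cofinitely many $b$". This is exactly the kind of step where the $\new$-quantifier's some/any property (Theorem~\ref{thrm.New.equiv}) and the finite-support conjugation trick (Corollary~\ref{corr.stuff}, as used in the proof of Proposition~\ref{prop.points.bigcap}) do the work, but one has to be careful because $p$ and $q$ need not be fresh for the chosen $b$; the resolution, as in Proposition~\ref{prop.points.bigcap}, is to pick $b$ genuinely fresh for $p,q,r$ and then use equivariance of $p$ and $q$ under the relevant swaps. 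Everything else is routine unfolding of Definitions~\ref{defn.some.ops}, \ref{defn.p.ssm}, \ref{defn.nua} together with the compatibility axioms of Definition~\ref{defn.lambda.reduction.theory}.
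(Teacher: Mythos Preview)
For the first equality and for the left-to-right inclusion of the second, your approach matches the paper's: unfold the definitions and use $(st)[a\ssm u]=s[a\ssm u]\,t[a\ssm u]$ (condition~\ref{item.idiom.app} of Definition~\ref{defn.idiom}) for the first, and for the second combine the two $\new$-quantifiers via the fact that $\new$ commutes with conjunction.

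The gap is in your right-to-left argument for the second equality. You fix $b_0$ fresh for $p,q,r$, take witnesses $s=s_{b_0}\in p$ and $t=t_{b_0}\in q$ with $st\arrowp{\Pi}(b_0\ a)\act r$, conjugate by $(b\ b_0)$ to get $(b\ b_0)\act s\in p$ for fresh $b$, and then conclude ``$s\in\nw a.p$ follows since $(b\ a)\act s\in p$ for cofinitely many $b$''. But the conjugation gives $(b\ b_0)\act s\in p$, not $(b\ a)\act s\in p$; these are different permutations and nothing you have written bridges them. Separately, your chosen $s,t$ satisfy $st\arrowp{\Pi}(b_0\ a)\act r$, not $st\arrowp{\Pi}r$. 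The natural fix --- taking $(b_0\ a)\act s_0$ and $(b_0\ a)\act t_0$ instead --- runs into the problem that when $a\in\supp(s_0)$ one gets $b_0\in\supp((b_0\ a)\act s_0)$, so $b_0$ is not a valid fresh witness for the some/any form of Theorem~\ref{thrm.New.equiv}. The trick from Proposition~\ref{prop.points.bigcap} transports witnesses of membership in $p\app q$ across different $b$, but it does not manufacture a single element whose $(b\ a)$-orbit sits in $p$.

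The paper's own proof of this inclusion is not explicit either (it writes only ``the result follows'' after establishing the $\subseteq$ direction). In the one place the lemma is used (Corollary~\ref{corr.lsm.app}) it is applied with $p[a\ssm u]$ and $q[a\ssm u]$ in the roles of $p$ and $q$, and by Lemma~\ref{lemm.lsm.id} these satisfy the extra property $P\subseteq\nw a.P$. Under that hypothesis the missing inclusion is immediate: $P\app Q\subseteq(\nw a.P)\app(\nw a.Q)$ by monotonicity of $\app$, hence $\nw a.(P\app Q)\subseteq\nw a.\bigl((\nw a.P)\app(\nw a.Q)\bigr)=(\nw a.P)\app(\nw a.Q)$, the last step because $a$ is fresh for the right-hand side. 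So you correctly located the obstacle; the resolution, however, requires either this extra hypothesis or an argument genuinely different from the one you sketched.
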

\begin{proof}
From Definitions~\ref{defn.some.ops} and~\ref{defn.p.ssm}, we have that $r\in (p\app q)[a\ssm u]$ when there exist $s\in p$ and $t\in q$ such that $(st)[a\ssm u]\arrowp{\Pi}r$, and that $r\in p[a\ssm u]\app q[a\ssm u]$ when there exist $s\in p$ and $t\in q$ such that $s[a\ssm u]\,t[a\ssm u]\arrowp{\Pi}r$.
By Definition~\ref{defn.idiom}(\ref{item.idiom.app}) $(st)[a\ssm u]=s[a\ssm u]\,t[a\ssm u]$.

For the second part, $r\in (\nw a.p)\app(\nw a.q)$ when there exist $s$ and $t$ such that $\New{b}(b\ a)\act s\in p$ and $\New{b}(b\ a)\act t\in q$ and $r=st$.
It is a fact that the $\new$-quantifier distributes over conjunction \cite[Theorem~6.6]{gabbay:fountl} (provided $p$ and $q$ are small-supported; by Lemma~\ref{lemm.fresh.point.check}(\ref{point.finsupp}) they are) so also $\New{b}(b\ a)\act (st)\in p\app q$.
The result follows.
\end{proof}

\begin{corr}
\label{corr.lsm.app}
Suppose $p,q\in|\points_\Pi|$.
Then:
\begin{itemize*}
\item
$(p\tand q)[a\lsm u]=p[a\lsm u]\tand q[a\lsm u]$.
\item
$(p\app q)[a\lsm u]=p[a\lsm u]\app q[a\lsm u]$.
\end{itemize*}
\end{corr}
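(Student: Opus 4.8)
The plan is to assemble these two equalities from results already in hand rather than argue from scratch.

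For the first equality, recall from Definition~\ref{defn.some.ops} that $p\tand q=p\cup q$, so $p\tand q=\bigcup\mathcal P$ for the two-element set $\mathcal P=\{p,q\}\subseteq|\points_\Pi|$. By Corollary~\ref{corr.point.finsupp} both $p$ and $q$ have finite support, so $\mathcal P$ is (trivially) strictly finitely supported, and Lemma~\ref{lemm.lsm.P} applies directly to give
$$
(p\tand q)[a\lsm u]=(\textstyle\bigcup\mathcal P)[a\lsm u]=\textstyle\bigcup\{p'[a\lsm u]\mid p'\in\mathcal P\}=p[a\lsm u]\cup q[a\lsm u]=p[a\lsm u]\tand q[a\lsm u].
$$

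For the second equality I would pass through the pointwise substitution $[a\ssm u]$ and the set-level $\new$-quantifier, using Characterisation~2 (Lemma~\ref{lemm.lsm.id}) twice. First reduce to the case $a\#u$: choosing fresh $b$ with $b\#p,q,u$, Lemma~\ref{lemm.lam.point.alpha} together with equivariance of $\app$ (which holds by Theorem~\ref{thrm.equivar}, since $\app$ on points is specified from equivariant data) shows $(p\app q)[a\lsm u]=\bigl(((b\ a)\act p)\app((b\ a)\act q)\bigr)[b\lsm u]$, and similarly for the right-hand side, so we may assume $a\#u$. Now $p\app q$ is a point by Lemma~\ref{lemm.some.ops.make.points}, and $p[a\ssm u]$, $q[a\ssm u]$ are points by the remark following Definition~\ref{defn.p.ssm}, while $\nw a.(-)$ of a point is a point by Lemma~\ref{lemm.nw.on.points}; so all the objects below are legitimate and we may compute, using Lemma~\ref{lemm.lsm.id}, then the first bullet of Lemma~\ref{lemm.pq.ssm}, then the second bullet of Lemma~\ref{lemm.pq.ssm}, then Lemma~\ref{lemm.lsm.id} again:
$$
(p\app q)[a\lsm u]=\nw a.\bigl((p\app q)[a\ssm u]\bigr)=\nw a.\bigl(p[a\ssm u]\app q[a\ssm u]\bigr)=\bigl(\nw a.p[a\ssm u]\bigr)\app\bigl(\nw a.q[a\ssm u]\bigr)=p[a\lsm u]\app q[a\lsm u].
$$

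There is no real obstacle here: the substance was proved in Lemmas~\ref{lemm.lsm.P}, \ref{lemm.lsm.id}, and~\ref{lemm.pq.ssm}, and this corollary just glues them together. The only point requiring a little care is the legitimacy of the two uses of Lemma~\ref{lemm.lsm.id}, which forces one to check that the argument of each $\nw a.(-)$ and each $[a\lsm u]$ is genuinely a point and that $a\#u$; both are dispatched by the citations above and the initial renaming step.
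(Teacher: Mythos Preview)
Your proof is correct and follows essentially the same route as the paper: the first part via Definition~\ref{defn.some.ops} and Lemma~\ref{lemm.lsm.P}, the second via the renaming step (Lemma~\ref{lemm.lam.point.alpha}) followed by Characterisation~2 (Lemma~\ref{lemm.lsm.id}) and the two bullets of Lemma~\ref{lemm.pq.ssm}. Your version is more explicit about why the renaming is legitimate and why the intermediate objects are points, but the structure is identical.
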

\begin{proof}
For the first part, by Definition~\ref{defn.some.ops} $p\tand q=p\cup q$.
We use Lemma~\ref{lemm.lsm.P}.

For the second part, we use Lemma~\ref{lemm.lam.point.alpha}(1) to assume without loss of generality that $a\#u$.
By Lemma~\ref{lemm.lsm.id} $(p\app q)[a\lsm u]=\nw a.((p\app q)[a\ssm u])$ and $p[a\lsm u]\app q[a\lsm u]=(\nw a.(p[a\ssm u]))\app(\nw a.(q[a\ssm u]))$.
We use Lemma~\ref{lemm.pq.ssm}.
\end{proof}

Proposition~\ref{prop.lsm.special.distrib} is the key technical result to proving \rulefont{\sigma\ppa} valid in Proposition~\ref{prop.points.coherent}:
\begin{prop}
\label{prop.lsm.special.distrib}
Suppose $u\in|\idiomprg|$ and $p\in|\points_\Pi|$.
Then
$$(b\uparrowp{\Pi}\ppa p)[a\lsm u]=b\uparrowp{\Pi}\ppa(p[a\lsm u]).
$$
\end{prop}
\begin{proof}
For the right-to-left inclusion $b\uparrowp{\Pi}\ppa(p[a\lsm u])\subseteq (b\uparrowp{\Pi}\ppa p)[a\lsm u]$ we reason as follows:
$$
\begin{array}{r@{\ }l@{\ }l}
b\uparrowp{\Pi}\ppa(p[a\lsm u])\subseteq& (b\uparrowp{\Pi}\ppa p)[a\lsm u]
\\
\liff&
p[a\lsm u]\subseteq (b\uparrowp{\Pi}\ppa p)[a\lsm u]\app (b\uparrowp{\Pi})
&\text{Lemma~\ref{lemm.ppa.app.lambda}}
\\
\liff&
p[a\lsm u]\subseteq (b\uparrowp{\Pi}\ppa p)[a\lsm u]\app (b\uparrowp{\Pi}[a\lsm u])
&\text{Lemma~\ref{lemm.lam.point.fresh}}
\\
\liff&
p[a\lsm u]\subseteq ((b\uparrowp{\Pi}\ppa p)\app b\uparrowp{\Pi})[a\lsm u]
&\text{Corollary~\ref{corr.lsm.app}}
\\
\Leftarrow &
p[a\lsm u]\subseteq p[a\lsm u]
&\text{Ls \ref{lemm.lsm.monotone} \& \ref{lemm.ppa.app.lambda}}
\end{array}
$$
For the left-to-right inclusion $(b\uparrowp{\Pi}\ppa p)[a\lsm u]\subseteq b\uparrowp{\Pi}\ppa(p[a\lsm u])$, using Lemma~\ref{lemm.lam.point.alpha}(1) to rename if necessary assume $a\#u$.
Using Theorem~\ref{thrm.no.increase.of.supp} also note that $a\#b\uparrowp{\Pi}\ppa(p[a\lsm u])$.
Thus by Lemma~\ref{lemm.lsm.ssm.iff} and part~3 of Lemma~\ref{lemm.ppa.app.lambda}
it suffices to prove that
$(b\uparrowp{\Pi}\ppa p)[a\ssm u]\subseteq b\uparrowp{\Pi}\ppa(p[a\ssm u])$.

Unpacking Definitions~\ref{defn.some.ops} (for $\ppa$) \ref{defn.suparrpi} (for $\uparrowp{\Pi}$) and~\ref{defn.p.ssm} (for $[a\ssm u]$ on points), this simplifies to showing that for all $s$,
$$
\Exists{s'}(s'b{\in} p\land s{=}s'[a\ssm u])
\limp
\Exists{t'}(t'{\in}p \land sb{=}t'[a\ssm u]) .
$$
So suppose $s'$ is such that $s'b{\in}p$ and $s{=}s'[a\ssm u]$.
Take $t'=s'b$.
So $t'[a\ssm u]=(s'b)[a\ssm u]=sb$ as required.
\end{proof}

\subsection{Operations on sets of points}

\subsubsection{Basic definitions}

\begin{defn}
\label{defn.some.more.ops}
Suppose $p,q\in|\points_\Pi|$. 
Define the following operations:
\begin{frameqn}
\begin{array}{r@{\ }l}
\pp p=&\{q\in|\points_\Pi| \mid p\subseteq q\}
\\
p\bpp q=&\pp{(p\app q)}
\end{array}
\end{frameqn}
\end{defn}

\begin{rmrk}
$\pp p$ and $\bpp$ generate sets of points.
We use these to build a topology and a nominal spectral space with $\bpp$ over $\points_\Pi$ in Definition~\ref{defn.points.top}
and so to prove Theorem~\ref{thrm.Pi.completeness} (completeness).

$p\bpp q$ is a \emph{combination operator} in the sense of Definition~\ref{defn.bpp}, so that we get operations $\app$ and $\ppa$ on sets of points from Definition~\ref{defn.sub.sets.pp}.
Unpacking Definition~\ref{defn.some.ops} (for $p\app q$) and~\ref{defn.some.more.ops} (for $\pp{(p\app q)}$) it is not hard to see that $p\bpp q$ is the set of $\Pi$-points $r$ such that $\{st\mid s{\in}p, t{\in}q\}\subseteq r$.

Thus we have two operations named $\app$; one on points from Definition~\ref{defn.some.ops} and one on sets of points from Definition~\ref{defn.some.more.ops}.
Similarly, we have two operations named $\ppa$.
It will always be clear from context which is meant, and they are related by Proposition~\ref{prop.pp.p.commute}.
\end{rmrk}

\begin{rmrk}
It might be worth mentioning that $\pp p$ from Definition~\ref{defn.some.more.ops} is not a repeat of $\pp x$ from Definition~\ref{defn.pp}:
\begin{itemize*}
\item
In Definition~\ref{defn.pp} we assumed an underlying nominal distributive lattice with $\tall$.
\item
Definition~\ref{defn.some.more.ops} is constructed using $\points_\Pi$.
Now we consider Definition~\ref{defn.some.ops} and see that if we ignore $\app$ and $\ppa$ then $\points_\Pi$ is almost a nominal distributive lattice with $\tall$---but it lacks a disjunction (this is discussed in Remark~\ref{rmrk.odd.ops}).
We could reasonably call it a semilattice with $\tall$.
\end{itemize*}
So $\points_\Pi$ is a specific structure with its own properties.
\end{rmrk}

Lemma~\ref{lemm.pp.sub} is in the spirit of Lemma~\ref{lemm.completeness}:
\begin{lemm}
\label{lemm.pp.sub}
Suppose $p,q\in|\points_\Pi|$.
Then the following conditions are equivalent:
$$
p\in \pp q
\ \ \liff\ \
q\subseteq p
\ \ \liff\ \
\pp p\subseteq \pp q
$$
Furthermore, if $s\in|\idiom|$ then
$$
p\in\pp{s\uparrowp{\Pi}}\liff s\in p.
$$
\end{lemm}
(Recall that $s\uparrowp{\Pi}\in\points_\Pi$ by Definition~\ref{defn.suparrpi} and Lemma~\ref{lemm.suparrowp.point}.)
\begin{proof}
Unpacking Definition~\ref{defn.some.more.ops}, $p\in\pp q$ implies $q\subseteq p$.
It is a fact of sets that if $q\subseteq p$ then $p\subseteq p'$ implies $q\subseteq p'$, thus $\pp p\subseteq\pp q$.
Finally, if $\pp p\subseteq \pp q$ then since $p\in \pp p$, also $p\in \pp q$.

Now consider $p\in\pp{s\uparrowp{\Pi}}$.
By the previous paragraph this is if and only if $s\uparrowp{\Pi}\subseteq p$, and from condition~\ref{item.point.to} of Definition~\ref{defn.pi.point} this is if and only if $s\in p$.
\end{proof}

\begin{corr}
\label{corr.lam.pi.injective}
The assignment $p\longmapsto \pp p$ is injective from $|\points_\Pi|$ to $\nompow(\points_\Pi)$.
As a corollary, $\supp(\pp p)=\supp(p)$.
\end{corr}
\begin{proof}
The first part follows using Lemma~\ref{lemm.pp.sub}.
The corollary follows by part~\ref{item.conservation.of.support} of Theorem~\ref{thrm.no.increase.of.supp}.
\end{proof}

Corollary~\ref{corr.filter.point.property} should remind us of condition~\ref{sober.all} of Definition~\ref{defn.sober}.
We use it to get just that in Proposition~\ref{prop.points.sober}:
\begin{corr}
\label{corr.filter.point.property}
Suppose $p,q\in|\points_\Pi|$ and $\New{b}(b\ a)\act p\in\pp q$.
Then $p\in\pp{(\tall a.q)}$.
\end{corr}
\begin{proof}
Choose fresh $b$ (so $b\#p,q$), so that $(b\ a)\act p\in\pp q$.
We note two facts:
\begin{itemize*}
\item
By Proposition~\ref{prop.pi.supp} $a\#(b\ a)\act p$.
\item
By Lemma~\ref{lemm.tall.p.alpha} $a\#\tall a.q$, so that by Theorem~\ref{thrm.no.increase.of.supp} $a,b\#\pp{(\tall a.q)}$.
\end{itemize*}
Now we reason as follows:
$$
\begin{array}[b]{r@{\ }l@{\qquad}l}
(b\ a)\act p\in\pp q
\liff&
q\subseteq (b\ a)\act p
&\text{Lemma~\ref{lemm.pp.sub}}
\\
\liff&
\tall a.q\subseteq(b\ a)\act p
&\text{Lemma~\ref{lemm.tall.include.lam.point}(\ref{tall.include.fresh})}\ a\#(b\ a)\act p
\\
\liff&
(b\ a)\act p\in\pp{(\tall a.q)}
&\text{Lemma~\ref{lemm.pp.sub}}
\\
\liff&
p\in\pp{(\tall a.q)}
&\text{Corollary~\ref{corr.stuff}}\ a,b\#\pp{(\tall a.q)}
\end{array}
\qedhere$$
\end{proof}

The proof of Lemma~\ref{lemm.lam.pp.inj} is simple given what we have proved so far, but it is important; for instance it is the final step in the proof of Completeness in Theorem~\ref{thrm.Pi.completeness}.
\begin{lemm}
\label{lemm.lam.pp.inj}
The following conditions are equivalent:
$$
\pp{s\uparrowp{\Pi}}\subseteq\pp{t\uparrowp{\Pi}}
\ \ \liff\ \
s\uparrowp{\Pi}\in \pp t
\ \ \liff\ \
t\uparrowp{\Pi}\subseteq s\uparrowp{\Pi}
\ \ \liff\ \
t\in s\uparrowp{\Pi}
\ \ \liff\ \
s\arrowp{\Pi}t
\ \ \liff\ \
\Pi\cent s{\to}t
$$
\end{lemm}
\begin{proof}
We obtain
$
\pp{s\uparrowp{\Pi}}\subseteq\pp{t\uparrowp{\Pi}}
\liff
s\uparrowp{\Pi}\in \pp t
\liff
t\uparrowp{\Pi}\subseteq s\uparrowp{\Pi}
\liff
t\in s\uparrowp{\Pi}
$
from Lemma~\ref{lemm.pp.sub}.
Also, $t\in s\uparrowp{\Pi}\liff s\arrowp{\Pi}t$ is direct from Definition~\ref{defn.suparrpi}.
Finally, by Notation~\ref{nttn.Pi.cent} $s\arrowp{\Pi}t$ is synonymous with $\Pi\cent s{\to}t$.
\end{proof}

\begin{corr}
\label{corr.lam.pp.eq}
$\pp{s\uparrowp{\Pi}} = \pp{t\uparrowp{\Pi}}$ if and only if $s=_\Pi t$.
\end{corr}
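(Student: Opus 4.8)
The plan is to read this off immediately from Lemma~\ref{lemm.lam.pp.inj}, applied in both directions. Recall that, with $\Pi$ regarded here as a $\lambda$-equality theory, Notation~\ref{nttn.Pi.cent} unpacks $s =_\Pi t$ to the single relation $s \mathrel\Pi t$, and by symmetry of $\Pi$ this is the same as asserting both $s \arrowp\Pi t$ and $t \arrowp\Pi s$. So there is really nothing to do beyond bookkeeping.

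First I would note that set equality $\pp{s\uparrowp\Pi} = \pp{t\uparrowp\Pi}$ is by definition the conjunction of the two inclusions $\pp{s\uparrowp\Pi} \subseteq \pp{t\uparrowp\Pi}$ and $\pp{t\uparrowp\Pi} \subseteq \pp{s\uparrowp\Pi}$. Lemma~\ref{lemm.lam.pp.inj} says the first inclusion is equivalent to $s \arrowp\Pi t$, and the same lemma with the roles of $s$ and $t$ interchanged says the second inclusion is equivalent to $t \arrowp\Pi s$. Hence $\pp{s\uparrowp\Pi} = \pp{t\uparrowp\Pi}$ holds precisely when $s \arrowp\Pi t$ and $t \arrowp\Pi s$, which by the remark above is precisely $s =_\Pi t$.

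I expect no real obstacle here; the only points requiring any care are keeping straight which direction of Lemma~\ref{lemm.lam.pp.inj} delivers which inclusion, and the trivial observation that for an equivalence relation the two one-sided reductions together amount to $=_\Pi$. For completeness I would also remark that if one instead keeps $\Pi$ a general $\lambda$-reduction theory (a preorder), the identical argument shows $\pp{s\uparrowp\Pi} = \pp{t\uparrowp\Pi}$ iff $s$ and $t$ are $\Pi$-interconvertible, which is the natural meaning of $=_\Pi$ in that case.
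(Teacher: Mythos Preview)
Your proposal is correct and is essentially the paper's own argument: the paper's proof reads in its entirety ``From Lemma~\ref{lemm.lam.pp.inj}'', and you have simply unpacked that one-line reference by applying the lemma in each direction and conjoining. Your closing remark about the general preorder case is also apt, since the ambient $\Pi$ in this section is fixed as a $\lambda$-reduction theory rather than an equality theory, so $=_\Pi$ here should indeed be read as two-sided $\arrowp{\Pi}$-reducibility.
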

\begin{proof}
From Lemma~\ref{lemm.lam.pp.inj}.
\end{proof}

\subsubsection{Commutation properties}

Recall $q[a\lsm u]$ from Definition~\ref{defn.qasmu}, $p[u\ms a]$ from Definition~\ref{defn.p.action}, and (since by Corollary~\ref{corr.points.Pi.amgis} $\points_\Pi$ is an $\amgis$-algebra) $\pp p[a\sm u]$ from Definition~\ref{defn.sub.sets}.

Theorem~\ref{thrm.lam.pp.sm} is fairly easy to prove, but part~1 of it is key.
It relates the natural $\sigma$-action $\pp p[a\sm u]$ to the $\sigma$-action on points $p[a\lsm u]$ from Proposition~\ref{prop.points.Pi.sigma.algebra}.
Compare Theorem~\ref{thrm.lam.pp.sm} with Lemma~\ref{lemm.dup}:
\begin{thrm}
\label{thrm.lam.pp.sm}
Suppose $q\in|\points_\Pi|$ and $u\in|\idiomprg|$.
\begin{enumerate*}
\item
$\pp q[a\sm u]=\pp{(q[a\lsm u])}$.

As a corollary taking $p=s\uparrowp{\Pi}$,\ \ $\pp{s\uparrowp{\Pi}}[a\sm t]=\pp{s[a\ssm t]\uparrowp{\Pi}}$.
\item
$\pi\act(\pp q)=\pp{(\pi\act q)}$.
\end{enumerate*}
\end{thrm}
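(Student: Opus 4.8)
The statement to prove is Theorem~\ref{thrm.lam.pp.sm}: for $q\in|\points_\Pi|$ and $u\in|\idiomprg|$,
$$\pp q[a\sm u]=\pp{(q[a\lsm u])}\qquad\text{and}\qquad\pi\act(\pp q)=\pp{(\pi\act q)}.$$

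The plan is to prove each part by unwinding the relevant definitions into pointwise membership conditions and then matching them up using the adjunction between $[a\lsm u]$ on points (Definition~\ref{defn.qasmu}) and the $\amgis$-action $[u\ms a]$ on points (Definition~\ref{defn.p.action}), as packaged in Proposition~\ref{prop.qasmu.iff}. For part~2, the cleaner route is simply to invoke equivariance: the assignment $q\mapsto\pp q$ is specified by a first-order predicate with no extra parameters (namely $r\in\pp q\liff q\subseteq r$), so by part~2 of Theorem~\ref{thrm.equivar} (conservation of support / equivariance of functions) $\pi\act\pp q=\pp{(\pi\act q)}$. Alternatively one can do the concrete calculation $r\in\pi\act\pp q\liff \pi^\mone\act r\in\pp q\liff q\subseteq\pi^\mone\act r\liff \pi\act q\subseteq r\liff r\in\pp{(\pi\act q)}$, using that $\subseteq$ is equivariant. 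Either way part~2 is immediate.

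For part~1, I would first reduce to the case $a\#u$ by a renaming argument: by Lemma~\ref{lemm.sigma.alpha} (for the $\sigma$-action on subsets of the $\amgis$-algebra $\points_\Pi$, which applies since $\pp q$ has finite support by Corollary~\ref{corr.lam.pi.injective}) and by Lemma~\ref{lemm.lam.point.alpha} (for $[a\lsm u]$ on points), both sides are invariant under $\alpha$-renaming $a$, so it suffices to prove the equality when $a\#u$. Then for a prime point $r\in|\points_\Pi|$ I compute, using that $r$ has finite support so I may also assume $a\#r$ (choosing $a$ fresh under the $\new$-quantifier, which is legitimate by Theorem~\ref{thrm.New.equiv} since $r$ is finitely supported): $r\in\pp q[a\sm u]$ iff (part~2 of Proposition~\ref{prop.amgis.iff}, with $a\#u,r$) $r[u\ms a]\in\pp q$ iff (Definition~\ref{defn.some.more.ops}) $q\subseteq r[u\ms a]$ iff (Proposition~\ref{prop.qasmu.iff}, with $a\#u,r$) $q[a\lsm u]\subseteq r$ iff (Definition~\ref{defn.some.more.ops}) $r\in\pp{(q[a\lsm u])}$. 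This chain gives the equality. The corollary follows by taking $q=s\uparrowp{\Pi}$ and using Lemma~\ref{lemm.lsm.uparrow}, which says $s\uparrowp{\Pi}[a\lsm u]=s[a\ssm u]\uparrowp{\Pi}$, together with the fact (Lemma~\ref{lemm.suparrowp.point}) that $s\uparrowp{\Pi}$ is a point.

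I do not anticipate a serious obstacle here; the main point requiring care is the reduction to $a\#u$ and the legitimacy of assuming $a\#r$ inside the $\new$-quantifier, which is exactly the kind of bookkeeping that Theorem~\ref{thrm.New.equiv} and Proposition~\ref{prop.amgis.iff} are designed to handle — the subtlety being that points of $\points_\Pi$ \emph{do} have finite support (Corollary~\ref{corr.point.finsupp}), unlike the general filters of earlier sections, so the simplification in part~2 of Proposition~\ref{prop.amgis.iff} is available. Everything else is a direct application of the adjunction Proposition~\ref{prop.qasmu.iff}.
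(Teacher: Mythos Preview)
Your proposal is correct and follows essentially the same route as the paper: reduce to $a\#u$ by $\alpha$-renaming (Lemmas~\ref{lemm.sigma.alpha} and~\ref{lemm.lam.point.alpha}), then run the chain Proposition~\ref{prop.amgis.iff} $\to$ Definition~\ref{defn.some.more.ops} $\to$ Proposition~\ref{prop.qasmu.iff} $\to$ Definition~\ref{defn.some.more.ops}, with the corollary via Lemma~\ref{lemm.lsm.uparrow} and part~2 via Theorem~\ref{thrm.equivar}. One small wobble: your justification for ``I may also assume $a\#r$'' invokes Theorem~\ref{thrm.New.equiv} and ``the $\new$-quantifier'', which is not quite the right hook---there is no $\new$ binding $a$ here. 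The paper handles this by the same $\alpha$-renaming lemmas you already cited, applied \emph{per test element}: since both $\pp q[a\sm u]$ and $q[a\lsm u]$ are invariant under renaming $a$, one may choose $a$ fresh for $u$ \emph{and} the test element simultaneously. This is what licenses part~2 of Proposition~\ref{prop.amgis.iff} and the hypothesis of Proposition~\ref{prop.qasmu.iff}. (Also, elements of $|\points_\Pi|$ are just ``points'', not ``prime points''.)
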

\begin{proof}
Consider some $p$; we wish to show that $p\in \pp q[a\sm u]\liff p\in \pp{(q[a\lsm u])}$.
By Lemmas~\ref{lemm.sigma.alpha} and~\ref{lemm.lam.point.alpha}(1) we may $\alpha$-rename $a$ in $\pp q[a\sm u]$ and $q[a\lsm u]$ to assume without loss of generality that $a\#u,p$.
We reason as follows; we use part~2 of Proposition~\ref{prop.amgis.iff} because by
Lemma~\ref{lemm.fresh.point.check}(\ref{point.finsupp}) $p$ and $q$ have small support:
$$
\begin{array}{r@{\ }l@{\qquad}l}
p\in \pp q[a\sm u]
\liff&
p[u\ms a]\in \pp q
&\text{Proposition~\ref{prop.amgis.iff}(2)}\ a\#u,p
\\
\liff&
q\subseteq p[u\ms a]
&\text{Definition~\ref{defn.some.more.ops}}
\\
\liff&
q[a\lsm u]\subseteq p
&\text{Proposition~\ref{prop.qasmu.iff}}\ a\#u,p
\\
\liff&
p\in\pp{(q[a\lsm u])}
&\text{Definition~\ref{defn.some.more.ops}}
\end{array}
$$
The corollary follows using Lemma~\ref{lemm.lsm.uparrow}.

The second part is proved by similar calculations, or directly from Theorem~\ref{thrm.equivar}.
\end{proof}

Corollary~\ref{corr.sm.lsm.cup} describes how the $\sigma$-action interacts with strictly supported sets union:
\begin{corr}
\label{corr.sm.lsm.cup}
Suppose $\mathcal P{\subseteq}|\points_\Pi|$ is strictly small-supported (in particular by Lemma~\ref{lemm.finite.strict} it suffices that $\mathcal P$ be finite).
Then
$$
(\bigcup_{p\in\mathcal P}\pp p)[a\sm u]=\bigcup_{p\in\mathcal P}\pp{(p[a\lsm u])}.
$$
\end{corr}
\begin{proof}
From Lemma~\ref{lemm.sub.bigcap}(\ref{sub.bigcup.sfs}) and Theorem~\ref{thrm.lam.pp.sm}.
\end{proof}

\begin{corr}
\label{corr.lsm.fresh}
\begin{enumerate*}
\item
$a\#\pp p$ then $(\pp p)[a\sm u]=\pp p$.
\item
If $b\#\pp p$ then $((b\ a)\act \pp p)[b\sm u]=\pp p[a\sm u]$.
\end{enumerate*}
\end{corr}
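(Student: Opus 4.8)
The statement to prove is Corollary~\ref{corr.lsm.fresh}:
\begin{enumerate*}
\item If $a\#\pp p$ then $(\pp p)[a\sm u]=\pp p$.
\item If $b\#\pp p$ then $((b\ a)\act \pp p)[b\sm u]=\pp p[a\sm u]$.
\end{enumerate*}

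The plan is to derive both parts directly from Theorem~\ref{thrm.lam.pp.sm} (which gives $\pp q[a\sm u]=\pp{(q[a\lsm u])}$ and $\pi\act\pp q=\pp{(\pi\act q)}$) together with the corresponding freshness properties of the $\sigma$-action on points already established in earlier subsections---namely Lemma~\ref{lemm.lam.point.fresh} ($a\#p$ implies $p[a\lsm u]=p$) and Lemma~\ref{lemm.lam.point.alpha} ($b\#p$ implies $p[a\lsm u]=((b\ a)\act p)[b\lsm u]$). Since by Corollary~\ref{corr.lam.pi.injective} we have $\supp(\pp p)=\supp(p)$, the hypothesis $a\#\pp p$ is equivalent to $a\#p$, and similarly $b\#\pp p$ is equivalent to $b\#p$; this is the bridge that lets us transfer the point-level freshness lemmas to the set-of-points level.

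For part~1: assuming $a\#\pp p$, by Corollary~\ref{corr.lam.pi.injective} we get $a\#p$, so by Lemma~\ref{lemm.lam.point.fresh} $p[a\lsm u]=p$, and then by part~1 of Theorem~\ref{thrm.lam.pp.sm}
$$
(\pp p)[a\sm u] = \pp{(p[a\lsm u])} = \pp p .
$$
For part~2: assuming $b\#\pp p$, by Corollary~\ref{corr.lam.pi.injective} $b\#p$, so by Lemma~\ref{lemm.lam.point.alpha} $p[a\lsm u]=((b\ a)\act p)[b\lsm u]$. Applying $\pp{-}$ and using part~1 of Theorem~\ref{thrm.lam.pp.sm} twice, together with part~2 of Theorem~\ref{thrm.lam.pp.sm} (to move the permutation through $\pp{-}$):
$$
\pp p[a\sm u] = \pp{(p[a\lsm u])} = \pp{(((b\ a)\act p)[b\lsm u])} = (\pp{((b\ a)\act p)})[b\sm u] = ((b\ a)\act \pp p)[b\sm u].
$$

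There is essentially no obstacle here: this corollary is a routine repackaging, at the level of sets of points, of facts already proven at the level of points, mediated by the support-preservation fact $\supp(\pp p)=\supp(p)$ and the commutation Theorem~\ref{thrm.lam.pp.sm}. The only minor care needed is to make sure the $\alpha$-renaming in Lemma~\ref{lemm.lam.point.alpha} is applied in the right direction (from $[a\lsm u]$ to $((b\ a)\act p)[b\lsm u]$, with $b\#p$), but this is exactly its stated form. I would simply write the two short displayed calculations above, citing Corollary~\ref{corr.lam.pi.injective}, Lemmas~\ref{lemm.lam.point.fresh} and~\ref{lemm.lam.point.alpha}, and Theorem~\ref{thrm.lam.pp.sm}.
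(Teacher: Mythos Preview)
Your proposal is correct and matches the paper's own proof essentially line for line: the paper also invokes Corollary~\ref{corr.lam.pi.injective} to pass from $a\#\pp p$ (resp.\ $b\#\pp p$) to $a\#p$ (resp.\ $b\#p$), then applies Theorem~\ref{thrm.lam.pp.sm} together with Lemma~\ref{lemm.lam.point.fresh} for part~1 and Lemma~\ref{lemm.lam.point.alpha} for part~2.
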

\begin{proof}
For the first part, suppose $a\#\pp p$.
By Corollary~\ref{corr.lam.pi.injective} $a\#p$.
We use Theorem~\ref{thrm.lam.pp.sm} and Lemma~\ref{lemm.lam.point.fresh}.

For the second part we reason similarly using Theorem~\ref{thrm.lam.pp.sm} and Lemma~\ref{lemm.lam.point.alpha}.
\end{proof}

\begin{prop}
\label{prop.pp.p.commute}
\begin{enumerate*}
\item
\label{item.pp.cap}
$\pp p\cap \pp q=\pp{(p\tand q)}$.
\item
\label{item.lam.pp.tall}
$\freshcap{a}\pp p=\pp{(\tall a.p)}$\ \ ($\tall a.p$ is from Definition~\ref{defn.some.ops}).
\item
\label{item.lam.pp.app}
$\pp p\app \pp q =\pp{(p\app q)}$.
\item
\label{item.lam.pp.ppa}
$\pp q\ppa \pp p =\pp{(q\ppa p)}$.
\end{enumerate*}
\end{prop}
\begin{proof}
We consider each case in turn; with what we have proved so far, the calculations are routine.
$u$ will range over elements of $|\idiom|$:
\begin{enumerate*}
\item
We reason as follows:
\begin{tab2a}
r\in \pp p\cap\pp q
\liff&
p\subseteq r\ \text{and}\ q\subseteq r
&\text{Definition~\ref{defn.some.more.ops}}
\\
\liff&
p\cup q\subseteq r
&\text{Fact}
\\
\liff&
p\in\pp{(p\tand q)}
&\text{Definition~\ref{defn.some.more.ops}}
\end{tab2a}
\item
We reason as follows:
\begin{tab2a}
\freshcap{a}\pp p =&
\bigcap_{u{\in}|\idiomprg|}\pp p[a\sm u]
&\text{Definition~\ref{defn.nu.U}}
\\
=&\bigcap_{u{\in}|\idiomprg|} \pp{(p[a\lsm u])}
&\text{Theorem~\ref{thrm.lam.pp.sm}}
\\
=&\pp{(\bigcup_{u{\in}|\idiomprg|} p[a\lsm u])}
&\text{Fact of Def~\ref{defn.some.more.ops}}
\\
=&\pp{(\tall a.p)}
&\text{Definition~\ref{defn.tall.p}}
\end{tab2a}
\item
We reason as follows:
\begin{tab2a}
r\in \pp p\app\pp q
\liff&
\Exists{p'{\in}\pp p, q'{\in}\pp q}r\in p'\bpp q'
&\text{Proposition~\ref{prop.amgis.iff.pp}}
\\
\liff&
\Exists{p',q'}p\subseteq p'\land q\subseteq q'\land r\in p'\bpp q'
&\text{Definition~\ref{defn.some.more.ops}}
\\
\liff&
\Exists{p',q'}p\subseteq p'\land q\subseteq q'\land p'\app q'\subseteq r
&\text{Definition~\ref{defn.some.more.ops}}
\\
\liff&
p\app q\subseteq r
&\text{Fact}
\\
\liff&
r\in\pp{(p\app q)}
&\text{Definition~\ref{defn.some.more.ops}}
\end{tab2a}
\item
We reason as follows:
\begin{tab2a}
r\in \pp q\ppa\pp p
\liff &
\Forall{q'{\in}\pp q}r\bpp q'\subseteq\pp p
&\text{Proposition~\ref{prop.amgis.iff.pp}}
\\
\liff &
\Forall{q'}q\subseteq q'\limp \pp{(r\app q')}\subseteq\pp p
&\text{Definition~\ref{defn.some.more.ops}}
\\
\liff &
\Forall{q'}q\subseteq q'\limp p\subseteq r\app q'
&\text{Definition~\ref{defn.some.more.ops}}
\\
\liff& p\subseteq r\app q
&\text{Fact}
\\
\liff& q\ppa p\subseteq r
&\text{Lemma~\ref{lemm.ppa.app.lambda}}
\\
\liff& r\in\pp{(q\ppa p)}
&\text{Definition~\ref{defn.some.more.ops}}
\qedhere\end{tab2a}
\end{enumerate*}
\end{proof}

Corollary~\ref{corr.filter.lam.pp.app.commute} resembles Lemma~\ref{lemm.filter.pp.app.commute} and is proved similarly:
\begin{corr}
\label{corr.filter.lam.pp.app.commute}
Suppose $p,q\in|\points_\Pi|$ and $u\in|\idiom|$.
Then $\app$ and $\ppa$ validate axioms \rulefont{\sigma\app} and \rulefont{\sigma\ppa} from Figure~\ref{fig.app.compatible}:
$$
\begin{array}{r@{\ }l}
(\pp p\app \pp q)[a\sm u] =& \pp p[a\sm u]\app \pp q[a\sm u]
\\
(\pp{b\uparrowp{\Pi}}\ppa \pp p)[a\sm u] =& \pp{b\uparrowp{\Pi}}\ppa \pp p[a\sm u]
\end{array}
$$
\end{corr}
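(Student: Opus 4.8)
The statement Corollary~\ref{corr.filter.lam.pp.app.commute} asserts that the operations $\app$ and $\ppa$ on sets of points of the form $\pp p$ satisfy the compatibility axioms \rulefont{\sigma\app} and \rulefont{\sigma\ppa}. The text explicitly tells us this ``resembles Lemma~\ref{lemm.filter.pp.app.commute} and is proved similarly'', so the plan is to mimic that earlier proof, substituting the appropriate results established for $\points_\Pi$ in place of their generic $\pp{\ns D}$ counterparts.

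The plan is as follows. For the first equation, I would chain together: first rewrite $\pp p\app \pp q$ as $\pp{(p\app q)}$ using part~\ref{item.lam.pp.app} of Proposition~\ref{prop.pp.p.commute}; then push the $\sigma$-action inside using Theorem~\ref{thrm.lam.pp.sm} part~1 to get $\pp{((p\app q)[a\lsm u])}$; then apply Corollary~\ref{corr.lsm.app} (second bullet, $(p\app q)[a\lsm u]=p[a\lsm u]\app q[a\lsm u]$) to rewrite the inner term; then apply Theorem~\ref{thrm.lam.pp.sm} part~1 again in reverse to distribute, obtaining $\pp{(p[a\lsm u])}\app \pp{(q[a\lsm u])}$; and finally apply Proposition~\ref{prop.pp.p.commute}(\ref{item.lam.pp.app}) once more plus Theorem~\ref{thrm.lam.pp.sm} to get $\pp p[a\sm u]\app \pp q[a\sm u]$. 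This is exactly the shape of the displayed calculation in the proof of Lemma~\ref{lemm.filter.pp.app.commute}, with Theorem~\ref{thrm.lam.pp.sm} playing the role of Lemma~\ref{lemm.dup} and Corollary~\ref{corr.lsm.app} playing the role of \rulefont{\sigma\app}.

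For the second equation involving $\ppa$, the analogous chain runs: $(\pp{b\uparrowp{\Pi}}\ppa \pp p)[a\sm u] = \pp{(b\uparrowp{\Pi}\ppa p)}[a\sm u]$ by Proposition~\ref{prop.pp.p.commute}(\ref{item.lam.pp.ppa}); $= \pp{((b\uparrowp{\Pi}\ppa p)[a\lsm u])}$ by Theorem~\ref{thrm.lam.pp.sm}; $= \pp{(b\uparrowp{\Pi}\ppa (p[a\lsm u]))}$ by Proposition~\ref{prop.lsm.special.distrib} (which is precisely the $\points_\Pi$-analogue of axiom \rulefont{\sigma\ppa}, with the freshness hypothesis $b\#u$); $= \pp{b\uparrowp{\Pi}}\ppa \pp{(p[a\lsm u])}$ by Proposition~\ref{prop.pp.p.commute}(\ref{item.lam.pp.ppa}); $= \pp{b\uparrowp{\Pi}}\ppa \pp p[a\sm u]$ by Theorem~\ref{thrm.lam.pp.sm}. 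One bookkeeping point: Proposition~\ref{prop.lsm.special.distrib} carries the side-condition $b\#u$, which in the $\sigma\ppa$ context is harmless since $b$ is a bound/fresh name and can be chosen fresh for $u$; I would note this explicitly, invoking Lemma~\ref{lemm.lam.point.alpha} (or \rulefont{\sigma\alpha}) to $\alpha$-rename $b$ if needed, just as Lemma~\ref{lemm.filter.adjoint.compat} uses Lemma~\ref{lemm.some.fresh.s} to pick a fresh representative.

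I do not expect any serious obstacle here — all the heavy lifting has been done in Proposition~\ref{prop.pp.p.commute}, Theorem~\ref{thrm.lam.pp.sm}, Corollary~\ref{corr.lsm.app}, and especially Proposition~\ref{prop.lsm.special.distrib}, which the text itself flags as ``the key technical result'' for exactly this purpose. The only mild subtlety is making sure the freshness side-condition on $b$ in the $\ppa$ case is discharged cleanly and that the $\alpha$-renaming of $a$ (to assume $a\#u$) is done where Theorem~\ref{thrm.lam.pp.sm} or Corollary~\ref{corr.lsm.app} needs it; both are routine nominal manoeuvres already used repeatedly in the surrounding text. So the write-up is essentially two short aligned display calculations with the justifications annotated on the right, in the same style as the proof of Lemma~\ref{lemm.filter.pp.app.commute}.
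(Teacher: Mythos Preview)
Your proposal is correct and matches the paper's proof essentially step-for-step: both chains of equalities invoke Proposition~\ref{prop.pp.p.commute}, Theorem~\ref{thrm.lam.pp.sm}, Corollary~\ref{corr.lsm.app} (for $\app$), and Proposition~\ref{prop.lsm.special.distrib} (for $\ppa$) in exactly the order you describe. Your explicit attention to the freshness side-condition $b\#u$ in Proposition~\ref{prop.lsm.special.distrib} is actually more careful than the paper, which silently elides it.
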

\begin{proof}
We reason as follows:
\begin{tab2b}
(\pp p\app\pp q)[a\sm u]
=&\pp{(p\app q)}[a\sm u]
&\text{Part~\ref{item.lam.pp.app} of Proposition~\ref{prop.pp.p.commute}}
\\
=&\pp{((p\app q)[a\sm u])}
&\text{Part~1 of Theorem~\ref{thrm.lam.pp.sm}}
\\
=&\pp{(p[a\sm u]\app q[a\sm u])}
&\text{Corollary~\ref{corr.lsm.app}}
\\
=&\pp{(p[a\sm u])}\app \pp{(q[a\sm u])}
&\text{Part~\ref{item.lam.pp.app} of Proposition~\ref{prop.pp.p.commute}}
\\
=&\pp p[a\sm u]\app \pp q[a\sm u]
&\text{Part~1 of Theorem~\ref{thrm.lam.pp.sm}}
\\[1.5ex]
(\pp{b\uparrowp{\Pi}} \ppa\pp p)[a\sm u]
=&\pp{(b\uparrowp{\Pi}\ppa p)}[a\sm u]
&\text{Part~\ref{item.lam.pp.ppa} of Proposition~\ref{prop.pp.p.commute}}
\\
=&\pp{((b\uparrowp{\Pi}\ppa p)[a\sm u])}
&\text{Part~1 of Theorem~\ref{thrm.lam.pp.sm}}
\\
=&\pp{(b\uparrowp{\Pi}\ppa (p[a\sm u]))}
&\text{Proposition~\ref{prop.lsm.special.distrib}}
\\
=&\pp{b\uparrowp{\Pi}}\ppa \pp{(p[a\sm u])}
&\text{Part~\ref{item.lam.pp.ppa} of Proposition~\ref{prop.pp.p.commute}}
\\
=&\pp{b\uparrowp{\Pi}}\ppa \pp p[a\sm u]
&\text{Part~1 of Theorem~\ref{thrm.lam.pp.sm}}
\qedhere\end{tab2b}
\end{proof}

\begin{corr}
\label{corr.pp.commute.lam}
If $s',s'\in|\idiom|$ then $\pp{s'\uparrowp{\Pi}}\app \pp{s\uparrowp{\Pi}}=\pp{(s's)\uparrowp{\Pi}}$ and $\tlam a.(\pp{s\uparrowp{\Pi}})=\pp{(\lam{a}s)\uparrowp{\Pi}}$.
\end{corr}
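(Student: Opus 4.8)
The plan is to prove Corollary~\ref{corr.pp.commute.lam} by reducing both claimed equalities to results already established in this subsection, primarily Proposition~\ref{prop.pp.p.commute}, Proposition~\ref{prop.tall.lam.uparrow}, Lemma~\ref{lemm.uparrow.app}, and Notation~\ref{nttn.lambda}.

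First, for the application clause, I would combine part~\ref{item.lam.pp.app} of Proposition~\ref{prop.pp.p.commute} (which says $\pp p\app \pp q=\pp{(p\app q)}$ for points $p,q$) with Lemma~\ref{lemm.uparrow.app} (which says $s'\uparrowp{\Pi}\app s\uparrowp{\Pi}=(s's)\uparrowp{\Pi}$). Concretely: $\pp{s'\uparrowp{\Pi}}\app \pp{s\uparrowp{\Pi}} = \pp{(s'\uparrowp{\Pi}\app s\uparrowp{\Pi})} = \pp{((s's)\uparrowp{\Pi})}$, using the fact (Lemma~\ref{lemm.suparrowp.point}) that $s'\uparrowp{\Pi}$ and $s\uparrowp{\Pi}$ are points so that the two lemmas apply. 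This step is a one-line chain of equalities.

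Second, for the $\lambda$ clause, I would unfold Notation~\ref{nttn.lambda}: $\tlam a.(\pp{s\uparrowp{\Pi}}) = \tall a.(\prg a\ppa \pp{s\uparrowp{\Pi}})$. Here $\prg a$ in $\pp{\points_\Pi}$ is $\pp{a\uparrowp{\Pi}}$, since the program structure on sets of points is inherited from the denotation $\ddenot{a}=\prg a$ which on points is $a\uparrowp{\Pi}$; more carefully, $\prg_{\pp{\points_\Pi}}(a) = \pp{(\prg_{\points_\Pi}a)} = \pp{a\uparrowp{\Pi}}$ by Definition~\ref{defn.some.more.ops} and the impredicative structure. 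Then by part~\ref{item.lam.pp.ppa} of Proposition~\ref{prop.pp.p.commute}, $\pp{a\uparrowp{\Pi}}\ppa \pp{s\uparrowp{\Pi}} = \pp{(a\uparrowp{\Pi}\ppa s\uparrowp{\Pi})}$, and then by part~\ref{item.lam.pp.tall} of Proposition~\ref{prop.pp.p.commute}, $\freshcap{a}\pp{(a\uparrowp{\Pi}\ppa s\uparrowp{\Pi})} = \pp{(\tall a.(a\uparrowp{\Pi}\ppa s\uparrowp{\Pi}))}$. Finally, Proposition~\ref{prop.tall.lam.uparrow} gives $\tall a.(a\uparrowp{\Pi}\ppa s\uparrowp{\Pi}) = (\lam{a}s)\uparrowp{\Pi}$, so the whole expression equals $\pp{(\lam{a}s)\uparrowp{\Pi}}$.

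The main obstacle I anticipate is not any of the calculations themselves — they are all immediate chains once the right earlier results are cited — but rather the bookkeeping around the program structure: making sure that $\prg a$ as it appears after unfolding Notation~\ref{nttn.lambda} in the context of $\pp{\points_\Pi}$ really is $\pp{a\uparrowp{\Pi}}$ (i.e. that the impredicative morphism $\prg$ on sets of points sends $a$ to $\pp{a\uparrowp{\Pi}}$, which requires checking that $\points_\Pi$ is impredicative with $\prg_{\points_\Pi}a = a\uparrowp{\Pi}$ — something that should be available from the surrounding construction of $\points_\Pi$ as a nominal spectral space with $\bpp$, or can be taken as the definition). Once that identification is pinned down, part~\ref{item.lam.pp.ppa} and part~\ref{item.lam.pp.tall} of Proposition~\ref{prop.pp.p.commute} together with Proposition~\ref{prop.tall.lam.uparrow} close the argument cleanly, and I would present the proof as simply ``unfold Notation~\ref{nttn.lambda} and apply Proposition~\ref{prop.pp.p.commute}, Lemma~\ref{lemm.uparrow.app}, and Proposition~\ref{prop.tall.lam.uparrow}.''
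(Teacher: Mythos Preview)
Your proposal is correct and follows essentially the same route as the paper: for $\app$ you chain part~\ref{item.lam.pp.app} of Proposition~\ref{prop.pp.p.commute} with Lemma~\ref{lemm.uparrow.app}, and for $\tlam$ you unfold Notation~\ref{nttn.lambda}, identify $\prg a=\pp{a\uparrowp{\Pi}}$ (this is exactly clause~\ref{item.prg.points.Pi} of Definition~\ref{defn.points.top}), then apply parts~\ref{item.lam.pp.ppa} and~\ref{item.lam.pp.tall} of Proposition~\ref{prop.pp.p.commute} and finish with Proposition~\ref{prop.tall.lam.uparrow}. The bookkeeping worry you flag is resolved precisely by that clause of Definition~\ref{defn.points.top}, so no extra work is needed.
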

\begin{proof}
We reason as follows:
\begin{tab2b}
\pp{s'\uparrowp{\Pi}}\app\pp{s\uparrowp{\Pi}}
=&\pp{(s'\uparrowp{\Pi}\app s\uparrowp{\Pi})}
&\text{Part~\ref{item.lam.pp.app} of Prop~\ref{prop.pp.p.commute}}
\\
=&\pp{(s's)\uparrowp{\Pi}}
&\text{Lemma~\ref{lemm.uparrow.app}}
\\[1.2ex]
\tlam a.(\pp{s\uparrowp{\Pi}})=&\tall a.(\pp{a\uparrowp{\Pi}}\ppa \pp{s\uparrowp{\Pi}})
&\text{Notation~\ref{nttn.lambda}}
\\
=&\tall a.\pp{(a\uparrowp{\Pi}\ppa s\uparrowp{\Pi})}
&\text{Part~\ref{item.lam.pp.ppa} of Prop~\ref{prop.pp.p.commute}}
\\
=&\pp{(\tall a.(a\uparrowp{\Pi}\ppa s\uparrowp{\Pi}))}
&\text{Part~\ref{item.lam.pp.tall} of Prop~\ref{prop.pp.p.commute}}
\\
=&\pp{(\lam{a}s)\uparrowp{\Pi}}
&\text{Proposition~\ref{prop.tall.lam.uparrow}}
\qedhere\end{tab2b}
\end{proof}

We conclude with a small calculation on $\ppa$:
\begin{lemm}
Suppose $p\in|\points_\Pi|$ and $s,t\in|\idiom|$.
Then the following are equivalent:
$$
p\in \pp{t\uparrowp{\Pi}}\ppa \pp{s\uparrowp{\Pi}}
\quad\Leftrightarrow\quad
s\in p\app t\uparrowp{\Pi}
\quad\Leftrightarrow\quad
\Exists{s'{\in}p}(s't\to_\Pi s).
$$
\end{lemm}
\begin{proof}
We reason as follows:
$$
\begin{array}[b]{r@{\ }l@{\quad}l}
p\in\pp{t\uparrowp{\Pi}}\ppa\pp{s\uparrowp{\Pi}}
\liff&
p\in\pp{(t\uparrowp{\Pi}\ppa s\uparrowp{\Pi})}
&\text{Part~\ref{item.lam.pp.ppa} of Proposition~\ref{prop.pp.p.commute}}
\\
\liff&
t\uparrowp{\Pi}\ppa s\uparrowp{\Pi}\subseteq p
&\text{Lemma~\ref{lemm.pp.sub}}
\\
\liff&
s\uparrowp{\Pi}\subseteq p\app t\uparrowp{\Pi}
&\text{Lemma~\ref{lemm.ppa.app.lambda}}
\\
\liff&
s\in p\app t\uparrowp{\Pi}
&\text{Part~\ref{item.point.to} of Definition~\ref{defn.pi.point}}
\\
\liff&
\Exists{s'{\in}p}(s't\to_\Pi s)
&\text{Unpacking Definition~\ref{defn.some.ops}}
\end{array}
\qedhere$$
\end{proof}

\subsection{A topology}

\subsubsection{Giving $\points_\Pi$ a topology}

\begin{defn}
\label{defn.points.top}
Make $\points_\Pi$ into a nominal spectral space with $\bpp$ (Definition~\ref{defn.nom.top.spectral.bpp}):
\begin{enumerate*}
\item
The topology generated under small-supported unions by $\{\pp p\mid p\in|\points_\Pi|\}$.
\item
The combination operation $p\bpp q=\pp{(p\app q)}$ from Definition~\ref{defn.some.more.ops}.
\item
The pointwise actions from Definitions~\ref{defn.p.action} (for $\pi$ and $[u\ms a]$) and~\ref{defn.sub.sets.pp} (for $\app$ and $\ppa$).
\item\label{item.prg.points.Pi}
$\prg_{\points_\Pi} u=\pp{u\uparrowp{\Pi}}$ for $u\in|\idiomprg|$.
\end{enumerate*}
\end{defn}

\begin{rmrk}
\label{rmrk.lam.when.open}
So $U\in\otop{\points_\Pi}$ (meaning that $U$ is open) when there exists some small-supported $\mathcal P\subseteq|\points_\Pi|$ (Definition~\ref{defn.strictpow}) with $U=\bigcup\{\pp p\mid p\in\mathcal P\}$.
\end{rmrk}

\begin{prop}
\label{prop.points.sigma.bpp}
Definition~\ref{defn.points.top} does indeed determine a nominal $\sigma\bpp$-topological space in the sense of Definition~\ref{defn.nom.top.bpp}.
\end{prop}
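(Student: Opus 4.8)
The statement asserts that $\points_\Pi$, equipped with the data of Definition~\ref{defn.points.top}, is a nominal $\sigma\bpp$-topological space in the sense of Definition~\ref{defn.nom.top.bpp}; that is, it is a nominal $\sigma$-topological space (Definition~\ref{defn.nom.top}) whose points additionally carry a combination operator $\bpp$ (Definition~\ref{defn.bpp}). The plan is to verify the conditions of Definition~\ref{defn.nom.top} one at a time, reusing the commutation machinery already developed for $\pp{\text{-}}$ and for the $\sigma$-action on points.

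First I would record that the underlying points-structure is an $\amgis$-algebra: this is Corollary~\ref{corr.points.Pi.amgis}, and the combination operator $p\bpp q=\pp{(p\app q)}$ of Definition~\ref{defn.some.more.ops} is equivariant by Theorem~\ref{thrm.equivar} (since $\app$ on points is equivariant by Lemma~\ref{lemm.some.ops.make.points} and the discussion around Definition~\ref{defn.some.ops}), so the points of $\points_\Pi$ form an $\amgis\bpp$-algebra in the sense of Definition~\ref{defn.bpp}. Next I would check that each basic open $\pp p$ (for $p\in|\points_\Pi|$) lies in $|\f{pow}(\points_\Pi)|$: it has finite support by Corollary~\ref{corr.lam.pi.injective} (indeed $\supp(\pp p)=\supp(p)$, which is finite by Corollary~\ref{corr.point.finsupp}), and the two $\sigma$-conditions ``$a\#\pp p\limp \pp p[a\sm u]=\pp p$'' and ``$b\#\pp p\limp \pp p[a\sm b]=(b\ a)\act\pp p$'' are exactly Corollary~\ref{corr.lsm.fresh}. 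For a general open set $U=\bigcup\{\pp p\mid p\in\mathcal P\}$ with $\mathcal P$ strictly finitely supported, finite support follows from Lemma~\ref{lemm.strict.support} together with $\supp(\pp p)=\supp(p)\subseteq\supp(\mathcal P)$; the $\sigma$-conditions then propagate through the union using Corollary~\ref{corr.sm.lsm.cup} (and Corollary~\ref{corr.lsm.fresh} applied termwise, noting that $a\#U$ forces $a\#p$ for each $p\in\mathcal P$ by strict finite support).

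Then I would verify the four closure conditions of Definition~\ref{defn.nom.top}. Condition~\ref{item.top.sub} (closure under $\pi$ and $[a\sm u]$): for $\pi$ use Theorem~\ref{thrm.equivar}; for $[a\sm u]$, write $U=\bigcup\{\pp p\mid p\in\mathcal P\}$ and apply Corollary~\ref{corr.sm.lsm.cup} to get $U[a\sm u]=\bigcup\{\pp{(p[a\lsm u])}\mid p\in\mathcal P\}$, which is again a strictly finitely supported union (the new index set is strictly finitely supported by Lemma~\ref{lemm.lsm.P}/Lemma~\ref{lemm.lam.point.alpha} and Theorem~\ref{thrm.no.increase.of.supp}). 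Condition~2: $\varnothing$ is open by construction (empty union) and $|\points_\Pi|=\pp\varnothing$ since $\varnothing$ is a point (Remark~\ref{rmrk.varnothing.is.a.point}) contained in every point. Condition~3 (closure under binary intersection): if $U=\bigcup\{\pp p\mid p\in\mathcal P\}$ and $V=\bigcup\{\pp q\mid q\in\mathcal Q\}$ then $U\cap V=\bigcup\{\pp p\cap\pp q\mid p\in\mathcal P, q\in\mathcal Q\}=\bigcup\{\pp{(p\tand q)}\mid p\in\mathcal P, q\in\mathcal Q\}$ by part~\ref{item.pp.cap} of Proposition~\ref{prop.pp.p.commute}, and this is strictly finitely supported by Theorem~\ref{thrm.no.increase.of.supp} (since $\supp(p\tand q)\subseteq\supp(p)\cup\supp(q)$). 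Condition~\ref{item.top.strict} (closure under strictly finitely supported unions): a strictly finitely supported union of strictly finitely supported unions of basic opens is again such, by Corollary~\ref{corr.strict.strict}.

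The main obstacle, such as it is, is bookkeeping rather than conceptual: one must be careful at every step that the index sets remaining after applying $[a\sm u]$, $\cap$, or $\bigcup$ are \emph{strictly} finitely supported (not merely finitely supported), since it is strict finite support that the topology is built from and that makes the freshness arguments go through. The key enabling facts — $\supp(\pp p)=\supp(p)$ (Corollary~\ref{corr.lam.pi.injective}), conservation of support (Theorem~\ref{thrm.no.increase.of.supp}), the strict-support closure corollaries (Lemma~\ref{lemm.strict.support}, Corollary~\ref{corr.strict.strict}), and the commutation identities of Proposition~\ref{prop.pp.p.commute}, Theorem~\ref{thrm.lam.pp.sm}, and Corollaries~\ref{corr.sm.lsm.cup} and~\ref{corr.lsm.fresh} — are all already in place, so the proof is essentially an assembly of these. (Closure under $\freshcap{a}$ is deliberately \emph{not} required here; it will be handled later via coherence, in the spirit of Definition~\ref{defn.coherent}.)
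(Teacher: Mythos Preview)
Your proposal is correct and follows essentially the same route as the paper's proof: verify the $\amgis\bpp$-algebra structure on points, check conditions (i)--(iii) on open sets via Corollary~\ref{corr.lsm.fresh} and strict-support bookkeeping, then dispatch the four closure conditions using Corollary~\ref{corr.sm.lsm.cup}, Remark~\ref{rmrk.varnothing.is.a.point}, part~\ref{item.pp.cap} of Proposition~\ref{prop.pp.p.commute}, and Corollary~\ref{corr.strict.strict}. One small nit: the second $\sigma$-condition ``$b\#\pp p\limp \pp p[a\sm b]=(b\ a)\act\pp p$'' is not literally part~2 of Corollary~\ref{corr.lsm.fresh} (which gives the $\alpha$-renaming form instead); it follows more directly from Theorem~\ref{thrm.lam.pp.sm} combined with Lemma~\ref{lemm.p.sigma.pi}, though the paper itself glosses over this in the same way.
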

\begin{proof}
We start with the conditions from Definition~\ref{defn.nom.top}.

By Corollary~\ref{corr.points.Pi.amgis} $(|\points_\Pi|,\act,\idiom,\amgis)$ is an $\amgis$-algebra.

Consider $U,V\in\otop{\points_\Pi}$.
As noted in Remark~\ref{rmrk.lam.when.open} $U=\bigcup\{\pp p\mid p\in\mathcal P\}$ and
$V=\bigcup\{\pp q\mid q\in\mathcal Q\}$
for some small-supported $\mathcal P,\mathcal Q\subseteq|\points_\Pi|$.
By Theorem~\ref{thrm.no.increase.of.supp} each $U$ is small-supported.
Furthermore:
\begin{enumerate*}
\item
\emph{$\varnothing$ and $|\points_\Pi|$ are open.}\quad
$\varnothing$ (the empty set of points) is open by construction of the topology, and $|\points_\Pi|=\bigcup\{\pp \varnothing\}$ is open---we noted in Remark~\ref{rmrk.varnothing.is.a.point} that $\varnothing$ the empty set of phrases is a point.
Then $\pp\varnothing$ is the set of all points and $\{\pp\varnothing\}$ is strictly small-supported (by $\varnothing$ the empty set of atoms).
\item
\emph{If $U$ and $V$ are open then so are $U\cap V$ and $U\cup V$.}\quad
It is a fact of sets that $U\cap V=\bigcup\{\pp p\cap\pp q\mid p\in\mathcal P,q\in\mathcal Q\}$.
We use part~\ref{item.pp.cap} of Proposition~\ref{prop.pp.p.commute} and Theorem~\ref{thrm.no.increase.of.supp}.

For $U\cup V$, we note that the union of two small-supported sets is small-supported.
\item
\emph{If $\mathcal U$ is a small-supported set of open sets then $\bigcup\mathcal U$ is open.}\quad
Using Theorem~\ref{thrm.no.increase.of.supp}.
\end{enumerate*}
Finally, we consider Definition~\ref{defn.nom.top.bpp} and note that $\bpp$ is a combination operator (that $\bpp$ is equivariant follows immediately from Theorem~\ref{thrm.equivar}).
\end{proof}

\subsubsection{The compact open sets of $\points_\Pi$}

\begin{lemm}
\label{lemm.p.in.U}
If $U\in\otop{\points_\Pi}$ then
$$
p\in U \quad\text{if and only if}\quad \pp p\subseteq U .
$$
\end{lemm}
\begin{proof}
Suppose $p\in U$.
By Definition~\ref{defn.points.top} $U=\bigcup\{\pp q\mid q\in \mathcal Q\}$ for some small-supported $\mathcal Q\subseteq|\points_\Pi|$, and there exists $q\in\mathcal Q$ with $p\in \pp q$.
By Lemma~\ref{lemm.pp.sub}, $\pp p\subseteq\pp q$.

The reverse implication is easy since $p\in\pp p$ by construction in Definition~\ref{defn.some.more.ops}.
\end{proof}

\begin{lemm}
\label{lemm.pp.p.compact}
If $p\in|\points_\Pi|$ then $\pp p$ is compact in $\points_\Pi$ with the topology from Definition~\ref{defn.points.top}.
In symbols: $\pp p\in\ctop{\points_\Pi}$.

As a corollary, if $s\in|\idiom|$ then $\pp{s\uparrowp{\Pi}}\in\ctop{\points_\Pi}$.
\end{lemm}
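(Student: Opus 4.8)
\textbf{Proof proposal for Lemma~\ref{lemm.pp.p.compact}.}

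The plan is to show directly that any cover of $\pp p$ by basic open sets has a finite subcover, and then to promote this to arbitrary covers using the fact that the topology of $\points_\Pi$ is generated by the $\pp q$ under strictly finitely supported unions. So suppose $\mathcal U\subseteq\otop{\points_\Pi}$ is strictly finitely supported and $\pp p\subseteq\bigcup\mathcal U$. Using Remark~\ref{rmrk.lam.when.open} each $U\in\mathcal U$ is a strictly finitely supported union $\bigcup\{\pp{q}\mid q\in\mathcal Q_U\}$, and by Corollary~\ref{corr.strict.strict} (a strictly finitely supported set of strictly finitely supported sets is strictly finitely supported) the collection $\mathcal Q=\bigcup_{U\in\mathcal U}\mathcal Q_U$ is strictly finitely supported; so without loss of generality I would assume from the start that $\mathcal U=\{\pp q\mid q\in\mathcal Q\}$ for a strictly finitely supported set of points $\mathcal Q$.

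The key observation is then that $p\in\pp p\subseteq\bigcup\{\pp q\mid q\in\mathcal Q\}$, so by Lemma~\ref{lemm.p.in.U} there is a single $q\in\mathcal Q$ with $\pp p\subseteq\pp q$; equivalently (Lemma~\ref{lemm.pp.sub}) $q\subseteq p$. Hence $\{\pp q\}$ is already a finite (one-element) subcover of $\pp p$. This is the whole content of compactness here: the basis $\{\pp q\}$ is closed under the relevant structure and $\pp p$ sits \emph{inside} a single basic open in any cover, exactly dually to the way a compact open in an ordinary spectral space corresponds to a single lattice element. The corollary $\pp{s\uparrowp{\Pi}}\in\ctop{\points_\Pi}$ is then immediate, since $s\uparrowp{\Pi}$ is a point by Lemma~\ref{lemm.suparrowp.point}.

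I do not expect a serious obstacle; the only point requiring a little care is the reduction to a cover by basic opens, where one must invoke Corollary~\ref{corr.strict.strict} to know that refining each open into its generating $\pp q$'s keeps the whole family strictly finitely supported (a plain finitely supported argument would not suffice here, which is the recurring theme flagged in Remark~\ref{rmrk.why.strict}). Once that bookkeeping is done, Lemma~\ref{lemm.p.in.U} does all the work and no genuine compactness argument (in the sense of extracting a finite subcover from an infinite one) is actually needed — the subcover has size one.
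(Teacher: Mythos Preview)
Your argument is correct and shares the same core idea as the paper's proof: since $p\in\pp p$, the point $p$ lies in some member of the cover, and Lemma~\ref{lemm.p.in.U} then gives a one-element subcover. However, your reduction to a cover by basic opens $\pp q$ is unnecessary: Lemma~\ref{lemm.p.in.U} is stated for an arbitrary open $U\in\otop{\points_\Pi}$, not just for basic opens, so the paper applies it directly to whichever $U\in\mathcal U$ contains $p$ and is done in two lines. Your invocation of Corollary~\ref{corr.strict.strict} and the discussion of strict finite support, while not wrong, are redundant here.
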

\begin{proof}
Suppose $\mathcal U$ covers $\pp p$.
Since $p\in\pp p$, also $p\in U$ for some $U\in\mathcal U$.
By Lemma~\ref{lemm.p.in.U} $\pp p\subseteq X$ and so $\pp p$ is covered by $\{X\}$.

The corollary follows from Lemma~\ref{lemm.suparrowp.point}.
\end{proof}

\begin{lemm}
\label{lemm.pp.U.compact}
If ${U\in\otop{\points_\Pi}}$ is compact then ${U=\bigcup\{\pp p\mid p\in\mathcal P\}}$ for some finite $\mathcal P\subseteq |\points_\Pi|$.
\end{lemm}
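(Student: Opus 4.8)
The plan is to mimic the proof of Proposition~\ref{prop.clop.FB}, exploiting that the compact opens of $\points_\Pi$ are exactly the sets $\pp p$. First I would appeal to the construction of the topology in Definition~\ref{defn.points.top} (see Remark~\ref{rmrk.lam.when.open}): since $U$ is open there is a strictly finitely supported set of points $\mathcal P'\subseteq|\points_\Pi|$ with $U=\bigcup\{\pp p\mid p\in\mathcal P'\}$.

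Next I would check that $\{\pp p\mid p\in\mathcal P'\}$ is a \emph{cover} of $U$ in the sense of Definition~\ref{defn.n.closed}: it visibly covers $U$, and it is strictly finitely supported because $p\mapsto\pp p$ is an equivariant injection with $\supp(\pp p)=\supp(p)$ by Corollary~\ref{corr.lam.pi.injective}, so by Lemma~\ref{lemm.strict.support} (equivalently part~\ref{item.conservation.of.support} of Theorem~\ref{thrm.no.increase.of.supp}) the image of the strictly finitely supported set $\mathcal P'$ is again strictly finitely supported. Moreover each $\pp p$ is open and compact by Lemma~\ref{lemm.pp.p.compact}, so this is in particular a cover of $U$ by compact opens.

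Finally, since $U$ is compact, this cover has a finite subcover $\{\pp p_1,\dots,\pp p_n\}\subseteq\{\pp p\mid p\in\mathcal P'\}$, with each $p_i\in\mathcal P'$. Putting $\mathcal P=\{p_1,\dots,p_n\}$, the subcover property gives $U\subseteq\pp p_1\cup\cdots\cup\pp p_n$, while $\pp p_i\subseteq U$ for each $i$ because $p_i\in\mathcal P'$ and $U$ is the union of the $\pp p$ for $p\in\mathcal P'$; hence $U=\bigcup\{\pp p\mid p\in\mathcal P\}$ with $\mathcal P$ finite, as required.

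There is no real obstacle here; the one place that needs a moment's care is confirming that $\{\pp p\mid p\in\mathcal P'\}$ genuinely satisfies the strict-finite-support requirement built into the definition of a cover, which is exactly what the conservation-of-support results supply.
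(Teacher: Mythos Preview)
Your proof is correct and follows essentially the same approach as the paper: write $U$ as a strictly finitely supported union $\bigcup\{\pp p\mid p\in\mathcal P'\}$ using the construction of the topology, then extract a finite subcover $\{\pp p_1,\dots,\pp p_n\}$ by compactness and set $\mathcal P=\{p_1,\dots,p_n\}$. The paper's version is more terse and does not spell out the strict-finite-support check for the cover or the two inclusions, but the argument is the same.
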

\begin{proof}
Suppose $U$ is compact.
By construction $U=\bigcup\mathcal P'$ for some small-supported (but not necessarily finite) $\mathcal P'\subseteq|\points_\Pi|$.
By compactness this has a finite subcover $\{\pp p_1,\dots,\pp p_n\}$.
We take $\mathcal P=\{p_1,\dots,p_n\}$.
\end{proof}

\begin{rmrk}
We have mentioned that the canonical model $\points_\Pi$ is not just a replay of $F(\ns D)$ from
the duality proof (Definition~\ref{defn.F}).
So compare
\begin{itemize*}
\item
Theorem~\ref{thrm.pp.x.clopen}, which identifies compacts in $F(\ns D)$ with sets of points of the form $\pp x$, with
\item
Lemma~\ref{lemm.pp.U.compact}, which identifies compacts in $\points_\Pi$ with \emph{finite unions} of sets of the form $\pp p$.
\end{itemize*}
This is related to issues discussed in Remarks~\ref{rmrk.odd.ops} and~\ref{rmrk.no.sigma.distrib.points}.
\end{rmrk}

\subsubsection{Interaction of $\app$ and $\ppa$ with $\cup$ and $\subseteq$}

Lemmas~\ref{lemm.some.monotone} and~\ref{lemm.bpp.is.app.for.points.Pi} are useful for Corollary~\ref{corr.bpp.ppa.points.Pi}:
\begin{lemm}
\label{lemm.some.monotone}
Suppose $p,q,q'\in|\points_\Pi|$ and $q\subseteq q'$.
Then
$p\app q\subseteq p\app q'$ and $p\bpp q'\subseteq p\bpp q$.
\end{lemm}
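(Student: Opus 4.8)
The statement has two halves: monotonicity of $\app$ in its right argument, and contravariance of $\bpp$ in its right argument. Both are essentially unfoldings of Definitions~\ref{defn.some.ops} and~\ref{defn.some.more.ops}, so the work is short. First I would recall that $p\app q=\bigcup\{(st)\uparrowp{\Pi}\mid s\in p,\ t\in q\}$. If $q\subseteq q'$ then every pair $(s,t)$ with $s\in p$ and $t\in q$ is also a pair with $s\in p$ and $t\in q'$, so the indexing family for $p\app q$ is a subfamily of that for $p\app q'$, and hence $p\app q\subseteq p\app q'$. (Equivalently one can cite the elementary sets fact that a union over a smaller family is contained in the union over a larger one; this is the same bookkeeping used in Lemma~\ref{lemm.bigcup.app}.)

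For the second half, recall that $p\bpp q=\pp{(p\app q)}$, where $\pp{(-)}$ is from Definition~\ref{defn.some.more.ops}: $\pp r=\{r'\in|\points_\Pi|\mid r\subseteq r'\}$. So $p\bpp q'\subseteq p\bpp q$ unfolds to the claim that $p\app q\subseteq r'$ implies $p\app q'\subseteq r'$ — wait, that is the wrong direction; rather, by Lemma~\ref{lemm.pp.sub} (the $r\subseteq r'$ characterisation), $\pp{(p\app q')}\subseteq\pp{(p\app q)}$ holds precisely when $p\app q\subseteq p\app q'$, which is exactly the first half of this very lemma applied with the inclusion $q\subseteq q'$. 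In other words, the contravariance of $\bpp$ is immediate from the monotonicity of $\app$ together with the order-reversing behaviour of $\pp{(-)}$ recorded in Lemma~\ref{lemm.pp.sub}.

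I do not anticipate any real obstacle here: the only thing to be slightly careful about is the direction of the inclusions, since $\pp{(-)}$ reverses order (this is the recurring ``upside-down'' phenomenon flagged in Remarks~\ref{rmrk.odd.ops} and~\ref{rmrk.no.sigma.distrib.points}), so one must not accidentally assert $p\bpp q\subseteq p\bpp q'$. Concretely, the proof I would write is: ``By Definition~\ref{defn.some.ops}, $p\app q$ is a union of sets $(st)\uparrowp{\Pi}$ over $s\in p$ and $t\in q$; since $q\subseteq q'$ this is a subunion of $p\app q'$, so $p\app q\subseteq p\app q'$. By Definition~\ref{defn.some.more.ops} $p\bpp q=\pp{(p\app q)}$ and $p\bpp q'=\pp{(p\app q')}$, and by Lemma~\ref{lemm.pp.sub} the inclusion $p\app q\subseteq p\app q'$ gives $\pp{(p\app q')}\subseteq\pp{(p\app q)}$, i.e. $p\bpp q'\subseteq p\bpp q$.'' That is the whole argument.
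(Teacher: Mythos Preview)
Your proposal is correct and follows essentially the same approach as the paper: the first inclusion is direct from Definition~\ref{defn.some.ops}, and the second follows by unfolding Definition~\ref{defn.some.more.ops} (so $p\bpp q=\pp{(p\app q)}$) and applying the order-reversal of $\pp{(-)}$ from Lemma~\ref{lemm.pp.sub} to the first inclusion. The paper's proof is terser but structurally identical.
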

\begin{proof}
$p\app q$ is defined in Definition~\ref{defn.some.ops}, and the result follows direct from the definition.
By Definition~\ref{defn.some.more.ops} $p\bpp q'=\{r{\in}\points_\Pi\mid p\app q'\subseteq r\}$ and $p\bpp q=\{r{\in}\points_\Pi\mid p\app q\subseteq r\}$.
We use Lemma~\ref{lemm.pp.sub}.
\end{proof}

\begin{lemm}
\label{lemm.bpp.is.app.for.points.Pi}
If $X\in\otop{\points_\Pi}$ and $p,q\in|\points_\Pi|$
then the following conditions are equivalent:
$$
p\app q\in X
\quad\liff\quad
\pp{(p\app q)}\subseteq X
\quad\liff\quad
p\bpp q\subseteq X
\quad\liff\quad
p\bpp(\pp q)\subseteq X
$$
\end{lemm}
\begin{proof}
By Lemma~\ref{lemm.p.in.U} $p\app q\in X$ if and only if $\pp{(p\app q)}\subseteq X$.
By Definition~\ref{defn.some.more.ops} $\pp{(p\app q)}=p\bpp q$.

Suppose $p\bpp q\subseteq X$.
Consider any $q'\in\pp q$, meaning by Definition~\ref{defn.some.more.ops} that $q\subseteq q'$.
By Lemma~\ref{lemm.some.monotone} $p\app q'\subseteq p\bpp q$.
Since $q'$ was arbitrary, it follows from Definition~\ref{defn.YppaX} that $p\bpp(\pp q)\subseteq X$.

Conversely if $p\bpp(\pp q)\subseteq X$ then since (from Definition~\ref{defn.some.more.ops}) $q\in\pp q$, from Definition~\ref{defn.YppaX} $p\bpp q\subseteq X$.
\end{proof}

\begin{corr}
\label{corr.bpp.ppa.points.Pi}
If $X\in\otop{\points_\Pi}$ and $r,q\in|\points_\Pi|$
then $r\in (\pp q_1\cup\dots\cup\pp q_n)\ppa X$ if and only if $r\app q_i\in X$ for $1{\leq}i{\leq}n$.
\end{corr}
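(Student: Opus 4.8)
The plan is to prove Corollary~\ref{corr.bpp.ppa.points.Pi} by a direct computation using Proposition~\ref{prop.amgis.iff.pp}, which characterises membership in $Y\ppa X$ as $r\bpp Y\subseteq X$, together with the facts already established about how $\bpp$ interacts with unions and with $\pp{\text{-}}$. First I would unfold the left-hand side: by Proposition~\ref{prop.amgis.iff.pp}, $r\in(\pp q_1\cup\dots\cup\pp q_n)\ppa X$ holds if and only if $r\bpp(\pp q_1\cup\dots\cup\pp q_n)\subseteq X$. Using Definition~\ref{defn.YppaX} (which defines $p\bpp Y=\bigcup\{p\bpp q\mid q\in Y\}$), the combination $r\bpp(\bigcup_i\pp q_i)$ distributes over the union, so $r\bpp(\pp q_1\cup\dots\cup\pp q_n)=\bigcup_{1\leq i\leq n}(r\bpp\pp q_i)$. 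Hence the left-hand side is equivalent to $r\bpp\pp q_i\subseteq X$ for every $1\leq i\leq n$.

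Next I would dispatch each conjunct using Lemma~\ref{lemm.bpp.is.app.for.points.Pi}, which gives precisely the chain of equivalences $r\app q_i\in X \liff \pp{(r\app q_i)}\subseteq X \liff r\bpp q_i\subseteq X \liff r\bpp\pp q_i\subseteq X$. Combining this equivalence for each $i$ with the distributivity observation from the previous paragraph yields the claimed biconditional: $r\in(\pp q_1\cup\dots\cup\pp q_n)\ppa X$ if and only if $r\app q_i\in X$ for $1\leq i\leq n$. So the proof is essentially a two-line assembly of Proposition~\ref{prop.amgis.iff.pp}, Definition~\ref{defn.YppaX} (distributivity of $\bpp$ over finite union on the right argument), and Lemma~\ref{lemm.bpp.is.app.for.points.Pi}.

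I do not anticipate a serious obstacle here; the only point requiring a moment's care is the distributivity step $r\bpp(\bigcup_i\pp q_i)=\bigcup_i(r\bpp\pp q_i)$, which is immediate from Definition~\ref{defn.YppaX} since $r\bpp Y$ is by definition $\bigcup\{r\bpp q\mid q\in Y\}$ and $Y=\bigcup_i\pp q_i$, so the double union collapses. One should also note that the $q_i$ occurring in the statement are the same $q$ flagged in the statement's quantifier ``$r,q\in|\points_\Pi|$'', so each $q_i\in|\points_\Pi|$ and each $r\app q_i$ is a genuine point by Lemma~\ref{lemm.some.ops.make.points}, which is what is needed to apply Lemma~\ref{lemm.bpp.is.app.for.points.Pi}. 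Everything else is routine set-theoretic bookkeeping.
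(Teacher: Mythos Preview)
Your proposal is correct and follows essentially the same approach as the paper: Proposition~\ref{prop.amgis.iff.pp} to unfold $\ppa$, Definition~\ref{defn.YppaX} to distribute $r\bpp(\text{-})$ over the finite union, and then Lemma~\ref{lemm.bpp.is.app.for.points.Pi} to identify $r\bpp\pp q_i\subseteq X$ with $r\app q_i\in X$. Your extra remarks justifying the distributivity step and noting that $r\app q_i$ is a point are sound elaborations but not additional ideas.
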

\begin{proof}
By Proposition~\ref{prop.amgis.iff.pp} $r\in (\pp q_1\cup\dots\cup\pp q_n)\ppa X$ if and only if
$r\bpp (\pp q_1\cup\dots\cup\pp q_n)\subseteq X$.
From Definition~\ref{defn.YppaX} this is if and only if $r\bpp\pp q_i\subseteq X$ for $1{\leq}i{\leq}n$.
We use Lemma~\ref{lemm.bpp.is.app.for.points.Pi}.
\end{proof}

\begin{corr}
\label{corr.ppq.ppa.finite.union}
$r\in \pp q\ppa (\pp p_1\cup\dots\cup\pp p_n)$ if and only if $r\in \pp q\ppa \pp p_i$ for some $1{\leq}i{\leq}n$.
\end{corr}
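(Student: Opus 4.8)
The statement to prove is Corollary~\ref{corr.ppq.ppa.finite.union}: that $r\in \pp q\ppa (\pp p_1\cup\dots\cup\pp p_n)$ if and only if $r\in \pp q\ppa \pp p_i$ for some $1{\leq}i{\leq}n$.

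The plan is to unpack the definition of $\ppa$ on sets of points from Definition~\ref{defn.sub.sets.pp} (or more conveniently Proposition~\ref{prop.amgis.iff.pp}) and reduce everything to a condition on the single point $q$ rather than on the set $\pp q$. First I would observe that by Proposition~\ref{prop.amgis.iff.pp}, $r\in \pp q\ppa X$ holds if and only if $r\bpp\pp q\subseteq X$; and by Lemma~\ref{lemm.bpp.is.app.for.points.Pi} this is equivalent to $r\bpp q\subseteq X$, which (again by Lemma~\ref{lemm.bpp.is.app.for.points.Pi} and Definition~\ref{defn.some.more.ops}) is equivalent to $r\app q\in X$. So the whole question becomes: $r\app q\in \pp p_1\cup\dots\cup\pp p_n$ if and only if $r\app q\in \pp p_i$ for some $i$. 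But that is just the definition of set union, so the biconditional is immediate.

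Concretely, the proof is a short chain: apply Proposition~\ref{prop.amgis.iff.pp} to rewrite $r\in \pp q\ppa(\pp p_1\cup\dots\cup\pp p_n)$ as $r\bpp\pp q\subseteq \pp p_1\cup\dots\cup\pp p_n$; apply Lemma~\ref{lemm.bpp.is.app.for.points.Pi} (with $X=\pp p_1\cup\dots\cup\pp p_n$, which is open since it is a finite, hence strictly finitely supported, union of compact opens $\pp p_i$ by Lemma~\ref{lemm.pp.p.compact} and Definition~\ref{defn.points.top}) to get the equivalent condition $r\app q\in \pp p_1\cup\dots\cup\pp p_n$; note this holds iff $r\app q\in\pp p_i$ for some $i$; and finally run the same two steps backwards with each $\pp p_i$ in place of the union to conclude $r\in \pp q\ppa\pp p_i$. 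The parenthetical hypothesis about $r$ being a point (and $q$ too) is what licenses the use of Lemma~\ref{lemm.bpp.is.app.for.points.Pi}.

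I do not expect a genuine obstacle here; this is a routine corollary whose entire content is that $\ppa$ reduces a set-of-points argument to a single-point argument, and then finite unions distribute trivially. The only mild care needed is to confirm that $\pp p_1\cup\dots\cup\pp p_n\in\otop{\points_\Pi}$ so that Lemma~\ref{lemm.bpp.is.app.for.points.Pi} applies — this follows because each $\pp p_i$ is open by construction in Definition~\ref{defn.points.top} and a finite union of open sets is open (a finite set of atoms is strictly finitely supported). Everything else is bookkeeping with Definition~\ref{defn.some.more.ops}, Lemma~\ref{lemm.p.in.U}, and Lemma~\ref{lemm.bpp.is.app.for.points.Pi}.
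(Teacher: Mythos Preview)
Your proposal is correct and follows essentially the same approach as the paper. The only difference is that the paper has already packaged your two-step reduction (Proposition~\ref{prop.amgis.iff.pp} then Lemma~\ref{lemm.bpp.is.app.for.points.Pi}) into Corollary~\ref{corr.bpp.ppa.points.Pi}, whose special case $n=1$ gives $r\in \pp q\ppa X \liff r\app q\in X$ directly; the paper invokes that corollary twice and the set-union fact in between, exactly as you do.
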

\begin{proof}
By Proposition~\ref{prop.amgis.iff.pp} $r\in \pp q\ppa (\pp p_1\cup\dots\cup\pp p_n)$ if and only if
$r\bpp q\subseteq\pp p_1\cup\dots\cup\pp p_n$.
By Lemma~\ref{lemm.bpp.is.app.for.points.Pi} this is if and only if
$r\app q\in \pp p_1\cup\dots\cup\pp p_n$.
It is a fact of sets that this is if and only if $r\app q\in \pp p_i$ for some $1{\leq}i{\leq}n$.
Using Lemma~\ref{lemm.bpp.is.app.for.points.Pi} and Proposition~\ref{prop.amgis.iff.pp} again, this is if and only if $r\in \pp q\ppa \pp p_i$ for some $1{\leq}i{\leq}n$, as required.
\end{proof}

\begin{corr}
\label{corr.unzip}
Suppose $\mathcal P,\mathcal Q\subseteq|\points_\Pi|$ are finite.
Then
$$
\bigcup\{\pp q\mid q\in\mathcal Q\}\ppa \bigcup\{\pp p\mid p\in\mathcal P\}=
\bigcap\{ \bigcup\{ \pp{(q\ppa p)}\mid p\in\mathcal Q\} \mid q\in\mathcal Q\}
.
$$
\end{corr}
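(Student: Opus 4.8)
The plan is to prove this by a straightforward element chase, since both sides are subsets of $|\points_\Pi|$ and so it suffices to show that an arbitrary point $r$ lies in the left-hand side if and only if it lies in the right-hand side. First I would record the bookkeeping that makes the cited corollaries applicable: since $\mathcal P$ and $\mathcal Q$ are assumed finite, they are \emph{a fortiori} strictly finitely supported (Lemma~\ref{lemm.strict.support}), so $\bigcup\{\pp p\mid p\in\mathcal P\}$ and $\bigcup\{\pp q\mid q\in\mathcal Q\}$ are open sets in $\points_\Pi$ (Definition~\ref{defn.points.top} and Remark~\ref{rmrk.lam.when.open}), and each $\pp p$, $\pp q$ is compact open by Lemma~\ref{lemm.pp.p.compact}. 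Write $\mathcal Q=\{q_1,\dots,q_n\}$, so that $\bigcup\{\pp q\mid q\in\mathcal Q\}=\pp q_1\cup\dots\cup\pp q_n$, and read the right-hand side of the statement as $\bigcap_{q\in\mathcal Q}\bigcup_{p\in\mathcal P}\pp{(q\ppa p)}$.

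The heart of the argument is then a chain of equivalences. Applying Corollary~\ref{corr.bpp.ppa.points.Pi} with $X=\bigcup\{\pp p\mid p\in\mathcal P\}$ gives
$$
r\in\Bigl(\bigcup\{\pp q\mid q\in\mathcal Q\}\Bigr)\ppa\Bigl(\bigcup\{\pp p\mid p\in\mathcal P\}\Bigr)
\quad\liff\quad
\Forall{q\in\mathcal Q}\ r\app q\in\bigcup\{\pp p\mid p\in\mathcal P\}.
$$
For each fixed $q$, Corollary~\ref{corr.bpp.ppa.points.Pi} applied with the singleton cover $\{\pp q\}$ rewrites $r\app q\in\bigcup\{\pp p\mid p\in\mathcal P\}$ as $r\in\pp q\ppa\bigcup\{\pp p\mid p\in\mathcal P\}$; then Corollary~\ref{corr.ppq.ppa.finite.union} (using that $\mathcal P$ is finite) rewrites this as $\Exists{p\in\mathcal P}r\in\pp q\ppa\pp p$; and part~\ref{item.lam.pp.ppa} of Proposition~\ref{prop.pp.p.commute} identifies $\pp q\ppa\pp p$ with $\pp{(q\ppa p)}$. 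Stringing these together, the condition becomes $\Forall{q\in\mathcal Q}\Exists{p\in\mathcal P}r\in\pp{(q\ppa p)}$, which is exactly $r\in\bigcap_{q\in\mathcal Q}\bigcup_{p\in\mathcal P}\pp{(q\ppa p)}$. Since $r$ was arbitrary, the two sets coincide.

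I do not expect any real obstacle here: the result is a packaging of Corollaries~\ref{corr.bpp.ppa.points.Pi} and~\ref{corr.ppq.ppa.finite.union} together with Proposition~\ref{prop.pp.p.commute}, and no freshness or support subtleties enter because $\mathcal P$ and $\mathcal Q$ are finite. The only point requiring a moment's care is the ``unzipping'' of the nested quantifiers $\forall q\,\exists p$ into the intersection of unions in the correct order (contravariance means the $\bigcap$ sits on the outside over $\mathcal Q$ and the $\bigcup$ on the inside over $\mathcal P$), and keeping the index sets straight when invoking the two different forms of Corollary~\ref{corr.bpp.ppa.points.Pi}.
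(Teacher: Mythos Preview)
Your proposal is correct and is precisely the approach the paper takes: the paper's own proof simply says ``We combine Corollaries~\ref{corr.bpp.ppa.points.Pi} and~\ref{corr.ppq.ppa.finite.union} with part~\ref{item.lam.pp.ppa} of Proposition~\ref{prop.pp.p.commute}'', and you have unpacked exactly that combination via the element chase $r\in\text{LHS}\liff\Forall{q\in\mathcal Q}\Exists{p\in\mathcal P}r\in\pp{(q\ppa p)}\liff r\in\text{RHS}$. Your reading of the right-hand side as $\bigcap_{q\in\mathcal Q}\bigcup_{p\in\mathcal P}\pp{(q\ppa p)}$ (correcting the evident typo ``$p\in\mathcal Q$'' in the displayed statement) is the intended one.
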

\begin{proof}
We combine Corollaries~\ref{corr.bpp.ppa.points.Pi} and~\ref{corr.ppq.ppa.finite.union} with part~\ref{item.lam.pp.ppa} of Proposition~\ref{prop.pp.p.commute}.
\end{proof}

\subsubsection{Interaction of $\protect\freshcap{a}\text{}$ with unions}

Remarkably, $\freshcap{a}$ commutes with certain unions.
This is Lemma~\ref{lemm.tall.unions}.
The property is not valid in general in $\indiapp$; indeed this is not generally true in logic: $\forall x.(\phi\lor\psi)$ is not normally logically equivalent to $(\forall x.\phi)\lor(\forall x.\psi)$.
But, it holds in the canonical model $\points_\Pi$:
\begin{lemm}
\label{lemm.tall.unions}
Suppose $p_1,\dots,p_n{\in}\points_\Pi$.
Then
$$
\freshcap{a}\bigcup\nolimits_i(\pp p_i)=\bigcup\nolimits_i\freshcap{a}(\pp p_i).
$$
\end{lemm}
\begin{proof}
Suppose $q\in \freshcap{a}\bigcup_i\pp p_i$.
Using Lemma~\ref{lemm.all.alpha} rename to assume without loss of generality that $a\#q$.

By Definition~\ref{defn.nu.U} and Corollary~\ref{corr.sm.lsm.cup} $q\in \freshcap{a}\bigcup_i\pp p_i$ when $q\in \bigcup_i\pp{(p_i[a\lsm u])}$ for every $u\in|\idiomprg|$.
Choose fresh $b$ (so $b\#q,p_1,\dots,p_n$).
By facts of sets, there exists $1{\leq}i{\leq}n$ such that $q\in\pp{p_i[a\sm b]}$.
Using Lemma~\ref{lemm.p.sigma.pi} $q\in\pp{((b\ a)\act p_i)}$,
so by Theorem~\ref{thrm.equivar} $(b\ a)\act q\in \pp p_i$ and by Corollary~\ref{corr.filter.point.property} $q\in\pp{(\tall a.p_i)}$.
We use Proposition~\ref{prop.pp.p.commute}(\ref{item.lam.pp.tall}).

Conversely, it is easy to prove that $\freshcap{a}\pp p_i\subseteq \freshcap{a}\bigcup_i\pp p_i$, either using Lemma~\ref{lemm.tall.monotone} and Theorem~\ref{thrm.all.closed} since $\pp p_i\subseteq\bigcup_i\pp p_i$---or by an easy direct calculation from Definition~\ref{defn.nu.U} and Lemma~\ref{lemm.sub.bigcap}(\ref{sub.bigcap.monotone}).
\end{proof}

\begin{rmrk}
\label{rmrk.any.two}
Lemma~\ref{lemm.tall.unions} depends on $q$ being small-supported (Lemma~\ref{lemm.fresh.point.check}(\ref{point.finsupp}))
and on the $p_i$ having a $\sigma$-action (Definition~\ref{defn.qasmu}).
This makes the canonical model $\points_\Pi$ powerfully well-behaved; neither property holds generally in $\amgis$-algebras.

One corollary of small support in particular is that the canonical model cannot support classical negation, for if we had classical negation then we could reason as follows (this is not intended to be fully formal, but it could be made so):
$$
p\in\freshcap{a}\pp q
\ \ \liff\ \
\New{b}(b\ a)\act p\in\pp q
\ \ \liff \ \
\New{b}(b\ a)\act p\not\in\points_\Pi{\setminus}\pp q
\ \ \liff\ \
p\in\freshcup{a}\pp q .
$$
In other words $\tall=\texi$.
Informally, this suggests we can have any two, but not all three, of the following qualities:
\begin{enumerate*}
\item
$\tall$ decomposed into $\new$, as in condition~\ref{filter.new} of Definition~\ref{defn.filter} or
condition~\ref{sober.all} of Definition~\ref{defn.sober} or Corollary~\ref{corr.filter.point.property}.
\item
Filters with small support.
\item
Classical negation (so that prime filters are ultrafilters).
\end{enumerate*}
In this paper we build models of the untyped $\lambda$-calculus in an ambient lattice which is a nominal distributive lattice with $\tall$.
It is natural to ask whether we could promote this lattice to be a Boolean Algebra---in other words, ``Can we add negation?''.
We think the answer is probably ``No, at least for the paper in its current form.'': negation would either cost us condition~\ref{filter.new} of Definition~\ref{defn.filter} (and all that depends on it), or it would cost us Lemma~\ref{lemm.tall.unions}, with undesirable consequences for the proofs to follow, starting with Lemma~\ref{lemm.points.ppa.some.more.compact}.
\end{rmrk}

\subsubsection{Proof that $\points_\Pi$ is coherent}

We saw in Proposition~\ref{prop.points.sigma.bpp} that $\points_\Pi$ is a nominal $\sigma\bpp$-topological space in the sense of Definition~\ref{defn.nom.top.bpp}.
We now show that it is coherent (Definitions~\ref{defn.coherent} and~\ref{defn.coherent.bpp}) and sober (Definition~\ref{defn.sober}).

\begin{lemm}
\label{lemm.points.ppa.some.compact}
Suppose $U$ and $V$ are compact in $\points_\Pi$ and suppose $u\in|\idiomprg|$.
Then:
\begin{itemize*}
\item
$|\points_\Pi|$ is compact.
\item
$U\cap V$ is compact.
\item
$U[a\sm u]$ is compact.
\end{itemize*}
\end{lemm}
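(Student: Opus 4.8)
\textbf{Proof plan for Lemma~\ref{lemm.points.ppa.some.compact}.}
The plan is to treat the three claims in turn, in each case reducing to the concrete description of compact open sets from Lemma~\ref{lemm.pp.U.compact}: every compact open set in $\points_\Pi$ is a \emph{finite} union $\bigcup\{\pp p\mid p\in\mathcal P\}$ with $\mathcal P$ finite, and conversely each $\pp p$ is compact by Lemma~\ref{lemm.pp.p.compact} (so finite unions of such are compact by part~2 of Proposition~\ref{prop.really.do}). So throughout I would fix finite $\mathcal P,\mathcal Q\subseteq|\points_\Pi|$ with $U=\bigcup\{\pp p\mid p\in\mathcal P\}$ and $V=\bigcup\{\pp q\mid q\in\mathcal Q\}$.

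For $|\points_\Pi|$ compact: we already observed in Remark~\ref{rmrk.varnothing.is.a.point} that $\varnothing$ (the empty set of phrases) is a point, and in the proof of Proposition~\ref{prop.points.sigma.bpp} that $|\points_\Pi|=\bigcup\{\pp\varnothing\}=\pp\varnothing$; this is a finite (indeed singleton) union of sets of the form $\pp p$, hence compact by Lemma~\ref{lemm.pp.p.compact}. For $U\cap V$ compact: by part~\ref{item.pp.cap} of Proposition~\ref{prop.pp.p.commute} and elementary set calculations, $U\cap V=\bigcup\{\pp p\cap\pp q\mid p\in\mathcal P, q\in\mathcal Q\}=\bigcup\{\pp{(p\tand q)}\mid p\in\mathcal P, q\in\mathcal Q\}$, which is a finite union of sets of the form $\pp{(-)}$ since $\mathcal P\times\mathcal Q$ is finite; by Lemma~\ref{lemm.some.ops.make.points} each $p\tand q$ is a point, so each $\pp{(p\tand q)}$ is compact by Lemma~\ref{lemm.pp.p.compact}, and a finite union of compacts is compact.

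For $U[a\sm u]$ compact: using Lemma~\ref{lemm.bigcup.app} (distribution of the pointwise action over unions) together with Theorem~\ref{thrm.lam.pp.sm} part~1 ($\pp p[a\sm u]=\pp{(p[a\lsm u])}$), we get $U[a\sm u]=\bigcup\{\pp{(p[a\lsm u])}\mid p\in\mathcal P\}$. Since $\mathcal P$ is finite this is again a finite union of sets $\pp{(-)}$ with each $p[a\lsm u]$ a point (Definition~\ref{defn.qasmu} together with Proposition~\ref{prop.points.bigcap}), hence compact by Lemma~\ref{lemm.pp.p.compact} and part~2 of Proposition~\ref{prop.really.do}. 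I do not anticipate a serious obstacle here; the only mild subtlety is that one must invoke the \emph{finite}-union characterisation of Lemma~\ref{lemm.pp.U.compact} rather than the strictly-finitely-supported-union characterisation used for general open sets, since compactness is exactly what upgrades the latter to the former and closure of compactness under the operations then follows formally.
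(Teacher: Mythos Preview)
Your proposal is correct and follows essentially the same route as the paper: reduce to finite unions via Lemma~\ref{lemm.pp.U.compact}, then use $\pp\varnothing$, Proposition~\ref{prop.pp.p.commute}(\ref{item.pp.cap}), and Theorem~\ref{thrm.lam.pp.sm} respectively to rewrite each set as a finite union of sets of the form $\pp{(-)}$, compact by Lemma~\ref{lemm.pp.p.compact}. One small slip: in the third item you cite Lemma~\ref{lemm.bigcup.app} for ``distribution of the pointwise action over unions'', but that lemma is about $\app$; the result you actually want (and describe correctly) is Lemma~\ref{lemm.sub.bigcap}, which is what the paper invokes.
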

\begin{proof}
By Lemma~\ref{lemm.pp.U.compact} we may assume
$U=\bigcup\{\pp p\mid p\in\mathcal P\}$ and $V=\bigcup\{\pp q\mid q\in\mathcal Q\}$ for finite $\mathcal P,\mathcal Q\subseteq|\points_\Pi|$.
We consider each part in turn:
\begin{itemize*}
\item
$|\points_\Pi|=\pp{\varnothing}$ (which is a point, as noted in Remark~\ref{rmrk.varnothing.is.a.point}).
We use Lemma~\ref{lemm.pp.p.compact}.
\item
Using part~\ref{item.pp.cap} of Proposition~\ref{prop.pp.p.commute}
$U\cap V=\bigcup\{\pp{(p\tand q)}\mid p\in\mathcal P,q\in\mathcal Q\}$.
By Lemma~\ref{lemm.pp.p.compact} each $\pp{(p\tand q)}$ is compact, and a finite union of compact sets is compact.
\item
By Lemma~\ref{lemm.sub.bigcap}(\ref{sub.bigcup.sfs}) and Theorem~\ref{thrm.lam.pp.sm} $U[a\sm u]=\bigcup\{\pp{(p[a\lsm u])}\mid p\in\mathcal P\}$.
By Lemma~\ref{lemm.pp.p.compact} each $\pp{(p[a\lsm u])}$ is compact, and a finite union of compact sets is compact.
\qedhere\end{itemize*}
\end{proof}

\begin{lemm}
\label{lemm.points.ppa.some.more.compact}
Suppose $U$ is compact in $\points_\Pi$.
Then continuing Lemma~\ref{lemm.points.ppa.some.compact}:
\begin{itemize*}
\item
$\freshcap{a}U$ is compact.
\end{itemize*}
\end{lemm}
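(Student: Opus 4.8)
The plan is to show that $\freshcap{a}U$ is compact when $U$ is, by reducing to the case $U=\pp p$ for a single point $p$ and then using the characterisation of $\freshcap{a}\pp p$ from Proposition~\ref{prop.pp.p.commute} together with Lemma~\ref{lemm.pp.p.compact}. First I would invoke Lemma~\ref{lemm.pp.U.compact} to write $U=\bigcup\{\pp p\mid p\in\mathcal P\}$ for some \emph{finite} $\mathcal P\subseteq|\points_\Pi|$ (this is the crucial point where $\points_\Pi$ behaves better than the general $F(\ns D)$: compacts are finite unions of sets of the form $\pp p$). Then by Lemma~\ref{lemm.tall.unions} we have $\freshcap{a}U=\freshcap{a}\bigcup_{p\in\mathcal P}\pp p=\bigcup_{p\in\mathcal P}\freshcap{a}\pp p$, so it suffices to prove that each $\freshcap{a}\pp p$ is compact, and that a finite union of compacts is compact (the latter is part~2 of Proposition~\ref{prop.really.do}, valid since $\points_\Pi$ is a nominal $\sigma\bpp$-topological space by Proposition~\ref{prop.points.sigma.bpp}).

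Next I would handle the single-point case. By part~\ref{item.lam.pp.tall} of Proposition~\ref{prop.pp.p.commute}, $\freshcap{a}\pp p=\pp{(\tall a.p)}$. By Lemma~\ref{lemm.some.ops.make.points}, $\tall a.p$ is again a $\Pi$-point, so by Lemma~\ref{lemm.pp.p.compact} the set $\pp{(\tall a.p)}$ is compact in $\points_\Pi$. Combining, $\freshcap{a}U=\bigcup_{p\in\mathcal P}\pp{(\tall a.p)}$ is a finite union of compact open sets, hence compact.

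I do not anticipate a serious obstacle here, since all the ingredients are already in place: the key enabling facts are the decomposition of compacts as \emph{finite} unions (Lemma~\ref{lemm.pp.U.compact}), the commutation of $\freshcap{a}$ with finite unions in the canonical model (Lemma~\ref{lemm.tall.unions}), and the identification $\freshcap{a}\pp p=\pp{(\tall a.p)}$ (Proposition~\ref{prop.pp.p.commute}) with $\tall a.p$ a point (Lemma~\ref{lemm.some.ops.make.points}). The only point requiring mild care is that $\freshcap{a}U$ be open as well as compact — but this is immediate, since by Corollary~\ref{corr.sm.lsm.cup} and the fact that $\freshcap{a}$ of an open set is open (this is built into coherence, or follows directly: $\freshcap{a}\bigcup\pp p=\bigcup\pp{(\tall a.p)}$ is a strictly finitely supported union of compacts, using Lemma~\ref{lemm.tall.p.alpha} and Theorem~\ref{thrm.no.increase.of.supp} for the support bound), so no extra argument is needed beyond remarking it. The result then follows by assembling these observations.
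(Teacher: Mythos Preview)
Your proposal is correct and follows essentially the same approach as the paper: decompose $U$ as a finite union $\bigcup_{p\in\mathcal P}\pp p$ via Lemma~\ref{lemm.pp.U.compact}, push $\freshcap{a}$ through the union via Lemma~\ref{lemm.tall.unions}, identify $\freshcap{a}\pp p=\pp{(\tall a.p)}$ via Proposition~\ref{prop.pp.p.commute}, and conclude by Lemma~\ref{lemm.pp.p.compact} that each summand is compact. Your extra remarks (invoking Lemma~\ref{lemm.some.ops.make.points} and Proposition~\ref{prop.really.do}, and the aside about openness) are fine but not strictly needed for the argument.
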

\begin{proof}
By Lemma~\ref{lemm.pp.U.compact} we may assume
$U=\bigcup\{\pp p\mid p\in\mathcal P\}$ for finite $\mathcal P\subseteq|\points_\Pi|$.
By Lemma~\ref{lemm.tall.unions} and by part~\ref{item.lam.pp.tall} of Proposition~\ref{prop.pp.p.commute} $\freshcap{a}U=\bigcup\{\pp{(\tall a.p)}\mid p\in\mathcal P\}$.
By Lemma~\ref{lemm.pp.p.compact} each $\pp{(\tall a.p)}$ is compact, and a finite union of compact sets is compact.
\end{proof}

\begin{lemm}
\label{lemm.points.ppa.compact}
Suppose $U$ and $V$ are compact in $\points_\Pi$.
Then continuing Lemma~\ref{lemm.points.ppa.some.more.compact}:
\begin{itemize*}
\item
$U\app V$ is compact.
\item
$V\ppa U$ is compact.
\end{itemize*}
\end{lemm}
\begin{proof}
By Lemma~\ref{lemm.pp.U.compact} we may assume
$U=\bigcup\{\pp p\mid p\in\mathcal P\}$ and $V=\bigcup\{\pp q\mid q\in\mathcal Q\}$ for finite $\mathcal P,\mathcal Q\subseteq|\points_\Pi|$.
We consider each part in turn:
\begin{itemize*}
\item
By Lemma~\ref{lemm.bigcup.app} and part~\ref{item.lam.pp.app} of Proposition~\ref{prop.pp.p.commute}
$U\app V=\bigcup\{\pp{(p\app q)}\mid p\in\mathcal P,\ q\in\mathcal Q\}$.
By Lemma~\ref{lemm.pp.p.compact} each $\pp{(p\app q)}$ is compact, and a finite union of compact sets is compact.
\item
By Corollary~\ref{corr.unzip} $V\ppa U=\bigcap_{q\in\mathcal Q}\bigcup_{p\in\mathcal P}\pp{(q\ppa p)}$.
By Lemma~\ref{lemm.pp.p.compact} each $\pp{(q\ppa p)}$ is compact; by Lemma~\ref{lemm.points.ppa.some.compact} a finite intersection of compact sets is compact; and a finite unions of compact sets is compact.
\qedhere\end{itemize*}
\end{proof}

\begin{prop}
\label{prop.points.coherent}
$\points_\Pi$ is a coherent nominal $\sigma\bpp$-topological space.
\end{prop}
\begin{proof}
$\points_\Pi$ is a nominal $\sigma\bpp$-topological space by Proposition~\ref{prop.points.sigma.bpp}.
It remains to check the additional coherence conditions of Definitions~\ref{defn.coherent} and~\ref{defn.coherent.bpp}.

\emph{We check the conditions of Definition~\ref{defn.coherent}.}\quad
Suppose $U$ and $V$ are compact in $\points_\Pi$.
By Lemma~\ref{lemm.points.ppa.compact} $U[a\sm u]$, $|\points_\Pi|$, and $U\cap V$ are compact.
By Lemma~\ref{lemm.tall.unions} $\freshcap{a}U$ is compact.

By construction in Definition~\ref{defn.points.top} every open set is a small-supported union of sets of the form $\pp p$ for $p\in|\points_\Pi|$.
We note by Lemma~\ref{lemm.pp.p.compact} that $\pp p$ is compact, so every open set is also a small-supported union of compact sets.

\emph{We check the conditions of Definition~\ref{defn.coherent.bpp}.}\quad
Suppose $U$ and $V$ are compact.
By Lemma~\ref{lemm.points.ppa.compact} $U\app V$ and $V\ppa U$ are compact.

We now check conditions~\ref{item.nom.top.app.sub} and~\ref{item.nom.top.ppa.sub} of Definition~\ref{defn.coherent.bpp} (validity of \rulefont{\sigma\app} and \rulefont{\sigma\ppa}).
By Lemma~\ref{lemm.pp.U.compact} we may assume $U=\bigcup\{\pp p\mid p\in\mathcal P\}$ and $V=\bigcup\{\pp q\mid q\in\mathcal Q\}$ for finite $\mathcal P,\mathcal Q\subseteq|\points_\Pi|$.
We reason as follows:
$$
\begin{array}{@{\hspace{-1.5em}}r@{\ }l@{\quad}l}
(U\app V)[a\sm u]=&
\bigcup\{(\pp p\app \pp q)[a\sm u]\mid p{\in}\mathcal P,\ q{\in}\mathcal Q\}
&\text{Lemmas~\ref{lemm.bigcup.app} \& \ref{lemm.sub.bigcap}(\ref{sub.bigcup.sfs})}
\\
=&
\bigcup\{\pp p[a\sm u]\app \pp q[a\sm u]\mid p{\in}\mathcal P,\ q{\in}\mathcal Q\}
&\text{Corollary~\ref{corr.filter.lam.pp.app.commute}}
\\
=&
\bigcup\{\pp p[a\sm u]\mid p{\in}\mathcal P\}\app \{\pp q[a\sm u]\mid q{\in}\mathcal Q\}
&\text{Lemma~\ref{lemm.bigcup.app}}
\\
=&
U[a\sm u]\app V[a\sm u]
&\text{Lemmas~\ref{lemm.bigcup.app} \& \ref{lemm.sub.bigcap}(\ref{sub.bigcup.sfs})}
\\[1.5ex]
(\prg_{\points_\Pi} b\ppa U)[a\sm u]=&
\bigl(\bigcup_{p{\in}\mathcal P}\pp b\ppa \pp p\bigr)[a\sm u]
&\text{Def~\ref{defn.points.top} \& Cor~\ref{corr.unzip}}
\\
=&
\bigcup_{p{\in}\mathcal P}((\pp b\ppa\pp p)[a\sm u])
&\text{Lemma~\ref{lemm.sub.bigcap}(\ref{sub.bigcup.sfs})}
\\
=&
\bigcup_{p{\in}\mathcal P}\pp b\ppa \pp p[a\sm u]
&\text{Corollary~\ref{corr.filter.lam.pp.app.commute}}
\\
=&
\prg_{\points_\Pi} b\ppa (U[a\sm u])
&\text{Cor~\ref{corr.unzip} \& Lem~\ref{lemm.sub.bigcap}(\ref{sub.bigcup.sfs})}
\end{array}
$$
\end{proof}

\subsubsection{Proof that $\points_\Pi$ is sober, thus spectral}

Suppose $p\in|\points_\Pi|$.
Recall $\qq p$ from Definition~\ref{defn.qq}; given a point we form the prime filter of compact open sets containing it.
Lemma~\ref{lemm.points.qq.injective} is in the same spirit as Corollary~\ref{corr.lam.pi.injective}:
\begin{lemm}
\label{lemm.points.qq.injective}
The map $p\in|\points_\Pi|\longmapsto \qq p\in|FG(\points_\Pi)|$ is injective.
\end{lemm}
\begin{proof}
Suppose $\qq p=\qq q$ and suppose $s\in|\idiom|$.
We note the following:
$$
\begin{array}{r@{\ }l@{\qquad}l}
s\in p
\liff&
p\in\pp{s\uparrowp{\Pi}}
&\text{Lemma~\ref{lemm.pp.sub}}
\\
\liff&
\pp{s\uparrowp{\Pi}}\in \qq p
&\text{Definition~\ref{defn.qq}}
\end{array}
$$
Similarly $s\in q\liff \pp{s\uparrowp{\Pi}}\in\qq q$.
It follows that $s\in p\liff s\in q$, so that $p=q$.
\end{proof}

\begin{lemm}
\label{lemm.intersection.P}
Suppose $\mathcal P\subseteq|\points_\Pi|$.
Then $\bigcap\{\pp p\mid p\in\mathcal P\}=\pp{(\bigcup\mathcal P)}$.

(By Proposition~\ref{prop.points.bigcap} $\bigcup\mathcal P\in|\points_\Pi|$.)
\end{lemm}
\begin{proof}
By sets calculations using Lemma~\ref{lemm.pp.sub}.
\end{proof}

\begin{lemm}
\label{lemm.point.in.U}
Suppose $\mathcal U\subseteq\ctop{\points_\Pi}$ is a prime filter.
Then for every $U\in\mathcal U$ there exists $p\in|\points_\Pi|$ such that $\pp p\subseteq U$ and $\pp p\in\mathcal U$.
\end{lemm}
\begin{proof}
By Lemma~\ref{lemm.pp.U.compact} there exists a finite set of points $\mathcal P\subseteq|\points_\Pi|$ such that $U=\bigcup\{\pp p\mid p\in\mathcal P\}$ and by Lemma~\ref{lemm.pp.p.compact} $\pp p$ is compact for every $p\in\mathcal P$.
We assumed that $\mathcal U$ is a prime filter (Definition~\ref{defn.prime.filter}) and it follows that there exists some $p\in\mathcal P$ such that $\pp p\in\mathcal U$.
\end{proof}

\begin{corr}
\label{corr.bigcap.U}
Suppose $\mathcal U\subseteq\ctop{\points_\Pi}$ is a prime filter.
Then there exists $q\in|\points_\Pi|$ such that $\pp q=\bigcap\mathcal U$ and $\mathcal U=\qq q$.
\end{corr}
\begin{proof}
Write $\mathcal P=\{p\mid \pp p\in\mathcal U\}$ and $\pp{\mathcal P}=\{\pp p\mid p\in\mathcal P\}$.
From Lemma~\ref{lemm.point.in.U} $\bigcap\mathcal U=\bigcap\pp{\mathcal P}$.
By Lemma~\ref{lemm.intersection.P} $\bigcap\pp{\mathcal P}=\pp{(\bigcup\mathcal P)}$.
We take $q=\bigcup\mathcal P$.
\end{proof}

\begin{prop}
\label{prop.points.sober}
$\points_\Pi$ is sober (Definition~\ref{defn.sober}).
\end{prop}
\begin{proof}
By Lemma~\ref{lemm.coherent.iff} it suffices to check two conditions:
\begin{enumerate}
\item
The map $p\in|\points_\Pi|\longmapsto\qq p\in |FG(\points_\Pi)|$ is a bijection.
\item
If $p\in |\points_\Pi|$ and $U\in\ctop{\points_\Pi}$ then $\New{b}p\in (b\ a)\act U$ implies $p\in\freshcap{a}U$.
\end{enumerate}
Condition~1 is Lemma~\ref{lemm.points.qq.injective} and Corollary~\ref{corr.bigcap.U}.

For condition~2, by Lemma~\ref{lemm.pp.U.compact} $U=\bigcup\{\pp q\mid q\in\mathcal Q\}$ for some finite $\mathcal Q\subseteq |\points_\Pi|$.
Choose fresh $b$ (so $b\#U$ and $b\#p$\footnote{Recall from Lemma~\ref{lemm.fresh.point.check}(2) that $p\in|\points_\Pi|$ has small support here.} and $\Forall{q{\in}\mathcal Q}b\#q$).
It is a fact of sets that $p\in (b\ a)\act \pp q$ for one $q\in\mathcal Q$, so that by Theorem~\ref{thrm.new.equiv} also $\New{b}p\in(b\ a)\act\pp q$.
By Corollary~\ref{corr.filter.point.property} $p\in\pp{(\tall a.q)}\stackrel{P\ref{prop.pp.p.commute}(\ref{item.lam.pp.tall})}=\freshcap{a}\pp q$.
Now $\pp q\subseteq U$ and it follows that $p\in\freshcap{a}U$.
\end{proof}

\begin{thrm}
\label{thrm.points.spectral}
$\points_\Pi$ from Definition~\ref{defn.points.top} is indeed a nominal spectral space with $\bpp$.
\end{thrm}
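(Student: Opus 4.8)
The plan is to assemble the several properties that together constitute ``nominal spectral space with $\bpp$'' (Definition~\ref{defn.nom.top.spectral.bpp}), each of which has already been established (or is immediate) in the preceding subsections. Recall that a nominal spectral space with $\bpp$ is a nominal $\sigma\bpp$-topological space (Definition~\ref{defn.nom.top.bpp}) that is impredicative (Definition~\ref{defn.impredicative.top}), coherent (Definition~\ref{defn.coherent.bpp}), and sober (Definition~\ref{defn.sober}). So the proof is a short bookkeeping argument citing the relevant results in turn.

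First I would invoke Proposition~\ref{prop.points.sigma.bpp}, which shows that $\points_\Pi$ with the topology and combination operator of Definition~\ref{defn.points.top} is indeed a nominal $\sigma\bpp$-topological space. Next, for coherence (in the strengthened sense of Definition~\ref{defn.coherent.bpp}, which subsumes Definition~\ref{defn.coherent}), I would cite Corollary~\ref{corr.points.coherent}. For sobriety I would cite Lemma~\ref{lemm.points.sober}. The one remaining ingredient is impredicativity: I need to check that the assignment $\prg_{\points_\Pi} u = \pp{u\uparrowp{\Pi}}$ from condition~\ref{item.prg.points.Pi} of Definition~\ref{defn.points.top} is a $\sigma$-algebra morphism $\idiom\to\ctop{\points_\Pi}$ in the sense of Definition~\ref{defn.morphism.sigma.alg}. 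That this lands in $\ctop{\points_\Pi}$ is Lemma~\ref{lemm.pp.p.compact} (applied to $s\uparrowp{\Pi}$). That it maps atoms to atoms, is equivariant, and commutes with the $\sigma$-action follows from the corollary to Theorem~\ref{thrm.lam.pp.sm} (namely $\pp{s\uparrowp{\Pi}}[a\sm t]=\pp{s[a\ssm t]\uparrowp{\Pi}}$), together with the fact that $\idiom$ is a termlike $\sigma$-algebra and the definitions of the $\sigma$-action on $s\uparrowp{\Pi}$ via Lemma~\ref{lemm.lsm.uparrow}; the atoms clause uses \rulefont{\sigma a} for $\idiom$.

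I do not expect any genuine obstacle here, since all the hard work has been done in the preceding subsections; the theorem is a summary. If anything, the one point requiring a line of care is verifying the morphism conditions for $\prg_{\points_\Pi}$, but each of the three conditions reduces directly to a cited result: condition~\ref{item.f.prg.maps.atoms.to.atoms} to \rulefont{\sigma a} and the definition of $\tf{atm}_\idiom$; condition~2 to the second part of Theorem~\ref{thrm.lam.pp.sm}; and condition~\ref{item.f.commutes.with.sigma} to the corollary of Theorem~\ref{thrm.lam.pp.sm} combined with Lemma~\ref{lemm.lsm.uparrow}. So the final write-up is essentially: ``By Proposition~\ref{prop.points.sigma.bpp} $\points_\Pi$ is a nominal $\sigma\bpp$-topological space; by Corollary~\ref{corr.points.coherent} it is coherent; by Lemma~\ref{lemm.points.sober} it is sober; and impredicativity holds by Lemma~\ref{lemm.pp.p.compact} and Theorem~\ref{thrm.lam.pp.sm}.'' I would present it in roughly that form, perhaps spelling out the impredicativity check in two or three sentences.
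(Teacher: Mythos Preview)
Your proposal is correct and follows essentially the same approach as the paper: cite Proposition~\ref{prop.points.sigma.bpp} for the $\sigma\bpp$-topological space structure, Corollary~\ref{corr.points.coherent} for coherence, Lemma~\ref{lemm.points.sober} for sobriety, and Theorem~\ref{thrm.lam.pp.sm} for impredicativity. The paper is slightly terser on the impredicativity check (it just says ``This is just Theorem~\ref{thrm.lam.pp.sm}''), whereas you helpfully spell out the role of Lemma~\ref{lemm.pp.p.compact} in ensuring the codomain is $\ctop{\points_\Pi}$; but the substance is the same.
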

\begin{proof}
By Proposition~\ref{prop.points.sigma.bpp} $\points_\Pi$ is a nominal $\sigma\bpp$-topological space.
By Proposition~\ref{prop.points.coherent} it is coherent and by Proposition~\ref{prop.points.sober} it is sober.
It remains to check that it is impredicative; that is, that $\prg_{\points_\Pi}=\pp{\text{-}}$ is a
morphism of $\sigma$-algebras (Definition~\ref{defn.morphism.sigma.alg}) from $\idiom$ to $\ctop{\points_\Pi}$.
This is just Theorem~\ref{thrm.lam.pp.sm}.
\end{proof}

\subsection{Logical properties of the topology, and completeness}
\label{subsect.toplogical.completeness}

Recall that at the start of this section we fixed an idiom $\idiom$ (Definition~\ref{defn.idiom}) and a $\lambda$-reduction theory $\Pi$ over $\idiom$ (Definition~\ref{defn.lambda.reduction.theory}).

\begin{rmrk}
\label{rmrk.on.the.canonical.model}
By Proposition~\ref{prop.points.coherent} $\points_\Pi$ is coherent.
Thus from Definition~\ref{defn.G} and Theorem~\ref{thrm.T.to.G.obj} we have that $G(\points_\Pi)$ is an nominal distributive lattice with $\tall$; it consists of compact open sets in $\points_\Pi$ ordered by subset inclusion.
By Lemma~\ref{lemm.pp.U.compact}, each compact open set is a finite union of sets of the form $\pp p$ for $p\in|\points_\Pi|$.
\end{rmrk}

\begin{nttn}
\label{nttn.canonical.model}
We call $G(\points_\Pi)$ the \deffont{canonical model}.
\end{nttn}
We now set about proving Theorem~\ref{thrm.Pi.completeness}, which uses $G(\points_\Pi)$ to prove completeness---the converse direction to soundness from Theorem~\ref{thrm.lambda.soundness}.

Recall from Definition~\ref{defn.ddenot} the definition of $\pdenot{s}$ and recall from Definition~\ref{defn.lamtrm.pi} that $\lamtrm$ is the set of $\lambda$-terms.
Suppose $\idiom=\lamtrm$.
\begin{lemm}
\label{lemm.pdenot.pp}
$\pdenot{s}=\pp{s\uparrowp{\Pi}}$
and as a corollary $\pdenot{s}\in|G(\points_\Pi)|$.
\end{lemm}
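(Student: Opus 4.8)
The plan is to prove $\pdenot{s}=\pp{s\uparrowp{\Pi}}$ by structural induction on the $\lambda$-term $s$, using the recursive clauses of Definition~\ref{defn.ddenot} together with the commutation results for $\pp{\text{-}}$ established in Proposition~\ref{prop.pp.p.commute}, Lemma~\ref{lemm.uparrow.app}, Proposition~\ref{prop.tall.lam.uparrow}, and conveniently packaged in Corollary~\ref{corr.pp.commute.lam}. The denotation $\pdenot{\text{-}}$ referred to here is $\ddenot{\text{-}}$ for $\ns D=G(\points_\Pi)$, and we should first note that by Remark~\ref{rmrk.on.the.canonical.model} $G(\points_\Pi)$ is an impredicative nominal distributive lattice with $\tall$, $\app$, and $\ppa$ (i.e.\ an object of $\indiapp$), with $\prg_{G(\points_\Pi)} u=\pp{u\uparrowp{\Pi}}$ inherited from clause~\ref{item.prg.points.Pi} of Definition~\ref{defn.points.top} via Theorem~\ref{thrm.points.spectral}. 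In particular $G(\points_\Pi)$ is replete, so $\pdenot{\text{-}}$ is well-defined.

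For the base case $s=a$: by Definition~\ref{defn.ddenot} $\pdenot{a}=\prg_{G(\points_\Pi)}(a_{\idiom})$, and by the definition of $\prg_{\points_\Pi}$ in Definition~\ref{defn.points.top} (clause~\ref{item.prg.points.Pi}) this equals $\pp{a\uparrowp{\Pi}}$, which is what we want (noting $a_\idiom$ is the copy of $a$ in $\idiom$, here $\lamtrm$). For the application case $s=s's''$: by Definition~\ref{defn.ddenot} $\pdenot{s's''}=\pdenot{s'}\app\pdenot{s''}$, which by the inductive hypothesis is $\pp{s'\uparrowp{\Pi}}\app\pp{s''\uparrowp{\Pi}}$, and by the first equation of Corollary~\ref{corr.pp.commute.lam} this equals $\pp{(s's'')\uparrowp{\Pi}}$. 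For the abstraction case $s=\lam{a}s'$: by Definition~\ref{defn.ddenot} $\pdenot{\lam{a}s'}=\tlam a.\pdenot{s'}$, which by the inductive hypothesis is $\tlam a.(\pp{s'\uparrowp{\Pi}})$, and by the second equation of Corollary~\ref{corr.pp.commute.lam} this equals $\pp{(\lam{a}s')\uparrowp{\Pi}}$.

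The corollary $\pdenot{s}\in|G(\points_\Pi)|$ then follows immediately: by Lemma~\ref{lemm.pp.p.compact} (corollary part) $\pp{s\uparrowp{\Pi}}$ is compact in $\points_\Pi$, i.e.\ $\pp{s\uparrowp{\Pi}}\in\ctop{\points_\Pi}=|G(\points_\Pi)|$ by Definition~\ref{defn.G}.

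I do not anticipate a serious obstacle here; the work has essentially all been done in the preceding lemmas, and the argument is a routine three-case induction. The only point requiring a little care is the abstraction case, where one must ensure that the $\alpha$-conversion conventions match up between the syntactic side ($\lam{a}s'$ with $a$ bound) and the semantic side ($\tlam a.(\pp{s'\uparrowp{\Pi}})$), but this is handled by the freshness bookkeeping already built into Corollary~\ref{corr.pp.commute.lam} and ultimately into Proposition~\ref{prop.tall.lam.uparrow} and condition~\ref{item.idiom.fresh} of Definition~\ref{defn.idiom}.
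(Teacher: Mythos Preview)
Your proposal is correct and follows essentially the same approach as the paper: a three-case structural induction on $\lambda$-terms using Definition~\ref{defn.ddenot} for the unfolding, clause~\ref{item.prg.points.Pi} of Definition~\ref{defn.points.top} for the base case, Corollary~\ref{corr.pp.commute.lam} for the application and abstraction cases, and Lemma~\ref{lemm.pp.p.compact} for the corollary. The paper's proof is slightly terser but structurally identical.
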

\begin{proof}
By a routine induction on $\lambda$-terms:
\begin{itemize*}
\item
$\pdenot{a}
\stackrel{\text{Def~\ref{defn.ddenot}}}{=}
\prg_{G(\points_\Pi)} a
\stackrel{\text{Def~\ref{defn.points.top}}}{=}
\pp{a\uparrowp{\Pi}}$.
\item
$\pdenot{s's}
\stackrel{\text{Def~\ref{defn.ddenot}}}{=}
\pdenot{s'}\app\pdenot{s}
\stackrel{\text{ind hyp}}{=}
\pp{s'\uparrowp{\Pi}}\app \pp{s\uparrowp{\Pi}}
\stackrel{\text{Cor~\ref{corr.pp.commute.lam}}}{=}
\pp{(s's)\uparrowp{\Pi}}$.
\item
$\pdenot{\lam{a}s}
\stackrel{\text{Def~\ref{defn.ddenot}}}{=}
\tlam a.\pdenot{s}
\stackrel{\text{ind hyp}}{=}
\tlam a.(\pp{s\uparrowp{\Pi}})
\stackrel{\text{Cor~\ref{corr.pp.commute.lam}}}{=}
\pp{(\lam{a}s)\uparrowp{\Pi}}$.
\end{itemize*}
The corollary follows from Lemma~\ref{lemm.pp.p.compact}, since $|G(\points_\Pi)|$ is by definition the set of compact open sets of $\points_\Pi$.
\end{proof}

Recall from Notation~\ref{nttn.Pi.cent} and Definition~\ref{defn.ddenot} the notations $\Pi\cent s\to t$ and $G(\points_\Pi)\ment\Pi$.
\begin{prop}
\label{prop.points.ment}
$G(\points_\Pi)\ment\Pi$.
\end{prop}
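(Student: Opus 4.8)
The goal is to show $G(\points_\Pi)\ment\Pi$, that is, for every $(s\to t)\in\Pi$ we have $\pdenot{s}\leq\pdenot{t}$ in $G(\points_\Pi)$, where $\leq$ is subset inclusion on compact open sets. The plan is to reduce this immediately to a statement purely about points and the reduction relation, using the machinery already assembled.

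First I would invoke Lemma~\ref{lemm.pdenot.pp} to rewrite $\pdenot{s}=\pp{s\uparrowp{\Pi}}$ and $\pdenot{t}=\pp{t\uparrowp{\Pi}}$. So the claim becomes: $s\arrowp{\Pi}t$ implies $\pp{s\uparrowp{\Pi}}\subseteq\pp{t\uparrowp{\Pi}}$. But this is precisely one of the equivalences packaged in Lemma~\ref{lemm.lam.pp.inj}: the chain there states $\pp{s\uparrowp{\Pi}}\subseteq\pp{t\uparrowp{\Pi}}\liff s\arrowp{\Pi}t$ (via the intermediate conditions $s\uparrowp{\Pi}\in\pp{t\uparrowp{\Pi}}$ and $t\in s\uparrowp{\Pi}$). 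So the proof is essentially a two-line citation: unpack $\pdenot{\cdot}$ via Lemma~\ref{lemm.pdenot.pp}, then apply the relevant direction of Lemma~\ref{lemm.lam.pp.inj}.

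There is essentially no obstacle here; all the work has been front-loaded into Lemma~\ref{lemm.pdenot.pp} (which handles the inductive translation of $\lambda$-syntax into sets of points, resting on Corollary~\ref{corr.pp.commute.lam}) and into Lemma~\ref{lemm.lam.pp.inj} (which handles the order-theoretic content, namely that $\pp{\text{-}}$ of the $\Pi$-closure is monotone/contravariant in the right way). The only thing to be slightly careful about is the direction of the implication: we need ``$s\arrowp{\Pi}t$ implies $\pp{s\uparrowp{\Pi}}\subseteq\pp{t\uparrowp{\Pi}}$'', which is the right-to-left reading of the first and last items in the equivalence chain of Lemma~\ref{lemm.lam.pp.inj}, and this follows because $s\arrowp{\Pi}t$ gives $t\uparrowp{\Pi}\subseteq s\uparrowp{\Pi}$ by transitivity (condition~\ref{item.point.to} of Definition~\ref{defn.pi.point}), whence $\pp{s\uparrowp{\Pi}}\subseteq\pp{t\uparrowp{\Pi}}$ by the general fact about $\pp{\text{-}}$ recorded in Lemma~\ref{lemm.pp.sub}. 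I would then remark that this proposition, combined with Soundness (Theorem~\ref{thrm.lambda.soundness}) applied to the replete model $G(\points_\Pi)$, is what drives the Completeness result Theorem~\ref{thrm.Pi.completeness}: if $\Pi\not\cent s\to t$ then $s\uparrowp{\Pi}\in\pp{s\uparrowp{\Pi}}\setminus\pp{t\uparrowp{\Pi}}$ witnesses $\pdenot{s}\not\leq\pdenot{t}$.
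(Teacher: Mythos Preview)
Your proposal is correct and follows essentially the same route as the paper: unpack $\pdenot{s}$ and $\pdenot{t}$ via Lemma~\ref{lemm.pdenot.pp}, then invoke Lemma~\ref{lemm.lam.pp.inj} to pass from $s\arrowp{\Pi}t$ to $\pp{s\uparrowp{\Pi}}\subseteq\pp{t\uparrowp{\Pi}}$. The additional remarks you make (the alternative justification via Lemma~\ref{lemm.pp.sub}, and the forward pointer to how this feeds into Theorem~\ref{thrm.Pi.completeness}) are accurate but go slightly beyond what the paper records for this proposition itself.
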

\begin{proof}
Unpacking Definition~\ref{defn.ddenot}, we must show that $(s\to t)\in\Pi$ implies $\pdenot{s}\leq\pdenot{t}$, where $\leq$ means $\subseteq$.

So suppose $(s\to t)\in\Pi$.
By Notation~\ref{nttn.Pi.cent} this means precisely $s\arrowp{\Pi}t$ and it follows by Lemma~\ref{lemm.lam.pp.inj} that $\pp{s\uparrowp{\Pi}}\subseteq \pp{t\uparrowp{\Pi}}$, so by Lemma~\ref{lemm.pdenot.pp} $\pdenot{s}\subseteq\pdenot{t}$ as required.
\end{proof}

Recall $\mathcal T\cent s\to t$ from Notation~\ref{nttn.Pi.cent} and $\mathcal T\ment s\leq t$ from Definition~\ref{defn.ddenot}:
\begin{thrm}[Completeness]
\label{thrm.Pi.completeness}
Suppose $\mathcal T$ is a set of reduction axioms.
Then:
\begin{frameqn}
\mathcal T\cent s\to t 
\quad\text{if and only if}\quad
\mathcal T\ment s\leq t.
\end{frameqn}
\end{thrm}
\begin{proof}
The left-to-right implication is Theorem~\ref{thrm.lambda.soundness}.

Now suppose $\mathcal T\ment s\leq t$ and write $\Pi{=}\f{rew}(\mathcal T)$ (Definition~\ref{defn.rew.equ}).
By Proposition~\ref{prop.points.ment} $G(\points_\Pi)\ment\Pi$ so (since $\mathcal T{\subseteq}\Pi$) $\pdenot{s}\subseteq \pdenot{t}$.
By Lemma~\ref{lemm.pdenot.pp} $\pdenot{s}=\pp{s\uparrowp{\Pi}}$ and $\pdenot{t}=\pp{t\uparrowp{\Pi}}$.
By Lemma~\ref{lemm.lam.pp.inj} $\Pi\cent s{\to}t$, thus by Lemma~\ref{lemm.T.Pi} $\mathcal T\cent s{\to}t$ as required.
\end{proof}

\subsection{Interlude: an interesting disconnect}
\label{subsection.interlude.ii}

The duality theorem from Theorem~\ref{thrm.equivalence.pp} is more general than the completeness theorem needs it to be.
The completeness result of Theorem~\ref{thrm.Pi.completeness} is based on $\points_\Pi\in\inspectapp$ from Definition~\ref{defn.pi.point}.
Although $\points_\Pi$ is a spectral space (Theorem~\ref{thrm.points.spectral})
it has much more structure than that.
For instance:
\begin{itemize*}
\item
It is replete (Definition~\ref{defn.replete}).
\item
It has an existential quantifier, as we note later in Definition~\ref{defn.jj}.
\item
Perhaps most usefully, points in $\points_\Pi$ have small support (Lemma~\ref{lemm.fresh.point.check}(\ref{point.finsupp})).
As we noted in Remarks~\ref{rmrk.classes.of.amgis} and~\ref{rmrk.p.not.finite.support}, the full generality of our duality theorem encompasses spaces whose points do not necessarily have small support.
\end{itemize*}
Could we obtain a more specific duality result for structures that have more of the structure apparent in $\points_\Pi$?

We probably could.
However, we do not do it in this paper.
$\indiapp/\inspectapp$ have the minimal structure we need to interpret the $\lambda$-calculus and to carry out a filter-based duality proof.
The less structure we impose, the more general our duality result,\footnote{Broadly speaking, within a given class of structures, the less structure we assume the more challenging the duality result is to prove.  If duality theory were a competitive sport then it would be like golf: the lower your score the better your game.} and we have more representations.

But in the completeness result we are happy if the canonical model has more structure, since it suggests more programming and reasoning constructs; an existential quantifier, for example, suggests that the ambient meta-logic implicit in $\points_\Pi$ permits unconstrained search.
We do not care about any other structures because we have built one \emph{particular} concrete structure and having built it, we want to obtain as many bells and whistles from it for free as possible.\footnote{So canonical models are like tennis: a higher score is better.}

So on the one hand we have a world where less structure is good, because fewer assumptions means stronger theorems that are valid for a larger class of entities (provided we can still build the things we want to in those entities, i.e. interpret the $\lambda$-calculus, which we can), and on the other hand we have a world where more structure is good, because it gives us more tools to actually do things.

The apparent disconnect comes from a difference between two styles, each of which is optimised for its own purpose.

The general trend in this paper is a progression from the abstract and general, like $\inspectapp$, to the relatively more concrete and specific, like $\points_\Pi$.

\section{Conclusions}

The semantics of this paper has the moderately unusual feature of being \emph{absolute}, meaning that variables are interpreted directly in the denotation and there is no (Tarski-style) valuation.

The reader may find this takes some getting used to, but it is actually simple and natural.

What corresponds to valuations is the $\sigma$-action, which allows us to take some $x$ and `evaluate' $a$ to $u$ in $x$ by forming $x[a\sm u]$.
This is an abstract nominal algebraic property of $x$; it is characterised by axioms (Figure~\ref{fig.nom.sigma}).
We do not necessarily have access to the internal structure of $x$.

However, we can certainly build concrete $\sigma$-algebras:
Two examples are $\lambda$-term syntax $\lamtrm$ from Definition~\ref{defn.lamtrm} and the canonical model $\points_\Pi$ from Subsection~\ref{subsect.toplogical.completeness}.
Another example is how we move from $\amgis$-algebra structure to $\sigma$-algebra structure (and back) using nominal powersets (Definition~\ref{defn.sub.sets}).
Nominal powersets have intersections, unions, and complements, and by combining all of these things we can interpret $\forall$ (Definition~\ref{defn.nu.U}).

In fact, it turns out that with a little more effort and just a bit more structure we can interpret application and $\lambda$ too.
This brings us on to another unusual feature of our topological semantics: it is \emph{purely} sets-based.

Algebraic (dually: topological) semantics for the $\lambda$-calculus exist, but
our semantics is this in a different and stronger sense than usual, because \emph{everything} is interpreted algebraically (dually: topologically), including variables, substitution, and $\lambda$-abstraction.

This paper gives a panoramic view of the interaction between nominal foundations and the $\lambda$-calculus.
This gives us something that shorter papers might not do so well: a feel for the overall point of view, and how the parts of the puzzle fit together.

\subsection{Related work}

\subsubsection{Algebraic semantics}

Algebraic semantics for logics or calculi with binding include polyadic algebras \cite[Part~II]{halmos:algl}, cylindric algebras \cite{henkin:cyla}, and Lambda Abstraction Algebras \cite{salibra:algmlc}.
As far as we know, what is done in this paper has not been done in any of these (but see below).

We can suggest technical reasons for this.
Consider for instance the treatment of substitution in this paper.

For us substitution exists independently from $\beta$-reduction---this is the notion of $\sigma$-algebra from Subsection~\ref{subsect.sigma.amgis}.
This is important for our constructions to work.
For instance, Subsections~\ref{subsect.sigma.to.amgis} and~\ref{subsect.amgis.to.sigma} do not assume $\lambda$ and application, they only assume $\sigma$ and $\amgis$.
This is reflected in commutation results like Lemma~\ref{lemm.dup} and Theorem~\ref{thrm.lam.pp.sm}.

The commutations for $\lambda$ are later, and much harder: Lemmas~\ref{lemm.filter.pp.app.commute} and~\ref{lemm.filter.adjoint.compat} for $\indiapp$ and Corollary~\ref{corr.pp.commute.lam} for $\points_\Pi$.

It not obvious how substitution \emph{on its own} could be axiomatised without permutations and freshness side-conditions, i.e. without nominal algebra.
LAAs do not do this, neither do cylindric algebras.
Polyadic algebras \emph{assume} a monoid of substitutions.
This is tantalisingly close to finite permutations, but without their invertibility.

By enriching the foundation with names and binding, nominal techniques allow us to express new kinds of algebraic structures---such as substitution and its dual amgis (Figure~\ref{fig.nom.sigma}), and universal quantification (Figure~\ref{fig.genhnu}).
We exploited that this paper to break constructions up into more manageable parts:
\begin{itemize*}
\item
we split $\lambda$ into $\forall$ and $\ppa$ (Notation~\ref{nttn.lambda}),
\item
$\forall$ into $\new$ (see the discussion of condition~\ref{filter.new} of Definition~\ref{defn.filter} in Subsection~\ref{subsect.list.of.technical.features}),
\item
$\beta$-reduction into $\app$, $\ppa$, and $\sigma$ (Proposition~\ref{prop.lambda.beta.eta}), and then we split
\item
$\sigma$ (substitution) itself into permutation and freshness (Definitions~\ref{defn.sub.algebra} or~\ref{defn.sub.sets}).
\end{itemize*}
Consequences of this include that the theory of $\forall$ becomes partially independent of the theory of $\sigma$.\footnote{We see this in various places in this paper, and since this paper is large we point two of them out:

Condition~\ref{filter.new} of Definition~\ref{defn.filter} imposes a condition on filters relating $\tall$ to $\new$ and not to $\sigma$.
This is not the condition that one would expect from reading Definition~\ref{defn.nu.U}, but it is necessary, as discussed in Remark~\ref{rmrk.why.filter.new}.

Also, Figure~\ref{fig.nom.sigma} does not mention $\forall$ and Figure~\ref{fig.genhnu} (the algebraic presentation of $\forall$; the body of the paper uses nominal lattices instead, but this is equivalent) does not mention $\sigma$.
There is a connection between $\forall$ and $\sigma$ of course, which is expressed explicitly as a further axiom: see the notion of \emph{compatibility} in Definition~\ref{defn.fresh.continuous}, whose validity requires non-trivial work to check in Lemma~\ref{lemm.technical} and Proposition~\ref{prop.all.sub.commute}.}
So breaking constructions up into more manageable parts does not mean here that familiar proofs are carried out in smaller steps: it means that new kinds of proofs are made possible by assembling these parts into new kinds of structures with usefully different proof-dependencies to those we are used to seeing.

Representation theorems exist for cylindric algebras; for instance \cite{monk:reptca} gives a representation theorem for cylindric algebras, and \cite{pigozzi:lamaar} gives one for LAAs.
In both cases, an algebra is represented concretely as a set of valuations on the variables (on the indexes; the things that correspond to atoms in this paper).

This is typical.
Representation theorems for cylindric-algebra-style systems do all seem to use something corresponding to a set of valuations.
It works, but it is
a soundness and completeness proof with respect to Tarski-style semantics.
There is no duality result.

The only duality results we know of for logic were undertaken by Forssel \cite{forssell:firold} and by the first author \cite{gabbay:stodfo}.
See \cite{gabbay:stodfo} for a comparison of the two.

A \emph{Stone Representation} has been given for Lambda Abstraction Algebras.
This is a factorisation result in the style of the HSP theorem (also known as Birkhoff's theorem):
every LAA can be factored as a product of irreducible LAAs.
The factorisations are identified by \emph{central elements} of the algebra \cite{salibra:appual}.
The LAA is never represented as anything resembling a Stone space, and there is no duality.

In passing, we note that the HSP part of the LAA result also follows in this paper (for $\indiapp$) off-the-shelf, by the \emph{nominal} HSPA theorem \cite{gabbay:nomahs,gabbay:nomtnl}.
In other words, we get some of \cite{salibra:appual} for free, just by virtue of being nominal and using nominal algebra.
The HSPA factorisation is slightly better than the HSP factorisation (because it has an A in it: for atoms-abstraction).
Investigating any extra power this gives what that might mean for (nominal) LAAs, is an open problem.

We mention also \cite{kurz:unians}.
This is an attempt to encode what makes nominal techniques work using many-sorted universal algebra.
Equivariance, however, gets lost in the translation; a similar phenomenon was noted in \cite{gabbay:pnlthf} translating permissive-nominal logic (a first-order generalisation of nominal algebra) to higher-order logic.

\subsubsection{Absolute semantics}

Absolute semantics have appeared before.
Lambda-abstraction algebras (for the $\lambda$-calculus) and cylindric algebras and polyadic algebras (for first-order logic) are absolute.
Selinger made a case for using absolute semantics for the $\lambda$-calculus in \cite{selinger:lamca} (see Subsection~2.2); a line of thought echoed by the first author with Mulligan in \cite{gabbay:nomhss}.

Yet absolute semantics have not caught on.
We are inclined to believe that this is because the mathematical foundations to support it were not in place before, but they are now.
Now that we have nominal techniques we can make a lot of things work that would not work before.

Without nominal techniques things we use repeatedly in the current paper, like small support, freshness side-conditions, equivariance, the $\new$-quantifier, and even $\alpha$-equivalence, become challenging in various technical fiddly ways, and even the statements of  some properties become practically impossible to even write out.
For instance, how we might we render condition~\ref{filter.new} of Definition~\ref{defn.filter}, or condition~\ref{item.point.lam} of Definition~\ref{defn.pi.point}, or Definition~\ref{defn.qasmu}---to choose three out of many possible examples---without small support, freshness, and the $\new$-quantifier?
We would probably have to invent them first.

We also mention Kit Fine's \emph{arbitrary objects} \cite{fine:reaao} as an instance of a similar impulse towards absolute semantics, coming from philosophy.

There are precedents for this paper in the first author's work; indeed this paper is based on them.
The nominal semantics and duality results for first-order logic in \cite{gabbay:stodfo} and \cite{gabbay:semooc} are absolute, and are very much in the style and research programme of this paper.

Nominal algebra has helped us to reduce mathematical overhead and to simplify some technical manipulations that are otherwise all too easy to get bogged down in.
This is just what any good mathematical toolbox or foundation should do.

\subsubsection{$\eta$-expansion}
\label{subsect.extensional}

In Proposition~\ref{prop.lambda.beta.eta} we saw $\beta$-reduction and $\eta$-expansion appear spontaneously as corollaries of adjoint properties.
So our notion of $\lambda$-reduction theory is more general than an extensional $\lambda$-equality theory because reductions can go one way and not the other, but it is also more specific than just any set of reductions because it must contain $\eta$-expansion.

The reader used to seeing $\eta$ as a contraction rule in rewrite systems might be interested in a thread of publications by Barry Jay and Neil Ghani, which argues in favour of $\eta$-expansion from the point of view of rewriting, for better confluence and other properties which they list.
See \cite{ghani:virea} and \cite{ghani:etaedt}.

For us too, expansion rather than contraction seems to be the natural primitive.

We believe that we can remove $\eta$-expansion at some cost in complexity in the models.
We do this by considering \emph{two} application operations $\app'$ and $\app$; one intensional and one extensional; a detailed development is the topic of \cite{gabbay:simcmt}.

\subsubsection{Surjective pairing}
\label{subsect.surj.pairing}

To extend the $\lambda$-calculus with \deffont{surjective pairing} we add constants $\f{pair}$, $\f{proj}_1$, and $\f{proj}_2$ and equations $\f{proj}_1(\f{pair} s_1 s_2)=s_1$ and $\f{proj}_2(\f{pair} s_1 s_2)=s_2$ and \emph{surjectivity} $\f{pair}((\f{proj}_1 s)(\f{proj}_2 s))=s$ as axioms to the untyped $\lambda$-calculus.\footnote{\emph{Pairing} can be implemented definitionally as $\lambda a.\lambda b.\lambda f. f a b$ with first and second projections $\lambda a.a\,\f{true}$ and $\lambda a.a\,\f{false}$ where $\f{true}=\lambda a.\lambda b.a$ and $\f{false}=\lambda a.\lambda b.b$---but this is not \emph{surjective}; not every lambda term is a pair.
Surjective pairing is surjective by the surjectivity axiom above.
This is a proper extension of the $\lambda$-calculus \cite{barendregt:paiwcr}.}

Surjective pairing is of interest because the categorial semantics of the untyped $\lambda$-calculus naturally generates semantics for the $\lambda$-calculus with surjective pairing; nice discussions are in \cite[subsection~2.5]{scott:somacc} and (for a more detailed exposition) \cite[Part~I section~17]{lambek:inthoc}.

It seems straightforward to add $\f{pair}$, $\f{proj}_1$, and $\f{proj}_2$ with axioms as above to $\indiapp$ (Definitions~\ref{defn.D.impredicative} and~\ref{defn.FOLeq.pp}).
Determining what corresponding additional structure this would translate to in the dual category $\inspectapp$ (Definition~\ref{defn.nom.top.spectral.bpp}) is future work.

\subsubsection{Previous treatment of $\lambda$-calculus by the authors}

In \cite{gabbay:capasn,gabbay:oneaah} the first author and Mathijssen developed \emph{nominal algebra}
and axiomatised substitution and first-order logic, with completeness proofs.
Journal versions are \cite{gabbay:capasn-jv,gabbay:oneaah-jv}.

An axiomatisation, again with completeness proofs, for the $\lambda$-calculus followed in \cite{gabbay:lamcna,gabbay:nomalc}.

So the $\sigma$-axioms which appear in Figure~\ref{fig.nom.sigma} are taken from \cite{gabbay:capasn}, and the axioms for $\beta$ and $\eta$ are descended from \cite{gabbay:lamcna,gabbay:nomalc}.

In \cite{gabbay:stodfo} we applied duality theory to in nominal sets to the axiomatisation of \cite{gabbay:oneaah,gabbay:oneaah-jv}.
The main conceptual challenge (aside from the inherent difficulty of duality proofs) was to invent $\amgis$-algebras.\footnote{This took a couple of years: once the first author understood that for a duality result, a \emph{dual} to $\sigma$ was needed, the paper was easy to write.
At least, for a certain highly technical value of `easy'.}
The $\amgis$-axioms of Figure~\ref{fig.nom.sigma} are from \cite{gabbay:stodfo}.
We have taken this further in \cite{gabbay:semooc}.

This paper carries out a similar project to \cite{gabbay:stodfo}, but for the $\lambda$-calculus.
This has been a tougher target than first-order logic, which is unsurprising.
The main conceptual difficulty of this paper over the previous work is the treatment of application and $\lambda$ using adjoints and the logical quantifier.
The ideas for this are from \cite{gabbay:simcks} (see Figure~2, where $\ppa$ is written $\triangleright$).
The similarity with \cite{gabbay:simcks} is somewhat hidden just because it was written in a `modal logic' style.
That style has been replaced in this paper by the nominal foundations.

In summary, and at least in principle, this paper just combines \cite{gabbay:capasn}, \cite{gabbay:lamcna}, and \cite{gabbay:simcks} with \cite{gabbay:stodfo}.
(What could be simpler or more natural?)

\subsubsection{No conflict with topological incompleteness results}
\label{subsect.no.conflict}

The best-known models of the untyped $\lambda$-calculus are Scott's domain models and generalisations: graph semantics; filter semantics; stable semantics; strongly stable semantics; and so on.
An excellent discussion with references---an annotated bibliography and survey, in fact---appears in \cite{salibra:contlc} between Theorems~4.5 and~4.6.

These are all ordered structures, and this is key, since the idea is to reduce the function space using continuity conditions.

These semantics are all incomplete.
That is, domains-based denotational semantics proved the $\lambda$-calculus consistent, but results like \cite[Theorems~3.5 and~4.9]{salibra:topioi} proved that this is not the whole story: see also \cite{salibra:contlc}.\footnote{Page~2 of \cite{salibra:topioi} includes a brief but comprehensive history of such results.  The first incompleteness result was given in \cite{honsell:appttl} for the continuous semantics (Scott's construction).
This was followed by several generalisations.
Salibra's treatment has the benefit of covering a range of semantics in a uniform way.
}

The reader familiar with this literature and who has seen e.g. Theorem~3.5 of \cite{salibra:topioi} might be puzzled by Theorem~\ref{thrm.Pi.completeness}: the former states that no semantics in terms of partially-ordered models with a bottom element can be complete, whereas the latter claims to prove completeness for a semantics based on $\indiapp$, and an object of $\indiapp$ is a lattice and has a bottom element $\tbot$.

However, nothing insists that $\tbot$ should be a program.
That is, in the notation of Notation~\ref{nttn.impredicative.D}, it is perfectly possible that $\ns D\in\indiapp$ and $\tbot\not\in\prg\ns D$.

This illustrates that $\ns D$ is a logical structure---its dual is topological---and just a \emph{subset} $\prg\ns D$ is deemed to be `computational'.
The models of the $\lambda$-calculus live in $\prg\ns D\subseteq\ns D$, and $\prg\ns D$ need not be closed under meets or joins.

The formal sense in which this is intended is just that programs are the things that can be substituted for by the $\sigma$-action; so intuitively atoms in $\ns D\in\indiapp$ `range over' programs (more on this in Subsection~\ref{subsect.cardinality}).
In the light of this reading of the definitions, Definition~\ref{defn.replete} calls $\ns D$ \emph{replete} when its programs are Turing complete.

It remains to discover whether there exists a $\lambda$-equality theory such that if $\ns D\in\indiapp$ is a model of that theory then it can have no non-trivial order on its \emph{programs} (so if $x,y\in|\prg\ns D|$ then if $x\leq y$ then $y\leq x$).

\subsubsection{Game models}

A game semantics for the untyped $\lambda$-calculus is given in \cite{ker:inngmu-thesis,ker:inngmu} and is complete (unlike the models discussed in Subsection~\ref{subsect.no.conflict}), and \emph{also} abstract. 

Let us briefly compare and contrast the following three complete semantics for the untyped $\lambda$-calculus: terms quotiented by equivalence, the games semantics, and the topological duality models of this paper.
Of these three, the last two are abstract, and informally speaking they are listed above in \emph{increasing} order of size---that is, a term model is a fairly small entity, a games semantics is only slightly larger, and the duality models may be very large indeed (see Subsection~\ref{subsect.cardinality}).
Also broadly speaking, if we wanted to compute on a $\lambda$-term then a term or games model would probably be the most efficient---if we wanted to investigate logical non-computational properties of models, then the models of this paper might be helpful.

Key technical moments in \cite{ker:inngmu-thesis} are Theorem~1.5.4 on page~13, the definition of an \emph{effectively almost-everywhere copycat} (EAC) strategy on page~57, and the discussion opening Subsection~5.2 on page~106.
These suggest that the game semantics works essentially by identifying the abstract properties that B\"ohm trees representing $\lambda$-terms possess such that these trees look game semantics. 

The precise relationship with the models in this paper is unclear, except perhaps to observe that the games model is `bottom-up', building semantics by abstracting from concrete tree structures, whereas our semantics is more `top-down' and algebraic, building semantics by using axioms and topologies to carve out well-behaved subspaces of huge powersets.
If there is any point of contact, the natural place to look for it would be in Section~\ref{sect.lambda.representation}, where we concretely build models $\points_\Pi$ out of syntax (see Subsection~\ref{subsect.fine.structure}).

\subsubsection{Sheaves}

We impose a topology on a set to reduce the size of the function-space by restricting to continuous functions.
Sheaves do much the same thing, but in more generality.

Nominal sets form a category which admits a sheaf presentation (a discussion specific to nominal techniques is in \cite{gabbay:fountl}).
Simplifying a little, this amounts to observing that equivariance (commuting with the permutation action) can be represented as a generalised `continuity' condition.
There is no need to stop there.
We could try to make `continuity' represent, for instance, \emph{compatibility} conditions such as \rulefont{\sigma\app} from Figure~\ref{fig.app.compatible}.

This is what is done by the \emph{Topological representation of the $\lambda$-calculus} considered in \cite{awodey:toprlc}.
Examining equation (15) of the paper we see that, essentially, an open set is a set of substitution instances of evaluations from variables to terms.
(The calculations are given only for the simply-typed $\lambda$-calculus.)
Continuity ensures that function application commutes with substitution, i.e. \rulefont{\sigma\app}.

The closest thing to the construction of \cite{awodey:toprlc} is the construction of $\points_\Pi$ in Definition~\ref{defn.pi.point}.

Both are representations of the (simply-typed) $\lambda$-calculus, and both are topological, but beyond that we see little resemblance between the two constructions.
Our consistency conditions are axiomatic, and we use the topology to do logic and so to break $\lambda$-down into $\forall$ and $\ppa$.
Substitution $\sigma$ is managed by axioms.

\subsubsection{Proof theory}

The design of the \emph{combination operator} $\bpp$ from Definitions~\ref{defn.bpp} and~\ref{defn.nom.top.bpp} goes back to the Kripke-style models of the untyped $\lambda$-calculus from \cite{gabbay:simcmt,mgabbay:simmtu}.
A discussion with specific references is in Remark~\ref{rmrk.where.from}.

These Kripke-style models were developed from a proof-theory~\cite{mgabbay:lambdacut} which views $\lambda$-reduction as a logical derivation rule, and gives it a cut-free sequent-style derivation system.
Thus this paper comes (in some sense) full circle when in Notation~\ref{nttn.lambda} we interpret $\lambda$ semantically in a way that explicitly contains a logical quantifier $\forall$.

The current paper, extensive as it is, is also embedded in and consistent with a broader research context.

\subsubsection{In what universe does this paper take place?}

The points built in Theorem~\ref{thrm.maxfilt.zorn} do not have small support, and in Definition~\ref{defn.bus.algebra} we assume a set with a permutation action but not necessarily a nominal set.
Thus, this paper does not take place entirely in the topos of nominal sets; we do whatever is convenient to get the results we need and do not commit to any specific logic when we get them, even though our main results can be stated entirely in the nominal sets universe.
In this we are being typical mathematicians, reasoning freely in English about informally but precisely specified mathematical objects.\footnote{Something similar happens in category theory when we talk about `the category of all sets'; what does that live in? This is usually left unspecified, which is generally fine, or at least, is generally not objected to.}

\subsubsection{Discussion of some decisions}
\label{subsubsect.design.decisions}

While formulating the mathematics in this paper we faced certain high-level design decisions.
For the reader's convenience we briefly mention some of these decisions along with our reasons for making them:
\begin{enumerate*}
\item
\emph{When building a nominal topological space it is natural that open sets (representing predicates) should be small-supported.  Should points also be small-supported?}

Not in general.
Intuitively this is because we want Zorn-style arguments building maximally consistent sets of predicates to be possible, such as appear in Theorem~\ref{thrm.maxfilt.zorn} in this paper and in Theorem~7.1.20 of \cite{gabbay:semooc}.
See Remark~\ref{rmrk.classes.of.amgis}.

However, the points of the canonical model $\points_\Pi$ (Subsection~\ref{subsect.toplogical.completeness}) \emph{are} small supported.
See the discussion in~\ref{rmrk.any.two}.
\item
\emph{When considering compactness, should covering sets be small-supported?  Or should they be strictly small-supported?}

The more convenient notion seems to be `small-supported'; see Definition~\ref{defn.n.closed}.
However, Proposition~\ref{prop.extra.filter} notes that in one important special case, the two are equivalent.
\item
\emph{What is the correct notion of `point' in the presence of a nominal universal quantifier?}

This paper contains two distinct notions of `point': one based on filters which is designed for the duality construction; the other based on sets of syntax closed under reduction which is designed for the canonical model.
Both require an non-evident condition to account for $\tall$.
See Definition~\ref{defn.filter} condition~\ref{filter.new}, and Definition~\ref{defn.pi.point} condition~\ref{item.point.lam}.
\item
\emph{In the presence of an amgis-action, what is an appropriate notion of freshness?}

The two notions of point give distinct answers to this question.
In Definition~\ref{defn.filter} the question is ill-formed because points are not necessarily small-supported.
In Definition~\ref{defn.pi.point} we assume small-supported, and the answer is expressed by Definition~\ref{defn.afreshp}, Definition~\ref{defn.pi.point} condition~\ref{item.point.lam}, and Proposition~\ref{prop.fresh.point}.
\end{enumerate*}

\subsection{Future work}

\subsubsection{Fine structure of the canonical model}
\label{subsect.fine.structure}

In Subsection~\ref{subsection.interlude.ii} we noted that the canonical model $\points_\Pi$ has plenty of structure.
It remains to explore that structure: $\points_\Pi$ is a lattice and so contains a logic.
We know this has interesting structure, investigated from Subsection~\ref{subsect.Pi.points}.
That does not exhaust the possibilities: Appendix~\ref{subsect.existential} 
explores an existential quantifier on $\points_\Pi$, and this invites us to ask what the full logic of the canonical model $\points_\Pi$ is, and might we use it to investigate the $\lambda$-calculus.

Proposition~\ref{prop.points.Pi.sigma.algebra} notes that $\points_\Pi$ also has a $\sigma$-action, which we characterise in different ways in Subsection~\ref{subsect.two.chars.lsm}; the characterisation in Lemma~\ref{lemm.lsm.id} seems particularly appealing.
As we note in the body of the paper, there is probably a general theory here: a way of, given a $\sigma$-action on $\ns X$, building a $\sigma$-action on the nominal powerset of $\ns X$.
Such a theory was already undertaken in \cite{gabbay:stusun}, where constructions were applied to models of Fraenkel-Mostowski set theory; thus generating a huge class of huge $\sigma$-algebras, since there are many models of FM sets and many sets in each model.
The construction in Proposition~\ref{prop.points.Pi.sigma.algebra} suggests the possibility of a cleaner and/or alternative development of similar ideas.\footnote{\dots and this is exciting.  Most of this paper works by building various $\sigma$-algebras over relatively simple nominal algebraic structures like sets of points.  What more could be achieved if we gave ourselves an entire mathematical foundation structure to play with?}

$\points_\Pi$ is a remarkable mathematical object which may be somewhat overshadowed in this paper by the rest of the material.
The proofs in Section~\ref{sect.lambda.representation} have `coincidences' (two of several examples are the existence of the $\sigma$-action mentioned in the previous paragraph, and Lemma~\ref{lemm.tall.unions}), and some strong hints towards general theories (such as that noted in Subsection~\ref{subsect.additional.lemmas} of `nominal adjoints', that is; adjoint relations subject to freshness side-conditions).
These suggest richnesses of behaviour which may merit further examination.

\subsubsection{Weaken the axioms}

We have used nominal lattices to give semantics to the $\lambda$-calculus.
Part of our axiomatisation includes the conditions of Figures~\ref{fig.nom.sigma} and~\ref{fig.app.compatible}.

We can consider weakening these axioms, to obtain more general classes of structures.
For instance:
\begin{enumerate}
\item
\rulefont{\sigma\alpha} from Figure~\ref{fig.nom.sigma} expresses that $\sigma$ is a binder in the sense that the $a$ in $x[a\sm u]$ may be $\alpha$-converted.
If we remove this condition then we obtain a notion of `fusion' between names and values, but without then binding the name.
This was touched on in Subsection~\ref{subsect.further.remarks}; a concrete model is obtained by a simplified variant of Definition~\ref{defn.sub.sets} in which the $\new c$ quantification is removed (so that we lose Lemma~\ref{lemm.sigma.alpha}).
\item
The axiom \rulefont{\sigma\app} expresses that substitution commutes with application.

We discussed in detail in Remark~\ref{rmrk.explicit.substitutions} how this might be weakened in one direction to model an explicit substitution.

It might also be removed entirely, to get an environment which would allow functions to detect atoms (that is, variable symbols) in their arguments.

Meta-programming is a large field which has proven resistant so far even to precise categorisation.
Generalisations of $\indiapp$ without \rulefont{\sigma\app} might be one place to start looking for mathematical semantics.
\item
Further examples are easy to generate; it suffices to choose an axiom and modify or delete it.
For instance, relaxing \rulefont{\sigma\#} permits structures that are sensitive to associating a value to a fresh name (like a valuation context, a memory state, or indeed like the $\amgis$-algebras in this paper), and relaxing \rulefont{\sigma\sigma} permits structures that are sensitive to the order of association (like a process listening on a pair of ports), and so forth.
\end{enumerate}

\subsubsection{More structure in the existing axioms}

On a related note, in \cite{gabbay:simcks} we noted that $\lambda$ has a dual construction, $\tall a_1\dots a_n.(t\ppa s)$, of \emph{pattern-matching} (i.e. it applies to points in the `pattern' $t$ and outputs the same points in the pattern $s$).
Operational semantics taking pattern-matching as fundamental include for instance Jay's \emph{pattern calculus} \cite{jay:patc}, and there may be more to discover about what logics and calculi---other than the application to untyped $\lambda$-calculus contained in this paper---the structures in this paper can help develop.

The structures $\indiapp$ and $\points_\Pi$ are not just for the $\lambda$-calculus. 
Far from it: it is clear that they are rich and interesting environments, combining computational and logical structures.
We have used them as a bridge between lattice-theory and $\lambda$-calculus, but
we would go so far as to suggest that for some people this bridge might be just as interesting as the $\lambda$-calculus itself, and could be studied in its own right.

\subsubsection{Duality and cardinality}
\label{subsect.cardinality}

We discussed in Remarks~\ref{rmrk.size.issues} and~\ref{rmrk.size.set.up} that we insist that $\ns D$ be no larger than $\mathbb A$ (see Definitions~\ref{defn.D.impredicative} and~\ref{defn.impredicative.top}). 
This condition is needed just once in this paper, in the proof of Theorem~\ref{thrm.maxfilt.zorn}, which is key to the duality proof.

Traditionally nominal techniques take $\mathbb A$ to be countable, though the theory works perfectly well for larger sets of atoms, as we have seen in this paper.
Provided $\mathbb A$ is infinite, the proofs work, and this paper is parametric in the choice of size made in Definition~\ref{defn.atoms}.

An argument can be made that we intend $\lambda$-calculus models to represent computable objects so they should anyway be countable.
Then we can take $\f{size}(\mathbb A)$ in Definition~\ref{defn.atoms} to be $\aleph_0$ the first infinite cardinal (and the cardinality of $\mathbb N$), and we can take $\alpha$ in the proof of Theorem~\ref{thrm.maxfilt.zorn} to be $\omega_0$ the first infinite ordinal.

We also recall the L\"owenheim-Skolem theorems \cite[Corollaries~3.1.5 and~6.1.4]{hodges:modt} (specifically, that in suitable conditions a model exists if and only if a countable model does).
This suggests intuitively that countable models already have `all interesting structure'.

And indeed: this is reflected in the proofs.  Soundness (Theorem~\ref{thrm.lambda.soundness}) does not care if $\f{size}(\mathbb A)<\f{size}(\ns D)$, and completeness depends only on the syntactic idiom from Example~\ref{xmpl.canonical.idiom} (the canonical model is constructed from $\lambda$-syntax and is countable if its set of atoms is countable).

So for soundness and completeness we can take $\mathbb A$ to countable, even if $\ns D\in\india$ is uncountable. The proofs still work and even the text of the proofs remains unchanged: we just replace `small support' with `finite support' and $\f{size}(\mathbb A)$ with `countable'.

Only if we want a \emph{duality} result do we care that atoms be as large as the model as a whole.
We could simply insist that models be countable, but we prefer to allow a set of atoms of any size; it is the more general choice.

All this suggests a slogan: under duality, \emph{names are the dual to programs}.
This seems intuitively reasonable, and pursuing the maths here further is future work.

The L\"owenheim-Skolem theorems also suggest future work to see if we can transform a model constructed over a set of atoms with one cardinality, into one with more (or even fewer) atoms.
Methods for doing this in nominal algebra (see Appendix~\ref{subsect.alg}) already exist \cite{gabbay:finisn}.

\subsubsection{Size of permutations}
\label{subsect.perm.size}

We noted in Remark~\ref{rmrk.finite.perm} that we take permutations in Definition~\ref{defn.permutation} to be finite, not small.
This may seem like a mismatch between permutations (which are finite) and support (which is small, but may be infinite).
However, what governs the size of the permutations required is how powerful our \emph{binders} are, not how many \emph{free atoms} (i.e. how large a support) we need to allow.
$\lambda$-syntax is finite and only nests finitely many binders, so we only ever need to rename finitely many atoms at a time.
We allow potentially infinite support so we can use atoms to `name' elements, as discussed in Subsection~\ref{subsect.cardinality}.

It may be worth sparing a few words for the generalisation to infinite permutations.\footnote{Thanks to an anonymous referee for noting this subtle issue.}
Suppose the set of atoms $\atoms$ is uncountable, and suppose permutations and support are both taken to be countable.
Consider the set $\mathbb L$ of countable streams of distinct atoms; we can model this as $\mathbb N\hookrightarrow\mathbb A$ the set of injections from natural numbers to atoms.
Consider the relation ${\sim}\subseteq\mathbb L\times\mathbb L$ such that $l\sim l'$ when $\Exists{n{\in}\mathbb N}\Forall{n'{\geq}n}l(n'){=}l'(n')$; by a convenient abuse of notation write this as $\New{n}l(n){=}l'(n)$, and we call $l$ and $l'$ \deffont{asymptotically equal}.

Write $[l]_\sim\in\mathbb L/{\sim}$ for the equivalence class of $l$ under $\sim$.
So $l'\in[l]_\sim$ when $l'$ is asymptotically equal to $l$.
Note that $\mathbb L/{\sim}$ inherits the pointwise permutation action from $\mathbb L$, and $A\subseteq\atoms$ supports $[l]_\sim$ when (again, abusing notation) $\New{n}l(n)\in A$.

Following Definition~\ref{defn.supp} write $\supp([l]_\sim)$ for the intersection of all small $A$ supporting $[l]_\sim$ and write $a\#[l]_\sim$ for $a\not\in\supp([l]_\sim)$.
It is clear that $(a'\ a)\act[l]_\sim=[l]_\sim$ and $\supp([l]_\sim)=\varnothing$ and $a\#[l]_\sim$ for every pair of atoms $a$ and $a'$.

Fix two streams of distinct atoms $l,l'{\in}\mathbb L$.
Write $(l'\ l)$ for the (countable) permutation that swaps $l(n)$ with $l'(n)$ for every $n{\in}\mathbb N$.
It is not hard to check that $(l'\ l)\act l{=}l'$ and $(l'\ l)\act[l]_\sim{=}[l']_\sim\neq[l]_\sim$.
Note that $\nontriv((l'\ l))\cap\supp([l]_\sim)=\varnothing$.

Thus if $\pi$ is an infinite permutation, it is not in general the case that $\Forall{a{\in}\atoms}a\#x$ implies $\pi\act x=x$ \cite[Lemma~21]{gabbay:genmn}.
This implication only holds in general if $\pi$ is \emph{finite}.
This is investigated in \cite[\S6.2]{gabbay:genmn} where it is called \emph{fuzzy support}.
See also the discussion at the end of \cite[Subsection~2.4]{gabbay:nomtnl}.

Infinite binding, and more specifically $\mathbb L$, permutations like $(l'\ l)$, and infinite atoms-abstraction $[l]x$, are interesting: they arise from the study of name-generating processes running in finite but unbounded time; they arise independently in nominal semantics for nominal terms syntax \cite{gabbay:finisn,gabbay:metvil}; and they are interesting in their own right.
So in future work we might want to generalise parts of this paper to admit infinite permutations and infinite atoms-binding.

We believe that much, and perhaps all, of the underlying nominal machinery admits this generalisation.
Making this speculation into a theorem is, thankfully, not required for this paper.
If we choose to do so in future work then we would pay the following price in complexity and convenience: we either surrender the convenience of talking about a \emph{single} unique least supporting set $\supp(x)$; or we insist by definition that $\ns X$ is only a `nominal set' when its elements have a unique least small supporting set of atoms, which excludes examples like $\mathbb L/{\sim}$ but incurs an extra proof-obligation on the $\ns X$ we construct.



\renewcommand\href[2]{#2}
\hyphenation{Mathe-ma-ti-sche}
\providecommand{\bysame}{\leavevmode\hbox to3em{\hrulefill}\thinspace}
\providecommand{\MR}{\relax\ifhmode\unskip\space\fi MR }
\providecommand{\MRhref}[2]{%
  \href{http://www.ams.org/mathscinet-getitem?mr=#1}{#2}
}
\providecommand{\href}[2]{#2}

\appendix

\renewcommand*{\thesection}{\Alph{section}} 

\section{More on fresh-finite limits}

\subsection{Nominal algebra axiomatisation of fresh-finite limits}
\label{subsect.alg}

A key definition in this paper has been Definition~\ref{defn.FOLeq}; this is `poset-flavoured', in the sense that $\tand$ and $\tall$ were characterised using fresh-finite limits (Definition~\ref{defn.fresh.finite.limit}).

It is interesting to ask whether fresh-finite limits can be rephrased in the syntax of nominal algebra, using equalities subject to freshness side-conditions.

This has implications, because if this can be done then $\india$ and $\indiapp$ are algebraic varieties, and satisfy the nominal HSP theorems from \cite{gabbay:nomahs,gabbay:nomtnl}.\footnote{The nominal HSPA theorem states that every nominal algebra model is a subobject of a homomorphic image of a cartesian product of atoms-abstractions of free algebras (atoms-abstraction for nominal algebras is defined in \cite{gabbay:nomahs}, as is `free algebra' and so on).  In spirit, this is like the factorisation of natural numbers into primes and similar factorisation theorems.  Such factorisation results are useful because they constrain the structure of models, and this is one of the applications of abstract algebraic techniques.

The result proved in \cite{gabbay:nomahs} considered an untyped syntax, but we expect it to generalise unproblematically to the typed case, if necessary.
}
This gives us off-the-shelf factorisation theorems similar to those considered in \cite{salibra:appual} (see especially Theorem~14), and in general, it is useful to know when a class of structures is an algebraic variety.

\begin{defn}
\label{defn.bounded.lattice}
A \deffont{bounded lattice} in nominal sets is a tuple
$$
\ns L=(|\ns L|,\act,\tand,\tor,\tbot,\ttop)
$$
where:
\begin{itemize*}
\item
$(|\ns L|,\act)$ is a nominal set which we may just write $\ns L$,
\item
$\tbot\in|\ns L|$ and $\ttop\in|\ns L|$ are equivariant \deffont{bottom} and \deffont{top} elements,
\item
$\tand,\tor:(\ns L\times\ns L)\equivarto\ns L$ are equivariant functions, such that $\tand$ and $\ttop$ form an idempotent monoid and $\tor$ and $\tbot$ form an idempotent monoid,
$$
\begin{array}{c@{\qquad}c@{\qquad}c@{\qquad}c}
(x\tand y)\tand z=x\tand(y\tand z)
&
x\tand y=y\tand x
&
x\tand x=x
&
x\tand \ttop=x
\\
(x\tor y)\tor z=x\tor(y\tor z)
&
x\tor y=y\tor x
&
x\tor x=x
&
x\tor \tbot=x
\end{array}
$$
\item
and $\tand$ and $\tor$ satisfy \emph{absorption}
$$
x\tand(x\tor y)=x
\qquad\qquad
x\tor(x\tand y)=x .
$$
\end{itemize*}
Here, $x,y,z$ range over elements of $|\ns L|$.
\end{defn}

A bounded lattice is a poset by taking $x\leq y$ to mean $x\tand y=x$ or $x\tor y=y$ (the two conditions are provably equivalent).
Definition~\ref{defn.bounded.lattice} is the usual definition of a bounded lattice, but over a nominal set; but we have not done anything with it yet.

Definition~\ref{defn.hnu.L} exploits the nominal set structure to algebraise the universal quantifier:
\begin{defn}
\label{defn.hnu.L}
Suppose $\ns L$ is a nominal poset.

A \deffont{(nominal) universal quantifier} $\tall$ on $\ns L$ is an equivariant map
$\tall: (\mathbb A\times\ns L)\equivarto\ns L$
satisfying the equalities \rulefont{\tall\alpha} to \rulefont{\tall{\leq}} in Figure~\ref{fig.genhnu}.
\end{defn}

\begin{figure}[t]
$$
\begin{array}{l@{\qquad}l@{\quad}r@{\ }l}
\rulefont{\tall\alpha}
&
b\#x\limp& \tall b.(b\ a)\act x=&x
\\
\rulefont{\tall{\tand}}
&
& \tall a.(x\tand y)=&(\tall a.x)\tand(\tall a.y)
\\
\rulefont{\tall{\tor}}
&
a\#y\limp& \tall a.(x\tor y)=&(\tall a.x)\tor y
\\
\rulefont{\tall{\leq}}
&
&\tall a.x\leq& x
\end{array}
$$
\caption{Nominal algebra axioms for $\tall$}
\label{fig.genhnu}
\end{figure}

\begin{lemm}
\label{lemm.a.fresh.tall.a}
$a\#\tall a.x$.
\end{lemm}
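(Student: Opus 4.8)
The statement $a\#\tall a.x$ says that the bound atom of $\tall a.(\cdot)$ genuinely vanishes from the support of the result, and the plan is to read this off from the $\alpha$-conversion axiom \rulefont{\tall\alpha}, together with equivariance of the quantifier and part~3 of Corollary~\ref{corr.stuff} (the ``$a\#z$ iff some swap $(b\;a)$ fixes $z$'' characterisation of freshness).

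First I would choose a fresh atom $b$, that is $b\#x$ and $b\neq a$; such a $b$ exists since $\supp(x)$ is finite. Since $\tall:(\mathbb A\times\ns L)\equivarto\ns L$ is equivariant (Definition~\ref{defn.hnu.L}) and the swap $(b\;a)$ sends $a$ to $b$, we get $(b\;a)\act(\tall a.x)=\tall b.((b\;a)\act x)$. Applying \rulefont{\tall\alpha}, whose side-condition $b\#x$ holds, rewrites the right-hand side, giving $(b\;a)\act(\tall a.x)=\tall a.x$. (If one reads \rulefont{\tall\alpha} in its bare printed form, this last equality instead reads $(b\;a)\act(\tall a.x)=x$; applying $(b\;a)$ once more then gives $\tall a.x=(b\;a)\act x$, from which the conclusion below follows equally well, this time via Proposition~\ref{prop.pi.supp} and the observation that $a\notin(b\;a)\act\supp(x)$ whenever $b\#x$.)

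It remains to package this up. By conservation of support (part~3 of Theorem~\ref{thrm.no.increase.of.supp}), $\supp(\tall a.x)\subseteq\{a\}\cup\supp(x)$, so in particular $b\#\tall a.x$. We have thus exhibited a $b$ with $b\#\tall a.x$ and $(b\;a)\act(\tall a.x)=\tall a.x$, and part~3 of Corollary~\ref{corr.stuff} yields $a\#\tall a.x$.

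There is essentially no obstacle here; the argument is a couple of lines. The one thing to watch is that we cannot simply quote Lemma~\ref{lemm.freshwedge.alpha}, whose proof presupposes exactly the fact $a\#\tall a.x$ being established now (for the poset-flavoured $\freshwedge{a}$ it was part of the defining property); in the purely axiomatic setting of the Appendix the derivation must go back to \rulefont{\tall\alpha} and the conservation-of-support principle, as above.
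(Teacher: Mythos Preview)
Your proof is correct and uses essentially the same ingredients as the paper: a fresh $b$, the axiom \rulefont{\tall\alpha}, and conservation of support. The paper's route is marginally more direct: it observes $a\#(b\ a)\act x$ via Proposition~\ref{prop.pi.supp}, then $a\#\tall b.(b\ a)\act x$ by conservation of support, and finally invokes \rulefont{\tall\alpha} to identify $\tall b.(b\ a)\act x$ with $\tall a.x$---thereby avoiding the detour through the swap-fixing characterisation of Corollary~\ref{corr.stuff}(3). Your parenthetical handling of the literal printed form of \rulefont{\tall\alpha} is in fact exactly the paper's argument; nice catch on the typo.
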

\begin{proof}
Choose fresh $b$ (so $b\#x$).
By Proposition~\ref{prop.pi.supp} $a\#(b\ a)\act x$.
It follows by Theorem~\ref{thrm.no.increase.of.supp} that $a\#\tall b.(b\ a)\act x$.
By \rulefont{\tall\alpha} $\tall b.(b\ a)\act x=\tall a.x$.
\end{proof}

\begin{prop}
\label{prop.bounded.simple}
Suppose $\ns L$ is a bounded lattice with $\tall$ and $x\in|\ns L|$.
Then $a\#x$ implies $\tall a.x=x$
\end{prop}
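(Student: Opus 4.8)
The plan is to obtain this directly from the $\alpha$-conversion axiom \rulefont{\tall\alpha} of Figure~\ref{fig.genhnu}, together with part~1 of Corollary~\ref{corr.stuff}. Note that Definition~\ref{defn.hnu.L} equips $\ns L$ only with the equational data of Figure~\ref{fig.genhnu}; we do not have available the lattice-theoretic universal property of $\tall a$ from Definition~\ref{defn.fresh.finite.limit}, so it is not obvious how to run a two-inequalities argument (only one direction, $\tall a.x\leq x$, is immediate from \rulefont{\tall{\leq}}). Instead the route is purely algebraic.

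First I would fix a fresh atom $b$, so that $b\#x$; such a $b$ exists since $\supp(x)$ is finite. Because we are assuming $a\#x$ and also $b\#x$, the swapping $(a\ b)$ fixes every atom of $\supp(x)$, so part~1 of Corollary~\ref{corr.stuff} gives $(a\ b)\act x = x$.

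Next I would take the instance of \rulefont{\tall\alpha} provided by $b\#x$, namely $\tall b.(b\ a)\act x = x$, and apply the permutation $(a\ b)$ to both sides. Since equality is equivariant and $\tall$ is equivariant (Theorem~\ref{thrm.equivar}), and since $(a\ b)(b)=a$ and $(a\ b)\circ(b\ a)=\id$, the left-hand side becomes $\tall a.x$ while the right-hand side becomes $(a\ b)\act x$; hence $\tall a.x = (a\ b)\act x$. Combining this with $(a\ b)\act x = x$ from the previous step yields $\tall a.x = x$, as required.

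There is essentially no obstacle here; the only point needing a moment's care is the bookkeeping of which atom plays the role of the `new' name in \rulefont{\tall\alpha}. The axiom is phrased so that the bound variable $b$ is fresh while the free variable $a$ may still occur in $x$, whereas we want the conclusion stated with $a$ itself bound. Since in our hypothesis $a$ is fresh for $x$, the two roles are symmetric and the permutation $(a\ b)$ interchanges them, closing the gap at once. (One could equally phrase this as renaming the free variables of the axiom schema, which amounts to the same thing.)
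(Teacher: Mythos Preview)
Your argument rests on reading \rulefont{\tall\alpha} exactly as printed in Figure~\ref{fig.genhnu}, namely $b\#x\limp\tall b.(b\ a)\act x=x$. Unfortunately that right-hand side is a typo: the intended axiom is $\tall b.(b\ a)\act x=\tall a.x$, the standard $\alpha$-conversion law, and this is precisely how the paper itself invokes \rulefont{\tall\alpha} in the proof of Lemma~\ref{lemm.a.fresh.tall.a}. As printed the axiom is far too strong: combined with equivariance of $\tall$ and \rulefont{\tall{\leq}} it would force $(b\ a)\act x=x$ whenever $b\#x$, so every element of $\ns L$ would have empty support. With the corrected axiom your permutation trick collapses: applying $(a\ b)$ to both sides of $\tall b.(b\ a)\act x=\tall a.x$ and using $a,b\#x$ yields only $\tall a.x=\tall b.x$, a tautology that does not deliver $\tall a.x=x$.

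The paper does in fact run the two-inequalities argument you set aside. One direction is \rulefont{\tall{\leq}}. For the other, idempotence of $\tor$ gives $x=x\tor x$, and then \rulefont{\tall{\tor}} (whose freshness side-condition on the second disjunct is exactly the hypothesis $a\#x$) yields $\tall a.x=\tall a.(x\tor x)=(\tall a.x)\tor x$, i.e.\ $x\leq\tall a.x$.
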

\begin{proof}
We reason as follows:
$$
\tall a.x=
\tall a.(x\tor x)
\stackrel{\rulefont{\tall{\tor}}}{=}
(\tall a.x)\tor x
$$
So $x\leq\tall a.x$.
Furthermore by \rulefont{\tall{\leq}} $\tall a.x\leq x$, and we are done.
\end{proof}

\begin{corr}
\label{corr.a.fresh.bounded}
Suppose $\ns L$ is a bounded lattice and suppose $\ns L$ has a nominal universal quantifier $\tall$.
Then $\tall a.x$ is the $a\#$limit for $\{x\}$.
\end{corr}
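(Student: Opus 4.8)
The plan is to check directly that $\tall a.x$ satisfies the three conditions characterising the $a\#$limit of $\{x\}$, in the sense of Definition~\ref{defn.fresh.finite.limit}: namely that $\tall a.x\leq x$, that $a\#\tall a.x$, and that $\tall a.x$ is $\leq$-greatest amongst those $z\in|\ns L|$ with $z\leq x$ and $a\#z$. The first condition is exactly axiom \rulefont{\tall{\leq}} of Figure~\ref{fig.genhnu}, and the second is Lemma~\ref{lemm.a.fresh.tall.a} (which itself uses \rulefont{\tall\alpha} and conservation of support, Theorem~\ref{thrm.no.increase.of.supp}). So both of these come for free, and all the work is in the maximality clause.

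For maximality, I would reason as follows. In a bounded lattice $z\leq x$ means by definition $z\tand x=z$. So suppose $z\in|\ns L|$ with $z\leq x$ and $a\#z$. Then $\tall a.z=\tall a.(z\tand x)\stackrel{\rulefont{\tall{\tand}}}{=}(\tall a.z)\tand(\tall a.x)$, which is precisely the assertion $\tall a.z\leq\tall a.x$. On the other hand $a\#z$, so by Proposition~\ref{prop.bounded.simple} we have $\tall a.z=z$. Putting these together gives $z\leq\tall a.x$, which is what we wanted. Combined with the first two clauses, $\tall a.x$ is the greatest $a$-fresh element beneath $x$.

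I do not expect a genuine obstacle here: the corollary is a short consequence of \rulefont{\tall{\leq}}, \rulefont{\tall{\tand}}, Lemma~\ref{lemm.a.fresh.tall.a}, and Proposition~\ref{prop.bounded.simple}. The only two points worth flagging are (i) that monotonicity of $\tall a$ is not postulated as an axiom but is extracted from its commutation with $\tand$ exactly as in the calculation above (the same manoeuvre used elsewhere, e.g.\ in the proof of Lemma~\ref{lemm.tall.monotone}), and (ii) that if one prefers the strict formulation of Definition~\ref{defn.freshwedges} --- greatest $z$ with $\supp(z)\subseteq\supp(x)\setminus\{a\}$ rather than merely $a\#z$ --- then one appeals to Proposition~\ref{prop.freshwedge.strict} to see that the two notions agree once the $a\#$limit exists, which it now does.
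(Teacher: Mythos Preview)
Your proof is correct and follows essentially the same route as the paper: both use \rulefont{\tall{\leq}} and Lemma~\ref{lemm.a.fresh.tall.a} for the first two clauses, and for maximality both derive monotonicity of $\tall a$ from \rulefont{\tall{\tand}} and then apply Proposition~\ref{prop.bounded.simple}. Your exposition is slightly more explicit about how the monotonicity step is extracted from \rulefont{\tall{\tand}}, but the underlying argument is identical.
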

\begin{proof}
By \rulefont{\tall{\leq}} $\tall a.x\leq x$ and by Lemma~\ref{lemm.a.fresh.tall.a} $a\#\tall a.x$.

Suppose $z\in|\ns L|$ and $a\#z$ and $z\leq x$.
It follows using \rulefont{\tall{\tand}} that $\tall a.z\leq \tall a.x$ and by Proposition~\ref{prop.bounded.simple} $\tall a.z=z$.
\end{proof}

\begin{prop}
The notion of a nominal distributive lattice with $\tall$ from Definition~\ref{defn.FOLeq} is characterised in nominal algebra as:
\begin{quote}
a bounded lattice (Definition~\ref{defn.bounded.lattice}) with a nominal universal quantifier (Definition~\ref{defn.hnu.L}) that is distributive (Definition~\ref{defn.distrib}) and has a compatible $\sigma$-action (Definition~\ref{defn.fresh.continuous}).
\end{quote}
\end{prop}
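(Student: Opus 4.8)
The plan is to prove the two presentations equivalent by giving, in each direction, an explicit translation of the data and checking that the axioms on one side are exactly lemmas already established on the other. Throughout, a bounded lattice $\ns L$ (Definition~\ref{defn.bounded.lattice}) is viewed as a nominal poset via $x\leq y \Defiff x\tand y=x$ (equivalently $x\tor y=y$, the two being provably equivalent from the equations); equivariance of $\leq$ follows from equivariance of $\tand$, and conversely any nominal poset with binary meets/joins and a top/bottom is a bounded lattice because the idempotent-monoid and absorption equations are the standard consequences of $\tand$/$\tor$ being greatest lower / least upper bounds, with uniqueness supplied by Lemma~\ref{lemm.comp.unique}.

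First I would treat the direction from a bounded lattice with a nominal universal quantifier (distributive, with compatible $\sigma$-action) to a nominal distributive lattice with $\tall$ in the sense of Definition~\ref{defn.FOLeq}. The only point with genuine content is that fresh-finite limits (Definition~\ref{defn.fresh.finite.limit}) exist: $\ttop$ and $\tbot$ are given, $\tand$ and $\tor$ are the greatest lower and least upper bounds by the usual lattice argument, and the $a$-fresh limit $\freshwedge{a}x$ is supplied by $\tall a.x$ — this is precisely Corollary~\ref{corr.a.fresh.bounded}, whose proof rests on Proposition~\ref{prop.bounded.simple} ($a\#x\Rightarrow\tall a.x=x$) and on monotonicity of $\tall a$ from \rulefont{\tall{\tand}}. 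Distributivity (Definition~\ref{defn.distrib}) and compatibility of the $\sigma$-action (Definition~\ref{defn.fresh.continuous}) are carried over verbatim, so there is nothing further to verify.

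Conversely, starting from $\ns D$ a nominal distributive lattice with $\tall$, I would take $\tand$, $\tor$, $\tbot$, $\ttop$ from its fresh-finite limits and colimits, note (as above) that this is a bounded lattice in the sense of Definition~\ref{defn.bounded.lattice} with equivariant operations (either from their order-theoretic characterisations plus equivariance of $\leq$, or directly from Theorem~\ref{thrm.equivar}), and set $\tall a.x:=\freshwedge{a}x$. Then the nominal-universal-quantifier axioms of Figure~\ref{fig.genhnu} are matched off as follows: \rulefont{\tall\alpha} is Lemma~\ref{lemm.freshwedge.alpha} (together with $a\#\freshwedge{a}x$); \rulefont{\tall{\tand}} is Lemma~\ref{lemm.tand.tall}; \rulefont{\tall{\tor}} is the second clause of distributivity in Definition~\ref{defn.distrib}; and \rulefont{\tall{\leq}} is immediate since $\freshwedge{a}x\leq x$ by definition. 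Distributivity and the compatible $\sigma$-action again transfer back unchanged.

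I expect the argument to be almost entirely bookkeeping; the one step that carries actual mathematical weight is the implication ``bounded lattice $+$ algebraic $\tall$ $\Rightarrow$ $\freshwedge{a}x$ exists and is the fresh-finite limit'', i.e.\ Corollary~\ref{corr.a.fresh.bounded}. The subtlety there is that \rulefont{\tall{\leq}} only yields $\tall a.x\leq x$ and Lemma~\ref{lemm.a.fresh.tall.a} only yields $a\#\tall a.x$, so $\tall a.x$ is merely a candidate for the greatest $a\#$-lower bound of $\{x\}$; one must invoke \rulefont{\tall{\tand}} and Proposition~\ref{prop.bounded.simple} to see that any $z\leq x$ with $a\#z$ satisfies $z=\tall a.z\leq\tall a.x$, which is exactly what makes $\tall a.x$ greatest. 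Once that is secured, the equivalence of the two presentations follows by matching the remaining items on the two lists as indicated.
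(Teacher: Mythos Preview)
Your proposal is correct and follows the same approach as the paper: the paper's proof is extremely terse (``by routine calculations'', with the only nontrivial step being the equivalence of the two characterisations of $\tall a$, ``handled by Proposition~\ref{prop.bounded.simple}''), and you have simply spelled out those routine calculations explicitly, matching each axiom of Figure~\ref{fig.genhnu} with the corresponding lemma and correctly identifying Corollary~\ref{corr.a.fresh.bounded} (which rests on Proposition~\ref{prop.bounded.simple}) as the one step with real content.
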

\begin{proof}
By routine calculations.
The interesting part is to check that the characterisation of $\tall a$ from Definition~\ref{defn.fresh.finite.limit} as a fresh-finite limit coincides with the algebraic characterisation of $\tall a$ in Figure~\ref{fig.genhnu}.
The meat of this is handled by Proposition~\ref{prop.bounded.simple}.
\end{proof}

\subsection{More on fresh-finite limits}
\label{subsect.ffl}

The quantifier $\freshwedge{a}x$ from Definition~\ref{defn.fresh.finite.limit} is greatest in the set $\{x'\mid x'{\leq} x\land a\#x'\}$, but we have seen other ways to characterise quantification too: Definition~\ref{defn.nu.U} and Proposition~\ref{prop.these.are.equivalent} characterise the quantifier in $\sigma$-powersets (of filters); Propositions~\ref{prop.char.freshwedge} and~\ref{prop.char.freshwedge.names} do something similar in the abstract; Lemma~\ref{lemm.tall.include.lam.point} does it again, for points.

A higher-level view is possible of some general principles behind these results.
Propositions~\ref{prop.char.freshwedge} and~\ref{prop.char.freshwedge.names} are part of this higher-level picture, and we can also observe:
\begin{itemize}
\item
Definition~\ref{defn.freshwedges} and Proposition~\ref{prop.freshwedge.strict} characterise $\freshwedge{a}x$ as a limit of a strictly small-supported set ($\{x'\mid x'\leq x\land a\#x'\}$ is small-supported but not \emph{strictly} small-supported).
\item
Definition~\ref{defn.fresh.orbit} and Proposition~\ref{prop.freshwedgeo} characterise $\freshwedge{a}x$ as a limit of a permutation orbit and using the $\new$-quantifier.
This ties $\freshwedge{a}x$ to Proposition~\ref{prop.these.are.equivalent} and Remark~\ref{rmrk.prop.ii.remarkable} and to the opening discussion of Subsection~\ref{subsect.filters}.
\end{itemize}
See also \cite{gabbay:stusun,gabbay:frenrs} and computational studies such as \cite{klin:townc}, with their emphasis on studying nominal sets in terms of their permutation orbits.

\begin{defn}
\label{defn.freshwedges}
Suppose $\mathcal L$ is a nominal poset and $x\in|\mathcal L|$.
Consider the set
$$
\begin{array}{r@{\ }l}
B=&\{x'{\in}|\mathcal L| \mid \supp(x'){\subseteq}S\ \wedge\ x'{\leq} x\}  .
\end{array}
$$
Then write $\freshwedges{S}x$ for the $\leq$-greatest element of $B$, if this exists.
We call this the \deffont{$S$-strict limit} of $x$.
\end{defn}

\begin{rmrk}
So:
\begin{itemize*}
\item
$\freshwedge{a}x$ is the greatest $x'$ beneath $x$ such that $a\not\in\supp(x')$.
\item
$\freshwedges{\supp(x){\setminus}\{a\}}x$ is greatest $x'$ beneath $x$ such that $\supp(x'){\subseteq}\supp(x){\setminus}\{a\}$.
\end{itemize*}
It is not \emph{a priori} evident that these two notions must coincide.
However, they often do, as we will now show.
\end{rmrk}

\begin{prop}
\label{prop.freshwedge.strict}
If $\freshwedge{a}x$ exists then so does $\freshwedges{\supp(x){\setminus}\{a\}}x$ and they are equal.
\end{prop}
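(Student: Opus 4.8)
The plan is to show that $\freshwedge{a}x$ — which, reading Definition~\ref{defn.fresh.finite.limit} in the form used in the surrounding text, is the $\leq$-greatest element of $B_1=\setof{x'{\in}|\mathcal L| \mid x'{\leq}x \,\wedge\, a\#x'}$ — already lies in the smaller set $B=\setof{x'{\in}|\mathcal L|\mid \supp(x'){\subseteq}\supp(x){\setminus}\{a\}\ \wedge\ x'{\leq}x}$ from Definition~\ref{defn.freshwedges} instantiated at $S=\supp(x){\setminus}\{a\}$. Since $B\subseteq B_1$ (if $\supp(x')\subseteq\supp(x){\setminus}\{a\}$ then in particular $a\#x'$), an element of $B$ that is $\leq$-greatest in $B_1$ is automatically $\leq$-greatest in $B$. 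Hence, once we know $\freshwedge{a}x\in B$, it follows that $\freshwedges{\supp(x){\setminus}\{a\}}x$ exists and equals $\freshwedge{a}x$, which is exactly the claim.

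So the only thing to verify is $\supp(\freshwedge{a}x)\subseteq\supp(x){\setminus}\{a\}$. Two observations suffice. First, $a\#\freshwedge{a}x$ holds for free, since by definition $\freshwedge{a}x$ is an \emph{element} of $B_1$ (not merely an upper bound of it). Second, $\freshwedge{a}x$ is specified by a first-order predicate with parameters only $a$ and $x$ (``the $\leq$-greatest $y$ such that $y\leq x$ and $a\#y$''), so conservation of support (Theorem~\ref{thrm.no.increase.of.supp}) gives $\supp(\freshwedge{a}x)\subseteq\supp(x)\cup\{a\}$. Combining: any $b\in\supp(\freshwedge{a}x)$ lies in $\supp(x)\cup\{a\}$ and is $\neq a$, hence lies in $\supp(x){\setminus}\{a\}$.

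The argument is short; its only mild subtlety is the support bookkeeping — one must pair the ``free'' fact $a\#\freshwedge{a}x$ with the generic equivariance bound $\supp(\freshwedge{a}x)\subseteq\supp(x)\cup\{a\}$ to pin $\supp(\freshwedge{a}x)$ down to $\supp(x){\setminus}\{a\}$ rather than leaving it as $\supp(x)\cup\{a\}$. Beyond that, the main thing to keep straight is the distinction between the set cut out by the freshness side-condition $a\#x'$ and the set cut out by the strict-support condition $\supp(x')\subseteq\supp(x){\setminus}\{a\}$, together with the observation that the greatest element of the former necessarily falls inside the latter; I expect no genuine obstacle.
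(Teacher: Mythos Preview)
Your proposal is correct and follows essentially the same route as the paper's proof: show that $\freshwedge{a}x$ itself lies in the set $B$ of Definition~\ref{defn.freshwedges}, and then observe that anything in $B$ is already bounded above by $\freshwedge{a}x$ because $\supp(x')\subseteq\supp(x){\setminus}\{a\}$ implies $a\#x'$. The only difference is presentational: where the paper writes ``by construction $\supp(\freshwedge{a}x)\subseteq\supp(x){\setminus}\{a\}$'', you unpack this into the two ingredients $a\#\freshwedge{a}x$ (from membership in $B_1$) and $\supp(\freshwedge{a}x)\subseteq\supp(x)\cup\{a\}$ (from conservation of support, Theorem~\ref{thrm.no.increase.of.supp}), which is a clean way to justify the step under the $a\#x'$ reading of Definition~\ref{defn.fresh.finite.limit} that the surrounding text adopts.
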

\begin{proof}
Suppose $\freshwedge{a}x$ exists.
By construction $\supp(\freshwedge{a}x){\subseteq}\supp(x){\setminus}\{a\}$ and $\freshwedge{a}x\leq x$.
Therefore $\freshwedge{a}x\in B$ (notation from Definition~\ref{defn.freshwedges}).
Also by construction $x'\leq\freshwedge{a}x$ for every $x'\in B$, since if $\supp(x')\subseteq\supp(x){\setminus}\{a\}$ then certainly $a\#x'$.
It follows that $\freshwedge{a}x$ is greatest in $B$.
\end{proof}

\begin{rmrk}
It is easy to prove that if $x\leq y$ then $\freshwedges{S}x\leq \freshwedges{S}y$ (so $\freshwedges{S}$ is monotone or \emph{functorial}).
Another plausible definition for $\freshwedge{a}x$ is to set
\begin{equation}
\label{eq.suggested}
\freshwedge{a} x
\ =\ \text{the $\leq$-greatest element of}\
\{x' \mid \supp(x')\subseteq\supp(x){\setminus}\{a\} \land x'{\leq} x\}.
\end{equation}
However, monotonicity is then not so obvious, though in view of Proposition~\ref{prop.freshwedge.strict} monotonicity does hold of \eqref{eq.suggested}, for the cases we care about.
\end{rmrk}

Recall from Notation~\ref{nttn.fix} the definition of $\fix$.
\begin{defn}
\label{defn.fresh.orbit}Following \cite{gabbay:stusun,gabbay:frenrs} define $x\ii{a}$ by
$$
\begin{array}{r@{\ }l}
x\ii{a}=&\{\pi\act x \mid \pi\in\fix(\supp(x){\setminus}\{a\})\}  .
\\
=&\{x\}\cup\{(b\ a)\act x\mid b\#x\}
\end{array}
$$
Write $\bigwedge x\ii{a}$ for the $\leq$-greatest lower bound of $x\ii{a}$, if this exists.
\end{defn}

\begin{rmrk}
So we can rewrite this as follows:
\begin{itemize*}
\item
$\freshwedge{a}x$ is the greatest $x'$ beneath $x$ such that $a\not\in\supp(x')$.
\item
Definition~\ref{defn.fresh.orbit} specifies the greatest $x'$ beneath $x$ and beneath $(b\ a)\act x$ for every $b\#x$.
\end{itemize*}
\end{rmrk}

\begin{lemm}
\label{lemm.fresh.ii}
$a\#x\ii{a}$.
\end{lemm}
\begin{proof}
By the pointwise action $\pi'\act x\ii{a}=\{(\pi'\circ\pi)\act x\mid \pi\in\fix(\supp(x){\setminus}\{a\})\}$.
We choose a fresh $b$ (so $b\#x$, that is, $b{\not\in}\supp(x)$) and use Corollary~\ref{corr.stuff}(\ref{stuff.freshness.criterion}) and routine calculations.
\end{proof}

\begin{prop}
\label{prop.freshwedgeo}
Suppose $a{\in}\mathbb A$ and $x{\in}|\mathcal L|$.
Then:
\begin{enumerate}
\item
If $\freshwedge{a}x$ exists then so does $\bigwedge x\ii{a}$, and they are equal.
\item
Suppose $\mathcal L$ has a monotone $\sigma$-action (Definition~\ref{defn.fresh.continuous}).
Then if $\bigwedge x\ii{a}$ exists, then so does $\freshwedge{a}x$, and they are equal.
\end{enumerate}
\end{prop}
\begin{proof}
\begin{enumerate}
\item
By Definition~\ref{defn.nom.poset} $\freshwedge{a}x\leq x$ and $a\#\freshwedge{a}x$.
It follows by equivariance of $\leq$ and Corollary~\ref{corr.stuff} that $\freshwedge{a}x\leq \pi\act x$ for every $\pi\in\fix(\supp(x){\setminus}\{a\})$.
Therefore $\freshwedge{a}x$ is a lower bound for $x\ii{a}$.

Now consider $z$ some other lower bound for $x\ii{a}$, so that $\Forall{\pi{\in}\fix(\supp(x){\setminus}\{a\})}z\leq\pi\act x$.
Choose fresh $b$ (so $b\#x,z$); by Theorem~\ref{thrm.equivar} $\freshwedge{b}(b\ a)\act x$ exists (and by Lemma~\ref{lemm.freshwedge.alpha} it is equal to $\freshwedge{a}x$).
It follows by equivariance of $\leq$ and Corollary~\ref{corr.stuff} (since $a,b\#z$) that $z\leq (b\ a)\act x$ so $z\leq\freshwedge{b}(b\ a)\act x\stackrel{\text{L\ref{lemm.freshwedge.alpha}}}=\freshwedge{a}x$.
\item
By Lemma~\ref{lemm.fresh.ii} and Theorem~\ref{thrm.no.increase.of.supp} $a\#\bigwedge x\ii{a}$ so $\bigwedge x\ii{a}$ is an $a\#$lower bound for $x$.

Consider any other $z$ such that $z\leq x$ and $a\#z$.
By \rulefont{\sigma\#} $z[a\sm n]=z$ for every $n{\in}\mathbb A$.
By monotonicity $z=z[a\sm n]\leq x[a\sm n]$ for every $n{\in}\mathbb A$.
It follows using Lemma~\ref{lemm.sub.alpha} that $z\leq \bigwedge x\ii{a}$.
\qedhere\end{enumerate}
\end{proof}

\begin{rmrk}
Nominal posets offer a richness of what we can collectively term `fresh limits':
\begin{itemize*}
\item
Fresh-finite limits from Definition~\ref{defn.fresh.finite.limit}.
\item
\emph{Strict} fresh-finite limits from Definition~\ref{defn.freshwedges}.
\item
Limits of permutation orbits.
\item
If a $\sigma$-algebra structure is present, then we have even more: limits of all substitution instances (Proposition~\ref{prop.char.freshwedge}; see also Definition~\ref{defn.nu.U}) and substitution for all names or all fresh names (see Propositions~\ref{prop.char.freshwedge.names} and~\ref{prop.these.are.equivalent}).
\end{itemize*}
For a nominal poset with a monotone $\sigma$-action, these are all equal where they exist.
\end{rmrk}

We return briefly to Proposition~\ref{prop.freshwedge.strict}, which stated that in a nominal poset $\mathcal L$ if $\freshwedge{a}x$ exists then so does $\freshwedges{\supp(x){\setminus}\{a\}}x$ and they are equal.
What can we say about the other way around?
When does the existence of $\freshwedges{S}x$ imply the existence of $\freshwedge{a}x$?

We can note Proposition~\ref{prop.support.interpolation} and Remark~\ref{rmrk.not}:
\begin{lemm}
\label{lemm.supp.freshwedges}
$\supp(\freshwedges{S}x)\subseteq S$.
\end{lemm}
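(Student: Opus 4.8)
The plan is to observe that the statement is immediate from the definition, so the proof will be essentially one line; the task here is just to identify clearly which clause of Definition~\ref{defn.freshwedges} does the work.

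First I would recall that, by Definition~\ref{defn.freshwedges}, $\freshwedges{S}x$ is declared to be the $\leq$-\emph{greatest element} of the set $B=\{x'\in|\mathcal L|\mid \supp(x')\subseteq S\ \wedge\ x'\leq x\}$, on the hypothesis that this greatest element exists. A greatest element of a set is, by definition of ``greatest element'' (as opposed to least upper bound), a member of that set. Hence $\freshwedges{S}x\in B$, and the condition defining membership of $B$ gives $\supp(\freshwedges{S}x)\subseteq S$ directly. In particular no appeal to equivariance or to Theorem~\ref{thrm.no.increase.of.supp} is required.

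The only point worth flagging is the contrast with a ``supremum'' version of the definition: the same one-line argument would also establish the support bound for the element described in \eqref{eq.suggested}, precisely because that too is specified as a greatest element of a set all of whose members have support inside the prescribed set of atoms; had $\freshwedges{S}x$ instead been defined as a least upper bound of $B$, the inclusion $\supp(\freshwedges{S}x)\subseteq S$ would not have been automatic (cf.\ the discussion around \eqref{eq.suggested} of the failure of monotonicity). There is therefore no real obstacle: the lemma is a sanity check that the ``strict'' limits deserve their name, and the proof merely unpacks Definition~\ref{defn.freshwedges}.
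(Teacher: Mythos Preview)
Your proof is correct, and in fact more direct than the paper's. The paper argues via Lemma~\ref{lemm.strict.support} that the set $B$ is (strictly, hence) finitely supported with $\supp(B)\subseteq S$, and then invokes conservation of support (Theorem~\ref{thrm.no.increase.of.supp}) to bound the support of the greatest element of $B$ by $\supp(B)$. Your argument bypasses this machinery entirely: since $\freshwedges{S}x$ is by definition the \emph{greatest element} (not merely the supremum) of $B$, it lies in $B$, and membership in $B$ already entails $\supp(\freshwedges{S}x)\subseteq S$. Your observation that no appeal to equivariance is needed is exactly right, and your remark contrasting greatest element with least upper bound is apt. The paper's route is the habitual nominal-style move of pushing support bounds through constructions, which is robust and uniform but here unnecessary; your route exploits the specific form of the definition and is shorter.
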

\begin{proof}
By Lemma~\ref{lemm.strict.support} $\supp(B)\subseteq S$ ($B$ is from Definition~\ref{defn.freshwedges}).
We use Theorem~\ref{thrm.no.increase.of.supp}.
\end{proof}

\begin{defn}
Say $\mathcal L$ has \deffont{support interpolation} when for every $x$ and $y$,\ if $x\leq y$ then there exists a $z$ such that
$$
x\leq z
\quad
\text{and}
\quad
z\leq y
\qquad\text{and}\qquad
\supp(z)\subseteq\supp(x){\cap}\supp(y).
$$
(\emph{Interpolation} is used here by analogy with the concept in logic \cite{gabbay:intdmi}; here we interpolate on the set of names.)
\end{defn}

\begin{prop}
\label{prop.support.interpolation}
Suppose $\mathcal L$ has support interpolation and suppose $\freshwedges{S}x$ exists for all $x$ and $S{\subseteq}\mathbb A$.

Then $\freshwedge{a}x$ exists for all $x$ and $a$ and is equal to $\freshwedges{\supp(x){\setminus}\{a\}}x$.
\end{prop}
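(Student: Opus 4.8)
The plan is to verify directly that $w:=\freshwedges{\supp(x){\setminus}\{a\}}x$ satisfies the defining property of $\freshwedge{a}x$, namely that it is the $\leq$-greatest element beneath $x$ that is fresh for $a$ (as spelled out in the Remark following Definition~\ref{defn.freshwedges}). Write $S=\supp(x){\setminus}\{a\}$; by hypothesis $w=\freshwedges{S}x$ exists, and by Definition~\ref{defn.freshwedges} it lies in the set $B=\{x'\mid\supp(x'){\subseteq}S\ \wedge\ x'{\leq}x\}$ and is $\leq$-greatest there.

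First I would observe that $w$ is a legitimate candidate: $w\leq x$ since $w\in B$, and $a\#w$ since $\supp(w)\subseteq S=\supp(x){\setminus}\{a\}$ (by membership in $B$, or by Lemma~\ref{lemm.supp.freshwedges}). Then comes the one step that carries content. Let $x'$ be any element with $x'\leq x$ and $a\#x'$; I want $x'\leq w$. Support interpolation applied to $x'\leq x$ yields some $z$ with $x'\leq z$, $z\leq x$ and $\supp(z)\subseteq\supp(x'){\cap}\supp(x)$. Since $a\#x'$ we have $a\notin\supp(x')$, hence $a\notin\supp(x'){\cap}\supp(x)$, so $\supp(z)\subseteq\supp(x){\setminus}\{a\}=S$; together with $z\leq x$ this shows $z\in B$, whence $z\leq\freshwedges{S}x=w$, and by transitivity $x'\leq z\leq w$. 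Therefore $w$ is greatest among elements beneath $x$ that are $a$-fresh, i.e. $\freshwedge{a}x$ exists and is equal to $w=\freshwedges{\supp(x){\setminus}\{a\}}x$, as claimed.

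I do not expect a genuine obstacle here. The whole argument rests on the single observation that the interpolant $z$, whose support is sandwiched inside $\supp(x'){\cap}\supp(x)$ and hence avoids $a$, automatically lands in $B$, so that the already-existing strict limit $\freshwedges{S}x$ dominates it; everything else is transitivity of $\leq$ and unwinding Definition~\ref{defn.freshwedges}. The only point requiring care is to use the correct (and, in this paper, intended) reading of $\freshwedge{a}x$ as the greatest $a$-\emph{fresh} lower bound of $x$, rather than the greatest lower bound whose support is literally contained in $\supp(x){\setminus}\{a\}$ --- under the latter reading the statement would be trivial and would not need the interpolation hypothesis.
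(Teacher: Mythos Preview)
Your proof is correct and follows essentially the same route as the paper's: both arguments verify that $\freshwedges{\supp(x){\setminus}\{a\}}x$ is an $a$-fresh lower bound of $x$, and both use support interpolation applied to an arbitrary $x'\leq x$ with $a\#x'$ to obtain an interpolant $z$ lying in $B$, whence $x'\leq z\leq\freshwedges{\supp(x){\setminus}\{a\}}x$. Your write-up is in fact slightly more explicit than the paper's at the key step, spelling out why $\supp(z)\subseteq\supp(x')\cap\supp(x)$ together with $a\#x'$ forces $\supp(z)\subseteq S$; the paper compresses this to ``by assumption''.
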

\begin{proof}
Suppose all strict fresh-finite limits $\freshwedges{S}x$ exist.
Consider some $a$ and $x$.
\begin{itemize*}
\item
\emph{We show that if $x'\leq x$ and $a\#x'$ then $x'\leq\freshwedges{\supp(x){\setminus}\{a\}}x$.}

Suppose $x'\leq x$ and $a\#x'$.
By support interpolation there exists a $z$ with $x'\leq z\leq x$ and $\supp(z)\subseteq\supp(x)\cap\supp(x')$.
By assumption $z\leq\freshwedges{\supp(x){\setminus}\{a\}}x$ and so
$
x'\leq \freshwedges{\supp(x){\setminus}\{a\}}x.
$
\item
\emph{We show that $\freshwedges{\supp(x){\setminus}\{a\}}x$ is least with this property.}

By Lemma~\ref{lemm.supp.freshwedges} $\supp(\freshwedges{\supp(x){\setminus}\{a\}}x)\subseteq\supp(x){\setminus}\{a\}$.
Thus in particular $a\#\freshwedges{\supp(x){\setminus}\{a\}}x$.
Also by construction $\freshwedges{\supp(x){\setminus}\{a\}}x\leq x$.

Thus for any other greatest element $y$ in $\{x'\mid x'{\leq}x\land a\#x'\}$,\ $\freshwedges{\supp(x){\setminus}\{a\}}x\leq y$.
\end{itemize*}
Thus, $\freshwedge{a}x$ exists and is equal to $\freshwedges{\supp(x){\setminus}\{a\}}x$.
\end{proof}

\begin{rmrk}
\label{rmrk.not}
In the absence of support interpolation Proposition~\ref{prop.support.interpolation} may fail.
For instance, consider a nominal poset consisting of singleton atoms $\{a\}$ and unordered pairs of atoms $\{a,b\}$ and an element $\ast$, such that:
\begin{itemize*}
\item
$\{a,b\}\leq \{a\}$ for all (distinct) $a$ and $b$.
\item
$\ast\leq \{a\}$ for all $a$.
\end{itemize*}
We assume the natural pointwise permutation actions and $\pi\act\ast=\ast$, so that $\supp(\{a,b\})=\{a,b\}$ and $\supp(\{a\})=\{a\}$ and $\supp(\ast)=\varnothing$.

Then $\ast=\freshwedges{\varnothing}\{a\}$ but $\freshwedge{a}\{a\}$ does not exist since $\ast\not\leq\{a,b\}$.
\end{rmrk}

\section{Additional properties of the canonical model}

We noted in Subsection~\ref{subsect.commutes} that $\points_\Pi$ has structure above and beyond being in $\indiapp$.
We conclude with some further reflection on this.

These properties were not needed for our main results, but they seem striking enough to merit a note in an Appendix.

\subsection{$\tlam$ commutes with unions}

We consider $\tlam$ from Notation~\ref{nttn.lambda} in $\points_\Pi$ and prove Corollary~\ref{corr.unexpected}: $\tlam$ commutes with sets union (cf. a similar property for $\tall$ proved in Lemma~\ref{lemm.tall.unions}).

This property is not valid in general in $\indiapp$, because $\ppa$ and $\tall$ do not commute with $\tor$ in general,\footnote{The closest we get to this in the general case is \rulefont{\ppa{\tor}} from Figure~\ref{fig.app.compatible} and \rulefont{\tall\tor} from Figure~\ref{fig.genhnu}} but it does hold in the canonical model $\points_\Pi$.

\begin{lemm}
\label{lemm.bpp.pp.points}
$q\bpp \pp p = \pp{(q\app p)}$.
\end{lemm}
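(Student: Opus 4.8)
The plan is to unfold both sides using the definitions and then exploit the fact that $\pp p$ has a $\subseteq$-least element.

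First I would recall that, by Definition~\ref{defn.YppaX}, $q\bpp\pp p=\bigcup\{q\bpp q'\mid q'\in\pp p\}$, and that for points $q'$ we have $q\bpp q'=\pp{(q\app q')}$ by Definition~\ref{defn.some.more.ops}. So the statement reduces to showing $\bigcup\{\pp{(q\app q')}\mid q'\in\pp p\}=\pp{(q\app p)}$. The key observation is that $p$ is the $\subseteq$-least element of $\pp p$: every $q'\in\pp p$ satisfies $p\subseteq q'$ by Definition~\ref{defn.some.more.ops}, and $p\in\pp p$ since $p\subseteq p$. By Lemma~\ref{lemm.some.monotone}, $\bpp$ reverses inclusions in its second argument, so $p\subseteq q'$ gives $q\bpp q'\subseteq q\bpp p$ for every $q'\in\pp p$. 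Hence $q\bpp p=\pp{(q\app p)}$ is the $\subseteq$-greatest member of the family $\{q\bpp q'\mid q'\in\pp p\}$, and since it is itself a member of that family (take $q'=p$), it equals the union. This yields the claim.

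Equivalently, one can argue by double inclusion directly: $\pp{(q\app p)}\subseteq q\bpp\pp p$ holds because $p\in\pp p$ (so $\pp{(q\app p)}=q\bpp p$ is one of the sets being unioned), and $q\bpp\pp p\subseteq\pp{(q\app p)}$ holds because for each $q'\in\pp p$ we have $p\subseteq q'$, hence $q\app p\subseteq q\app q'$ by Lemma~\ref{lemm.some.monotone}, hence $\pp{(q\app q')}\subseteq\pp{(q\app p)}$ by the contravariance of $\pp{\text{-}}$ recorded in Lemma~\ref{lemm.pp.sub}.

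There is no substantive obstacle: this is a short bookkeeping lemma. The only points requiring care are keeping straight the two uses of the symbol $\bpp$ — the combination operator sending pairs of points to sets of points (Definition~\ref{defn.some.more.ops}) versus its extension $q\bpp Y$ to a point and a set of points (Definition~\ref{defn.YppaX}) — and remembering that monotonicity of $\bpp$ is \emph{antitone} in the second argument, which is precisely why the union collapses to a single set.
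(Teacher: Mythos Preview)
Your proposal is correct and follows essentially the same approach as the paper: unfold $q\bpp\pp p$ via Definitions~\ref{defn.YppaX} and~\ref{defn.some.more.ops} to $\bigcup\{\pp{(q\app q')}\mid p\subseteq q'\}$, then collapse the union to $\pp{(q\app p)}$ using that $p$ is least in $\pp p$ and that $\pp{(q\app\text{-})}$ is antitone. The paper's proof is terser---it dismisses the final collapse as a ``Fact of Def~\ref{defn.some.more.ops}''---whereas you spell out the monotonicity reasoning explicitly via Lemma~\ref{lemm.some.monotone} and Lemma~\ref{lemm.pp.sub}, which is arguably clearer.
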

\begin{proof}
We reason as follows:
\begin{tab2b}
q\bpp \pp p =&
\bigcup\{q\bpp p'\mid p\subseteq p'\}
&\text{Defs~\ref{defn.some.more.ops} \&~\ref{defn.YppaX}}
\\
=&
\bigcup\{\pp{(q\app p')}\mid p\subseteq p'\}
&
\text{Definition~\ref{defn.some.more.ops}}
\\
=&
\pp{(q\app p)}
&
\text{Fact of Def~\ref{defn.some.more.ops}}
\qedhere\end{tab2b}
\end{proof}

\begin{lemm}
\label{lemm.ppa.unions}
$\pp p\ppa (\bigcup_i\pp p_i) = \bigcup_i(\pp p\ppa\pp p_i)$.
\end{lemm}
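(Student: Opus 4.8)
The plan is to prove both inclusions at once, by establishing a chain of equivalences for an arbitrary point $r\in|\points_\Pi|$ between $r\in \pp p\ppa(\bigcup_i\pp p_i)$ and $r\in\bigcup_i(\pp p\ppa\pp p_i)$. The engine of the argument is Lemma~\ref{lemm.bpp.pp.points}, which rewrites $r\bpp\pp p$ as the \emph{principal} open set $\pp{(r\app p)}$ generated by the single point $r\app p$ (a point by Lemma~\ref{lemm.some.ops.make.points}, since $r$ and $p$ are points).

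First I would unpack the left-hand side: by Proposition~\ref{prop.amgis.iff.pp}, $r\in\pp p\ppa(\bigcup_i\pp p_i)$ if and only if $r\bpp\pp p\subseteq\bigcup_i\pp p_i$, and by Lemma~\ref{lemm.bpp.pp.points} this is if and only if $\pp{(r\app p)}\subseteq\bigcup_i\pp p_i$. The key step is then that $\pp{(r\app p)}\subseteq\bigcup_i\pp p_i$ if and only if $\pp{(r\app p)}\subseteq\pp p_i$ for \emph{some} $i$: the direction $\Leftarrow$ is immediate, and for $\Rightarrow$ one notes $r\app p\in\pp{(r\app p)}$ (by construction in Definition~\ref{defn.some.more.ops}), picks $i$ with $r\app p\in\pp p_i$, and applies Lemma~\ref{lemm.pp.sub} to get $\pp{(r\app p)}\subseteq\pp p_i$. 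Finally I would run the first two equivalences backwards at each index $i$: $\pp{(r\app p)}\subseteq\pp p_i$ if and only if $r\bpp\pp p\subseteq\pp p_i$ (Lemma~\ref{lemm.bpp.pp.points}) if and only if $r\in\pp p\ppa\pp p_i$ (Proposition~\ref{prop.amgis.iff.pp}), so that the existential over $i$ says exactly $r\in\bigcup_i(\pp p\ppa\pp p_i)$.

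There is no real obstacle once Lemma~\ref{lemm.bpp.pp.points} is available; the whole point is that $r\bpp\pp p$ is generated by a single point, so being covered by a union of basic opens $\pp p_i$ is the same as being contained in one of them. This "irreducibility" is precisely the feature that fails for a general antecedent of $\ppa$ in $\indiapp$, which is why the analogous distributivity of $\ppa$ over $\tor$ is only the inequality \rulefont{\ppa{\tor}}. I would also take care to cite Lemma~\ref{lemm.pp.sub} (rather than Lemma~\ref{lemm.p.in.U}) for the membership-to-inclusion step, so that the argument imposes no openness or finiteness hypothesis on the union $\bigcup_i\pp p_i$.
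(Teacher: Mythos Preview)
Your proposal is correct and follows essentially the same approach as the paper: both proofs pivot on Lemma~\ref{lemm.bpp.pp.points} to rewrite $r\bpp\pp p$ as the principal set $\pp{(r\app p)}$, then use that the generator $r\app p$ itself must land in some $\pp p_i$ to extract the existential. The paper phrases the key step by unpacking Definition~\ref{defn.some.more.ops} directly (``suffices to take $r=q\app p$'') where you invoke Lemma~\ref{lemm.pp.sub}, but these are the same move.
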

\begin{proof}
We reason as follows:
\begin{tab2b}
q\in \pp p\ppa(\bigcup_i\pp p_i)
\liff&
q\bpp \pp p\subseteq \bigcup_i\pp p_i
&\text{Proposition~\ref{prop.amgis.iff.pp}}
\\
\liff&
\pp{(q\app p)}\subseteq\bigcup_i\pp p_i
&\text{Lemma~\ref{lemm.bpp.pp.points}}
\\
\liff&
\Forall{r}(q\app p{\subseteq} r\limp\Exists{i}p_i{\subseteq} r)
&
\text{Definition~\ref{defn.some.more.ops}}
\\
\liff&
\Exists{i}p_i\subseteq q\app p
&\text{Suffices to take }r=q\app p
\\
\liff&
\Exists{i}\pp{(q\app p)}\subseteq \pp p_i
&\text{Definition~\ref{defn.some.more.ops}, fact}
\\
\liff&
\Exists{i}q\bpp p\subseteq \pp p_i
&\text{Definition~\ref{defn.some.more.ops}}
\\
\liff&
\Exists{i}q\in \pp p\ppa \pp p_i
&\text{Proposition~\ref{prop.amgis.iff.pp}}
\\
\liff&
p\in \bigcup_i(\pp p\ppa \pp p_i)
&\text{Fact}
\qedhere\end{tab2b}
\end{proof}

\begin{corr}
\label{corr.unexpected}
$\tlam a.\bigcup_i \pp p_i=\bigcup_i\tlam a.\pp p_i$.
\end{corr}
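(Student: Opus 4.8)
The plan is to unfold $\tlam a.X$ using Notation~\ref{nttn.lambda} and the definition of $\prg_{\points_\Pi}$ and then reduce everything to an application of Lemma~\ref{lemm.tall.unions}. Recall that for a set of points $X$ we have $\tlam a.X=\tall a.(\prg_{\points_\Pi}a\ppa X)=\freshcap{a}(\pp{a\uparrowp{\Pi}}\ppa X)$, using clause~\ref{item.prg.points.Pi} of Definition~\ref{defn.points.top} to rewrite $\prg_{\points_\Pi}a$ as $\pp{a\uparrowp{\Pi}}$, and using that $\tall$ is interpreted as $\freshcap{a}$ in the topology.

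First I would apply Lemma~\ref{lemm.ppa.unions} with $p=a\uparrowp{\Pi}$ to push $\ppa$ through the union, obtaining
$$
\pp{a\uparrowp{\Pi}}\ppa\bigl(\textstyle\bigcup_i\pp p_i\bigr)=\bigcup_i\bigl(\pp{a\uparrowp{\Pi}}\ppa\pp p_i\bigr).
$$
Next I would observe, via part~\ref{item.lam.pp.ppa} of Proposition~\ref{prop.pp.p.commute}, that each summand equals $\pp{(a\uparrowp{\Pi}\ppa p_i)}$, and that $a\uparrowp{\Pi}\ppa p_i\in|\points_\Pi|$ by Lemmas~\ref{lemm.suparrowp.point} and~\ref{lemm.some.ops.make.points}. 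Thus the union in question is genuinely of the shape $\bigcup_i\pp{q_i}$ for points $q_i\in|\points_\Pi|$, which is exactly the form to which Lemma~\ref{lemm.tall.unions} applies; hence
$$
\freshcap{a}\bigcup_i\pp{(a\uparrowp{\Pi}\ppa p_i)}=\bigcup_i\freshcap{a}\pp{(a\uparrowp{\Pi}\ppa p_i)}.
$$
Finally I would run the rewrites of the previous paragraph backwards on each term, recovering $\freshcap{a}\pp{(a\uparrowp{\Pi}\ppa p_i)}=\freshcap{a}(\pp{a\uparrowp{\Pi}}\ppa\pp p_i)=\tlam a.\pp p_i$, which gives the claimed identity.

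This argument is essentially bookkeeping once Lemmas~\ref{lemm.ppa.unions} and~\ref{lemm.tall.unions} are available; the only point requiring a moment's care—and the closest thing to an obstacle—is checking that $\pp{a\uparrowp{\Pi}}\ppa\pp p_i$ really is a set of points of the form $\pp q$ for a single point $q$, so that Lemma~\ref{lemm.tall.unions} is applicable rather than just the weaker distributivity facts (\rulefont{\ppa{\tor}}, \rulefont{\tall{\tor}}) that hold in general $\indiapp$. That check is exactly part~\ref{item.lam.pp.ppa} of Proposition~\ref{prop.pp.p.commute} together with closure of points under $\ppa$, so no new work is needed. One should also note that the $a$ occurring inside $\pp{a\uparrowp{\Pi}}$ being captured by the outer $\freshcap{a}$ causes no difficulty: this is the normal behaviour of $\tlam$, already witnessed by Proposition~\ref{prop.tall.lam.uparrow}.
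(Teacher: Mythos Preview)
Your proof is correct and follows essentially the same approach as the paper: unfold $\tlam$ via Notation~\ref{nttn.lambda}, apply Lemma~\ref{lemm.ppa.unions} to distribute $\ppa$ over the union, then apply Lemma~\ref{lemm.tall.unions} to distribute $\freshcap{a}$. The paper's proof is a one-liner citing exactly these two lemmas; your additional care in verifying via part~\ref{item.lam.pp.ppa} of Proposition~\ref{prop.pp.p.commute} that the summands have the form $\pp{q_i}$ required by Lemma~\ref{lemm.tall.unions} is a welcome elaboration of a step the paper leaves implicit.
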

\begin{proof}
We recall $\tlam$ from Notation~\ref{nttn.lambda} and using Lemmas~\ref{lemm.ppa.unions} and~\ref{lemm.tall.unions} deduce that $\tlam a.\bigcup_i \pp p_i=\bigcup_i\tlam a.\pp p_i$.
\end{proof}

\subsection{An existential quantifier}
\label{subsect.existential}

$\points_\Pi$ from Definition~\ref{defn.pi.point} has a universal quantifier, defined in Definition~\ref{defn.some.ops}.
Proposition~\ref{prop.points.bigcap} hints that an existential quantifier might exist.\footnote{Why?  Because the proof of Proposition~\ref{prop.points.bigcap} would be simple and direct if we assume that $\mathcal P$ is strictly small-supported.  The fact that it works for \emph{all} small-supported $\mathcal P$ suggests there might be some way of removing atoms from the support of $p\in\mathcal P$ while also making $p$ smaller as a set.  This is exactly what an existential quantifier on points would do (recall that everything is inverted/dual in $\points_\Pi$).}
And indeed, it is not hard to construct as the natural dual 
to $\tall a.p$ from Definition~\ref{defn.some.ops}:
\begin{defn}
\label{defn.jj}
Suppose $p\in|\points_\Pi|$ and $A\subseteq\mathbb A$.
Define $\texi a.p$ by:
$$
\begin{array}{r@{\ }l}
\texi a.p=&\bigcap_{u{\in}|\idiomprg|} p[a\lsm u]
\end{array}
$$
\end{defn}

\begin{lemm}
\label{lemm.jj.point}
If $p\in|\points_\Pi|$ then $\texi a.p\in|\points_\Pi|$.
\end{lemm}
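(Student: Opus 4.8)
The plan is to recognise $\texi a.p$ as a finitely supported intersection of $\Pi$-points and then invoke Proposition~\ref{prop.points.bigcap}. Unpacking Definitions~\ref{defn.jj} and~\ref{defn.lam.perm.amgis}, and using Proposition~\ref{prop.sigma.iff}, we have for $s\in|\idiom|$ that $s\in\texi a.p$ if and only if $s[a\ssm u]\in p$ for every $u\in|\idiomprg|$, equivalently $s\in p[u\ms a]$ for every such $u$. Thus
\[
\texi a.p \;=\; \bigcap_{u\in|\idiomprg|} p[u\ms a],
\]
which is the natural dual companion of the identity $\tall a.p=\bigcup_u p[a\lsm u]$ from Proposition~\ref{prop.tall.a.subset.r}, with the $\amgis$-action replacing the $\sigma$-action.

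I would then check the hypotheses of Proposition~\ref{prop.points.bigcap} for the family $\mathcal P=\{p[u\ms a]\mid u\in|\idiomprg|\}$. First, $\mathcal P\subseteq|\points_\Pi|$, since each $p[u\ms a]$ is a $\Pi$-point by Lemma~\ref{lemm.pumsa.point}. Second, $\mathcal P$ is finitely supported: it is specified by a first-order formula whose only parameters are $p$ (which is finitely supported by Corollary~\ref{corr.point.finsupp}) and the atom $a$, so by part~\ref{item.conservation.of.support} of Theorem~\ref{thrm.no.increase.of.supp} we get $\supp(\mathcal P)\subseteq\supp(p)\cup\{a\}$, which is finite. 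Proposition~\ref{prop.points.bigcap} now yields $\texi a.p=\bigcap\mathcal P\in|\points_\Pi|$, as required.

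There is essentially no obstacle here: the real work is already absorbed into Lemma~\ref{lemm.pumsa.point} (preservation of points under $[u\ms a]$) and Proposition~\ref{prop.points.bigcap} (finitely supported intersections of points are points, even when the index family is merely finitely, not \emph{strictly}, finitely supported). A direct alternative would be to verify conditions~\ref{item.point.to} and~\ref{item.point.lam} of Definition~\ref{defn.pi.point} for $\texi a.p$ by hand: closure under $\Pi$ is immediate from condition~\ref{item.somerel.ssm} of Definition~\ref{defn.lambda.reduction.theory} together with closure of $p$ under $\Pi$, and finite $\sigma$-support follows, for $b$ chosen with $b\#p$ and $b\neq a$, by the same $\alpha$-renaming-and-\rulefont{\sigma\sigma} manipulation as in the proof of Lemma~\ref{lemm.pumsa.point}, using Proposition~\ref{prop.fresh.point} to pass from $b\#p$ to $b\fresh p$. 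But the intersection argument is shorter, and is the one I would present.
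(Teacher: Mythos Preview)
Your proof is correct and takes a genuinely different route from the paper. The paper verifies the two conditions of Definition~\ref{defn.pi.point} for $\texi a.p$ by hand, exactly along the lines of the ``direct alternative'' you sketch at the end: closure under $\Pi$ via condition~\ref{item.somerel.ssm} of Definition~\ref{defn.lambda.reduction.theory}, and finite $\sigma$-support by picking fresh $b\#p$, using Proposition~\ref{prop.fresh.point} to get $b\fresh p$, and then pushing the substitutions around (though the paper's manipulation is slightly simpler than invoking \rulefont{\sigma\sigma}: it just observes that if $\Forall{u}s[a\ssm u]\in p$ and $b\fresh p$ then $\Forall{u,v}s[a\ssm u][b\ssm v]\in p$, hence $\Forall{u,v}s[b\ssm v][a\ssm u]\in p$). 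Your intersection argument is cleaner and makes the duality with $\tall a.p$ explicit, at the cost of depending on Lemma~\ref{lemm.pumsa.point} and Proposition~\ref{prop.points.bigcap}; the paper's direct check is more self-contained but essentially repeats work already encapsulated in those results.
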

\begin{proof}
From Proposition~\ref{prop.points.bigcap}.
\end{proof}

A dual version of Lemma~\ref{lemm.tall.include.lam.point} exists for $\texi$ instead of $\tall$, and is easy to prove.
We omit details.

The universal and existential quantifiers $\tall$ and $\texi$ interact with $\app$ similarly to how we saw $\tall$ interact with $\tor$ in \rulefont{\tall\tor} of Figure~\ref{fig.genhnu} or \rulefont{distrib\tall} of Definition~\ref{defn.distrib}; this is Lemmas~\ref{lemm.tall.app.lam} and~\ref{lemm.jj.app.lam}.

We need Lemma~\ref{lemm.app.continuous.lam} as a simple technical lemma:
\begin{lemm}
\label{lemm.app.continuous.lam}
If $p\subseteq p'$ then $p\app q\subseteq p'\app q$.
\end{lemm}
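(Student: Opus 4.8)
The plan is to unfold the definition of $\app$ on (subsets of) $|\idiom|$ from Definition~\ref{defn.some.ops}, namely
$$
p\app q=\bigcup\{(st)\uparrowp{\Pi}\mid s\in p,\ t\in q\},
$$
and observe that the right-hand side is manifestly monotone in $p$. Concretely, first I would note that if $p\subseteq p'$ then every $s\in p$ is also an element of $p'$, so the indexing set $\{(s,t)\mid s\in p,\ t\in q\}$ is contained in $\{(s,t)\mid s\in p',\ t\in q\}$; hence the family of sets $\{(st)\uparrowp{\Pi}\mid s\in p,\ t\in q\}$ is a subfamily of $\{(st)\uparrowp{\Pi}\mid s\in p',\ t\in q\}$. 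Since taking the union of a subfamily yields a subset of the union of the larger family, we conclude $p\app q\subseteq p'\app q$.

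There is essentially no obstacle here: the result is immediate from the shape of Definition~\ref{defn.some.ops}, which defines $p\app q$ as an upward-closed union over pairs drawn from $p$ and $q$. It is worth remarking that this is the ``first-argument'' half of the monotonicity statement already recorded in Lemma~\ref{lemm.some.monotone} for the second argument ($q\subseteq q'$ implies $p\app q\subseteq p\app q'$), and the two together give full monotonicity of $\app$ on points in both arguments (compare also Lemma~\ref{lemm.bigcup.app} for the set-of-points operation). If one prefers, one can even phrase the whole argument in one line: $p\subseteq p'$ gives $p\cup p'=p'$, and by the evident distributivity of $\app$ over union in its first argument (again immediate from Definition~\ref{defn.some.ops}) we get $p'\app q=(p\cup p')\app q=(p\app q)\cup(p'\app q)$, whence $p\app q\subseteq p'\app q$.

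So the proof is a two-line calculation, and the only thing to be careful about is simply citing the correct defining equation (the one for $\app$ on points in Definition~\ref{defn.some.ops}, not the set-of-points version in Definition~\ref{defn.some.more.ops} or Definition~\ref{defn.sub.sets.pp}).
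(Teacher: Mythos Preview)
Your proof is correct and takes essentially the same approach as the paper, which simply says ``Routine from Definition~\ref{defn.some.ops}.'' You have spelled out the routine calculation that the paper leaves implicit.
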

\begin{proof}
Routine from Definition~\ref{defn.some.ops}.
\end{proof}

\begin{lemm}
\label{lemm.tall.app.lam}
If $a\#q$ then $\tall a.(p\app q)\subseteq (\tall a.p)\app q$.
\end{lemm}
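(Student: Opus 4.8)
The plan is to reduce the inclusion to a monotonicity property of $\app$ on points, using the universal-property characterisation of $\tall a$ from Lemma~\ref{lemm.tall.include.lam.point}. First I would note that $p\subseteq\tall a.p$ by part~1 of Lemma~\ref{lemm.tall.include.lam.point}, and then apply Lemma~\ref{lemm.app.continuous.lam} (monotonicity of $\app$ in its first argument) to obtain $p\app q\subseteq(\tall a.p)\app q$. This is the one genuine inclusion doing the work; everything else is bookkeeping with freshness.

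Next I would verify that $a$ is fresh for the right-hand side. By Lemma~\ref{lemm.tall.p.alpha} we have $a\#\tall a.p$, and by hypothesis $a\#q$; since $\app$ on points is specified by a first-order predicate, conservation of support (part~\ref{item.conservation.of.support} of Theorem~\ref{thrm.no.increase.of.supp}) gives $\supp((\tall a.p)\app q)\subseteq\supp(\tall a.p)\cup\supp(q)$, so $a\#(\tall a.p)\app q$. Note this is exactly where the hypothesis $a\#q$ is used, and it is indispensable.

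Finally, $(\tall a.p)\app q$ and $p\app q$ are both points (Lemma~\ref{lemm.some.ops.make.points}, together with Lemma~\ref{lemm.suparrowp.point} for $p\app q$), and $a$ is fresh for $(\tall a.p)\app q$, so part~4 of Lemma~\ref{lemm.tall.include.lam.point} applies and tells us that $\tall a.(p\app q)\subseteq(\tall a.p)\app q$ holds precisely when $p\app q\subseteq(\tall a.p)\app q$ — which is the inclusion established in the first step. Chaining these three observations gives the result.

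I do not anticipate a real obstacle here. The only point requiring care is that part~4 of Lemma~\ref{lemm.tall.include.lam.point} is a statement about points with a freshness side-condition, so I must be sure both arguments are points and that $a\#(\tall a.p)\app q$ genuinely holds; both are immediate from the lemmas above. If one preferred a fully self-contained route, one could instead unfold $\tall a.(p\app q)$ via $\bigcup_u (p\app q)[a\lsm u]$ (Proposition~\ref{prop.tall.a.subset.r}) and Corollary~\ref{corr.lsm.app}, but the argument via the universal property is cleaner and avoids any calculation.
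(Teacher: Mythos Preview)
Your proof is correct and follows essentially the same route as the paper: establish $a\#(\tall a.p)\app q$ via Lemma~\ref{lemm.tall.p.alpha} and Theorem~\ref{thrm.no.increase.of.supp}, reduce the goal to $p\app q\subseteq(\tall a.p)\app q$ via part~4 of Lemma~\ref{lemm.tall.include.lam.point}, and obtain that inclusion from part~1 of Lemma~\ref{lemm.tall.include.lam.point} together with Lemma~\ref{lemm.app.continuous.lam}. The only difference is the order in which you present the two ingredients, and your extra remarks (that the arguments are points, and the alternative route via Proposition~\ref{prop.tall.a.subset.r}) are sound but not needed.
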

\begin{proof}
By Theorem~\ref{thrm.no.increase.of.supp} and Lemma~\ref{lemm.tall.p.alpha}, $a\#(\tall a.p)\app q$.
Thus by part~\ref{tall.include.fresh} of Lemma~\ref{lemm.tall.include.lam.point}, $\tall a.(p\app q)\subseteq (\tall a.p)\app q$ if and only if
$p\app q \subseteq (\tall a.p)\app q$.
By part~\ref{tall.include.1} of Lemma~\ref{lemm.tall.include.lam.point} and Lemma~\ref{lemm.app.continuous.lam}, $p\app q \subseteq (\tall a.p)\app q$ is true.
\end{proof}

\begin{lemm}
\label{lemm.jj.app.lam}
If $a\#q$ then $(\texi a.p)\app q\subseteq \texi a.(p\app q)$.
\end{lemm}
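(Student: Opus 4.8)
The plan is a direct elementwise argument from the definitions, which is shorter than the lattice-flavoured route used for the dual Lemma~\ref{lemm.tall.app.lam}. Unpacking the two sides: $(\texi a.p)\app q=\bigcup\{(st)\uparrowp{\Pi}\mid s\in\texi a.p,\ t\in q\}$ by Definition~\ref{defn.some.ops} (and this is a point by Lemma~\ref{lemm.some.ops.make.points}, so $\texi a.(p\app q)$ makes sense), while by Definition~\ref{defn.jj} a phrase $r$ lies in $\texi a.(p\app q)$ exactly when $r[a\ssm u]\in p\app q$ for every $u\in|\idiomprg|$. So I would take $r\in(\texi a.p)\app q$, fix witnesses $s\in\texi a.p$ and $t\in q$ with $st\arrowp{\Pi}r$, fix an arbitrary $u\in|\idiomprg|$, and show $r[a\ssm u]\in p\app q$.

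The three ingredients are: from $s\in\texi a.p$ and Definition~\ref{defn.jj}, $s[a\ssm u]\in p$; from $a\#q$, the fact that $q$ is a point, and Proposition~\ref{prop.fresh.point} we get $a\fresh q$, hence $t[a\ssm u]\in q$ by Definition~\ref{defn.afreshp}; and from $st\arrowp{\Pi}r$ together with condition~\ref{item.somerel.ssm} of Definition~\ref{defn.lambda.reduction.theory} we get $(st)[a\ssm u]\arrowp{\Pi}r[a\ssm u]$, where by condition~\ref{item.idiom.app} of Definition~\ref{defn.idiom} (the idiom's form of \rulefont{\sigma\app}) $(st)[a\ssm u]=s[a\ssm u]\,t[a\ssm u]$. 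Since $s[a\ssm u]\in p$ and $t[a\ssm u]\in q$, Definition~\ref{defn.some.ops} makes $(s[a\ssm u]\,t[a\ssm u])\uparrowp{\Pi}$ one of the sets whose union is $p\app q$, and $r[a\ssm u]$ belongs to it because $s[a\ssm u]\,t[a\ssm u]\arrowp{\Pi}r[a\ssm u]$. This yields $r[a\ssm u]\in p\app q$, and as $u$ was arbitrary, $r\in\texi a.(p\app q)$.

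I do not anticipate a genuine obstacle. The only subtle point is the use of the hypothesis $a\#q$: it is used not as a statement about syntactic occurrences but via the points-specific equivalence $a\#q$ iff $a\fresh q$ of Proposition~\ref{prop.fresh.point}, which is exactly why Definition~\ref{defn.pi.point} builds finite $\sigma$-support into the notion of point. A more abstract alternative would invoke Proposition~\ref{prop.jj.glb} --- checking that $(\texi a.p)\app q$ is a point (Lemmas~\ref{lemm.jj.point} and~\ref{lemm.some.ops.make.points}), is contained in $p\app q$ (Lemmas~\ref{lemm.jj.subset} and~\ref{lemm.app.continuous.lam}), and has $a$ fresh (Lemma~\ref{lemm.jj.supp}, the hypothesis $a\#q$, and conservation of support, Theorem~\ref{thrm.no.increase.of.supp}) --- but matching the support side-condition in Proposition~\ref{prop.jj.glb} exactly is awkward, since containment of points need not imply containment of supports, so I would go with the direct argument above.
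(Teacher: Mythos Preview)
Your proposal is correct and follows essentially the same direct elementwise argument as the paper's own proof: take an element of $(\texi a.p)\app q$ with witnesses $s\in\texi a.p$ and $t\in q$, use condition~\ref{item.somerel.ssm} of Definition~\ref{defn.lambda.reduction.theory} to push $[a\ssm u]$ through the reduction, and invoke Proposition~\ref{prop.fresh.point} to get $t[a\ssm u]\in q$ from $a\#q$. Your write-up is in fact a little more careful than the paper's, since you cite condition~\ref{item.idiom.app} of Definition~\ref{defn.idiom} explicitly for the step $(st)[a\ssm u]=s[a\ssm u]\,t[a\ssm u]$.
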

\begin{proof}
Suppose $v\in (\texi a.p)\app q$.
So there exist $s'$ and $t$ such that $\Forall{u}s'[a\ssm u]\in p\land t\in q\land s't\arrowp{\Pi}v$.
In particular, $s't\arrowp{\Pi}v$.
It follows by condition~\ref{item.somerel.ssm} of Definition~\ref{defn.lambda.reduction.theory} that $s'[a\ssm u](t[a\ssm u])\arrowp{\Pi}v[a\ssm u]$ for every $u$.
But $a\#q$ so by Proposition~\ref{prop.fresh.point} $a\fresh q$ so $t[a\ssm u]\in q$ for every $u$.
Thus, $v\in \texi a.(p\app q)$.
\end{proof}

\end{document}